\PassOptionsToPackage{prologue,dvipsnames}{xcolor} 
\documentclass[acmsmall,screen,nonacm]{acmart}
\setcopyright{cc}
\setcctype{by}

\usepackage{amsmath}
\usepackage{amsthm}
\usepackage{bbm}
\usepackage{stmaryrd}
\usepackage{mathtools}
\usepackage{mathpartir}
\usepackage{enumitem}
\usepackage{latex-pl-syntax/pl-syntax}
\usepackage{doi}
\usepackage{ifthen}
\usepackage{xparse}
\usepackage{xspace}
\usepackage{xcolor}
\usepackage{suffix}
\usepackage{centernot}
\usepackage{scalerel}
\usepackage{pict2e}
\usepackage{thmtools, thm-restate}
\usepackage[normalem]{ulem}
\usepackage[most,breakable]{tcolorbox}
\tcbuselibrary{breakable}
\tcbuselibrary{skins}
\usepackage{tikz}
\usetikzlibrary{decorations.pathreplacing,positioning,calc,arrows,arrows.meta,fit,matrix}
\usepackage{tikz-cd}
\usepackage{soul}

\usepackage{quiver}

\tikzset{
  mapsto/.style={{Bar[width=5pt]}-{Stealth[length=5pt,width=4pt]},line width=0.5pt},
  label/.style={font=\small},
  val node/.style={inner sep=0pt,outer sep=0pt},
}

\usepackage{langrules}

\newcounter{numlevels}
\makeatletter
\NewDocumentCommand{\newMLP}{sO{l}mO{0}om}{
  \IfBooleanTF{#1}{
    \IfValueTF{#5}{
      \WithSuffix\newcommand#3*[#4][#5]
    }{
      \WithSuffix\newcommand#3*[#4]
    }
  }{
    \IfValueTF{#5}{
      \newcommand{#3}[#4][#5]
    }{
      \newcommand{#3}[#4]
    }
  }
  {\ifthenelse{\value{numlevels} > 0}{\begin{array}[t]{@{}#2@{}}}{\begin{array}{#2}}\addtocounter{numlevels}{1}#6\addtocounter{numlevels}{-1}\end{array}}
}
\makeatother

\usepackage{trimspaces}
\definecolor{darkgreen}{rgb}{0,0.7,0}
\definecolor{ashley-blue}{rgb}{0.2, 0, 0.9}

\definecolor{orangeGold}{rgb}{0.67, 0.33, 0.1}
\definecolor{blueViolet}{rgb}{0.54, 0.17, 0.89}
\definecolor{ChorBlue}{HTML}{0053A9}
\definecolor{LocalGreen}{HTML}{007C21}
\definecolor{NtwkRed}{HTML}{E44223}
\definecolor{ChoiceYellow}{HTML}{BB7A00}

\colorlet{chorcolor}{ChorBlue}
\colorlet{localcolor}{LocalGreen}
\colorlet{ntwkcolor}{RedOrange}
\colorlet{choicecolor}{ChoiceYellow}

\definecolor{loccolor}{HTML}{b51963}

\SetRuleLabelVCenter
\newcommand{\DefineNtwkRule}[3]{\DefineRule[N#1Rule]{N-#1}{#2}{#3}}
\newcommand{\DefineChorRule}[3]{\DefineRule[C#1Rule]{C-#1}{#2}{#3}}
\newcommand{\DefineKindingRule}[3]{\DefineRule[K#1Rule]{K-#1}{#2}{#3}}
\newcommand{\DefineTypingRule}[3]{\DefineRule[T#1Rule]{T-#1}{#2}{#3}}
\newcommand{\DefineSpawnRule}[3]{\DefineRule[S#1Rule]{S-#1}{#2}{#3}}

\newlength{\vertrulegap}
\newcommand{\programfont}[1]{\ensuremath{\mathsf{#1}}\xspace}
\newcommand{\colrel}[2]{\mathrel{\color{#1}#2}}
\makeatletter
\def\subst@inner#1#2#3\_nil{#1 \mapsto #2\ifx\relax#3\relax\else,\mkern2mu\subst@inner#3\_nil\fi}%
\newcommand{\raw@subst}[3]{
  \mathchoice{#1{\left[#2\subst@inner#3\_nil\right]}}
             {#1[#2\subst@inner#3\_nil]}
             {#1[#2\subst@inner#3\_nil]}
             {#1[#2\subst@inner#3\_nil]}%
}
\newcommand{\subst}[3]{\raw@subst{#1}{}{{#2}{#3}}}
\WithSuffix\newcommand\subst*[2]{\raw@subst{#1}{}{#2}}
\newcommand{\hsubst}[4]{\raw@subst{#1}{#2|}{{#3}{#4}}}
\WithSuffix\newcommand\hsubst*[3]{\raw@subst{#1}{#2|}{#3}}
\makeatother

\makeatletter
\newcommand{\narrowfill@}[5]{%
  $\m@th\thickmuskip0mu\medmuskip\thickmuskip\thinmuskip\thickmuskip
  \relax#5#1\mkern-7mu%
  \cleaders\hbox{$#5\mkern-2mu#2\mkern-2mu$}\hfill
  \mkern-5mu %
  #4%
  \mkern-5mu %
  \cleaders\hbox{$#5\mkern-2mu#2\mkern-2mu$}\hfill
  \mkern-7mu#3$%
}
\newcommand{\nRightarrowfill@}{%
  \narrowfill@\Relbar\Relbar\Rightarrow\neq
}
\newcommand*{\xnRightarrow}[2][]{%
  \ext@arrow 0359\nRightarrowfill@{#1}{#2}%
}
\makeatother

\makeatletter
\DeclareFontEncoding{LS2}{}{\noaccents@}
\DeclareFontSubstitution{LS2}{stix}{m}{n}
\DeclareSymbolFont{stix@largesymbols}{LS2}{stixex}{m}{n}
\SetSymbolFont{stix@largesymbols}{bold}{LS2}{stixex}{b}{n}
\DeclareMathDelimiter{\lBrace}{\mathopen}{stix@largesymbols}{"E8}%
                                         {stix@largesymbols}{"0E}
\DeclareMathDelimiter{\rBrace}{\mathclose}{stix@largesymbols}{"E9}%
                                          {stix@largesymbols}{"0F}
\makeatother

\makeatletter
\NewDocumentCommand{\smallbar}{}{%
  \mathrel{\mathpalette\smallbar@\relax}%
}

\newcommand{\current@math@font}[1]{%
  \ifx#1\displaystyle\textfont\else
  \ifx#1\textstyle\textfont\else
  \ifx#1\scriptstyle\scriptfont\else
  \scriptscriptfont\fi\fi\fi
}
\newcommand{\smallbar@factor}[1]{%
  \ifx#1\displaystyle 1.135\else
  \ifx#1\textstyle 1.128\else
  \ifx#1\scriptstyle 1.09\else
  1.06\fi\fi\fi
}

\newcommand{\smallbar@}[2]{%
  \begingroup
  \sbox\z@{$\m@th#1\mapstochar$}%
  \dimen0=\smallbar@factor{#1}\ht\z@
  \dimen2=\dimeval{2\fontdimen22\current@math@font{#1} 2 - \dimen0}%
  \mbox{%
    $\m@th#1\mkern1mu
    \begin{picture}(0,\dimen0)
    \roundcap
    \linethickness{\fontdimen8\current@math@font{#1}3}
    \Line(0,\dimen2)(0,\dimen0)
    \end{picture}%
    \mkern1mu$%
  }%
  \endgroup
}
\makeatother

\makeatletter
\def\def@block@inner#1#2#3\_nil{#1 & #2\ifx\relax#3\relax\else \\ \def@block@inner#3\_nil\fi}%
\newcommand{\typeset@def@block}[2]{\begin{array}[t]{@{}l@{{}#1{}}l@{}}\def@block@inner#2\_nil\end{array}}
\makeatother

\newcommand{\locKnd}{*_\Loc}
\newcommand{\setKnd}{*_\LocSet}
\newcommand{\finsetKnd}{*_\FinLocSet}
\newcommand{\ty}{\mkern2mu{:}\mkern2mu}
\newcommand{\knd}{\mkern2mu{::}\mkern2mu}
\newcommand{\proves}{\vdash}
\newcommand{\provesPlus}{\vdash^{\mkern-3mu\scriptscriptstyle+}}

\newcommand{\eproves}{\Vdash}
\newcommand{\dom}[1]{\operatorname{dom}({#1})}
\newcommand{\seqfun}{\mathrel{\fatsemi}}
\newcommand{\sendsto}{\rightsquigarrow}

\newcommand{\namedlocs}[1]{\operatorname{NL}({#1})}
\newcommand{\namedlocsinf}[1]{\operatorname{NL}^\anyLoc({#1})}
\newcommand{\spawnedlocs}[1]{\operatorname{SL}({#1})}
\newcommand{\nl}[1]{\namedlocs{#1}}
\newcommand{\nlinf}[1]{\namedlocsinf{#1}}

\newcommand{\fv}[2][]{\operatorname{fv}\if\relax\detokenize{#1}\relax\else_{#1}\fi({#2})}
\newcommand{\height}[1]{\operatorname{height}({#1})}

\newcommand{\lessthan}{\preceq}
\newcommand{\greaterthan}{\succeq}
\newcommand{\lessthanabove}{\mathbin{\rotatebox[origin=c]{270}{$\greaterthan$}}}

\newcommand{\lessthansim}{\precsim}
\newcommand{\greaterthansim}{\succsim}
\newcommand{\lessthansimabove}{\mathbin{\rotatebox[origin=c]{270}{$\greaterthansim$}}}

\newcommand{\ntwkmerge}{\sqcup}

\newcommand{\val}[1]{\operatorname{Val}({#1})}
\newcommand{\Undef}{\text{undefined}}

\newcommand\catMaybes{\programfont{catMaybes}}
\newcommand\doubleplus{+\kern-1.3ex+\kern0.8ex}
\newcommand{\size}[1]{\lvert {#1} \rvert}

\NewEnviron{alignbreak}{%
  \begingroup
  \allowdisplaybreaks
  \begin{align*}
    \BODY
  \end{align*}
  \endgroup
}

\newcommand{\Merge}{\sqcup}

\newcommand{\defeq}{\triangleq}
\newcommand{\ifText}{\text{if}~}
\newcommand{\andText}{~\text{and}~}

\newcommand{\owText}{\text{otherwise}}

\newMLP*[@{}l@{}]{\Newline}[2]{#1 \\ #2}

\newcommand{\say}[1]{\ensuremath{\LocalColor{\lceil}#1\LocalColor{\rfloor}}}

\newcommand{\Locations}{\ensuremath{\mathcal{L}}\xspace}
\newcommand{\SysLocs}{\ensuremath{\Omega}\xspace}

\newcommand{\anyLoc}{\top}

\newcommand{\Int}{\programfont{int}}
\newcommand{\String}{\programfont{string}}
\newcommand{\Bool}{\programfont{bool}}
\newcommand{\Unit}{\programfont{unit}}

\newcommand{\Loc}{\programfont{loc}}
\newcommand{\LocSet}{\programfont{locset}}
\newcommand{\FinLocSet}{\programfont{finlocset}}

\newcommand{\arr}[1]{\xrightarrow{#1}}
\newcommand{\allty}[3]{\forall {#1}[{#2}] \ldotp {#3}}
\newcommand{\isSum}[3]{\text{isSum}({#1},{#2},{#3})}
\newcommand{\getCase}{\mathrm{getCase}}

\newcommand{\metaInl}{\operatorname{inl}}
\newcommand{\metaInr}{\operatorname{inr}}

\definecolor{orangeGold}{HTML}{AB541A}
\newcommand{\ChoiceCol}[1]{{\color{choicecolor}#1}}
\newcommand{\FontChoice}[1]{\ensuremath{\mathbf{\ChoiceCol{#1}}}\xspace}
\newcommand{\Left}{\FontChoice{L}}
\newcommand{\Right}{\FontChoice{R}}

\newcommand{\ChorCol}[1]{{\color{chorcolor}#1}}

\newcommand{\FontChoreo}[1]{\programfont{\ChorCol{#1}}}
\newcommand{\seq}{\mathrel{\ChorCol{;}}}
\newcommand{\ColSend}{\mathrel{\ChorCol{\sendsto}}}

\newcommand{\NtwkSend}{\mathrel{\NtwkCol{\sendsto}}}
\newcommand{\ChorSend}[1][\ell]{\mathrel{\mathchoice%
  {\raisebox{0.15ex}{$\scriptstyle\ChorCol{\{}#1\ChorCol{\}}$}\mkern-2mu\ChorCol{\sendsto}}
  {\raisebox{0.15ex}{$\scriptstyle\ChorCol{\{}#1\ChorCol{\}}$}\mkern-2mu\ChorCol{\sendsto}}
  {\mkern3mu\raisebox{0.1ex}{$\scriptstyle\ChorCol{\{}#1\ChorCol{\}}$}\mkern-1mu\ChorCol{\sendsto}}
  {\mkern3mu\raisebox{0.05ex}{$\scriptstyle\ChorCol{\{}#1\ChorCol{\}}$}\mkern-1mu\ChorCol{\sendsto}}%
}}
\newcommand{\syncs}[3]{{#1}\ChorCol{[}{#2}\ChorCol{]} \mathrel{\ChorCol{\sendsto}} {#3}}
\newcommand{\LetN}{\FontChoreo{let}}
\newcommand{\In}{\FontChoreo{in}}
\newcommand{\IfN}{\FontChoreo{if}}
\newcommand{\ThenN}{\FontChoreo{then}}
\newcommand{\ElseN}{\FontChoreo{else}}
\newcommand{\FunN}{\FontChoreo{fun}}
\newcommand{\TFunN}{\FontChoreo{tfun}}
\newcommand{\TLamN}{\ChorCol{\Lambda}}
\newcommand{\ChorDef}{\mathrel{\ChorCol{\coloneqq}}}

\newcommand{\InlN}{\FontChoreo{inl}}
\newcommand{\InrN}{\FontChoreo{inr}}
\newcommand{\LocalInlN}{\LocalLangFont{inl}}
\newcommand{\LocalInrN}{\LocalLangFont{inr}}
\newcommand{\FstN}{\FontChoreo{fst}}
\newcommand{\SndN}{\FontChoreo{snd}}
\newcommand{\FoldN}{\FontChoreo{fold}}
\newcommand{\UnfoldN}{\FontChoreo{unfold}}

\newcommand{\ForkN}{\FontChoreo{fork}}
\newcommand{\KillN}{\FontChoreo{kill}}
\newcommand{\AfterN}{\FontChoreo{after}}
\newcommand{\KillAfterN}{\ensuremath{\operatorname{\FontChoreo{kill-after}}}\xspace}
\newcommand{\LamN}{\ChorCol{\lambda}}
\newcommand{\CaseN}{\FontChoreo{case}}
\newcommand{\LocalCaseN}{\FontChoreo{localCase}}
\newcommand{\OfN}{\FontChoreo{of}}

\newcommand{\LetIn}[3]{\LetN~{#1} \ChorDef {#2}~\In~{#3}}
\newMLP*[r@{~}l]{\LetIn}[3]{\LetN & {#1} \ChorDef {#2} \\ \In & {#3}}
\newMLP*{\LetInLine}[3]{\LetN~{#1} \ChorDef {#2}~\In \\ {#3}}
\makeatletter
\newMLP{\LetMany}[2]{\LetN ~ {\typeset@def@block{\ChorDef}{#1}} \\ \In ~ {#2}}
\makeatother
\newcommand{\ITE}[4][\rho]{\IfN_{#1}~{#2}~\ThenN~{#3}~\ElseN~{#4}}
\newMLP*{\ITE}[4][\rho]{\IfN_{#1}~{#2} \\ \ThenN~{#3} \\ \ElseN~{#4}}
\newcommand{\ITEBase}[3]{\IfN~{#1}~\ThenN~{#2}~\ElseN~{#3}}
\newMLP*{\ITEBase}[3]{\IfN~{#1} \\ \ThenN~{#2} \\ \ElseN~{#3}}
\newcommand{\Fun}[4]{\operatorname{\FunN}_{#1}{#2}({#3}) \ChorDef {#4}}
\newMLP*{\Fun}[4]{\operatorname{\FunN}_{#1}{#2}({#3}) \ChorDef \\ {#4}}

\newcommand{\TLam}[2]{\TLamN{#1}\ChorCol{\ldotp{}} {#2}}
\newcommand{\TFun}[4]{\operatorname{\TFunN}_{#1}{#2}({#3}) \ChorDef {#4}}
\newcommand{\TFunLoc}[3]{\operatorname{\TFunN}{#1}({#2}) \ChorDef {#3}}
\newcommand{\Inl}[2]{\InlN_{#1}~{#2}}
\newcommand{\Inr}[2]{\InrN_{#1}~{#2}}
\newcommand{\Case}[6]{\CaseN_{#1}~{#2}~\FontChoreo{of}~(\InlN~{#3}\Rightarrow{#4})~(\InrN~{#5}\Rightarrow{#6})}
\newMLP*{\Case}[6]{\CaseN_{#1}~{#2}~\FontChoreo{of} \\ \mid \InlN~{#3}\Rightarrow{#4} \\ \mid \InrN~{#5}\Rightarrow{#6}}
\newcommand{\LocalCase}[6]{\LocalCaseN_{#1}~{#2}~\FontChoreo{of}~(\LocalInlN~{#3}\Rightarrow{#4})~(\LocalInrN~{#5}\Rightarrow{#6})}
\newMLP*{\LocalCase}[6]{\LocalCaseN_{#1}~{#2}~\FontChoreo{of} \\ {}\mid \LocalInlN~{#3}\Rightarrow{#4} \\ {}\mid \LocalInrN~{#5}\Rightarrow{#6}}
\newcommand{\Fst}[2]{\FstN_{#1}~{#2}}
\newcommand{\Snd}[2]{\SndN_{#1}~{#2}}
\newcommand{\Fold}[2]{\FoldN_{#1}~{#2}}
\newcommand{\Unfold}[2]{\UnfoldN_{#1}~{#2}}
\newcommand{\Fork}[3]{\LetN~{#1}\ChorDef{#2}.\ForkN()~\In~{#3}}
\newMLP*{\Fork}[3]{\LetN~{#1}\ChorDef{#2}.\ForkN() \\ \In~{#3}}
\newcommand{\LetFork}[4]{\LetN~{#1}\ChorDef{#2}.\ForkN()\ColSend{#3}~\In~{#4}}
\newcommand{\KillAfter}[2]{\KillN~{#1}~\AfterN~{#2}}
\newMLP*{\KillAfter}[2]{\KillN~{#1} \\ \AfterN~{#2}}

\newcommand{\tyknd}[1]{*_{#1}}

\newcommand{\AppN}{\FontChoreo{\$}}

\newcommand{\appchor}[1]{\mathbin{\AppN_{#1}}}

\newcommand{\NtwkCol}[1]{{\color{ntwkcolor}#1}}
\newcommand{\FontNtwk}[1]{\ensuremath{\mathtt{\NtwkCol{#1}}}\xspace}
\newcommand{\FontSys}[1]{\ensuremath{\mathtt{#1}}\xspace}

\newcommand{\NtwkDef}{\mathrel{\NtwkCol{\coloneqq}}}
\newcommand{\AmIN}{\FontNtwk{AmI}}
\newcommand{\AmIInN}{\AmIN\mathord{\NtwkCol{\in}}}
\newcommand{\NtwkIn}{\FontNtwk{in}}
\newcommand{\NtwkThen}{\FontNtwk{then}}
\newcommand{\NtwkElse}{\FontNtwk{else}}
\newcommand{\NtwkLeft}{\Left}
\newcommand{\NtwkRight}{\Right}
\newcommand{\NtwkFunN}{\FontNtwk{fun}}
\newcommand{\NtwkTFunN}{\FontNtwk{tfun}}
\newcommand{\NtwkInlN}{\FontNtwk{inl}}
\newcommand{\NtwkInrN}{\FontNtwk{inr}}
\newcommand{\NtwkFstN}{\FontNtwk{fst}}
\newcommand{\NtwkSndN}{\FontNtwk{snd}}
\newcommand{\NtwkFoldN}{\FontNtwk{fold}}
\newcommand{\NtwkUnfoldN}{\FontNtwk{unfold}}
\newcommand{\NtwkVal}[1]{\operatorname{Val}({#1})}
\newcommand{\NtwkForkN}{\FontNtwk{fork}}
\newcommand{\NtwkExit}{\FontNtwk{exit}}
\newcommand{\SysForkN}{\FontSys{fork}}
\newcommand{\SysKillN}{\FontSys{kill}}

\newcommand{\NtwkLetN}{\FontNtwk{let}}
\newcommand{\NtwkInN}{\FontNtwk{in}}

\newcommand{\NtwkLetIn}[3]{\NtwkLetN~{#1} \NtwkDef {#2}~\NtwkInN~{#3}}
\newMLP*[r@{~}l]{\NtwkLetIn}[3]{\NtwkLetN & {#1} \NtwkDef {#2} \\ \NtwkInN & {#3}}
\newMLP*{\NtwkLetInLine}[3]{\NtwkLetN~{#1} \NtwkDef {#2}~\NtwkInN \\ {#3}}
\makeatletter
\newMLP{\NtwkLetMany}[2]{\NtwkLetN ~ {\typeset@def@block{\NtwkDef}{#1}} \\ \NtwkIn ~ {#2}}
\makeatother

\newcommand{\NtwkUnit}{\FontNtwk{()}}
\newcommand{\NtwkNone}{\Undef}
\newcommand{\SendN}{\FontNtwk{send}}
\newcommand{\AllowN}{\FontNtwk{allow}}
\newcommand{\ChoiceN}{\FontNtwk{choice}}
\newcommand{\AllowChoiceN}{\ensuremath{\operatorname{\FontNtwk{allow-choice}}}\xspace}

\newcommand{\Ret}[1]{\FontNtwk{ret}({#1})}
\newcommand{\SendTo}[2]{\SendN~{#1}~\FontNtwk{to}~{#2}}
\newcommand{\RecvFrom}[1]{\FontNtwk{recv}~\FontNtwk{from}~{#1}}

\newcommand{\ChooseFor}[3]{\FontNtwk{choose}~{#1}~\FontNtwk{for}~{#2} \NtwkSeq {#3}}
\newcommand{\AllowChoice}[3]{\AllowN~{#1}~\ChoiceN~(\NtwkLeft \Rightarrow {#2}) ~ (\NtwkRight \Rightarrow {#3})}
\newMLP*{\AllowChoice}[3]{\AllowN~{#1}~\ChoiceN \\ \mid \NtwkLeft \Rightarrow {#2} \\ \mid \NtwkRight \Rightarrow {#3}}
\newcommand{\AllowOneChoice}[3]{\AllowN~{#1}~\ChoiceN~({#2} \Rightarrow {#3})}
\newMLP*{\AllowOneChoice}[3]{\AllowN~{#1}~\ChoiceN \\ \mid {#2} \Rightarrow {#3}}
\newMLP*[@{\mkern2mu}l@{\mkern2mu}]{\AllowChoiceTight}[3]{\AllowN~{#1}~\ChoiceN \\ \mid \NtwkLeft \Rightarrow {#2} \\ \mid \NtwkRight \Rightarrow {#3}}
\newMLP*[@{\mkern2mu}l@{\mkern2mu}]{\AllowOneChoiceTight}[3]{\AllowN~{#1}~\ChoiceN \\ \mid {#2} \Rightarrow {#3}}

\newcommand{\AmI}[3]{\AmIN~{#1}~\NtwkThen~{#2}~\NtwkElse~{#3}}
\newMLP*[r@{~}l]{\AmI}[3]{\AmIN~{#1} & \NtwkThen~{#2} \\ & \NtwkElse~{#3}}
\newcommand{\AmIIn}[3]{\AmIInN~{#1}~\NtwkThen~{#2}~\NtwkElse~{#3}}
\newMLP*[r@{~}l]{\AmIIn}[3]{\AmIInN~{#1} & \NtwkThen~{#2} \\ & \NtwkElse~{#3}}
\newcommand{\NtwkSeq}{\mathrel{\NtwkCol{;}}}

\newcommand{\NtwkInl}[1]{\NtwkInlN~{#1}}
\newcommand{\NtwkInr}[1]{\NtwkInrN~{#1}}
\newcommand{\NtwkCase}[5]{\FontNtwk{case}~{#1}~\FontNtwk{of}~(\NtwkInl{#2}\Rightarrow{#3})~(\NtwkInr{#4}\Rightarrow{#5})}
\newMLP*{\NtwkCase}[5]{\FontNtwk{case}~{#1}~\FontNtwk{of} \\ \mid \NtwkInl{#2}\Rightarrow{#3} \\ \mid \NtwkInr{#4}\Rightarrow{#5}}
\newcommand{\NtwkLocalCase}[5]{\FontNtwk{localCase}~{#1}~\FontNtwk{of}~(\LocalInl{#2}\Rightarrow{#3})~(\LocalInr{#4}\Rightarrow{#5})}
\newMLP*{\NtwkLocalCase}[5]{\FontNtwk{localCase}~{#1}~\FontNtwk{of} \\ \mid \LocalInl{#2}\Rightarrow{#3} \\ \mid \LocalInr{#4}\Rightarrow{#5}}
\newcommand{\NtwkFst}[1]{\NtwkFstN~{#1}}
\newcommand{\NtwkSnd}[1]{\NtwkSndN~{#1}}
\newcommand{\NtwkFold}[1]{\NtwkFoldN~{#1}}
\newcommand{\NtwkUnfold}[1]{\NtwkUnfoldN~{#1}}
\newcommand{\NtwkFork}[3]{\NtwkLetN~{#1}\NtwkDef\NtwkForkN({#2})~\NtwkIn~{#3}}
\newMLP*{\NtwkFork}[3]{\NtwkLetN~{#1}\NtwkDef\NtwkForkN({#2}) \\ \NtwkIn~{#3}}

\newcommand{\LocalColor}[1]{{\color{localcolor}#1}}
\newcommand{\LocalLangFont}[1]{\programfont{\color{localcolor}#1}}
\newcommand{\True}{\LocalLangFont{true}}
\newcommand{\False}{\LocalLangFont{false}}
\newcommand{\Loop}{\LocalLangFont{loop}}
\newcommand{\List}[1]{\programfont{list}({#1})}

\newcommand{\Maybe}[1]{\programfont{maybe}({#1})}

\newcommand{\LocalInl}[1]{\LocalInlN~{#1}}
\newcommand{\LocalInr}[1]{\LocalInrN~{#1}}

\newcommand{\LocalSome}[1]{\LocalLangFont{some}({#1})}
\newcommand{\LocalNone}{\LocalLangFont{none}}

\newcommand{\NtwkFun}[3]{\operatorname{\NtwkFunN} {#1}({#2}) \NtwkDef {#3}}
\newMLP*{\NtwkFun}[3]{\operatorname{\NtwkFunN} {#1}({#2}) \NtwkDef \\ {#3}}
\newcommand{\NtwkTFun}[3]{\operatorname{\NtwkTFunN} {#1}({#2}) \NtwkDef {#3}}
\newMLP*{\NtwkTFun}[3]{\operatorname{\NtwkTFunN} {#1}({#2}) \NtwkDef \\ {#3}}
\newcommand{\LocalPlus}{\mathbin{\LocalLangFont{+}}}
\newcommand{\LocalMinus}{\mathbin{\LocalLangFont{-}}}
\newcommand{\LocalTimes}{\mathbin{\LocalLangFont{*}}}
\newcommand{\LocalLess}{\mathrel{\LocalLangFont{<}}}
\newcommand{\LocalGreater}{\mathrel{\LocalLangFont{>}}}
\newcommand{\LocalEq}{\mathrel{\LocalLangFont{=}}}
\newcommand{\LocalITE}[3]{\LocalLangFont{if}~{#1}~\LocalLangFont{then}~{#2}~\LocalLangFont{else}~{#3}}
\newcommand{\LocalLangCase}[5]{\LocalLangFont{case}~{#1}~\LocalLangFont{of}~({#2} \Rightarrow {#3})~({#4} \Rightarrow {#5})}
\newcommand{\LocalFun}[3]{\operatorname{\LocalLangFont{fun}}{#1}({#2}) \mathrel{\LocalColor{\coloneqq}} {#3}}
\newcommand{\LocalLam}{\LocalLangFont{\lambda}}

\newcommand{\LocalFold}[1]{\LocalLangFont{fold}~{#1}}
\newcommand{\LocalUnfold}[1]{\LocalLangFont{unfold}~{#1}}
\newcommand{\LocalFst}[1]{\LocalLangFont{fst}~{#1}}
\newcommand{\LocalSnd}[1]{\LocalLangFont{snd}~{#1}}

\newcommand\hole{\ensuremath{[\boldsymbol{\cdot}]}}

\newcommand{\LocFont}[1]{\programfont{\color{loccolor}#1}}
\newcommand{\Client}{\LocFont{C}}

\newcommand{\Mngr}{\LocFont{M}}
\newcommand{\Alice}{\LocFont{A}}
\newcommand{\Bob}{\LocFont{B}}
\newcommand{\David}{\LocFont{D}}

\newcommand{\Location}{\LocFont{L}}

\newcommand{\WithWorker}{\programfont{runWithWorker}}
\newcommand{\AcquireWorker}{\LocalLangFont{acquireWorker}}
\newcommand{\Pool}{\LocFont{pool}}
\newcommand{\ReleaseWorker}{\LocalLangFont{releaseWorker}}

\newcommand{\LocalSum}{\LocalLangFont{sum}}

\newcommand{\RecursiveSum}{\programfont{recursiveSum}}

\newcommand{\Len}{\LocalLangFont{len}}
\newcommand{\Split}{\LocalLangFont{split}}

\newcommand{\ReserveWorkerGood}{\programfont{runThread}}
\newcommand{\ReserveWorkerBad}{\programfont{runThreadBad}}

\newcommand{\ForkBomb}{\programfont{forkBomb}}
\newcommand{\HandleRequest}{\programfont{handleRequest}}
\newcommand{\task}{\programfont{task}}
\newcommand{\XS}{\mathit{XS}}
\newcommand{\xs}{\mathit{xs}}
\newcommand{\MaxLen}{\LocalLangFont{LOCAL\_MAX\_LEN}}

\theoremstyle{theorem}
\newtheorem{thm}{Theorem}
\newtheorem{lem}{Lemma}
\newtheorem{cor}{Corollary}

\theoremstyle{definition}
\newtheorem{defn}{Definition}
\newtheorem{ex}{Example}

\declaretheorem[style=theorem]{theorem, lemma}

\newcommand{\lamqc}{\ensuremath{\lambda}\textsc{qc}\xspace}
\newcommand{\lamfork}{\ensuremath{\lambda}\scaleto{\pitchfork}{5.5pt}\kern-0.5pt\xspace}

\newcommand\proj[2]{\ensuremath{\left.#1\right|_{#2}}}

\newcommand{\epp}[2]{\mathchoice%
  {\left\llbracket#1\right\rrbracket_{#2}}
  {\llbracket#1\rrbracket_{#2}}
  {\llbracket#1\rrbracket_{#2}}
  {\llbracket#1\rrbracket_{#2}}%
}

\newcommand{\eppfork}[2]{\mathchoice%
  {\left\llbracket#1\right\rrbracket^\pitchfork_{#2}}
  {\llbracket#1\rrbracket^\pitchfork_{#2}}
  {\llbracket#1\rrbracket^\pitchfork_{#2}}
  {\llbracket#1\rrbracket^\pitchfork_{#2}}%
}

\newcommand{\tbodies}[2]{\mathchoice%
  {\text{tbodies}\left(#1\right)_{#2}}
  {\text{tbodies}(#1)_{#2}}
  {\text{tbodies}(#1)_{#2}}
  {\text{tbodies}(#1)_{#2}}%
}

\newcommand{\eppall}[1]{\mathchoice%
  {\left\llbracket#1\right\rrbracket}
  {\llbracket#1\rrbracket}
  {\llbracket#1\rrbracket}
  {\llbracket#1\rrbracket}%
}

\newcommand{\conf}[2]{\mathchoice%
  {\left\langle {#1}, {#2} \right\rangle}
  {\langle {#1}, {#2} \rangle}
  {\langle {#1}, {#2} \rangle}
  {\langle {#1}, {#2} \rangle}%
}
\WithSuffix\newcommand\conf*[1]{\conf{#1}{\Omega}}
\makeatletter
\newlength{\@overset@width}
\newcommand{\raw@step}[3]{\mathrel{%
  \if\relax\detokenize{#1}\relax
    \Longrightarrow
  \else
    \setbox0=\hbox{$\scriptstyle#1$}%
    \setlength{\@overset@width}{\wd0}
    \ifdim\@overset@width<0.75em
      {\overset{#1\mkern4mu}{\Longrightarrow}}{}
    \else
      {\xRightarrow{#1}}{}
    \fi
  \fi%
  \if\relax\detokenize{#2}\relax\else^{#2}\fi%
  \if\relax\detokenize{#3}\relax\else_{#3}\fi%
}}
\newcommand{\raw@reduce}[3]{\mathrel{%
  \if\relax\detokenize{#1}\relax
    \longrightarrow
  \else
    \setbox0=\hbox{$\scriptstyle#1$}%
    \setlength{\@overset@width}{\wd0} 
    \ifdim\@overset@width<0.75em
      {\overset{#1\mkern4mu}{\Longrightarrow}}{}
    \else
      {\xrightarrow{#1}}{}
    \fi
  \fi%
  \if\relax\detokenize{#2}\relax\else^{#2}\fi%
  \if\relax\detokenize{#3}\relax\else_{#3}\fi%
}}

\newcommand{\xRrightarrow}[2][]{\ext@arrow 0359\Rrightarrowfill@{#1}{#2}}
\newcommand{\Rrightarrowfill@}{\arrowfill@\equiv\equiv\Rrightarrow}
\newcommand{\xLleftarrow}[2][]{\ext@arrow 3095\Lleftarrowfill@{#1}{#2}}
\newcommand{\Lleftarrowfill@}{\arrowfill@\Lleftarrow\equiv\equiv}

\def\rightarrowfill@script{%
    \arrowfill@{\scriptstyle\relbar}{\scriptstyle\relbar}{\scriptstyle\rightarrow}}

\newcommand*\xrightrightarrows[2][]{\mathrel{%
  \raise.6ex\hbox{%
    $\ext@arrow 0359\rightarrowfill@script{\phantom{#1}}{#2}$}%
  \setbox0=\hbox{%
    $\ext@arrow 0359\rightarrowfill@script{#1}{\phantom{#2}}$}%
  \kern-\wd0 \lower.2ex\box0}}

\newcommand{\raw@parstep}[3]{\mathrel{%
  \if\relax\detokenize{#1}\relax
    \rightrightarrows
  \else
    \setbox0=\hbox{$\scriptstyle#1$}%
    \setlength{\@overset@width}{\wd0} 
    \ifdim\@overset@width<0.75em
      {\overset{#1\mkern4mu}{\rightrightarrows}}{}
    \else
      {\xrightrightarrows{#1}}{}
    \fi
  \fi%
  \if\relax\detokenize{#2}\relax\else^{#2}\fi%
  \if\relax\detokenize{#3}\relax\else_{#3}\fi%
}}

\newcommand{\raw@bigstep}[3]{\mathrel{%
  \if\relax\detokenize{#1}\relax
    \Rrightarrow
  \else
    \setbox0=\hbox{$\scriptstyle#1$}%
    \setlength{\@overset@width}{\wd0} 
    \ifdim\@overset@width<0.75em
      {\overset{#1\mkern4mu}{\Rrightarrow}}{}
    \else
      {\xRrightarrow{#1}}{}
    \fi
  \fi%
  \if\relax\detokenize{#2}\relax\else^{#2}\fi%
  \if\relax\detokenize{#3}\relax\else_{#3}\fi%
}}

\newcommand{\step}[1][]{\raw@step{#1}{}{c}}
\newcommand{\stepsn}[2][]{\raw@step{#1}{#2}{c}}
\newcommand{\steps}[1]{\raw@step{#1}{*}{c}}
\newcommand{\stepss}[1][]{\raw@step{#1}{+}{c}}
\newcommand{\bigstep}[1][]{\raw@bigstep{#1}{}{c}}
\newcommand{\bigstepsn}[2][]{\raw@bigstep{#1}{#2}{c}}
\newcommand{\bigsteps}[1]{\raw@bigstep{#1}{*}{c}}
\newcommand{\bigstepss}[1]{\raw@bigstep{#1}{+}{c}}

\newcommand{\localstep}[1][]{\raw@reduce{#1}{}{}}
\newcommand{\localsteps}[1][]{\raw@reduce{#1}{*}{}}
\newcommand{\localstepss}[1][]{\raw@reduce{#1}{+}{}}

\newcommand{\ntwkstep}[1]{\raw@step{#1}{}{}}
\newcommand{\ntwksteps}[1]{\raw@step{#1}{*}{}}
\newcommand{\ntwkstepss}[1]{\raw@step{#1}{+}{}}
\newcommand{\ntwkparstep}[1]{\raw@parstep{#1}{}{}}
\newcommand{\ntwkparsteps}[1]{\raw@parstep{#1}{*}{}}
\newcommand{\ntwkbigstep}[1]{\raw@bigstep{#1}{}{}}
\newcommand{\ntwkbigsteps}[1]{\raw@bigstep{#1}{*}{}}

\newcommand{\systemstep}[1][]{\raw@step{#1}{}{S}}
\newcommand{\systemsteps}[1][]{\raw@step{#1}{*}{S}}
\newcommand{\systemstepsn}[2][]{\raw@step{#1}{#2}{S}}
\newcommand{\systemstepss}[1][]{\raw@step{#1}{+}{S}}
\newcommand{\systemparstep}[1]{\raw@parstep{#1}{}{S}}
\newcommand{\systemparsteps}[1]{\raw@parstep{#1}{*}{S}}
\newcommand{\systemparstepsn}[2][]{\raw@parstep{#1}{#2}{S}}
\newcommand{\systembigstep}[1]{\raw@bigstep{#1}{}{S}}
\newcommand{\systembigsteps}[1]{\raw@bigstep{#1}{*}{S}}
\newcommand{\systembigstepsn}[2][]{\raw@bigstep{#1}{#2}{S}}

\newcommand{\raw@nstep}[3]{\mathrel{%
  \if\relax\detokenize{#1}\relax
    {\centernot{\Longrightarrow}}{}
  \else
    {\xnRightarrow{\minwidthbox{0.75em}{\ensuremath{\scriptstyle#1}}}}{}
  \fi%
  \if\relax\detokenize{#2}\relax\else^{#2}\fi%
  \if\relax\detokenize{#3}\relax\else_{#3}\fi%
}}
\newcommand{\nstep}[1]{\raw@nstep{#1}{}{c}}
\newcommand{\nlocalstep}[1][]{\raw@nstep{#1}{}{e}}
\makeatother

\newcommand{\trc}{t}

\newcommand{\TySepCol}[1]{\begingroup\color{chorcolor}{#1}\endgroup}

\newcommand\ctx{\Theta}

\newcommand\locs{\mathrel{\rhd}}
\newcommand\locsCol{\mathrel{\TySepCol{\locs}}}

\newcommand\provesCol{\colrel{chorcolor}{\proves}}
\newcommand\provesPlusCol{\colrel{chorcolor}{\provesPlus}}
\newcommand\eprovesCol{\colrel{localcolor}{\eproves}}
\newcommand\tyCol{\colrel{chorcolor}{:}}
\newcommand\kndCol{\colrel{chorcolor}{::}}
\newcommand\etyCol{\colrel{localcolor}{:}}
\newcommand\ekndCol{\colrel{localcolor}{::}}

\newcommand\localkinded[2]{\ensuremath{#1 \eprovesCol #2 \ekndCol *_e}}
\newcommand\localtyped[3]{\ensuremath{#1 \eprovesCol #2 \etyCol #3}}

\newcommand\localemptyped[2]{\ensuremath{\localtyped{}{#1}{#2}}}

\newcommand\chorkinded[3]{\ensuremath{#1 \provesCol #2 \kndCol #3}}

\newcommand\chortyped[4]{\ensuremath{#1 \provesCol #2 \tyCol #3 \locsCol #4}}

\newcommand\choremptyped[3]{\ensuremath{\, \provesCol #1 \tyCol #2 \locsCol #3}}

\newcommand\chortypedplus[4]{\ensuremath{#1 \provesPlusCol #2 \tyCol #3 \locsCol #4}}
\newcommand\choremptypedplus[3]{\ensuremath{\, \provesPlusCol #1 \tyCol #2 \locsCol #3}}

\newcommand\wfSubst[4]{\ensuremath{#1 \provesCol #2 \tyCol #3 \mathrel{\TySepCol{\Rightarrow}} #4}}

\newcommand{\RDone}[3]{{#1}.({#2} \rightarrow {#3})}
\newcommand{\RArg}[1]{\FontChoreo{Arg}({#1})}
\newcommand{\RFun}[1]{\FontChoreo{Fun}({#1})}
\newcommand{\RPairL}[1]{\FontChoreo{PairL}({#1})}
\newcommand{\RPairR}[1]{\FontChoreo{PairR}({#1})}
\newcommand{\RApp}[1]{\FontChoreo{App}_{#1}}
\newcommand{\RSendV}[3]{{#1}.{#2} \ColSend {#3}}

\newcommand{\RTApp}[1]{\FontChoreo{TApp}_{#1}}
\newcommand{\RLet}[2]{\LetN~{#1} \ChorDef {#2}}
\newcommand{\RUnfoldFold}[1]{\FontChoreo{UnfoldFold}_{#1}}
\newcommand{\RFstPair}[1]{\FontChoreo{FstPair}_{#1}}
\newcommand{\RSndPair}[1]{\FontChoreo{SndPair}_{#1}}
\newcommand{\RCaseInl}[1]{\FontChoreo{CaseInl}_{#1}}
\newcommand{\RCaseInr}[1]{\FontChoreo{CaseInr}_{#1}}
\newcommand{\RLocalCaseInl}[1]{\FontChoreo{LocalCaseInl}_{#1}}
\newcommand{\RLocalCaseInr}[1]{\FontChoreo{LocalCaseInr}_{#1}}
\newcommand{\RFork}[3]{{#1}.\ForkN({#2},{#3})}
\newcommand{\RKill}[1]{\KillN({#1})}

\newcommand{\RRet}[2]{\Ret{{#1} \Rightarrow {#2}}}

\newcommand{\RSendNtwk}[2]{{#1} \NtwkSend {#2}}
\newcommand{\RRecvNtwk}[2]{{#1}.{#2} \NtwkSend}

\newcommand{\RForkNtwk}[2]{\NtwkForkN({#1},{#2})}
\newcommand{\RExitNtwk}{\NtwkExit}

\newcommand{\RForkSys}[3]{{#1}.\SysForkN({#2},{#3})}
\newcommand{\RKillSys}[1]{\SysKillN({#1})}

\newcommand{\extract}[2]{\lfloor {#1} \rfloor_{#2}}

\newcommand{\cloc}[1]{\operatorname{cloc}({#1})}

\newcommand{\rloc}[1]{\operatorname{rloc}({#1})}
\newcommand{\tloc}[2][\ctx]{\operatorname{tloc}({#1};{#2})}
\newcommand{\loc}[1]{\operatorname{loc}({#1})}

\newcommand{\ctxredex}[1]{{#1}[R]}
\newcommand{\collapse}[1]{\operatorname{collapse}({#1})}

\newtcolorbox{mybox}[1][]{
    enhanced,
    breakable,
    title = {#1},
    sharpish corners,
    before skip=\baselineskip,
    leftupper=1.5cm,
    boxrule=0.5pt,
    top=15pt,
    colback=white,
    left=15pt
}

\DefineNtwkRule{Ctx}{
  L \triangleright E_1 \ntwkstep{l} E_2
}{L \triangleright \eta[E_1] \ntwkstep{l} \eta[E_2]}

\DefineNtwkRule{Ret}{
  e_1 \localstep e_2
}{L \triangleright \Ret{e_1} \ntwkstep{\iota} \Ret{e_2}}

\DefineNtwkRule{Seq}{\val{V}}
{L \triangleright V \NtwkSeq E \ntwkstep{\iota} E}

\DefineNtwkRule{App}{
  f = \NtwkFun{F}{X}{E} \\
  \val{V}
  }{L \triangleright f~V \ntwkstep{\iota} \subst*{E}{{F}{f}{X}{V}}}

\DefineNtwkRule{TApp}{
  f = \NtwkTFun{F}{\alpha}{E} \\
  }{L \triangleright f~t \ntwkstep{\iota} \subst*{E}{{F}{f}{\alpha}{t}}}

\DefineNtwkRule{UnfoldFold}{\val{V}}
{L \triangleright \NtwkUnfold{(\NtwkFold{V})} \ntwkstep{\iota} V}

\DefineNtwkRule{FstPair}{\val{V_1} \\ \val{V_2}}
{L \triangleright \NtwkFst{(V_1,V_2)} \ntwkstep{\iota} V_1}

\DefineNtwkRule{SndPair}{\val{V_1} \\ \val{V_2}}
{L \triangleright \NtwkSnd{(V_1,V_2)} \ntwkstep{\iota} V_2}

\DefineNtwkRule{CaseInl}{\val{V}}
{L \triangleright \NtwkCase{(\NtwkInl{V})}{X}{E_1}{Y}{E_2} \ntwkstep{\iota} \subst{E_1}{X}{V}}

\DefineNtwkRule{CaseInr}{\val{V}}
{L \triangleright \NtwkCase{(\NtwkInr{V})}{X}{E_1}{Y}{E_2} \ntwkstep{\iota} \subst{E_2}{Y}{V}}

\DefineNtwkRule{LocalCaseInl}{\val{v}}
{L \triangleright \NtwkLocalCase{\Ret{\LocalInl{v}}}{x}{E_1}{y}{E_2} \ntwkstep{\iota} \subst{E_1}{x}{v}}

\DefineNtwkRule{LocalCaseInr}{\val{v}}
{L \triangleright \NtwkLocalCase{\Ret{\LocalInr{v}}}{x}{E_1}{y}{E_2} \ntwkstep{\iota} \subst{E_2}{y}{v}}

\DefineNtwkRule{Let}{\val{v}
}{L \triangleright \NtwkLetIn{x}{\Ret{v}}{C} \ntwkstep{\iota} \subst{C}{x}{v}}

\DefineNtwkRule{TyLet}{
    \val{\say{t}}
  }{L \triangleright \NtwkLetIn{\alpha \knd \kappa}{\Ret{\say{t}}}{E} \ntwkstep{\iota} \subst*{E}{{\alpha}{t}}}

\DefineNtwkRule{Send}{
  \val{v} \\
  \fv{\rho} = \varnothing
}{L \triangleright \SendTo{\Ret{v}}{\rho} \ntwkstep{\RSendNtwk{v}{\rho \setminus \{L\}}} \Ret{v}}

\DefineNtwkRule{Recv}{
  \val{v} \\
  L' \neq L
}{L \triangleright \RecvFrom{L'} \ntwkstep{\RRecvNtwk{L'}{v}} \Ret{v}}

\DefineNtwkRule{Choose}{
  \fv{\rho} = \varnothing
}{L \triangleright \ChooseFor{d}{\rho}{E} \ntwkstep{\RSendNtwk{d}{\rho \setminus \{L\}}} E}

\newtoggle{NAllowLLinebreak}
\DefineNtwkRule{AllowL}{L' \neq L}{
  L \triangleright \iftoggle{NAllowLLinebreak}{\AllowChoice*{L'}{E_1}{{E_2}_\bot}}{\AllowChoice{L'}{E_1}{{E_2}_\bot}} \ntwkstep{\RRecvNtwk{L'}{\Left}} E_1
}

\DefineNtwkRule{AllowR}{L' \neq L}{L \triangleright \AllowChoice{L'}{{E_1}_\bot}{E_2} \ntwkstep{\RRecvNtwk{L'}{\Right}} E_2}

\DefineNtwkRule{IAmIn}{L \in \rho}
{L \triangleright \AmIIn{\rho}{E_1}{E_2} \ntwkstep{\iota} E_1}

\DefineNtwkRule{IAmNotIn}{L \notin \rho}
{L \triangleright \AmIIn{\rho}{E_1}{E_2} \ntwkstep{\iota} E_2}

\newtoggle{NForkLinebreak}
\DefineNtwkRule{Fork}{
  {\def\arraystretch{1.1}
  \begin{array}{@{}l@{}}
    E_1' = \subst*{E_1}{{\alpha}{L'}{x}{\say{L'}}} \\
    E_2' = \subst*{E_2}{{\alpha}{L'}{x}{\say{L'}}}
  \end{array}} \\
  {\def\arraystretch{1.1}
  \begin{array}{@{}c@{}}
    \fv{E_1'} = \varnothing
  \end{array}}
}{
  L \triangleright
  \iftoggle{NForkLinebreak}{\NtwkFork{(\alpha,x)}{E_1}{E_2}}{\NtwkFork{(\alpha,x)}{E_1}{E_2}}
  \ntwkstep{\RForkNtwk{L'}{E_1'}}
  E_2'
}

\DefineNtwkRule{Exit}{ }{
  L \triangleright \NtwkExit \ntwkstep{\RExitNtwk} \NtwkUnit
}

\DefineChorRule{Ctx}{
  \conf*{C} \step[R] \conf{C'}{\Omega'}
}{\conf*{\eta[C]} \step[\ctxredex{\eta}] \conf{\eta[C']}{\Omega'}}

\DefineChorRule{Done}{
  e_1 \localstep e_2 \\\\
  \nl{\rho} \subseteq \Omega \\
  \fv{\rho} = \varnothing
}{\conf*{\rho.e_1} \step[\RDone{\rho}{e_1}{e_2}] \conf*{\rho.e_2}}

\DefineChorRule{App}{
  f = \Fun{\rho}{F}{X}{C} \\
  \val{V} \\\\
  \nl{\rho} \subseteq \Omega \\
  \fv{\rho} = \varnothing
}{\conf*{f \appchor{\rho} V} \step[\RApp{\rho}] \conf*{\subst*{C}{{F}{f}{X}{V}}}}

\DefineChorRule{TApp}{
  f = \TFunLoc{F}{\alpha}{C} \\
  \nl{\rho} \subseteq \Omega \\
  \fv{\rho} = \varnothing
}{\conf*{f \appchor{\rho} t} \step[\RTApp{\rho}] \conf*{\subst*{C}{{F}{f}{\alpha}{t}}}}

\DefineChorRule{UnfoldFold}{
  \val{V} \\
  \nl{\rho} \subseteq \Omega \\
  \fv{\rho} = \varnothing
}{\conf*{\Unfold{\rho}{(\Fold{\rho}{V})}} \step[\RUnfoldFold{\rho}] \conf*{V}}

\DefineChorRule{FstPair}{
  \val{V_1} \\
  \val{V_2} \\
  \nl{\rho} \subseteq \Omega \\
 \fv{\rho} = \varnothing
}{\conf*{\Fst{\rho}{(V_1,V_2)_\rho}} \step[\RFstPair{\rho}] \conf*{V_1}}

\DefineChorRule{SndPair}{
  \val{V_1} \\
  \val{V_2} \\
  \nl{\rho} \subseteq \Omega \\
  \fv{\rho} = \varnothing
}{\conf*{\Snd{\rho}{(V_1,V_2)_\rho}} \step[\RSndPair{\rho}] \conf*{V_2}}

\DefineChorRule{CaseInl}{
  \val{V} \\
  \nl{\rho} \subseteq \Omega \\
  \fv{\rho} = \varnothing
}{\conf*{\Case*{\rho}{(\Inl{\rho}{V})}{X}{C_1}{Y}{C_2}} \step[\RCaseInl{\rho}] \conf*{\subst{C_1}{X}{V}}}

\DefineChorRule{CaseInr}{
  \val{V} \\
  \nl{\rho} \subseteq \Omega \\
  \fv{\rho} = \varnothing
}{\conf*{\Case*{\rho}{(\Inr{\rho}{V})}{X}{C_1}{Y}{C_2}} \step[\RCaseInr{\rho}] \conf*{\subst{C_2}{X}{V}}}

\DefineChorRule{LocalCaseInl}{
  \val{v} \\
  \nl{\rho} \subseteq \Omega \\
  \fv{\rho} = \varnothing
}{\conf*{\LocalCase*{\rho}{\rho.(\LocalInl{v})}{x}{C_1}{y}{C_2}} \step[\RLocalCaseInl{\rho}] \conf*{\hsubst{C_1}{\rho}{x}{v}}}

\DefineChorRule{LocalCaseInr}{
  \val{v} \\
  \nl{\rho} \subseteq \Omega \\
  \fv{\rho} = \varnothing
}{\conf*{\LocalCase*{\rho}{\rho.(\LocalInr{v})}{x}{C_1}{y}{C_2}} \step[\RLocalCaseInr{\rho}] \conf*{\hsubst{C_2}{\rho}{x}{v}}}

\DefineChorRule{LetV}{
  \val{v} \\
  \nl{\rho} \subseteq \Omega \\
  \fv{\rho} = \varnothing
}{\conf*{\LetIn{\rho.x \ty t_e}{\rho'.v}{C}} \step[\RLet{\rho}{v}] \conf*{\hsubst{C}{\rho}{x}{v}}}

\DefineChorRule{LetI}{
  \conf*{C_2} \step[R] \conf{C_2'}{\Omega'} \\\\
  \cloc{C_1} \cap \rloc{R} = \varnothing \\
  \rho \cap \rloc{R} = \varnothing \\
  \fv{\rho} = \varnothing
}{\conf*{\LetIn{\rho.x \ty t_e}{C_1}{C_2}} \step[R] \conf{\LetIn{\rho.x \ty t_e}{C_1}{C_2'}}{\Omega'}}

\newtoggle{TyLetVLinebreak}
\DefineChorRule{TyLetV}{
  \val{\say{t}} \\
  \nl{\rho} \subseteq \Omega \\
  \fv{\rho} = \varnothing
}{
  {\begin{array}{@{}l@{}}
    \conf*{\iftoggle{TyLetVLinebreak}
      {\LetIn*{\rho.\alpha \knd \kappa}{\rho'.\say{t}}{C}}
      {\LetIn{\rho.\alpha \knd \kappa}{\rho'.\say{t}}{C}}}
    \step[\RLet{\rho.\alpha}{t}] \conf*{\subst*{C}{{\alpha}{t}}}
  \end{array}}
}

\DefineChorRule{TyLetI}{
  \conf*{C_2} \step[R] \conf{C_2'}{\Omega'} \\\\
  \cloc{C_1} \cap \rloc{R} = \varnothing \\
  \rho \cap \rloc{R} = \varnothing \\
  \fv{\rho} = \varnothing
}{\conf*{\LetIn{\rho.\alpha \knd \kappa}{C_1}{C_2}} \step[R] \conf{\LetIn{\rho.\alpha \knd \kappa}{C_1}{C_2'}}{\Omega'}}

\DefineChorRule{SendV}{
  \val{v} \\
  L_1 \in \rho_1 \\
  L_1 \in \Omega \\
  \nl{\rho_2} \subseteq \Omega \\
  \fv{\rho_2} = \varnothing
}{\conf*{\rho_1.v \ChorSend[L_1] \rho_2} \step[\RSendV{L_1}{v}{\rho_2}] \conf*{(\rho_1 \cup \rho_2).v}}

\DefineChorRule{Sync}{
  L \in \Omega \\
  \nl{\rho} \subseteq \Omega \\
  \fv{\rho} = \varnothing
}{\conf*{\syncs{L}{d}{\rho} \seq C} \step[\RSendV{L}{d}{\rho}] \conf*{C}}

\DefineChorRule{SyncI}{
  \conf*{C} \step[R] \conf{C'}{\Omega'} \\\\
  \ell \notin \rloc{R} \\
  \rho \cap \rloc{R} = \varnothing \\
  \fv{\rho} = \fv{\ell} = \varnothing
}{\conf*{\syncs{\ell}{d}{\rho} \seq C} \step[R] \conf{\syncs{\ell}{d}{\rho} \seq C'}{\Omega'}}

\DefineChorRule{CaseI}{
  \conf*{C_1} \step[R] \conf{C_1'}{\Omega'} \\
  \conf*{C_2} \step[R] \conf{C_2'}{\Omega'} \\\\
  \cloc{C} \cap \rloc{R} = \varnothing \\
  \rho \cap \rloc{R} = \varnothing \\
  \fv{\rho} = \varnothing
}{\conf*{\Case*{\rho}{C}{X}{C_1}{Y}{C_2} } \step[R] \conf{\Case*{\rho}{C}{X}{C_1'}{Y}{C_2'} }{\Omega'}}

\DefineChorRule{LocalCaseI}{
  \conf*{C_1} \step[R] \conf{C_1'}{\Omega'} \\
  \conf*{C_2} \step[R] \conf{C_2'}{\Omega'} \\\\
  \cloc{C} \cap \rloc{R} = \varnothing\\
  \rho \cap \rloc{R} = \varnothing \\
  \fv{\rho} = \varnothing
}{\conf*{\LocalCase*{\rho}{C}{x}{C_1}{y}{C_2} } \step[R] \conf{\LocalCase*{\rho}{C}{x}{C_1'}{y}{C_2'} }{\Omega'}}

\DefineChorRule{AppI}{
  \conf*{C_2} \step[R] \conf{C_2'}{\Omega'} \\\\
  \cloc{C_1} \cap \rloc{R} = \varnothing
}{\conf*{C_1 \appchor{\rho} C_2} \step[R] \conf{C_1 \appchor{\rho} C_2'}{\Omega'}}

\DefineChorRule{PairI}{
  \conf*{C_2} \step[R] \conf{C_2'}{\Omega'} \\\\
  \cloc{C_1} \cap \rloc{R} = \varnothing
}{\conf*{(C_1,C_2)_\rho} \step[R] \conf{(C_1,C_2')_\rho}{\Omega'}}

\newtoggle{CForkLinebreak}
\DefineChorRule{Fork}{
  L' \notin \Omega \\
  \fv{C'} = \varnothing \\
  L \in \Omega \\\\
  C' = \subst*{C}{{\alpha}{L'}{x}{\say{L'}}}
}{{\begin{array}{@{}l@{}}
  \conf*{\Fork{(\alpha,x)}{L}{C}}
  \iftoggle{CForkLinebreak}{ \\ \quad }{}
  \step[\RFork{L}{L'}{C'}]
  \conf{\KillAfter{L'}{C'}}{\Omega \cup \{L'\}}
\end{array}}}

\DefineChorRule{ForkI}{
  \conf*{C} \step[R] \conf{C'}{\Omega'} \\
  L \notin \rloc{R} \\
}{\conf*{\Fork{(\alpha,x)}{L}{C}} \step[R] \conf{\Fork{(\alpha,x)}{L}{C'}}{\Omega'}}

\DefineChorRule{Kill}{
  \val{V} \\
  L \in \Omega
}{\conf*{\KillAfter{L}{V}} \step[\RKill{L}] \conf{V}{\Omega \setminus \{L\}}}

\DefineChorRule{KillI}{
  L \notin \cloc{C} \\
  L \in \Omega
}{\conf*{\KillAfter{L}{C}} \step[\RKill{L}] \conf{V}{\Omega \setminus \{L\}}}

\DefineKindingRule{Var}{
  \provesCol \Gamma \\
  \alpha \knd \kappa \in \Gamma
}{\chorkinded{\Gamma}{\alpha}{\kappa}}

\DefineKindingRule{SubVar}{
  \provesCol \Gamma \\
  \alpha \knd \finsetKnd \in \Gamma
}{\chorkinded{\Gamma}{\alpha}{\setKnd}}

\DefineKindingRule{Loc}{
  \Location \in \Locations
}{\chorkinded{\Gamma}{\Location}{\locKnd}}

\DefineKindingRule{Sng}{
  \chorkinded{\Gamma}{\ell}{\locKnd}
}{\chorkinded{\Gamma}{\{\ell\}}{\setKnd}}

\DefineKindingRule{SngFin}{
  \chorkinded{\Gamma}{\ell}{\locKnd}
}{\chorkinded{\Gamma}{\{\ell\}}{\finsetKnd}}

\DefineKindingRule{Union}{
  \chorkinded{\Gamma}{\rho_1}{\setKnd} \\
  \chorkinded{\Gamma}{\rho_2}{\setKnd}
}{\chorkinded{\Gamma}{\rho_1 \cup \rho_2}{\setKnd}}

\DefineKindingRule{UnionFin}{
  \chorkinded{\Gamma}{\rho_1}{\finsetKnd} \\
  \chorkinded{\Gamma}{\rho_2}{\finsetKnd}
}{\chorkinded{\Gamma}{\rho_1 \cup \rho_2}{\finsetKnd}}

\DefineKindingRule{AnySet}{
  \provesCol \Gamma
}{\chorkinded{\Gamma}{\anyLoc}{\setKnd}}

\DefineKindingRule{Local}{
  \provesCol \Gamma \\
  \localkinded{\Gamma}{t_e}
}{\chorkinded{\Gamma}{t_e}{*_e}}

\DefineKindingRule{At}{
  \chorkinded{\Gamma}{t_e}{*_e} \\
  \chorkinded{\Gamma}{\rho}{\setKnd}
}{\chorkinded{\Gamma}{t_e @ \rho}{\tyknd{\rho}}}

\DefineKindingRule{Arrow}{
  \chorkinded{\Gamma}{\rho}{\setKnd} \\\\
  \chorkinded{\Gamma}{\tau_1}{\tyknd{\rho_1}} \\
  \chorkinded{\Gamma}{\tau_2}{\tyknd{\rho_2}}
}{\chorkinded{\Gamma}{\tau_1 \arr{\rho} \tau_2}{\tyknd{\rho_1 \cup \rho_2 \cup \rho}}}

\DefineKindingRule{Prod}{
  \chorkinded{\Gamma}{\tau_1}{\tyknd{\rho_1}} \\
  \chorkinded{\Gamma}{\tau_2}{\tyknd{\rho_2}}
}{\chorkinded{\Gamma}{\tau_1 \times \tau_2}{\tyknd{\rho_1 \cup \rho_2}}}

\DefineKindingRule{Sum}{
  \chorkinded{\Gamma}{\tau_1}{\tyknd{\rho_1}} \\
  \chorkinded{\Gamma}{\tau_2}{\tyknd{\rho_2}} \\\\
  \chorkinded{\Gamma}{\rho}{\setKnd} \\
  \rho_1 \cup \rho_2 \subseteq \rho
}{\chorkinded{\Gamma}{\tau_1 +_\rho \tau_2}{\tyknd{\rho}}}

\DefineKindingRule{Rec}{
  \chorkinded{\Gamma, \alpha \knd \tyknd{\rho}}{\tau}{\tyknd{\rho}}
}{\chorkinded{\Gamma}{\mu_\rho \alpha \ldotp \tau}{\tyknd{\rho}}}

\DefineKindingRule{AllLoc}{
  \kappa \in \{\locKnd, \setKnd, \finsetKnd\} \\\\
  \chorkinded{\Gamma, \alpha \knd \kappa}{\tau}{\tyknd{\rho_\tau}} \\
  \chorkinded{\Gamma, \alpha \knd \kappa}{\rho}{\setKnd} \\
}{\chorkinded{\Gamma}{\allty{\alpha \knd \kappa}{\rho}{\tau}}{\tyknd{\anyLoc}}}

\DefineKindingRule{All}{
  \kappa \in \{*_e, \tyknd{\rho'}\} \\\\
  \chorkinded{\Gamma, \alpha \knd \kappa}{\tau}{\tyknd{\rho_\tau}} \\
  \chorkinded{\Gamma}{\rho}{\setKnd}
}{\chorkinded{\Gamma}{\allty{\alpha \knd \kappa}{\rho}{\tau}}{\tyknd{\rho_\tau \cup \rho}}}

  \DefineTypingRule{Var}{
    \proves \ctx \\
    X \ty \tau \in \ctx
  }{\chortyped{\ctx}{X}{\tau}{\varnothing}}

\DefineTypingRule{Done}{
  \proves \ctx \\
  \chorkinded{\ctx}{\rho}{\setKnd} \\
  \localtyped{\proj{\ctx}{\rho}}{e}{t_e} \\\\
  \rho' = \text{if}~\val{e}~\text{then}~\varnothing~\text{else}~\rho
}{\chortyped{\ctx}{\rho.e}{t_e @ \rho}{\rho'}}

\DefineTypingRule{Fun}{
  \chortyped{\ctx, F \ty \tau_1 \arr{\rho} \tau_2, X \ty \tau_1}{C}{\tau_2}{\rho} \\\\
  \chorkinded{\ctx}{\tau_1}{\tyknd{\rho_a}} \\
  \chorkinded{\ctx}{\tau_2}{\tyknd{\rho_b}} \\\\
  \rho' = \rho_a \cup \rho_b \cup \rho
}{\chortyped{\ctx}{\Fun{\rho'}{F}{X}{C}}{\tau_1 \arr{\rho} \tau_2}{\varnothing}}

\DefineTypingRule{App}{
  \chortyped{\ctx}{C_1}{\tau_1 \arr{\rho} \tau_2}{\rho_1} \\
  \chortyped{\ctx}{C_2}{\tau_1}{\rho_2} \\\\
  \chorkinded{\ctx}{\tau_1}{\tyknd{\rho_a}} \\
  \chorkinded{\ctx}{\tau_2}{\tyknd{\rho_b}} \\\\
  \rho' = \rho_a \cup \rho_b \cup \rho
}{\chortyped{\ctx}{C_1 \appchor{\rho'} C_2}{\tau_2}{\rho_1 \cup \rho_2 \cup \rho'}}

\DefineTypingRule{TFunLoc}{
  \kappa \in \{\locKnd, \setKnd, \finsetKnd\} \\\\
  \chortyped{\ctx, F \ty \forall \alpha \knd \kappa [\rho] \ldotp \tau, \alpha \knd \kappa}{C}{\tau}{\rho}
}{\chortyped{\ctx}{\TFunLoc{F}{\alpha \knd \kappa}{C}}{\forall \alpha \knd \kappa [\rho] \ldotp \tau}{\varnothing}}

\DefineTypingRule{TAppLoc}{
  \kappa \in \{\locKnd, \setKnd, \finsetKnd\} \\\\
  \chortyped{\ctx}{C}{\forall \alpha \knd \kappa [\rho] \ldotp \tau}{\rho_1} \\
  \chorkinded{\ctx}{t}{\kappa} \\\\
  \chorkinded{\ctx}{\subst{\tau}{\alpha}{t}}{\tyknd{\rho_\tau}} \\
  \rho' = \rho_\tau \cup \subst{\rho}{\alpha}{t}
}{\chortyped{\ctx}{C \appchor{\rho'} t}{\subst{\tau}{\alpha}{t}}{\rho_1 \cup \rho'}}

\DefineTypingRule{TFun}{
  \kappa \in \{*_e, \tyknd{\rho''}\} \\\\
  \chortyped{\ctx, F \ty \forall \alpha \knd \kappa [\rho] \ldotp \tau, \alpha \knd \kappa}{C}{\tau}{\rho} \\\\
  \chorkinded{\ctx, \alpha \knd \kappa}{\tau}{\tyknd{\rho_\tau}} \\
  \rho' = \rho_\tau \cup \rho
}{\chortyped{\ctx}{\TFunLoc{F}{\alpha \knd \kappa}{C}}{\forall \alpha \knd \kappa [\rho] \ldotp \tau}{\varnothing}}

\DefineTypingRule{TApp}{
  \kappa \in \{*_e, \tyknd{\rho''}\} \\\\
  \chortyped{\ctx}{C}{\forall \alpha \knd \kappa [\rho] \ldotp \tau}{\rho_1} \\
  \chorkinded{\ctx}{t}{\kappa} \\\\
  \chorkinded{\ctx}{\subst{\tau}{\alpha}{t}}{\tyknd{\rho_\tau}} \\
  \rho' = \rho_\tau \cup \rho
}{\chortyped{\ctx}{C \appchor{\rho'} t}{\subst{\tau}{\alpha}{t}}{\rho_1 \cup \rho'}}

\DefineTypingRule{Pair}{
  \chortyped{\ctx}{C_1}{\tau_1}{\rho_1} \\
  \chortyped{\ctx}{C_2}{\tau_2}{\rho_2} \\\\
  \chorkinded{\ctx}{\tau_1}{\tyknd{\rho_a}} \\
  \chorkinded{\ctx}{\tau_2}{\tyknd{\rho_a}} \\\\
  \rho = \rho_a \cup \rho_b
}{\chortyped{\ctx}{(C_1,C_2)_\rho}{\tau_1 \times \tau_2}{\rho_1 \cup \rho_2}}

\DefineTypingRule{Fst}{
  \chortyped{\ctx}{C}{\tau_1 \times \tau_2}{\rho'} \\\\
  \chorkinded{\ctx}{\tau_1}{\tyknd{\rho_a}} \\
  \chorkinded{\ctx}{\tau_2}{\tyknd{\rho_a}} \\\\
  \rho = \rho_a \cup \rho_b
}{\chortyped{\ctx}{\Fst{\rho}{C}}{\tau_1}{\rho \cup \rho'}}

\DefineTypingRule{Snd}{
  \chortyped{\ctx}{C}{\tau_1 \times \tau_2}{\rho'} \\\\
  \chorkinded{\ctx}{\tau_1}{\tyknd{\rho_a}} \\
  \chorkinded{\ctx}{\tau_2}{\tyknd{\rho_a}} \\\\
  \rho = \rho_a \cup \rho_b
}{\chortyped{\ctx}{\Snd{\rho}{C}}{\tau_2}{\rho \cup \rho'}}

\DefineTypingRule{Inl}{
  \chortyped{\ctx}{C}{\tau_1}{\rho} \\
  \chorkinded{\ctx}{\rho'}{\setKnd} \\\\
  \chorkinded{\ctx}{\tau_1}{\tyknd{\rho_a}} \\
  \chorkinded{\ctx}{\tau_2}{\tyknd{\rho_a}} \\\\
  \rho_a \cup \rho_b \subseteq \rho'
}{\chortyped{\ctx}{\Inl{\rho'}{C}}{\tau_1 +_{\rho'} \tau_2}{\rho}}

\DefineTypingRule{Inr}{
  \chortyped{\ctx}{C}{\tau_2}{\rho} \\
  \chorkinded{\ctx}{\rho'}{\setKnd} \\\\
  \chorkinded{\ctx}{\tau_1}{\tyknd{\rho_a}} \\
  \chorkinded{\ctx}{\tau_2}{\tyknd{\rho_a}} \\\\
  \rho_a \cup \rho_b \subseteq \rho'
}{\chortyped{\ctx}{\Inr{\rho'}{C}}{\tau_1 +_{\rho'} \tau_2}{\rho}}

\DefineTypingRule{Case}{
  \chortyped{\ctx}{C}{\tau_1 +_{\rho'} \tau_2}{\rho} \\\\
  \chortyped{\ctx, X \ty \tau_1}{C_1}{\tau}{\rho_1} \\
  \chortyped{\ctx, Y \ty \tau_2}{C_2}{\tau}{\rho_2}
}{\chortyped{\ctx}{\Case*{\rho'}{C}{X}{C_1}{Y}{C_2}}{\tau}{\rho \cup \rho' \cup \rho_1 \cup \rho_2}}

\DefineTypingRule{LocalCase}{
  \isSum{s}{t_1}{t_2} \\
  \chortyped{\ctx}{C}{s @ \rho'}{\rho} \\\\
  \chortyped{\ctx, \rho'.x \ty t_1}{C_1}{\tau}{\rho_1} \\
  \chortyped{\ctx, \rho'.y \ty t_2}{C_2}{\tau}{\rho_2}
}{\chortyped{\ctx}{\LocalCase*{\rho'}{C}{x}{C_1}{y}{C_2}}{\tau}{\rho \cup \rho' \cup \rho_1 \cup \rho_2}}

\DefineTypingRule{Fold}{
  \chortyped{\ctx}{C}{\subst{\tau}{\alpha}{\mu_\rho \alpha \ldotp \tau}}{\rho'}
}{\chortyped{\ctx}{\Fold{\rho}{C}}{\mu_\rho \alpha \ldotp \tau}{\rho'}}

\DefineTypingRule{Unfold}{
  \chortyped{\ctx}{C}{\mu_\rho \alpha \ldotp \tau}{\rho'}
}{\chortyped{\ctx}{\Unfold{\rho}{C}}{\subst{\tau}{\alpha}{\mu \alpha \ldotp \tau}}{\rho \cup \rho'}}

\DefineTypingRule{LetLocal}{
  \chortyped{\ctx}{C_1}{t_e @ \rho_2}{\rho} \\
  \rho_1 \subseteq \rho_2 \\\\
  \chortyped{\ctx, \rho_1.x \ty t_e}{C_2}{\tau}{\rho'}
}{\chortyped{\ctx}{\LetIn{\rho_1.x}{C_1}{C_2}}{\tau}{\rho \cup \rho' \cup \rho_1}}

\DefineTypingRule{LetLoc}{
  \chortyped{\ctx}{C_1}{\Loc_{\rho_1} @ \rho_3}{\rho} \\
  \rho_1 \subseteq \rho_2 \subseteq \rho_3 \\\\
  \chorkinded{\ctx}{\tau}{\tyknd{\rho_\tau}} \\
  \chortyped{\ctx, \alpha \knd \locKnd}{C_2}{\tau}{\rho'}
}{\chortyped{\ctx}{\LetIn{\rho_2.\alpha \knd \locKnd}{C_1}{C_2}}{\tau}{\rho \cup (\rho' \setminus \{\alpha\}) \cup \rho_2}}

\DefineTypingRule{LetLocSet}{
  \chortyped{\ctx}{C_1}{\LocSet_{\rho_1} @ \rho_3}{\rho} \\
  \rho_1 \subseteq \rho_2 \subseteq \rho_3 \\\\
  \chorkinded{\Gamma}{\tau}{\tyknd{\rho_\tau}} \\
  \chortyped{\ctx, \alpha \knd \finsetKnd}{C_2}{\tau}{\rho'}
}{\chortyped{\ctx}{\LetIn{\rho_2.\alpha \knd \finsetKnd}{C_1}{C_2}}{\tau}{\rho \cup (\rho' \setminus \alpha) \cup \rho_2}}

\DefineTypingRule{Send}{
  \chortyped{\ctx}{C}{t_e @ \rho_1}{\rho} \\\\
  \ell_1 \in \rho_1 \\
  \chorkinded{\ctx}{\rho_2}{\finsetKnd}
}{\chortyped{\ctx}{C \ChorSend[\ell_1] \rho_2}{t_e @ (\rho_1 \cup \rho_2)}{\rho \cup \{\ell_1\} \cup \rho_2}}

\DefineTypingRule{Sync}{
  \chortyped{\ctx}{C}{\tau}{\rho'} \\\\
  \chorkinded{\ctx}{\ell}{\locKnd} \\
  \chorkinded{\ctx}{\rho}{\finsetKnd} \\
}{\chortyped{\ctx}{\syncs{\ell}{d}{\rho} \seq C}{\tau}{\{\ell\} \cup \rho \cup \rho'}}

\DefineTypingRule{Fork}{
  \chortyped{\ctx, \alpha \knd \locKnd, \{\ell,\alpha\}.x \ty \Loc_{\alpha}}{C}{\tau}{\rho} \\\\
  \chorkinded{\ctx}{\ell}{\locKnd} \\
  \chorkinded{\ctx}{\tau}{\tyknd{\rho_\tau}}
}{\chortyped{\ctx}{\Fork{(\alpha,x)}{\ell}{C}}{\tau}{\{\ell\} \cup (\rho \setminus \{\alpha\})}}

\DefineSpawnRule{Var}{
    \proves \ctx \\
    X \ty \tau \in \ctx
  }{\chortypedplus{\ctx}{X}{\tau}{\varnothing}}

\DefineSpawnRule{Done}{
  \proves \ctx \\
  \chorkinded{\ctx}{\rho}{\setKnd} \\
  \localtyped{\proj{\ctx}{\rho}}{e}{t_e} \\\\
  \rho' = \text{if}~\val{e}~\text{then}~\varnothing~\text{else}~\rho
}{\chortypedplus{\ctx}{\rho.e}{t_e @ \rho}{\rho'}}

\DefineSpawnRule{Fun}{
  \chortypedplus{\ctx, F \ty \tau_1 \arr{\rho} \tau_2, X \ty \tau_1}{C}{\tau_2}{\rho} \\\\
  \chorkinded{\ctx}{\tau_1}{\tyknd{\rho_a}} \\
  \chorkinded{\ctx}{\tau_2}{\tyknd{\rho_b}} \\\\
  \rho' = \rho_a \cup \rho_b \cup \rho \\
  \spawnedlocs{C} = \varnothing
}{\chortypedplus{\ctx}{\Fun{\rho'}{F}{X}{C}}{\tau_1 \arr{\rho} \tau_2}{\varnothing}}

\DefineSpawnRule{App}{
  \chortypedplus{\ctx}{C_1}{\tau_1 \arr{\rho} \tau_2}{\rho_1} \\
  \chortypedplus{\ctx}{C_2}{\tau_1}{\rho_2} \\\\
  \chorkinded{\ctx}{\tau_1}{\tyknd{\rho_a}} \\
  \chorkinded{\ctx}{\tau_2}{\tyknd{\rho_b}} \\\\
  \rho' = \rho_a \cup \rho_b \cup \rho \\\\
  \nl{\rho_1} \cap \spawnedlocs{C_2} = \varnothing \\\\
  \nl{\rho_2} \cap \spawnedlocs{C_1} = \varnothing \\\\
  \namedlocs{\rho'} \cap (\spawnedlocs{C_1} \cup \spawnedlocs{C_2}) = \varnothing
}{\chortypedplus{\ctx}{C_1 \appchor{\rho'} C_2}{\tau_2}{\rho_1 \cup \rho_2 \cup \rho'}}

\DefineSpawnRule{TFun}{
  \kappa \in \{*_e, \tyknd{\rho''}\} \\\\
  \chortypedplus{\ctx, F \ty \forall \alpha \knd \kappa [\rho] \ldotp \tau, \alpha \knd \kappa}{C}{\tau}{\rho} \\\\
  \chorkinded{\ctx, \alpha \knd \kappa}{\tau}{\tyknd{\rho_\tau}} \\
  \rho' = \rho_\tau \cup \rho \\\\
  \spawnedlocs{C} = \varnothing
}{\chortypedplus{\ctx}{\TFunLoc{F}{\alpha \knd \kappa}{C}}{\forall \alpha \knd \kappa [\rho] \ldotp \tau}{\varnothing}}

\DefineSpawnRule{TApp}{
  \kappa \in \{*_e, \tyknd{\rho''}\} \\\\
  \chortypedplus{\ctx}{C}{\forall \alpha \knd \kappa [\rho] \ldotp \tau}{\rho_1} \\
  \chorkinded{\ctx}{t}{\kappa} \\\\
  \chorkinded{\ctx}{\subst{\tau}{\alpha}{t}}{\tyknd{\rho_\tau}} \\
  \rho' = \rho_\tau \cup \rho \\\\
  \namedlocs{\rho'} \cap \spawnedlocs{C} = \varnothing
}{\chortypedplus{\ctx}{C \appchor{\rho'} t}{\subst{\tau}{\alpha}{t}}{\rho_1 \cup \rho'}}

\DefineSpawnRule{TFunLoc}{
  \kappa \in \{\locKnd, \setKnd, \finsetKnd\} \\\\
  \chortypedplus{\ctx, F \ty \forall \alpha \knd \kappa [\rho] \ldotp \tau, \alpha \knd \kappa}{C}{\tau}{\rho} \\
  \spawnedlocs{C} = \varnothing
}{\chortypedplus{\ctx}{\TFunLoc{F}{\alpha \knd \kappa}{C}}{\forall \alpha \knd \kappa [\rho] \ldotp \tau}{\varnothing}}

\DefineSpawnRule{TAppLoc}{
  \kappa \in \{\locKnd, \setKnd, \finsetKnd\} \\\\
  \chortyped{\ctx}{C}{\forall \alpha \knd \kappa [\rho] \ldotp \tau}{\rho_1} \\
  \chorkinded{\ctx}{t}{\kappa} \\\\
  \chorkinded{\ctx}{\subst{\tau}{\alpha}{t}}{\tyknd{\rho_\tau}} \\
  \rho' = \rho_\tau \cup \subst{\rho}{\alpha}{t} \\\\
  \namedlocs{\rho} \cap \spawnedlocs{C} = \varnothing
}{\chortypedplus{\ctx}{C \appchor{\rho'} t}{\subst{\tau}{\alpha}{t}}{\rho_1 \cup \rho'}}

\DefineSpawnRule{Pair}{
  \chortypedplus{\ctx}{C_1}{\tau_1}{\rho_1} \\
  \chortypedplus{\ctx}{C_2}{\tau_2}{\rho_2} \\\\
  \chorkinded{\ctx}{\tau_1}{\tyknd{\rho_a}} \\
  \chorkinded{\ctx}{\tau_2}{\tyknd{\rho_a}} \\\\
  \rho = \rho_a \cup \rho_b \\\\
  \nl{\rho_1} \cap \spawnedlocs{C_2} = \varnothing \\\\
  \nl{\rho_2} \cap \spawnedlocs{C_1} = \varnothing
}{\chortypedplus{\ctx}{(C_1,C_2)_\rho}{\tau_1 \times \tau_2}{\rho_1 \cup \rho_2}}

\DefineSpawnRule{Fst}{
  \chortypedplus{\ctx}{C}{\tau_1 \times \tau_2}{\rho'} \\\\
  \chorkinded{\ctx}{\tau_1}{\tyknd{\rho_a}} \\
  \chorkinded{\ctx}{\tau_2}{\tyknd{\rho_a}} \\\\
  \rho = \rho_a \cup \rho_b \\
  \namedlocs{\rho} \cap \spawnedlocs{C} = \varnothing
}{\chortypedplus{\ctx}{\Fst{\rho}{C}}{\tau_1}{\rho \cup \rho'}}

\DefineSpawnRule{Snd}{
  \chortypedplus{\ctx}{C}{\tau_1 \times \tau_2}{\rho'} \\\\
  \chorkinded{\ctx}{\tau_1}{\tyknd{\rho_a}} \\
  \chorkinded{\ctx}{\tau_2}{\tyknd{\rho_a}} \\\\
  \rho = \rho_a \cup \rho_b \\
  \namedlocs{\rho} \cap \spawnedlocs{C} = \varnothing
}{\chortypedplus{\ctx}{\Snd{\rho}{C}}{\tau_2}{\rho \cup \rho'}}

\DefineSpawnRule{Inl}{
  \chortypedplus{\ctx}{C}{\tau_1}{\rho} \\
  \chorkinded{\ctx}{\rho'}{\setKnd} \\\\
  \chorkinded{\ctx}{\tau_1}{\tyknd{\rho_a}} \\
  \chorkinded{\ctx}{\tau_2}{\tyknd{\rho_a}} \\\\
  \rho_a \cup \rho_b \subseteq \rho'
}{\chortypedplus{\ctx}{\Inl{\rho'}{C}}{\tau_1 +_{\rho'} \tau_2}{\rho}}

\DefineSpawnRule{Inr}{
  \chortypedplus{\ctx}{C}{\tau_2}{\rho} \\
  \chorkinded{\ctx}{\rho'}{\setKnd} \\\\
  \chorkinded{\ctx}{\tau_1}{\tyknd{\rho_a}} \\
  \chorkinded{\ctx}{\tau_2}{\tyknd{\rho_a}} \\\\
  \rho_a \cup \rho_b \subseteq \rho'
}{\chortypedplus{\ctx}{\Inr{\rho'}{C}}{\tau_1 +_{\rho'} \tau_2}{\rho}}

\DefineSpawnRule{Case}{
  \chortypedplus{\ctx}{C}{\tau_1 +_{\rho'} \tau_2}{\rho} \\\\
  \chortypedplus{\ctx, X \ty \tau_1}{C_1}{\tau}{\rho_1} \\
  \chortypedplus{\ctx, Y \ty \tau_2}{C_2}{\tau}{\rho_2} \\\\
  \spawnedlocs{C_1} = \spawnedlocs{C_2} \\\\
  \nl{\rho'} \cap (\spawnedlocs{C} \cup \spawnedlocs{C_1}) = \varnothing \\\\
  \nl{\rho} \cap \spawnedlocs{C_1} = \varnothing \\\\
  (\nl{\rho_1} \cup \nl{\rho_2}) \cap \spawnedlocs{C} = \varnothing
}{\chortypedplus{\ctx}{\Case*{\rho'}{C}{X}{C_1}{Y}{C_2}}{\tau}{\rho \cup \rho' \cup \rho_1 \cup \rho_2}}

\DefineSpawnRule{LocalCase}{
  \isSum{s}{t_1}{t_2} \\
  \chortypedplus{\ctx}{C}{s @ \rho'}{\rho} \\\\
  \chortypedplus{\ctx, \rho'.x \ty t_1}{C_1}{\tau}{\rho_1} \\
  \chortypedplus{\ctx, \rho'.y \ty t_2}{C_2}{\tau}{\rho_2} \\\\
  \spawnedlocs{C_1} = \spawnedlocs{C_2} \\\\
  \nl{\rho'} \cap (\spawnedlocs{C} \cup \spawnedlocs{C_1}) = \varnothing \\\\
  \nl{\rho} \cap \spawnedlocs{C_1} = \varnothing \\\\
  (\nl{\rho_1} \cup \nl{\rho_2}) \cap \spawnedlocs{C} = \varnothing
}{\chortypedplus{\ctx}{\LocalCase*{\rho'}{C}{x}{C_1}{y}{C_2}}{\tau}{\rho \cup \rho' \cup \rho_1 \cup \rho_2}}

\DefineSpawnRule{Fold}{
  \chortypedplus{\ctx}{C}{\subst{\tau}{\alpha}{\mu_\rho \alpha \ldotp \tau}}{\rho'}
}{\chortypedplus{\ctx}{\Fold{\rho}{C}}{\mu_\rho \alpha \ldotp \tau}{\rho'}}

\DefineSpawnRule{Unfold}{
  \chortypedplus{\ctx}{C}{\mu_\rho \alpha \ldotp \tau}{\rho'} \\
  \nl{\rho} \cap \spawnedlocs{C} = \varnothing
}{\chortypedplus{\ctx}{\Unfold{\rho}{C}}{\subst{\tau}{\alpha}{\mu \alpha \ldotp \tau}}{\rho \cup \rho'}}

\DefineSpawnRule{LetLocal}{
  \chortypedplus{\ctx}{C_1}{t_e @ \rho_2}{\rho} \\
  \rho_1 \subseteq \rho_2 \\\\
  \chortypedplus{\ctx, \rho_1.x \ty t_e}{C_2}{\tau}{\rho'} \\\\
  \nl{\rho} \cap \spawnedlocs{C_2} = \varnothing \\
  \nl{\rho'} \cap \spawnedlocs{C_1} = \varnothing \\\\
  \nl{\rho_1} \cap (\spawnedlocs{C_1} \cup \spawnedlocs{C_2}) = \varnothing
}{\chortypedplus{\ctx}{\LetIn{\rho_1.x}{C_1}{C_2}}{\tau}{\rho \cup \rho' \cup \rho_1}}

\DefineSpawnRule{LetLoc}{
  \chortypedplus{\ctx}{C_1}{\Loc_{\rho_1} @ \rho_3}{\rho} \\
  \rho_1 \subseteq \rho_2 \subseteq \rho_3 \\\\
  \chorkinded{\ctx}{\tau}{\tyknd{\rho_\tau}} \\
  \chortypedplus{\ctx, \alpha \knd \locKnd}{C_2}{\tau}{\rho'} \\\\
  \nl{\rho} \cap \spawnedlocs{C_2} = \varnothing \\\\
  \nl{\rho'} \cap \spawnedlocs{C_1} = \varnothing \\\\
  \nl{\rho_2} \cap (\spawnedlocs{C_1} \cup \spawnedlocs{C_2}) = \varnothing
}{\chortypedplus{\ctx}{\LetIn{\rho_2.\alpha \knd \locKnd}{C_1}{C_2}}{\tau}{\rho \cup (\rho' \setminus \{\alpha\}) \cup \rho_2}}

\DefineSpawnRule{LetLocSet}{
  \chortypedplus{\ctx}{C_1}{\LocSet_{\rho_1} @ \rho_3}{\rho} \\
  \rho_1 \subseteq \rho_2 \subseteq \rho_3 \\\\
  \chorkinded{\Gamma}{\tau}{\tyknd{\rho_\tau}} \\
  \chortypedplus{\ctx, \alpha \knd \finsetKnd}{C_2}{\tau}{\rho'} \\\\
  \nl{\rho} \cap \spawnedlocs{C_2} = \varnothing \\\\
  \nl{\rho'} \cap \spawnedlocs{C_1} = \varnothing \\\\
  \nl{\rho_2} \cap (\spawnedlocs{C_1} \cup \spawnedlocs{C_2}) = \varnothing
}{\chortypedplus{\ctx}{\LetIn{\rho_2.\alpha \knd \finsetKnd}{C_1}{C_2}}{\tau}{\rho \cup (\rho' \setminus \alpha) \cup \rho_2}}

\DefineSpawnRule{Send}{
  \chortypedplus{\ctx}{C}{t_e @ \rho_1}{\rho} \\\\
  \ell_1 \in \rho_1 \\
  \chorkinded{\ctx}{\rho_2}{\finsetKnd} \\\\
  (\nl{\ell} \cup \nl{\rho_2}) \cap \spawnedlocs{C} = \varnothing
}{\chortypedplus{\ctx}{C \ChorSend[\ell_1] \rho_2}{t_e @ (\rho_1 \cup \rho_2)}{\rho \cup \{\ell_1\} \cup \rho_2}}

\DefineSpawnRule{Sync}{
  \chortypedplus{\ctx}{C}{\tau}{\rho'} \\\\
  \chorkinded{\ctx}{\ell}{\locKnd} \\
  \chorkinded{\ctx}{\rho}{\finsetKnd} \\\\
  (\nl{\ell} \cup \nl{\rho}) \cap \spawnedlocs{C} = \varnothing
}{\chortypedplus{\ctx}{\syncs{\ell}{d}{\rho} \seq C}{\tau}{\{\ell\} \cup \rho \cup \rho'}}

\DefineSpawnRule{Fork}{
  \chortypedplus{\ctx, \alpha \knd \locKnd, \{\ell,\alpha\}.x \ty \Loc_{\alpha}}{C}{\tau}{\rho} \\\\
  \chorkinded{\ctx}{\ell}{\locKnd} \\
  \chorkinded{\ctx}{\tau}{\tyknd{\rho_\tau}} \\\\
  \nl{\ell} \cap \spawnedlocs{C} = \varnothing
}{\chortypedplus{\ctx}{\Fork{(\alpha,x)}{\ell}{C}}{\tau}{\{\ell\} \cup (\rho \setminus \{\alpha\})}}

\DefineSpawnRule{Kill}{
  \chortypedplus{\ctx}{C}{\tau}{\rho} \\
  L \notin \spawnedlocs{C}
}{\chortypedplus{\ctx}{\KillAfter{L}{C}}{\tau}{\rho \cup \{L\}}}

\title{Step in Tine: Forking Processes in Functional Choreographies}
\date{}

\author{Ashley Samuelson}
\orcid{0009-0001-8800-2590}
\affiliation{
  \institution{University of Wisconsin--Madison}
  \city{Madison}
  \state{Wisconsin}
  \country{USA}
}
\email{ashley.samuelson@wisc.edu}

\author{Andrew K. Hirsch}
\orcid{0000-0003-2518-614X}
\affiliation{
  \institution{University at Buffalo, SUNY}
  \city{Buffalo}
  \state{New York}
  \country{USA}
}
\email{akhirsch@buffalo.edu}

\author{Ethan Cecchetti}
\orcid{0000-0001-7900-8328}
\affiliation{
  \institution{University of Wisconsin--Madison}
  \city{Madison}
  \state{Wisconsin}
  \country{USA}
}
\email{cecchetti@wisc.edu}

\begin{document}

\begin{abstract}
  Traditional concurrent-programming techniques require programmers to painstakingly write programs for each participant in a concurrent system.
Choreographic programming, in contrast, allows a programmer to write one centralized program and compile it to individual programs.
This approach simplifies critical properties like deadlock freedom,
but it complicates \emph{forking new processes}, a core primitive in concurrent programming.
This work addresses that gap with the choreographic fork calculus~\lamfork, the first functional choreographic language with process forking.
\lamfork~provides a deadlock-freedom guarantee while allowing programs to dynamically determine
when to spawn new processes, what they will do, and who will communicate with them.
In doing so, it supports practical applications like load balancers and parallel divide-and-conquer.

\end{abstract}

\maketitle

\section{Introduction}
\label{sec:introduction}

As nearly every computer system has come to rely on parallelism for efficiency,
the difficulty of writing correct concurrent code has become an increasing concern.
Complex interactions between processes can lead to subtle bugs such as deadlocks,
where two or more processes are waiting on each other, preventing the system from progressing.
\emph{Choreographic programming}~\citep{Montesi23} has recently emerged as a promising tool to address this challenge.
Instead of writing separate programs for each process,
the choreographic paradigm allows programmers to specify the behavior and interactions of all processes in a single, top-level program called a choreography.
A compiler then produces code for each process using a procedure called \emph{endpoint projection}~(EPP).
This global specification allows programmers to reason about the system as a whole and leads to \emph{deadlock freedom by design},
eliminating a common source of bugs in concurrent programs.

In the beginning, choreographic programming languages formed a promising foundation, but lacked language features necessary to use the paradigm in modern software engineering~\cite{Montesi13,CarboneM13}.
A flurry of recent work has added capabilities 
including higher-order programming~\cite{CruzFilipeGLMP22,HirschG22,GiallorenzoMP23},
process polymorphism~\cite{GraversenHM24,SamuelsonHC25}, and multiply-located values~\cite{BatesKJSKN25,SamuelsonHC25}.
While these features have made choreographic programming more expressive, they still lack critical capabilities, including the ability to dynamically \emph{fork processes}.

Early choreographic work enabled process spawning~\citep{CarboneM13,CruzFilipeM17c},
but relied on a feature-poor calculus, lacking data structures, closures, and even simple variable binding.
These omissions do not merely simplify presentation;
they are critical to ensuring deadlock freedom without a type system or other powerful static analysis.
This work develops a type system to manage spawned processes,
allowing us to build an expressive language with many modern functional-programming features, while still providing a deadlock-freedom guarantee.
Moreover, first-class process names~\cite{SamuelsonHC25} facilitate writing concurrent programs in the intuitive style provided by choreographies,
even when those programs rely on dynamic process spawning.

Enabling dynamic process forking raises three important safety and implementation concerns.

\begin{enumerate}[leftmargin=*]
    \item Existing processes must be made aware of a new process so they can interact with it.
    \item Executing processes must not attempt to communicate with a terminated process.
    \item New processes must know what code to execute, and the code must match the expectations of existing processes to avoid deadlock.
\end{enumerate}

Problem~1 can be resolved utilizing first-class process names~\cite{SamuelsonHC25},
a form of higher-order communication allowing a parent process to transmit the name of a newly-spawned child across the network.
Problem~2 is fundamentally a resource-management problem.
Spawned processes are resources.
They should be released when they are no longer needed, and any attempt to use them after they are released will cause critical failures---in this context, deadlock.
We solve this problem by lifting the principles of type-based resource management to process polymorphism,
similar to the invention of \emph{located types} $t @ \Alice$ to represent data of type~$t$ at the location~$\Alice$ in functional choreographies~\cite{HirschG22,CruzFilipeGLMP22}.

Problem~3 requires more careful analysis.
New processes must know what code to execute, so their parent must have access to that code to initialize them.
To properly project the parent's code from a choreography, endpoint projection~(EPP) must be able to statically determine what code the child will need.
If the child has an unbounded dynamic lifetime, identifying this code statically is an undecidable problem.
One might give the child the entire choreography,
but avoiding deadlock from mismatched expectations with existing processes would then pose a serious challenge.

A key insight of this work is that we can resolve Problem~3 using a simple and well understood resource-management technique:
\emph{syntactically scoping} process lifetimes.
With lexical lifetimes,
the code for a spawned thread is just the projection of the choreographic code where the thread's name is in-scope---
a straightforward compile-time determination.
However, care must be taken to prevent
spawned process names---references to scoped resources---from escaping their scope and causing deadlock.
Closures, in particular, complicate this task.
Consider the following choreographies returning closures that outsource a computation provided by a client~$\Client$ to a worker.
\begin{mathpar}
  \addtocounter{numlevels}{1}
  \def\arraystretch{1.1}
  \begin{array}{@{}l@{}}
    \ReserveWorkerGood =
    \\
    \quad \LamN~\task\ldotp
      \begin{array}[t]{@{}r@{~}l@{}}
        \LetN & \begin{array}[t]{@{}l@{}}
          \alpha \ChorDef \Client.\ForkN() \\
          \alpha.\task \ChorDef \task \ColSend \alpha
        \end{array} \\
        \In & \alpha.(\task~()) \ColSend \Client
      \end{array}
  \end{array}
  \and
  \def\arraystretch{1.1}
  \begin{array}{@{}l@{}}
    \color{Red}
    \ReserveWorkerBad =
    \\
    \color{Red}
    \colorlet{chorcolor}{Red}
    \colorlet{loccolor}{Red}
    \quad \LetIn*{\alpha}{\Client.\ForkN()}{
      \LamN~\task\ldotp\LetIn*{\alpha.\task}{\task \ColSend \alpha}{\alpha.(\task~()) \ColSend \Client}}
  \end{array}
  \addtocounter{numlevels}{-1}
\end{mathpar}

Both take a \task owned by~$\Client$ and send it to a worker~$\alpha$ spawned by the $\Client.\ForkN()$ operation,
which executes the task and sends the results back to~$\Client$.
However, they differ critically in the scope of the spawned worker.
The first implementation, \ReserveWorkerGood, is a function that, on application, spawns~$\alpha$, executes the task, and then terminates~$\alpha$ as it falls out of scope.
The second implementation, however, spawns the worker \emph{first} and returns a function that simply closes over its name.
Since~$\alpha$ has a lexical lifetime, it terminates before the closure is ever called, never performing any action.
If~$\Client$ later invokes the closure (e.g., $\ReserveWorkerBad~\Client.(\LocalLam \_ \ldotp 1 \LocalPlus 1)$),
it will deadlock due to~$\Client$ attempting to send a message to a worker process that is no longer running.

This work develops \lamfork (pronounced ``lambda-fork''), the first functional choreographic language to allow dynamic process spawning.
To avoid deadlocks from process names escaping their scope,
\lamfork uses a type system to disallow code like \ReserveWorkerBad while allowing code like \ReserveWorkerGood.
The type system tracks which processes might participate in each subexpression
and tags function types with a set of \emph{latent participants} to describe which existing locations---
those not spawned in the body of the function---might be required to fully execute the function.
For instance, the latent participants in the closure returned by \ReserveWorkerGood are just~$\Client$,
while those for the closure in \ReserveWorkerBad are~$\Client$ and~$\alpha$.
The type system then prohibits spawned process names from escaping their scope \emph{even as latent participants}.
This final restriction correctly deems \ReserveWorkerBad ill-typed.

Spawning processes also complicates proving deadlock freedom.
The only prior deadlock-freedom proof supporting multiply-located values relied on
a bisimulation between the choreography and its projection that, in turn, relied on extensive implicit synchronization~\citep{SamuelsonHC25}.
In \lamfork, such synchronization would not only negate many benefits of parallelism,
it would require locations who do not even know each other exist to synchronize.
To avoid this conundrum, we define a weaker correspondence that allows us to give
the first proof of deadlock freedom for a choreographic language with multiply-located values and no implicit synchronization.

The rest of the paper is structured as follows.
Section~\ref{sec:background} reviews background, and Section~\ref{sec:system_model} defines the underlying system model.
The subsequent Sections present the following main contributions.
\begin{itemize}[leftmargin=*]
\item Section~\ref{sec:language} introduces \lamfork,
  a functional choreographic language supporting process spawning.
\item Section~\ref{sec:static-semantics} formulates a type system for \lamfork that tracks which processes might participate in a choreography
  to ensure that no process needs to perform computation after it dies.
\item Section~\ref{sec:endpoint-projection} defines endpoint projection~(EPP),
  a procedure to compile a choreography into a target-language program (Section~\ref{sec:network-lang}) for each process,
  and characterizes EPP's correctness with respect to a top-level operational semantics.
  This result combines with the soundness of the type system to prove that executing a projected system will never cause a deadlock---
  the first such proof for a choreography with multiply-located values and no implicit synchronization.
\end{itemize}
Finally, Section~\ref{sec:related-work} reviews related work, and Section~\ref{sec:conclusion} concludes.


\section{Background}
\label{sec:background}
To better situate the contributions of our work, we first review the design, features, and limitations of prior choreographic programming languages.

\subsection{Functional Choreographies}
\label{sec:bg-functional-choreo}
Our language~\lamfork primarily extends \lamqc~\citep{SamuelsonHC25}, which in turn extends Pirouette~\citep{HirschG22}, the first functional choreographic programming language.
Like most choreographic languages, \lamfork~prefixes each local operation with the process that performs it.
For example, to specify that location~\Alice should compute~$1 \LocalPlus 3$ and send the result to~\Bob, one would write
$\Alice.(1 \LocalPlus 3) \ColSend \Bob$.
To differentiate operations, we write source programs using a \textsf{sans-serif} font,
with location constants in \LocFont{red},
local operations in \LocalLangFont{green},
and choreographic operations in \FontChoreo{blue}.
Local programs such as ``$1 \LocalPlus 3$'' can be specified by a user-chosen language, so long as it is equipped with
a substitution-based operational semantics, a sound type system, and its values can be shared via message-passing.
Choreographic constructs such as send~($\ColSend$), on the other hand, are fixed.

While the choreographic and local operations are separate, the choreography can
sequence local computations using features such as \LetN-expressions.
For example, the output of~$\Alice.(1 \LocalPlus 3) \ColSend \Bob$ is an integer located at~\Bob,
which can then be used in a subsequent local computation at \Bob by binding the result to a (local) variable~$x$ as follows:
$\LetIn{\Bob.x}{\Alice.(1 \LocalPlus 3) \ColSend \Bob}{\Bob.(x \LocalMinus 2)}$.

Choreography-level control flow is supported by the expression $\ITEBase{C}{C_1}{C_2}$,
where~$C$ evaluates to a boolean value known to some process~$\ell$.
As others outside of~$\ell$ cannot see the output of~$C$, they cannot determine which branch to execute.
To allow other locations to participate in this branch, \lamfork includes a \emph{selection statement}
$\syncs{\ell}{d}{\rho} \seq C$ in which~$\ell$ communicates the chosen branching
direction~$d \in \{ \Left, \Right \}$ of $\Left$eft or $\Right$ight to all locations in the set~$\rho$.

Like \lamqc, \lamfork supports \emph{multiply-located values}~(MLVs)~\citep{BatesKJSKN25,SweetDHEHH23},
which are local values known to multiple locations and ensure all parties agree, in addition to selection statements.
For instance, \mbox{$\{\Alice,\Bob\}.(3 \LocalGreater 1) \ChorSend[\Alice] \Client$}
first instructs~\Alice and~\Bob to compute the local operation~$3 \LocalGreater 1$,
and then instructs~\Alice to send the result to~\Client.
The result is the multiply-located value~$\{\Alice, \Bob, \Client\}.\True$.
MLVs can be used as an alternative to synchronization messages for branching.
For instance, the following choreography ensures all three locations branch in the same direction:
$\ITE[\{\Alice,\Bob,\Client\}]{\{\Alice,\Bob\}.(3 \LocalGreater 1) \ChorSend[\Alice] \Client}{C_1}{C_2}$.
Note that when using MLVs, it often becomes necessary to annotate e.g., who is sending a value or participating in an \IfN~expression.

\paragraph{Endpoint Projection}
Like most choreographic languages, \lamfork defines a compilation procedure called \emph{endpoint projection}~(EPP) that translates a choreography into a separate program for each participant.
EPP is a syntax-guided translation that extracts the actions that a single location needs to perform from the choreography.
The location being projected to is denoted by a subscript to the compilation operator, as in $\epp{C}{\Alice}$ and $\epp{C}{\Bob}$.
For instance, consider the choreography
\[ C = \Alice.(2 \LocalTimes 4) \ColSend \Client \seq \Bob.(3 \LocalPlus 2) \ColSend \Client \]
in which~\Alice and~\Bob each compute a value and then send it to~\Client.
The only actions that~\Alice and~\Bob need to perform are computing the value and sending it, while~\Client needs to receive both values:
\begin{mathpar}
  \epp{C}{\Alice} = \SendTo{\Ret{2 \LocalTimes 4}}{\Client}
  \and
  \epp{C}{\Bob} = \SendTo{\Ret{3 \LocalPlus 2}}{\Client}
  \and
  \epp{C}{\Client} =
  \mkern2mu
  \begin{array}[m]{@{}l@{}}
    \RecvFrom{\Alice} \NtwkSeq \\
    \RecvFrom{\Bob}
  \end{array}
\end{mathpar}
The target (network) language is written using an $\FontNtwk{orange~teletype}$ font.

\lamfork also defines a top-level operational semantics directly on choreographies, allowing developers to reason about the behavior of the system as a whole.
This choreographic semantics allows for out-of-order execution,
so long as the order of operations for each individual location is respected.
For instance, note that if we execute the projected programs shown above concurrently,
either~\Alice or~\Bob could perform their local computation first,
but~\Client must receive the values in the specified order.
This means that \Alice and~\Bob can compute 8 and 5, respectively, in any order in choreography~$C$, even though \Bob's computation is after the semicolon.
\Client, conversely, must receive 8 before 5.
The top-level choreographic semantics allows any of these orderings.

As this top-level semantics provides a separate interpretation from the EPP-based one,
it is important that their results are equivalent.
\citet{SamuelsonHC25} show that these two semantics are bisimilar for \lamqc, and so will always produce the same value.
Besides allowing developers to soundly reason about the execution of a system using the top-level semantics,
this property also guarantees that any concurrent execution of a projected choreography is deadlock-free,
meaning that no process will wait indefinitely for a message that will never arrive
due to a mismatch between the expected send and receive operations.

\subsection{Process Polymorphism}
Process polymorphism~\citep{GraversenHM24} allows choreographies to abstract over their participants.
Analogously to type polymorphism, process polymorphism in both \lamqc and \lamfork is implemented with a process abstraction $\TLam{\ell}{C}$,
where variable~$\ell$ represents a generic process name bound in the scope of~$C$,
allowing it to be instantiated with different participants.
For example, a programmer could write a process function in which~$\ell$ computes the sum of two numbers and sends the result to~\Alice
as $F = \TLam{\ell}{\ell.(1 \LocalPlus 3) \ColSend \Alice}$, and later instantiate~$\ell$ to~\Bob with the syntax $F~\Bob$.

While process polymorphism allows for choreographies to be instantiated with different participants,
it does not---on its own---allow for the names of processes to be treated in a first-class manner.
First-class process names~\citep{SweetDHEHH23,SamuelsonHC25} solve this problem
by allowing local computations to generate, examine, and send process names as values.
For example, the expression $\Alice.(\LocalITE{e}{\say{\Bob}}{\say{\Client}})$ selects between
the process names~\Bob and~\Client based on the value of the boolean~$e$ known to~\Alice.
The output of this computation can then be shared with~\Bob and~\Client so they are aware of
who should perform the subsequent computation, and bound to a variable~$\ell$ using \lamfork's \emph{type-let expression} (inherited from \lamqc) as follows:
\[ \LetIn*{\{\Alice,\Bob,\Client\}.\ell}{\Alice.(\LocalITE{e}{\say{\Bob}}{\say{\Client}}) \ColSend \{\Bob,\Client\}}{\ell.(1 \LocalPlus 3) \ColSend \Alice} \]


\section{System Model}
\label{sec:system_model}

\lamfork assumes the underlying system contains a set of potential computational units (threads, processes, etc.),
which we refer to interchangeably as \emph{locations} or \emph{processes},
and each has a unique name from a space~\Locations.
We use the term \emph{thread} to refer to a dynamically spawned location that will die within the run of the program.
The number of locations executing at any given time is finite,
but programs may spawn an unbounded number of new locations and each name must be unique, so~\Locations must be infinite.

As noted in Section~\ref{sec:bg-functional-choreo}, \lamfork follows Pirouette~\citep{HirschG22} and \lamqc~\citep{SamuelsonHC25}
in allowing local programs to be specified in nearly any language.
This \emph{local language} must only satisfy a set of rules common to most expression-based languages.
Our assumptions on the local language are nearly identical to those in \lamqc and Pirouette, although we make some generalizations.

\subsection{Local Language}
\paragraph{Operational Semantics}
\label{sec:local-oper-semant}

We require that the local language be presented as a set of expressions coupled with a small-step operational semantics, a distinguished set of values, and a type system.
We write $e_1 \localstep e_2$ to denote that the expression~$e_1$ steps to~$e_2$ in the local language's operational semantics.
The semantics must satisfy the following two properties.
\begin{enumerate}[topsep=3pt,itemsep=2pt]
  \item\label{prop:li:values} Values cannot step: if $\val{v}$, then there is no~$e$ such that $v \localstep e$.
  \item\label{prop:li:confluence} The semantics is confluent:
    if $e_1 \localsteps e_2$ and $e_1 \localsteps e_3$,
    then there is some~$e_4$ such that $e_2 \localsteps e_4$ and $e_3 \localsteps e_4$.
\end{enumerate}
Property~(\ref{prop:li:values}) is a standard assumption about the set of values in the language,
and property~(\ref{prop:li:confluence}) ensures multiply-located computations produce the same result
at each location they execute at.
Property~(\ref{prop:li:confluence}) loosens the requirement in \lamqc
that the local semantics must satisfy the diamond property,
allowing a wider variety of languages and evaluation strategies (e.g., full $\beta$-reduction) to be used for local computations.

\paragraph{Local Type System}
\label{sec:local-types}
Just as the syntax of a choreography depends on the syntax of the local language,
the choreographic type system depends on the local language specifying a type system.
The local type system must include both a kinding judgment and a typing judgement.
This allows, but does not require, the local type system to be polymorphic.
The type system must also be sound with respect to the operational semantics.
Specifically, it should satisfy the standard progress and preservation properties,
and additionally respect well-formed (type) substitutions.

We denote a local kinding judgment~$\localkinded{\Gamma}{t}$.
We assume for simplicity there is a single local kind~$*_e$, but our results generalize to languages with multiple kinds.
We denote local typing judgements~$\localtyped{\Gamma;\Delta}{e}{t}$
where~$\Gamma$ is again a kinding context and~$\Delta$ is a typing context.
To distinguish these judgments from the choreographic type system, we use a \LocalColor{green} double-vertical turnstile~$\eprovesCol$.

To support choreographic control-flow branching, we generalize Pirouette and \lamqc.
Instead of requiring a local boolean type, we allow an arbitrary (user-defined) predicate $\isSum{s}{t_1}{t_2}$
indicating that every value of type~$s$ can be interpreted as either a~$t_1$ or a~$t_2$.
For instance, the local language can specify~$\isSum{\Bool}{\Unit}{\Unit}$ and interpret~$\True$ as~$\metaInl{()}$ and~$\False$ as~$\metaInr{()}$.
The only requirement is that there is a deterministic partial function $\getCase$ called the \emph{extraction function}.
We require that, if~$\isSum{s}{t_1}{t_2}$ and $\localtyped{}{v}{s}$ for a value~$v$,
then either $\getCase(v) = \metaInl(v_1)$ with $\localemptyped{v_1}{t_1}$
or $\getCase(v) = \metaInr(v_2)$ with $\localemptyped{v_2}{t_2}$.
On other expressions, it may be undefined.

We support first-class process names identically to~\lamqc.
Specifically, the local language has two types~$\Loc_\rho$ and~$\LocSet_\rho$
defining first-class \emph{representations} of location names and sets of locations, respectively.
We write representations using the syntax~$\say{\Alice}$ and~$\say{\{\Alice,\Bob\}}$---denoting local values---
to distinguish them from the actual location $\Alice \in \Locations$ or location set $\{\Alice,\Bob\} \subseteq \Locations$---which are type-expressions.
While the local language may use any underlying data type (e.g., (sets of) strings) for these values,
as with the sum types above, the local language must be able to uniquely reify any well-typed representation into a corresponding kind.

There is also an additional soundness requirement---inherited from \lamqc---that the sets~$\rho$ in the types~$\Loc_\rho$ and~$\LocSet_\rho$ provide an upper-bound on the set of locations to which an expression of that type might resolve to.
As an example, we could assign both $\say{\Alice}$ and $\LocalITE{e}{\say{\Alice}}{\say{\Bob}}$ to the type $\Loc_{\{\Alice,\Bob\}}$, but~$\say{\Client}$ cannot be assigned this type.
Since a spawned thread could potentially take any name, we also require each location~$L \in \Locations$ to have at least one representation~$\say{L}$
to ensure choreographies do not become stuck when spawning a thread.
We do not require there to be any representation for a given set of locations, however, as this is not a safety concern.

Many $\lambda$-calculi already satisfy our requirements.
As our requirements strictly generalize those of \lamqc, all local languages from that work are applicable.
Here we present an examplar local langauge extending System~F that we use in examples throughout this work.
\begin{ex}[System F]\label{ex:system-f}
  \begin{figure}[t]
    \begin{syntax}
      \category[Types]{t} \alternative{\alpha} \alternative{\Unit} \alternative{\Int} \alternative{\String} \alternative{\Loc_\rho} \alternative{\LocSet_\rho} \\
      \alternative{t_1 \to t_2} \alternative{t_1 + t_2} \alternative{t_1 \times t_2}
      \alternative{\forall \alpha \ldotp t} \alternative{\mu \alpha \ldotp t}
      \category[Expressions]{e} \alternative{x} \alternative{()} \alternative{n \in \mathbb{Z}} \alternative{s \in \mathit{str}} \alternative{\LocalFun{f}{x \ty t}{e}} \alternative{e_1~e_2} \alternative{\LocalColor{\Lambda} \alpha \ldotp e} \alternative{e~t} \\
      \alternative{\LocalInl{e}} \alternative{\LocalInr{e}} \alternative{\LocalLangCase{e}{\LocalInl{x}}{e_1}{\LocalInr{y}}{e_2}} \\
      \alternative{(e_1,e_2)} \alternative{\LocalFst{e}} \alternative{\LocalSnd{e}} \alternative{\LocalFold{e}} \alternative{\LocalUnfold{e}} \\
      \alternative{e_1 \LocalPlus e_2} \alternative{e_1 \LocalTimes e_2} \alternative{e_1 \LocalEq e_2} \alternative{e_1 \LocalLess e_2}
    \end{syntax}
    \caption{Example Local Language Syntax.}
    \label{fig:local-lang-syntax}
  \end{figure}
  
  System~F extended with algebraic and recursive data types, primitive integers, and primitive strings representing locations,
  satisfies the requirements for local languages.
  To implement sum types, we define $\isSum{s}{t_1}{t_2}$ to hold precisely when $s = t_1 + t_2$ (syntactically),
  and define the extraction function as
  $$\getCase(e) =
  \left\{\begin{array}{ll}
    \metaInl(v) & e = \LocalInl{v}\\
    \metaInr(v) & e = \LocalInr{v}\\
    \Undef & \text{otherwise}
    \end{array}
  \right.
  $$

  Since we do not require that all local expressions terminate, there is no issue
  with including named recursive functions and unrestricted recursive types.
  We use lists of strings (defined using recursive data types) to represent location sets.
  Having multiple list permutations represent the same set of locations is also not a concern,
  since we do not require representations to be unique.
  The syntax of this potential local language is shown in Figure~\ref{fig:local-lang-syntax}.
\end{ex}


\section[The Lambda-Fork Language]{The \lamfork Language}
\label{sec:language}
We now present \lamfork, the first functional choreographic programming language that can dynamically spawn threads.
As previously mentioned, we inherit the core of our language from \lamqc~\citep{SamuelsonHC25}---including
features such as algebraic and recursive data types, multiply-located local computations, process polymorphism, and first-class process names---and
retain the traditional deadlock-freedom guarantee of choreographic languages.

\subsection[Lambda-Fork Syntax]{\lamfork Syntax}
\label{sec:syntax}

\begin{figure}[t]
  \begin{syntax}
    \category[Selection Labels]{d}
    \alternative{\Left}
    \alternative{\Right}

    \category[Choreographies]{C}
    \alternative{X} \alternative{\rho.e}\\
    \alternative{\LetIn{\rho.x \ty t_e}{C_1}{C_2}} \alternative{\LetIn{\rho.\alpha \knd \kappa}{C_1}{C_2}}\\
    \alternative{C \ChorSend[\ell] \rho} \alternative{\syncs{\ell}{d}{\rho} \seq C} \\
    \alternative{\Fun{\rho}{F}{X \ty \tau}{C}} \alternative{C_1 \appchor{\rho} C_2}
    \alternative{\TFun{\rho}{F}{\alpha \knd \kappa}{C}} \alternative{C \appchor{\rho} t}\\
    \alternative{\LocalCase{\rho}{C}{x}{C_1}{y}{C_2}}\\
    \alternative{\Inl{\rho}{C}} \alternative{\Inr{\rho}{C}} \alternative{\Case{\rho}{C}{X}{C_1}{Y}{C_2}}\\
    \alternative{\Fold{\rho}{C}} \alternative{\Unfold{\rho}{C}} \alternative{(C_1,C_2)_\rho} \alternative{\Fst{\rho}{C}} \alternative{\Snd{\rho}{C}} \\
    \alternative{\Fork{(\alpha,x)}{\ell}{C}} \alternative{\KillAfter{L}{C}}
  \end{syntax}

  \caption{Syntax of Choreographies in \lamfork.}
  \label{fig:abstract-syntax}
\end{figure}

Figure~\ref{fig:abstract-syntax} presents the full syntax of \lamfork.
To cleanly separate choreographic data, local data, and types,
we write choreographic program variables in uppercase Roman characters ($X, Y, F, \dotsc$),
local program variables in lowercase Roman characters ($x, y, f, \dotsc$),
and type variables in lowercase Greek characters ($\alpha, \beta, \dotsc$).
The metavariable~$\ell$ denotes a location, $\rho$~a~set of locations, $\tau$~a~choreographic type, $t_e$~a~local type, $\kappa$~a~kind, and $t$~a~type of any kind.
To support polymorphism, these metavariables all represent a type- or kind-expression (e.g., the set~$\rho = \alpha \cup \{\beta,\Alice\}$ is valid),
the syntax of which we define in Section~\ref{sec:static-semantics}.

Most constructs in \lamfork are standard for a functional language, consisting of operations on data types
appropriately generalized to choreographies, but there are some key differences.
The expression~$\rho.e$ denotes a local program~$e$ that is executed by all locations in the (non-empty) set~$\rho$.
In cases where~$\rho = \{\ell\}$ is a singleton, we use the shorthand~$\ell.e$.
Local programs like~$e$ can use variables bound in the scope of the choreography, which are prepended with the location(s) that bind(s) them.
For instance, $\Alice.x$ denotes variable~$x$ in the namespace of location~\Alice,
which is distinct from a variable~$\Bob.x$ in the namespace of \Bob, and this is reflected in our substitution semantics.
If a local variable is bound in the scope of multiple locations, we write~$\rho.x$ to mean that~$x$ is in the namespace of all locations in~$\rho$.
Local variables (resp. type variables) can be bound to the result of a choreography using a let expression~$\LetIn{\rho.x \ty t_e}{C_1}{C_2}$ (resp.~$\LetIn{\rho.\alpha \knd \kappa}{C_1}{C_2}$).
In both of these expressions, $\rho$ may be a subset of the locations who know the output of~$C_1$.

Data can be shared between locations using the operation~$C \ChorSend[\ell] \rho$,
in which the output of choreography~$C$ is sent by~$\ell$ to all locations in the set~$\rho$ via message passing.
The output of~$C$ must be a local value known to~$\ell$, although it may also be known to others.
For notational simplicity, we elide the~$\ell$ and write~$C \ColSend \rho$ when~$C$ is known only to~$\ell$.
Since the sender, all recipients, and anyone else who knew the value of a message
will all agree on it afterward, the semantics of sends are \emph{collecting}---the output is a value located at all relevant locations.
For instance, the send $\{\Alice,\Bob\}.(4 \LocalMinus 2) \ChorSend[\Alice] \{\Client, \David\}$ results in the multiply-located value $\{\Alice, \Bob, \Client, \David\}.2$.
A separate use of message passing is \emph{selection} statements~$\syncs{\ell}{d}{\rho'} \seq C$, which can be used to synchronize on the branch~$d \in \{\Left,\Right\}$ taken in a case-expression---described below---with the locations in $\rho'$.

Named recursive functions are written as~$\Fun{\rho}{F}{X \ty \tau}{C}$, where~$F$ is the name
of the function, $X$ is its argument, and~$C$ is the body of the function.
The annotation~$\rho$ is the set of locations who are required to know the function definition;
other locations do not know its definition.
This requirement is dually reflected in the syntax for function application, written $C_1 \appchor{\rho} C_2$,
which contains the infix application notation~$\appchor{\rho}$ to convey that only those locations in~$\rho$ need to perform the application.
If the function~$F$ is not recursively-defined (i.e., $F$ is not free in~$C$), we use the notation~$\LamN_\rho X \ldotp C$.
As with message sending syntax, we elide the~$\appchor{\rho}$ when~$\rho$ is clear from context.

Polymorphism is implemented in \lamfork using type functions and type applications.
Type functions, written $\TFun{\rho}{F}{\alpha \knd \kappa}{C}$, are similar to the type abstractions~$\Lambda \alpha \ldotp C$
found in System~F and previous chreographies~\citep{SamuelsonHC25, GraversenHM24}, but allow the function~$F$ to be recursively defined,
abstracting over any kind~$\kappa$ of the language.
Similarly to standard functions, the set~$\rho$ tracks which locations know the definition of the type function.
Type applications, written $C \appchor{\rho} t$, mirror function applications explained above.
As with standard functions, when $F$ is not free in~$C$ we use the notation~$\TLamN_\rho \alpha \knd \kappa \ldotp C$.

When a type function abstracts over a location (set),
it may be applied to any location.
For instance, $\TLamN \ell \knd \locKnd \ldotp \ell.(1 \LocalPlus 1)$ can be applied to both~\Alice or~\Bob,
producing an integer at the respective locations.
As we cannot a-priori know who the function will be applied to,
we therefore require that \emph{all locations know the definition} of every process-polymorphic type function
(abstracting over a kind~$\kappa \in \{\locKnd, \setKnd, \finsetKnd\}$)
declared in the scope of their lifetime.
In this case, we elide the~$\rho$ and write $\TFunLoc{F}{\alpha \knd \kappa}{C}$ or~$\TLamN \alpha \knd \kappa \ldotp C$.
Since who must perform the application depends on who the function is applied to,
we still utilize the syntax $C \appchor{\rho} t$ for applications in this case.

\lamfork includes two separate forms of branching: local case-expressions and choreographic case-expressions.
Local case-expressions, written $\LocalCase{\rho}{C}{x}{C_1}{y}{C_2}$, generalize the if-expressions
found in prior work~\citep[e.g.,][]{HirschG22,SamuelsonHC25,GraversenHM24,CruzFilipeGLMP22}, and branch the choreography on the result of a local computation.
Here~$C$ must produce a value of local type~$t$ known to~$\rho$ where $\isSum{t}{t_1}{t_2}$ holds.
The local variables~$x$ and~$y$ are bound for all locations in~$\rho$ with types~$t_1$ and~$t_2$, respectively.
If the scrutinee~$C$ is a boolean and~$x$ and~$y$ are not free in the branches, we use the syntax~$\ITE{C}{C_1}{C_2}$.
To inform additional locations~$\rho'$ which branch is taken, a programmer has two options.
First, they can share the value of~$C$ using the send operation~$C \ChorSend \rho'$ and branch on the resulting collected value.
Alternatively, they can include selection statements~$\syncs{\ell}{d}{\rho'} \seq C$ in the branches
to inform locations in~$\rho'$ which branch was taken.
In the second case, the value of~$C$ is not available to the additional locations,
which may be desirable for security or performance reasons.

Choreographic case-expressions, written $\Case{\rho}{C}{X}{C_1}{Y}{C_2}$, are conceptually similar to local case-expressions,
but instead branch the choreography on a choreographic sum---either of the form~$\Inl{\rho}{V_1}$ or $\Inr{\rho}{V_2}$.
This means that~$V_1$ and~$V_2$ could be data of a more complex type, such as a choreographic pair or list containing multiple local values.
Importantly, unlike in local case-expressions, the \emph{location of the data may differ} between the two cases,
so long as all locations that the data may appear at are contained in $\rho$, and thus know whether the value is~\InlN or~\InrN.
Selection statements can also be used in the branches of choreographic \CaseN-expressions to allow locations outside of~$\rho$ to know which branch to take.

Similar to \CaseN-expressions, choreographic pairs and recursive data types act like their usual functional-programming counterparts,
but with an annotation~$\rho$ describing who knows about the data.
As with other such annotations, we elide them when they are clear from context.

\paragraph{Fork and Kill Expressions}
The key addition of \lamfork is the~\ForkN expression, which allows dynamic spawning of new locations.
Specifically, the expression~$\Fork{(\alpha,x)}{\ell}{C}$ instructs location~$\ell$ to spawn a child process
and binds its name to the type variable~$\alpha$, allowing the new location to perform computations in the body~$C$.
The variable~$x$ is bound at locations~$\alpha$ and~$\ell$ to a first-class local representation of the name~$\alpha$,
allowing~$\ell$ to notify other locations of the new child and facilitating direct communication between~$\alpha$ and any other location.

We include two notational shortcuts for~\ForkN.
First, if~$x$ is not free in~$C$, we simplify the binding and write~$\Fork{\alpha}{\ell}{C}$.
Second, the notation~$\LetFork{\alpha}{\ell}{\rho}{C}$ is sugar for
\[
  \LetIn{(\_,x)}{\ell.\ForkN()}{(\LetIn{\alpha}{x \ChorSend[\ell] \rho}{C})}
\]
which shares the name~$\alpha$ of the newly spawned process with everyone in $\rho$, as well as $\ell$ and $\alpha$ itself.

The construct~$\KillAfter{L}{C}$ serves as a dual to the \ForkN expression,
and is used to track which threads are currently spawned and differentiate them from other non-ephemeral processes.
Informally, this is an administrative term used by the operational semantics to track thread lifetimes;
it is not intended to be in the surface language used by programmers.
It serves to make the lexical scope of the~\ForkN construct explicit during reduction,
providing the syntactic handle necessary for the type system and EPP to track the precise lifetime of the thread.
Specifically, when a new thread~$L$ is spawned by a~\ForkN expression with body~$C$, the body will simply be placed within
a~\KillAfterN expression while executing to denote the fact that~$L$ will die once~$C$ finishes execution.

Example~\ref{ex:load-balancer}, shown below, demonstrates how the \ForkN expression
can be used in tandem with other language features such as case-expressions, process polymorphism, and the type-let expression.

\begin{ex}[Load Balancer]
\label{ex:load-balancer}
  Consider a cloud computing application where a client~\Client wishes to outsource an expensive computation~$F$ with input~$X$.
  The below function~\WithWorker shows how~\Client can run~$F$ on a generic worker~$W$ using process polymorphism.
  Once the worker has computed~$F~X$, they will inform a manager process~\Mngr, who will execute a callback function~\programfont{onFinish}.
  \begin{align*}
    & \WithWorker~W~F~X~\programfont{onFinish} =
    \def\arraystretch{1.1}
    \addtocounter{numlevels}{1}
    \LetMany{{W.f}{F \ColSend W}{W.x}{X \ColSend W}{\Client.\mathit{res}}{W.(f~x) \ColSend \Client}}
            {W.\LocalLangFont{``done"} \ColSend \Mngr \seq \Mngr.(\programfont{onFinish}~()) \seq \Client.\mathit{res}}
    \addtocounter{numlevels}{-1}
  \end{align*}
  The above function does not actually select a worker;
  that job falls to~\Mngr, which maintains a pool of permanent workers, and selects an available worker dynamically to process each request.
  However, if all workers are busy, \Mngr will spawn an ephemeral worker using~\ForkN that terminates after a single job.
  The \HandleRequest function below implements this functionality.
  \begin{align*}
    & \HandleRequest~F~X =
    \def\arraystretch{1.1}
    \addtocounter{numlevels}{1}
    \begin{array}[t]{@{}l@{}}
      \LocalCaseN~(\Mngr.\AcquireWorker() \ColSend \{\Client\} \cup \Pool) ~ \OfN \\
        {}\mid \LocalSome{w} \Rightarrow \LetIn*{W}{w}{\WithWorker~W~F~X~\Mngr.(\LocalLam \_ \ldotp \ReleaseWorker~w)} \\
        {}\mid \LocalNone \Rightarrow \LetIn*{W}{\Mngr.\ForkN() \ColSend \Client}{\WithWorker~W~F~X~\Mngr.(\LocalLam \_ \ldotp ())} \\
      \end{array}
    \addtocounter{numlevels}{-1}
  \end{align*}
  \Mngr~uses~\AcquireWorker, which searches for a free worker and returns~$\LocalSome{w}$ if it finds a free worker~$w$ and~$\LocalNone$ otherwise.
  After alerting all relevant parties to the result, the choreography branches.
  If the job is run on a free worker, \Mngr~releases that worker afterward.
  If not, there is nothing to do afterward, as the newly-spawned process falls out of scope and automatically terminates.

  Note that this example critically relies on the ability, inherited from \lamqc~\citep{SamuelsonHC25},
  to send and receive first-class location name representations and reify them into type-level location names.
  Both the output of \AcquireWorker and the spawned location name are sent as messages,
  and the final worker identity is bound to a type-level location.
\end{ex}


\subsection{Operational Semantics}
\label{sec:semantics}
The operational semantics of \lamfork consists of a small-step relation using a labeled-transition system
of the form~$\langle C_1 , \Omega_1 \rangle \step[R] \langle C_2 , \Omega_2 \rangle$.
The label~$R$ represents a redex that tracks the specific reduction occurring.
The parameters~$\Omega_1$ and~$\Omega_2$ track the locations who are executing the choreography
before and after the step, respectively, and are used to track which locations remain alive.

Choreographies describe concurrent computation, so the operational semantics includes \emph{out-of-order} reductions
to reflect the ability of locations to execute independently.
These steps allow unrelated actions to occur in different orders, so long as the
order of operations for each location is respected.
Specifically, a step may only be reordered when any computations it is jumping ahead of involve a disjoint set of locations.

\begin{figure}
  \begin{syntax}
    \category[Redices]{R}
    \alternative{\RDone{\rho}{e_1}{e_2}}
    \alternative{\RApp{\rho}}
    \alternative{\RCaseInl{\rho}}
    \alternative{\RSendV{L}{m}{\rho}}
    \alternative{\RFork{L}{L'}{C}}
  \end{syntax}
  \[
    \rulefiguresize
    \def\arraystretch{1.1}
    \begin{array}{l@{\qquad}l}
      \fbox{$\rloc{R}$} & \fbox{$\cloc{C}$} \\[3pt]
      \begin{array}{r@{\hspace{0.5em}}c@{\hspace{0.5em}}l}
        \rloc{\RDone{\rho}{e_1}{e_2}} & \defeq & \rho \\
        \rloc{\RApp{\rho}} & \defeq & \rho \\
        \rloc{\RCaseInl{\rho}} & \defeq & \rho \\
        \rloc{\RSendV{L}{m}{\rho}} & \defeq & \{L\} \cup \rho \\
        \rloc{\RFork{L}{L'}{C}} & \defeq & \{L\}
      \end{array}
      &
      \begin{array}{r@{\hspace{0.5em}}c@{\hspace{0.5em}}l}
        \cloc{X} = \cloc{\Fun{\rho}{F}{X}{C}} & \defeq & \varnothing
        \\
        \cloc{\rho.e} & \defeq & \rho
        \\
        \cloc{C_1 \appchor{\rho} C_2} & \defeq & \cloc{C_1} \cup \cloc{C_2} \cup \rho
        \\
        \cloc{\LetIn{\rho.\alpha \knd \locKnd}{C_1}{C_2}} & \defeq & \cloc{C_1} \cup (\cloc{C_2} \setminus \{\alpha\}) \cup \rho
        \\
        \cloc{\Fork{(\alpha,x)}{\ell}{C}} & \defeq & \{\ell\} \cup (\cloc{C} \setminus \{\alpha\})
        \\
        \cloc{\KillAfter{L}{C}} & \defeq & \{L\} \cup \cloc{C}
      \end{array}
    \end{array}
  \]
  \caption{
    Selected Redices and Location Function Rules.
    Here~$m$ is either a local value~$v$ or a selection label~$d$.
  }
  \label{fig:redices}
\end{figure}

We compute the locations involved in a step using the \emph{redex locations} function~$\rloc{R}$,
and the locations (possibly) involved in an entire choreography using the \emph{choreography locations} function~$\cloc{C}$.
For example, the redex $\RSendV{\Alice}{v}{\Bob}$ denotes that~\Alice sends~$v$ to~\Bob.
Since precisely~\Alice and~\Bob participate in this step, $\rloc{\RSendV{\Alice}{v}{\Bob}} = \{\Alice, \Bob\}$.
Redices~$\RDone{\rho}{e_1}{e_2}$, $\RApp{\rho}$, and $\RCaseInl{\rho}$,
which label stepping a local program~$e_1$ to~$e_2$, $\beta$-reduction, and branching left, respectively,
all require all locations in~$\rho$ to participate, so $\rloc{R} = \rho$.
Redex~$\RFork{L}{L'}{C}$ appears when location~$L$ spawns a new thread~$L'$ with task~$C$.
Since~$L'$ is not alive until after the step occurs, only~$L$ participates in the~\ForkN step.

The function~$\cloc{C}$, on the other hand, captures
not only the participants of the next step~$C$ can make, but
all locations that may eventually participate in a step made by~$C$,
Thus, for instance,
\[ \cloc{\LetIn{\Alice.x}{\Alice.2}{(\Alice.(2 \LocalPlus x) \ColSend \Bob)}} = \{\Alice, \Bob\}, \]
even though~\Alice must take multiple steps before~\Bob gets involved.
Figure~\ref{fig:redices} shows selected redices and definitions for both location functions.

These two functions together determine when it is safe to reorder steps.
Specifically, a step~$R$ can execute before an entire computation~$C$
if the set of participants in the two are disjoint---$\cloc{C} \cap \rloc{R} = \varnothing$---
even if a standard in-order semantics would execute~$C$ to completion before~$R$.
The following out-of-order rule for \LetN-expressions is an example of such a step.
\[
  \rulefiguresize
  \CLetIRule[left]
\]
This rule also prohibits the out-of-order step~$R$ in the body from including locations binding a variable in the \LetN,
and ensures that all locations binding the \LetN have been resolved by requiring~$\fv{\rho} = \varnothing$.
The latter requirement prevents a situation where a location variable later resolves to a location appearing in the step,
meaning \ruleref{C-LetI} would have rearranged their operations.

Out-of-order execution can similarly occur in branches of \CaseN- and \LocalCaseN-expressions before fully evaluating the scrutinee,
with some extra requirements on the steps.
Specifically, stepping in the branches is safe when both
(1) the locations involved in the step are disjoint from those computing the scrutinee (similarly to \ruleref{C-LetI}),
and (2) the step will occur regardless of the branch taken.
The latter point is enforced by requiring identical redices and updates to the set of executing locations in the step in both branches.
The result is the following \ruleref{C-LocalCaseI} rule, with an analogous rule for choreographic case expressions.
\[
  \rulefiguresize
  \CLocalCaseIRule[left]
\]

To see this rule in action, consider the following out-of-order step.
\[
\LocalCase*{\{\Alice,\Bob\}}{\big(\Alice.e \ColSend \Bob\big)}{x}{\Bob.(1 \LocalPlus x) \ColSend \Alice \seq \Client.(3 \LocalPlus 2)}{y}{\Client.(3 \LocalPlus 2)}
\step{}
\LocalCase*{\{\Alice,\Bob\}}{\big(\Alice.e \ColSend \Bob\big)}{x}{\Bob.(1 \LocalPlus x) \ColSend \Alice \seq \Client.5}{y}{\Client.5}
\]
Although the scrutinee $\Alice.e \ColSend \Bob$ is not yet evaluated, \Client~will run the same program~$3 \LocalPlus 2$ on either branch,
and~\Client is neither involved in computing the scrutinee nor will their control flow branch.
It is thus safe to reduce~$\Client.(3 \LocalPlus 2)$ to~$\Client.5$ in both branches.
In contrast, no out-of-order step is yet available in the following choreography.
\[
  \ITE[\Alice]{\big(\LetIn{\Alice.x}{\Bob.6 \ColSend \Alice}{\Alice.(x \LocalLess 3)}\big)}{\Bob.(3 \LocalPlus 2)}{\Bob.(3 \LocalPlus 2)}
\]
Although~\Bob executes identical expressions in both branches and will not branch,
executing the computation in the branches before the send $\Bob.6 \ColSend \Alice$ in the condition would reorder~\Bob's operations, which is disallowed.

\begin{figure}
  \begin{ruleset}
    \CDoneRule
    \and
    \CAppRule
    \and
    \CSendVRule
    \and
    \toggletrue{CForkLinebreak}
    \CForkRule
    \and
    \CKillRule
  \end{ruleset}
  \caption{Selected \lamfork Operational Semantics}
  \label{fig:semantics}
\end{figure}

Figure~\ref{fig:semantics} contains a selection of additional rules.
The remaining rules are either very similar to those presented here,
or are nearly identical to ones for the corresponding construct in standard call-by-value $\lambda$-calculus.
The complete semantics can be found in Appendix~\ref{sec:full-chor-sem}.

\ruleref{C-Done} lifts the local-language semantics to choreographies,
\ruleref{C-App} applies a function to its argument,
and \ruleref{C-SendV} formalizes the multiply-located semantics of message-passing.
These steps all require that all of the ``named locations'' in~$\rho$ (written~$\nl{\rho}$) are running, formalized as $\nl{\rho} \subseteq \Omega$.
Thus, every process which needs to perform the action is able to do so.
Formally, the named locations function~$\nl{\rho}$ homomorphically collects all concrete location names appearing in the set.
Its full definition can be found in Appendix~\ref{sec:pn-def}.

The final two rules formalize how \lamfork spawns and kills new locations.
To spawn a location, \ruleref{C-Fork} selects a globally fresh location name~$L'$
and binds~$\alpha$ to~$L'$ and~$x$ to its representation~$\say{L'}$ in the body of the \ForkN expression.
It checks that the substituted body~$C'$ is closed to ensure that~$L'$---which has no access to any enclosing scope---
does not attempt to execute a program with free variables.
Finally, it wraps~$C'$ in a \KillAfterN term to denote that~$L'$ should be killed after the computation completes
and adds~$L'$ to the set~$\Omega$ of executing locations.
Once the computation completes, \ruleref{C-Kill} kills the spawned location by removing it from~$\Omega$ and returning the output of the computation.
There is also an out-of-order version of \ruleref{C-Kill} that allows a spawned thread~$L$ to terminate
when its part of the inner computation~$C$ is complete; that is, when $L \notin \cloc{C}$.

\begin{ex}[Fork Bomb]\label{ex:fork-bomb}
  Note that this semantics supports programs that spawn an unbounded number of locations.
  For example, the following \ForkBomb program never terminates, and generates an exponentially large number of spawned threads as it runs.
  \begin{mathpar}
    \def\arraystretch{1.1}
    \addtocounter{numlevels}{1}
    \ForkBomb ~=~ \TFunLoc{F}{\ell \knd \locKnd}{
      \LetMany{{\alpha}{\ell.\ForkN()}{\beta}{\ell.\ForkN()}}
              {F \appchor{\alpha} \alpha \seq F \appchor{\beta} \beta}
    }
    \addtocounter{numlevels}{-1}
  \end{mathpar}
\end{ex}


\subsection{Static Semantics}
\label{sec:static-semantics}
The static semantics of our language is defined by a kinding judgment and a typing judgment.

\subsubsection{Kinding System}
\label{sec:kind-system}

\begin{figure}[t]
  \begin{syntax}
    \category[Kinds]{\kappa}
    \alternative{\locKnd}
    \alternative{\setKnd}
    \alternative{\finsetKnd}
    \alternative{*_e}
    \alternative{\tyknd{\rho}}

    \category[Local Program Types]{t_e}
    \alternative{\alpha}
    \alternative{\Int}
    \alternative{\Bool}
    \alternative{\Loc_\rho}
    \alternative{\LocSet_\rho}
    \alternative{\ldots}

    \categoryFromSet[Locations]{\Location,\Alice,\Bob,\ldots}{\Locations}

    \category[Choreography Types]{\ell, \rho, \tau, t}
    \alternative{\alpha}
    \alternative{t_e @ \rho}
    \alternative{\tau_1 \arr{\rho} \tau_2}
    \alternative{\forall \alpha \knd \kappa [\rho] \ldotp \tau}
    \\
    \alternative{\tau_1 \times \tau_2}
    \alternative{\tau_1 +_\rho \tau_2}
    \alternative{\mu_\rho \alpha \ldotp \tau}
    \\
    \alternative{\Location}
    \alternative{\{\ell\}}
    \alternative{\rho_1 \cup \rho_2}
    \alternative{\anyLoc}
  \end{syntax}

  \caption{Syntax of Types and Kinds. Here~$\alpha$ is a type variable.}
  \label{fig:types}
\end{figure}

To support type and process polymorphism, we define a kinding judgment $\chorkinded{\Gamma}{t}{\kappa}$, where $\Gamma$ is a kinding context,
$t$ is a type, and~$\kappa$ is a kind.
The kind~$\kappa$ classifies~$t$ as either a location~($\locKnd$),
a set of locations~($\setKnd$),
a finite set of locations~($\finsetKnd$),
a local program type~($*_e$), or a choreographic program type~($\tyknd{\rho}$).
Figure~\ref{fig:types} presents the syntax for these types and kinds.

Types of kind~$*_e$ are precisely the types included in the local language under a given type variable context.
The kind~$\locKnd$ represents location names, which can refer to either concrete locations~$\Location \in \Locations$ or in-context location variables,
while the kind~$\setKnd$ classifies (non-empty) sets of location names, which can be either a type variable,
a singleton set~($\{\ell\}$), a union of sets~($\rho_1 \cup \rho_2$),
or the expression~$\anyLoc$ representing the infinite set containing all locations in~$\Locations$.

The infinite set~$\anyLoc$ is vital to our treatment of process polymorphism, as described below.
However, in some instances location sets cannot contain~$\anyLoc$ (i.e., they are \emph{finite}).
For example, the program~$\Alice.(1 \LocalPlus 1) \ColSend \anyLoc$ would require \Alice to send infinitely many messages.
The kind~$\finsetKnd$ of finite location sets formalizes this restriction by excluding~$\anyLoc$.

To interpret types of kind~$\setKnd$ and~$\finsetKnd$ as sets, we define the containment~$\ell \in \rho$ relation, the subset~$\rho_1 \subseteq \rho_2$ relation,
and the set difference function~$\rho_1 \setminus \rho_2$ by syntactic recursion.
These operations satisfy many of the expected properties from set theory (e.g., the subset relation is a preorder),
as well as additional properties governing their behavior with respect to type substitution.
As~$\anyLoc$ represents all locations, $\ell \in \anyLoc$ for all~$\ell$, including unresolved type variables.
The formal definition of these operations can be found in Appendix~\ref{sec:set-relations}.

The kind~$\tyknd{\rho}$ of program types is similar to the standard program type~$*$ of System~F and \lamqc,
but includes a location set parameter~$\rho$ bounding who must know about any value with a type of that kind.
For instance, the type~$\Int @ \{\Alice,\Bob\}$---an integer located at both~\Alice and~\Bob---has kind~$\tyknd{\{\Alice,\Bob\}}$,
as any value~$\{\Alice,\Bob\}.n$ of this type is known to~\Alice and~\Bob, but nobody else.
For a compound type~$\tau$ of kind~$\tyknd{\rho}$, every location who knows any part of a value of type~$\tau$ must appear in~$\rho$.
The \ruleref{K-Prod} and \ruleref{K-Sum} rules give two examples of this principle.
\begin{mathpar}
  \KProdRule
  \and
  \KSumRule
\end{mathpar}
In \ruleref{K-Prod}, if a location knows (part of) either side of a pair, then they know part of the entire pair.

One may expect \ruleref{K-Sum} to follow a similar rule: collect the annotations on each side.
However, sums carry information beyond the underlying types; they also convey if the value is an~\InlN or an~\InrN.
The $\rho$ on the plus describes \emph{who knows which side the value is on}, which may include more people than know the data on each side.
For instance, a value of type $(\Int@\Alice +_{\{\Alice,\Bob,\Client\}} \Int@\Bob) \kndCol \tyknd{\{\Alice,\Bob,\Client\}}$
could be an \Int at either \Alice or \Bob, but all of \Alice, \Bob, \emph{and} \Client know which.
Requiring $\rho_1 \cup \rho_2 \subseteq \rho$ ensures that everyone who might hold data knows whether or not they need to hold that data.

A function's existence and use must be known to anyone aware of its inputs, aware of its outputs, or involved in computing its body---the \emph{latent participants}.
To track this, we augment function types with a set~$\rho$ describing the latent participants.
The kinding rule \ruleref{K-Arrow} gives a function type a kind including all three sets of locations in its bound.

A forall type~$\allty{\alpha \knd \kappa}{\rho}{\tau}$ similarly tracks latent participants with the annotation~$\rho$.
Just as the output type of a type function can depend on its argument, so too can its participants.
Thus in forall types abstracting over a location (set), $\alpha$ is bound in both~$\tau$ \emph{and~$\rho$}.
When abstracting over other kinds, however, $\alpha$ is not bound in~$\rho$, as the participants are fixed.
Recall from Section~\ref{sec:syntax} that all locations must know the definition of every process-polymorphic type function declared in their lifetime.
Such a forall type must therefore have the kind $\tyknd{\anyLoc}$, as reflected by \ruleref{K-AllLoc}.
\begin{mathpar}[\rulefiguresize]
  \KArrowRule
  \and
  \KAllLocRule
\end{mathpar}

\subsubsection{Type System}
\label{sec:type-system}

Typing judgments in \lamfork take the form~$\chortyped{\ctx}{C}{\tau}{\rho}$,
where $\ctx = \Gamma;\Delta_e;\Delta$ is a three-part context of type, local, and choreographic variables,
$C$~is a choreography, $\tau$~is a program type, and $\rho$~is a set of participants who may be involved in computing~$C$.

\paragraph{Participant Tracking}
\label{ex:participant-set-type}
Just as the participant parameter~$\rho$ on the kind~$*_\rho$ bounds the types of that kind,
the participant parameter~$\rho$ in the typing judgment bounds the locations which might participate in choreography~$C$.

This information is used to ensure that a thread, once killed, will not be asked to perform further computation.
To see the challenge in enforcing this guarantee,
consider the following program:
\[
  \LetIn{F}{\left(
    \begin{array}[c]{@{\,}r@{~}l@{\,}}
      \LetN  & \alpha \ChorDef \Alice.\ForkN() \\
      \In & (\LamN \_ \ldotp \LetIn{\Alice.x}{(\alpha.(1 \LocalPlus 2) \ColSend \Alice)}{\Alice.x})
    \end{array}
  \right)}{F~\Alice.()}
\]
Here \Alice spawns a thread~$\alpha$ who then immediately dies,
as the body of the \ForkN expression---the $\LamN$-abstraction---is a value.
However, the returned abstraction closes over~$\alpha$ and, if applied,
would send a message to~$\alpha$---which is now dead---causing deadlock.

The \lamfork type system has two key features to prevent this scenario.
First, it uses~$\rho$ to track which locations might participate in a choreography.
For instance, the body of the $\LamN$-abstraction types~as
\[ \chortyped{\alpha \knd \locKnd}{(\LetIn{\Alice.x}{(\alpha.(1 \LocalPlus 2) \ColSend \Alice)}{\Alice.x})}{\Int@\Alice}{\{\alpha,\Alice\}}. \]
indicating that~$\alpha$ and~\Alice might participate in the function body, but nobody else will.

Second, as mentioned in Section~\ref{sec:kind-system}, we augment function types to include the set of latent participants.
Here, for instance, the full $\LamN$-abstraction is typed as
\[ \chortyped{\alpha \knd \locKnd}{(\LamN \_ \ldotp \LetIn{\Alice.x}{(\alpha.(1 \LocalPlus 2) \ColSend \Alice)}{\Alice.x})}{\Unit@\Alice \arr{\smash{\{\alpha,\Alice\}}} \Int@\Alice}{\varnothing} \]
with latent participants~$\{\alpha,\Alice\}$ located above the function arrow.
Note that $\rho = \varnothing$ here since an abstraction is a value so no locations are involved in computing it.
Crucially, because type variable~$\alpha$ representing a spawned thread is free in the function type,
we should rule out the enclosing~\ForkN expression
to prevent computation involving~$\alpha$ from escaping the scope of its lifetime.
Indeed, our formal typing rule defined below employs this exact logic.

By contrast, the program below is valid, since the
type~$\Unit@\Alice \arr{\{\Alice,\Bob\}} \Int@\Alice$ of the \ForkN's body does not contain~$\alpha$,
indicating that it is safe to use the value after~$\alpha$ is killed.
\[
  \provesCol
  \LetIn{F}{\left(
    \begin{array}[c]{@{\,}r@{~}l@{\,}}
      \LetN  & \begin{array}[t]{@{}l@{}l@{}}
        \alpha & {}\ChorDef \Alice.\ForkN() \ColSend \Bob \\
        \Bob.y & {}\ChorDef \alpha.(1 \LocalPlus 2) \ColSend \Bob
      \end{array} \\
      \In & (\LamN \_ \ldotp \LetIn{\Alice.x}{(\Bob.y \ColSend \Alice)}{\Alice.x})
    \end{array}
  \right)}{F~\Alice.()}
  \tyCol \Int@\Alice \locsCol \{\Alice,\Bob\}
\]

\paragraph{Typing Rules}

\begin{figure}
  \begin{mathpar}[\rulefiguresize]
    \SetRuleLabelLoc{lab}
    \TFunRule \and \TAppRule
    \and
    \TLetLocalRule \and \TForkRule
    \and
    \TTFunLocRule \and \TTAppLocRule
  \end{mathpar}
  \caption{Selected \lamfork Typing Rules}
  \label{fig:typing}
\end{figure}

Figure~\ref{fig:typing} presents a selection of \lamfork typing rules.
The abstraction and application rules formalize this participant tracking intuition.
\ruleref{T-Fun} includes the standard premise for typing a named recursive function---
the body must be well-typed with the name and argument in scope.
This rule also facilitates participant tracking by enforcing
that the latent participants~$\rho$ on the function type must be the same as the participants in the body of the function.
Finally, the rule ensures that all latent participants along with anyone who might know the input or output must know the function's definition.

\ruleref{T-App} includes standard premises checking that the function and argument are well-typed and match.
Similarly to \ruleref{T-Fun}, it requires every location involved in the input, output, or body of the function to perform the application.
Lastly, the locations involved in the entire expression include anyone involved in computing~$C_1$ or~$C_2$
as well as anyone who performs the application.

\ruleref{T-LetLocal} types \LetN-expressions by ensuring the
computation in the head of the expression produces a multiply-located value at all locations in~$\rho_2$.
A selected subset~$\rho_1 \subseteq \rho_2$ can then bind local variable~$x$ to this value in the body,
whose type is given to the entire expression.
The participants of the~\LetN are then collected from the head, body, and anyone who binds the variable.

The new~\ForkN expression is typed by~\ruleref{T-Fork}.
The first two premises are straightforward:
the body of the expression must be well-typed with the new location variable~$\alpha$ and
its first-class representation~$x$ in scope,
and the parent location~$\ell$ must be well-kinded as a location.

The third requirement---that the type~$\tau$ of the body is well-kinded \emph{without~$\alpha$ in scope}---serves two purposes.
First, it prevents type dependency.
As the name of the spawned thread is chosen at runtime, we cannot know a-priori which name~$\alpha$ will resolve to,
so we cannot assign a coherent type to the overall~\ForkN expression if that type may depend on the thread's name.
Second, it prevents spawned threads from being asked to perform computation after they are killed.
Because our type system tracks latent participants,
the kinding judgement ensures that the type does not refer to any out-of-scope locations
\emph{even in pending computations inside (type) functions}.

Like a \LetN-expression, the type of a~\ForkN expression is given by the type of its body.
Its participants must include any participant in the body ($\rho$), as well as the parent~$\ell$.
However, even if they participate in the body,
the spawned thread~$\alpha$ is not considered a participant of the whole expression,
as any surrounding scope need not know about them.
The participants are thus only~$\{\ell\} \cup (\rho \setminus \{\alpha\})$.
Identically to the restriction on~$\tau$, the spawned thread~$\alpha$ is removed from~$\rho$ primarily to avoid type dependency.
Excluding~$\alpha$ also has a secondary effect of preventing the type system from tracking whether an as-yet-unspawned thread might participate in a future computation.
However, the participant set still soundly tracks whether a currently-spawned location might perform computation in the future,
which we formalize in Section~\ref{sec:type-sound} below.

Note that there is no typing rule for \KillAfterN~expressions.
The type system is only intended to handle surface-language programs,
so we explicitly exclude~\KillAfterN.
In Section~\ref{sec:type-sound} we introduce an augmented type system which can handle~\KillAfterN,
allowing us to prove type soundness.

\ruleref{T-TFunLoc} and \ruleref{T-TAppLoc} handle process polymorphic type functions and applications.
Just like for standard functions, \ruleref{T-TFunLoc} requires the latent participants in the forall type
to match the participants in the body of a process-polymorphic type function.
Recall that all locations are required to know the definition of such a function.

The type application rule \ruleref{T-TAppLoc} is similar to the standard application rule,
but since~$\alpha$ may be free in type~$\tau$ and latent participants~$\rho$, we substitute it for the now-resolved variable.
This means that although everyone must know the function definition,
only those locations who know the output type or are involved in the function body \emph{after resolving~$\alpha$} must perform the application.

The remaining typing rules can be found in Appendix~\ref{sec:full-types}.

\begin{ex}[Parallel Divide-and-Conquer]\label{ex:parallel-sum}
  To see how participant tracking and polymorphism work in-tandem,
  consider the following parallel divide-and-conquer algorithm for list summation.
  \begin{align*}
    & \RecursiveSum : \allty{\ell \knd \locKnd}{\{\ell\}}{\List{\Int}@\ell \arr{\smash{\{\ell\}}} \Int@\ell} \\[-2pt]
    & \RecursiveSum~\ell~\XS~=~
    \def\arraystretch{1.1}
    \begin{array}[t]{@{}l@{}}
      \IfN~\ell.(\Len~\XS \LocalLess \MaxLen) \\
      \ThenN~\ell.\LocalSum(\XS) \\
      \ElseN~\begin{array}[t]{@{}l@{}}
        \LetN~\begin{array}[t]{@{}l@{}}
          (\ell.\xs_1, \ell.\xs_2) \ChorDef \ell.\Split(\XS) \\
          \alpha \ChorDef \ell.\ForkN() \\
          \alpha.\xs_1 \ChorDef \ell.\xs_1 \ColSend \alpha \\
          \alpha.s_1 \ChorDef \RecursiveSum \appchor{\alpha} \alpha \appchor{\alpha} \alpha.\xs_1 \\
          \ell.s_2 \ChorDef \RecursiveSum \appchor{\ell} \ell \appchor{\ell} \ell.\xs_2 \\
          \ell.s_1 \ChorDef \alpha.s_1 \ColSend \ell \\
        \end{array} \\
        \In~\ell.(s_1 \LocalPlus s_2)
      \end{array}
    \end{array}
  \end{align*}

  This choreography finds the sum of the list~$\XS$ owned by~$\ell$.
  The first line checks if~$\XS$ is long enough to be worth parallelizing; otherwise, $\ell$ sums the list locally.
  For longer lengths, $\ell$ splits~$\XS$ into two halves, $\xs_1$~and~$\xs_2$, recursively summing each half in parallel.
  To perform this parallelism, $\ell$ then spawns~$\alpha$, and sends it the first half of~$\XS$.
  The new thread~$\alpha$ then calls~$\RecursiveSum$ using this value, potentially spawning its own children to sum its half of the list.
  After both processes have summed their halves, $\alpha$~sends its sum~$s_1$ to~$\ell$.
  At this point, the thread~$\alpha$ falls out of scope and (implicitly) dies.
  Finally, $\ell$~returns the sum~$s_1 \LocalPlus s_2$.
\end{ex}

\subsubsection{Type Soundness}
\label{sec:type-sound}
The type system described above enjoys two important notions of soundness:
the parameter~$\rho$ in the typing judgment captures all locations who may take a step,
and the standard guarantee that a well-typed choreography does not get stuck.

The following theorem formalizes the first notion.
Recall from Section~\ref{sec:semantics} that each step includes a redex~$R$ and~$\rloc{R}$ computes the set of locations involved in~$R$.
\begin{restatable}[Sound Participants]{thm}{participantsSound}
  \label{lem:participants-sound}
  If $\chortyped{\ctx}{C}{\tau}{\rho}$ and $\conf{C}{\Omega} \step[\smash{R}] \conf{C'}{\Omega'}$,
  then $\rloc{R} \subseteq \rho$.
\end{restatable}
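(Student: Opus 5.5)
We proceed by induction on the derivation of $\conf{C}{\Omega} \step[R] \conf{C'}{\Omega'}$, inverting the typing derivation of $\chortyped{\ctx}{C}{\tau}{\rho}$ at each step. First, an observation removes two cases. The (surface) type system has no rule for $\KillAfterN$, so a well-typed $C$ contains no $\KillAfterN$ subterm. Hence the cases \ruleref{C-Kill} and the out-of-order kill rule cannot occur.

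Second, we establish an auxiliary fact about redices: every $R$ produced by the semantics satisfies $\fv{\rloc{R}} = \varnothing$. This follows by a quick induction on the step. Every axiom carries a side condition $\fv{\rho}=\varnothing$ (or, in \ruleref{C-SendV} and \ruleref{C-Fork}, names a concrete $L \in \Omega$). Every congruence rule passes $R$ through unchanged. This fact is what lets us discharge the binder cases below.

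\emph{Base cases.} These compare $\rloc{R}$ against the participant set produced by the matching typing rule.
\begin{itemize}
\item \ruleref{C-App} and \ruleref{C-TApp}: $\rloc{\RApp{\rho'}}=\rho'$, and \ruleref{T-App} and \ruleref{T-TAppLoc} include $\rho'$ in the participants.
\item \ruleref{C-Done}: $\rloc{R}=\rho$. The scrutinee $e_1$ steps, so by the local-language assumption it is not a value, and \ruleref{T-Done} therefore assigns participants $\rho$ rather than $\varnothing$.
\item \ruleref{C-SendV}: $\rloc{R}=\{L_1\}\cup\rho_2$, which is exactly the $\{\ell_1\}\cup\rho_2$ contributed by \ruleref{T-Send}. \ruleref{C-Sync} is the same against \ruleref{T-Sync}.
\item \ruleref{C-Fork}: $\rloc{R}=\{L\}$, which lies in $\{\ell\}\cup(\rho\setminus\{\alpha\})$.
\item \LetN/\CaseN/\FstN/\UnfoldN reductions: the redex annotation $\rho$ always appears in the conclusion's participant set.
\end{itemize}
Each base case needs only that the relevant set relations ($\in$, $\subseteq$, $\cup$ from Appendix~\ref{sec:set-relations}) behave as expected on closed sets.

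\emph{Inductive cases.} These are the out-of-order rules and \ruleref{C-Ctx}.
\begin{itemize}
\item \ruleref{C-LetI}, \ruleref{C-SyncI}, \ruleref{C-AppI} and \ruleref{C-PairI}: the induction hypothesis on the stepping subterm gives $\rloc{R}$ inside that subterm's participants, which the conclusion includes.
\item \ruleref{C-CaseI} and \ruleref{C-LocalCaseI}: applying the hypothesis to either branch suffices. The branch is typed in an extended context, but the extension binds only term variables, so no care is needed.
\item \ruleref{C-TyLetI} and \ruleref{C-ForkI}: here the conclusion's participants are $\rho'\setminus\{\alpha\}$ (or $\rho\setminus\{\alpha\}$). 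The hypothesis gives $\rloc{R}\subseteq\rho'$. By the auxiliary fact $\alpha\notin\rloc{R}$, since $\alpha$ is a bound variable and $\rloc{R}$ is closed, so $\rloc{R}\subseteq\rho'\setminus\{\alpha\}$. We will need a small lemma: if $\ell\in\rho$, $\ell\neq\alpha$ and $\ell$ is closed, then $\ell\in\rho\setminus\{\alpha\}$. This should follow directly from the syntactic definition of set difference.
\item \ruleref{C-Ctx}: we prove by induction on $\eta$ that the participants of $\eta[C_0]$ always include those of $C_0$. Each evaluation-context frame corresponds to a typing rule whose conclusion unions in the participants of the hole position, which gives this lemma.
\end{itemize}

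The main obstacle I expect is the binder cases (\ruleref{C-ForkI}, \ruleref{C-TyLetI}, and \LetN for locations). There the typing rules subtract $\alpha$, and the argument depends both on the closedness of redex locations and on the syntactic set difference behaving correctly. In particular, we must rule out $\anyLoc$ in $\rho'$ interacting badly with $\setminus\{\alpha\}$. I would first check the definition in Appendix~\ref{sec:set-relations}, and if necessary prove $\ell\in\rho \wedge \ell\neq\alpha \Rightarrow \ell\in\rho\setminus\{\alpha\}$ for closed $\ell$ as a standalone lemma by induction on $\rho$.
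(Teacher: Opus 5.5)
Your proposal is correct and follows essentially the paper's argument. You do induction on the step; the axiom cases are read off by matching $\rloc{R}$ against the corresponding typing rule, the congruence cases follow from the induction hypothesis, and the $\alpha$-binding cases (\ruleref{C-TyLetI}, \ruleref{C-ForkI}) go through because $\rloc{R}$ is closed, so $\alpha \notin \rloc{R}$. The one difference is that the paper first transfers the surface derivation to the augmented judgment $\provesPlusCol$ (Lemma~\ref{lem:typed-to-plus-typed}) and proves the single-step lemma there (Lemma~\ref{lem:single-aug-participants-sound}), so the same result also covers choreographies containing \KillAfterN{}, which the later metatheory uses. You instead stay in the surface system and rule out the kill cases directly, which is enough for the statement as given.
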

Note that the soundness of~$\rho$ does not directly extend to multiple steps:
if the step spawns a thread, then the participants in the choreography may increase.
However, this theorem can be combined with type preservation (Theorem~\ref{thm:preservation} below)
to ensure that~$C'$ will be typed at \emph{some} set~$\rho'$.
Moreover, the locations that materialize in~$\rho'$ are precisely those locations spawned by the step~$R$ (i.e., in~$\Omega' \setminus \Omega$),
meaning~$\rho$ tracks all future steps made by \emph{currently-spawned} locations.

Our traditional type soundness results follows from a standard progress-and-preservation argument.
However, types are not preserved in the surface type system above, as stepping~\ForkN immediately creates a \KillAfterN term.
Not only is there no typing rule for~\KillAfterN,
but adding a na\"{i}ve one is unsound:
the typing rule can only examine the \KillAfterN's body,
and cannot prevent any \emph{external scopes} from referencing the thread after it dies.
For instance, the hypothetically well-typed program~$\LetIn{\Alice.x}{(\KillAfter{\Bob}{\Alice.1})}{\Alice.x \ColSend \Bob}$
will encounter a deadlock due to a live process~(\Alice) attempting to contact a dead one~(\Bob) in the send operation~$\Alice.x \ColSend \Bob$.

We therefore introduce an augmented typing judgment $\chortypedplus{\ctx}{C}{\tau}{\rho}$ that soundly handles programs containing~\KillAfterN
by making all enclosing scopes responsible for ensuring spawned threads are not referenced outside of their lifetimes.
Its typing rules are based on the surface rules, but add a few extra premises.
For example, the augmented rule \ruleref{S-LetLocal} shown below types \LetN-expressions.
\begin{mathpar}[\rulefiguresize]
  \SLetLocalRule
  \and
  \SKillRule
\end{mathpar}
The first three premises of \ruleref{S-LetLocal} are identical to \ruleref{T-LetLocal} (replacing~$\provesCol$ with~$\provesPlusCol$),
and the last three premises enforce that a spawned thread will not be expected to perform any computation outside of its lexical scope.
This well-scoping is enforced by using the participant tracking information to determine who will perform computation in a given subexpression,
and the spawned-locations function~$\spawnedlocs{C}$ to compute the lexical scope of each thread.
The function~$\spawnedlocs{C}$ is homomorphically defined to collect all locations~$L$ that appear in a subterm~$\KillAfter{L}{C'}$ of~$C$.
In a \LetN-expression, there are three ill-scoping possibilities to rule-out:
a thread spawned in the head performs computation in the body,
a thread spawned in the body performs computation in the head,
and a thread spawned in the head or body is expected to bind the variable.
The typing rule prevents all three possibilities.

The other rules of this judgment share an identical goal of ensuring a spawned thread will never perform an action outside its lexical scope.
For instance, \ruleref{S-Kill} above forbids situations where the spawned thread~$L$ in a~\KillAfterN appears in a different~\KillAfterN expression in its body.
The other rules of the augmented typing judgment can be found in Appendix~\ref{sec:full_chor_wellscoped}.

This augmented typing judgment is fully sound with respect to the operational semantics,
as demonstrated by the progress and preservation theorems below.
\begin{restatable}[Progress]{thm}{progress}
\label{thm:progress}
  If $\choremptypedplus{C}{\tau}{\rho}$ and $\namedlocs{\rho} \subseteq \Omega$,
  then either~$C$ is a value or~$\conf{C}{\Omega}$ can step.
\end{restatable}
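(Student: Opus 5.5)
We prove Progress by induction on the derivation of $\choremptypedplus{C}{\tau}{\rho}$, strengthening the hypothesis as stated: the context is empty and $\namedlocs{\rho} \subseteq \Omega$. Three auxiliary facts are needed first.

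\begin{enumerate}
\item A \emph{canonical forms} lemma: a closed value of type $t_e @ \rho'$ has the form $\rho'.v$; a value of function type is a $\FunN$; a value of sum type is $\InlN$ or $\InrN$; and so on. This follows by inversion on the value typing rules.
\item Closed location-set types have $\fv{\rho} = \varnothing$, so every $\fv{\cdot} = \varnothing$ premise of the stepping rules holds automatically under the empty context.
\item Monotonicity of $\namedlocs{\cdot}$ with respect to the participant sets in each rule. Each subderivation's participant set is contained in the conclusion's: for example, $\rho \subseteq \rho \cup \rho' \cup \rho_1$ in \ruleref{S-LetLocal}. Hence the hypothesis $\namedlocs{\cdot} \subseteq \Omega$ transfers to the premises on which we recurse.
\end{enumerate}

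The cases proceed in the usual call-by-value order. Variables cannot occur in the empty context. For $\rho.e$, local progress gives that either $e$ is a value or $e \localstep e'$. In the stepping case the typing rule puts $\rho$ in the participant set, so $\nl{\rho} \subseteq \Omega$ and \ruleref{C-Done} applies.

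For eliminators and other compound forms (application, pairs, $\FstN$, $\UnfoldN$, sends, \LetN, the two \CaseN forms), we apply the induction hypothesis to the leftmost non-value subterm and lift its step with \ruleref{C-Ctx}. When every immediate subterm is a value, canonical forms yields the matching redex: \ruleref{C-App}, \ruleref{C-SendV}, \ruleref{C-LetV}, \ruleref{C-LocalCaseInl}/\ruleref{C-LocalCaseInr} via the $\getCase$ requirement, \ruleref{C-TyLetV} via reification of location representations, and so on. The side condition $\nl{\rho'} \subseteq \Omega$ on the performing set comes from the fact that the annotation $\rho'$ always appears in the conclusion's participant set. For \ruleref{C-SendV} we need the sender $L_1 \in \Omega$, which holds because $\{\ell_1\}$ is in the participant set.

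Selection $\syncs{\ell}{d}{\rho'}\seq C$ steps by \ruleref{C-Sync}, since $\{\ell\}\cup\rho'$ is included in the participants.

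For \ForkN, $\ell$ is a closed location, and $\{\ell\}$ lying in the participant set gives $\ell \in \Omega$. Because $\Locations$ is infinite and $\Omega$ is finite, a fresh $L' \notin \Omega$ exists. Closedness of $C' = \subst*{C}{{\alpha}{L'}{x}{\say{L'}}}$ follows from the typing context containing only $\alpha$ and $x$, together with a substitution lemma and the assumed existence of a representation $\say{L'}$. So \ruleref{C-Fork} applies.

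The main obstacle is $\KillAfter{L}{C}$. Under \ruleref{S-Kill} the participants are $\rho \cup \{L\}$, so $L \in \Omega$ and $\nl{\rho} \subseteq \Omega$. If $C$ is a value, \ruleref{C-Kill} fires. Otherwise the induction hypothesis gives a step of $\conf{C}{\Omega}$, and we need to lift it through the \KillAfterN context. I would treat \KillAfterN as an evaluation context $\eta$ for \ruleref{C-Ctx}, so that the lifting is immediate. If the formal context grammar excludes it, the fallback is \ruleref{C-KillI}, which requires $L \notin \cloc{C}$; that cannot be guaranteed in general, so the context route is the one to rely on.

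A secondary difficulty is checking that the augmented premises of the $\provesPlusCol$ rules never block the induction. They only add disjointness constraints on $\spawnedlocs{\cdot}$, are never needed to establish a step, and are simply discarded in this proof; they matter for preservation instead.
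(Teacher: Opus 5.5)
Your proposal is correct and follows the same route as the paper: induction on the $\provesPlusCol$ derivation. In each case you either step the evaluation-position subterm via \ruleref{C-Ctx} or fire the matching redex rule. The $\nl{\cdot} \subseteq \Omega$ side conditions come from the annotations lying in the participant set, and \ForkN uses a fresh name together with its representation. You simply spell out the canonical-forms, closedness, and monotonicity facts that the paper leaves implicit.

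Your hedge in the \KillAfterN case is unnecessary. The appendix's evaluation-context grammar includes $\KillAfter{L}{\eta}$, so the body's step lifts directly through \ruleref{C-Ctx}, exactly as in the paper's one-line case.
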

\vspace{-0.5\baselineskip}%
\begin{restatable}[Preservation]{thm}{preservation}
\label{thm:preservation}
  If $\chortypedplus{\ctx}{C}{\tau}{\rho}$,
  $\namedlocs{\rho} \subseteq \Omega$, and
  $\langle C , \Omega \rangle \steps{} \langle C' , \Omega' \rangle$,
  then there is some $\rho'$ such that
  $\chortypedplus{\ctx}{C'}{\tau}{\rho'}$,
  $\namedlocs{\rho'} \subseteq \Omega'$, and
  $\nl{\rho'} \setminus \nl{\rho} \subseteq \Omega' \setminus \Omega$.
\end{restatable}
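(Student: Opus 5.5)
The argument reduces the multi-step statement to a single-step preservation lemma and then chains it. The single-step lemma reads: if $\chortypedplus{\ctx}{C}{\tau}{\rho}$, $\namedlocs{\rho} \subseteq \Omega$, and $\conf{C}{\Omega} \step[R] \conf{C'}{\Omega'}$, then there is a $\rho'$ with $\chortypedplus{\ctx}{C'}{\tau}{\rho'}$, $\namedlocs{\rho'} \subseteq \Omega'$, and $\nl{\rho'} \setminus \nl{\rho} \subseteq \Omega' \setminus \Omega$. We will also want the extra invariant $\spawnedlocs{C'} \subseteq \spawnedlocs{C} \cup (\Omega' \setminus \Omega)$.

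The multi-step theorem then follows by induction on the length of the reduction sequence. The only care needed is in composing the third conclusion. If $\nl{\rho_1}\setminus\nl{\rho_0} \subseteq \Omega_1\setminus\Omega_0$ and $\nl{\rho_2}\setminus\nl{\rho_1} \subseteq \Omega_2\setminus\Omega_1$, we get $\nl{\rho_2}\setminus\nl{\rho_0} \subseteq (\Omega_2\setminus\Omega_1)\cup(\Omega_1\setminus\Omega_0)$. We need this inside $\Omega_2\setminus\Omega_0$, which fails if a location spawned in step one is killed later. To handle this, take the witnessing $\rho_2$ to be any set for which the conclusion $\nl{\rho_2} \subseteq \Omega_2$ holds. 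Any element of $\Omega_1\setminus\Omega_0$ lying in $\nl{\rho_2}$ is then in $\Omega_2$, and it is not in $\Omega_0$, so it is in $\Omega_2\setminus\Omega_0$. This makes the composition go through.

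The single-step lemma itself is proved by induction on the derivation of the step, with inversion on the $\provesPlus$ derivation. Its standard ingredients are proved first:
\begin{itemize}
\item weakening;
\item substitution lemmas for choreographic values ($X$), local values (including the located substitution $\hsubst{C}{\rho}{x}{v}$), and types/locations ($\alpha \mapsto t$), stating how participant sets and $\spawnedlocs{\cdot}$ transform. In particular, $\nl{\subst{\rho}{\alpha}{L}} \subseteq \nl{\rho} \cup \{L\}$, and the substituted term's participant set is bounded by the substituted original.
\end{itemize}
Since values have participants $\varnothing$ and, by the \ruleref{S-Fun}/\ruleref{S-TFun} side conditions, contain no $\KillAfterN$, substituting values never adds spawned locations. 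The base cases are then routine:
\begin{itemize}
\item \ruleref{C-App} uses \ruleref{S-Fun}, where the body's participants equal the latent set, which is contained in $\rho'$ of \ruleref{S-App}.
\item \ruleref{C-TAppLoc} uses the type substitution lemma.
\item \ruleref{C-SendV} and \ruleref{C-LetV} are direct.
\end{itemize}

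For \ruleref{C-Fork} we must type $\KillAfter{L'}{C'}$. Apply the location-substitution lemma with fresh $L'$ to the body derivation of \ruleref{S-Fork}, obtaining participants $\subst{\rho}{\alpha}{L'}$; then apply \ruleref{S-Kill}. Its premise $L' \notin \spawnedlocs{C'}$ holds because the surface body contains no $\KillAfterN$ referring to the fresh $L'$. The resulting $\rho'$ is $\subst{\rho}{\alpha}{L'} \cup \{L'\}$, whose new named locations are exactly $\{L'\} = \Omega'\setminus\Omega$. The type $\tau$ is unchanged because it is kinded without $\alpha$.

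For \ruleref{C-Kill} and \ruleref{C-KillI} we must show the result does not mention $L$. Since $L$ is being removed from $\Omega$, we need $L \notin \nl{\rho'}$. For \ruleref{C-Kill} the result is a value, so $\rho' = \varnothing$. For \ruleref{C-KillI} we must use $L \notin \cloc{C}$, which calls for an auxiliary lemma that participant sets can be tightened to $\cloc{C}$. I expect this to be a point of friction.

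The main obstacle is the out-of-order congruence cases: \ruleref{C-LetI}, \ruleref{C-CaseI}, \ruleref{C-LocalCaseI}, \ruleref{C-SyncI}, \ruleref{C-AppI}, \ruleref{C-PairI}, and \ruleref{C-Ctx} when a fork or kill happens deep inside. After the inductive hypothesis yields $\rho'_2$ for the stepped subterm, we must re-establish the disjointness premises of the augmented rules, such as $\nl{\rho} \cap \spawnedlocs{C_2'} = \varnothing$ and $\nl{\rho'_2}\cap\spawnedlocs{C_1}=\varnothing$. The plan is to use the invariant $\spawnedlocs{C_2'} \subseteq \spawnedlocs{C_2} \cup (\Omega'\setminus\Omega)$ together with $\nl{\rho'_2}\setminus\nl{\rho_2}\subseteq \Omega'\setminus\Omega$. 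The new locations are globally fresh ($L' \notin \Omega$), and sibling subterms only name locations in $\Omega$ (by $\namedlocs{\rho}\subseteq\Omega$) or in their own spawned sets. We also need a well-formedness fact that $\spawnedlocs{C}\subseteq\Omega$ and that spawned sets of siblings are disjoint from fresh names. For a kill, the situation is simpler because sets only shrink. For the case rules, both branches step with the same $R$ and $\Omega'$, so the equality $\spawnedlocs{C_1'}=\spawnedlocs{C_2'}$ is preserved: a fork adds the same $L'$ to both branches, and a kill removes the same $L$ from both. Throughout, \ruleref{Sound Participants} (Theorem~\ref{lem:participants-sound}) is used to guarantee $\rloc{R}$ lies in the stepped subterm's participant set.
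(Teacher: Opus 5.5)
Your overall route is the paper's: a single-step lemma proved by induction on the step, carrying extra invariants about $\spawnedlocs{\cdot}$, followed by induction on the length of the reduction. The problem is that your invariants only describe what a step \emph{adds}, never what it \emph{removes} from $\Omega$, and this breaks the argument in two places.

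The first is the composition. You bound $\nl{\rho_2}\setminus\nl{\rho_0}$ by $(\Omega_2\setminus\Omega_1)\cup(\Omega_1\setminus\Omega_0)$ and repair only the second piece. An $L\in\nl{\rho_2}\setminus\nl{\rho_1}$ lies in $\Omega_2\setminus\Omega_1$, but nothing you carry rules out $L\in\Omega_0$. That is the case where $L$ is killed in the first segment and spawned again in the second; \ruleref{C-Fork} only demands $L'\notin\Omega$, so a dead name can be reused. As stated, your invariants do not exclude a step that kills a location outside $\nl{\rho}$, and in that situation the composed bound is false. The paper closes this by also proving, for each step, $\Omega\setminus\Omega'\subseteq\nl{\rho}\setminus\nl{\rho'}$ (killed locations were participants), and carrying it through the multi-step induction. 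A location killed earlier must then lie in $\nl{\rho_0}$, contradicting $L\notin\nl{\rho_0}$. This removal bound must itself be composed, which uses the addition bound, so the two have to be proved together.

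The second is the single-step congruence cases. Suppose the inner step is a kill of $L$, via \ruleref{C-Ctx} or an out-of-order rule. You must re-establish $\nl{\rho'}\subseteq\Omega'$ for the whole term, so the sibling participants must not contain $L$. These are, for example, $\nl{\rho_2}$ in a pair or \LetN, or the outer name of an enclosing \KillAfterN. ``Sets only shrink'' does not give this. What you need is that the killed location was spawned inside the stepped subterm, $\Omega\setminus\Omega'\subseteq\spawnedlocs{C_1}$; together with the disjointness premise $\nl{\rho_2}\cap\spawnedlocs{C_1}=\varnothing$ this excludes $L$. Similarly, preserving $\spawnedlocs{C_1'}=\spawnedlocs{C_2'}$ in \ruleref{C-CaseI} and \ruleref{C-LocalCaseI} requires knowing exactly what changes. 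That is, you need the equalities $\spawnedlocs{C'}\setminus\spawnedlocs{C}=\Omega'\setminus\Omega$ and $\spawnedlocs{C}\setminus\spawnedlocs{C'}=\Omega\setminus\Omega'$ that the paper's lemma carries. Your inclusion is too weak, and your informal ``a fork adds the same $L'$, a kill removes the same $L$'' is precisely these equalities.

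To repair the proof, strengthen your single-step lemma to the paper's six properties and add the lemma $\spawnedlocs{C}\subseteq\nl{\rho}$. That lemma is also the correct justification in the \ruleref{C-Fork} case, since the fork body need not be a surface term after \ruleref{C-ForkI} steps. Your anticipated \ruleref{C-KillI} friction is just the easy inductive lemma $\rho\subseteq\cloc{C}$.
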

The additional premise to the theorems that~$\nl{\rho} \subseteq \Omega$
simply means that all locations who might participate in the choreography must be running.
The use of~$\nl{\rho}$ is a technical requirement to handle when~$\rho$ contains~$\anyLoc$.
The penultimate conclusion of the preservation theorem relates the new participant set~$\rho'$ that types~$C'$
with the new set~$\Omega'$ of running processes, ensuring the extra premise is maintained.
The last conclusion of preservation ensures that an unexpected process does not materialize in~$\rho'$,
allowing participant soundness (Theorem~\ref{lem:participants-sound}) to be extended to multiple steps.

Finally, for choreographies without \KillAfterN,
the two type systems are equivalent,
yielding soundness of the standard type system when combined with progress and preservation.
\begin{restatable}[Typing Equivalence]{thm}{typeEquiv}
\label{thm:type-equiv}
  $\chortyped{\ctx}{C}{\tau}{\rho}$ if and only if~$\chortypedplus{\ctx}{C}{\tau}{\rho}$ and~$C$ has no~\KillAfterN.
\end{restatable}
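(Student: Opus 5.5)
The plan is to prove both directions by induction on typing derivations, using one auxiliary fact: if $C$ contains no \KillAfterN, then $\spawnedlocs{C} = \varnothing$. This holds by a trivial induction on the syntax of $C$, because $\spawnedlocs{\cdot}$ is defined homomorphically and only the \KillAfterN case contributes a location. Since $\spawnedlocs{\cdot}$ of a subterm is empty whenever the enclosing term has no \KillAfterN, every extra premise of the augmented rules becomes trivial in that setting. Each such premise has one of three shapes:
\[ \nl{\rho''} \cap \spawnedlocs{C_i} = \varnothing, \qquad \spawnedlocs{C} = \varnothing, \qquad \spawnedlocs{C_1} = \spawnedlocs{C_2}. \]

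For the forward direction, assume $\chortyped{\ctx}{C}{\tau}{\rho}$ and induct on its derivation.

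First, $C$ has no \KillAfterN. There is no surface rule for \KillAfterN, so no derivation concludes with one. Every surface rule types its subterms with the surface judgment, so by induction none of them contains \KillAfterN either.

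Second, we derive $\chortypedplus{\ctx}{C}{\tau}{\rho}$ by case analysis on the last rule. Each T-rule has a matching S-rule with the same conclusion and premises, plus extra well-scopedness side conditions. We apply the induction hypothesis to the typing premises, copy the kinding and set premises unchanged, and discharge each side condition using the auxiliary fact on the relevant subterms.

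One care point is that the S-rules for type application (for instance \ruleref{S-TAppLoc}) keep some premises in the surface judgment $\provesCol$. There the premise is simply reused from the surface derivation, so no induction hypothesis is needed.

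For the converse, assume $\chortypedplus{\ctx}{C}{\tau}{\rho}$ and that $C$ has no \KillAfterN, and induct on the augmented derivation. The last rule cannot be \ruleref{S-Kill}, because $C$ contains no \KillAfterN. Every other S-rule has the same typing and kinding premises as the corresponding T-rule plus additional premises. So we drop the additional premises and apply the induction hypothesis to the subderivations. This is legitimate because the subterms of a \KillAfterN-free term are themselves \KillAfterN-free, and premises already stated with $\provesCol$ are used directly.

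The main obstacle is bookkeeping rather than any deep argument. We must check that the two rule sets correspond exactly, with the same contexts, the same participant sets, and the same kinding premises (for example in \ruleref{T-Fork} versus \ruleref{S-Fork}, or \ruleref{T-TFun} versus \ruleref{S-TFun}). We must also confirm that no S-rule weakens or alters a surface premise beyond adding $\spawnedlocs{\cdot}$ conditions. The plan is to go through the appendix rules pairwise and verify this correspondence first, before running the two routine inductions.
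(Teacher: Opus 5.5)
Your proposal is correct and matches the paper's proof. The paper also uses two inductions on typing derivations. First it shows that surface-typed terms have $\spawnedlocs{C} = \varnothing$, so every extra well-scopedness premise of the augmented rules is vacuous and surface typing lifts to augmented typing. Then it inducts on the augmented derivation of a \KillAfterN-free term, where the \KillAfterN rule cannot occur, and drops those premises. Your explicit treatment of the one augmented type-application rule (for location kinds) whose premise is stated with the surface judgment is a detail the paper glosses over, and you handle it correctly.
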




\section{Network Language}
\label{sec:network-lang}
To compile a choreography into multiple programs that a system can execute concurrently,
we need to specify the target language that the system will run.
This \emph{network language} proscribes the actions of each individual location
in the system, and gives a concurrent operational semantics to describe the execution of the entire system.

\subsection{Network Language Syntax}
\label{sec:network-syntax}
\begin{figure}[t]
  \begin{syntax}
    \category[Network Program]{E}
    \alternative{X} \alternative{\Ret{e}} \alternative{\NtwkUnit} \alternative{\SendTo{E}{\rho}} \alternative{\RecvFrom{\ell}} \\
    \alternative{E_1 \NtwkSeq E_2}
    \alternative{\NtwkFun{F}{X}{E}} \alternative{E_1~E_2} \alternative{\NtwkTFun{F}{\alpha}{E}} \alternative{E~t} \\
    \alternative{\NtwkLetIn{x}{E_1}{E_2}} \alternative{\NtwkLetIn{\alpha \knd \kappa}{E_1}{E_2}} \\
    \alternative{\NtwkFold{E}}
    \alternative{\NtwkUnfold{E}}
    \alternative{(E_1, E_2)}
    \alternative{\NtwkFst{E}}
    \alternative{\NtwkSnd{E}}\\
    \alternative{\NtwkInl{E}}
    \alternative{\NtwkInr{E}}
    \alternative{\NtwkCase{E}{X}{E_1}{Y}{E_2}}\\
    \alternative{\NtwkLocalCase{E}{x}{E_1}{y}{E_2}}\\
    \alternative{\AllowOneChoice{\ell}{d}{E}}
    \alternative{\AllowChoice{\ell}{E_1}{E_2}}\\
    \alternative{\ChooseFor{d}{\rho}{E}}
    \alternative{\AmIIn{\rho}{E_1}{E_2}} \\
    \alternative{\NtwkFork{(\alpha,x)}{E_1}{E_2}} \alternative{\NtwkExit}
    \category[Systems]{\Pi}
    \alternative{L_1 \triangleright E_1 \mathrel{\parallel} \ldots \mathrel{\parallel} L_n \triangleright E_n}\\
  \end{syntax}

  \caption{Selected Network Program Syntax. Here $L \in \Locations$ is a concrete location name.}
  \label{fig:network-lang-syntax}
\end{figure}
The network language is a concurrent $\lambda$-calculus with messages from the same space as in choreographies---
local language values and selection messages.
The syntax, given in Figure~\ref{fig:network-lang-syntax}, closely mirrors our choreographic syntax,
except we split message sends---including selection messages---into two separate constructs to account for the
sender and recipient(s).

The return expression~$\Ret{e}$ is used to execute and yield the output of a local expression~$e$,
mirroring the choreography~$\rho.e$.
To account for a location~$L \notin \rho$---who should not execute~$e$---and other scenarios
where a location is not involved in part of the overall choreography, we include a unit value~$\NtwkUnit$ which does nothing.

To model message-passing, we need two separate constructs for the sender and recipient(s) of a message.
Specifically, the send expression~$\SendTo{E}{\rho}$ multicasts the result of program~$E$ to every location in~$\rho$
(and also has the sender yield the output of~$E$),
and the dual receive expression~$\RecvFrom{\ell}$ waits to receive a local value from~$\ell$.
Since only local values can be sent, \SendN may only send a value~$\Ret{v}$,
while other forms such as $\SendTo{\NtwkUnit}{\rho}$ will be stuck.

(Type) let-expressions, (type) functions, \FontNtwk{case}, \FontNtwk{localCase}, and algebraic and recursive data types are included in
the network language identically to in choreographies, less some un-needed location (set) annotations.
The sequencing construct $E_1 \NtwkSeq E_2$ is standard.

On top of these relatively familiar expressions, there are a two (groups of) constructions that are standard in (process-polymorphic) choreographies.
Recall that a location can participate in a \FontNtwk{case} or \FontNtwk{localCase} expression if it either knows the data being branched on \emph{or synchronization messages are inserted telling it which way to go}.
In the second case, we need some way to represent that location branching on the result of waiting for a synchronization message.
We do this using \AllowChoiceN{} expressions.
Specifically, $\AllowChoice{\ell}{E_1}{E_2}$ is the program that waits for a synchronization message from $\ell$.
If it is $\Left$, then it continues as~$E_1$; otherwise, it continues as $E_2$.
Note that if a synchronization message in a choreography is used outside of a branch, then we statically know what message will be received.
In this case, we allow an \AllowChoiceN{} expression to only have one branch.
Such a term receiving the wrong synchronization message becomes stuck.

Next, the ``AmI-In''~expression, introduced by \lamqc as a generalization of ``AmI'' from PolyChor$\lambda$~\cite{GraversenHM24},
intuitively represents a process's knowledge of its own name.
In particular, $\AmIIn{\rho}{E_1}{E_2}$ continues as~$E_1$ if the process running it is in~$\rho$, and as~$E_2$ otherwise.
At a technical level, EPP uses $\AmIInN$ to implement process polymorphism.
For instance, the behavior of a location~$L$ when executing its part of choreography~$\alpha.(1 \LocalPlus 2)$ depends on whether~$\alpha$ resolves to a location set containing~$L$.
$\AmIInN$ gives the network language a means to branch based on a location's identity.

The only novel network-language features in this work are the~\NtwkForkN{} and \NtwkExit{}~expressions, which implement process forking.
The network-level~\NtwkForkN expression $\NtwkFork{(\alpha,x)}{E_1}{E_2}$ spawns a new thread with the network code $E_1$, called the \emph{thread task}.
The name of this thread is then bound to~$\alpha$ while a local representation of this name is bound to~$x$, similar to \ForkN expressions in choreographies.
Note that, unlike in a choreography, the thread task is explicitly specified in the term, rather than being implicit from scoping.
Dually, the \NtwkExit~command halts execution, removing the location from the system.

\subsection{Network Language Operational Semantics}
\label{sec:network-semantics}

\begin{figure}
  \begin{syntax}
    \category[Transition Labels]{l}
    \alternative{\iota}
    \alternative{\RSendNtwk{m}{\rho}}
    \alternative{\RRecvNtwk{L}{m}}
    \alternative{\RForkNtwk{L}{E}}
    \alternative{\RExitNtwk}
  \end{syntax}

  \begin{ruleset}
    \SetRuleLabelLoc{left}
    \NRetRule
    \and
    \NAppRule
    \and
    \NSendRule
    \and
    \NRecvRule
    \and
    \NChooseRule
    \and
    \toggletrue{NAllowLLinebreak}
    \NAllowLRule
    \and
    \toggletrue{NForkLinebreak}
    \NForkRule
    \and
    \NExitRule
  \end{ruleset}
  \caption{Selected Network Language Operational Semantics}
  \label{fig:ntwk-semantics}
\end{figure}

The labeled transition system
$L \triangleright E_1 \ntwkstep{\smash{\scriptstyle l}} E_2$
gives the operational semantics of the network language,
where~$L$ is the location executing the program and $l$ is the label on the step.
Selected transition labels and rules are shown in Figure~\ref{fig:ntwk-semantics}.

There are five forms of transition labels, corresponding to five different sorts of steps.
The ``iota'' label~$\iota$ denotes an internal step,
in which the network program or a local program reduces without interaction between other locations.

The send~$\RSendNtwk{m}{\rho}$ and receive~$\RRecvNtwk{L}{m}$ labels account for message-passing steps, including selection messages.
As recipients cannot know the contents of a message in advance, the rules \ruleref{N-Recv} and \ruleref{N-AllowL} (and the omitted \textsc{N-AllowR} rule)
are non-deterministic, and allow any value to arrive.
The sender follows the \ruleref{N-Send} and \ruleref{N-Choose} rules, which ensure that the contents of the transition label match the message and recipients specified by the program.
The system semantics defined below ensures the sender and recipient agree on the message, resolving the recipient's non-determinism.

The label~$\RForkNtwk{L}{E}$ indicates the spawning of a new thread and includes the name~$L$ of the thread and its task~$E$.
While \ruleref{N-Fork} allows the new thread's name to be non-deterministically chosen,
the system semantics ensures there is no collision with an existing name.
The dual label~$\NtwkExit$ denotes when a thread is killed.
While the rule \ruleref{N-Exit} has no special effect in this single-location semantics,
the corresponding rule in the system semantics will entirely remove the thread from the system.

\subsubsection{Network Systems}
\label{sec:system-semantics}

\begin{figure}
  \begin{syntax}
    \category[System Transition Labels]{l_S}
    \alternative{\iota_L}
    \alternative{L_1.m \sendsto \rho}
    \alternative{\RForkSys{L}{L'}{E}}
    \alternative{\RKillSys{L}}
  \end{syntax}
  \begin{align*}
    &\fbox{$\loc{l_S}$} \\
    &\loc{\iota_L} = \loc{\RForkSys{L}{L'}{E}} = \loc{\RKillSys{L}} \defeq \{L\} &&
    \loc{L_1.m \sendsto \rho} \defeq \{L_1\} \cup \rho
  \end{align*}
  \begin{ruleset}
    \Rule{Internal}{
      L \triangleright \Pi(L) \ntwkstep{\iota} E}
    {\Pi \systemstep[\iota_L] \subst{\Pi}{L}{E}}
    \and
    \Rule{Comm}{
      L_1 \notin \rho \\
      L_1 \triangleright \Pi(L_1) \ntwkstep{\RSendNtwk{m}{\rho}} E_1 \\\\
      \forall L \in \rho\ldotp \Big(L \triangleright \Pi(L) \ntwkstep{\RRecvNtwk{L_1}{m}} E_L\Big) \\
    }{\Pi \systemstep[L_1.m \sendsto \rho] \subst*{\Pi}{{L_1}{E_1}{L \in \rho}{E_L}}}
    \and
    \Rule{Fork}{
      L' \notin \dom{\Pi} \\
      \fv{E_1} = \varnothing \\
      L \triangleright \Pi(L) \ntwkstep{\RForkNtwk{L'}{E_1}} E_2
    }{\Pi \systemstep[\RForkSys{L}{L'}{E_1}] \subst*{\Pi}{{L'}{(E_1 \NtwkSeq \NtwkExit)}{L}{E_2}}}
    \and
    \and
    \Rule{Kill}{
      L \triangleright \Pi(L) \ntwkstep{\RExitNtwk} E
    }{\Pi \systemstep[\RKillSys{L}] \Pi \setminus L}
  \end{ruleset}
  \caption{System Semantics and Labels}
  \label{fig:system-semantics}
\end{figure}

Since network programs represent the isolated execution of a single program at a given location,
while choreographies represent an entire concurrent system,
we need to lift the semantics of our network programs to model an entire system.
Formally, we represent a \emph{system} $\Pi = {\parallel_{L \in \SysLocs} (L \triangleright E_L)}$
as a map from each location~$L$ in a finite set $\SysLocs \subset \Locations$ to the network program~$E_L$ it is currently executing.

The operational semantics of systems are shown in Figure~\ref{fig:system-semantics}.
These rules lift the single-location semantics into a concurrent composition using four rules.
The \ruleref{Internal} rule allows one location to independently take an internal step.
\ruleref{Comm} models message passing, and requires the sender and all recipients to simultaneously step with the same message value.
\ruleref{Fork} spawns a new child thread with a fresh name~$L'$, allowing the parent to specify which code the child should run as long as all variables are resolved.
Finally, \ruleref{Kill} kills a thread by removing it from the system.
Notationally, $\subst{\Pi}{L \in \rho}{E_L}$ denotes the updated system mapping~$L$ to~$E_L$ if $L \in \rho$ and $\Pi(L)$ otherwise,
and $\Pi \setminus L$ denotes the system which is identical to~$\Pi$, but removes~$L$ from its domain.


\section{Endpoint Projection}
\label{sec:endpoint-projection}
With the target language in hand, we now define and prove correct the endpoint projection~(EPP) procedure
to compile a choreography into a system of concurrently executing network programs.

\subsection{Projecting One Location}
\label{sec:epp-defined}
The projection of a choreography~$C$ for a single location~$L$, denoted $\epp{C}{L}$, is the network program that executes the actions involving~$L$ in~$C$.
EPP is partial, both because the \emph{merge operator} ($E_1 \Merge E_2$) described below (used to handle branching) is partial,
and because EPP ensures that variables are only used by locations that have bound them.
We denote the cases where a choreography fails to project with the notation ``\Undef,'' leaving failures due to the merge operator implicit.

\begin{figure}
  \rulefiguresize
  \begin{align*}
    \epp{\rho.e}{L} & \defeq
      \begin{cases}
        \Ret{e} & \ifText L \in \rho\\
        \NtwkUnit & \owText
      \end{cases}
    \\[\vertrulegap]
    \epp{C_1 \appchor{\rho} C_2}{L} & \defeq \begin{cases}
      \epp{C_1}{L}~\epp{C_2}{L} & \ifText L \in \rho \\
      \epp{C_1}{L} \seqfun \epp{C_2}{L} \seqfun \NtwkUnit & \owText
    \end{cases}
    \\[\vertrulegap]
    \epp{\Fun{\rho}{F}{X \ty \tau}{C}}{L} & \defeq \begin{cases}
      \NtwkFun{F}{X}{\epp{C}{L}} & \ifText L \in \rho \\
      \NtwkUnit & \ifText L \notin \rho \andText \epp{C}{L} \neq \Undef \\
      \Undef & \owText
    \end{cases}
    \\[\vertrulegap]
    \epp{\TFunLoc{F}{\alpha \knd \locKnd}{C}}{L} & \defeq
      \NtwkTFun{F}{\alpha}{\AmIIn{\{\alpha\}}{\epp{\subst{C}{\alpha}{L}}{L}}{\epp{C}{L}}}
    \\[\vertrulegap]
    \epp{C \ChorSend \rho}{L} & \defeq
    \begin{cases}
      \SendTo{\epp{C}{L}}{\rho} & \ifText L = \ell\\
      \epp{C}{L} \seqfun \RecvFrom{\ell} & \ifText L \neq \ell \andText L \in \rho\\
      \epp{C}{L} & \owText
    \end{cases}
    \\[\vertrulegap]
    \epp{\syncs{\ell}{d}{\rho} \seq C}{L} & \defeq
    \begin{cases}
      \ChooseFor{d}{\rho}{\epp{C}{L}} & \ifText L = \ell \\
      \AllowOneChoice{\ell}{d}{\epp{C}{L}} & \ifText L \neq \ell \andText L \in \rho \\
      \epp{C}{L} & \owText
    \end{cases}
    \\[\vertrulegap]
    \epp{\LocalCase*{\rho}{C}{x}{C_1}{y}{C_2}}{L} & \defeq
    \begin{cases}
      \mathrlap{\NtwkLocalCase{\epp{C}{L}}{x}{\epp{C_1}{L}}{y}{\epp{C_2}{L}} \quad\ifText L \in \rho} & \\
      \epp{C}{L} \seqfun (\epp{C_1}{L} \Merge \epp{C_2}{L}) & \ifText L \notin \rho \andText x \notin \fv{\epp{C_1}{L}} \andText y \notin \fv{\epp{C_2}{L}} \\
      \Undef & \owText
    \end{cases}
    \\[\vertrulegap]
    \epp{\Fork{(\alpha,x)}{\ell}{C}}{L} & \defeq
    \begin{cases}
      \NtwkFork{(\alpha,x)}{\epp{C}{\alpha}}{\epp{C}{L}} & \ifText L = \ell \\
      \epp{C}{L} & \ifText L \neq \ell \andText \alpha, x \notin \fv{\epp{C}{L}} \\
      \Undef & \owText
    \end{cases}
    \\[\vertrulegap]
    \epp{\KillAfter{L'}{C}}{L} & \defeq
    \begin{cases}
      \Undef & \ifText L = L' \\
      \epp{C}{L} & \owText
    \end{cases}
  \end{align*}
  \caption{Selected EPP Definitions}
  \label{fig:epp-defs}
\end{figure}

EPP is defined in a structurally-recursive manner over the syntax of choreographies.
The rules are very similar to those of \lamqc~\citep{SamuelsonHC25}, with the exception of functions
and applications, where \lamfork accounts for the annotations allowing a subset of
locations to participate in a function body.
The majority of rules simply convert choreographic syntax into the network language equivalent,
but a few cases---such as those in Figure~\ref{fig:epp-defs}---are more complex.

For the local computation~$\rho.e$, only locations in~$\rho$ should compute~$e$ while others do nothing.
For functions whose bodies may involve locations from~$\rho$, only those locations in~$\rho$ project to a function,
while others can simply project to a unit value~\NtwkUnit.
The projection of function applications is similar, where locations involved in the function body
should apply the function, while other locations can simply sequence the function and its argument,
afterwards returning a unit value.
The projection of functions requires an additional check compared to previous work---that the body projects for everyone---for reasons described below.

Type functions project to network type functions,
but those that abstract over locations (and location sets) must behave differently depending on whether or not the variable~$\alpha$ resolves to~$L$.
Following \citet{GraversenHM24} and \citet{SamuelsonHC25}, we use~\AmIN to branch on the identity of the current process.
In the~\NtwkThen branch, when the locations match, we substitute~$\alpha$ with~$\ell$ in the body~$C$ before projecting~$C$.
In the~\NtwkElse branch, we project the body directly.
Because location variables are equal only to themselves, this projection correctly treats~$\alpha \neq L$.

For the send $C \ChorSend \rho$, all locations first execute their projection of~$C$,
then location~$\ell$ multicasts the output of~$C$ to the locations in~$\rho$, who receive it.
For the selection statement~$\syncs{\ell}{d}{\rho} \seq C$, location~$\ell$ sends the choice~$d$ to all in~$\rho$,
who condition on this choice using an \AllowChoiceN.
Because the choice is certain, \AllowChoiceN only has that branch.

For \LocalCaseN expressions, first all locations execute the program in the guard, then locations in~$\rho$ branch on its value.
For non-branching locations, the merge operator~$(E_1 \Merge E_2)$~\citep{Montesi23} combines the two branches.
Merge is an idempotent binary partial function defined homomorphically on matching network programs.
It collects \AllowChoiceN branches that exist on only one side, and merges those that exist in both.
For instance,
\[
  \left(\AllowOneChoice{\ell}{\NtwkLeft}{E_1}\right) \Merge \left(\AllowOneChoice{\ell}{\NtwkRight}{E_2}\right) \defeq \AllowChoice{\ell}{E_1}{E_2}.
\]
Our merge operator is identical to that of \lamqc,
accounting for the additions of \NtwkForkN and \NtwkExit.
Its full definition can be found in Appendix~\ref{sec:ntwk-merge-def}.
Since non-branching locations don't know the scrutinee,
EPP also checks that local variables~$x$ and~$y$ are not free in the projection of the branches.
The choreographic \CaseN expressions has identical rules.

For the new \ForkN expression, the parent location~$\ell$ projects to a network \NtwkForkN expression with two pieces of code.
The body is the projection of the \ForkN's body to~$\ell$,
and the thread task is the projection of the body to the child thread~$\alpha$.
That is, the parent projects the body twice: once for its own role,
and a second time for the role of the spawned thread.
Locations not equal to the parent can simply project to the body of the \ForkN expression, ensuring that
neither of the two variables bound in the body are free.

To project a~\KillAfterN expression, everyone must perform their role to execute the body.
However, we leave the projection for the spawned thread~$L'$ as explicitly~\Undef.
As explained below in Section~\ref{sec:epp-threads}, na\"ively
projecting the entire choreography for a spawned thread can incorrectly produce extra code
that they do not have access to.
We thus handle projecting spawned threads with a restricted EPP function defined below that excises this superfluous code
and ensures~$L'$ does not attempt to directly project $\KillAfter{L'}{C}$.

Instead of directly using the sequencing primitive~$E_1 \NtwkSeq E_2$,
note that EPP must use the \emph{collapsing sequencing function}~$E_1 \seqfun E_2$ introduced by \citet{SamuelsonHC25}, which is defined as
$E_1 \seqfun E_2 = E_2$ when~$E_1$ is a value, and is otherwise~$E_1 \seqfun E_2 = E_1 \NtwkSeq E_2$.
It may seem that this function is only an optimization,
but it is actually required to ensure projected programs can correctly simulate out-of-order choreographic steps.

Combining this collapsing sequencing operator with the involved location tracking necessary to maintain deadlock freedom
also allows us to project entire choreographies to~\NtwkUnit for uninvolved parties.
We therefore, essentially for free, achieve the main goal of \citeauthor{CruzFilipeGLMP23}'s~[\citeyear{CruzFilipeGLMP23}]~modular endpoint projection---
that~$\epp{C}{\Alice}$ should not depend on parts of~$C$ that do not involve~\Alice.

\paragraph{Projecting Functions}
\label{sec:epp-functions}
There are two important notes to be made about the projection of functions and type functions.
First, because any location might participate in the body of a process-polymorphic type function,
we require all locations to project these functions.

Second, for standard functions, if~$L \notin \rho$, the function projects to~\NtwkUnit because~$L$ will not participate in the body,
but we still require~$\epp{C}{L}$ to be \emph{defined}.
This requirement may seem unnecessary; if~$L$ does not participate in the body,
one might hope that the body would always project, preferably to~\NtwkUnit.
Unfortunately this is not so, which can cause otherwise-projectable choreographies to step to non-projectable ones.
To see why, consider the type function~$C = \TLamN \ell \knd \locKnd \ldotp \ITE[\Alice]{X}{\ell.4}{\ell.(3 \LocalTimes 2)}$
which has type $\allty{\ell \knd \locKnd}{\{\Alice , \ell\}}{\Int @ \ell}$ in context $X\ty\Bool@\Alice$.
While~$C$ projects for~\Alice, it does not project for any other location---for instance, \Bob.
The \NtwkThen branch of the resulting \AmIN would need to merge $\Ret{4} \ntwkmerge \Ret{3 \LocalTimes 2}$, which is undefined.
Wrapping~$C$ in a function and applying it gives the well-typed choreography
$C' = (\LamN_\Alice~X \ldotp (C \appchor{\Alice} \Alice)) \appchor{\Alice} \Alice.\True$ involving only~\Alice.
If we did not check that~\Bob could project the function,
\Bob's projection of~$C'$ would be~$\NtwkUnit \seqfun \NtwkUnit \seqfun \NtwkUnit = \NtwkUnit$ since he is completely uninvolved.
However, after $\beta$-reducing~$C'$, the program no longer projects for~$\Bob$,
since~$\epp{\subst{C}{X}{\Alice.\True}}{\Bob}$ is undefined.
In reality, by (unsuccessfully) checking that $\epp{C}{\Bob}$ is defined initially, we statically reject the program.

\subsection{Projecting Systems and Active Threads}
\label{sec:epp-threads}
While the above EPP definition produces a network program for a single location,
choreographies specify the behavior of many participants.
To produce a system of network programs (see Section~\ref{sec:system-semantics}), we lift EPP to a finite set of locations $\SysLocs \subset \Locations$.
Non-ephemeral locations compile the entire choreography before executing it,
so we can simply take the projection as defined above.
For spawned threads, however, we must restrict to the code available to that location---the subexpression provided by its parent---not the whole choreography.
Because we track the scope of threads with \KillAfterN,
we use their presence to identify when a process has a scoped lifetime and which code such processes need.
More formally, we define a modified version of EPP, denoted~$\eppfork{C}{L}$, to handle spawned threads as follows.
\[
  \eppfork{C}{L} \defeq \begin{cases}
    (\epp{C_1}{L} \ntwkmerge \cdots \ntwkmerge \epp{C_n}{L}) \seqfun \NtwkExit &
      \begin{array}[t]{@{}l@{}}
      \ifText C ~\text{contains}~n > 0~\text{subterms of the form} \\
      \quad \KillAfter{L}{C_1}, \ldots, \KillAfter{L}{C_n}
      \end{array}\\
    \epp{C}{L} & \owText
  \end{cases}
\]
The only complexity is that, due to out-of-order steps,
a thread may be simultaneously spawned in both branches of a case-expression,
so the~\KillAfterN corresponding to a thread is only unique \emph{up-to branching}.
The modified EPP function handles this wrinkle similarly to projecting
non-branching locations in a \CaseN or \LocalCaseN{}:
it extracts all~\KillAfterN bodies for a thread
and merges together their projections followed by an~\NtwkExit.
The full definition of~$\eppfork{C}{L}$ can be found in Appendix~\ref{sec:proj-def-thread}.

To project a full system, we then lift this restricted EPP point-wise to each running location.
That is, for a finite $\Omega \subset \Locations$, we define $\eppfork{C}{\SysLocs} \defeq {\parallel_{L \in \SysLocs} (L \triangleright \eppfork{C}{L})}$.
Note that $\eppfork{C}{L}$ must be defined for all $L \in \SysLocs$ for $\eppfork{C}{\SysLocs}$ to be defined.

\subsection{Completeness, Weak Soundness, and Deadlock Freedom}
\label{sec:proj-correct}
We have now provided two separate semantics for \lamfork: the top-level choreographic semantics (Section~\ref{sec:semantics}), and the semantics given by EPP.
Here we examine the relationship between these semantics and use that relationship to provide a deadlock-freedom guarantee for compiled systems.

\paragraph{Simulation Relation}
\label{sec:bisim}
To relate the two semantics, we must decide which systems are related to a given choreography.
While one may think a choreography~$C$ should only relate to its projection~$\eppfork{C}{\SysLocs}$, this property is not preserved by reductions.
Specifically during branching steps, the branch not taken is discarded in the choreography,
but is retained in projected programs waiting on a selection message.
Additionally, EPP's use of the collapsing sequencing function~$E_1 \seqfun E_2$
means programs that resolve to a value after a substitution may be removed from the projected program.

To account for these mismatches we follow traditional choreographic style~\citep{Montesi23} by defining a relation $E_1 \lessthan E_2$
which relates two network programs if~$E_1$ may have discarded unneeded code---choices or sequenced values---that remain in~$E_2$.
Formally, it is the smallest structurally compatible partial order on network programs that admits the following three rules.
\begin{mathpar}[\rulefiguresize]
  \infer{E_1 \lessthan E_1'}
    {\AllowOneChoiceTight*{\ell}{\NtwkLeft}{E_1} \lessthan \AllowChoiceTight*{\ell}{E_1'}{E_2'}}
  \and
  \infer{E_2 \lessthan E_2'}
    {\AllowOneChoiceTight*{\ell}{\NtwkRight}{E_2} \lessthan \AllowChoiceTight*{\ell}{E_1'}{E_2'}}
  \and
  \infer{
    E_1 \lessthan E_2 \\
    \NtwkVal{V}
  }{E_1 \lessthan V \NtwkSeq E_2}
\end{mathpar}
To extend this relation to entire systems we lift it point-wise: $\Pi_1 \lessthan \Pi_2 ~\defeq~ \forall L \in \SysLocs \ldotp \Pi_1(L) \lessthan \Pi_2(L).$

This relaxed correspondence is sufficient to 
prove that the projected semantics simulate the choreographic semantics.
\begin{restatable}[Completeness]{thm}{completenessSimple}
\label{thm:simple-completeness}
  If $\chortyped{\ctx}{C}{\tau}{\rho}$ and~$\nl{\rho} \subseteq \Omega$,
  then whenever $\conf*{C} \steps{} \conf{C'}{\Omega'}$,
  there is some $\Pi'$ such that $\eppfork{C}{\Omega} \systemsteps{} \Pi'$ and $\eppfork{C'}{\Omega'} \lessthan \Pi'$.
\end{restatable}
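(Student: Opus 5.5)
We prove the statement by induction on the length of the choreographic reduction sequence $\conf*{C} \steps{} \conf{C'}{\Omega'}$. We generalize it so that the induction hypothesis can be reapplied. The strengthened statement assumes $\chortypedplus{\ctx}{C}{\tau}{\rho}$ (Theorem~\ref{thm:type-equiv} yields this from the surface typing). It also replaces the hypothesis that the system equals $\eppfork{C}{\Omega}$ by the weaker one that $\eppfork{C}{\Omega} \lessthan \Pi$ for an arbitrary system $\Pi$. The conclusion is then that there is some $\Pi'$ with $\Pi \systemsteps{} \Pi'$ and $\eppfork{C'}{\Omega'} \lessthan \Pi'$. Preservation (Theorem~\ref{thm:preservation}) re-establishes the typing invariant and $\nl{\rho'} \subseteq \Omega'$ after each step, so the multi-step result follows from a single-step lemma. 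We then compose the resulting system reductions.

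The core is this single-step lemma. Suppose $\conf*{C} \step[R] \conf{C'}{\Omega'}$ and $\eppfork{C}{\Omega} \lessthan \Pi$. Then $\Pi \systemsteps{} \Pi'$ with $\eppfork{C'}{\Omega'} \lessthan \Pi'$. We prove it by induction on the derivation of the step, in three parts.

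\textbf{Preliminary lemmas.} We first establish the following.
\begin{enumerate}
\item EPP commutes with substitution of values, types, and locations, up to $\lessthan$. This is needed because of the collapsing sequencing $\seqfun$.
\item Locations outside $\rloc{R}$ are unaffected: for $L \in \Omega \setminus \rloc{R}$ with $L$ not newly spawned, $\eppfork{C'}{L} \lessthan \eppfork{C}{L}$. Here we use Theorem~\ref{lem:participants-sound} and the fact that the projection for a location outside $\cloc{C}$ is essentially a value, namely \NtwkUnit.
\item $\lessthan$ is a simulation for the network semantics: if $E_1 \lessthan E_2$ and $L \triangleright E_1 \ntwkstep{l} E_1'$, then $E_2$ can take $\iota$-steps (discarding sequenced values) followed by an $l$-step to some $E_2'$ with $E_1' \lessthan E_2'$. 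Merge interacts monotonically with $\lessthan$.
\end{enumerate}
With these, we only need to simulate the locations in $\rloc{R}$, plus any spawned thread.

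\textbf{Base rules.} \ruleref{C-Done}, \ruleref{C-App} and \ruleref{C-SendV} map to \ruleref{Internal} and \ruleref{Comm} steps by the participants in $\rho$. Each participant projects to a matching redex such as $\Ret{e}$, an application, or a send/receive; the condition $\nl{\rho} \subseteq \Omega$ guarantees every participant is present. Non-participants in an application project to $\epp{C_1}{L} \seqfun \epp{C_2}{L} \seqfun \NtwkUnit$, which is $\NtwkUnit$ on values. The requirement that function bodies project for everyone keeps the projection of $\subst*{C}{{F}{f}{X}{V}}$ defined. Branching steps rely on the merge property $E_1 \Merge E_2 \greaterthan E_i$, and \AllowChoiceN receivers consume the selection via \ruleref{Comm}. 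For \ruleref{C-Fork}, the parent performs \ruleref{Fork} with task $\epp{C}{\alpha}$ instantiated at $L'$. This must agree with $\eppfork{\KillAfter{L'}{C'}}{L'} = \epp{C'}{L'} \seqfun \NtwkExit$, which holds by substitution commutation. \ruleref{C-Kill} and its out-of-order variant map to the thread reducing its residual code to a value and then taking \ruleref{Kill}.

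\textbf{Out-of-order rules, the main obstacle.} These are \ruleref{C-LetI}, \ruleref{C-CaseI}, \ruleref{C-LocalCaseI}, \ruleref{C-SyncI} and \ruleref{C-ForkI}. For a location in $\rloc{R}$, the side conditions $\cloc{C_1} \cap \rloc{R} = \varnothing$ and $\rho \cap \rloc{R} = \varnothing$ imply that its projection of the head is a value. Collapsing sequencing makes the projection of the whole term start directly with the inner redex, so the inductive hypothesis applies. The hard case is \ruleref{C-CaseI}, where the same step $R$ is taken in both branches. For a non-branching location $L$ we must simulate both branches simultaneously inside $\epp{C_1}{L} \Merge \epp{C_2}{L}$. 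This needs a lemma that merge commutes with network steps: if $E_1 \ntwkstep{l} E_1'$ and $E_2 \ntwkstep{l} E_2'$, then $E_1 \Merge E_2 \ntwkstep{l} E_1' \Merge E_2'$. The lemma must also cover steps that spawn the same thread in both branches, which is exactly why $\eppfork{\cdot}{L}$ merges all \KillAfterN bodies. I expect this to take the most care, especially for $\NtwkForkN$ steps: the two branches must produce the same thread task up to merge, and the fresh name $L'$ and the set $\Omega'$ are shared across branches by the rule's premises.
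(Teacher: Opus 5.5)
Your plan is essentially the paper's proof. It has the same ingredients: induction on the number of steps with Preservation re-establishing the typing invariant; per-location non-participant and participant completeness by induction on the step, using collapsing sequencing for the out-of-order rules, a merge-preserves-steps lemma for \ruleref{C-CaseI}, and a thread-projection lemma for \ruleref{C-Fork}; and lifting along $\lessthan$. The only reorganization is that you build the hypothesis $\eppfork{C}{\Omega} \lessthan \Pi$ into the single-step lemma. The paper instead proves single-step completeness for the exact projection (Lemma~\ref{lem:single-complete}) and then transports it with the System Lifting Property (Lemma~\ref{thm:sys-lifting}).

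Three precision points. First, merge commutes with steps only up to $\lessthan$, as in Lemma~\ref{lem:merge-pres-step}: substituting into a one-sided \AllowChoiceN{} branch breaks exact equality, e.g.\ when the two branches apply different functions under the same $\RApp{\rho}$ redex. Second, merge is monotone in $\lessthan$ only for collapsed programs (Lemmas~\ref{lem:lessthan-refl-merge} and~\ref{lem:epp-simplified}), which EPP outputs always are. Third, your worry about forking in both branches is settled: the redex $\RFork{L}{L'}{C'}$ records the thread body, so the shared-redex premise of \ruleref{C-CaseI} makes both branches spawn the same task.
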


In addition to this completeness theorem, one might expect a corresponding soundness theorem.
However, the combination of divergent programs and multiply-located computations makes a standard soundness theorem impossible.
Consider the choreography $\{\Alice,\Bob\}.\Loop \seq \{\Bob,\Client\}.(1 \LocalPlus 2)$.
The only possible choreographic step is to run the infinite loop at~\Alice and~\Bob.
In the projected system, however, \Client~can immediately compute~$3$.

\citet{SamuelsonHC25} avoid this difficulty by both requiring all locations to synchronize at every function boundary
and only showing soundness when no local computation in the choreography diverges.
A similar approach fails here, as \lamfork allows a selected subset of locations to participate in a $\beta$-reduction,
meaning that even a \emph{choreographic} function can loop for~\Alice and~\Bob while allowing~\Client to proceed.
This dilemma has existed since the advent of functional choreographic programming~\cite{HirschG22,CruzFilipeGLMP23},
and while recent work has proposed a choreographic semantics which avoids these non-termination--related issues~\citep{PlyukhinQM25},
further research is needed to adapt this semantics to the features of~\lamfork.
See Section~\ref{sec:related-work} for more discussion.

Rather than enforcing global synchronization to satisfy a rigid proof technique---
which would negate many benefits of parallelism
and possibly require locations who do not even know each other exist to synchronize---
we instead establish a robust soundness property that tolerates the divergence described above.
This theorem acknowledges that the projected system may execute some operations that the choreography will not.
However, it guarantees that the choreography can match any system step,
provided that step is not blocked at the choreographic level by a computation that has not yet finished in the system.
In the above example, \Client~computing~$3$ is just such a blocked step, as~$\{\Alice,\Bob\}.\Loop$ has not---and will never---complete at the system level.

Making matters worse, some choreographic steps correspond to multiple system steps,
and the system may have taken only some of them.
To match these system steps, the choreography might execute in ways that step some locations \emph{more} than the system did.
In essence, the best the choreography can do may be ``ahead'' at some locations and ``behind'' at others.
For instance, in the example above, if~\Alice steps~\Loop and~\Client computes~$3$,
but~\Bob does nothing, the choreography cannot match~\Client's step,
and while it can match~\Alice's steps, doing so will place it ahead at~\Bob.

To formalize this notion, we say if $\eppfork{C}{\Omega} \systemsteps \Pi$,
then~$C$ must step to some~$C'$ whose projection is behind~$\Pi$ \emph{only at locations that cannot step~$C'$},
but might be ahead of~$\Pi$ elsewhere.

\begin{restatable}[Divergence-Weakened Soundness]{thm}{partialSoundness}
  \label{thm:partial-sound}
  If $\choremptyped{C}{\tau}{\rho}$,
  $\nl{\rho} \subseteq \SysLocs$,
  and~$\eppfork{C}{\Omega} \systemsteps \Pi$,
  then there is some~$C'$, $\Omega'$, $\Pi'$, $\Pi''$, and trace~$\trc$ of system labels such that
  \begin{enumerate}
    \item\label{li:sound:c-step}
      $\conf{C}{\Omega} \steps{} \conf{C'}{\Omega'}$ and $\eppfork{C}{\Omega} \systemsteps \Pi'$ with $\eppfork{C'}{\Omega'} \lessthan \Pi'$,
    \item\label{li:sound:confl}
      $\Pi \systemsteps \Pi''$ and $\Pi' \systemsteps[\trc] \Pi''$, and
    \item\label{li:sound:disjoint}
      for all~$C''$, $\Omega''$, and~$R$, if $\conf{C'}{\Omega'} \step[\smash{R}] \conf{C''}{\Omega''}$, then $\rloc{R} \cap \loc{\trc} = \varnothing$.
  \end{enumerate}
\end{restatable}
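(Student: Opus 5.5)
The plan is to build $C'$ greedily, by repeatedly ``catching up'' the choreography to the system $\Pi$, and then to use confluence of the network semantics to reconcile the two executions. The proof goes by induction on the length of the system trace $\eppfork{C}{\Omega} \systemsteps \Pi$, strengthened by carrying along an invariant. The invariant consists of a choreography state $\conf{C_i}{\Omega_i}$ reachable from $\conf{C}{\Omega}$, a system $\Pi'_i$ with $\eppfork{C}{\Omega} \systemsteps \Pi'_i$ and $\eppfork{C_i}{\Omega_i} \lessthan \Pi'_i$, a common successor $\Pi''_i$ of $\Pi_i$ and $\Pi'_i$, and a trace $\trc_i$ from $\Pi'_i$ to $\Pi''_i$. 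The trace $\trc_i$ records the system steps already taken in $\Pi$ that the choreography has not yet matched.

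The first ingredient is a \emph{lock-step soundness} lemma. Suppose $\eppfork{C_i}{\Omega_i} \lessthan \Pi'_i$ and $\Pi'_i$ takes a step whose label $l_S$ overlaps $\rloc{R}$ for some enabled choreographic step $R$. Then either $R$ and $l_S$ are the same action, or $l_S$ is the first piece of a multi-step choreographic redex. In that case the choreography steps $\conf{C_i}{\Omega_i}\step[R]\conf{C_{i+1}}{\Omega_{i+1}}$, and completeness (Theorem~\ref{thm:simple-completeness}) gives a system trace from $\Pi'_i$ that realizes $R$ and contains $l_S$. This lemma needs the type soundness results. Preservation and Sound Participants (Theorems~\ref{thm:preservation} and~\ref{lem:participants-sound}) keep every intermediate choreography well-typed, keep its participants alive, and place every enabled redex inside $\nl{\rho'}\subseteq\Omega_i$. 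Progress (Theorem~\ref{thm:progress}) is used to find a choreographic step whenever a system location in $\cloc{C_i}$ is able to act. The case analysis on the redex, including the fork case where the thread task must be $\eppfork{\cdot}{L'}$ and the kill case with the $\seqfun\NtwkExit$ suffix, mirrors the completeness proof but runs in the reverse direction.

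The second ingredient is a \emph{diamond/confluence} property for systems. Two system steps with disjoint $\loc{\cdot}$ commute. Two steps at the same location either coincide (using local confluence, property~(\ref{prop:li:confluence}), for internal $\iota$ steps) or can be joined. I also need that $\lessthan$ is a simulation: if $E_1\lessthan E_2$ and $E_2$ steps, then $E_1$ matches the step or the step is absorbed. The absorbed case covers discarded \AllowChoiceN branches and sequenced values $V\NtwkSeq$. Using these facts, each step of $\Pi$ is pushed into $\trc_i$ unless it touches the locations of some enabled choreographic redex. When it does, the lock-step lemma advances the choreography, possibly putting it ``ahead'' at other locations. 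The extra system steps from completeness are then appended on the $\Pi_i$ side by confluence to form $\Pi''_{i+1}$, and the matched actions are removed from $\trc_{i+1}$.

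After $\Pi$ is exhausted, I keep advancing the choreography while some enabled redex $R$ of $C'$ satisfies $\rloc{R}\cap\loc{\trc}\neq\varnothing$. Each such redex is again matched against a pending step in $\trc$ via the lock-step lemma. This loop terminates because each iteration strictly shortens $\trc$. When it stops, clause~(\ref{li:sound:disjoint}) holds by construction, and clauses~(\ref{li:sound:c-step}) and~(\ref{li:sound:confl}) are exactly the invariant.

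I expect the main obstacle to be the lock-step lemma in the presence of out-of-order rules and multiply-located values. A pending system step at $L$ might correspond to a choreographic redex that is enabled only after a redex at other locations still stuck in $\trc$'s complement, for example one nested under \ruleref{C-LetI} or \ruleref{C-CaseI}. Showing that the step is genuinely blocked requires a careful lemma: if $L$ can step in $\Pi'_i$ but no enabled redex of $C_i$ contains $L$, then $L$'s step can only be a piece of a redex that is not yet enabled. Such a step is then safe to leave in $\trc$ without violating~(\ref{li:sound:disjoint}). I would prove this by induction on $C_i$, using the $\cloc{\cdot}$ disjointness side conditions of the out-of-order rules.
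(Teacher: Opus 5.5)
Your final loop is the paper's strong induction on the residual trace in Theorem~\ref{thm:partial-sound-plus}: while some enabled redex $R$ of $C'$ has $\rloc{R}\cap\loc{\trc}\neq\varnothing$, advance the choreography and shrink $\trc$. The per-step first phase is redundant. Start the loop with $\Pi'=\eppfork{C}{\Omega}$, $\Pi''=\Pi$ and $\trc$ the whole given trace. Clause~(\ref{li:sound:disjoint}) then holds by the exit condition, so the ``blocked step'' lemma you anticipate is not needed.

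The genuine gap is your lock-step lemma for \emph{local} redices. You always advance by a single redex $R$ and realize it in the system via completeness. For $R=\RDone{\rho}{e}{e_1}$ this makes every $L\in\rho$ perform $e\localstep e_1$ on the $\Pi'$ side. But the local semantics is only confluent, and each member of $\rho$ reduces its own copy of $\Ret{e}$ independently. So $\trc$ can contain $e\localstep e_1$ at $L_1$ and $e\localstep e_2$ at $L_2$ with $e_1\neq e_2$, and no single choreographic step matches both. The $L_2$-part of $\eppall{R}$ is not a pending step of $\trc$, and it cannot be appended to $\Pi''$, which is already past $e$ at $L_2$. Rejoining via local confluence replaces $L_2$'s pending steps by some path $e_1\localsteps e^*$ of arbitrary length. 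Hence ``each iteration strictly shortens $\trc$'' is false, and termination of the loop, and with it clause~(\ref{li:sound:disjoint}), is not established. Per-location determinism (Lemma~\ref{lem:ntwk-det}) fails exactly for local-program steps, which is why the paper's System Step Mirroring (Lemma~\ref{lem:sys-mirror}) excludes them.

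The missing idea is the paper's Local Program Mirroring (Lemma~\ref{lem:sys-local-mirror}), applied when the overlapping redex is a local step of $\rho.e$:
\begin{itemize}
\item Use confluence to choose a common reduct $e'$ of everything the members of $\rho$ have reached.
\item Move the choreography from $\rho.e$ to $\rho.e'$ by several \ruleref{C-Done} steps.
\item Build the matching system trace \emph{not} from completeness, but by letting each $L\in\rho$ first replay its own pending local steps and then continue to $e'$.
\end{itemize}
This keeps each location's new trace a prefix of its (extended) pending trace. The catch-up trace is then a proper subsequence, and the residual trace provably shrinks (Lemma~\ref{lem:sys-step-rearrange}).

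For non-local redices your matching is fine, but it needs no reverse case analysis. It follows from completeness plus per-location network determinism, with received values fixed by the deterministic sender. The paper also works with $\lessthansim$ instead of $\lessthan$, so that no administrative $V \NtwkSeq E$ steps precede the matched step.

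Finally, Progress does not yield a step involving a given location that can act; that failure is the whole reason soundness must be weakened. Your argument does not actually need that claim, so drop it.
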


The following diagram visualizes the relationships provided by conditions~(\ref{li:sound:c-step})~and~(\ref{li:sound:confl}).
\begin{center}
  \newlength{\PiPrimeShiftLen}
  \setlength{\PiPrimeShiftLen}{\widthof{$\Pi'$}}
  \addtolength{\PiPrimeShiftLen}{4pt}
  \begin{tikzpicture}[every node/.style={inner sep=4pt}]
    \node (C) {$\conf{C}{\Omega}$};
    \node[below=1.5em of C] (C') {$\conf{C'}{\Omega'}$};

    \node[right=2.5em of C] (CProj) {$\eppfork{C}{\Omega}$};
    \node[anchor=west] (Pi') at (C'-|CProj.west) {$\eppfork{C'}{\Omega'} \lessthan \Pi'$};

    \node[right=6em of CProj] (Pi) {$\Pi$};
    \node[right=6em of Pi'] (Pi'') {$\Pi''$};

    \draw[mapsto] (C) -- (CProj) node[label,midway,above,yshift=1.5pt]{$\scriptstyle\eppfork{\cdot}{}$};
    \draw[mapsto] (C') -- (Pi') node[label,midway,above,yshift=1.5pt]{$\scriptstyle\eppfork{\cdot}{}$};
    \draw[-implies,double equal sign distance] (CProj) -- (Pi) node[pos=1,above,yshift=-2pt]{$\scriptstyle*$}; 

    \draw[dashed,-implies,double equal sign distance] (C) -- (C') node[pos=1,right,xshift=-2pt]{$\scriptstyle*$}; 
    \draw[dashed,-implies,double equal sign distance] (CProj) -- ($(Pi'.east|-Pi''.north)-(\PiPrimeShiftLen,0pt)$) node[pos=1,above right,xshift=-3pt,yshift=-3pt]{$\scriptstyle*$}; 
    \draw[dashed,-implies,double equal sign distance] (Pi) -- (Pi'') node[pos=1,above right,xshift=-3pt,yshift=-3pt]{$\scriptstyle*$}; 
    \draw[dashed,-implies,double equal sign distance] (Pi') -- (Pi'') node[midway,above]{$\scriptstyle \trc$} node[pos=1,above,yshift=-2pt]{$\scriptstyle*$}; 
  \end{tikzpicture}
\end{center}
Allowing~$\Pi$ and~$\Pi'$ to take additional steps to reach~$\Pi''$ allows~$C'$ to be ``ahead'' of~$\Pi$ at some locations and ``behind'' at others, respectively.
However, condition~(\ref{li:sound:disjoint}) captures the idea that~$\Pi'$---and thus~$C'$---
can only be behind at locations where~$C'$ is blocked and cannot (yet) step.

While divergence-weakened soundness is weaker than prior soundness results~\citep{HirschG22,SamuelsonHC25},
it is achievable with multiply-located computations and no implicit synchronization,
and is strong enough to prove deadlock freedom.
Specifically, by combining type soundness (Theorems~\ref{thm:progress},~\ref{thm:preservation}, and~\ref{thm:type-equiv}),
EPP completeness~(Theorem~\ref{thm:simple-completeness}),
Theorem~\ref{thm:partial-sound},
and a standard parallel confluence lemma (Lemma~\ref{lem:parallel-conf}) common to concurrent systems,
we can prove the following theorem.

\begin{restatable}[Deadlock Freedom]{thm}{deadlockFreedom}
\label{thm:deadlock-freedom}
  If $\choremptyped{C}{\tau}{\rho}$ and $\nl{\rho} \subseteq \Omega$,
  then whenever $\eppfork{C}{\Omega} \systemsteps \Pi$,
  either~$\Pi$ is final---every location in~$\Pi$ is a value---or~$\Pi$ can step.
\end{restatable}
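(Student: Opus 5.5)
We prove deadlock freedom by contradiction-free case analysis, combining Theorem~\ref{thm:partial-sound} with type soundness and completeness. Fix $C$, $\Omega$ with $\choremptyped{C}{\tau}{\rho}$, $\nl{\rho} \subseteq \Omega$, and $\eppfork{C}{\Omega} \systemsteps \Pi$. First apply Theorem~\ref{thm:partial-sound} to obtain $C'$, $\Omega'$, $\Pi'$, $\Pi''$ and a trace $\trc$ with $\conf{C}{\Omega} \steps{} \conf{C'}{\Omega'}$, $\eppfork{C'}{\Omega'} \lessthan \Pi'$, $\Pi \systemsteps \Pi''$, $\Pi' \systemsteps[\trc] \Pi''$, and every choreographic step $R$ of $C'$ satisfying $\rloc{R} \cap \loc{\trc} = \varnothing$. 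If $\Pi \systemsteps \Pi''$ is nonempty, $\Pi$ can step and we are done; so assume $\Pi = \Pi''$.

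Next, use Theorem~\ref{thm:type-equiv} to move to $\provesPlusCol$, then Theorem~\ref{thm:preservation} to get $\choremptypedplus{C'}{\tau}{\rho'}$ with $\nl{\rho'} \subseteq \Omega'$. By Theorem~\ref{thm:progress}, $C'$ is either a value or $\conf{C'}{\Omega'}$ steps. In the value case, $\eppfork{C'}{\Omega'}$ consists of values (a lemma, by induction on values, that EPP of a choreographic value, together with $\seqfun$ and the absence of \KillAfterN in values, yields network values at each live location); since $\Pi'$ is above it in $\lessthan$, each $\Pi'(L)$ is a value possibly prefixed by sequenced values or enlarged choices, hence either a value or able to take an internal step. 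If $\Pi'$ is final then, since values cannot step, $\trc$ is empty and $\Pi = \Pi'' = \Pi'$ is final. Otherwise $\Pi'$ steps only at locations $L$ where sequenced values remain, and these steps cannot be undone, so $\Pi$, reached from $\Pi'$, either retains such a pending step or is final; we argue this via a small monotonicity lemma about $\lessthan$.

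In the stepping case, pick $\conf{C'}{\Omega'} \step[R] \conf{C''}{\Omega''}$. By Theorem~\ref{thm:simple-completeness} (its one-step form, extended to $\provesPlusCol$-typed choreographies with \KillAfterN, which is what the actual inductive proof establishes), $\eppfork{C'}{\Omega'}$ performs a nonempty sequence of system steps all with locations in $\rloc{R}$ (a refinement we must extract from the completeness proof: the simulated steps are located in $\rloc{R}$), and the $\lessthan$-relation transports this to a nonempty sequence from $\Pi'$ located in $\rloc{R}$. Since $\rloc{R} \cap \loc{\trc} = \varnothing$, the parallel confluence lemma (Lemma~\ref{lem:parallel-conf}) lets us commute the first of these steps past $\trc$: steps at disjoint locations neither enable nor disable each other, so the step is still available at $\Pi'' = \Pi$. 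Hence $\Pi$ can step.

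The main obstacle is this last commutation: we need that the simulating system step from $\Pi'$ involves only locations in $\rloc{R}$ and that being $\lessthan$-larger does not block it (for example, an extra sequenced value must first be discarded by an internal step, which is itself located in $\rloc{R}$). I would first strengthen completeness to record step locations, prove that $\lessthan$ is a simulation preserving locations, and then apply parallel confluence inductively along $\trc$, handling \ruleref{Fork} and \ruleref{Kill} steps by noting that freshly chosen names can be renamed to avoid clashes with $\trc$.
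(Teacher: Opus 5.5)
Your proof is correct and follows the paper's argument. You apply Theorem~\ref{thm:partial-sound}, reduce to $\Pi = \Pi''$, and use type soundness on $C'$; in the stepping case you combine EPP completeness with $\rloc{R} \cap \loc{\trc} = \varnothing$ and parallel confluence (Lemma~\ref{lem:parallel-conf}). The only difference is that the paper's formal proof goes through the augmented Theorem~\ref{thm:partial-sound-plus}, which relates $\eppfork{C'}{\Omega'}$ to the intermediate system by $\lessthansim$ rather than $\lessthan$. That makes the value case immediate, since $\lessthansim$ reflects values and the system is already final, and it lets the first step of $\eppall{R}$ lift directly, so your per-location catch-up and monotonicity arguments for $\lessthan$ are not needed there.
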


\begin{proof}[Proof Sketch]
  Begin by applying Theorem~\ref{thm:partial-sound} to get~$C'$, $\Omega'$, $\Pi'$, $\Pi''$, and~$\trc$, as specified in the theorem.
  Since $\Pi \systemsteps \Pi''$, if it requires one or more steps, then~$\Pi$ steps and we are done.
  Otherwise, we get $\Pi = \Pi''$.
  Type soundness ensures either~$C'$ is a value or $\conf{C'}{\Omega'}$ can step.
  If~$C'$ is a value, then~$\eppfork{C'}{\Omega'}$ is final.
  Since~$\eppfork{C'}{\Omega'} \lessthan \Pi'$, then~$\Pi'$ can step to a final system, so~$\Pi = \Pi''$ can too.

  Finally, if $\conf{C'}{\Omega'} \step[{\raisebox{-1pt}[2pt]{$\scriptstyle R$}}] \conf{C''}{\Omega''}$,
  EPP completeness ensures that~$\Pi'$ can step with system label(s) corresponding to~$R$.
  However, condition~(\ref{li:sound:disjoint}) of Theorem~\ref{thm:partial-sound} guarantees that $\rloc{R} \cap \loc{t} = \varnothing$.
  That is, the step(s) from~$R$ involve a disjoint set of locations from those taking~$\Pi'$ to~$\Pi''$,
  so the parallel confluence lemma says that we can take them in either order to get the same result.
  Thus,~$\Pi = \Pi''$ can take the $R$ step(s) and is not deadlocked.
\end{proof}



\section{Related Work}
\label{sec:related-work}

While \lamfork is the first functional choreographic language with process forking, it builds on a rich literature which we review here.
First, we discuss the development of  functional choreographic programming, including process~polymorphism.
We then compare the approach for process spawning in \lamfork to previous (lower-order) choreographic languages.
Finally, we look at process spawning in multiparty session types, the main alternative to choreographic programming.

\subsection{Functional Choreographic Programming}
\label{sec:related-choreographies}

Since its inception \citep{CarboneM13,Montesi13}, the choreographic-programming~paradigm has advanced considerably.
Early work expanded on core features such as local computations, message passing, and recursion \cite[see e.g.,][]{CarboneMS14,Cruz-FilipeMP18,Cruz-FilipeM17,LaneseMZ13,CruzFilipeM17c},
but only allowed imperative and procedural computation.

Pirouette~\cite{HirschG22} and Chor$\lambda$~\cite{CruzFilipeGLMP22}---developed independently---were the first functional choreographic languages.
\citet{BatesKJSKN25} extended Chor$\lambda$ with multiply-located values, providing an alternative to synchronization messages.
\citet{GraversenHM24} introduced process polymorphism.
\citet{SamuelsonHC25} developed \lamqc---the language \lamfork primarily extends---to provide process\emph{-set} polymorphism and first-class location names for the first time.
This last feature was critical for the development of \lamfork.

To prove projection sound, Pirouette and \lamqc relied on global synchronization at every function boundary.
\citet{CruzFilipeGLMP23} provided an out-of-order semantics for Chor$\lambda$ where EPP is sound and complete,
but relied on fragile rewrite rules that fail in the presence of named recursive functions~\citep{SamuelsonHC25}.
\citet{PlyukhinQM25} introduced \emph{semilenient} evaluation for a subset of Chor$\lambda$ including recursive (choreographic) functions.
Semilenient evaluation does not rely on the fragile rewrite rules used by previous work,
yielding a clean resolution to the tension between allowing non-termination, avoiding global synchronization, and providing a sound choreographic semantics.
It is, however, not obvious how to generalize the principles of the semilenient strategy to support multiply-located computations.
For example, applying a simple generalization of the semilenient strategy to the choreography below leaves it stuck on the \Loop,
failing to capture the fact that \Bob will fully evaluate their projected program to 3 even though the send never occurs.
\[ \Alice.\Loop \seq \LetIn{\{\Alice,\Bob,\Client\}.x}{(\{\Alice,\Bob\}.2 \ChorSend[\Alice] \Client)}{\{\Alice,\Bob,\Client\}.(1 \LocalPlus x)} \]

We instead introduce \emph{divergence-weakened soundness},
a weaker result that holds for \lamfork's multiply-located computations without synchronization,
but remains sufficient to prove deadlock freedom.

\subsection{Process Spawning in Choreographies}
\label{sec:related-chor-spawning}

Two papers of which we are aware have previously considered process spawning in lower-order choreographies.
The first, by \citet{CarboneM13}, allows service threads to be declared at the top-level.
These threads then wait for a notification from a parent process to begin execution.
Thus the number of threads is statically chosen, and new ones cannot be dynamically spawned.

\citet{CruzFilipeM17c} provided a highly tailored calculus to implement parallel divide-and-conquer algorithms.
It was able to allow for truly dynamic process spawning,
and as a result provide processes like parallel merge sort, which the earlier work of \citet{CarboneM13} could not.
Moreover, it relies on a heavily restricted language: the only form of abstraction provided are (recursive) top-level process polymorphic functions,
and there is no notion of variable binding.
Parent processes can communicate the names of spawned threads to others, but process names are otherwise an entirely second-class notion.

\citet{CruzFilipeM17c} does use a \emph{behavioral}~type system to ensure that processes have been properly introduced prior to communicating.
However, because of the restrictive nature of their calculus, they do not have to worry about many of the problems addressed in this paper, including scoped processes, closures that capture dead processes, and the combination of spawning with general process-polymorphic functions.
Moreover, while parent processes can communicate the names of spawned threads, the lack of first-class process names means that their potential communication patterns are significantly restricted.
In contrast, this work uses a \emph{data}~type system to address the issues, like those mentioned previously, that emerge when combining process spawning with higher-order choreographies.

\subsection{Process Spawning in (Multiparty) Session Types}
\label{sec:related-spawning}

Concurrent programming has always had process spawning as a major feature~\cite{MilnerPW92,Milner80}, and thus it has always been prevalent in session types~\cite{Honda93,HondaVK98,GayV10,Wadler12,CairesP10}.
Traditionally, session types either do not guarantee deadlock freedom~\cite{Honda93,HondaVK98} or require that processes only communicate in an acyclic topology~\cite{CairesP10,Wadler12}.
\emph{Multiparty} session types address this deficit, but only for a particular set of communicating processes.
To allow for process spawning, they must create a new session which includes the new process, and then reason about communication orders between sessions.
This leads to complicated reasoning principles~\cite{JacobsBK22,CoppoDYP16,CoppoDPY13,BettiniCDDDY08}.
Recently, \citet{LeBrunFD25} considered multiparty session types with replication, which allows new processes to spawn in the same session.
However, the processes that can be spawned via replication are limited.
In particular, they do not allow paradigms such as parallel divide-and-conquer.


\section{Conclusion}
\label{sec:conclusion}

This work introduced \lamfork, the first functional choreographic language to support process forking.
\lamfork retains support for key features of \lamqc---its predecessor---including higher-order programming, process polymorphism, multiply-located computations, and first-class process names.
While this combination of features is powerful, it can introduce complex bugs,
such as killing a thread and then attempting to execute a function that closes over its name.
\lamfork prevents these bugs by integrating participant tracking into its type system.

\lamfork can model complex multiparty computations where arbitrarily many threads are spawned and killed, including a fork bomb.
Despite this, we retain the classic choreographic result that the projection of every well-typed choreography is deadlock-free,
yielding the first deadlock freedom proof for a choreography with multiply-located computations and no implicit synchronization.


\section*{Acknowledgments}

We would like to thank Rahul Krishnan for help editing.
Support for this research was provided by the University of Wisconsin--Madison
Office of the Vice Chancellor for Research with funding from the Wisconsin Alumni Research Foundation.

\bibliography{../bibliography/main}

\appendix
\section*{Appendices}
\section{Choreography Operational Semantics}
\label{sec:full-chor-semantics}
\subsection{Choreography Values}
\begin{syntax}
  \category[Choreography Values]{V}
  \alternative{\rho.v} \alternative{\Fun{\rho}{F}{X}{C}} \alternative{\TFunLoc{F}{\alpha}{C}}\\
  \alternative{(V_1,V_2)_\rho} \alternative{\Inl{\rho}{V}} \alternative{\Inr{\rho}{V}} \alternative{\Fold{\rho}{V}}
\end{syntax}

\subsection{Redices and Evaluation Contexts}
\begin{syntax}
  \category[Messages]{m}
  \alternative{v} \alternative{d}

  \category[Redices]{R}
  \alternative{\RDone{\rho}{e_1}{e_2}}
  \alternative{\RFun{R}}
  \alternative{\RArg{R}}
  \alternative{\RApp{\rho}}
  \alternative{\RTApp{\rho}}
  \alternative{\RUnfoldFold{\rho}} \\
  \alternative{\RPairL{R}}
  \alternative{\RPairR{R}}
  \alternative{\RFstPair{\rho}}
  \alternative{\RSndPair{\rho}}
  \alternative{\RCaseInl{\rho}}
  \alternative{\RCaseInr{\rho}} \\
  \alternative{\RLet{\rho}{v}}
  \alternative{\RLet{\rho}{t}}
  \alternative{\RSendV{L}{m}{\rho}}
  \alternative{\RFork{L_1}{L_2}{C}}
  \alternative{\RKill{L}} \\

  \category[Evaluation Contexts]{\eta}
  \alternative{\hole}
  \alternative{\eta~C}
  \alternative{V~\eta}
  \alternative{\eta~t}
  \alternative{\Fold{\rho}{\eta}}
  \alternative{\Unfold{\rho}{\eta}}\\
  \alternative{(\eta,C)_\rho}
  \alternative{(V,\eta)_\rho}
  \alternative{\Fst{\rho}{\eta}}
  \alternative{\Snd{\rho}{\eta}} \\
  \alternative{\Inl{\rho}{\eta}}
  \alternative{\Inr{\rho}{\eta}}
  \alternative{\Case{\rho}{\eta}{X}{C_1}{Y}{C_2}}\\
  \alternative{\LocalCase{\rho}{\eta}{x}{C_1}{y}{C_2}}\\
  \alternative{\LetIn{\rho.x \ty t_e}{\eta}{C_2}}
  \alternative{\LetIn{\rho.\alpha \knd \kappa}{\eta}{C_2}}\\
  \alternative{\eta \ChorSend[\ell] \rho}
  \alternative{\KillAfter{L}{\eta}}
\end{syntax}

\subsection{Projection of a Redex}
For a redex $R$, its projection $\epp{R}{L}$ to $L$ is a \emph{list} of network program labels.
We denote the empty list as $\epsilon$.

\begin{mathparpagebreakable}
  \epp{\RDone{\rho}{e_1}{e_2}}{L} = 
    \begin{cases}
      [\iota] & \ifText L \in \rho \\
      \epsilon & \owText
    \end{cases}
  \and
  \epp{\RFun{R}}{L} = \epp{R}{L}
  \and
  \epp{\RArg{R}}{L} = \epp{R}{L}
  \and
  \epp{\RApp{\rho}}{L} = 
  \begin{cases}
      [\iota] & \ifText L \in \rho \\
      \epsilon & \owText
    \end{cases}
  \and
  \epp{\RTApp{\rho}}{L} = 
  \begin{cases}
      [\iota ~,~ \iota] & \ifText L \in \rho \\
      \epsilon & \owText
    \end{cases}
  \and
  \epp{\RUnfoldFold{\rho}}{L} =
    \begin{cases}
      [\iota] & \ifText L \in \rho \\
      \epsilon & \owText
    \end{cases}
  \and
  \epp{\RPairL{R}}{L} = \epp{R}{L}
  \and
  \epp{\RPairR{R}}{L} = \epp{R}{L}
  \and
  \epp{\RFstPair{\rho}}{L} = 
    \begin{cases}
      [\iota] & \ifText L \in \rho \\
      \epsilon & \owText
    \end{cases}
  \and
  \epp{\RSndPair{\rho}}{L} = 
    \begin{cases}
      [\iota] & \ifText L \in \rho \\
      \epsilon & \owText
    \end{cases}
  \and
  \epp{\RCaseInl{\rho}}{L} = 
    \begin{cases}
      [\iota] & \ifText L \in \rho \\
      \epsilon & \owText
    \end{cases}
  \and
  \epp{\RCaseInr{\rho}}{L} = 
    \begin{cases}
      [\iota] & \ifText L \in \rho \\
      \epsilon & \owText
    \end{cases}
  \and
  \epp{\RLet{\rho}{v}}{L} = 
    \begin{cases}
      [\iota] & \ifText L \in \rho \\
      \epsilon & \owText
    \end{cases}
  \and
  \epp{\RLet{\rho}{t}}{L} = 
    \begin{cases}
      [\iota ~,~ \iota] & \ifText L \in \rho \\
      \epsilon & \owText
    \end{cases}
  \and
  \epp{\RSendV{L_1}{m}{\rho_2}}{L} = 
    \begin{cases}
      [\RSendNtwk{m}{\rho}] & \ifText L = L_1 \\
      [\RRecvNtwk{L_1}{m}] & \ifText L \neq L_1 \andText L \in \rho_2 \\
      \epsilon & \owText
    \end{cases}
  \and
  \epp{\RFork{L_1}{L_2}{C}}{L} = 
    \begin{cases}
      [\RForkNtwk{L_2}{E}] & \ifText L = L_1 \andText \epp{C}{L_2} = E \\
      \epsilon & \owText
    \end{cases}
  \and
  \epp{\RKill{L_1}}{L} = 
    \begin{cases}
        [\RExitNtwk] & \ifText L = L_1 \\
        \epsilon & \owText
      \end{cases}
\end{mathparpagebreakable}

Similarly the projection $\eppall{R}$ of a redex $R$ to all locations is a list of system labels.
We denote the concatenation of two lists $x$ and $y$ as $x \doubleplus y$.

\begin{mathparpagebreakable}
  \eppall{\RDone{\rho}{e_1}{e_2}} = [\iota_L \mid L \in \rho]
  \and
  \eppall{\RFun{R}} = \eppall{R}
  \and
  \eppall{\RArg{R}} = \eppall{R}
  \and
  \eppall{\RApp{\rho}} = [\iota_L \mid L \in \rho]
  \and
  \eppall{\RTApp{\rho}} = [\iota_L \mid L \in \rho] \doubleplus [\iota_L \mid L \in \rho]
  \and
  \eppall{\RUnfoldFold{\rho}} = [\iota_L \mid L \in \rho]
  \and
  \eppall{\RPairL{R}} = \eppall{R}
  \and
  \eppall{\RPairR{R}} = \eppall{R}
  \and
  \eppall{\RFstPair{\rho}} = [\iota_L \mid L \in \rho]
  \and
  \eppall{\RSndPair{\rho}} = [\iota_L \mid L \in \rho]
  \and
  \eppall{\RCaseInl{\rho}} = [\iota_L \mid L \in \rho]
  \and
  \eppall{\RCaseInr{\rho}} = [\iota_L \mid L \in \rho]
  \and
  \eppall{\RLet{\rho}{v}} = [\iota_L \mid L \in \rho]
  \and
  \eppall{\RLet{\rho}{t}} = [\iota_L \mid L \in \rho] \doubleplus [\iota_L \mid L \in \rho]
  \and
  \eppall{\RSendV{L_1}{m}{\rho_2}} = [L_1.m \sendsto \rho_2]
  \and
  \eppall{\RFork{L_1}{L_2}{C}} = 
    \begin{cases}
      [\RForkSys{L_1}{L_2}{E}] & \ifText \epp{C}{L_2} = E \\
      \epsilon & \owText
    \end{cases}
  \and
  \eppall{\RKill{L_1}} = [\RKillSys{L_1}]
\end{mathparpagebreakable}

\subsection{Redex Blocked Locations}
\begin{mathparpagebreakable}
  \rloc{\RDone{\rho}{e_1}{e_2}} = \rho \and
  \rloc{\RFun{R}} = \rloc{R} \and
  \rloc{\RArg{R}} = \rloc{R} \and
  \rloc{\RApp{\rho}} = \rho \and
  \rloc{\RTApp{\rho}} = \rho \and
  \rloc{\RUnfoldFold{\rho}} = \rho \and
  \rloc{\RPairL{R}} = \rloc{R} \and
  \rloc{\RPairR{R}} = \rloc{R} \and
  \rloc{\RFstPair{\rho}} = \rho \and
  \rloc{\RSndPair{\rho}} = \rho \and
  \rloc{\RCaseInl{\rho}} = \rho \and
  \rloc{\RCaseInr{\rho}} = \rho \and
  \rloc{\RLet{\rho}{v}} = \rho \and
  \rloc{\RLet{\rho}{t}} = \rho \and
  \rloc{\RSendV{L}{m}{\rho}} = \{L\} \cup \rho \and
  \rloc{\RFork{L_1}{L_2}{C}} = \{L_1\} \and
  \rloc{\RKill{L}} = \{L\} \and
\end{mathparpagebreakable}

\subsection{Choreography Blocked Locations}
\begin{mathparpagebreakable}
  \cloc{X} =\varnothing
  \and
  \cloc{\rho.e} = \rho
  \and
  \cloc{\Fun{\rho}{F}{X}{C}} = \varnothing
  \and
  \cloc{C_1 \appchor{\rho} C_2} = \cloc{C_1} \cup \cloc{C_2} \cup \rho
  \and
  \cloc{\TFunLoc{F}{\alpha}{C}} = \varnothing
  \and
  \cloc{C \appchor{\rho} t} = \cloc{C} \cup \rho
  \and
  \cloc{\Fold{\rho}{C}} = \cloc{C}
  \and
  \cloc{\Unfold{\rho}{C}} = \cloc{C} \cup \rho
  \and
  \cloc{(C_1,C_2)_\rho} = \cloc{C_1} \cup \cloc{C_2}
  \and
  \cloc{\Fst{\rho}{C}} = \cloc{C} \cup \rho
  \and
  \cloc{\Snd{\rho}{C}} = \cloc{C} \cup \rho
  \and
  \cloc{\Inl{\rho}{C}} = \cloc{C}
  \and
  \cloc{\Inr{\rho}{C}} = \cloc{C}
  \and
  \cloc{\Case{\rho}{C}{X}{C_1}{Y}{C_2}} = \cloc{C} \cup \cloc{C_1} \cup \cloc{C_2} \cup \rho
  \and
  \cloc{\LetIn{\rho.x}{C_1}{C_2}} = \cloc{C_1} \cup \cloc{C_2} \cup \rho
  \and
  \cloc{\LetIn{\rho.\alpha \knd \kappa}{C_1}{C_2}} = \cloc{C_1} \cup (\cloc{C_2} \setminus \alpha) \cup \rho
  \and
  \cloc{C \ChorSend[\ell] \rho} = \cloc{C} \cup \{\ell\} \cup \rho 
  \and
  \cloc{\syncs{\ell}{d}{\rho} \seq C} = \{\ell\} \cup \rho \cup \cloc{C}
  \and
  \cloc{\Fork{(\alpha,x)}{\ell}{C}} = \{\ell\} \cup (\cloc{C} \setminus \alpha)
  \and
  \cloc{\KillAfter{L}{C}} = \{L\} \cup \cloc{C}
\end{mathparpagebreakable}

\subsection{Redex for an Evaluation Context}
If $\eta$ is an evaluation context and $R$ is a redex, we define $\eta[R]$ to be the redex which corresponds to making the reduction given by $R$ in the context $\eta$.
\begin{mathparpagebreakable}
    \ctxredex{\hole} \defeq R
    \and
    \ctxredex{(\eta \appchor{\rho} C)} \defeq \RFun{\ctxredex{\eta}}
    \and
    \ctxredex{(V \appchor{\rho} \eta)} \defeq \RArg{\ctxredex{\eta}}
    \and
    \ctxredex{(\eta~t)_\rho} \defeq R
    \and
    \ctxredex{(\Fold{\rho}{\eta})} = \ctxredex{(\Unfold{\rho}{\eta})} \defeq R
    \and
    \ctxredex{(\eta,C)_\rho} \defeq \RPairL{\ctxredex{\eta}}
    \and
    \ctxredex{(V,\eta)_\rho} \defeq \RPairR{\ctxredex{\eta}}
    \and
    \ctxredex{(\Fst{\rho}{\eta})} = \ctxredex{(\Snd{\rho}{\eta})} \defeq \ctxredex{\eta}
    \and
    \ctxredex{(\Inl{\rho}{\eta})} = \ctxredex{(\Inr{\rho}{\eta})} \defeq \ctxredex{\eta}
    \and
    \ctxredex{(\Case{\rho}{\eta}{X}{C_1}{Y}{C_2})} \defeq \ctxredex{\eta}
    \and
    \ctxredex{(\eta \ChorSend[\ell] \rho)} \defeq \ctxredex{\eta}
    \and
    \ctxredex{(\LetIn{\rho.x}{\eta}{C})} = \ctxredex{(\LetIn{\alpha \knd \kappa}{\eta}{C})} \defeq \ctxredex{\eta}
    \and
    \ctxredex{\KillAfter{L}{\eta}} \defeq \ctxredex{\eta}
\end{mathparpagebreakable}

\subsection{Location Set Operations}
\label{sec:set-relations}
Here we define the containment~$\ell \in \rho$, disjointness~$\rho_1 \cap \rho_2 = \varnothing$, and subset $\rho_1 \subseteq \rho_2$ relations;
and the set-difference function~$\rho_1 \setminus \rho_2$,
with special care given to how they are defined when the locations and sets in question are open terms.

The principle for how the containment and subset relations behave is that of a modality of \emph{necessity}.
For instance, the containment relation~$\ell \in \rho$ holds only if the location~$\ell$ is an element of~$\rho$ for \emph{all} values that their variables could resolve to.
Note here that the metavariable~$\ell$ stands for either a type variable~$\alpha$ or a concrete location~$L \in \Locations$, and the metavariable~$\rho$ stands for any location set, including possibly a type variable.
\begin{mathparpagebreakable}
  \infer{~}{\ell \in \{\ell\}}
  \and
  \infer{\ell \in \rho_1}{\ell \in \rho_1 \cup \rho_2}
  \and
  \infer{\ell \in \rho_2}{\ell \in \rho_1 \cup \rho_2}
  \and
  \infer{~}{\ell \in \anyLoc}
\end{mathparpagebreakable}

The disjointness relation is defined as expected:
\[(\rho_1 \cap \rho_2 = \varnothing) \defeq \forall \ell \ldotp \neg (\ell \in \rho_1 \wedge \ell \in \rho_2) \]

To define the subset relation, we first note that we cannot use the na\"{i}ve definition in terms of the containment relation.
That is, $\forall \ell \ldotp \ell \in \rho_1 \Rightarrow \ell \in \rho_2$ would not serve as a correct definition for $\rho_1 \subseteq \rho_2$ in the presence of type variables.
Specifically, because~$\ell \notin \alpha$ for every location~$\ell$, with this definition we would have that~$\alpha \subseteq \rho$ for every set~$\rho$.
The relation should have the modality of necessity (i.e., be preserved under substitution),
but this example shows that this is not the case with the na\"{i}ve definition.
Instead, the subset relation must be defined inductively as follows.
\begin{mathparpagebreakable}
  \infer{~}{\varnothing \subseteq \rho}
  \and
  \infer{~}{\alpha \subseteq \alpha}
  \and
  \infer{\ell \in \rho}{\{\ell\} \subseteq \rho}
  \and
  \infer{\rho \subseteq \rho_1}{\rho \subseteq \rho_1 \cup \rho_2}
  \and
  \infer{\rho \subseteq \rho_2}{\rho \subseteq \rho_1 \cup \rho_2}
  \and
  \infer{\rho_1 \subseteq \rho \\
  \rho_2 \subseteq \rho}
  {\rho_1 \cup \rho_2 \subseteq \rho}
  \and
  \infer{~}{\rho \subseteq \anyLoc}
\end{mathparpagebreakable}

Finally, the set subtraction function~$\rho_1 \setminus \rho_2$ is defined inductively on~$\rho_1$---taking advantage of the relations defined above---to remove all components from~$\rho_1$ that appear in~$\rho_2$.
\begin{mathparpagebreakable}
  \varnothing \setminus \rho = \varnothing
  \and
  \alpha \setminus \rho =
    \begin{cases}
      \varnothing & \alpha \subseteq \rho \\
      \alpha & \alpha \not\subseteq \rho
    \end{cases}
  \and
  \{\ell\} \setminus \rho =
    \begin{cases}
      \varnothing & \ell \in \rho \\
      \{\ell\} & \ell \notin \rho
    \end{cases}
  \and
  (\rho_1 \cup \rho_2) \setminus \rho = (\rho_1 \setminus \rho) \cup (\rho_2 \setminus \rho)
  \and
  \anyLoc \setminus \rho =
    \begin{cases}
        \varnothing & \anyLoc \subseteq \rho \\
        \anyLoc & \anyLoc \not\subseteq \rho
      \end{cases}
\end{mathparpagebreakable}

\subsection{Choreography Operational Semantics}
\label{sec:full-chor-sem}
\begin{mathparpagebreakable}[\rulefiguresize]
  \SetRuleLabelLoc{lab}
  \CCtxRule
  \and
  \CDoneRule
  \and
  \CAppRule
  \and
  \CTAppRule
  \and
  \CUnfoldFoldRule
  \and
  \CFstPairRule
  \and
  \CSndPairRule
  \and
  \CCaseInlRule
  \and
  \CCaseInrRule
  \and
  \CLocalCaseInlRule
  \and
  \CLocalCaseInrRule
  \and
  \CLetVRule
  \and
  \CTyLetVRule
  \and
  \CSendVRule
  \and
  \CSyncRule
  \and
  \CSyncIRule
  \and
  \CCaseIRule
  \and
  \CLocalCaseIRule
  \and
  \CAppIRule
  \and
  \CPairIRule
  \and
  \CLetIRule
  \and
  \CTyLetIRule
  \and
  \CForkRule
  \and
  \CForkIRule
  \and
  \CKillRule
  \and
  \CKillIRule
\end{mathparpagebreakable}


\section{Static Semantics}
\label{sec:full-types}
\subsection[Lambda-Fork Kinding System]{\lamfork Kinding System}
\label{sec:full_kinds}
First we note that, in order to prevent kind dependency, the kinding
context should be split into two contexts---$\Gamma_\ell$ for locations and location sets of kind $\kappa_\ell \in \{\locKnd, \setKnd, \finsetKnd\}$,
and $\Gamma$ for local and program kinds.
We have elided this detail from the presentation of the kinding and typing rules for simplicity,
but fully address the details in Appendix~\ref{sec:subst_lemmas} and subsequent appendices.
Our kinding system is designed so that the subkinding relationship~$\setKnd <: \finsetKnd$
and relevant subkinding rule---while not explicitly given---is admissible (see Lemma~\ref{lem:loc-set-subkind}).

\begin{mathparpagebreakable}[\rulefiguresize]
  \KVarRule
  \and
  \KSubVarRule
  \and
  \KLocRule
  \and
  \KSngRule
  \and
  \KSngFinRule
  \and
  \KUnionRule
  \and
  \KUnionFinRule
  \and
  \KAnySetRule
  \and
  \KLocalRule
  \and
  \KAtRule
  \and
  \KArrowRule
  \and
  \KProdRule
  \and
  \KSumRule
  \and
  \KRecRule
  \and
  \KAllLocRule
  \and
  \KAllRule
\end{mathparpagebreakable}

\subsection{Context Projection}
For a context $\ctx = \Gamma_\ell;\Gamma;\Delta_e;\Delta$, define its projection~$\proj{\ctx}{\rho}$
to a (possibly open) location set~$\rho$ to be the local context~$\proj{\ctx}{\rho} = \Gamma_\ell; \proj{\Gamma}{\rho}; \proj{\Delta_e}{\rho}$,
where~$\proj{\Gamma}{\rho}$ and~$\proj{\Delta_e}{\rho}$ are defined as follows.
\begin{mathparpagebreakable}[\rulefiguresize]
  \proj{\cdot}{\rho} \defeq \cdot
  \and
  \proj{(\Gamma, \alpha \knd \kappa_{\rho'})}{\rho} \defeq \proj{\Gamma}{\rho}
  \and
  \proj{(\Gamma, \alpha \knd \kappa_e)}{\rho} \defeq \proj{\Gamma}{\rho}, \alpha \knd \kappa_e
  \and
  \proj{(\Delta_e, \rho'.x \ty t_e)}{\rho} \defeq
  \begin{cases}
    \proj{\Delta_e}{\rho}, x \ty t_e & \rho \subseteq \rho' \\
    \proj{\Delta_e}{\rho} & \rho \not\subseteq \rho'
  \end{cases}
\end{mathparpagebreakable}

\subsection[Lambda-Fork Type System]{\lamfork Type System}
\label{sec:full_chor_types}
In these rules we denote $\ctx = \Gamma;\Delta_e;\Delta$ for brevity,
and abuse notation to add variables to and use the kinding judgment on the relevant sub-contexts of $\ctx$
as appropriate.
In the subsequent sections we may use the shorthand $\tloc{\tau}$ to denote the set $\rho$ of locations such that $\ctx \proves \tau \knd \tyknd{\rho}$.

\begin{mathparpagebreakable}[\rulefiguresize]
  \TVarRule
  \and
  \TDoneRule
  \and
  \TFunRule
  \and
  \TAppRule
  \and
  \TTFunLocRule
  \and
  \TTAppLocRule
  \and
  \TTFunRule
  \and
  \TTAppRule
  \and
  \TPairRule
  \and
  \TFstRule
  \and
  \TSndRule
  \and
  \TInlRule
  \and
  \TInrRule
  \and
  \TCaseRule
  \and
  \TLocalCaseRule
  \and
  \TFoldRule
  \and
  \TUnfoldRule
  \and
  \TLetLocalRule
  \and
  \TLetLocRule
  \and
  \TLetLocSetRule
  \and
  \TSendRule
  \and
  \TSyncRule
  \and
  \TForkRule
\end{mathparpagebreakable}

\subsection{Augmented Type System}
\label{sec:full_chor_wellscoped}
\begin{mathparpagebreakable}[\rulefiguresize]
  \SVarRule
  \and
  \SDoneRule
  \and
  \SFunRule
  \and
  \SAppRule
  \and
  \STFunRule
  \and
  \STAppRule
  \and
  \STFunLocRule
  \and
  \STAppLocRule
  \and
  \SPairRule
  \and
  \SFstRule
  \and
  \SSndRule
  \and
  \SInlRule
  \and
  \SInrRule
  \and
  \SCaseRule
  \and
  \SLocalCaseRule
  \and
  \SFoldRule
  \and
  \SUnfoldRule
  \and
  \SLetLocalRule
  \and
  \SLetLocRule
  \and
  \SLetLocSetRule
  \and
  \SSendRule
  \and
  \SSyncRule
  \and
  \SForkRule
  \and
  \SKillRule
\end{mathparpagebreakable}


\section{Network Language}
\subsection{Network Language Expressions}
\label{sec:full-ntwk-syntax}
\begin{syntax}
  \category[Network Program]{E}
  \alternative{X} \alternative{\NtwkUnit} \alternative{\Ret{e}} \alternative{E_1 \NtwkSeq E_2} \\
  \alternative{\NtwkFun{F}{X}{E}} \alternative{E_1~E_2} \alternative{\NtwkTFun{F}{\alpha}{E}} \alternative{E~t} \\
  \alternative{(E_1,E_2)} \alternative{\NtwkFst{E}} \alternative{\NtwkSnd{E}}\\
  \alternative{\NtwkInl{E}} \alternative{\NtwkInr{E}} \alternative{\NtwkCase{E}{X}{E_1}{Y}{E_2}}\\
  \alternative{\NtwkLocalCase{E}{x}{E_1}{y}{E_2}}\\
  \alternative{\NtwkFold{E}} \alternative{\NtwkUnfold{E}}\\
  \alternative{\SendTo{E}{\rho}} \alternative{\RecvFrom{\ell}} \\
  \alternative{\NtwkLetIn{x}{E_1}{E_2}} \alternative{\NtwkLetIn{\alpha \knd \kappa}{E_1}{E_2}} \\
  \alternative{\ChooseFor{d}{\ell}{E}} \\
  \alternative{\AllowChoice{\ell}{{E_1}_\bot}{{E_2}_\bot}} \\
  \alternative{\AmIIn{\rho}{E_1}{E_2}} \\
  \alternative{\NtwkFork{(\alpha,x)}{E_1}{E_2}}
  \alternative{\NtwkExit}

  \category[Network Values]{V}
  \alternative{X}
  \alternative{\NtwkUnit} \alternative{\Ret{v}} \alternative{\NtwkFun{F}{X}{E}}
  \alternative{\NtwkTFun{F}{\alpha}{E}}\\
  \alternative{(V_1,V_2)} \alternative{\NtwkInl{V}} \alternative{\NtwkInr{V}} \alternative{\NtwkFold{V}}
\end{syntax}


\subsection{Transition Labels and Evaluation Contexts}
\begin{syntax}
  \category[Transition Labels]{l}
  \alternative{\iota}
  \alternative{\RSendNtwk{m}{\rho}}
  \alternative{\RRecvNtwk{L}{m}}
  \alternative{\RForkNtwk{L}{E}}
  \alternative{\RExitNtwk}
  \\
  \category[Evaluation Contexts]{\eta}
  \alternative{\hole}
  \alternative{\eta \NtwkSeq E}
  \alternative{\eta~E}
  \alternative{V~\eta}
  \alternative{\eta~t}
  \alternative{\NtwkFold{\eta}}
  \alternative{\NtwkUnfold{\eta}}\\
  \alternative{(\eta,E)}
  \alternative{(V,\eta)}
  \alternative{\NtwkFst{\eta}}
  \alternative{\NtwkSnd{\eta}}
  \alternative{\NtwkInl{\eta}}
  \alternative{\NtwkInr{\eta}}\\
  \alternative{\NtwkCase{\eta}{X}{E_1}{Y}{E_2}}\\
  \alternative{\NtwkLocalCase{\eta}{x}{E_1}{y}{E_2}}\\
  \alternative{\SendTo{\eta}{\rho}}
  \alternative{\NtwkLetIn{x}{\eta}{E}}
  \alternative{\NtwkLetIn{\alpha \knd \kappa}{\eta}{E}}
\end{syntax}

\subsection{Network Language Operational Semantics}
\label{sec:full-ntwk-sem}
\begin{mathparpagebreakable}[\rulefiguresize]
  \NCtxRule
  \and
  \NRetRule
  \and
  \NSeqRule
  \and
  \NAppRule
  \and
  \NTAppRule
  \and
  \NUnfoldFoldRule
  \and
  \NFstPairRule
  \and
  \NSndPairRule
  \and
  \NCaseInlRule
  \and
  \NCaseInrRule
  \and
  \NLocalCaseInlRule
  \and
  \NLocalCaseInrRule
  \and
  \NLetRule
  \and
  \NTyLetRule
  \and
  \NSendRule
  \and
  \NRecvRule
  \and
  \NChooseRule
  \and
  \NAllowLRule
  \and
  \NAllowRRule
  \and
  \NIAmInRule
  \and
  \NIAmNotInRule
  \and
  \NForkRule
  \and
  \NExitRule
\end{mathparpagebreakable}


\section{Compilation}
\subsection{Network Program Merging}
\label{sec:ntwk-merge-def}
We show the patterns for which $E_1 \ntwkmerge E_2$ is defined; if there is no matching pattern, then $E_1 \ntwkmerge E_2$ is undefined.
\begingroup
\allowdisplaybreaks
\rulefiguresize
\begin{align*}
  \NtwkNone \ntwkmerge \NtwkNone &\defeq \NtwkNone
  \\[\vertrulegap]
  \NtwkNone \ntwkmerge E_2 &\defeq E_2
  \\[\vertrulegap]
  E_1 \ntwkmerge \NtwkNone &\defeq E_1
  \\[\vertrulegap]
  X \ntwkmerge X &\defeq X
  \\[\vertrulegap]
  \NtwkUnit \ntwkmerge \NtwkUnit &\defeq \NtwkUnit
  \\[\vertrulegap]
  \Ret{e} \ntwkmerge \Ret{e} &\defeq \Ret{e}
  \\[\vertrulegap]
  (E_{1,1} \NtwkSeq E_{1,2}) \ntwkmerge (E_{2,1} \NtwkSeq E_{2,2}) &\defeq E_{1,1} \ntwkmerge E_{2,1} \NtwkSeq E_{1,2} \ntwkmerge E_{2,2}
  \\[\vertrulegap]
  (\NtwkFun{F}{X}{E_1}) \ntwkmerge (\NtwkFun{F}{X}{E_2}) &\defeq \NtwkFun{F}{X}{(E_1 \ntwkmerge E_2)}
  \\[\vertrulegap]
  (E_{1,1}~E_{1,2}) \ntwkmerge (E_{2,1}~E_{2,2}) &\defeq (E_{1,1} \ntwkmerge E_{2,1})~(E_{1,2} \ntwkmerge E_{2,2})
  \\[\vertrulegap]
  (\NtwkTFun{F}{\alpha}{E_1}) \ntwkmerge (\NtwkTFun{F}{\alpha}{E_2}) &\defeq \NtwkTFun{F}{\alpha}{(E_1 \ntwkmerge E_2)}
  \\[\vertrulegap]
  (E_1~t) \ntwkmerge (E_2~t) &\defeq (E_1 \ntwkmerge E_2)~t
  \\[\vertrulegap]
  (\NtwkFold{E_1}) \ntwkmerge (\NtwkFold{E_2}) &\defeq \NtwkFold{(E_1 \ntwkmerge E_2)}
  \\[\vertrulegap]
  (\NtwkUnfold{E_1}) \ntwkmerge (\NtwkUnfold{E_2}) &\defeq \NtwkUnfold{(E_1 \ntwkmerge E_2)}
  \\[\vertrulegap]
  (E_{1,1},E_{1,2}) \ntwkmerge (E_{2,1},E_{2,2}) &\defeq ((E_{1,1} \ntwkmerge E_{2,1}),(E_{1,2} \ntwkmerge E_{2,2}))
  \\[\vertrulegap]
  (\NtwkFst{E_1}) \ntwkmerge (\NtwkFst{E_2}) &\defeq \NtwkFst{(E_1 \ntwkmerge E_2)}
  \\[\vertrulegap]
  (\NtwkSnd{E_1}) \ntwkmerge (\NtwkSnd{E_2}) &\defeq \NtwkSnd{(E_1 \ntwkmerge E_2)}
  \\[\vertrulegap]
  (\NtwkInl{E_1}) \ntwkmerge (\NtwkInl{E_2}) &\defeq \NtwkInl{(E_1 \ntwkmerge E_2)}
  \\[\vertrulegap]
  (\NtwkInr{E_1}) \ntwkmerge (\NtwkInr{E_2}) &\defeq \NtwkInr{(E_1 \ntwkmerge E_2)}
  \\[\vertrulegap]
  \left(\NtwkCase*{E_{1,1}}{X}{E_{1,2}}{Y}{E_{1,3}}\right) \ntwkmerge \left(\NtwkCase*{E_{2,1}}{X}{E_{2,2}}{Y}{E_{2,3}}\right) &\defeq \NtwkCase*{(E_{1,1} \ntwkmerge E_{2,1})}{X}{E_{1,2} \ntwkmerge E_{2,2}}{Y}{E_{1,3} \ntwkmerge E_{2,3}}
  \\[\vertrulegap]
  \left(\NtwkLocalCase*{E_{1,1}}{x}{E_{1,2}}{y}{E_{1,3}}\right) \ntwkmerge \left(\NtwkLocalCase*{E_{2,1}}{x}{E_{2,2}}{y}{E_{2,3}}\right) &\defeq \NtwkLocalCase*{(E_{1,1} \ntwkmerge E_{2,1})}{x}{E_{1,2} \ntwkmerge E_{2,2}}{y}{E_{1,3} \ntwkmerge E_{2,3}}
  \\[\vertrulegap]
  (\NtwkLetIn{x}{E_{1,1}}{E_{1,2}}) \ntwkmerge (\NtwkLetIn{x}{E_{2,1}}{E_{2,2}}) &\defeq \NtwkLetIn{x}{(E_{1,1} \ntwkmerge E_{2,1})}{(E_{1,2} \ntwkmerge E_{2,2})}
  \\[\vertrulegap]
  (\NtwkLetIn{\alpha \knd \kappa}{E_{1,1}}{E_{1,2}}) \ntwkmerge (\NtwkLetIn{\alpha \knd \kappa}{E_{2,1}}{E_{2,2}}) &\defeq \NtwkLetIn{\alpha \knd \kappa}{(E_{1,1} \ntwkmerge E_{2,1})}{(E_{1,2} \ntwkmerge E_{2,2})}
  \\[\vertrulegap]
  (\SendTo{E_1}{\rho}) \ntwkmerge (\SendTo{E_2}{\rho}) &\defeq \SendTo{(E_1 \ntwkmerge E_2)}{\rho}
  \\[\vertrulegap]
  (\RecvFrom{\ell}) \ntwkmerge (\RecvFrom{\ell}) &\defeq \RecvFrom{\ell}
  \\[\vertrulegap]
  (\ChooseFor{d}{\ell}{E_1}) \ntwkmerge (\ChooseFor{d}{\ell}{E_2}) &\defeq \ChooseFor{d}{\ell}{(E_1 \ntwkmerge E_2)}
  \\[\vertrulegap]
  \left(\AllowChoice*{\ell}{E_{1,1}}{E_{1,2}}\right) \ntwkmerge \left(\AllowChoice*{\ell}{E_{2,1}}{E_{2,2}}\right) &\defeq \AllowChoice*{\ell}{E_{1,1} \ntwkmerge E_{1,2}}{E_{2,1} \ntwkmerge E_{2,2}}
  \\[\vertrulegap]
  \left(\AmIIn*{\rho}{E_{1,1}}{E_{1,2}}\right) \ntwkmerge \left(\AmIIn*{\rho}{E_{2,1}}{E_{2,2}}\right) &\defeq \AmIIn*{\rho}{(E_{1,1} \ntwkmerge E_{2,1})}{(E_{1,2} \ntwkmerge E_{2,2})}
  \\[\vertrulegap]
  \left(\NtwkFork*{(\alpha,x)}{E_{1,1}}{E_{1,2}}\right) \ntwkmerge \left(\NtwkFork*{(\alpha,x)}{E_{2,1}}{E_{2,2}}\right) &\defeq \NtwkFork*{(\alpha,x)}{E_{1,1} \ntwkmerge E_{2,1}}{(E_{1,2} \ntwkmerge E_{2,2})}
  \\[\vertrulegap]
  \NtwkExit \ntwkmerge \NtwkExit &\defeq \NtwkExit
\end{align*}
\endgroup


\subsection{Endpoint Projection}
\label{sec:proj-def}
Note that $\AmI{\ell}{E_1}{E_2}$ is shorthand for $\AmIIn{\{\ell\}}{E_1}{E_2}$.
\begingroup
\allowdisplaybreaks
\rulefiguresize
\begin{align*}
  \epp{X}{L} & = X
  \\[\vertrulegap]
  \epp{\rho.e}{L} & =
    \begin{cases}
      \Ret{e} & \ifText L \in \rho \\
      \NtwkUnit & \owText
    \end{cases}
  \\[\vertrulegap]
  \epp{\Fun{\rho}{F}{X}{C}}{L} & =
    \begin{cases}
      \NtwkFun{F}{X}{\epp{C}{L}} & \ifText L \in \rho \\
      \NtwkUnit & \ifText L \notin \rho \andText \epp{C}{L} \neq \Undef \\
      \Undef & \owText
    \end{cases}
  \\[\vertrulegap]
  \epp{C_1 \appchor{\rho} C_2}{L} & =
    \begin{cases}
      \epp{C_1}{L}~\epp{C_2}{L} & \ifText L \in \rho \\
      \epp{C_1}{L} \seqfun \epp{C_2}{L} \seqfun \NtwkUnit & \owText
    \end{cases}
  \\[\vertrulegap]
  \epp{\TFunLoc{F}{\alpha \knd \locKnd}{C}}{L} & = \NtwkTFun*{F}{\alpha}{\AmI*{\alpha}{\epp{\subst{C}{\alpha}{L}}{L}}{\epp{C}{L}}}
  \\[\vertrulegap]
  \epp{\TFunLoc{F}{\alpha \knd \setKnd}{C}}{L} & =
  \begin{cases}
    \NtwkTFun*{F}{\alpha}{\AmIIn*{\alpha}{\epp{\subst{C}{\alpha}{\{L\} \cup \alpha}}{L}}{\epp{C}{L}}} & \ifText L \in \rho \\
    \NtwkUnit & \hspace*{-3em}\begin{array}{@{}l@{}}
      \ifText L \notin \rho \andText \\
      \epp{C}{L}, \epp{\subst{C}{\alpha}{\{L\} \cup \alpha}}{L} \neq \Undef
    \end{array} \\
    \Undef & \hspace*{-3em}\owText
  \end{cases}
  \\[\vertrulegap]
  \epp{\TFunLoc{F}{\alpha \knd \kappa}{C}}{L} & =
  \begin{cases}
    \NtwkTFun{F}{\alpha}{\epp{C}{L}} & \ifText L \in \rho \\
    \NtwkUnit & \ifText L \notin \rho \andText \epp{C}{L} \neq \Undef \\
    \Undef & \owText
  \end{cases}
  \\[\vertrulegap]
  \epp{C \appchor{\rho} t}{L} &=
    \begin{cases}
      \epp{C}{L}~t & \ifText L \in \rho \\
      \epp{C}{L} \seqfun \NtwkUnit & \owText
    \end{cases}
  \\[\vertrulegap]
  \epp{\Fold{\rho}{C}}{L} & =
    \begin{cases}
      \NtwkFold{\epp{C}{L}} & \ifText L \in \rho \\
      \epp{C}{L} & \owText
    \end{cases}
  \\[\vertrulegap]
  \epp{\Unfold{\rho}{C}}{L} & =
    \begin{cases}
      \NtwkUnfold{\epp{C}{L}} & \ifText L \in \rho \\
      \epp{C}{L} \seqfun \NtwkUnit & \owText
    \end{cases}
  \\[\vertrulegap]
  \epp{(C_1,C_2)_\rho}{L} & =
    \begin{cases}
      (\epp{C_1}{L},\epp{C_2}{L}) & \ifText L \in \rho \\
      \epp{C_1}{L} \seqfun \epp{C_2}{L} & \owText
    \end{cases}
  \\[\vertrulegap]
  \epp{\Fst{\rho}{C}}{L} & =
    \begin{cases}
      \NtwkFst{\epp{C}{L}} & \ifText L \in \rho \\
      \epp{C}{L} \seqfun \NtwkUnit & \owText
    \end{cases}
  \\[\vertrulegap]
  \epp{\Snd{\rho}{C}}{L} & =
    \begin{cases}
      \NtwkSnd{\epp{C}{L}} & \ifText L \in \rho \\
      \epp{C}{L} \seqfun \NtwkUnit & \owText
    \end{cases}
  \\[\vertrulegap]
  \epp{\Inl{\rho}{C}}{L} & =
    \begin{cases}
      \NtwkInl{\epp{C}{L}} & \ifText L \in \rho \\
      \epp{C}{L} & \owText
    \end{cases}
  \\[\vertrulegap]
  \epp{\Inr{\rho}{C}}{L} & =
    \begin{cases}
      \NtwkInr{\epp{C}{L}} & \ifText L \in \rho \\
      \epp{C}{L} & \owText
    \end{cases}
  \\[\vertrulegap]
  \epp{\Case*{\rho}{C}{X}{C_1}{Y}{C_2}}{L} & =
    \begin{cases}
      \NtwkCase*{\epp{C}{L}}{X}{\epp{C_1}{L}}{Y}{\epp{C_2}{L}} & \ifText L \in \rho \\
      \epp{C}{L} \seqfun \epp{C_1}{L} \ntwkmerge \epp{C_2}{L} & \ifText L \notin \rho \andText X \notin \fv{\epp{C_1}{L}} \andText Y \notin \fv{\epp{C_2}{L}} \\
      \NtwkNone & \owText
    \end{cases}
    \\[\vertrulegap]
  \epp{\LocalCase*{\rho}{C}{x}{C_1}{y}{C_2}}{L} & =
    \begin{cases}
      \NtwkLocalCase*{\epp{C}{L}}{x}{\epp{C_1}{L}}{y}{\epp{C_2}{L}} & \ifText L \in \rho \\
      \epp{C}{L} \seqfun \epp{C_1}{L} \ntwkmerge \epp{C_2}{L} & \ifText L \notin \rho \andText x \notin \fv{\epp{C_1}{L}} \andText y \notin \fv{\epp{C_2}{L}} \\
      \NtwkNone & \owText
    \end{cases}
  \\[\vertrulegap]
  \epp{\LetIn{\rho.x \ty t_e}{C_1}{C_2}}{L} & =
  \begin{cases}
    \NtwkLetIn{x}{\epp{C_1}{L}}{\epp{C_2}{L}} & \ifText L \in \rho\\
    \epp{C_1}{L} \seqfun \epp{C_2}{L} & \ifText L \notin \rho \andText x \notin \fv{\epp{C_2}{L}}\\
    \NtwkNone & \owText
  \end{cases}
  \\[\vertrulegap]
  \epp{\LetIn{\rho.\alpha \knd *_e}{C_1}{C_2}}{L} & =
  \begin{cases}
    \NtwkLetIn{\alpha}{\epp{C_1}{L}}{\epp{C_2}{L}} & \ifText L \in \rho\\
    \epp{C_1}{L} \seqfun \epp{C_2}{L} & \ifText L \notin \rho \andText \alpha \notin \fv{\epp{C_2}{L}}\\
    \NtwkNone & \owText
  \end{cases}
  \\[\vertrulegap]
  \addtocounter{numlevels}{1}
  \epp{\LetIn{\rho.\alpha \knd \locKnd}{C_1}{C_2}}{L} & =
  \begin{cases}
    \NtwkLetIn*{\alpha}{\epp{C_1}{L}}{\AmI{\alpha}{\epp{\subst{C_2}{\alpha}{L}}{L}}{\epp{C_2}{L}}} & \ifText L \in \rho\\
    \epp{C_1}{L} \seqfun \epp{C_2}{L} & \ifText L \notin \rho  \andText  \alpha \notin \fv{\epp{C_2}{L}}\\
    \Undef & \owText
  \end{cases}
  \\[\vertrulegap]
  \epp{\LetIn{\rho.\alpha \knd \setKnd}{C_1}{C_2}}{L} & =
  \begin{cases}
    \NtwkLetIn*{\alpha}{\epp{C_1}{L}}{\AmIIn*{\alpha}{\epp{\subst{C_2}{\alpha}{\{L\} \cup \alpha}}{L}}{\epp{C_2}{L}}} & \ifText L \in \rho\\
    \epp{C_1}{L} \seqfun \epp{C_2}{L} & \ifText L \notin \rho \andText \alpha \notin \fv{\epp{C_2}{L}}\\
    \NtwkNone & \owText
  \end{cases}
  \addtocounter{numlevels}{-1}
  \\[\vertrulegap]
  \epp{C \ChorSend[\ell] \rho}{L} & =
    \begin{cases}
      \SendTo{\epp{C}{L}}{\rho} & \ifText L = \ell\\
      \epp{C}{L} \seqfun \RecvFrom{\ell} & \ifText L \neq \ell  \andText  L \in \rho\\
      \epp{C}{L} & \owText
    \end{cases}
  \\[\vertrulegap]
  \epp{\syncs{\ell}{d}{\rho} \seq C}{L} & =
    \begin{cases}
      \ChooseFor{d}{\rho}{\epp{C}{L}} & \ifText L = \ell\\
      \AllowOneChoice{\ell}{\Left}{\epp{C}{L}} & \ifText L \neq \ell  \andText  L \in \rho  \andText  d = \Left \\
      \AllowOneChoice{\ell}{\Right}{\epp{C}{L}} & \ifText L \neq \ell  \andText  L \in \rho  \andText  d = \Right \\
      \epp{C}{L} & \owText
  \end{cases}
  \\[\vertrulegap]
  \epp{\Fork{(\alpha,x)}{\ell}{C}}{L} & =
    \begin{cases}
      \NtwkFork{(\alpha,x)}{\epp{C}{\alpha}}{\epp{C}{L}} & \ifText L = \ell\\
      \epp{C}{L} & \ifText L \neq \ell  \andText  \alpha, x \notin \fv{\epp{C}{L}} \\
      \NtwkNone & \owText
    \end{cases}
  \\[\vertrulegap]
  \epp{\KillAfter{L'}{C}}{L} & =
    \begin{cases}
      \Undef & \ifText L = L'\\
      \epp{C}{L} & \owText
    \end{cases}
\end{align*}
\endgroup

\subsection{Endpoint Projection for Spawned Threads}
\label{sec:proj-def-thread}
We first define the function~$\tbodies{C}{L}$ that collects the set of all bodies~$C'$ of the
subterms in~$C$ of the form~$\KillAfter{L}{C'}$ (that do not appear inside a closure).
Note that when a program has multiple subterms,
since a thread can only be spawned in one subterm (except for case-expressions),
the function does not union the~\KillAfterN{}s from the subterms together,
but rather uses the function~$S_1 \bowtie S_2$
which selects the nonempty set among~$S_1$ and~$S_2$,
or otherwise returns~$\varnothing$ if both are nonempty.
\begingroup
\allowdisplaybreaks
\rulefiguresize
\begin{align*}
  S_1 \bowtie S_2 &=
    \begin{cases}
      S_2 & S_1 = \varnothing \\
      S_1 & S_2 = \varnothing \\
      \varnothing & S_1 \neq \varnothing \andText S_2 \neq \varnothing
    \end{cases}
  \\[\vertrulegap]
  \tbodies{X}{L} &= \tbodies{\rho.e}{L} = \tbodies{\Fun{\rho}{F}{X}{C}}{L} = \tbodies{\TFunLoc{F}{\alpha}{C}}{L} = \varnothing
  \\[\vertrulegap]
  \tbodies{C \appchor{\rho} t}{L} &= \tbodies{\Fst{\rho}{C}}{L} = \tbodies{\Snd{\rho}{C}}{L} = \tbodies{\Inl{\rho}{C}}{L} = \tbodies{\Inr{\rho}{C}}{L} \\
    &= \tbodies{\Fold{\rho}{C}}{L} = \tbodies{\Unfold{\rho}{C}}{L} = \\
    &= \tbodies{C \ChorSend[\ell] \rho}{L} = \tbodies{\syncs{\ell}{d}{\rho} \seq C}{L} \\
    &= \tbodies{\Fork{(\alpha,x)}{\ell}{C}}{L} \\
    &= \tbodies{C}{L}
  \\[\vertrulegap]
  \tbodies{C_1 \appchor{\rho} C_2}{L} &= \tbodies{(C_1,C_2)_\rho}{L} \\
    &= \tbodies{\LetIn{\rho.x}{C_1}{C_2}}{L} = \tbodies{\LetIn{\rho.\alpha}{C_1}{C_2}}{L} \\
    &= \tbodies{C_1}{L} \bowtie \tbodies{C_2}{L}
  \\[\vertrulegap]
  \tbodies{\Case*{\rho}{C}{X}{C_1}{Y}{C_2}}{L} &=
      \tbodies{C}{L} \bowtie (\tbodies{C_1}{L} \cup \tbodies{C_2}{L})
  \\[\vertrulegap]
  \tbodies{\LocalCase*{\rho}{C}{x}{C_1}{y}{C_2}}{L} &=
    \tbodies{C}{L} \bowtie (\tbodies{C_1}{L} \cup \tbodies{C_2}{L})
  \\[\vertrulegap]
  \tbodies{\KillAfter{L'}{C}}{L} & =
    \begin{cases}
      \{C\} & \ifText L = L'\\
      \tbodies{C}{L} & \owText
    \end{cases}
\end{align*}
Using this function we can then define~$\eppfork{C}{L}$ to project non-threads normally,
and to project to the projected then merged~\KillAfterN bodies followed by an~\NtwkExit for threads.
\begin{mathpar}
  \eppfork{C}{L} =
    \begin{cases}
      \epp{C}{L} & \ifText \tbodies{C}{L} = \varnothing \\
      (\epp{C_1}{L} \ntwkmerge \cdots \ntwkmerge \epp{C_n}{L}) \seqfun \NtwkExit & \ifText \tbodies{C}{L} = \{C_1,\ldots,C_n\} ~\text{where}
    \end{cases}
  \end{mathpar}
\endgroup

\subsection{Locations Named by a Type or Choreography}
\label{sec:pn-def}
The set of named locations in a choreography (resp. type)~$\nl{C}$ (resp.~$\nl{t}$) is defined as follows.
\begingroup
\allowdisplaybreaks
\rulefiguresize
\begin{align*}
  \namedlocs{\alpha} &= \varnothing
  \\[\vertrulegap]
  \namedlocs{L} &= \{L\}
  \\[\vertrulegap]
  \namedlocs{\{\ell\}} &= \namedlocs{\ell}
  \\[\vertrulegap]
  \namedlocs{\rho_1 \cup \rho_2} &= \namedlocs{\rho_1} \cup \namedlocs{\rho_2}
  \\[\vertrulegap]
  \namedlocs{\anyLoc} &= \varnothing
  \\[\vertrulegap]
  \namedlocs{t_e @ \rho} &= \namedlocs{\rho}
  \\[\vertrulegap]
  \namedlocs{\tau_1 \arr{\rho} \tau_2} &= \namedlocs{\tau_1} \cup \namedlocs{\tau_2} \cup \namedlocs{\rho}
  \\[\vertrulegap]
  \namedlocs{\tau_1 +_{\rho} \tau_2} &= \namedlocs{\tau_1} \cup \namedlocs{\tau_2} \cup \namedlocs{\rho}
  \\[\vertrulegap]
  \namedlocs{\tau_1 \times \tau_2} &= \namedlocs{\tau_1} \cup \namedlocs{\tau_2}
  \\[\vertrulegap]
  \namedlocs{\mu_\rho \alpha \ldotp \tau} &= \namedlocs{\rho} \cup \namedlocs{\tau}
  \\[\vertrulegap]
  \namedlocs{\allty{\alpha \knd \kappa}{\rho}{\tau}} &= \namedlocs{\rho} \cup \namedlocs{\tau}
  \\[\vertrulegap]
  \namedlocs{X} &= \varnothing
  \\[\vertrulegap]
  \namedlocs{\rho.e} &= \namedlocs{\rho}
  \\[\vertrulegap]
  \namedlocs{\Fun{\rho}{F}{X}{C}} &= \namedlocs{\rho} \cup \namedlocs{C}
  \\[\vertrulegap]
  \namedlocs{C_1 \appchor{\rho} C_2} &= \namedlocs{\rho} \cup \namedlocs{C_1} \cup \namedlocs{C_2}
  \\[\vertrulegap]
  \namedlocs{\TFunLoc{F}{\alpha}{C}} &= \namedlocs{\rho} \cup \namedlocs{C}
  \\[\vertrulegap]
  \namedlocs{C \appchor{\rho} \ell} &= \namedlocs{C} \cup \namedlocs{\rho} \cup \namedlocs{\ell}
  \\[\vertrulegap]
  \namedlocs{C \appchor{\rho} \rho'} &= \namedlocs{C} \cup \namedlocs{\rho} \cup \namedlocs{\rho'}
  \\[\vertrulegap]
  \namedlocs{C \appchor{\rho} t} &= \namedlocs{C} \cup \namedlocs{\rho}
  \\[\vertrulegap]
  \namedlocs{\Fold{\rho}{C}} &= \namedlocs{\rho} \cup \namedlocs{C}
  \\[\vertrulegap]
  \namedlocs{\Unfold{\rho}{C}} &= \namedlocs{\rho} \cup \namedlocs{C}
  \\[\vertrulegap]
  \namedlocs{(C_1,C_2)_\rho} &= \namedlocs{\rho} \cup \namedlocs{C_1} \cup \namedlocs{C_2}
  \\[\vertrulegap]
  \namedlocs{\Fst{\rho}{C}} &= \namedlocs{\rho} \cup \namedlocs{C}
  \\[\vertrulegap]
  \namedlocs{\Snd{\rho}{C}} &= \namedlocs{\rho} \cup \namedlocs{C}
  \\[\vertrulegap]
  \namedlocs{\Inl{\rho}{C}} &= \namedlocs{\rho} \cup \namedlocs{C}
  \\[\vertrulegap]
  \namedlocs{\Inr{\rho}{C}} &= \namedlocs{\rho} \cup \namedlocs{C}
  \\[\vertrulegap]
  \operatorname{NL}\left(\Case*{\rho}{C}{X}{C_1}{Y}{C_2}\right) &= \namedlocs{\rho} \cup \namedlocs{C} \cup \namedlocs{C_1} \cup \namedlocs{C_2}
  \\[\vertrulegap]
  \operatorname{NL}\left(\LocalCase*{\rho}{C}{x}{C_1}{y}{C_2}\right) &= \namedlocs{\rho} \cup \namedlocs{C} \cup \namedlocs{C_1} \cup \namedlocs{C_2}
  \\[\vertrulegap]
  \namedlocs{\LetIn{\rho.x}{C_1}{C_2}} &= \namedlocs{\rho} \cup \namedlocs{C_1} \cup \namedlocs{C_2}
  \\[\vertrulegap]
  \namedlocs{\LetIn{\rho.\alpha}{C_1}{C_2}} &= \namedlocs{\rho} \cup \namedlocs{C_1} \cup \namedlocs{C_2}
  \\[\vertrulegap]
  \namedlocs{C \ChorSend[\ell] \rho} &= \namedlocs{\ell} \cup \namedlocs{\rho} \cup \namedlocs{C}
  \\[\vertrulegap]
  \namedlocs{\syncs{\ell}{d}{\rho} \seq C} &= \namedlocs{\ell} \cup \namedlocs{\rho} \cup \namedlocs{C}
  \\[\vertrulegap]
  \namedlocs{\Fork{(\alpha,x)}{\ell}{C}} &= \namedlocs{\ell} \cup \namedlocs{C}
  \\[\vertrulegap]
  \namedlocs{\KillAfter{L}{C}} &= \{L\} \cup \namedlocs{C}
\end{align*}
The set of named locations~$\nlinf{C}$ including~$\anyLoc$ has
an identical definition, except that~$\nlinf{\anyLoc} = \anyLoc$,
rather than~$\nl{\anyLoc} = \varnothing$.
\endgroup

\subsection{Spawned Locations in a Choreography}
\begingroup
\allowdisplaybreaks
\rulefiguresize
\begin{align*}
  \spawnedlocs{X} &= \varnothing
  \\[\vertrulegap]
  \spawnedlocs{\rho.e} &= \varnothing
  \\[\vertrulegap]
  \spawnedlocs{\Fun{\rho}{F}{X}{C}} &= \spawnedlocs{C}
  \\[\vertrulegap]
  \spawnedlocs{C_1 \appchor{\rho} C_2} &= \spawnedlocs{C_1} \cup \spawnedlocs{C_2}
  \\[\vertrulegap]
  \spawnedlocs{\TFunLoc{F}{\alpha}{C}} &= \spawnedlocs{C}
  \\[\vertrulegap]
  \spawnedlocs{C \appchor{\rho} t} &= \spawnedlocs{C}
  \\[\vertrulegap]
  \spawnedlocs{\Fold{\rho}{C}} &= \spawnedlocs{C}
  \\[\vertrulegap]
  \spawnedlocs{\Unfold{\rho}{C}} &= \spawnedlocs{C}
  \\[\vertrulegap]
  \spawnedlocs{(C_1,C_2)_\rho} &= \spawnedlocs{C_1} \cup \spawnedlocs{C_2}
  \\[\vertrulegap]
  \spawnedlocs{\Fst{\rho}{C}} &= \spawnedlocs{C}
  \\[\vertrulegap]
  \spawnedlocs{\Snd{\rho}{C}} &= \spawnedlocs{C}
  \\[\vertrulegap]
  \spawnedlocs{\Inl{\rho}{C}} &= \spawnedlocs{C}
  \\[\vertrulegap]
  \spawnedlocs{\Inr{\rho}{C}} &= \spawnedlocs{C}
  \\[\vertrulegap]
  \operatorname{SN}\left(\Case*{\rho}{C}{X}{C_1}{Y}{C_2}\right) &= \spawnedlocs{C} \cup \spawnedlocs{C_1} \cup \spawnedlocs{C_2}
  \\[\vertrulegap]
  \operatorname{SN}\left(\LocalCase*{\rho}{C}{x}{C_1}{y}{C_2}\right) &= \spawnedlocs{C} \cup \spawnedlocs{C_1} \cup \spawnedlocs{C_2}
  \\[\vertrulegap]
  \spawnedlocs{\LetIn{\rho.x}{C_1}{C_2}} &= \spawnedlocs{C_1} \cup \spawnedlocs{C_2}
  \\[\vertrulegap]
  \spawnedlocs{\LetIn{\rho.\alpha}{C_1}{C_2}} &= \spawnedlocs{C_1} \cup \spawnedlocs{C_2}
  \\[\vertrulegap]
  \spawnedlocs{C \ChorSend[\ell] \rho} &= \spawnedlocs{C}
  \\[\vertrulegap]
  \spawnedlocs{\syncs{\ell}{d}{\rho} \seq C} &= \spawnedlocs{C}
  \\[\vertrulegap]
  \spawnedlocs{\Fork{(\alpha,x)}{\ell}{C}} &= \spawnedlocs{C}
  \\[\vertrulegap]
  \spawnedlocs{\KillAfter{L}{C}} &= \{L\} \cup \spawnedlocs{C}
\end{align*}
\endgroup


\subsection{The Less-Than Relation}
\label{sec:less-nonderm-def}
\begin{mathparpagebreakable}
  \infer[]{~}
  {\NtwkNone \lessthan E}
  \and
  \infer{E_1 \lessthan E_2 \\ \NtwkVal{V}}
  {E_1 \lessthan V \NtwkSeq E_2}
  \and
  \infer[]{~}
  {X \lessthan X}
  \and
  \infer[]{~}
  {\NtwkUnit \lessthan \NtwkUnit}
  \and
  \infer[]{~}
  {X \lessthan \NtwkUnit}
  \and
  \infer[]{~}
  {\NtwkUnit \lessthan X}
  \and
  \infer[]{~}
  {\Ret{e} \lessthan \Ret{e}}
  \and
  \infer[]{E_{1,1} \lessthan E_{2,1} \\
    E_{1,2} \lessthan E_{2,2}}
  {E_{1,1} \NtwkSeq E_{1,2} \lessthan E_{2,1} \NtwkSeq E_{2,2}}
  \and
  \infer[]{E_1 \lessthan E_2}
  {\NtwkFun{F}{X}{E_1} \lessthan \NtwkFun{F}{X}{E_2}}
  \and
  \infer[]{E_{1,1} \lessthan E_{2,1} \\
  E_{1,2} \lessthan E_{2,2}}
  {E_{1,1}~E_{1,2} \lessthan E_{2,1}~E_{2,2}}
  \and
  \infer[]{E_1 \lessthan E_2}
  {\NtwkTFun{F}{\alpha}{E_1} \lessthan \NtwkTFun{F}{\alpha}{E_2}}
  \and
  \infer[]{E_1 \lessthan E_2}
  {E_1~t \lessthan E_2~t}
  \and
  \infer[]{E_1 \lessthan E_2}
  {\NtwkFold{E_1} \lessthan \NtwkFold{E_2}}
  \and
  \infer[]{E_1 \lessthan E_2}
  {\NtwkUnfold{E_1} \lessthan \NtwkUnfold{E_2}}
  \and
  \infer[]{E_{1,1} \lessthan E_{2,1} \\
    E_{1,2} \lessthan E_{2,2}}
  {(E_{1,1},E_{1,2}) \lessthan (E_{2,1},E_{2,2})}
  \and
  \infer[]{E_1 \lessthan E_2}
  {\NtwkFst{E_1} \lessthan \NtwkFst{E_2}}
  \and
  \infer[]{E_1 \lessthan E_2}
  {\NtwkSnd{E_1} \lessthan \NtwkSnd{E_2}}
  \and
  \infer[]{E_1 \lessthan E_2}
  {\NtwkInl{E_1} \lessthan \NtwkInl{E_2}}
  \and
  \infer[]{E_1 \lessthan E_2}
  {\NtwkInr{E_1} \lessthan \NtwkInr{E_2}}
  \and
  \infer[]{E_{1,1} \lessthan E_{2,1} \\
    E_{1,2} \lessthan E_{2,2} \\
    E_{1,3} \lessthan E_{2,3}}
  {\NtwkCase*{E_{1,1}}{X}{E_{1,2}}{Y}{E_{1,3}} \lessthan \NtwkCase*{E_{2,1}}{X}{E_{2,2}}{Y}{E_{2,3}}}
  \and
  \infer[]{E_{1,1} \lessthan E_{2,1} \\
    E_{1,2} \lessthan E_{2,2} \\
    E_{1,3} \lessthan E_{2,3}}
  {\NtwkLocalCase*{E_{1,1}}{x}{E_{1,2}}{y}{E_{1,3}} \lessthan \NtwkLocalCase*{E_{2,1}}{x}{E_{2,2}}{y}{E_{2,3}}}
  \and
  \infer[]{E_{1,1} \lessthan E_{2,1} \\
    E_{1,2} \lessthan E_{2,2}}
  {\NtwkLetIn{x}{E_{1,1}}{E_{1,2}} \lessthan \NtwkLetIn{x}{E_{2,1}}{E_{2,2}}}
  \and
  \infer[]{E_{1,1} \lessthan E_{2,1} \\
    E_{1,2} \lessthan E_{2,2}}
  {\NtwkLetIn{\alpha \knd \kappa}{E_{1,1}}{E_{1,2}} \lessthan \NtwkLetIn{\alpha \knd \kappa}{E_{2,1}}{E_{2,2}}}
  \and
  \infer[]{E_1 \lessthan E_2}
  {\SendTo{E_1}{\rho} \lessthan \SendTo{E_2}{\rho}}
  \and
  \infer[]{~}
  {\RecvFrom{\ell} \lessthan \RecvFrom{\ell}}
  \and
  \infer[]{E_1 \lessthan E_2}
  {\ChooseFor{d}{\ell}{E_1} \lessthan \ChooseFor{d}{\ell}{E_2}}
  \and
  \infer[]{E_{1,1} \lessthan E_{2,1} \\
    E_{1,2} \lessthan E_{2,2}}
  {\AllowChoice*{\ell}{E_{1,1}}{E_{1,2}} \lessthan \AllowChoice*{\ell}{E_{2,1}}{E_{2,2}}}
  \and
  \infer[]{E_{1,1} \lessthan E_{2,1} \\
    E_{1,2} \lessthan E_{2,2}}
  {\AmIIn*{\rho}{E_{1,1}}{E_{1,2}} \lessthan \AmIIn*{\rho}{E_{2,1}}{E_{2,2}}}
  \and
  \infer[]{E_{1,1} \lessthan E_{2,1} \\
    E_{1,2} \lessthan E_{2,2}}
  {\NtwkFork*{(\alpha,x)}{E_{1,1}}{E_{1,2}} \lessthan \NtwkFork*{(\alpha,x)}{E_{2,1}}{E_{2,2}}}
  \and
  \infer[]{~}
  {\NtwkExit \lessthan \NtwkExit}
\end{mathparpagebreakable}

\subsection{The Simulating Less-Than Relation}
The \emph{simulating less-than relation} $\lessthansim$ is a subrelation of $\lessthan$ which differs in the following ways:
\begin{itemize}
  \item[(1)] it does not contain a rule to allow~$E_1 \lessthan V \NtwkSeq E_2$,
  \item[(2)] corresponding expressions in evaluation position must be related by~$\lessthansim$,
  but corresponding continuations may be related by~$\lessthan$, and
  \item[(3)] the rules for function applications and pairs differ in how their right-hand arguments must be related
  depending on whether their left-hand arguments are values.
\end{itemize}
This relation is so-named because if $E_1 \lessthansim E_2$,
then the next step that $E_1$ and $E_2$ make---if any---must be identical
(i.e., $E_1$ and $E_2$ simulate each other for a single step),
whereas this is not the case for $\lessthan$.
\begin{mathparpagebreakable}
  \infer[]{~}
  {\NtwkNone \lessthansim E}
  \and
  \infer[]{~}
  {X \lessthansim X}
  \and
  \infer[]{~}
  {\NtwkUnit \lessthansim \NtwkUnit}
  \and
  \infer[]{~}
  {X \lessthansim \NtwkUnit}
  \and
  \infer[]{~}
  {\NtwkUnit \lessthansim X}
  \and
  \infer[]{~}
  {\Ret{e} \lessthansim \Ret{e}}
  \and
  \infer[]{E_{1,1} \lessthansim E_{2,1} \\
    E_{1,2} \lessthan E_{2,2}}
  {E_{1,1} \NtwkSeq E_{1,2} \lessthansim E_{2,1} \NtwkSeq E_{2,2}}
  \and
  \infer[]{E_1 \lessthan E_2}
  {\NtwkFun{F}{X}{E_1} \lessthansim \NtwkFun{F}{X}{E_2}}
  \and
  \infer[]{\neg \NtwkVal{E_{1,1}}\\
  E_{1,1} \lessthansim E_{2,1} \\
  E_{1,2} \lessthan E_{2,2}}
  {E_{1,1}~E_{1,2} \lessthansim E_{2,1}~E_{2,2}}
  \and
  \infer[]{\NtwkVal{E_{1,1}}\\
  E_{1,1} \lessthansim E_{2,1} \\
  E_{1,2} \lessthansim E_{2,2}}
  {E_{1,1}~E_{1,2} \lessthansim E_{2,1}~E_{2,2}}
  \and
  \infer[]{E_1 \lessthan E_2}
  {\NtwkTFun{F}{\alpha}{E_1} \lessthansim \NtwkTFun{F}{\alpha}{E_2}}
  \and
  \infer[]{E_1 \lessthansim E_2}
  {E_1~t \lessthansim E_2~t}
  \and
  \infer[]{E_1 \lessthansim E_2}
  {\NtwkFold{E_1} \lessthansim \NtwkFold{E_2}}
  \and
  \infer[]{E_1 \lessthansim E_2}
  {\NtwkUnfold{E_1} \lessthansim \NtwkUnfold{E_2}}
  \and
  \infer[]{\neg \NtwkVal{E_{1,1}}\\
  E_{1,1} \lessthansim E_{2,1} \\
  E_{1,2} \lessthan E_{2,2}}
  {(E_{1,1},E_{1,2}) \lessthansim (E_{2,1},E_{2,2})}
  \and
  \infer[]{\NtwkVal{E_{1,1}}\\
  E_{1,1} \lessthansim E_{2,1} \\
  E_{1,2} \lessthansim E_{2,2}}
  {(E_{1,1},E_{1,2}) \lessthansim (E_{2,1},E_{2,2})}
  \and
  \infer[]{E_1 \lessthansim E_2}
  {\NtwkFst{E_1} \lessthansim \NtwkFst{E_2}}
  \and
  \infer[]{E_1 \lessthansim E_2}
  {\NtwkSnd{E_1} \lessthansim \NtwkSnd{E_2}}
  \and
  \infer[]{E_1 \lessthansim E_2}
  {\NtwkInl{E_1} \lessthansim \NtwkInl{E_2}}
  \and
  \infer[]{E_1 \lessthansim E_2}
  {\NtwkInr{E_1} \lessthansim \NtwkInr{E_2}}
  \and
  \infer[]{E_{1,1} \lessthansim E_{2,1} \\
    E_{1,2} \lessthan E_{2,2} \\
    E_{1,3} \lessthan E_{2,3}}
  {\NtwkCase*{E_{1,1}}{X}{E_{1,2}}{Y}{E_{1,3}} \lessthansim \NtwkCase*{E_{2,1}}{X}{E_{2,2}}{Y}{E_{2,3}}}
  \and
  \infer[]{E_{1,1} \lessthansim E_{2,1} \\
    E_{1,2} \lessthan E_{2,2} \\
    E_{1,3} \lessthan E_{2,3}}
  {\NtwkLocalCase*{E_{1,1}}{x}{E_{1,2}}{y}{E_{1,3}} \lessthansim \NtwkLocalCase*{E_{2,1}}{x}{E_{2,2}}{y}{E_{2,3}}}
  \and
  \infer[]{E_{1,1} \lessthansim E_{2,1} \\
    E_{1,2} \lessthan E_{2,2}}
  {\NtwkLetIn{x}{E_{1,1}}{E_{1,2}} \lessthansim \NtwkLetIn{x}{E_{2,1}}{E_{2,2}}}
  \and
  \infer[]{E_{1,1} \lessthansim E_{2,1} \\
    E_{1,2} \lessthan E_{2,2}}
  {\NtwkLetIn{\alpha \knd \kappa}{E_{1,1}}{E_{1,2}} \lessthansim \NtwkLetIn{\alpha \knd \kappa}{E_{2,1}}{E_{2,2}}}
  \and
  \infer[]{E_1 \lessthansim E_2}
  {\SendTo{E_1}{\rho} \lessthansim \SendTo{E_2}{\rho}}
  \and
  \infer[]{~}
  {\RecvFrom{\ell} \lessthansim \RecvFrom{\ell}}
  \and
  \infer[]{E_1 \lessthan E_2}
  {\ChooseFor{d}{\ell}{E_1} \lessthansim \ChooseFor{d}{\ell}{E_2}}
  \and
  \infer[]{E_{1,1} \lessthan E_{2,1} \\
    E_{1,2} \lessthan E_{2,2}}
  {\AllowChoice*{\ell}{E_{1,1}}{E_{1,2}} \lessthansim \AllowChoice*{\ell}{E_{2,1}}{E_{2,2}}}
  \and
  \infer[]{E_{1,1} \lessthan E_{2,1} \\
    E_{1,2} \lessthan E_{2,2}}
  {\AmIIn*{\rho}{E_{1,1}}{E_{1,2}} \lessthansim \AmIIn*{\rho}{E_{2,1}}{E_{2,2}}}
  \and
  \infer[]{E_{1,1} \lessthan E_{2,1} \\
    E_{1,2} \lessthan E_{2,2}}
  {\NtwkFork*{(\alpha,x)}{E_{1,1}}{E_{1,2}} \lessthansim \NtwkFork*{(\alpha,x)}{E_{2,1}}{E_{2,2}}}
  \and
  \infer[]{~}
  {\NtwkExit \lessthansim \NtwkExit}
\end{mathparpagebreakable}

\section{Proofs}
\subsection{Substitution Lemmas}\label{sec:subst_lemmas}
The following lemmas quantify the behavior of the kinding and type systems with respect to the substitution operations.
Each lemma is proven with respect to an infinite parallel substitution~$\sigma$ mapping all variables to choreographies (or types, or local expressions),
of which a single-variable substitution $[X \mapsto C]$ can be recovered as a special case by setting $\sigma(X) = C$ and $\sigma(Y) = Y$ for $Y \neq X$.
We make use of these lemmas frequently, and so may elide explicitly referencing them in any following proofs.

\begin{lem}[Location Substitution Preserves Containment]\label{lem:loc-sub-pres-in}
  If $\ell \in \rho$ then $\ell[\sigma] \in \rho[\sigma]$.
\end{lem}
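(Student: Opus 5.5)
The plan is a rule induction on the derivation of $\ell \in \rho$, using the four inductive rules for containment given in Appendix~\ref{sec:set-relations}. Substitution acts homomorphically on location-set expressions: $\{\ell\}[\sigma] = \{\ell[\sigma]\}$, $(\rho_1 \cup \rho_2)[\sigma] = \rho_1[\sigma] \cup \rho_2[\sigma]$, $\anyLoc[\sigma] = \anyLoc$, and concrete locations $L$ are fixed by $\sigma$. We may assume $\sigma$ is kind-respecting. Then $\sigma$ sends location variables (kind $\locKnd$) to location expressions, so $\ell[\sigma]$ is again a location. It sends set variables to set expressions, so $\rho[\sigma]$ is again a set.

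\emph{Axiom $\ell \in \{\ell\}$.} We have $\{\ell\}[\sigma] = \{\ell[\sigma]\}$, so the same axiom gives $\ell[\sigma] \in \{\ell[\sigma]\}$.

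\emph{Union rules.} Suppose $\ell \in \rho_1 \cup \rho_2$ was derived from $\ell \in \rho_1$. The induction hypothesis gives $\ell[\sigma] \in \rho_1[\sigma]$. The left union rule then gives $\ell[\sigma] \in \rho_1[\sigma] \cup \rho_2[\sigma] = (\rho_1 \cup \rho_2)[\sigma]$. The right-hand rule is symmetric.

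\emph{Axiom $\ell \in \anyLoc$.} Since $\anyLoc[\sigma] = \anyLoc$, the axiom applies directly to $\ell[\sigma]$.

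The only potential obstacle is confirming that no derivation ends at a bare variable. No rule concludes $\ell \in \alpha$ for a set variable $\alpha$, so the case where $\rho$ is a variable, whose image under $\sigma$ could be an arbitrary set, never arises. This is exactly the ``necessity'' reading of containment. The one point to check is that substitution on location sets really is homomorphic and preserves the syntactic category (a location is sent to a location). Both follow from the kind-respecting assumption on $\sigma$ and the split of kinding contexts into $\Gamma_\ell$ and $\Gamma$ described in Appendix~\ref{sec:full_kinds}.
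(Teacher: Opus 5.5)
Your proof is correct and is essentially the paper's argument, which just says ``by induction on $\rho$.'' The containment rules are syntax-directed in $\rho$, so your rule induction on the derivation of $\ell \in \rho$ gives the same case split, and the variable case $\rho = \alpha$ is vacuous exactly as you observe.
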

\begin{proof}
  By induction on $\rho$.
\end{proof}

\begin{lem}[Location Substitution Preserves Subsets]\label{lem:loc-sub-pres-subset}
  If $\rho_1 \subseteq \rho_2$ then $\rho_1[\sigma] \subseteq \rho_2[\sigma]$.
\end{lem}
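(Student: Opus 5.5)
We argue by induction on the derivation of $\rho_1 \subseteq \rho_2$, using the inductive subset rules of Appendix~\ref{sec:set-relations}. Fix an arbitrary parallel substitution~$\sigma$ sending location variables to locations and set variables to location sets. Substitution acts homomorphically on set expressions: $\varnothing[\sigma] = \varnothing$, $\{\ell\}[\sigma] = \{\ell[\sigma]\}$, $(\rho \cup \rho')[\sigma] = \rho[\sigma] \cup \rho'[\sigma]$, and $\anyLoc[\sigma] = \anyLoc$. With this, most cases are immediate.

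\emph{Axiom cases.}
\begin{itemize}
\item $\varnothing \subseteq \rho$: the left side stays $\varnothing$, so the same axiom applies.
\item $\rho \subseteq \anyLoc$: the right side stays $\anyLoc$, so the same axiom applies.
\item $\alpha \subseteq \alpha$: both sides become the same set $\sigma(\alpha)$. This requires reflexivity of~$\subseteq$ on arbitrary set expressions, proved separately by induction on~$\rho$. For a singleton $\{\ell\}$ we use $\ell \in \{\ell\}$. For a union we apply the union-left rule, with the two weakening rules supplying the premises. The variable and $\anyLoc$ cases are axioms.
\end{itemize}

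\emph{Singleton rule.} Here $\{\ell\} \subseteq \rho$ is derived from $\ell \in \rho$. Lemma~\ref{lem:loc-sub-pres-in} gives $\ell[\sigma] \in \rho[\sigma]$, and the singleton rule then yields $\{\ell[\sigma]\} \subseteq \rho[\sigma]$.

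\emph{Union rules.} For the two weakening rules ($\rho \subseteq \rho_1 \cup \rho_2$ from $\rho \subseteq \rho_i$), apply the induction hypothesis to the premise and then the same rule, using $(\rho_1 \cup \rho_2)[\sigma] = \rho_1[\sigma] \cup \rho_2[\sigma]$. The union-left rule is handled the same way, with the induction hypothesis applied to both premises.

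\emph{Main subtlety.} The only real obstacle is the variable axiom, since $\sigma(\alpha)$ may be any compound set expression; this is why the separate reflexivity lemma is needed. A secondary concern is how Lemma~\ref{lem:loc-sub-pres-in} behaves when $\ell$ is itself a variable mapped to a concrete location. That lemma is already available, so it can be used directly.
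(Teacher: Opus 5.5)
Your proposal is correct and follows the paper's proof: induction on the derivation of $\rho_1 \subseteq \rho_2$, with the singleton rule as the only substantive case, discharged by Lemma~\ref{lem:loc-sub-pres-in}. Your separate reflexivity argument for the $\alpha \subseteq \alpha$ axiom supplies a detail the paper leaves implicit; the paper only asserts elsewhere that $\subseteq$ is a preorder.
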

\begin{proof}
  By induction on the definition of the $\subseteq$ relation.
  The only interesting case is when $\rho_1 = \{\ell\}$, which follows by Lemma~\ref{lem:loc-sub-pres-in}.
\end{proof}

\begin{defn}
  For two location sets~$\rho_1$ and~$\rho_2$, say that~$\rho_1 \equiv \rho_2$
  if and only if~$\rho_1 \subseteq \rho_2$ and~$\rho_2 \subseteq \rho_1$,
  where~$\subseteq$ is as defined in Appendix~\ref{sec:set-relations}.
\end{defn}

\begin{lem}\label{lem:loc-sub-namedlocs}
  For any location substitution $\sigma$, $\nl{t[\sigma]} \equiv \nl{t} \cup \bigcup_{\alpha \in \fv{t}} \nl{\sigma(\alpha)}$.
\end{lem}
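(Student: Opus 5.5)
The plan is a structural induction on the type (or location-set expression)~$t$, proving both inclusions of $\equiv$ at once. Both sides are unions of named-location sets, so each inclusion reduces to showing that every component of one side is $\subseteq$ the other side. Here $\subseteq$ is the inductive relation of Appendix~\ref{sec:set-relations}. I will use its basic closure properties freely: reflexivity, transitivity, $\rho_i \subseteq \rho_1 \cup \rho_2$, and the rule that $\rho_1 \cup \rho_2 \subseteq \rho$ whenever both components are. Together these give monotonicity of union under $\equiv$, i.e.\ $\rho_1 \equiv \rho_1'$ and $\rho_2 \equiv \rho_2'$ imply $\rho_1 \cup \rho_2 \equiv \rho_1' \cup \rho_2'$. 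I prove that congruence first as a small auxiliary lemma by induction on derivations. Note that $\namedlocs{\cdot}$ always returns a set of concrete names, so every set involved is closed and $\subseteq$ on them behaves like ordinary finite-set inclusion.

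The base cases come first.
\begin{itemize}
\item For $t = \alpha$ we have $\namedlocs{\alpha} = \varnothing$ and $\fv{\alpha} = \{\alpha\}$. Both sides are then $\namedlocs{\sigma(\alpha)}$ (up to $\varnothing \cup {-}$, which is absorbed using $\varnothing \subseteq \rho$).
\item For $t = L$, for $t = \anyLoc$, and for closed local types, $\sigma$ acts trivially and $\fv{t} = \varnothing$, so both sides coincide.
\end{itemize}

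The compound cases $\{\ell\}$, $\rho_1 \cup \rho_2$, $t_e @ \rho$, $\tau_1 \arr{\rho} \tau_2$, $\tau_1 +_\rho \tau_2$ and $\tau_1 \times \tau_2$ are handled uniformly. Here $\namedlocs{\cdot}$ is a union over immediate subterms, substitution commutes with the constructor, and $\fv{t}$ is the union of the subterms' free variables. Applying the induction hypothesis to each subterm and the congruence lemma, what remains is to regroup $\bigcup_{\alpha \in \fv{t_1}} \cup \bigcup_{\alpha\in\fv{t_2}}$ into $\bigcup_{\alpha \in \fv{t_1}\cup\fv{t_2}}$. This regrouping is routine associativity and commutativity reasoning up to $\equiv$. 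For $t_e @ \rho$ I need the fact that $\namedlocs{t_e@\rho}$ ignores $t_e$. So only $\fv{\rho}$ contributes, and local type variables in $t_e$ must not add names. They do not, because $\sigma$ is a \emph{location} substitution, which is why the lemma is stated for that case.

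The binder cases $\mu_\rho \alpha \ldotp \tau$ and $\allty{\alpha \knd \kappa}{\rho}{\tau}$ are the main obstacle. I would apply the induction hypothesis to the body under the lifted substitution $\sigma'$, which maps $\alpha \mapsto \alpha$ and agrees with $\sigma$ elsewhere (after $\alpha$-renaming away from the range of $\sigma$). Since $\namedlocs{\alpha} = \varnothing$, the bound variable contributes nothing. The remaining free variables of the body are exactly $\fv{t}$, on which $\sigma'$ and $\sigma$ agree. For $\allty{\alpha\knd\kappa}{\rho}{\tau}$ with $\kappa$ a location kind, $\alpha$ is also bound in $\rho$, and the same argument applies to $\rho$. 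Otherwise $\rho$ lies outside the binder and is treated as in the compound cases. The delicate points are making the capture-avoidance precise and checking that $\fv{\cdot}$ here matches the binding structure described in Section~\ref{sec:kind-system}. I expect these to be bookkeeping rather than a genuine difficulty.
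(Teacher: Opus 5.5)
Your approach, structural induction on $t$, is the same as the paper's one-line proof. However, the $t_e @ \rho$ case does not go through as you argue, and the identity is actually false there.

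The right-hand side ranges over all of $\fv{t_e@\rho}$. That includes location variables occurring inside $t_e$ through the local types $\Loc_{\rho'}$ and $\LocSet_{\rho'}$ (e.g.\ $\Loc_\alpha$ in \textsc{T-Fork}). A location substitution acts on exactly these variables, but $\nl{t_e@\rho}=\nl{\rho}$ discards $t_e$ entirely. Your justification (``because $\sigma$ is a location substitution'') only covers type variables of kind $*_e$, which $\sigma$ fixes.

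Concretely, take $t=\Loc_\beta @ \{L\}$ and $\sigma(\beta)=L'\neq L$. Then $\nl{t[\sigma]}=\{L\}$, while the right-hand side is $\{L\}\cup\{L'\}$, and $\{L'\}\subseteq\{L\}$ is not derivable. So the inclusion of the right-hand side into $\nl{t[\sigma]}$ fails. The same problem arises with the kind annotation in $\allty{\alpha\knd\tyknd{\rho'}}{\rho}{\tau}$, which your binder case never mentions. $\nl{\cdot}$ ignores $\kappa$, but $\sigma$ and $\fv{\cdot}$ do not: take $\kappa=\tyknd{\beta}$ with $\sigma(\beta)=L'$.

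Here is what your induction does establish. For every $t$ it gives $\nl{t}\subseteq\nl{t[\sigma]}$ and $\nl{t[\sigma]}\subseteq\nl{t}\cup\bigcup_{\alpha\in\fv{t}}\nl{\sigma(\alpha)}$. It gives the full $\equiv$ for location-set expressions built from $\alpha$, $L$, $\{\ell\}$, $\cup$ and $\anyLoc$, since no subterm there is invisible to $\nl{\cdot}$. That is all the development uses: for example $\nl{\rho'[\sigma]}\subseteq\nl{\rho'}\cup\cdots$ in Lemma~\ref{lem:loc-sub-pres-typ}, and $\nl{\subst{\rho}{\alpha}{L}}\subseteq\nl{\rho}\cup\{L\}$ in Lemma~\ref{lem:single-preservation}.

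So you have two ways to repair it. Either restrict the claim as above, or replace $\fv{t}$ by the free variables occurring only in positions that $\nl{\cdot}$ inspects, omitting $t_e$ in $t_e@\rho$ and the kind in a $\forall$. The paper's ``by induction on $t$'' glosses over the same point, so the defect lies in the statement as written. Your proof should not assert the equality in those cases.
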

\begin{proof}
  By induction on~$t$.
\end{proof}

\begin{lem}\label{lem:ty-sub-namedlocs}
  For any type substitution $\sigma$, $\nl{t[\sigma]} \equiv \nl{t} \cup \bigcup_{\alpha \in \fv{t}} \nl{\sigma(\alpha)}$.
\end{lem}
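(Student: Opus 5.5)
The plan is a structural induction on the type~$t$, in direct parallel with the proof of Lemma~\ref{lem:loc-sub-namedlocs}. The only difference is that~$\sigma$ now maps type variables of arbitrary kind (local types~$*_e$ and program types~$\tyknd{\rho}$, as well as locations and location sets), rather than only location variables. Since $\nl{\cdot}$ is defined homomorphically on types (Appendix~\ref{sec:pn-def}), almost every case reduces to the induction hypotheses together with the fact that~$\equiv$ is a congruence for~$\cup$. That fact follows from the inductive definition of~$\subseteq$ in Appendix~\ref{sec:set-relations}: the rule for $\rho_1 \cup \rho_2 \subseteq \rho$ and the weakening rules into either side of a union make $\cup$ monotone. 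I will also use that~$\cup$ is associative, commutative and idempotent up to~$\equiv$, so that the union of free-variable contributions can be regrouped freely.

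\emph{Base cases.} For $t = \alpha$ we have $\nl{\alpha} = \varnothing$ and $\fv{\alpha} = \{\alpha\}$, so both sides are $\nl{\sigma(\alpha)}$. For concrete $t = L$ and for $t = \anyLoc$, $\sigma$ acts trivially and $\fv{t} = \varnothing$. For a local type~$t_e$, the named locations are only those occurring in annotations such as $\Loc_\rho$ and $\LocSet_\rho$. I will treat $\nl{t_e}$ as the homomorphic collection of named locations in these annotations, and assume the local language's type substitution commutes with them in the same homomorphic way. This is the one place where an assumption about the local language enters.

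\emph{Compound cases.} The cases $\{\ell\}$, $\rho_1 \cup \rho_2$, $t_e @ \rho$, $\tau_1 \arr{\rho} \tau_2$, $\tau_1 +_\rho \tau_2$ and $\tau_1 \times \tau_2$ follow by applying the induction hypotheses to the components, using $\fv{t} = \bigcup_i \fv{t_i}$, and then redistributing the unions.

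\emph{Main obstacle: the binders} $\mu_\rho \beta \ldotp \tau$ and $\allty{\beta \knd \kappa}{\rho}{\tau}$. Here I will push~$\sigma$ under the binder as the lifted substitution~$\sigma'$, which sends~$\beta$ to~$\beta$ and agrees with~$\sigma$ elsewhere, with~$\beta$ chosen fresh so capture is avoided. The induction hypothesis for the body gives
\[
\nl{\tau[\sigma']} \equiv \nl{\tau} \cup \textstyle\bigcup_{\gamma \in \fv{\tau}} \nl{\sigma'(\gamma)}.
\]
In this union the $\beta$ term contributes $\nl{\beta} = \varnothing$, and the remaining terms range exactly over $\fv{\tau} \setminus \{\beta\} \subseteq \fv{t}$. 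The annotation~$\rho$ needs separate handling. In $\mu_\rho$ and in non-location foralls, $\beta$ is not bound in~$\rho$, so the induction hypothesis applies to~$\rho$ directly under~$\sigma$. In location foralls ($\kappa \in \{\locKnd,\setKnd,\finsetKnd\}$), $\beta$ is bound in~$\rho$ as well, so~$\sigma'$ is used for both~$\rho$ and~$\tau$.

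The delicate point is to make the induction measure and the handling of~$\alpha$-equivalence explicit. Induction should be on the size of~$t$, which is invariant under renaming. One should also check that~$\nl{\cdot}$ is itself invariant under renaming of bound variables, which is immediate because $\nl{\beta} = \varnothing$ for every variable~$\beta$.
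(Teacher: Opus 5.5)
Your overall strategy, structural induction on $t$ with the hypothesis quantified over all substitutions so that it applies to the lifted $\sigma'$ under $\mu$ and $\forall$, is the same as the paper's (whose proof is just ``by induction on $t$''). Your binder, arrow, sum and product cases are fine.

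The gap is the $t_e @ \rho$ case, which you dispatch by ``applying the induction hypotheses to the components.'' The definition in Appendix~\ref{sec:pn-def} is not homomorphic there: $\nl{t_e @ \rho} = \nl{\rho}$, discarding $t_e$ (the kind annotation on a $\forall$ is discarded in the same way). So the left-hand side is $\nl{t_e[\sigma] @ \rho[\sigma]} = \nl{\rho[\sigma]}$, which only accounts for $\fv{\rho}$. The right-hand side, however, ranges over $\fv{t_e} \cup \fv{\rho}$.

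Under the two conventions you adopt, this case is actually false:
\begin{itemize}
\item If $\nl{\cdot}$ on local types collects the annotations of $\Loc$ and $\LocSet$, take $t = \alpha @ \{\Alice\}$ with $\alpha \knd *_e$ and $\sigma(\alpha) = \Loc_{\{\Bob\}}$. The left side is $\{\Alice\}$, while the right side is $\{\Alice\} \cup \{\Bob\}$.
\item Since you also let $\sigma$ act on location variables, $t = \Loc_\beta @ \{\Alice\}$ with $\sigma(\beta) = \Bob$ is a counterexample regardless of how local types are treated.
\end{itemize}

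To close the case you need two changes:
\begin{itemize}
\item[(i)] Take $\sigma$ to be a type substitution in the paper's sense: kind-respecting, acting only on variables of kind $*_e$ or $\tyknd{\rho'}$, and fixing location variables. Location substitutions are handled separately by Lemma~\ref{lem:loc-sub-namedlocs}.
\item[(ii)] Take $\nl{\cdot}$ of a local type to be $\varnothing$. This is the only reading consistent with the clause for $t_e @ \rho$.
\end{itemize}
Then $\rho[\sigma] = \rho$. Each $\alpha \in \fv{t_e}$ is either a location variable, so $\nl{\sigma(\alpha)} = \nl{\alpha} = \varnothing$, or a local type variable, so $\sigma(\alpha)$ is a local type and $\nl{\sigma(\alpha)} = \varnothing$. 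Both sides therefore reduce to $\nl{\rho}$. Your top-level local-type base case becomes trivial, with no assumption on the local language needed, and the rest of your induction goes through as written.
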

\begin{proof}
  By induction on~$t$.
\end{proof}

\begin{lem}\label{lem:loc-sub-chor-spawnedlocs}
  For any location substitution~$\sigma$, $\spawnedlocs{C[\sigma]} = \spawnedlocs{C}$.
\end{lem}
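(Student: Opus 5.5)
The plan is structural induction on the choreography~$C$, using the clauses of $\spawnedlocs{\cdot}$ from the appendix and the way a location substitution~$\sigma$ acts on each syntactic form. The key observation is that $\spawnedlocs{C}$ only ever collects the \emph{concrete} location~$L$ from subterms of the form $\KillAfter{L}{C'}$. Nowhere does it inspect location annotations~$\rho$, senders~$\ell$, or type arguments~$t$, and these are the only places a location substitution can change anything. Moreover, the index of a $\KillAfterN$ is syntactically a concrete name $L \in \Locations$, never a type variable, so $\sigma$ fixes it.

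The base cases are $X$ and $\rho.e$. Here $X[\sigma] = X$ and $(\rho.e)[\sigma] = \rho[\sigma].e[\sigma]$, so both sides are~$\varnothing$.

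For the compound forms (application, pairs, projections, injections, folds, case and local-case, the lets, sends, syncs, and type application), $\sigma$ distributes homomorphically over the subterms and acts only on the annotations. Each clause of $\spawnedlocs{\cdot}$ is a union of the subterms' spawned sets, independent of the annotations, so the induction hypothesis closes each case immediately. For the case with a type argument $C \appchor{\rho} t$, note that $\spawnedlocs{\cdot}$ ignores~$t$ and~$\rho$ entirely.

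The binder cases are functions, type functions, $\TLetLocRule$-style type lets, and $\Fork{(\alpha,x)}{\ell}{C}$. Here we assume the usual capture-avoiding convention: $\sigma$ is pushed under the binder as its lifting $\sigma'$, which maps $\alpha \mapsto \alpha$ and agrees with $\sigma$ elsewhere. Since the lemma quantifies over \emph{all} location substitutions, the induction hypothesis applies to $\sigma'$ as well. This is why the statement must be proved for arbitrary $\sigma$ rather than a fixed one, and the induction should be set up with $\sigma$ universally quantified.

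The $\KillAfterN$ case computes $(\KillAfter{L}{C})[\sigma] = \KillAfter{L}{C[\sigma]}$. Its spawned set is $\{L\} \cup \spawnedlocs{C[\sigma]} = \{L\} \cup \spawnedlocs{C}$ by the induction hypothesis.

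The only point needing care, and the main potential obstacle, is confirming that $\KillAfterN$ is indexed by a concrete $L$ and never by a variable. The grammar in Figure~\ref{fig:abstract-syntax} writes $\KillAfter{L}{C}$ with $L \in \Locations$, and \ruleref{C-Fork} only ever introduces $\KillAfter{L'}{C'}$ after substituting the fresh concrete $L'$ for $\alpha$. So $\sigma$ cannot alter the index, and no case breaks. The equality is then a genuine syntactic equality of sets, not merely $\equiv$.
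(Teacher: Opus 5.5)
Your proposal is correct and follows the paper's proof, which is simply structural induction on~$C$. Your observation that $\spawnedlocs{\cdot}$ only collects the concrete index $L \in \Locations$ of $\KillAfterN$ subterms, which no location substitution can alter, is exactly what makes every case go through.
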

\begin{proof}
  By induction on~$C$.
\end{proof}

\begin{lem}\label{lem:ty-sub-chor-spawnedlocs}
  For any type substitution~$\sigma$, $\spawnedlocs{C[\sigma]} = \spawnedlocs{C}$.
\end{lem}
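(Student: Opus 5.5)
The plan is a structural induction on the choreography~$C$, exactly parallel to the proof of Lemma~\ref{lem:loc-sub-chor-spawnedlocs}. The key observation is that $\spawnedlocs{\cdot}$ only ever inspects one kind of data: the concrete location names~$L$ appearing as the annotation of a $\KillAfter{L}{C'}$ subterm. These are elements of~$\Locations$, not variables. A type substitution~$\sigma$ maps type variables (of kind $*_e$ or $\tyknd{\rho}$) to types, and it acts on a choreography only through the type annotations it carries:
\begin{itemize}
  \item the type~$t_e$ in a \LetN binding;
  \item argument types~$\tau$ in functions;
  \item types~$t$ in type applications $C \appchor{\rho} t$.
\end{itemize}
It never alters the term constructors and never touches the concrete name~$L$ in a \KillAfterN.

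Concretely, I proceed case by case on the syntax in Figure~\ref{fig:abstract-syntax}.

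\textbf{Base cases.} For $X$ and $\rho.e$, both sides are $\varnothing$. For $X$ this holds because type substitution does not replace choreographic variables, or, if $\sigma$ is taken to be purely a type substitution, $X[\sigma] = X$. For $\rho.e$ the result $\rho[\sigma].e[\sigma]$ is again a local program.

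\textbf{Homomorphic cases.} These are applications, pairs, projections, injections, fold and unfold, \CaseN, \LocalCaseN, \LetN, sends, selections, and \ForkN. Substitution commutes with the constructor, so $\spawnedlocs{\cdot}$ of the result is the same union of $\spawnedlocs{\cdot}$ of the substituted subterms, and the induction hypotheses close each case.

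\textbf{Binder cases.} These are $\FunN$, $\TFunN$, the type-let, and \ForkN. I assume the usual capture-avoiding convention, shifting or renaming $\sigma$ under the binder as the paper's parallel-substitution setup does. The induction hypothesis then applies to the body with the modified substitution. Because the lemma is stated for \emph{every}~$\sigma$, the induction hypothesis is general enough to handle this.

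\textbf{The \KillAfterN case.} We have
\[
  (\KillAfter{L}{C'})[\sigma] = \KillAfter{L}{C'[\sigma]},
\]
since $L \in \Locations$ is closed. Hence
\[
  \spawnedlocs{(\KillAfter{L}{C'})[\sigma]} = \{L\} \cup \spawnedlocs{C'[\sigma]} = \{L\} \cup \spawnedlocs{C'}
\]
by the induction hypothesis.

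The only mild obstacle is the binder bookkeeping. One must check that the induction is on~$C$ with~$\sigma$ universally quantified, so that the lifted substitution under a binder is covered. One must also confirm that type substitution cannot introduce a \KillAfterN: types contain no choreographies, so $\sigma(\alpha)$ never contributes spawned locations. Once that is noted, every case is immediate.
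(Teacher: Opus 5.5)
Your proposal is correct and takes the same approach as the paper, whose proof is a structural induction on~$C$. The key observation you give is the right one: $\spawnedlocs{\cdot}$ only reads the concrete names~$L$ in \KillAfterN{} subterms, and a type substitution never touches these names or introduces new \KillAfterN{} subterms, so every case is immediate.
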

\begin{proof}
  By induction on~$C$.
\end{proof}

\begin{lem}\label{lem:local-sub-chor-spawnedlocs}
  For any local substitution $\sigma$, $\spawnedlocs{C[\rho | \sigma]} = \spawnedlocs{C}$.
\end{lem}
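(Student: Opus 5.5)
The plan is to prove the statement by structural induction on the choreography~$C$, in the same way as the two preceding lemmas for location and type substitutions. A local substitution $[\rho \mid \sigma]$ replaces local variables $\rho'.x$ whose namespace matches~$\rho$ with local expressions~$\sigma(x)$. It acts only inside the local-program positions of~$C$: the expressions~$e$ in subterms $\rho'.e$. It never touches the location annotations on~$C$ itself. In particular, it leaves unchanged the label~$L$ in any subterm $\KillAfter{L}{C'}$. Since $\spawnedlocs{\cdot}$ reads nothing except these labels and the recursive structure, the equality should hold at every node.

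The induction runs as follows.
\begin{itemize}
\item In the base cases $X$ and $\rho'.e$, the substituted terms are $X$ and $\rho'.(e[\sigma])$ (or just $\rho'.e$). Both sides of the equation are~$\varnothing$ by definition.
\item For each compound constructor, substitution distributes homomorphically over the immediate subterms. The defining clause of $\spawnedlocs{\cdot}$ is a union of the subterms' spawned locations, so the induction hypotheses on the subterms give the result directly.
\item For $\KillAfter{L}{C'}$, we have $(\KillAfter{L}{C'})[\rho \mid \sigma] = \KillAfter{L}{(C'[\rho \mid \sigma])}$. Then
$\spawnedlocs{\KillAfter{L}{C'[\rho \mid \sigma]}} = \{L\} \cup \spawnedlocs{C'[\rho \mid \sigma]} = \{L\} \cup \spawnedlocs{C'}$
by the induction hypothesis.
\end{itemize}

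The only nonroutine points are the binders that affect the substitution. For $\LetIn{\rho'.x}{C_1}{C_2}$, for both $\LocalCaseN$ branches, and for the~$x$ bound by $\ForkN$, the substitution going under the binder is a modified one. It either drops~$x$ when $\rho \subseteq \rho'$, or is shifted or extended to avoid capture. The $\ForkN$ binder also changes the relevant namespace, since it binds $\{\ell,\alpha\}.x$. To handle these cases, I would state the induction over \emph{all} local substitutions and all namespaces~$\rho$ simultaneously. The hypothesis then applies to whatever substitution is pushed under the binder. This is exactly the generality the paper already adopts by working with infinite parallel substitutions.

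A related concern is that the substitution might rename or instantiate type variables in a way that changes~$L$. It does not: the labels in $\KillAfterN$ are concrete locations, and a local substitution does not act on types at all.

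I expect the main obstacle to be purely bookkeeping. It consists of confirming, against the precise definition of local substitution into choreographies (inherited from \lamqc), that every clause is homomorphic on the choreographic skeleton, including how it descends under the namespace-changing binders. Once that is checked, each case is a one-line appeal to the induction hypothesis.
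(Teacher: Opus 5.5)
Your proposal is correct and matches the paper's proof, which is simply structural induction on~$C$. Generalizing the induction hypothesis over all local substitutions and namespaces, so that it covers the binders in \LetN, \LocalCaseN, and \ForkN, is the routine bookkeeping that the paper's one-line proof leaves implicit.
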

\begin{proof}
  By induction on~$C$.
\end{proof}

\begin{lem}\label{lem:sub-chor-spawnedlocs-empty}
  For any choreographic substitution $\sigma$, $\spawnedlocs{C[\sigma]} \equiv \spawnedlocs{C} \cup \bigcup_{X \in \fv{C}} \spawnedlocs{\sigma(X)}$.
\end{lem}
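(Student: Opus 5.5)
We prove the statement by structural induction on $C$, keeping the choreographic substitution $\sigma$ arbitrary. That way, binders that extend or shadow $\sigma$ (function names, arguments, case variables) are handled directly by the induction hypothesis. The equivalence $\equiv$ is read as mutual inclusion of (concrete) location sets. Here $\spawnedlocs{\cdot}$ only ever produces concrete names $L$, so $\equiv$ is in effect set equality, and the usual set-theoretic identities (associativity, commutativity, idempotence of $\cup$, distributivity over indexed unions) hold.

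The base cases are immediate.
\begin{itemize}
\item For $C = X$, we have $C[\sigma] = \sigma(X)$ and $\fv{X} = \{X\}$, so the right-hand side is $\varnothing \cup \spawnedlocs{\sigma(X)}$.
\item For $C = \rho.e$, choreographic substitution leaves local expressions untouched, and both sides are $\varnothing$ since $\fv{\rho.e}$ contains no choreographic variables.
\end{itemize}

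For every compositional case, $\spawnedlocs{\cdot}$ is the union of its value on the immediate subterms, with an extra $\{L\}$ for $\KillAfter{L}{C'}$. Substitution commutes with the syntax, and the free variables of $C$ are the union of those of its subterms minus bound choreographic variables. We apply the induction hypothesis to each subterm and regroup the unions; the $\{L\}$ summand passes through unchanged because substitution does not alter the annotation $L$.

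For binders we apply the induction hypothesis with the modified substitution.
\begin{itemize}
\item For $\Fun{\rho}{F}{X}{C'}$ the substitution is $\sigma' = \sigma[F \mapsto F, X \mapsto X]$.
\item For the case branches it is $\sigma[X \mapsto X]$, respectively $\sigma[Y \mapsto Y]$.
\end{itemize}
The bound variables then contribute $\spawnedlocs{F} = \spawnedlocs{X} = \varnothing$, so the union over $\fv{C'}$ collapses to the union over $\fv{C'} \setminus \{F,X\}$, which is exactly $\fv{C} \cap \fv{C'}$. For binders of local or type variables (\LetN, \LocalCaseN, \ForkN, type functions) no choreographic variable is bound, so the hypothesis applies with $\sigma$ itself. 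We assume the usual capture-avoiding convention: bound type and local variables are chosen fresh for the range of $\sigma$, so substitution goes under the binder without renaming.

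The main obstacle is bookkeeping rather than a conceptual issue. It lies in the cases with multiple subterms (application, pairs, \LetN, both case forms), where we must check that
\[
\bigcup_{i}\Bigl(\spawnedlocs{C_i} \cup \bigcup_{X\in\fv{C_i}}\spawnedlocs{\sigma(X)}\Bigr) \equiv \bigcup_i \spawnedlocs{C_i} \cup \bigcup_{X \in \bigcup_i \fv{C_i}} \spawnedlocs{\sigma(X)}.
\]
This is a straightforward rearrangement of unions, which is why the result is stated up to $\equiv$ rather than syntactic equality: the two sides may list the same names in different groupings or with duplicates.
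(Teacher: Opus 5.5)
Your proof is correct and takes the same route as the paper, whose entire proof is ``by induction on~$C$.'' Your handling of the choreographic binders (function names and arguments, and the \CaseN branch variables) by extending $\sigma$ with identity mappings, together with the extra $\{L\}$ summand for \KillAfterN terms, is exactly the bookkeeping the paper's one-line proof leaves implicit.
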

\begin{proof}
  By induction on~$C$.
\end{proof}

\begin{defn}[Well-formed Location-Type Substitutions]
    Say that a function~$\sigma$ from location-type variables to locations or location sets
    maps $\Gamma_{\ell,1}$ to $\Gamma_{\ell,2}$ (written $\wfSubst{\,}{\sigma}{\Gamma_{\ell,1}}{\Gamma_{\ell,2}}$) if and only if
    \[ \forall \alpha \knd \kappa_\ell \in \Gamma_{\ell,1} \ldotp \chorkinded{\Gamma_{\ell,2}}{\sigma(\alpha)}{\kappa_\ell}. \]
\end{defn}

\begin{lem}[Location Substitution Preserves Location Kinding]\label{lem:loc-sub-pres-loc}
  If $\wfSubst{\,}{\sigma}{\Gamma_{\ell,1}}{\Gamma_{\ell,2}}$ and
  $\chorkinded{\Gamma_{\ell,1}}{t}{\kappa_\ell}$,
  then $\chorkinded{\Gamma_{\ell,2}}{t[\sigma]}{\kappa_\ell}$.
\end{lem}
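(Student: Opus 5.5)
We proceed by induction on the derivation of $\chorkinded{\Gamma_{\ell,1}}{t}{\kappa_\ell}$. Since $\kappa_\ell \in \{\locKnd, \setKnd, \finsetKnd\}$, only the location-level rules can end the derivation: \ruleref{K-Var}, \ruleref{K-SubVar}, \ruleref{K-Loc}, \ruleref{K-Sng}, \ruleref{K-SngFin}, \ruleref{K-Union}, \ruleref{K-UnionFin}, and \ruleref{K-AnySet}. The remaining rules conclude with kinds $*_e$ or $\tyknd{\rho}$, so they are vacuous here. As noted in Appendix~\ref{sec:full_kinds}, we work with the location-only context $\Gamma_\ell$. Any context well-formedness premise $\provesCol \Gamma_{\ell,2}$ is assumed, either implicitly in the statement or as part of the definition of a well-formed substitution.

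The variable cases carry the real content. For \ruleref{K-Var}, we have $t = \alpha$ with $\alpha \knd \kappa_\ell \in \Gamma_{\ell,1}$. Then $t[\sigma] = \sigma(\alpha)$, and the definition of $\wfSubst{\,}{\sigma}{\Gamma_{\ell,1}}{\Gamma_{\ell,2}}$ gives $\chorkinded{\Gamma_{\ell,2}}{\sigma(\alpha)}{\kappa_\ell}$ directly. For \ruleref{K-SubVar}, we have $\alpha \knd \finsetKnd \in \Gamma_{\ell,1}$ and target kind $\setKnd$. The substitution hypothesis gives $\chorkinded{\Gamma_{\ell,2}}{\sigma(\alpha)}{\finsetKnd}$, and we need to weaken this to $\setKnd$. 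This is the subkinding fact $\finsetKnd <: \setKnd$ that the appendix says is admissible (Lemma~\ref{lem:loc-set-subkind}). If that lemma is not yet available at this point, we prove it first by a separate small induction: \ruleref{K-SngFin} maps to \ruleref{K-Sng}, \ruleref{K-UnionFin} maps to \ruleref{K-Union}, and a \ruleref{K-Var} at $\finsetKnd$ maps to \ruleref{K-SubVar}.

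The constant case \ruleref{K-Loc} ($L[\sigma] = L$) and \ruleref{K-AnySet} ($\anyLoc[\sigma] = \anyLoc$) re-derive immediately in $\Gamma_{\ell,2}$. The structural cases follow by the induction hypothesis on the premises, because substitution is homomorphic: $\{\ell\}[\sigma] = \{\ell[\sigma]\}$ and $(\rho_1 \cup \rho_2)[\sigma] = \rho_1[\sigma] \cup \rho_2[\sigma]$. We then reapply the same rule (\ruleref{K-Sng}, \ruleref{K-SngFin}, \ruleref{K-Union}, or \ruleref{K-UnionFin}).

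The main obstacle is the \ruleref{K-SubVar} case, which depends on the admissible subkinding step described above. Everything else is routine.
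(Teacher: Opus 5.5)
Your proof is correct and takes the same approach as the paper, whose entire proof is ``by induction on the kinding derivation.'' Your explicit handling of the \ruleref{K-SubVar} case via Lemma~\ref{lem:loc-set-subkind} is sound, since that lemma is proved independently by its own induction; relying on an implicit $\provesCol \Gamma_{\ell,2}$ for \ruleref{K-AnySet} is a technicality the paper's statement also leaves unstated.
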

\begin{proof}
  By induction on the kinding derivation $\chorkinded{\Gamma_{\ell,1}}{t}{\kappa_\ell}$.
\end{proof}

\begin{lem}[Location Substitution Preserves Kinding]\label{lem:loc-sub-pres-knd}
  If $\wfSubst{\,}{\sigma}{\Gamma_{\ell,1}}{\Gamma_{\ell,2}}$ and $\chorkinded{\Gamma_{\ell,1};\Gamma}{t}{\kappa}$,
  then $\chorkinded{\Gamma_{\ell,2};\Gamma[\sigma]}{t[\sigma]}{\kappa[\sigma]}$.
\end{lem}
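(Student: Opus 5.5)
We prove the statement by induction on the kinding derivation $\chorkinded{\Gamma_{\ell,1};\Gamma}{t}{\kappa}$. In each case we apply the induction hypothesis to the premises and then reassemble the derivation with the same rule at $t[\sigma]$. Throughout, we use that $\sigma$ acts only on location-type variables. It therefore leaves the local and program part $\Gamma$ structurally unchanged, except for rewriting the location sets $\rho'$ appearing in kinds $\tyknd{\rho'}$ inside $\Gamma$, which is exactly what $\Gamma[\sigma]$ records.

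\emph{Location-kinded cases.} For the cases whose conclusion has a location kind $\kappa_\ell$, the premises mention only $\Gamma_{\ell,1}$. These are \ruleref{K-Var} and \ruleref{K-SubVar} on a location variable, \ruleref{K-Loc}, \ruleref{K-Sng}, \ruleref{K-SngFin}, \ruleref{K-Union}, \ruleref{K-UnionFin}, and \ruleref{K-AnySet}. We discharge all of them directly with Lemma~\ref{lem:loc-sub-pres-loc}, noting $\kappa_\ell[\sigma] = \kappa_\ell$.

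\emph{Remaining direct cases.} For \ruleref{K-Var} on a non-location variable $\alpha \knd \kappa \in \Gamma$, we have $\alpha \knd \kappa[\sigma] \in \Gamma[\sigma]$ and $\alpha[\sigma] = \alpha$. For \ruleref{K-Local}, we need a corresponding fact about the local kinding judgment under location substitution. Since location variables enter local types only through $\Loc_\rho$ and $\LocSet_\rho$, we rely on the paper's assumption that the local type system respects well-formed substitutions. \ruleref{K-At}, \ruleref{K-Arrow}, and \ruleref{K-Prod} follow immediately from the induction hypothesis, because the output kind is $\tyknd{\rho}$ for a union $\rho$ of the premise sets and substitution distributes over unions syntactically. \ruleref{K-Sum} additionally needs $\rho_1[\sigma] \cup \rho_2[\sigma] \subseteq \rho[\sigma]$, which is Lemma~\ref{lem:loc-sub-pres-subset}.

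\emph{Binder cases.} \ruleref{K-Rec}, \ruleref{K-All}, and \ruleref{K-AllLoc} extend the context with a bound variable $\alpha$, so we first extend $\sigma$ to $\sigma'$ with $\sigma'(\alpha) = \alpha$, after choosing $\alpha$ fresh for the range of $\sigma$ by $\alpha$-renaming. In \ruleref{K-Rec} and \ruleref{K-All}, the new variable lands in $\Gamma$, so $\sigma'$ still maps $\Gamma_{\ell,1}$ to $\Gamma_{\ell,2}$. For \ruleref{K-Rec}, we also need the extended context to read $\Gamma[\sigma], \alpha \knd \tyknd{\rho[\sigma]}$, and this holds by definition of $\Gamma[\sigma]$. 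In \ruleref{K-AllLoc}, $\alpha$ is added to the location context, and we must show $\wfSubst{\,}{\sigma'}{\Gamma_{\ell,1},\alpha\knd\kappa}{\Gamma_{\ell,2},\alpha\knd\kappa}$. This needs a weakening lemma for location kinding (for the old variables) together with \ruleref{K-Var} (for $\alpha$). The output kind of this rule is $\tyknd{\anyLoc}$, and $\anyLoc[\sigma]=\anyLoc$.

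\emph{Main obstacle.} The obstacle is this binder bookkeeping: capture-avoidance, weakening for location kinding, and confirming that $\sigma'$ agrees with $\sigma$ outside $\alpha$. The last point matters for \ruleref{K-All}, where the bound $\rho$ is kinded in the outer context, so we need $(\rho_\tau \cup \rho)[\sigma]$ with $\alpha \notin \fv{\rho}$. We would settle this by proving weakening first, as a separate induction, and treating parallel substitution under binders uniformly via the lifting $\sigma'$.
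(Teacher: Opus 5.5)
Your proposal is correct and takes the same route as the paper, whose proof is the single line ``by induction on the kinding derivation''. Your case analysis fills in what that line leaves implicit: Lemma~\ref{lem:loc-sub-pres-loc} for the location-kinded rules, Lemma~\ref{lem:loc-sub-pres-subset} for \textsc{K-Sum}, the local-language substitution assumption for \textsc{K-Local}, and lifting $\sigma$ under binders with weakening for \textsc{K-AllLoc}. Two small points. You skip the premise $\provesCol \Gamma$ in \textsc{K-Var}, \textsc{K-AnySet}, and \textsc{K-Local}; it needs the easy fact, via Lemma~\ref{lem:loc-sub-pres-loc}, that $\Gamma[\sigma]$ is well formed under $\Gamma_{\ell,2}$. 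Your concern about \textsc{K-All} is unnecessary, because its bound variable has a non-location kind and so $\sigma$ never touches it.
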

\begin{proof}
  By induction on the kinding derivation $\chorkinded{\Gamma_{\ell,1};\Gamma}{t}{\kappa}$.
\end{proof}

\begin{defn}
    For a location substitution $\sigma$ and a set of locations $\Omega$, say that $\sigma$ does not mention $\Omega$
    (written $\Omega \notin \sigma$) if and only if
    $L \neq \sigma(\alpha)$ and $L \notin \sigma(\alpha)$ for all location-type variables $\alpha$ and $L \in \Omega$.
\end{defn}

\begin{lem}[Unmentioned Substitutions Preserve Equality]\label{lem:equiv-loc-sub-pres-eq}
  If $\Omega \notin \sigma$, then for all $L \in \Omega$, $\ell = L$ if and only if $\ell[\sigma] = L$.
\end{lem}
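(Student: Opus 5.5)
The plan is a case analysis on the syntactic form of the location $\ell$. Here $\ell$ is either a location variable $\beta$ or a concrete location $L' \in \Locations$; the lemma is stated at kind $\locKnd$ and so makes no claim about set expressions. Throughout, fix $L \in \Omega$ and use the hypothesis $\Omega \notin \sigma$: for every location-type variable $\beta$, $\sigma(\beta) \neq L$ and $L \notin \sigma(\beta)$.

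\emph{Case $\ell = L'$ concrete.} Substitution acts only on variables, so $\ell[\sigma] = L'$. Both sides of the biconditional then say $L' = L$, and there is nothing more to prove.

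\emph{Case $\ell = \beta$ a variable.} The left side $\beta = L$ is false, since a variable is syntactically distinct from a concrete location. The right side is $\sigma(\beta) = L$, which the hypothesis $\Omega \notin \sigma$ rules out directly. Both sides are false, so the biconditional holds. (If $\sigma$ leaves $\beta$ unchanged, then $\sigma(\beta) = \beta \neq L$, so the same conclusion holds.)

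There is essentially no obstacle here; this is a direct case split and needs no induction. The one point to take care over is that the lemma concerns syntactic equality of location expressions rather than the set-containment relation of Appendix~\ref{sec:set-relations}. So the only part of $\Omega \notin \sigma$ that matters is $L \neq \sigma(\alpha)$; the clause $L \notin \sigma(\alpha)$ is reserved for the analogous result about containment $L \in \rho$, which would be proven by induction on $\rho$ using the rules for $\in$.
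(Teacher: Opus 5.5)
Your proposal is correct. The paper states this lemma without proof, and your case split on whether $\ell$ is a concrete location or a location variable is the evident intended argument. You are also right that only the clause $L \neq \sigma(\alpha)$ of $\Omega \notin \sigma$ is needed here.
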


\begin{lem}[Unmentioned Substitutions Preserve Containment]\label{lem:equiv-loc-sub-pres-in}
  If $\Omega \notin \sigma$, then for all $L \in \Omega$, $L \in \rho$ if and only if $L \in \rho[\sigma]$.
\end{lem}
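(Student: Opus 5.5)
The claim to prove is Lemma~\ref{lem:equiv-loc-sub-pres-in}: if $\Omega \notin \sigma$ and $L \in \Omega$, then $L \in \rho$ if and only if $L \in \rho[\sigma]$. I would prove it by structural induction on $\rho$. The containment relation of Appendix~\ref{sec:set-relations} is generated by four rules, one for each syntactic form that can appear: a singleton $\{\ell\}$, a union $\rho_1 \cup \rho_2$, $\anyLoc$, and a bare type variable~$\alpha$.

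The forward direction needs no new argument. It is exactly Lemma~\ref{lem:loc-sub-pres-in} instantiated at $\ell = L$, using $L[\sigma] = L$ since $L$ is a concrete location. All the work is therefore in the backward direction, $L \in \rho[\sigma] \Rightarrow L \in \rho$, which I would prove by cases on $\rho$.

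\begin{itemize}
\item If $\rho = \anyLoc$, then $L \in \anyLoc$ holds outright, so there is nothing to show.
\item If $\rho = \rho_1 \cup \rho_2$, then $\rho[\sigma] = \rho_1[\sigma] \cup \rho_2[\sigma]$. By inversion, $L$ lies in one of the two components, and the induction hypothesis transfers that membership back to $\rho_1$ or $\rho_2$.
\item If $\rho = \{\ell\}$, then $\rho[\sigma] = \{\ell[\sigma]\}$. Inversion on $L \in \{\ell[\sigma]\}$ gives $\ell[\sigma] = L$. Lemma~\ref{lem:equiv-loc-sub-pres-eq} then yields $\ell = L$, and hence $L \in \{\ell\}$.
\item If $\rho = \alpha$ is a set variable, then $\rho[\sigma] = \sigma(\alpha)$. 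The hypothesis $\Omega \notin \sigma$ states directly that $L \notin \sigma(\alpha)$, so the premise is impossible and the case holds vacuously.
\end{itemize}

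The main obstacle is the variable case, together with Lemma~\ref{lem:equiv-loc-sub-pres-eq}, which the singleton case depends on. Lemma~\ref{lem:equiv-loc-sub-pres-eq} is stated without proof in the paper, so I would prove it first by cases on $\ell$.

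\begin{itemize}
\item If $\ell = L'$ is concrete, substitution leaves it unchanged, so $\ell[\sigma] = L$ is equivalent to $\ell = L$.
\item If $\ell = \alpha$, then $\alpha \neq L$ syntactically. The hypothesis $\Omega \notin \sigma$ also gives $\sigma(\alpha) \neq L$. So both sides of the equivalence are false and it holds.
\end{itemize}

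One subtlety needs care: when $\sigma(\alpha)$ is $\anyLoc$, the definition of $\Omega \notin \sigma$ must be read so that "$L \notin \sigma(\alpha)$" forbids it, because $L \in \anyLoc$ always holds. With the paper's definition taken literally, the hypothesis already excludes this case, so the backward direction of the variable case remains vacuous. Everything else is routine inversion on the inductive containment rules.
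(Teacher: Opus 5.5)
Your proof is correct. It is the structural induction on $\rho$ that the paper leaves implicit: the paper states this lemma without proof, as it does Lemma~\ref{lem:equiv-loc-sub-pres-eq}, and proves the companion Lemma~\ref{lem:loc-sub-pres-in} simply ``by induction on $\rho$''. You rightly delegate the forward direction to Lemma~\ref{lem:loc-sub-pres-in}.

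One small correction to your setup: containment has no rule for a bare variable. Its four rules are the singleton rule, two union rules, and the $\anyLoc$ rule, which is exactly why $L \in \alpha$ never holds. You could also add the trivial $\varnothing$ case, which arises from set difference. Neither point affects the argument.
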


\begin{lem}[Unmentioned Substitutions Preserve Containment in Named Locations]\label{lem:equiv-loc-sub-pres-in-named}
  If $\Omega \notin \sigma$, then for all $L \in \Omega$, $L \in \nl{\rho}$ if and only if $L \in \nl{\rho[\sigma]}$.
\end{lem}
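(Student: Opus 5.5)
The statement to prove is the last lemma: if $\Omega \notin \sigma$, then for every $L \in \Omega$ we have $L \in \nl{\rho}$ if and only if $L \in \nl{\rho[\sigma]}$. I plan to proceed by structural induction on the location-set expression $\rho$, with $L \in \Omega$ fixed throughout. Unlike Lemma~\ref{lem:equiv-loc-sub-pres-in}, containment here is in the concrete set $\nl{\cdot}$ computed homomorphically. So the argument is an ordinary set-membership calculation and needs no reasoning about the necessity-modality relation $\in$.

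The base cases go as follows.
\begin{itemize}
\item For $\rho = L'$ a concrete location, substitution is the identity, since $L'[\sigma] = L'$, so both sides coincide. The same holds for $\rho = \anyLoc$, where both sides are $\varnothing$.
\item For $\rho = \alpha$ a variable, $\nl{\alpha} = \varnothing$, so the left side is false. We must therefore show $L \notin \nl{\sigma(\alpha)}$.
\end{itemize}
For the variable case I will first prove an auxiliary claim by induction on an arbitrary location (set) expression $t$: if $L \neq t$ and $L \notin t$ (in the paper's syntactic sense), then $L \notin \nl{t}$. The argument is as follows.
\begin{itemize}
\item If $t = L'$ or $t = \{L'\}$, then $\nl{t} = \{L'\}$, and $L' \neq L$ follows from the hypothesis.
\item If $t = t_1 \cup t_2$, then $L \notin t$ gives $L \notin t_1$ and $L \notin t_2$ by the union rules for $\in$. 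The claim then follows from the induction hypotheses.
\item If $t = \anyLoc$ or $t$ is a variable, then $\nl{t} = \varnothing$.
\end{itemize}
Applying this claim to $t = \sigma(\alpha)$ is legitimate because $\Omega \notin \sigma$ gives exactly $L \neq \sigma(\alpha)$ and $L \notin \sigma(\alpha)$.

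The inductive cases follow the definition of $\nl{\cdot}$.
\begin{itemize}
\item For $\rho = \{\ell\}$, we have $\nl{\{\ell\}} = \nl{\ell}$ and $\{\ell\}[\sigma] = \{\ell[\sigma]\}$, so this reduces to the case of $\ell$. That case is either a concrete location or a variable, both already handled.
\item For $\rho = \rho_1 \cup \rho_2$, substitution distributes, giving $(\rho_1\cup\rho_2)[\sigma] = \rho_1[\sigma] \cup \rho_2[\sigma]$, and $\nl{\cdot}$ distributes over unions. The biconditional is therefore the disjunction of the two induction hypotheses.
\end{itemize}

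The main (mild) obstacle is the variable case. There one must check that the negative side conditions in $\Omega \notin \sigma$ are strong enough to rule out $L$ hiding inside a composite $\sigma(\alpha)$ such as $\{L\} \cup \beta$. The auxiliary claim above handles this, relying on the fact that the syntactic $\in$ relation detects every concrete name occurring in a union.
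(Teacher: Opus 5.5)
Your proof is correct, and it takes the evident intended route. The paper states this lemma without proof, but structural induction on $\rho$ is how it handles the companion substitution lemmas, and your auxiliary claim correctly turns the two conditions in $\Omega \notin \sigma$, namely $L \neq \sigma(\alpha)$ and $L \notin \sigma(\alpha)$, into $L \notin \nl{\sigma(\alpha)}$ for the variable case.

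Two small details are worth tidying. First, the auxiliary induction should also list $t = \{\beta\}$, where $\nl{t} = \varnothing$. Second, in its union step you need $L \neq t_i$ to apply the induction hypothesis. That holds only because a union operand of a well-kinded set is never a bare concrete location. The lemma itself tacitly needs this well-kindedness assumption: an ill-kinded $\sigma(\alpha) = L \cup \{L'\}$ with $L' \neq L$ satisfies $\Omega \notin \sigma$, yet $L \in \nl{\sigma(\alpha)}$.
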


\begin{lem}[Unmentioned Substitutions Preserve Disjointness]\label{lem:equiv-loc-sub-pres-disjoint}
  If $\Omega \notin \sigma$, then $\Omega \cap \rho = \varnothing$ if and only if $\Omega \cap \rho[\sigma] = \varnothing$.
\end{lem}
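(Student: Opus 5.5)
The statement to prove is Lemma~\ref{lem:equiv-loc-sub-pres-disjoint}: if $\Omega \notin \sigma$, then $\Omega \cap \rho = \varnothing$ holds exactly when $\Omega \cap \rho[\sigma] = \varnothing$.

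I would derive it directly from Lemma~\ref{lem:equiv-loc-sub-pres-in}. By the definition in Appendix~\ref{sec:set-relations}, disjointness is the statement $\forall \ell \ldotp \neg(\ell \in \Omega \wedge \ell \in \rho)$. Here $\Omega$ is a finite set of concrete locations, regarded as a closed location set. The only $\ell$ with $\ell \in \Omega$ are therefore concrete names $L \in \Omega$: containment in a union of concrete singletons forces $\ell$ to be syntactically one of them, since the containment rules only derive $\ell \in \{\ell\}$. So disjointness reduces to the following.
\[ \Omega \cap \rho = \varnothing \iff \forall L \in \Omega \ldotp L \notin \rho. \]
The same reduction applies to $\rho[\sigma]$. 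This first step is a small inversion on the containment rules for $\Omega$.

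With both sides in this form, I fix $L \in \Omega$. Lemma~\ref{lem:equiv-loc-sub-pres-in} gives $L \in \rho \iff L \in \rho[\sigma]$ under the hypothesis $\Omega \notin \sigma$. Negating both sides and quantifying over $L \in \Omega$ yields the claimed equivalence in both directions at once.

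If Lemma~\ref{lem:equiv-loc-sub-pres-in} is not yet proven, I would prove it first by induction on $\rho$.
\begin{itemize}
  \item For $\{\ell\}$: use Lemma~\ref{lem:equiv-loc-sub-pres-eq}.
  \item For a union: apply the induction hypothesis on each side.
  \item For $\anyLoc$: both sides hold trivially.
  \item For a variable $\alpha$: $L \notin \alpha$ holds by definition. The hypothesis $\Omega \notin \sigma$ says $L \notin \sigma(\alpha)$ and $L \neq \sigma(\alpha)$, so $L \notin \alpha[\sigma]$.
\end{itemize}

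The main subtlety is the variable case. The hypothesis must be read with care when $\sigma(\alpha)$ is a set containing $\anyLoc$. Then $L \in \sigma(\alpha)$ would hold, so such substitutions are already excluded by $\Omega \notin \sigma$, and the argument still goes through. The reduction of disjointness to concrete members of $\Omega$ is also worth stating carefully, because containment is defined syntactically.
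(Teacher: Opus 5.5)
Your proof is correct: reducing disjointness with the concrete set $\Omega$ to non-membership of each $L \in \Omega$ in $\rho$, and then applying Lemma~\ref{lem:equiv-loc-sub-pres-in}, is all that is needed. That lemma is a routine induction on $\rho$ whose variable case is discharged exactly by the hypothesis $\Omega \notin \sigma$, and since the paper states this result without proof, directly after the containment lemma, your derivation is evidently the intended one.
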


\begin{lem}[Unmentioned Substitutions Preserve Disjointness in Named Locations]\label{lem:equiv-loc-sub-pres-disjoint-named}
  If $\Omega \notin \sigma$, then $\Omega \cap \nl{\rho} = \varnothing$ if and only if $\Omega \cap \nl{\rho[\sigma]} = \varnothing$.
\end{lem}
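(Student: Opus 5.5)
The statement to prove is Lemma~\ref{lem:equiv-loc-sub-pres-disjoint-named}: if $\Omega \notin \sigma$, then $\Omega \cap \nl{\rho} = \varnothing$ if and only if $\Omega \cap \nl{\rho[\sigma]} = \varnothing$. Both sides are concrete finite sets of location names, so ordinary set reasoning applies. The plan is to reduce the claim pointwise to Lemma~\ref{lem:equiv-loc-sub-pres-in-named}.

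Disjointness of $\Omega$ and $\nl{\rho}$ means that no $L \in \Omega$ lies in $\nl{\rho}$. Lemma~\ref{lem:equiv-loc-sub-pres-in-named} states that, for every $L \in \Omega$, $L \in \nl{\rho}$ holds iff $L \in \nl{\rho[\sigma]}$. Quantifying this equivalence over $L \in \Omega$ gives
\[
  \Omega \cap \nl{\rho} = \varnothing \iff \Omega \cap \nl{\rho[\sigma]} = \varnothing,
\]
which is the lemma.

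If Lemma~\ref{lem:equiv-loc-sub-pres-in-named} itself needs justification, I would prove it by induction on $\rho$, handling each syntactic form in turn.

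\begin{enumerate}
  \item For a concrete location $L'$, substitution does nothing, so the claim is immediate.
  \item For $\anyLoc$, both $\nl{\anyLoc}$ and $\nl{\anyLoc[\sigma]}$ are $\varnothing$, so both sides are false.
  \item For $\{\ell\}$ and $\rho_1 \cup \rho_2$, the claim follows directly from the inductive hypotheses, since $\nl{\cdot}$ is homomorphic on these forms.
  \item The only substantive case is a variable $\alpha$. Here $\nl{\alpha} = \varnothing$, so the goal is that no $L \in \Omega$ lies in $\nl{\sigma(\alpha)}$. The hypothesis $\Omega \notin \sigma$ gives $L \neq \sigma(\alpha)$ and $L \notin \sigma(\alpha)$.
\end{enumerate}

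The main obstacle is that last step: bridging the syntactic containment $\in$ of Appendix~\ref{sec:set-relations} and membership in $\nl{\cdot}$. The relation $\in$ is the ``necessity'' containment, and $\nl{\cdot}$ collects concrete names syntactically. For a concrete $L$, I would show by a small induction on $\rho'$ that $L \in \nl{\rho'}$ implies $L \in \rho'$ or $L = \rho'$; the case $\rho' = \{L\}$ gives $L \in \{L\}$, and unions propagate through the inductive hypotheses. Applying this with $\rho' = \sigma(\alpha)$ and using $\Omega \notin \sigma$ then rules out $L \in \nl{\sigma(\alpha)}$.
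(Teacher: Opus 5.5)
Your proof is correct and follows the route the paper evidently intends. The paper states this lemma without proof, immediately after Lemma~\ref{lem:equiv-loc-sub-pres-in-named}, and quantifying that lemma over $L \in \Omega$, as you do, is the one-line justification. Your induction for the containment lemma, using the bridge $L \in \nl{\rho'} \Rightarrow L \in \rho'$ in the variable case, correctly fills in what the paper omits. Its union step tacitly assumes $\sigma(\alpha)$ is a well-kinded location set, so that union components are never bare locations; the paper's statement needs the same assumption.
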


\begin{lem}[Context Projection and Location Substitution Commute]\label{lem:loc-sub-proj-comm}
  If $\sigma$ is a location substitution, $\Delta_e$ is a local context, and $\rho$ is a location set, then
  $(\proj{\Delta_e}{\rho})[\sigma] \subseteq \proj{\Delta_e[\sigma]}{\rho[\sigma]}$.
\end{lem}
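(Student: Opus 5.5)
The plan is induction on the structure of the local context $\Delta_e$, reading the inclusion $\subseteq$ between local contexts as containment of bindings: every binding $x \ty t$ in $(\proj{\Delta_e}{\rho})[\sigma]$ also appears in $\proj{\Delta_e[\sigma]}{\rho[\sigma]}$. Here the substitution acts pointwise on each binding $\rho'.x \ty t_e$. Since $\sigma$ is a location substitution, it rewrites both the annotation $\rho'$ and any location names inside $t_e$ (for example in $\Loc_{\rho''}$), giving $\rho'[\sigma].x \ty t_e[\sigma]$. Projection, in turn, is defined clause by clause on the last binding, so the induction follows that definition directly.

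The base case $\Delta_e = \cdot$ is immediate, since both sides are empty. In the inductive step, $\Delta_e = \Delta_e', \rho'.x \ty t_e$, and I will case split on whether $\rho \subseteq \rho'$.

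\begin{itemize}
\item \emph{Case $\rho \subseteq \rho'$.} The left side is $(\proj{\Delta_e'}{\rho})[\sigma], x \ty t_e[\sigma]$. By Lemma~\ref{lem:loc-sub-pres-subset}, $\rho[\sigma] \subseteq \rho'[\sigma]$, so the right side is $\proj{\Delta_e'[\sigma]}{\rho[\sigma]}, x \ty t_e[\sigma]$. The claim then follows from the induction hypothesis, since extending both contexts by the same binding preserves inclusion.
\item \emph{Case $\rho \not\subseteq \rho'$.} The left side drops the binding, giving $(\proj{\Delta_e'}{\rho})[\sigma]$. The right side either keeps or drops $x \ty t_e[\sigma]$, depending on whether $\rho[\sigma] \subseteq \rho'[\sigma]$. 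Either way it contains $\proj{\Delta_e'[\sigma]}{\rho[\sigma]}$, which contains the left side by the induction hypothesis.
\end{itemize}

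This second case is why the statement is only an inclusion and not an equality. A substitution can make a previously non-derivable subset hold, e.g. $\{\alpha\} \not\subseteq \{\Alice\}$ but $\{\alpha\}[\alpha \mapsto \Alice] \subseteq \{\Alice\}$.

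The main obstacle is a subtlety in how contexts are compared. If a variable $x$ appears twice in $\Delta_e$ (shadowing), the right side may retain an extra, later binding of $x$ that shadows the one inherited from the left, so a naive "extension" reading of $\subseteq$ would fail. I will handle this either by assuming the usual Barendregt convention that the variables bound in $\Delta_e$ are distinct, or by taking $\subseteq$ to mean that every lookup $x \ty t$ succeeding on the left succeeds on the right with the same type. Under the first reading the two-case argument goes through as stated. I expect this bookkeeping, rather than any set-theoretic reasoning, to be the only delicate point. The only set-theoretic fact needed is Lemma~\ref{lem:loc-sub-pres-subset}.
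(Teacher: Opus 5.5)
Your argument is correct and is essentially the paper's proof: induction on $\Delta_e$, a case split on whether $\rho \subseteq \rho'$, Lemma~\ref{lem:loc-sub-pres-subset} to keep the binding on both sides in the first case, and, in the second case, the observation that the right side contains the left whether or not it keeps the binding. On your shadowing aside (which the paper ignores), only the distinct-variables convention works, because under the lookup reading the statement genuinely fails: for $\Delta_e = \{\alpha\}.x \ty \Int, \{\Alice\}.x \ty \Bool$, $\rho = \{\alpha\}$ and $\sigma = [\alpha \mapsto \Alice]$, the left side is $x \ty \Int$, while the right side gains a later binding $x \ty \Bool$ that shadows $x \ty \Int$.
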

\begin{proof}
  By induction on $\Delta_e$.
  If $\Delta_e = \cdot$, the claim is trivial.
  Otherwise suppose that $\Delta_e = \rho'.x \ty t_e, \Delta_e'$.
  If $\rho \subseteq \rho'$, then $\proj{\Delta_e}{\rho} = x \ty t_e, \proj{\Delta_e'}{\rho}$,
  and so
  \[ (\proj{\Delta_e}{\rho})[\sigma] = x \ty t_e[\sigma], (\proj{\Delta_e'}{\rho})[\sigma] \subseteq x \ty t_e[\sigma], \proj{\Delta_e'[\sigma]}{\rho[\sigma]} \]
  by induction.
  By Lemma~\ref{lem:loc-sub-pres-subset}, $\rho[\sigma] \subseteq \rho'[\sigma]$, so
  \[ \proj{\Delta_e[\sigma]}{\rho[\sigma]} = \proj{(\rho'[\sigma].x \ty t_e[\sigma], \Delta_e'[\sigma])}{\rho[\sigma]} = x \ty t_e[\sigma], \proj{\Delta_e'[\sigma]}{\rho[\sigma]} \]
  as desired.
  Otherwise suppose that $\rho \not\subseteq \rho'$.
  In this case,
  \[ (\proj{\Delta_e}{\rho})[\sigma] = (\proj{\Delta_e'}{\rho})[\sigma] \subseteq \proj{\Delta_e'[\sigma]}{\rho[\sigma]}. \]
  We could have either $\rho[\sigma] \subseteq \rho'[\sigma]$---in which case $\proj{\Delta_e[\sigma]}{\rho[\sigma]} = x \ty t_e[\sigma], \proj{\Delta_e'[\sigma]}{\rho[\sigma]}$ as before---or
  $\rho[\sigma] \not\subseteq \rho'[\sigma]$, wherein $\proj{\Delta_e[\sigma]}{\rho[\sigma]} = \proj{\Delta_e'[\sigma]}{\rho[\sigma]}$.
  In either instance, $(\proj{\Delta_e}{\rho})[\sigma] \subseteq \proj{\Delta_e[\sigma]}{\rho[\sigma]}$, completing the proof.
\end{proof}

\begin{lem}[Location Substitution Preserves Augmented Typing]\label{lem:loc-sub-pres-typ}
  If $\wfSubst{\,}{\sigma}{\Gamma_{\ell,1}}{\Gamma_{\ell,2}}$,
  $\chortypedplus{\Gamma_{\ell,1};\Gamma;\Delta_e;\Delta}{C}{\tau}{\rho}$,
  and $\spawnedlocs{C} \notin \sigma$,
  then $\chortypedplus{\Gamma_{\ell,2};\Gamma[\sigma];\Delta_e[\sigma];\Delta[\sigma]}{C[\sigma]}{\tau[\sigma]}{\rho[\sigma]}$.
\end{lem}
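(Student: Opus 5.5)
We proceed by induction on the derivation of $\chortypedplus{\Gamma_{\ell,1};\Gamma;\Delta_e;\Delta}{C}{\tau}{\rho}$, proving the statement simultaneously for all $\Gamma_{\ell,2}$ and all $\sigma$ with $\wfSubst{\,}{\sigma}{\Gamma_{\ell,1}}{\Gamma_{\ell,2}}$ and $\spawnedlocs{C} \notin \sigma$. Generalizing over $\sigma$ is essential, because the binding rules (\ruleref{S-TFunLoc}, \ruleref{S-LetLoc}, \ruleref{S-LetLocSet}, \ruleref{S-Fork}, \ruleref{S-TAppLoc}) extend $\Gamma_\ell$. Under such a binder we use the lifted substitution $\sigma' = \sigma[\alpha \mapsto \alpha]$, choosing $\alpha$ fresh for the range of $\sigma$. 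Then $\wfSubst{\,}{\sigma'}{\Gamma_{\ell,1},\alpha \knd \kappa}{\Gamma_{\ell,2},\alpha \knd \kappa}$ holds by weakening of kinding. Also $\spawnedlocs{C'} \notin \sigma'$ holds for every subterm $C'$: since $\spawnedlocs{C'} \subseteq \spawnedlocs{C}$, and $\sigma'$ only adds the variable $\alpha$ to the range, it mentions no concrete location.

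In each case the typing premises are handled by the induction hypothesis. The kinding premises follow from Lemmas~\ref{lem:loc-sub-pres-loc} and~\ref{lem:loc-sub-pres-knd}, and the subset and containment side conditions (e.g.\ $\rho_1 \subseteq \rho_2$, $\ell_1 \in \rho_1$) from Lemmas~\ref{lem:loc-sub-pres-subset} and~\ref{lem:loc-sub-pres-in}. In \ruleref{S-Done}, the local typing premise $\localtyped{\proj{\ctx}{\rho}}{e}{t_e}$ becomes a judgment under $\proj{\ctx[\sigma]}{\rho[\sigma]}$. We obtain it from the local language's preservation of well-formed substitutions, followed by weakening along the inclusion $(\proj{\Delta_e}{\rho})[\sigma] \subseteq \proj{\Delta_e[\sigma]}{\rho[\sigma]}$ of Lemma~\ref{lem:loc-sub-proj-comm}. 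The participant annotations in the conclusions commute with substitution syntactically, e.g.\ $(\rho \cup \rho' \cup \rho_1)[\sigma] = \rho[\sigma] \cup \rho'[\sigma] \cup \rho_1[\sigma]$. For rules that remove a bound variable, such as $\rho' \setminus \{\alpha\}$ in \ruleref{S-Fork} and \ruleref{S-LetLoc}, we additionally need $(\rho' \setminus \{\alpha\})[\sigma'] = (\rho'[\sigma']) \setminus \{\alpha\}$. This holds for fresh $\alpha$ by a small induction on $\rho'$, which we prove as an auxiliary lemma. In \ruleref{S-TAppLoc} we also need the usual substitution-composition identity $\tau[\alpha\mapsto t][\sigma] = \tau[\sigma'][\alpha \mapsto t[\sigma]]$.

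The main obstacle is the augmented side conditions, of the form $\nl{\rho_i} \cap \spawnedlocs{C_j} = \varnothing$ and $L \notin \spawnedlocs{C}$. By Lemma~\ref{lem:loc-sub-chor-spawnedlocs}, $\spawnedlocs{C_j[\sigma]} = \spawnedlocs{C_j}$, so the right-hand sets are unchanged. For the left-hand sets we apply Lemma~\ref{lem:equiv-loc-sub-pres-disjoint-named} with $\Omega = \spawnedlocs{C}$, which transfers disjointness from $\nl{\rho_i}$ to $\nl{\rho_i[\sigma]}$. This is exactly where the hypothesis $\spawnedlocs{C} \notin \sigma$ is used: without it, $\sigma$ could map a variable to a spawned thread and break the scoping premises. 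The remaining conditions are handled directly. In \ruleref{S-Kill}, $L[\sigma] = L$ and $L \notin \spawnedlocs{C[\sigma]}$. The equality $\spawnedlocs{C_1} = \spawnedlocs{C_2}$ in the case rules is preserved by Lemma~\ref{lem:loc-sub-chor-spawnedlocs}. The conditions $\spawnedlocs{C} = \varnothing$ in the function rules are likewise preserved by that lemma.

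Finally, \ruleref{S-Fork} needs care in two places. First, the context entry $\{\ell,\alpha\}.x \ty \Loc_\alpha$ substitutes to $\{\ell[\sigma],\alpha\}.x \ty \Loc_\alpha$. Second, the kinding premise for $\tau$ without $\alpha$ must be applied with $\sigma$ itself rather than $\sigma'$. Both are routine once $\alpha$ is chosen fresh.
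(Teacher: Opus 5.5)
Your proposal follows the paper's proof. It does induction on the augmented typing derivation, discharges kinding, subset and containment premises by the substitution lemmas, handles \ruleref{S-Done} via Lemma~\ref{lem:loc-sub-proj-comm} plus local weakening, and handles the scoping premises via Lemma~\ref{lem:loc-sub-chor-spawnedlocs} and Lemma~\ref{lem:equiv-loc-sub-pres-disjoint-named}; your treatment of binders, set difference, and substitution composition spells out what the paper leaves implicit. Two small fixes are needed. First, apply Lemma~\ref{lem:equiv-loc-sub-pres-disjoint-named} with $\Omega = \spawnedlocs{C_j}$, which inherits $\spawnedlocs{C_j} \notin \sigma$, not $\Omega = \spawnedlocs{C}$: $\nl{\rho_i}$ is only known to be disjoint from $\spawnedlocs{C_j}$, and in general meets $\spawnedlocs{C}$. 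Second, in \ruleref{S-Done} you also need $\val{e} \Leftrightarrow \val{e[\sigma]}$ so that the participant annotation $\rho'$ transports correctly; the paper states this explicitly.
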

\begin{proof}
  By induction on the typing derivation $\chortypedplus{\Gamma_{\ell,1};\Gamma;\Delta_e;\Delta}{C}{\tau}{\rho}$.
  In the following we denote $\ctx_1 = \Gamma_{\ell,1};\Gamma;\Delta_e;\Delta$ and
  $\ctx_2 = \Gamma_{\ell,2};\Gamma[\sigma];\Delta_e[\sigma];\Delta[\sigma]$ for simplicity.

  \begin{itemize}
    \item (\ruleref{S-Var})
    As $X : \tau[\sigma] \in \Delta[\sigma]$, we have that $\chortypedplus{\ctx_2}{X}{\tau[\sigma]}{\varnothing}$ as desired.

    \item (\ruleref{S-Done})
    By Lemma~\ref{lem:loc-sub-proj-comm} we have $(\proj{\Delta_e}{\rho})[\sigma] \subseteq \proj{\Delta_e[\sigma]}{\rho[\sigma]}$.
    Therefore by weakening and location substitution of the local type system $\localtyped{\Gamma_{\ell,2};\Gamma[\sigma];\proj{\Delta_e[\sigma]}{\rho[\sigma]}}{e[\sigma]}{t[\sigma]}$ as desired.
    As well, because $\sigma$ is a type substitution, $e$ is a value if and only if $e[\sigma]$ is a value, meaning both $\rho[\sigma].e[\sigma]$ and $\rho.e$ have participant set $\rho[\sigma]$ and $\rho$, respectively, or both have $\varnothing$.

    \item (\ruleref{S-Fun})
    By induction, $\chortypedplus{\ctx_2, F \ty \tau_1[\sigma] \arr{\rho[\sigma]} \tau_2[\sigma], X \ty \tau_1[\sigma]}{C[\sigma]}{\tau[\sigma]}{\rho[\sigma]}$,
    and by Lemma~\ref{lem:loc-sub-chor-spawnedlocs}, $\spawnedlocs{C[\sigma]} = \spawnedlocs{C} = \varnothing$,
    so $\chortypedplus{\ctx_2}{\Fun{\rho'[\sigma]}{F}{X}{C[\sigma]}}{\tau_1[\sigma] \arr{\rho[\sigma]} \tau_2[\sigma]}{\varnothing}$.

    \item (\ruleref{S-App})
    As $\spawnedlocs{C_1} \cup \spawnedlocs{C_2} \notin \sigma$, we have that both $\spawnedlocs{C_1} \notin \sigma$ and $\spawnedlocs{C_2} \notin \sigma$.
    Thus by induction, $\chortypedplus{\ctx_2}{C_1[\sigma]}{\tau_1[\sigma] \arr{\rho[\sigma]} \tau_2[\sigma]}{\rho_1[\sigma]}$
    and $\chortypedplus{\ctx_2}{C_2[\sigma]}{\tau_1[\sigma]}{\rho_2[\sigma]}$.
    We now show that $\spawnedlocs{C_1[\sigma]} = \spawnedlocs{C_1}$ and $\nl{\rho'[\sigma]} \subseteq \nl{\rho'} \cup \nl{\sigma}$ are disjoint.
    Indeed, $\nl{\rho'}$ is already disjoint with $\spawnedlocs{C_1}$ by assumption,
    and $\nl{\sigma}$ is disjoint with $\spawnedlocs{C_1}$ because $\spawnedlocs{C_1} \notin \sigma$.
    The same is true for $\spawnedlocs{C_2[\sigma]}$ and $\nl{\rho[\sigma]}$.
    Therefore $\chortypedplus{\ctx_2}{C_1[\sigma] \appchor{\rho'[\sigma]} C_2[\sigma]}{\tau_2[\sigma]}{\rho_1[\sigma] \cup \rho_2[\sigma] \cup \rho'[\sigma]}$
    as desired.

    \item (\ruleref{S-TFunLoc}, \ruleref{S-TFun})
    Suppose $\chortypedplus{\ctx_1, F \ty \forall \alpha \knd \kappa_\ell [\rho] \ldotp \tau, \alpha \knd \kappa_\ell}{C}{\tau}{\rho}$.
    Then by induction, $\chortypedplus{\ctx_2, F \ty \forall \alpha \knd \kappa_\ell [\rho[\sigma]] \ldotp \tau[\sigma], \alpha \knd \kappa_\ell}{C[\sigma]}{\tau[\sigma]}{\rho[\sigma]}$.
    As well, $\spawnedlocs{C[\sigma]} = \spawnedlocs{C} = \varnothing$ by assumption,
    so $\chortypedplus{\ctx_2}{\TFunLoc{F}{\alpha \knd \kappa_\ell}{C[\sigma]}}{\forall \alpha \knd \kappa_\ell [\rho[\sigma]] \ldotp \tau[\sigma]}{\varnothing}$.
    The case for \ruleref{S-TFun} is similar.

    \item (\ruleref{S-TAppLoc}, \ruleref{S-TApp})
    By induction $\chortypedplus{\ctx_2}{C[\sigma]}{\forall \alpha \knd \kappa_\ell [\rho[\sigma]] \ldotp \tau[\sigma]}{\rho_1[\sigma]}$,
    and $\chorkinded{\Gamma_{\ell,2}}{t[\sigma]}{\kappa_\ell}$ by Lemma~\ref{lem:loc-sub-pres-loc}.
    Showing that $\spawnedlocs{C[\sigma]} = \spawnedlocs{C}$ and $\nl{\rho'[\sigma]} \subseteq \nl{\rho'} \cup \nl{\sigma}$ are disjoint
    is similar to the argument for \ruleref{S-App} and utilizes Lemma~\ref{lem:equiv-loc-sub-pres-disjoint-named}.
    Therefore $\chortypedplus{\ctx_2}{C[\sigma] \appchor{\rho'[\sigma]} t[\sigma]}{\tau[\alpha \mapsto t][\sigma]}{\rho_1[\sigma] \cup \rho'[\sigma]}$ as desired.
    The case for \ruleref{T-TApp} is similar.

    \item (\ruleref{S-Pair})
    By induction, $\chortypedplus{\ctx_2}{C_1[\sigma]}{\tau_1[\sigma]}{\rho_1[\sigma]}$
    and $\chortypedplus{\ctx_2}{C_2[\sigma]}{\tau_2[\sigma]}{\rho_2[\sigma]}$.
    Showing that $\nl{\rho_1[\sigma]} \cap \spawnedlocs{C_2[\sigma]} = \varnothing$ and
    $\nl{\rho_2[\sigma]} \cap \spawnedlocs{C_1[\sigma]} = \varnothing$ is identical to the argument for \ruleref{S-App}.
    Thus $\chortypedplus{\ctx_2}{(C_1[\sigma],C_2[\sigma])_{\rho[\sigma]}}{\tau_1[\sigma] \times \tau_2[\sigma]}{\rho_1[\sigma] \cup \rho_2[\sigma]}$
    as desired.
    The arguments for the other algebraic data type constructors and eliminators are similar.








    \item (\ruleref{S-LetLoc}, \ruleref{S-LetLocSet}, \ruleref{S-LetLocal})
    By induction, $\chortypedplus{\ctx_2}{C_1[\sigma]}{\Loc_{\rho_1[\sigma]} @ \rho_3[\sigma]}{\rho[\sigma]}$
    and $\chortypedplus{\ctx_2, \alpha \knd \locKnd}{C_2[\sigma]}{\tau[\sigma]}{\rho'[\sigma]}$.
    By preservation of $\subseteq$ under substitution, $\rho_1[\sigma] \subseteq \rho_2[\sigma] \subseteq \rho_3[\sigma]$.
    Showing disjointness of spawned locations is similar to in the prior cases.
    Thus $\chortypedplus{\ctx_2}{\LetIn{\rho_2[\sigma].\alpha \knd \locKnd}{C_1[\sigma]}{C_2[\sigma]}}{\tau[\sigma]}{\rho[\sigma] \cup (\rho'[\sigma] \setminus \alpha) \cup \rho_2[\sigma]}$
    as desired.
    The cases for \ruleref{S-LetLocSet}, and \ruleref{S-LetLocal} are analogous.

    \item (\ruleref{S-Send}, \ruleref{S-Sync})
    By induction, $\chortypedplus{\ctx_2}{C[\sigma]}{t_e[\sigma] @ \rho_1[\sigma]}{\rho[\sigma]}$.
    As containment is preserved under substitution, $\ell_1[\sigma] \in \rho_1[\sigma]$.
    Showing disjointness of spawned locations is similar to in the prior cases.
    Therefore $\chortypedplus{\ctx_2}{C[\sigma] \ChorSend[{\ell[\sigma]}] \rho_2[\sigma]}{t_e[\sigma] @ (\rho_1[\sigma] \cup \rho_2[\sigma])}{\rho[\sigma] \cup \{\ell[\sigma]\} \cup \rho_2[\sigma]}$.
    The argument for \ruleref{S-Sync} is similar.

    \item (\ruleref{S-Fork})
    By induction $\chortypedplus{\ctx_2, \alpha \knd \locKnd, \{\alpha,\ell[\sigma]\}.x \ty \Loc_\alpha}{C[\sigma]}{\tau[\sigma]}{\rho[\sigma]}$,
    and showing disjointness of spawned locations is similar to in the prior cases,
    so $\chortypedplus{\ctx_2}{\Fork{(\alpha,x)}{\ell[\sigma]}{C[\sigma]}}{\tau[\sigma]}{\rho[\sigma]}$ as desired.

    \item (\ruleref{S-Kill})
    By induction, $\chortypedplus{\ctx_2}{C[\sigma]}{\tau}{\rho}$.
    As $L \notin \sigma$ and $L \notin \nl{\tau}$, using Lemma~\ref{lem:equiv-loc-sub-pres-in-named}
    we have $L \notin \nl{\tau[\sigma]}$.
    As well, $L \notin \spawnedlocs{C[\sigma]} = \spawnedlocs{C}$,
    so $\chortypedplus{\ctx_2}{\KillAfter{L}{C[\sigma]}}{\tau[\sigma]}{\rho[\sigma] \cup \{L\}}$ as desired.
  \end{itemize}
\end{proof}

\begin{cor}
  If~$\chortypedplus{\ctx, \alpha \knd \kappa_\ell}{C}{\tau}{\rho}$, $\chorkinded{\ctx}{t}{\kappa_\ell}$, and $t \notin \spawnedlocs{C}$, then~$\chortypedplus{\ctx}{\subst{C}{\alpha}{t}}{\subst{\tau}{\alpha}{t}}{\subst{\rho}{\alpha}{t}}$.
\end{cor}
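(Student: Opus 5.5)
The corollary is the special case of Lemma~\ref{lem:loc-sub-pres-typ} for a single-variable substitution. I would take $\sigma$ to be the substitution with $\sigma(\alpha) = t$ and $\sigma(\beta) = \beta$ for every other location-type variable $\beta$. The plan is to check the three hypotheses of that lemma and then read off its conclusion.

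The first hypothesis is well-formedness, $\wfSubst{\,}{\sigma}{(\Gamma_\ell, \alpha \knd \kappa_\ell)}{\Gamma_\ell}$, where $\Gamma_\ell$ is the location part of $\ctx$.
\begin{itemize}
\item For $\alpha$ this is exactly the assumption $\chorkinded{\ctx}{t}{\kappa_\ell}$.
\item For any other $\beta \knd \kappa' \in \Gamma_\ell$ we need $\chorkinded{\Gamma_\ell}{\beta}{\kappa'}$. This follows by \ruleref{K-Var}, given that $\proves \Gamma_\ell$ holds. I expect well-formedness of the context to be available from the typing derivation, by a routine inversion.
\end{itemize}

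The second hypothesis is $\spawnedlocs{C} \notin \sigma$. By definition, this requires that for every $L \in \spawnedlocs{C}$ and every variable $\beta$, neither $L = \sigma(\beta)$ nor $L \in \sigma(\beta)$.
\begin{itemize}
\item For $\beta \neq \alpha$, $\sigma(\beta) = \beta$ is a variable. A variable is never syntactically equal to a concrete $L$. By the syntactic containment rules, $L \in \beta$ never holds either, since containment is only derived for singletons, unions and $\anyLoc$.
\item For $\beta = \alpha$, we read the hypothesis $t \notin \spawnedlocs{C}$ as exactly the statement that $t$ neither equals nor contains any spawned location. Here $t$ is either a location or a location set.
\end{itemize}
This step is the main obstacle. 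The side condition $t \notin \spawnedlocs{C}$ is written loosely, and I have to commit to the interpretation that matches the definition of $\Omega \notin \sigma$. One edge case is that $t$ mentions $\anyLoc$ while $\spawnedlocs{C}$ is nonempty. Then $L \in \anyLoc$ holds, so I would simply treat such $t$ as excluded by the hypothesis.

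Applying Lemma~\ref{lem:loc-sub-pres-typ} then gives
\[ \chortypedplus{\Gamma_\ell;\Gamma[\sigma];\Delta_e[\sigma];\Delta[\sigma]}{C[\sigma]}{\tau[\sigma]}{\rho[\sigma]}. \]
Since $\alpha$ is bound only in the extended context, it does not occur free in the components $\Gamma$, $\Delta_e$, $\Delta$ of $\ctx$. Hence the substitution acts as the identity on them, and the context is just $\ctx$. Finally, $C[\sigma]$, $\tau[\sigma]$ and $\rho[\sigma]$ coincide with $\subst{C}{\alpha}{t}$, $\subst{\tau}{\alpha}{t}$ and $\subst{\rho}{\alpha}{t}$, which yields the claim.
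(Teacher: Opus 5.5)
Your proposal is correct and matches the paper's intended argument. The corollary is exactly Lemma~\ref{lem:loc-sub-pres-typ} instantiated with the single-variable substitution $\sigma(\alpha) = t$ and $\sigma(\beta) = \beta$ otherwise, as the paper's preamble to the substitution lemmas prescribes, and your reading of the loosely stated side condition $t \notin \spawnedlocs{C}$ as ``$\sigma$ does not mention $\spawnedlocs{C}$'' is the one the paper relies on when it invokes this fact in the type-let and fork cases of single-step preservation.
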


\begin{defn}[Well-formed Type Substitutions]
    Say that a function~$\sigma$ from type variables to types
    maps $\Gamma_1$ to $\Gamma_2$ under $\Gamma_\ell$ (written $\wfSubst{\Gamma_\ell}{\sigma}{\Gamma_1}{\Gamma_2}$) if and only if
    \[ \forall \alpha \knd \kappa \in \Gamma_1 \ldotp \chorkinded{\Gamma_\ell;\Gamma_2}{\sigma(\alpha)}{\kappa}. \]
\end{defn}

\begin{lem}[Type Substitution Preserves Kinding]
  If $\wfSubst{\Gamma_\ell}{\sigma}{\Gamma_1}{\Gamma_2}$,
  $\chorkinded{\Gamma_\ell;\Gamma_1}{t}{\kappa}$, and
  $\Gamma_\ell \proves \Gamma_2$,
  then $\chorkinded{\Gamma_\ell;\Gamma_2}{t[\sigma]}{\kappa[\sigma]}$.
\end{lem}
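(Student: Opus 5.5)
We prove the lemma by induction on the kinding derivation $\chorkinded{\Gamma_\ell;\Gamma_1}{t}{\kappa}$, generalizing over $\sigma$, $\Gamma_1$ and $\Gamma_2$ so that the hypothesis can be applied under binders. We first note that $\sigma$ only touches non-location type variables. Hence every location or location-set type $\ell,\rho$ satisfies $\rho[\sigma]=\rho$, and a kind $\tyknd{\rho}$ satisfies $\kappa[\sigma]=\kappa$. The location-kinding cases (\ruleref{K-Loc}, \ruleref{K-Sng}, \ruleref{K-SngFin}, \ruleref{K-Union}, \ruleref{K-UnionFin}, \ruleref{K-AnySet}, \ruleref{K-SubVar}) depend only on $\Gamma_\ell$. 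They therefore carry over immediately, using $\Gamma_\ell \proves \Gamma_2$ for the context well-formedness premise of \ruleref{K-AnySet}.

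The variable case \ruleref{K-Var} with $\alpha \knd \kappa \in \Gamma_1$ is exactly the well-formedness hypothesis on $\sigma$. Because the kind of a non-location variable is $*_e$ or $\tyknd{\rho'}$, which is unaffected by $\sigma$, we obtain $\chorkinded{\Gamma_\ell;\Gamma_2}{\sigma(\alpha)}{\kappa[\sigma]}$. For \ruleref{K-Local} we appeal to the assumed substitution property of the local kinding judgment. The local substitution is the restriction of $\sigma$ to variables of kind $*_e$, which is well-formed by hypothesis. Program-kinded variables do not occur in local types, and the projection of contexts discards them, so the local judgment only sees the $*_e$ part. The structural cases \ruleref{K-At}, \ruleref{K-Arrow}, \ruleref{K-Prod} and \ruleref{K-Sum} follow directly from the induction hypotheses. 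The side condition $\rho_1 \cup \rho_2 \subseteq \rho$ in \ruleref{K-Sum} is unchanged, since these sets are location sets fixed by $\sigma$.

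The binder cases \ruleref{K-Rec}, \ruleref{K-All} and \ruleref{K-AllLoc} require extending $\sigma$. For \ruleref{K-Rec} and \ruleref{K-All}, working up to $\alpha$-renaming so that $\alpha$ is fresh for $\Gamma_2$ and the range of $\sigma$, we set $\sigma' = \sigma[\alpha \mapsto \alpha]$. We then check $\wfSubst{\Gamma_\ell}{\sigma'}{\Gamma_1,\alpha\knd\kappa}{\Gamma_2,\alpha\knd\kappa}$. This needs a weakening lemma for kinding, which we establish first by a routine induction, together with $\Gamma_\ell \proves \Gamma_2,\alpha\knd\kappa$. For \ruleref{K-AllLoc} the bound variable is a location variable, so it extends $\Gamma_\ell$ instead of $\Gamma_1$. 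The induction hypothesis must then be invoked with $\Gamma_\ell,\alpha\knd\kappa$, which is why the induction should quantify over $\Gamma_\ell$ as well. Weakening of $\Gamma_\ell$ is then used to keep $\sigma$ well-formed.

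The main obstacle is the interplay between substitution and the kind annotation $\tyknd{\rho}$ in \ruleref{K-Rec}. Its premise kinds $\tau$ under $\alpha \knd \tyknd{\rho}$, and we must verify that substituting program types never changes a location-set index. We plan to isolate this invariant as a preliminary observation, proved by induction on $t$: for every non-location type substitution $\sigma$, $\fv{\rho}$ contains no program-kinded variables, so $\rho[\sigma]=\rho$ for all $\rho$ of kind $\setKnd$. This observation relies on the split of the kinding context into $\Gamma_\ell$ and $\Gamma$ described at the start of the appendix.
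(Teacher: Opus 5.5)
Your proposal is correct and follows the same route as the paper: the paper proves this lemma by induction on the kinding derivation and discharges the \ruleref{K-Local} case with the assumed substitution property of the local kinding system. The rest of your write-up fills in details the paper leaves implicit, namely the generalization over $\Gamma_\ell$ needed for \ruleref{K-AllLoc}, weakening in the binder cases, and the fact that location-set indices (and hence kinds $\tyknd{\rho}$) are fixed by $\sigma$.
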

\begin{proof}
  By induction on the kinding derivation $\chorkinded{\Gamma_\ell,\Gamma_1}{t}{\kappa}$,
  and using the fact that the local kinding system is preserved under well-formed type substitutions.
\end{proof}

\begin{lem}[Context Projection and Type Substitution Commute]\label{lem:typ-sub-proj-comm}
  If $\sigma$ is a type substitution, $\Delta_e$ is a local context, and $\rho$ is a location set, then
  $(\proj{\Delta_e}{\rho})[\sigma] = \proj{\Delta_e[\sigma]}{\rho}$.
\end{lem}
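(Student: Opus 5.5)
The plan is to prove the equation by induction on the structure of $\Delta_e$, mirroring the proof of Lemma~\ref{lem:loc-sub-proj-comm}. There is one simplification: a type substitution $\sigma$ (from non-location type variables of kinds $*_e$ or $\tyknd{\rho'}$ to types) does not touch location sets. In the split-context presentation of Appendix~\ref{sec:full_kinds}, location variables live in $\Gamma_\ell$ and are outside the domain of $\sigma$. So the annotations $\rho'$ on entries $\rho'.x \ty t_e$, and the set $\rho$ we project to, are unchanged: $\rho'[\sigma] = \rho'$ and $\rho[\sigma] = \rho$. This is why we get an equality rather than the mere inclusion of Lemma~\ref{lem:loc-sub-proj-comm}: the side condition $\rho \subseteq \rho'$ is decided identically before and after substitution.

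First I record that substitution on a local context acts entrywise: $(\rho'.x \ty t_e, \Delta_e')[\sigma] = \rho'[\sigma].x \ty t_e[\sigma], \Delta_e'[\sigma]$, with $\cdot[\sigma] = \cdot$. I would also state explicitly that $\rho'[\sigma] = \rho'$ for every location set $\rho'$, by a trivial induction on $\rho'$. The base case is a type variable $\alpha$ of location kind, which is not in the domain of $\sigma$; the concrete locations $L$, the set $\anyLoc$, singletons and unions are immediate.

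Then the induction proceeds as follows.

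In the case $\Delta_e = \cdot$, both sides are $\cdot$.

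In the case $\Delta_e = \rho'.x \ty t_e, \Delta_e'$, we split on whether $\rho \subseteq \rho'$.
\begin{itemize}
\item If $\rho \subseteq \rho'$, the left side is $x \ty t_e[\sigma], (\proj{\Delta_e'}{\rho})[\sigma]$. Since $\rho'[\sigma] = \rho'$, the same test succeeds on the right, giving $x \ty t_e[\sigma], \proj{\Delta_e'[\sigma]}{\rho}$. The two agree by the induction hypothesis.
\item If $\rho \not\subseteq \rho'$, both sides drop the entry and we conclude directly by induction.
\end{itemize}

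The only real obstacle is justifying that $\sigma$ fixes location sets. This depends on the convention (noted in Appendix~\ref{sec:full_kinds}) that location-kinded variables are segregated into $\Gamma_\ell$ and that a type substitution ranges only over variables of $\Gamma$. If the reader instead allowed $\sigma$ to map location variables, equality would fail and only the inclusion of Lemma~\ref{lem:loc-sub-proj-comm} would hold. So I would state this assumption up front and cite it where the $\rho'[\sigma] = \rho'$ step is used.
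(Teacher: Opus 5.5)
Your proposal is correct and matches the paper's proof: induct on $\Delta_e$ as in Lemma~\ref{lem:loc-sub-proj-comm}. Because a type substitution leaves location sets fixed, the test $\rho \subseteq \rho'$ comes out the same on both sides, so both projections keep the same entries and the inclusion becomes an equality.
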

\begin{proof}
  The proof is identical to Lemma~\ref{lem:loc-sub-proj-comm}, also noting that type substitution does not affect location sets so
  that both projected contexts contain the same variables.
\end{proof}

\begin{lem}[Type Substitution Preserves Augmented Typing]\label{lem:ty-sub-pres-typ}
  If $\wfSubst{\Gamma_\ell}{\sigma}{\Gamma_1}{\Gamma_2}$,
  $\chortypedplus{\Gamma_\ell;\Gamma_1;\Delta_e;\Delta}{C}{\tau}{\rho}$, and
  $\Gamma_\ell \proves \Gamma_2$,
  then $\chortypedplus{\Gamma_\ell;\Gamma_2;\Delta_e[\sigma];\Delta[\sigma]}{C[\sigma]}{\tau[\sigma]}{\rho}$.
\end{lem}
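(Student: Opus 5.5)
The proof goes by induction on the augmented typing derivation $\chortypedplus{\Gamma_\ell;\Gamma_1;\Delta_e;\Delta}{C}{\tau}{\rho}$. It mirrors the proof of Lemma~\ref{lem:loc-sub-pres-typ}, but is strictly simpler in one respect. Type substitution replaces only variables of kind $*_e$ or $\tyknd{\rho'}$, never location variables. So every location set is left untouched: the participant set $\rho$, every annotation on a construct, and every $\nl{\cdot}$ appearing in a side condition.

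We write $\ctx_1 = \Gamma_\ell;\Gamma_1;\Delta_e;\Delta$ and $\ctx_2 = \Gamma_\ell;\Gamma_2;\Delta_e[\sigma];\Delta[\sigma]$. The plan is to first record two facts.
\begin{itemize}
  \item $\rho[\sigma] = \rho$ for every location set $\rho$. This holds because $\sigma$ has no location variables in its domain; it is why the participant set in the conclusion is $\rho$ rather than $\rho[\sigma]$.
  \item By Lemma~\ref{lem:ty-sub-chor-spawnedlocs}, $\spawnedlocs{C[\sigma]} = \spawnedlocs{C}$.
\end{itemize}
Together these make every disjointness premise of the augmented rules transfer verbatim, including those of \ruleref{S-App}, \ruleref{S-Case}, \ruleref{S-LetLocal}, \ruleref{S-Send}, \ruleref{S-Fork} and \ruleref{S-Kill}. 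Likewise every side condition of the form $\spawnedlocs{C} = \varnothing$ in \ruleref{S-Fun}, \ruleref{S-TFun} and \ruleref{S-TFunLoc} transfers verbatim. Kinding premises are handled by the preceding lemma (Type Substitution Preserves Kinding). Because location sets are unaffected, a kind $\tyknd{\rho_a}$ is mapped to itself.

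The structural cases are routine.
\begin{itemize}
  \item \ruleref{S-Var}: $X \ty \tau[\sigma] \in \Delta[\sigma]$.
  \item \ruleref{S-Done}: Lemma~\ref{lem:typ-sub-proj-comm} gives $(\proj{\Delta_e}{\rho})[\sigma] = \proj{\Delta_e[\sigma]}{\rho}$. We then apply the assumed preservation of local typing under well-formed type substitutions. The value/non-value status of $e$ is unchanged, so the participant set is the same.
  \item Binder cases (\ruleref{S-Fun}, \ruleref{S-LetLocal}, \ruleref{S-LocalCase}, \ruleref{S-Fork}, and the location-kind let rules): we apply the induction hypothesis in the extended context. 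The substitution is unchanged for term and local binders. Location binders $\alpha \knd \kappa_\ell$ live in $\Gamma_\ell$, which is not touched.
\end{itemize}

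The cases that need care are the type-level binders and eliminators. In \ruleref{S-TFun}, and in \ruleref{S-LetLocal} and \ruleref{S-Fold}/\ruleref{S-Unfold}, a type variable of kind $*_e$ or $\tyknd{\rho''}$ is bound. There we extend $\sigma$ to the lifted substitution $\sigma'$, which is the identity on the bound $\alpha$ (after $\alpha$-renaming away from the range of $\sigma$). We must check that $\wfSubst{\Gamma_\ell}{\sigma'}{\Gamma_1,\alpha\knd\kappa}{\Gamma_2,\alpha\knd\kappa[\sigma]}$ holds, and that $\Gamma_\ell \proves \Gamma_2, \alpha \knd \kappa[\sigma]$, using weakening of kinding.

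For \ruleref{S-TApp}, \ruleref{S-TAppLoc}, and \ruleref{S-Unfold}/\ruleref{S-Fold}, we need the standard substitution-composition identity
\[ (\subst{\tau}{\alpha}{t})[\sigma] = \subst{\tau[\sigma]}{\alpha}{t[\sigma]}, \]
valid for $\alpha \notin \dom{\sigma}$ and not free in the range of $\sigma$. It is proved by a separate induction on $\tau$. We also need the recomputed annotation $\rho' = \rho_\tau \cup \rho$ to coincide with the original. This follows because the kinding of $\subst{\tau}{\alpha}{t}[\sigma]$ yields the same $\tyknd{\rho_\tau}$.

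I expect the main obstacle to be this bookkeeping around binders and substitution composition in the type-application and recursive-type rules. In particular, the kind annotation $\tyknd{\rho''}$ on a bound program-type variable may itself mention type-level information only through location sets. I would verify this by inspecting the kinding rules, which confirms that $\kappa[\sigma] = \kappa$ there and avoids any dependency issue.
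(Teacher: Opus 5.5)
Your proposal is correct and takes the same route as the paper. The paper's proof simply says the argument mirrors Lemma~\ref{lem:loc-sub-pres-typ}, using preservation of local typing under type substitution and the fact that location sets (hence participant sets, annotations, and all $\nl{\cdot}$/$\spawnedlocs{\cdot}$ side conditions) are untouched by type substitution.

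One small slip. \ruleref{S-LetLocal} binds a local variable, not a type variable. In \ruleref{S-Fold}/\ruleref{S-Unfold}, $\alpha$ is bound inside the $\mu$-type rather than in the context. So those cases need only the substitution-composition identity you already state, not a lifted $\sigma'$.
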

\begin{proof}
  The argument proceeds similarly to Lemma~\ref{lem:loc-sub-pres-typ},
  also using the assumption that the local type system is preserved under well-formed type substitutions,
  and that locations and location sets are unaffected by type substitution.
\end{proof}

\begin{cor}
  If~$\chortypedplus{\ctx, \alpha \knd \kappa}{C}{\tau}{\rho}$ and $\chorkinded{\ctx}{t}{\kappa}$, then~$\chortypedplus{\ctx}{\subst{C}{\alpha}{t}}{\subst{\tau}{\alpha}{t}}{\rho}$.
\end{cor}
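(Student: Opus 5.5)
This corollary follows by instantiating Lemma~\ref{lem:ty-sub-pres-typ} with a single-variable substitution. Write the context as $\ctx = \Gamma_\ell;\Gamma;\Delta_e;\Delta$. Since $\kappa$ here ranges over non-location kinds ($*_e$ or $\tyknd{\rho''}$), the variable $\alpha$ lives in the non-location part $\Gamma$ of the split context. I will take $\Gamma_1 = \Gamma, \alpha \knd \kappa$ and $\Gamma_2 = \Gamma$. For $\sigma$ I will take the parallel substitution with $\sigma(\alpha) = t$ and $\sigma(\beta) = \beta$ for every other type variable $\beta$, as described at the start of Appendix~\ref{sec:subst_lemmas}.

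Three side conditions must be checked before applying the lemma.

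First, $\wfSubst{\Gamma_\ell}{\sigma}{\Gamma_1}{\Gamma_2}$.
\begin{itemize}
\item For $\alpha$ itself, we need $\chorkinded{\Gamma_\ell;\Gamma}{t}{\kappa}$, which is exactly the hypothesis $\chorkinded{\ctx}{t}{\kappa}$.
\item For every other $\beta \knd \kappa' \in \Gamma$, we need $\chorkinded{\Gamma_\ell;\Gamma}{\beta}{\kappa'}$. This follows from \ruleref{K-Var} once $\provesCol \Gamma$ holds.
\end{itemize}

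Second, $\Gamma_\ell \proves \Gamma_2$, i.e.\ well-formedness of $\Gamma$. I expect to obtain it by inverting the typing derivation of $C$ down to a leaf (\ruleref{S-Var} or \ruleref{S-Done}), which carries $\proves \ctx, \alpha \knd \kappa$. Well-formedness of an extended context should imply well-formedness of its prefix. Alternatively, it can be read off the derivation of $\chorkinded{\ctx}{t}{\kappa}$, since kinding derivations also bottom out in well-formed-context premises or are built from them.

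Third, a kind-related subtlety. When $\kappa = \tyknd{\rho''}$, the lemma's kinding conclusion involves $\kappa[\sigma]$. This is harmless: $\rho''$ is a location set, and type substitution does not touch location sets.

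With these checks done, Lemma~\ref{lem:ty-sub-pres-typ} yields
\[\chortypedplus{\Gamma_\ell;\Gamma;\Delta_e[\sigma];\Delta[\sigma]}{C[\sigma]}{\tau[\sigma]}{\rho}.\]
Since $\sigma$ acts only on $\alpha$, this is $\subst{C}{\alpha}{t}$ typed at $\subst{\tau}{\alpha}{t}$ with participant set $\rho$.

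One gap remains: the conclusion has the substituted contexts $\Delta_e[\sigma]$ and $\Delta[\sigma]$, whereas the statement uses $\ctx$ unchanged. I will argue that $\alpha$ is not free in $\Delta_e$ or $\Delta$, because these are well-formed under $\Gamma_\ell;\Gamma$, which does not contain $\alpha$ since $\alpha$ was added later. Hence $\Delta_e[\sigma] = \Delta_e$ and $\Delta[\sigma] = \Delta$.

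The main obstacle is making this context reasoning precise, since the paper elides the context well-formedness judgment. This step, and extracting $\Gamma_\ell \proves \Gamma$ in the second check, are where the argument depends on the implicit invariant that contexts are well-scoped.
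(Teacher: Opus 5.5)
Your proposal is correct and matches the paper's route. The paper states this corollary without a separate proof, as the single-variable instance of Lemma~\ref{lem:ty-sub-pres-typ} with $\sigma(\alpha)=t$ and the identity elsewhere. Your side checks (well-formedness of $\sigma$ and of $\Gamma$, and $\alpha$ not occurring in $\Delta_e$ or $\Delta$) are exactly the bookkeeping the paper leaves implicit.
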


\begin{defn}[Well-formed Local Substitutions]
    Say that a function~$\sigma$ from local variables to local expressions
    maps $\Delta_{e,1}$ to $\Delta_{e,2}$ under $\Gamma_\ell;\Gamma$ (written $\wfSubst{\Gamma_\ell;\Gamma}{\sigma}{\Delta_{e,1}}{\Delta_{e,2}}$) if and only if
    \[ \forall \rho.x \ty t_e \in \Delta_{e,1} \ldotp \localtyped{\Gamma_\ell;\Gamma;\proj{\Delta_{e,2}}{\rho}}{\sigma(x)}{t_e}. \]
\end{defn}

\begin{lem}[Local Substitution Preserves Augmented Typing]\label{lem:local-sub-pres-typ}
  If $\wfSubst{\Gamma_\ell;\Gamma}{\sigma}{\Delta_{e,1}}{\Delta_{e,2}}$,
  $\chortypedplus{\Gamma_\ell;\Gamma;\Delta_{e,1};\Delta}{C}{\tau}{\rho}$, and
  $\Gamma_\ell;\Gamma \proves \Delta_{e,2}$,
  then $\chortypedplus{\Gamma_\ell;\Gamma;\Delta_{e,2};\Delta}{C[\sigma]}{\tau}{\rho}$.
\end{lem}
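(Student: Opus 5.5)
We prove the statement by induction on the augmented typing derivation $\chortypedplus{\Gamma_\ell;\Gamma;\Delta_{e,1};\Delta}{C}{\tau}{\rho}$, generalizing over $\Delta_{e,1}$, $\Delta_{e,2}$, $\Gamma_\ell$, $\Gamma$ and $\sigma$ so that the induction hypothesis can be used under extended contexts. The key observation is that a local substitution touches none of the data the augmented rules inspect. Types, kinds, participant sets, and the annotations $\rho$ on syntax are all left alone. By Lemma~\ref{lem:local-sub-chor-spawnedlocs} we have $\spawnedlocs{C[\sigma]} = \spawnedlocs{C}$, and $\nl{\cdot}$ of each annotation is likewise unchanged. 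Consequently every side condition of the form $\nl{\cdot} \cap \spawnedlocs{\cdot} = \varnothing$, every subset premise, and every kinding premise transfers verbatim. Only the typing subderivations need the induction hypothesis.

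The one substantive base case is \ruleref{S-Done}. We have $\localtyped{\Gamma_\ell;\Gamma;\proj{\Delta_{e,1}}{\rho}}{e}{t_e}$ and must obtain the same judgment for $e[\sigma]$ under $\proj{\Delta_{e,2}}{\rho}$. First, restrict $\sigma$ to the variables visible at $\rho$. For each $\rho'.x \ty t \in \Delta_{e,1}$ with $\rho \subseteq \rho'$, the well-formedness of $\sigma$ gives $\localtyped{\Gamma_\ell;\Gamma;\proj{\Delta_{e,2}}{\rho'}}{\sigma(x)}{t}$. Since $\rho \subseteq \rho'$, we get $\proj{\Delta_{e,2}}{\rho'} \subseteq \proj{\Delta_{e,2}}{\rho}$, using transitivity of $\subseteq$. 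Weakening of the local type system then yields the typing at $\proj{\Delta_{e,2}}{\rho}$. We then apply the assumed substitution property of the local type system.

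A subtlety is that the choreographic substitution $C[\sigma]$ acts on $\rho.e$ only through variables bound at locations covering $\rho$. We must check that this matches the variables of $\proj{\Delta_{e,1}}{\rho}$, unfolding the definition of $\rho|\sigma$ substitution. The value-ness condition determining the participant set also needs care. Either we argue that $\sigma$ maps into values, as in the uses via \ruleref{C-LetV}, or we note that in those uses $e[\sigma]$ is a value iff $e$ is. I would state this as a side assumption if the definition requires it.

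The binder cases, \ruleref{S-LetLocal}, \ruleref{S-LocalCase}, and \ruleref{S-Fork}, extend $\Delta_{e,1}$ with some $\rho_1.x \ty t_e$. Here we extend $\sigma$ by the identity on $x$, after $\alpha$-renaming to avoid capture. Well-formedness of the extended substitution follows from variable typing plus weakening of $\sigma$'s images. For \ruleref{S-Fork} and the type binders, $\Gamma_\ell$ or $\Gamma$ also grows, which requires weakening the local typings of $\sigma(x)$ under a larger kinding context. The remaining cases, \ruleref{S-Var}, \ruleref{S-Fun}, \ruleref{S-App}, the data-type rules, \ruleref{S-Send}, \ruleref{S-Sync}, and \ruleref{S-Kill}, are direct applications of the induction hypothesis. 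I expect the \ruleref{S-Done} context-projection bookkeeping to be the main obstacle.
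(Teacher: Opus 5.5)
Your proof follows the paper's route: induction on the augmented derivation, with every $\nl{\cdot}\cap\spawnedlocs{\cdot}$, subset and kinding premise transferring unchanged (via Lemma~\ref{lem:local-sub-chor-spawnedlocs}), and \ruleref{S-Done} discharged by weakening plus the local language's substitution property. Your projection bookkeeping ($\rho\subseteq\rho'$ gives $\proj{\Delta_{e,2}}{\rho'}\subseteq\proj{\Delta_{e,2}}{\rho}$ by transitivity) and your identity-extension of $\sigma$ at binders spell out what the paper's one-line proof leaves implicit.

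Your value-ness worry, however, is a real defect, and you should resolve it rather than hedge. The paper's proof silently skips it, and the lemma as literally stated is false. Take the context $\{\Alice\}.x \ty \Int$.
\begin{itemize}
\item If local variables are values, then $\Alice.x$ has participant set $\varnothing$. But for the well-formed substitution $\sigma(x) = 1 \LocalPlus 2$, the term $\Alice.(1 \LocalPlus 2)$ can only be typed at $\{\Alice\}$.
\item If local variables are not values, then $\Alice.x$ has participant set $\{\Alice\}$, while $\Alice.3$ types only at $\varnothing$.
\end{itemize}
There is no subsumption rule, so the conclusion is not derivable in either case.

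Your first escape route, ``$\sigma$ maps into values,'' suffices only if local variables count as values; in the second case it is exactly what breaks. Nor can you weaken the conclusion to some $\rho''\subseteq\rho$: \ruleref{S-Fun} and \ruleref{S-TFun} force a body's participant set to equal the latent participants recorded in the arrow or forall type, so the type $\tau$ itself would change.

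The right repair is your second option, promoted to a hypothesis: $\sigma$ preserves value-ness of the local expressions it reaches. For example, local variables are values, $\sigma$ maps into values, and value-ness is stable under such substitutions. This is the analogue of what the paper assumes for location substitutions in the \ruleref{S-Done} case of Lemma~\ref{lem:loc-sub-pres-typ}. The hypothesis covers every use of the lemma in preservation: \ruleref{C-LetV}, \ruleref{C-LocalCaseInl}, \ruleref{C-LocalCaseInr} and \ruleref{C-Fork} all substitute local values. The corollary stated right after the lemma, which allows an arbitrary expression $e$, must accordingly be restricted to values. With that hypothesis in place, your induction goes through as written.
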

\begin{proof}
  By induction on the typing derivation $\chortypedplus{\Gamma_\ell;\Gamma;\Delta_{e,1};\Delta}{C}{\tau}{\rho}$,
  and using the fact that the local type system is preserved under well-formed local substitutions.
\end{proof}

\begin{cor}
  If $\chortypedplus{\ctx, \rho'.x \ty t_e}{C}{\tau}{\rho}$ and
  $\localtyped{\proj{\ctx}{\rho'}}{e}{t_e}$,
  then $\chortypedplus{\ctx}{\hsubst{C}{\rho'}{x}{e}}{\tau}{\rho}$.
\end{cor}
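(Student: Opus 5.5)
The statement is the corollary of Lemma~\ref{lem:local-sub-pres-typ}: substituting a single local variable $\rho'.x$ by a well-typed local expression preserves augmented typing, with the same type and participant set. I would obtain it by instantiating that lemma with a suitable local substitution.

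Write $\ctx = \Gamma_\ell;\Gamma;\Delta_e;\Delta$. Define $\sigma$ by $\sigma(x) = e$ for the namespace $\rho'.x$, and let $\sigma$ be the identity on every other local variable. Thus $\hsubst{C}{\rho'}{x}{e}$ is exactly $C[\sigma]$. Set $\Delta_{e,1} = \Delta_e, \rho'.x \ty t_e$ and $\Delta_{e,2} = \Delta_e$. The hypothesis $\Gamma_\ell;\Gamma \proves \Delta_e$ holds because well-formedness of $\ctx$ is implied by the given typing derivation (every leaf rule demands $\proves \ctx$), and it is inherited by the prefix $\Delta_e$.

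The remaining work is to check $\wfSubst{\Gamma_\ell;\Gamma}{\sigma}{\Delta_{e,1}}{\Delta_{e,2}}$, i.e., for each binding $\rho''.y \ty t' \in \Delta_{e,1}$, that $\localtyped{\Gamma_\ell;\Gamma;\proj{\Delta_e}{\rho''}}{\sigma(y)}{t'}$. There are two cases.
\begin{itemize}
\item For the substituted binding $\rho'.x \ty t_e$, this is exactly the assumption $\localtyped{\proj{\ctx}{\rho'}}{e}{t_e}$, since $\proj{\ctx}{\rho'} = \Gamma_\ell;\proj{\Gamma}{\rho'};\proj{\Delta_e}{\rho'}$. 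A small mismatch must be bridged here: the definition of a well-formed substitution uses $\Gamma$, not $\proj{\Gamma}{\rho'}$. I would close it by weakening of the local type system, since $\proj{\Gamma}{\rho'}$ only removes program-kind variables.
\item For any other binding $\rho''.y \ty t'$ with $\sigma(y) = y$, I need $y \ty t' \in \proj{\Delta_e}{\rho''}$. This holds because $\rho'' \subseteq \rho''$ by reflexivity of $\subseteq$, so the variable typing rule of the local language gives the judgment.
\end{itemize}

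The only delicate point is shadowing or repeated names across namespaces, which I would dispatch by the usual Barendregt convention. With these checks, Lemma~\ref{lem:local-sub-pres-typ} yields $\chortypedplus{\ctx}{\hsubst{C}{\rho'}{x}{e}}{\tau}{\rho}$ directly, with the type $\tau$ and participant set $\rho$ unchanged.
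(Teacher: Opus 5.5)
Your proposal is correct and matches the paper, which states this corollary without a separate proof as the direct instance of Lemma~\ref{lem:local-sub-pres-typ} with $\Delta_{e,1} = \Delta_e, \rho'.x \ty t_e$, $\Delta_{e,2} = \Delta_e$, and the single-variable substitution $x \mapsto e$ at $\rho'$. Your check of the well-formed-substitution premise is exactly what that instantiation needs: weakening from $\proj{\Gamma}{\rho'}$ to $\Gamma$ for the substituted binding, and the local variable rule for the untouched bindings.
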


\begin{lem}[Well-Typed Programs Have No Spawned Locations]\label{lem:typ-no-spawned}
  If $\chortyped{\ctx}{C}{\tau}{\rho}$, then $\spawnedlocs{C} = \varnothing$.
\end{lem}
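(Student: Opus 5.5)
The proof is a straightforward induction on the derivation of $\chortyped{\ctx}{C}{\tau}{\rho}$. The key observation is that the only syntactic form that contributes a location to $\spawnedlocs{\cdot}$ is $\KillAfter{L}{C'}$, and the surface type system deliberately has no rule for it. Every other clause of $\spawnedlocs{\cdot}$ (Appendix, ``Spawned Locations in a Choreography'') is either the empty set or the union of $\spawnedlocs{\cdot}$ over the immediate choreographic subterms.

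The induction proceeds case by case on the last rule used.

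For the leaf rules \ruleref{T-Var} and \ruleref{T-Done}, the choreography is $X$ or $\rho.e$. Here $\spawnedlocs{\cdot} = \varnothing$ by definition.

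For every other rule (\ruleref{T-Fun}, \ruleref{T-App}, \ruleref{T-TFunLoc}, \ruleref{T-TAppLoc}, \ruleref{T-TFun}, \ruleref{T-TApp}, the pair/sum/fold rules, \ruleref{T-Case}, \ruleref{T-LocalCase}, the three let rules, \ruleref{T-Send}, \ruleref{T-Sync}, and \ruleref{T-Fork}):
\begin{itemize}
\item each choreographic subterm of the conclusion appears as the subject of a typing premise, possibly under an extended context such as $\ctx, X \ty \tau_1$, $\ctx, \rho'.x \ty t_1$, or $\ctx, \alpha \knd \locKnd, \{\ell,\alpha\}.x \ty \Loc_\alpha$;
\item the induction hypothesis, stated for arbitrary contexts, therefore gives $\spawnedlocs{\cdot} = \varnothing$ for each such subterm;
\item the defining equation for $\spawnedlocs{\cdot}$ of the conclusion then yields a union of empty sets, which is $\varnothing$.
\end{itemize}
Type arguments in $C \appchor{\rho} t$ and location annotations contribute nothing to $\spawnedlocs{\cdot}$, so they need no attention.

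The only point requiring care is exhaustiveness: we must confirm that the set of rules defining $\provesCol$ contains no rule whose conclusion is $\KillAfter{L}{C'}$. This holds by inspection of the rule list in Appendix~\ref{sec:full_chor_types}. The case can therefore never arise, and the argument is otherwise entirely routine.
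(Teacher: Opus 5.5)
Your proposal is correct and takes essentially the same approach as the paper: induction on the typing derivation, using the fact that the surface type system has no rule for $\KillAfterN$, the only construct that contributes to $\spawnedlocs{\cdot}$. Your case breakdown simply spells out what the paper's one-line proof leaves implicit.
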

\begin{proof}
  By induction on the typing derivation, noting that~$\KillAfterN$ is not well-typed.
\end{proof}

\begin{defn}[Well-formed Substitutions]
    Say that a function~$\sigma$ from program variables to choreographies
    maps $\Delta_1$ to $\Delta_2$ under $\Gamma_\ell;\Gamma;\Delta_e$
    (written $\wfSubst{\Gamma_\ell;\Gamma;\Delta_e}{\sigma}{\Delta_1}{\Delta_2}$) if and only if
    \[ \forall X \ty \tau \in \Delta_1 \ldotp \chortyped{\Gamma_\ell;\Gamma;\Delta_e;\Delta_2}{\sigma(X)}{\tau}{\varnothing}. \]
\end{defn}

\begin{lem}[Substitution Preserves Augmented Typing]\label{lem:sub-pres-typ}
  If $\wfSubst{\Gamma_\ell;\Gamma;\Delta_e}{\sigma}{\Delta_1}{\Delta_2}$,
  $\chortypedplus{\Gamma_\ell;\Gamma;\Delta_e;\Delta_1}{C}{\tau}{\rho}$, and
  $\Gamma_\ell;\Gamma \proves \Delta_2$,
  then $\chortypedplus{\Gamma_\ell;\Gamma;\Delta_e;\Delta_2}{C[\sigma]}{\tau}{\rho}$.
\end{lem}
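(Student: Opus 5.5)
We prove the statement by induction on the derivation of $\chortypedplus{\Gamma_\ell;\Gamma;\Delta_e;\Delta_1}{C}{\tau}{\rho}$, generalizing over $\Delta_1$, $\Delta_2$ and $\sigma$. The binding rules extend the choreographic context, so the induction hypothesis must apply to the lifted substitution. For \ruleref{S-Fun} this is $\sigma' = \sigma[F \mapsto F, X \mapsto X]$, which maps $\Delta_1, F \ty \tau_1 \arr{\rho} \tau_2, X \ty \tau_1$ to $\Delta_2, F \ty \tau_1 \arr{\rho} \tau_2, X \ty \tau_1$. For these binding rules we first record a weakening lemma for the surface judgment: if $\chortyped{\ctx}{C'}{\tau'}{\varnothing}$, then the same holds in any extension of $\ctx$ by fresh type, local, or choreographic variables. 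With it, $\sigma(Y)$ remains well-typed in the extended context after renaming bound variables away from $\fv{\sigma(Y)}$. We need this weakening for local bindings ($\rho.x$ in \ruleref{S-LetLocal}, \ruleref{S-LocalCase}, \ruleref{S-Fork}), for type bindings (\ruleref{S-TFun}, \ruleref{S-LetLoc}, \ruleref{S-Fork}), and for choreographic bindings (\ruleref{S-Fun}, \ruleref{S-Case}).

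In the \ruleref{S-Var} case, $X \ty \tau \in \Delta_1$, so by well-formedness $\chortyped{\Gamma_\ell;\Gamma;\Delta_e;\Delta_2}{\sigma(X)}{\tau}{\varnothing}$. The goal is stated for $\provesPlusCol$, so Theorem~\ref{thm:type-equiv} converts this surface judgment to $\chortypedplus{\Gamma_\ell;\Gamma;\Delta_e;\Delta_2}{\sigma(X)}{\tau}{\varnothing}$; we may assume that theorem, since it is stated earlier. The participant set is already $\varnothing$, as required. Lemma~\ref{lem:typ-no-spawned} also gives $\spawnedlocs{\sigma(X)} = \varnothing$, which the remaining cases use.

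In every other case the type and the participant set $\rho$ are unchanged by $\sigma$, because $\sigma$ acts only on choreographic variables. The typing premises therefore follow directly from the induction hypothesis. What remains are the side conditions of the forms $\nl{\rho_i} \cap \spawnedlocs{C_j} = \varnothing$, $\spawnedlocs{C} = \varnothing$ in \ruleref{S-Fun}, \ruleref{S-TFun} and \ruleref{S-TFunLoc}, $\spawnedlocs{C_1} = \spawnedlocs{C_2}$ in the case rules, and $L \notin \spawnedlocs{C}$ in \ruleref{S-Kill}. Lemma~\ref{lem:sub-chor-spawnedlocs-empty} gives $\spawnedlocs{C[\sigma]} \equiv \spawnedlocs{C} \cup \bigcup_{X \in \fv{C}} \spawnedlocs{\sigma(X)}$. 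Each $\spawnedlocs{\sigma(X)}$ is empty, so $\spawnedlocs{C[\sigma]} = \spawnedlocs{C}$ for every subterm, and each side condition transfers verbatim. The $\rho$'s appearing in these conditions are literally the same sets before and after substitution.

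We expect the main obstacle to be this uniform fact that substitution does not introduce spawned locations. It depends on the surface typing of each $\sigma(X)$, through Lemma~\ref{lem:typ-no-spawned}, and on stating the weakening lemma carefully enough to cover every binder. The \ruleref{S-Done} case needs no choreographic substitution at all, and neither do kinding premises such as \ruleref{S-Fork}'s $\chorkinded{\ctx}{\tau}{\tyknd{\rho_\tau}}$, since kinding does not mention $\Delta$.
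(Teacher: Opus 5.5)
Your proposal is correct and follows the paper's proof. The paper also inducts on the augmented typing derivation, discharges the variable case by well-formedness of $\sigma$, and preserves the spawned-location side conditions because Lemmas~\ref{lem:typ-no-spawned} and~\ref{lem:sub-chor-spawnedlocs-empty} show that substitution leaves $\spawnedlocs{\cdot}$ unchanged. You additionally spell out two points the paper leaves implicit: the weakening needed to push $\sigma$ under binders, and the conversion from $\provesCol$ to $\provesPlusCol$ (Lemma~\ref{lem:typed-to-plus-typed}) in the \ruleref{S-Var} case.
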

\begin{proof}
  By induction on the typing derivation $\chortypedplus{\Gamma_\ell;\Gamma;\Delta_e;\Delta_1}{C}{\tau}{\rho}$.
  The argument proceeds similarly to Lemma~\ref{lem:loc-sub-pres-typ},
  and the only interesting cases are for variables.
  Indeed, if $X \ty \tau \in \Delta_1$, then as $\sigma$ is well-formed, $\chortypedplus{\ctx_2}{\sigma(X)}{\tau}{\varnothing}$.
  This suffices because the premise is that $\chortypedplus{\ctx_1}{X}{\tau}{\varnothing}$.
  As well, Lemmas~\ref{lem:typ-no-spawned} and \ref{lem:sub-chor-spawnedlocs-empty} give that the spawned locations in~$C$ (and all subterms) are not modified,
  so all sets of locations that are disjoint by assumption remain disjoint after the substitution.
\end{proof}

\begin{lem}[Participants of Values]\label{lem:value-participants}
  If $\chortypedplus{\ctx}{V}{\tau}{\rho}$ and $\val{V}$ or $V = X$, then
  $\rho = \varnothing$ and $\spawnedlocs{V} = \varnothing$.
\end{lem}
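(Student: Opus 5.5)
The plan is a case analysis on the shape of~$V$, inverting the last rule of the augmented typing derivation $\chortypedplus{\ctx}{V}{\tau}{\rho}$. The induction runs on the structure of~$V$, which is the same as on the derivation, since each value form is typed by exactly one syntax-directed \textsc{S}-rule. There are two goals for each case: the participant set is~$\varnothing$, and $\spawnedlocs{V} = \varnothing$.

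The base cases come first.
\begin{itemize}
\item If $V = X$, only \ruleref{S-Var} applies. It assigns participants~$\varnothing$, and $\spawnedlocs{X} = \varnothing$ by definition.
\item If $V = \rho'.v$ with $\val{v}$, only \ruleref{S-Done} applies. Its side condition sets $\rho = \varnothing$ because $v$ is a value, and $\spawnedlocs{\rho'.e} = \varnothing$ by definition.
\item If $V$ is a function $\Fun{\rho'}{F}{X}{C}$, rule \ruleref{S-Fun} assigns participants~$\varnothing$ and carries the explicit premise $\spawnedlocs{C} = \varnothing$. Hence $\spawnedlocs{V} = \spawnedlocs{C} = \varnothing$.
\item Type functions are handled the same way: \ruleref{S-TFun} and \ruleref{S-TFunLoc} both have participant set~$\varnothing$ and the premise $\spawnedlocs{C} = \varnothing$.
\end{itemize}

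The compound values $(V_1,V_2)_{\rho'}$, $\Inl{\rho'}{V'}$, $\Inr{\rho'}{V'}$ and $\Fold{\rho'}{V'}$ follow from the induction hypothesis on the immediate subvalues.

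\begin{itemize}
\item For a pair, \ruleref{S-Pair} gives participants $\rho_1 \cup \rho_2$. By induction $\rho_1 = \rho_2 = \varnothing$ and $\spawnedlocs{V_1} = \spawnedlocs{V_2} = \varnothing$. The union is therefore~$\varnothing$, and $\spawnedlocs{(V_1,V_2)_{\rho'}} = \spawnedlocs{V_1} \cup \spawnedlocs{V_2} = \varnothing$.
\item The rules \ruleref{S-Inl}, \ruleref{S-Inr} and \ruleref{S-Fold} each pass the participant set of the subterm through unchanged, and $\operatorname{SL}$ is homomorphic on these forms. The induction hypothesis therefore closes these cases directly.
\end{itemize}

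The only mild obstacle is that the participant set is a type-level expression and not a literal set. The conclusion ``$\rho = \varnothing$'' should therefore be read up to the equivalence~$\equiv$ defined above. For the pair case I would check that $\varnothing \cup \varnothing \equiv \varnothing$ under the inductive $\subseteq$, which holds immediately from the rules for~$\subseteq$ on unions and~$\varnothing$.

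I would also confirm that no other typing rule can conclude a judgment for a value form, so that inversion is exhaustive. In particular, \ruleref{S-Kill} never types a value, because $\KillAfter{L}{C}$ is not in the grammar of values.
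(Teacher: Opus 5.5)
Your proof is correct and takes essentially the same route as the paper. The paper argues by induction on the typing derivation, noting that no introduction form adds locations beyond those of its subterms. Your case analysis spells this out, with the base cases for functions and type functions discharged by the explicit $\spawnedlocs{C} = \varnothing$ premises of \ruleref{S-Fun}, \ruleref{S-TFun} and \ruleref{S-TFunLoc}.
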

\begin{proof}
    By induction on the typing derivation $\chortypedplus{\ctx}{V}{\tau}{\rho}$, noting that no introduction
    form adds more locations to $\rho$ than are in its subterms.
\end{proof}

\begin{cor}
  If $\chortypedplus{\ctx, X \ty \tau_1}{C}{\tau_2}{\rho_2}$, $\chortypedplus{\ctx}{V}{\tau_1}{\rho_1}$, and $\val{V}$,
  then $\chortypedplus{\ctx}{\subst{C}{X}{V}}{\tau_2}{\rho_2}$.
\end{cor}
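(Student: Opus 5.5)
The plan is to derive the corollary directly from Lemma~\ref{lem:sub-pres-typ}, instantiating the parallel substitution $\sigma$ as the single-variable substitution with $\sigma(X) = V$ and $\sigma(Y) = Y$ for every $Y \neq X$. Take the source context $\Delta_1 = \Delta, X \ty \tau_1$ and the target context $\Delta_2 = \Delta$, where $\ctx = \Gamma_\ell;\Gamma;\Delta_e;\Delta$. Lemma~\ref{lem:sub-pres-typ} then yields $\chortypedplus{\ctx}{\subst{C}{X}{V}}{\tau_2}{\rho_2}$ with the participant set unchanged. This is exactly the claimed conclusion.

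First, well-formedness of $\Delta_2$ under $\Gamma_\ell;\Gamma$ is inherited from the derivation $\chortypedplus{\ctx}{V}{\tau_1}{\rho_1}$. The only remaining obligation is that $\sigma$ maps $\Delta_1$ to $\Delta_2$. For each $Y \neq X$ in $\Delta$, \ruleref{S-Var} (or \ruleref{T-Var}) gives $Y$ the type $\Delta(Y)$ with participants $\varnothing$.

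For $X$ itself, we need $V \tyCol \tau_1 \locsCol \varnothing$. Lemma~\ref{lem:value-participants}, applied to the hypothesis $\chortypedplus{\ctx}{V}{\tau_1}{\rho_1}$ together with $\val{V}$, gives $\rho_1 = \varnothing$ and $\spawnedlocs{V} = \varnothing$.

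The main obstacle is a mismatch of judgments. The definition of well-formed substitutions is phrased with the surface judgment $\provesCol$, whereas our hypothesis on $V$ is in $\provesPlusCol$. We close this gap using Theorem~\ref{thm:type-equiv}. Since $\spawnedlocs{V} = \varnothing$, $V$ contains no \KillAfterN subterm; this holds because $\spawnedlocs{\cdot}$ collects every $L$ occurring in a $\KillAfter{L}{\cdot}$, including inside function bodies. Theorem~\ref{thm:type-equiv} therefore converts the augmented derivation of $V$ into a surface derivation $\chortyped{\ctx}{V}{\tau_1}{\varnothing}$.

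A milder alternative is also available. The proof of Lemma~\ref{lem:sub-pres-typ} only uses the substituted terms through their augmented typing and the fact that they have no spawned locations, which is how it invokes Lemma~\ref{lem:sub-chor-spawnedlocs-empty}. So one could instead re-read the lemma with $\provesPlusCol$ in the well-formedness definition, and both facts are supplied by Lemma~\ref{lem:value-participants}. I would present the first route, through Theorem~\ref{thm:type-equiv}, and mention the second as a remark.
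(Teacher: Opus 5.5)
Your proposal is correct and follows the argument the paper intends: the corollary comes right after Lemma~\ref{lem:value-participants}, which supplies $\rho_1 = \varnothing$ so that the single-variable substitution is well-formed, and Lemma~\ref{lem:sub-pres-typ} then gives the result with the participant set unchanged. Your extra step, converting the augmented typing of $V$ into a surface typing via Theorem~\ref{thm:type-equiv} (or more directly Lemma~\ref{lem:plus-typed-to-typed}) using $\spawnedlocs{V} = \varnothing$, correctly fills a gap the paper leaves implicit: its definition of well-formed substitutions is stated with the surface judgment, while the corollary's hypothesis on $V$ is augmented.
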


\begin{lem}[Location Set Subkinding]
  \label{lem:loc-set-subkind}
  If $\chorkinded{\Gamma}{\rho}{\finsetKnd}$, then $\chorkinded{\Gamma}{\rho}{\setKnd}$.
\end{lem}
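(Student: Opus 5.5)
The plan is to prove the statement by induction on the kinding derivation $\chorkinded{\Gamma}{\rho}{\finsetKnd}$. Only a few rules can conclude a judgment of kind $\finsetKnd$, so the case analysis is short. Inspecting the kinding system in Appendix~\ref{sec:full_kinds}, a derivation ending in $\finsetKnd$ must end in one of three rules: \ruleref{K-Var} with $\alpha \knd \finsetKnd \in \Gamma$, \ruleref{K-SngFin}, or \ruleref{K-UnionFin}. No other rule produces $\finsetKnd$. In particular \ruleref{K-AnySet} produces only $\setKnd$, which is exactly why $\anyLoc$ is excluded from finite sets.

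The cases go as follows.
\begin{itemize}
\item \textbf{\ruleref{K-Var}.} We have $\alpha \knd \finsetKnd \in \Gamma$ and $\provesCol \Gamma$. These are precisely the premises of \ruleref{K-SubVar}, which gives $\chorkinded{\Gamma}{\alpha}{\setKnd}$ directly.
\item \textbf{\ruleref{K-SngFin}.} The premise is $\chorkinded{\Gamma}{\ell}{\locKnd}$. Applying \ruleref{K-Sng} to the same premise yields $\chorkinded{\Gamma}{\{\ell\}}{\setKnd}$.
\item \textbf{\ruleref{K-UnionFin}.} The premises are $\chorkinded{\Gamma}{\rho_1}{\finsetKnd}$ and $\chorkinded{\Gamma}{\rho_2}{\finsetKnd}$. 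The induction hypothesis gives both at kind $\setKnd$, and \ruleref{K-Union} then concludes $\chorkinded{\Gamma}{\rho_1 \cup \rho_2}{\setKnd}$.
\end{itemize}

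The only point needing care is confirming that this list of rules is exhaustive. The paper remarks that the kinding context is really split into $\Gamma_\ell;\Gamma$, so I would check that variable lookup for location kinds goes through the $\Gamma_\ell$ portion under the same \ruleref{K-Var}/\ruleref{K-SubVar} pair. Beyond that, the argument is entirely routine and involves no real obstacle.
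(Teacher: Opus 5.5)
Your proof is correct and follows the paper's argument: induction on the kinding derivation, with \ruleref{K-SubVar} handling the variable case. Your \ruleref{K-SngFin} and \ruleref{K-UnionFin} cases spell out what the paper dismisses as routine, and your list of rules that can conclude $\finsetKnd$ is exhaustive.
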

\begin{proof}
  Straightforward by induction on the kinding derivation~$\chorkinded{\Gamma}{\rho}{\finsetKnd}$.
  The only interesting case is when~$\alpha \knd \finsetKnd \in \Gamma$ is a type variable,
  for which the rule \ruleref{K-SubVar} suffices.
\end{proof}


\subsection{Type Soundness}
\label{sec:type-soundness}

\begin{lem}
  For any concrete set~$\Omega \subset \Locations$ of locations,
  $\subst{\rho}{\anyLoc}{\Omega} \subseteq \Omega$ if and only if
  $\subst{\rho}{\anyLoc}{\varnothing} \subseteq \Omega$.
\end{lem}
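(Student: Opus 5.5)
We prove the statement by structural induction on the location-set expression~$\rho$, using the inductive definition of~$\subseteq$ from Appendix~\ref{sec:set-relations}. Since~$\Omega$ is a concrete finite set of locations, we read it as the finite union of singletons~$\{L\}$ for~$L \in \Omega$. Write~$\rho[\anyLoc \mapsto \Omega]$ for the result of replacing every occurrence of~$\anyLoc$ in~$\rho$ by~$\Omega$. Write~$\rho[\anyLoc \mapsto \varnothing]$ for the result of deleting those occurrences, so that~$\varnothing \cup \rho' = \rho'$ and~$\anyLoc$ alone becomes~$\varnothing$.

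The first step is an inversion lemma for~$\subseteq$ when the right-hand side is a concrete union of singletons. It states:
\begin{itemize}
\item $\rho_1 \cup \rho_2 \subseteq \Omega$ holds iff both~$\rho_1 \subseteq \Omega$ and~$\rho_2 \subseteq \Omega$;
\item $\{\ell\} \subseteq \Omega$ holds iff~$\ell \in \Omega$;
\item $\alpha \subseteq \Omega$ never holds;
\item $\varnothing \subseteq \Omega$ always holds.
\end{itemize}
We prove this by induction on the derivation. The only delicate case is the rule deriving~$\rho \subseteq \rho_1 \cup \rho_2$ from~$\rho \subseteq \rho_i$, which must be pushed through the union on the left. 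We handle it by a side induction showing that any derivation of~$\rho_1 \cup \rho_2 \subseteq \rho'$ with~$\rho'$ concrete can be normalised to use the union-on-the-left rule first. Also, no variable appears in~$\Omega$ and~$\Omega$ contains no~$\anyLoc$, so the rules~$\alpha \subseteq \alpha$ and~$\rho \subseteq \anyLoc$ cannot apply.

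With the lemma in hand, the main induction on~$\rho$ is routine.
\begin{itemize}
\item For~$\rho = \anyLoc$, the left-hand side is~$\Omega \subseteq \Omega$, which holds by reflexivity of~$\subseteq$, a preorder as noted earlier. The right-hand side is~$\varnothing \subseteq \Omega$, which also holds. Both sides are true.
\item For~$\rho = \alpha$, both substitutions leave~$\rho$ unchanged, so both sides are literally the same statement.
\item For~$\rho = \{\ell\}$, with~$\ell$ a location or variable distinct from~$\anyLoc$, both substitutions again leave~$\rho$ unchanged.
\item For~$\rho = \rho_1 \cup \rho_2$, the inversion lemma splits each side into the conjunction of its two components, and the induction hypotheses finish the case.
\end{itemize}
If we treat~$\varnothing$ as the unit of~$\cup$ when~$\anyLoc$ is deleted inside a union, we check that~$\varnothing \cup \rho' \subseteq \Omega$ iff~$\rho' \subseteq \Omega$. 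This follows from the union-on-the-left clause together with~$\varnothing \subseteq \Omega$.

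The main obstacle is the inversion lemma, specifically its union-on-the-left case. The derivation system has no explicit cut or transitivity rule, so we must confirm that rules splitting the right-hand union interact correctly with a left-hand union. Our first attempt will be to strengthen the statement to arbitrary right-hand sides~$\rho'$ that contain no~$\anyLoc$ and no variables. We then run the induction on the size of the left-hand side, with the derivation as a secondary measure.
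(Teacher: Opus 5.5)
Your proposal is correct and follows the same structural induction on $\rho$ as the paper. The paper settles the base cases by computing truth values and the union case by distributing the substitution; you additionally make explicit the union inversion ($\rho_1\cup\rho_2\subseteq\Omega$ iff $\rho_1\subseteq\Omega$ and $\rho_2\subseteq\Omega$) that the paper uses silently. That inversion is a direct induction on the derivation, since the union-on-the-right rule only re-weakens each component, so the normalisation and size-measure strengthening you anticipate are unnecessary.
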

\begin{proof}
  By induction on $\rho$.
  If~$\rho = \varnothing$ or~$\rho = \anyLoc$, both sides are always true.
  If~$\rho = \alpha$ or~$\rho = \{\alpha\}$, both sides are always false.
  If~$\rho = \{L\}$, both sides are true if and only if~$L \in \Omega$.
  The case when~$\rho = \rho_1 \cup \rho_2$ follows by induction because substitution directly distributes over unions.
\end{proof}

\begin{lem}
  For any type~$t$, $\nlinf{t} \setminus \anyLoc \equiv \nl{t}$.
\end{lem}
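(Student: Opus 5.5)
The plan is a structural induction on $t$, following the clauses in the definition of $\nl{\cdot}$ in Appendix~\ref{sec:pn-def}. Two facts drive it. First, $\nlinf{\cdot}$ and $\nl{\cdot}$ are defined by the same clauses everywhere except the leaf $\anyLoc$. Second, the set-difference function of Appendix~\ref{sec:set-relations} distributes over unions by definition: $(\rho_1 \cup \rho_2) \setminus \rho = (\rho_1 \setminus \rho) \cup (\rho_2 \setminus \rho)$. Since both named-location functions only ever build unions of leaves, every compound case should reduce to the induction hypotheses on the immediate subterms.

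In order, I would handle the cases as follows.
\begin{enumerate}
\item The leaves. For $\alpha$ both sides are $\varnothing$. For $\anyLoc$ we have $\nlinf{\anyLoc} \setminus \anyLoc = \anyLoc \setminus \anyLoc = \varnothing$ (using $\anyLoc \subseteq \anyLoc$), and $\nl{\anyLoc} = \varnothing$.
\item The case $t = L$, which needs care (see the final paragraph).
\item The compound cases: $\{\ell\}$, $\rho_1 \cup \rho_2$, $t_e @ \rho$, arrows, sums, products, $\mu$, and $\forall$. In each, unfold both definitions into the same union shape, push $\setminus \anyLoc$ through the union, and apply the induction hypothesis to each summand.
\item Close the compound cases using that $\equiv$ is a congruence for $\cup$. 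This follows from the inductive rules for $\subseteq$: if $\rho_i \subseteq \rho_i'$ for $i = 1,2$, then $\rho_1 \cup \rho_2 \subseteq \rho_1' \cup \rho_2'$, via the union-left rule combined with the two union-right rules.
\end{enumerate}

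The main obstacle is the concrete-location leaf $t = L$. Read literally, the containment rule $\ell \in \anyLoc$ makes $\{L\} \setminus \anyLoc = \varnothing$, which would collapse the left-hand side to $\varnothing$ while $\nl{L} = \{L\}$.

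I would resolve this by fixing the reading of $\setminus \anyLoc$ that the lemma relies on: it deletes the $\anyLoc$ summands of the union, so $a \setminus \anyLoc = a$ for every atom $a \neq \anyLoc$ and $\anyLoc \setminus \anyLoc = \varnothing$. Under that reading the $L$ leaf is immediate, with both sides equal to $\{L\}$, and every other case goes through as above.

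Concretely, I would first state this atom-wise behaviour as a small auxiliary claim. I would then phrase the induction hypothesis as ``$\nlinf{t}$ is $\nl{t}$ with possibly some extra $\anyLoc$ summands''. The lemma then follows because deleting the $\anyLoc$ summands from such a union leaves exactly the summands of $\nl{t}$.
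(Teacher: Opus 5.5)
Your proposal is correct and is essentially the paper's proof, which is just ``by induction on~$t$.'' Your diagnosis of the $L$ leaf is also right: under the paper's literal type-level difference, every $\rho \setminus \anyLoc$ collapses to $\varnothing$, so the lemma holds only under the symbolic reading you adopt (delete the $\anyLoc$ summands), which the paper's one-line proof silently assumes.
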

\begin{proof}
  By induction on~$t$.
\end{proof}

\begin{lem}
  For any choreography~$C$, $\nlinf{C} \setminus \anyLoc \equiv \nl{C}$.
\end{lem}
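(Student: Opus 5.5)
The plan is to prove, for all choreographies $C$, that $\nlinf{C} \setminus \anyLoc \equiv \nl{C}$. We proceed by structural induction on $C$, using the preceding lemma (the analogous statement $\nlinf{t} \setminus \anyLoc \equiv \nl{t}$ for types) wherever a type or location set annotation appears. The definitions of $\nl{C}$ and $\nlinf{C}$ coincide clause by clause. The only difference is at the leaf $\anyLoc$ inside annotations, so each case reduces to commuting $\setminus \anyLoc$ past unions.

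First we record two auxiliary facts about the syntactic operations of Appendix~\ref{sec:set-relations}.
\begin{enumerate}
\item Set difference distributes over union by definition: $(\rho_1 \cup \rho_2) \setminus \anyLoc = (\rho_1 \setminus \anyLoc) \cup (\rho_2 \setminus \anyLoc)$.
\item The relation $\equiv$ is a congruence for $\cup$. If $\rho_1 \equiv \rho_1'$ and $\rho_2 \equiv \rho_2'$, then $\rho_1 \cup \rho_2 \equiv \rho_1' \cup \rho_2'$. This follows from the inductive rules for $\subseteq$: use the union-left rule together with the two union-right rules.
\end{enumerate}
We also note that $\varnothing \setminus \anyLoc = \varnothing$ and $\{L\} \setminus \anyLoc = \varnothing$, since $L \in \anyLoc$. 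That second equation looks problematic, and it is the point to check carefully (see below).

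The cases then run as follows.
\begin{itemize}
\item For $C = X$, both sides are $\varnothing$.
\item For $C = \KillAfter{L}{C'}$, we get $(\{L\} \cup \nlinf{C'}) \setminus \anyLoc$.
\item For compound forms such as $C_1 \appchor{\rho} C_2$, the term is a union of $\nlinf{\rho}$ (or $\nlinf{\ell}$, $\nlinf{t}$) with the $\nlinf{C_i}$. We distribute $\setminus \anyLoc$ and then apply the induction hypothesis to the subchoreographies and the type lemma to the annotations. Congruence of $\equiv$ reassembles the result. The binding forms (\ForkN, type-\LetN, type functions) need no special treatment, because both definitions ignore binders in the same way.
\end{itemize}

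The main obstacle is the clause for a concrete location $L$. Read literally, $\{L\} \setminus \anyLoc = \varnothing$ because $L \in \anyLoc$, which would make the claim false whenever $\nl{C}$ contains an $L$. I expect the intended reading, implicit in the preceding type lemma, is that "$\setminus \anyLoc$" means removing the $\anyLoc$ component syntactically, i.e. the substitution $\subst{\rho}{\anyLoc}{\varnothing}$ used in the lemma before it. Under that reading, named locations are untouched, and the $L$ and $\anyLoc$ leaves give $\{L\}$ and $\varnothing$, matching $\nl{\cdot}$. I would adopt this interpretation explicitly, prove the base cases $\nlinf{\anyLoc}\setminus\anyLoc = \varnothing = \nl{\anyLoc}$ and $\nlinf{L}\setminus\anyLoc = \{L\}$, and let the remaining cases follow by the routine distribution argument above.
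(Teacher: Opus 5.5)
Your proof is correct and is the same structural induction on $C$ as the paper's one-line proof (``By induction on~$C$''), with the preceding type lemma handling the $\rho$ and $\ell$ annotations. Your caveat is accurate: under the appendix's syntactic set difference, $\{L\} \setminus \anyLoc = \varnothing$ because $L \in \anyLoc$, so the statement only holds when $\setminus \anyLoc$ means deleting the $\anyLoc$ leaves (equivalently $\subst{\nlinf{C}}{\anyLoc}{\varnothing}$). The paper tacitly uses that reading, as it does for $\rloc{R} \setminus \anyLoc$ elsewhere, and under it your distribution step holds because substitution is homomorphic.
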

\begin{proof}
  By induction on~$C$.
\end{proof}

\begin{lem}\label{lem:spawnedloc-sub-participants}
  If $\chortypedplus{\ctx}{C}{\tau}{\rho}$ then $\spawnedlocs{C} \subseteq \nl{\rho}$
  and $\nlinf{\rho} \subseteq \nlinf{C}$.
\end{lem}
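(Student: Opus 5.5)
We argue by induction on the augmented typing derivation $\chortypedplus{\ctx}{C}{\tau}{\rho}$, proving both inclusions simultaneously. Throughout we use that $\nl{\cdot}$ and $\nlinf{\cdot}$ distribute over unions, and that $\spawnedlocs{\cdot}$ is homomorphic except for the \KillAfterN case.

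\emph{Base cases.} For \ruleref{S-Var} both sides are empty. For \ruleref{S-Done} we have $\spawnedlocs{\rho.e} = \varnothing$. The participant set is either $\varnothing$ or $\rho$, and $\nlinf{\rho} = \nlinf{\rho.e}$, so both inclusions hold.

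\emph{Value formers.} The rules \ruleref{S-Fun}, \ruleref{S-TFun} and \ruleref{S-TFunLoc} all have the premise $\spawnedlocs{C} = \varnothing$, so the first inclusion is trivial. Their participant set is $\varnothing$, so the second is trivial as well.

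\emph{Compound rules.} The participant set of \ruleref{S-App}, \ruleref{S-Pair}, \ruleref{S-Fst}, \ruleref{S-Case}, \ruleref{S-LocalCase}, \ruleref{S-Send}, \ruleref{S-Sync}, \ruleref{S-Unfold} and the \LetN rules is the union of the subderivations' participant sets together with some annotation sets. Each annotation set occurs syntactically in $C$, hence is contained in $\nlinf{C}$. The induction hypotheses then give both inclusions directly: $\spawnedlocs{C}$ is the union of the subterms' spawned sets, and each of these lies within the named locations of the corresponding $\rho_i$.

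\emph{Binder cases.} \ruleref{S-Fork} and \ruleref{S-LetLoc} remove $\alpha$ from the participant set, i.e.\ they produce $\rho'\setminus\{\alpha\}$. Since $\alpha$ is a variable, removing it does not change the concrete named locations, so $\nl{\rho'\setminus\{\alpha\}} \equiv \nl{\rho'}$ up to the possible removal of $\anyLoc$ handled by the definition of set difference.

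I expect the second inclusion to need care here. Two points must be checked:
\begin{enumerate}
  \item $\nlinf{\rho'\setminus\{\alpha\}} \subseteq \nlinf{\rho'}$;
  \item $\nlinf{\rho'} \subseteq \nlinf{C}$ is still available, even though the body was typed under an extended context while $\nlinf{C}$ of the \ForkN expression equals $\nlinf{\ell} \cup \nlinf{C_{\mathrm{body}}}$.
\end{enumerate}
The first is a small auxiliary lemma, proved by induction on $\rho'$ using the definition of $\setminus$. A related check concerns \ruleref{S-TAppLoc}, whose participant set contains the substituted set $\rho[\alpha\mapsto t]$. By Lemma~\ref{lem:loc-sub-namedlocs}, its named locations are contained in $\nl{\rho} \cup \nl{t}$; the latter is part of $\nlinf{C \appchor{\rho'} t}$ via the annotation $\rho'$ and $t$.

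\emph{The \ruleref{S-Kill} case.} Here $\spawnedlocs{\KillAfter{L}{C}} = \{L\} \cup \spawnedlocs{C}$ and the participant set is $\rho \cup \{L\}$. By the induction hypothesis $\spawnedlocs{C} \subseteq \nl{\rho}$, so the first inclusion holds. For the second, $L \in \nlinf{\KillAfter{L}{C}}$ by definition, and the induction hypothesis covers $\rho$.

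The main obstacle I anticipate is the \ruleref{S-Fork}/\ruleref{S-LetLoc} interplay between set difference and $\anyLoc$ in the second inclusion. My approach there is to first prove that $\nlinf{\rho\setminus\rho_0} \subseteq \nlinf{\rho}$ holds for arbitrary $\rho_0$, and then apply it uniformly in those cases.
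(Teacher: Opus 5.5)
Your proposal is correct and follows the paper's own proof, which says only ``by induction on the typing judgment''; your case analysis supplies exactly the details left implicit there, in particular the auxiliary fact $\nlinf{\rho\setminus\rho_0} \subseteq \nlinf{\rho}$ and the observation that removing $\alpha$ leaves concrete locations untouched. One simplification: \ruleref{S-TAppLoc} needs no substitution lemma, because the extra participant component $\rho' = \rho_\tau \cup \subst{\rho}{\alpha}{t}$ is literally the annotation of $C \appchor{\rho'} t$, so it lies in $\nlinf{C \appchor{\rho'} t}$ by definition, and this also covers any $\anyLoc$ component directly.
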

\begin{proof}
  By induction on the typing judgment $\chortypedplus{\ctx}{C}{\tau}{\rho}$.
\end{proof}

\begin{lem}
\label{lem:participants-subset-nl}
If $\chortypedplus{\ctx}{C}{\tau}{\rho}$ then $\nl{\rho} \subseteq \nl{C}$.
\end{lem}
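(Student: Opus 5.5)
We prove the statement by induction on the derivation of $\chortypedplus{\ctx}{C}{\tau}{\rho}$, following the same pattern as Lemma~\ref{lem:spawnedloc-sub-participants}. In every case we use only two facts about named locations.
\begin{itemize}
  \item $\nl{\cdot}$ distributes over unions: $\nl{\rho_1 \cup \rho_2} = \nl{\rho_1} \cup \nl{\rho_2}$.
  \item Subtraction cannot add named locations: $\nl{\rho \setminus \rho'} \subseteq \nl{\rho}$. This follows by a short induction on $\rho$ using the definition of $\setminus$ in Appendix~\ref{sec:set-relations}, since each clause returns either $\varnothing$ or the original component.
\end{itemize}
With these, each case reduces to showing that every location set contributing to the participant set in the conclusion also contributes to $\nl{C}$.

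The base and binding cases are as follows.
\begin{itemize}
  \item \ruleref{S-Var} and the value rules \ruleref{S-Fun}, \ruleref{S-TFun}, \ruleref{S-TFunLoc} type at $\varnothing$, so they are immediate.
  \item \ruleref{S-Done} has participants $\varnothing$ or $\rho$, and $\nl{\rho.e} = \nl{\rho}$.
  \item For the structural rules (\ruleref{S-App}, \ruleref{S-Pair}, \ruleref{S-Fst}, \ruleref{S-Snd}, \ruleref{S-Unfold}, \ruleref{S-Case}, \ruleref{S-LocalCase}, \ruleref{S-LetLocal}, \ruleref{S-Send}, \ruleref{S-Sync}), the participant set is a union of the subterms' participant sets and the syntactic annotation $\rho'$ (or $\ell$, or the binders $\rho_1$, $\rho_2$). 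The induction hypotheses handle the subterms. The annotations appear verbatim in the definition of $\nl{C}$ for these constructs: for instance, $\nl{C_1 \appchor{\rho'} C_2}$ includes $\nl{\rho'}$, and $\nl{C \ChorSend[\ell_1] \rho_2}$ includes $\nl{\ell_1} \cup \nl{\rho_2}$.
  \item \ruleref{S-Kill}: $\nl{\KillAfter{L}{C}} = \{L\} \cup \nl{C}$ covers the added $L$.
  \item \ruleref{S-LetLoc}, \ruleref{S-LetLocSet}, \ruleref{S-Fork}: the body's participants are subtracted by $\alpha$, so we apply the second fact and then the induction hypothesis on the body. The typing context for the body is extended, but $\nl{\cdot}$ does not depend on the context, so this causes no difficulty.
\end{itemize}

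The main obstacle is the type-application rules \ruleref{S-TApp} and \ruleref{S-TAppLoc}, where the participant set contains $\rho' = \rho_\tau \cup \subst{\rho}{\alpha}{t}$. Here $\rho_\tau$ comes from kinding the result type rather than from the syntax of $C$. However, the syntax records exactly this annotation $\rho'$ on $\appchor{\rho'}$, and $\nl{C \appchor{\rho'} t} \supseteq \nl{\rho'}$ by definition. So the case closes by reading the annotation off the term, without reasoning about $\rho_\tau$ or invoking Lemma~\ref{lem:loc-sub-namedlocs}. The same observation handles \ruleref{S-App}, \ruleref{S-Fst} and \ruleref{S-Snd}, where the kinding-derived sets $\rho_a$, $\rho_b$ enter only through the recorded annotation.

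The one place to check carefully is whether an augmented rule ever adds a named location that is absent from the term's syntax. Scanning the rules, every added set is a term annotation, so the induction goes through uniformly.
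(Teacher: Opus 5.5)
Your proof is correct and takes the paper's route: induction on the augmented typing derivation, where every set added to the participant set is a syntactic annotation or binder that $\nl{C}$ records. The paper singles out only \ruleref{S-Done}, whose participants are $\varnothing$ or $\rho$ with $\nl{\rho.e} = \nl{\rho}$; your extra remarks (that $\setminus$ cannot introduce named locations, and that the kinding-derived sets in the application rules enter only through the recorded annotation $\rho'$) make explicit what the paper leaves implicit.
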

\begin{proof}
  By induction on the typing derivation $\chortypedplus{\ctx}{C}{\tau}{\rho}$.
  The only interesting case is when $C = \rho.e$, wherein if $\val{e}$ we have
  that $\varnothing \subseteq \nl{\rho}$, and otherwise $\nl{\rho} \subseteq \nl{\rho}$.
\end{proof}

\begin{lem}\label{lem:participants-sub-cloc}
  If $\chortypedplus{\ctx}{C}{\tau}{\rho}$, then $\rho \subseteq \cloc{C}$.
\end{lem}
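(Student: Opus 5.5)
The proof proceeds by induction on the augmented typing derivation $\chortypedplus{\ctx}{C}{\tau}{\rho}$. We compare the participant set $\rho$ produced by each rule against the clause of $\cloc{C}$ for the same syntactic form. The relation $\subseteq$ is the inductively defined one from Appendix~\ref{sec:set-relations}, so we rely on three structural facts about it. It is reflexive and transitive. It is monotone under unions: if $\rho_1 \subseteq \rho_1'$ and $\rho_2 \subseteq \rho_2'$, then $\rho_1 \cup \rho_2 \subseteq \rho_1' \cup \rho_2'$, which follows from the union-left and union-right rules. Finally, $\varnothing \subseteq \rho$ for every $\rho$.

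\textbf{Base and value cases.} \ruleref{S-Var}, \ruleref{S-Fun}, \ruleref{S-TFun}, and \ruleref{S-TFunLoc} all have participant set $\varnothing$, so they are immediate. For \ruleref{S-Done}, the participant set is either $\varnothing$ or $\rho$, and $\cloc{\rho.e} = \rho$.

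\textbf{Compositional cases.} Most rules produce a participant set that is a union of the subderivations' participant sets together with annotations appearing literally in $\cloc{C}$.
\begin{itemize}
  \item For \ruleref{S-App}, we have $\rho_1 \cup \rho_2 \cup \rho' \subseteq \cloc{C_1} \cup \cloc{C_2} \cup \rho'$ by the induction hypotheses and monotonicity.
  \item The same argument handles \ruleref{S-TApp}/\ruleref{S-TAppLoc}, \ruleref{S-Fst}/\ruleref{S-Snd}/\ruleref{S-Unfold}, \ruleref{S-Pair}, \ruleref{S-Inl}/\ruleref{S-Inr}/\ruleref{S-Fold}, \ruleref{S-LetLocal}, \ruleref{S-Send}, \ruleref{S-Sync}, and \ruleref{S-Kill}.
  \item For \ruleref{S-Send}, $\{\ell_1\}$ and $\rho_2$ appear in $\cloc{C \ChorSend[\ell_1] \rho_2}$.
  \item For \ruleref{S-Kill}, $\{L\}$ appears in $\cloc{\KillAfter{L}{C}}$.
  \item For \ruleref{S-Case} and \ruleref{S-LocalCase}, the extra annotation $\rho'$ appears in $\cloc{\cdot}$, and the branch participant sets $\rho_1,\rho_2$ are bounded by $\cloc{C_1},\cloc{C_2}$. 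The $\cloc{\cdot}$ clause for \LocalCaseN is not written explicitly in the appendix; we take it to be the evident analogue of the \CaseN clause.
\end{itemize}

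\textbf{Binder cases.} For \ruleref{S-LetLoc}, \ruleref{S-LetLocSet}, and \ruleref{S-Fork}, both sides remove the bound variable $\alpha$. We therefore need an auxiliary lemma: set difference is monotone in its first argument, i.e.\ if $\rho \subseteq \rho''$ then $\rho \setminus \sigma \subseteq \rho'' \setminus \sigma$. We prove it by induction on the derivation of $\rho \subseteq \rho''$. Its cases are:
\begin{itemize}
  \item $\{\ell\} \subseteq \rho''$ via $\ell \in \rho''$. If $\ell \in \sigma$, the left side is $\varnothing$ and the claim is trivial. Otherwise we must show that $\ell \in \rho''$ and $\ell \notin \sigma$ imply $\ell \in \rho'' \setminus \sigma$, by a sub-induction on the derivation of $\ell \in \rho''$.
  \item $\alpha \subseteq \alpha$ and $\rho \subseteq \anyLoc$. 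These are handled by unfolding the definition of $\setminus$. Note that $\anyLoc \setminus \{\alpha\} = \anyLoc$, since $\anyLoc \not\subseteq \{\alpha\}$.
  \item The union rules, which follow directly because $\setminus$ distributes over $\cup$.
\end{itemize}
With this lemma, the induction hypothesis $\rho' \subseteq \cloc{C_2}$ gives $\rho' \setminus \{\alpha\} \subseteq \cloc{C_2} \setminus \{\alpha\}$, and the case closes by monotonicity of unions. For \ruleref{S-Fork}, the typing rule and the $\cloc{\cdot}$ clause use exactly the same shape $\{\ell\} \cup (\cdot \setminus \{\alpha\})$, so the argument is identical.

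I expect this difference-monotonicity lemma to be the main obstacle. The subtlety lies in the open-term semantics of $\in$ and $\subseteq$, especially the cases involving $\anyLoc$ and type variables; everything else is routine bookkeeping.
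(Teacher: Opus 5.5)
Your proposal is correct and takes the same route as the paper: induction on the augmented typing derivation, comparing each rule's participant set with the matching clause of $\cloc{\cdot}$. The paper singles out only the $\rho.e$ case as interesting; your explicit monotonicity-of-set-difference lemma for the let-location, let-location-set, and fork cases is a detail the paper leaves implicit, and it holds for the instances $\sigma = \{\alpha\}$ and $\sigma = \alpha$ you actually use. For a fully general $\sigma$ with $\anyLoc \subseteq \sigma$, the $\rho \subseteq \anyLoc$ case would need one more small argument, but that case is never needed.
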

\begin{proof}
  By induction on the typing derivation $\chortypedplus{\ctx}{C}{\tau}{\rho}$.
  The only interesting case is when $C = \rho.e$, wherein if $\val{e}$ we have
  that $\varnothing \subseteq \rho$, and otherwise $\rho \subseteq \rho$.
\end{proof}

\begin{lem}[Single-Step Type Preservation]\label{lem:single-preservation}
  If $\chortypedplus{\ctx}{C}{\tau}{\rho}$,
  $\nl{\rho} \subseteq \Omega$, and
  $\langle C , \Omega \rangle \step[R] \langle C' , \Omega' \rangle$,
  then there is some $\rho'$ such that all of the following properties hold.
  \begin{itemize}
    \item[(1)] $\chortypedplus{\ctx}{C'}{\tau}{\rho'}$
    \item[(2)] $\nl{\rho'} \subseteq \Omega'$
    \item[(3)] $\spawnedlocs{C'} \setminus \spawnedlocs{C} = \Omega' \setminus \Omega$
    \item[(4)] $\spawnedlocs{C} \setminus \spawnedlocs{C'} = \Omega \setminus \Omega'$
    \item[(5)] $\nl{\rho'} \setminus \nl{\rho} \subseteq \Omega' \setminus \Omega$
    \item[(6)] $\Omega \setminus \Omega' \subseteq \nl{\rho} \setminus \nl{\rho'}$
  \end{itemize}
\end{lem}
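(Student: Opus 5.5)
We will prove Lemma~\ref{lem:single-preservation} by induction on the derivation of $\langle C,\Omega\rangle \step[R] \langle C',\Omega'\rangle$, inverting the augmented typing derivation $\chortypedplus{\ctx}{C}{\tau}{\rho}$ in each case. The six properties are best maintained together, since the inductive cases need (3)--(6) for the premise step in order to re-establish the disjointness side conditions of the enclosing \textsc{S-} rule. We first note that $\Omega' = \Omega$ holds except for \ruleref{C-Fork}, \ruleref{C-Kill} and \ruleref{C-KillI}, and steps wrapped in contexts or out-of-order rules. In the remaining base cases, (3)--(6) reduce to showing $\spawnedlocs{C'} = \spawnedlocs{C}$ and $\nl{\rho'} \subseteq \nl{\rho}$.

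\textbf{Base cases.} The non-spawning redexes (\ruleref{C-Done}, \ruleref{C-App}, \ruleref{C-TApp}, \ruleref{C-LetV}, \ruleref{C-TyLetV}, the case and projection rules, \ruleref{C-SendV}, \ruleref{C-Sync}) are handled with the substitution lemmas (Lemmas~\ref{lem:loc-sub-pres-typ}, \ref{lem:ty-sub-pres-typ}, \ref{lem:local-sub-pres-typ}, \ref{lem:sub-pres-typ} and their corollaries). Lemma~\ref{lem:value-participants} supplies $\spawnedlocs{V}=\varnothing$ and $\rho_V=\varnothing$ for substituted values. Since function bodies have no spawned locations by \ruleref{S-Fun}/\ruleref{S-TFun}, Lemmas~\ref{lem:sub-chor-spawnedlocs-empty} and \ref{lem:loc-sub-chor-spawnedlocs} give $\spawnedlocs{C'} = \spawnedlocs{C}$. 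For $\beta$-reduction, the body's participant set is the latent set $\rho$ on the arrow, which \ruleref{S-App} includes in the outer participant set. Hence $\nl{\rho'} \subseteq \nl{\rho}$, and (2) follows from the hypothesis.

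For \ruleref{C-Fork}, we set $\rho' = \rho_{\text{body}}[\alpha\mapsto L'] \cup \{L'\}$ and type $\KillAfter{L'}{C'}$ by \ruleref{S-Kill} after location substitution. Freshness $L'\notin\Omega$ ensures that $\spawnedlocs{C} \notin [\alpha\mapsto L']$ and that $L'$ is not already spawned. Both (3) and (5) then pick up exactly $\{L'\} = \Omega'\setminus\Omega$. The subtle point here is that the original participant set was $\{\ell\}\cup(\rho\setminus\{\alpha\})$. We will need a small auxiliary fact that $\nl{\rho[\alpha\mapsto L']} \subseteq \nl{\rho\setminus\{\alpha\}} \cup \{L'\}$, which follows by induction on $\rho$ using Lemma~\ref{lem:loc-sub-namedlocs}.

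For \ruleref{C-Kill} and \ruleref{C-KillI}, we take $\rho'$ to be the body's set. We use Lemma~\ref{lem:value-participants} (Kill) or Lemma~\ref{lem:participants-sub-cloc} together with $L\notin\cloc{C}$ (KillI) to get $L\notin\nl{\rho'}$, and the premise $L\notin\spawnedlocs{C}$ of \ruleref{S-Kill} to get (4) and (6).

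\textbf{Inductive cases.} These are \ruleref{C-Ctx} and the out-of-order rules \ruleref{C-LetI}, \ruleref{C-TyLetI}, \ruleref{C-SyncI}, \ruleref{C-CaseI}, \ruleref{C-LocalCaseI}, \ruleref{C-AppI}, \ruleref{C-PairI} and \ruleref{C-ForkI}. In each we apply the induction hypothesis to the stepping subterm and rebuild the derivation with the same rule. The main obstacle is re-establishing the disjointness premises, such as $\nl{\rho} \cap \spawnedlocs{C_2} = \varnothing$ in \ruleref{S-LetLocal}, after the subterm has both gained spawned locations and gained participants.

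Our plan for this is to use (3) and (5) from the induction hypothesis, which say that new spawned locations and new participants both lie in $\Omega'\setminus\Omega$, that is, they are fresh. The untouched sibling subterms have participant sets contained in $\Omega$ by hypothesis, and their spawned locations are contained in their participant sets by Lemma~\ref{lem:spawnedloc-sub-participants}. Hence they are disjoint from anything fresh, and the old disjointness carries over for the remaining parts.

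For \ruleref{C-CaseI} and \ruleref{C-LocalCaseI}, both branches step with the same $R$ and $\Omega'$, so a fork happening in both branches yields the same fresh $L'$. This preserves $\spawnedlocs{C_1'} = \spawnedlocs{C_2'}$.

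Finally, (6) in the inductive cases follows from (4) of the induction hypothesis: the killed $L$ was in $\spawnedlocs{C}\subseteq\nl{\rho}$, and the well-scoping premises guarantee that it occurs in no sibling's participant set, so $L\notin\nl{\rho'}$.
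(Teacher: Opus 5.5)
Your proposal is correct and follows essentially the same route as the paper. Both proofs go by induction on the step, use the substitution lemmas for the ordinary redexes (together with Lemma~\ref{lem:value-participants} and the empty-spawned-locations premises of \ruleref{S-Fun}/\ruleref{S-TFun}), do explicit bookkeeping for \ruleref{C-Fork}, \ruleref{C-Kill} and \ruleref{C-KillI}, and, in the context and out-of-order cases, recover the disjointness premises from conclusions (3)--(6) of the induction hypothesis and $\spawnedlocs{C}\subseteq\nl{\rho}\subseteq\Omega$ for the untouched parts, exactly as the paper does for pairs and \InlN{}. Your fork case is slightly more careful than the paper's: you track the participant set $\{\ell\}\cup(\rho\setminus\{\alpha\})$ through the fact $\nl{\rho[\alpha\mapsto L']}\subseteq\nl{\rho\setminus\{\alpha\}}\cup\{L'\}$, whereas the paper silently uses $\nl{\rho}$ there.
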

\begin{proof}
  By induction on the step $\langle C , \Omega \rangle \step[R] \langle C' , \Omega' \rangle$.
  Most cases are immediate by induction and using the various substitution lemmas.
  \begin{itemize}
    \item (\ruleref{C-Ctx})
    We handle the case for reductions in pairs and \InlN.
    The argument for the other cases, as well as the out-of-order cases, are similar.

    For a pair~$(C_1,C_2)$,
    the assumptions are that $\chortypedplus{\ctx}{C_1}{\tau_1}{\rho_1}$,
    $\chortypedplus{\ctx}{C_2}{\tau_2}{\rho_2}$,
    $\nl{\rho_1}$ and $\spawnedlocs{C_2}$ are disjoint,
    and $\nl{\rho_2}$ and $\spawnedlocs{C_1}$ are disjoint.
    First suppose the reduction is in the left-hand side.
    By induction, there is some $\rho_1'$ where
    $\chortypedplus{\ctx}{C_1'}{\tau_1}{\rho_1'}$ and conditions (2--6) hold.
    \begin{itemize}
      \item[(1)] First, we show that $\nl{\rho_1'}$ and $\spawnedlocs{C_2}$ are disjoint.
      Suppose that $L \in \nl{\rho_1'}$ and $L \in \spawnedlocs{C_2}$.
      By the assumption that $\nl{\rho_1}$ and $\spawnedlocs{C_2}$ are disjoint,
      we must have that $L \notin \nl{\rho_1}$.
      Therefore $L \in \nl{\rho_1'} \setminus \nl{\rho_1}$, and hence
      $L \in \Omega' \setminus \Omega$ by the inductive hypothesis.
      But as $\spawnedlocs{C_2} \subseteq \nl{\rho_2} \subseteq \Omega$, we have a contradiction, as desired.

      Now we show that $\nl{\rho_2}$ and $\spawnedlocs{C_1'}$ are disjoint.
      Suppose that $L \in \nl{\rho_2}$ and $L \in \spawnedlocs{C_1'}$.
      By the assumption that $\nl{\rho_2}$ and $\spawnedlocs{C_1}$ are disjoint,
      we must have that $L \notin \spawnedlocs{C_1}$.
      Therefore $L \in \spawnedlocs{C_1'} \setminus \spawnedlocs{C_1}$, and hence
      $L \in \Omega' \setminus \Omega$ by the inductive hypothesis.
      But as $\nl{\rho_2} \subseteq \Omega$, we have a contradiction, as desired.

      Thus by the argument above and by induction, $\chortypedplus{\ctx}{(C_1',C_2)_\rho}{\tau_1 \times \tau_2}{\rho_1' \cup \rho_2}$.

      \item[(2)] We show that $\nl{\rho_1'} \cup \nl{\rho_2} \subseteq \Omega'$.
      By induction $\nl{\rho_1'} \subseteq \Omega'$, so we need only show that $\nl{\rho_2} \subseteq \Omega'$.
      To that end, let $L \in \nl{\rho_2}$, and suppose for contradiction that $L \notin \Omega'$.
      Then $L \in \Omega$ because $\nl{\rho_2} \subseteq \Omega$ by assumption, so $L \in \Omega \setminus \Omega'$.
      Then by the inductive hypothesis, $L \in \spawnedlocs{C_1} \setminus \spawnedlocs{C_1'} \subseteq \spawnedlocs{C_1}$.
      But by assumption $\spawnedlocs{C_1}$ and $\nl{\rho_2}$ are disjoint, so we have a contradiction.

      \item[(3)] We need to show that $\spawnedlocs{(C_1',C_2)_\rho} \setminus \spawnedlocs{(C_1,C_2)_\rho} = \spawnedlocs{C_1'} \setminus (\spawnedlocs{C_1} \cup \spawnedlocs{C_2}) = \Omega' \setminus \Omega$.
      We can see that $\spawnedlocs{C_1'} \setminus (\spawnedlocs{C_1} \cup \spawnedlocs{C_2}) \subseteq \Omega' \setminus \Omega$ easily because
      $\spawnedlocs{C_1'} \setminus \spawnedlocs{C_1} \subseteq \Omega' \setminus \Omega$ by the inductive hypothesis.
      For the other direction, if $L \in \Omega' \setminus \Omega$, then $L \in \spawnedlocs{C_1'} \setminus \spawnedlocs{C_1}$, so we need only show that $L \notin \spawnedlocs{C_2}$.
      This holds because if $L \in \spawnedlocs{C_2} \subseteq \nl{C_2}$, then we would have that $L \in \Omega$, a contradiction.

      \item[(4)] We need to show that $\spawnedlocs{(C_1,C_2)_\rho} \setminus \spawnedlocs{(C_1',C_2)_\rho} = \spawnedlocs{C_1} \setminus (\spawnedlocs{C_1'} \cup \spawnedlocs{C_2}) = \Omega \setminus \Omega'$.
      We can see that $\spawnedlocs{C_1} \setminus (\spawnedlocs{C_1'} \cup \spawnedlocs{C_2}) \subseteq \Omega \setminus \Omega'$ easily because
      $\spawnedlocs{C_1} \setminus \spawnedlocs{C_1'} \subseteq \Omega \setminus \Omega'$ by the inductive hypothesis.
      For the other direction, if $L \in \Omega \setminus \Omega'$, then $L \in \spawnedlocs{C_1} \setminus \spawnedlocs{C_1'}$, so we need only show that $L \notin \spawnedlocs{C_2}$.
      This holds because $L \in \nl{C_1} \setminus \nl{\rho_1'}$ by (6) of the induction, so $L \in \nl{C_1}$.
      But then as $\nl{C_1}$ and $\spawnedlocs{C_2'}$ are disjoint, $L \notin \spawnedlocs{C_2}$ as desired.

      \item[(5)] 
      We need to show that $(\nl{\rho_1'} \cup \nl{\rho_2}) \setminus (\nl{\rho_1} \cup \nl{\rho_2}) = \nl{\rho_1'} \setminus (\nl{\rho_1} \cup \nl{\rho_2}) \subseteq \Omega' \setminus \Omega$.
      However, $\nl{\rho_1'} \setminus \nl{\rho_1} \subseteq \Omega' \setminus \Omega$ by (5) of the inductive hypothesis, which is satisfactory.

      \item[(6)]
      We need to show that $\Omega \setminus \Omega' \subseteq (\nl{\rho_1} \cup \nl{\rho_2}) \setminus (\nl{\rho_1'} \cup \nl{\rho_2}) = \nl{\rho_1} \setminus (\nl{\rho_1'} \cup \nl{\rho_2})$.
      To that end, let $L \in \Omega \setminus \Omega'$.
      Clearly $L \in \nl{\rho_1}$ as $\nl{\rho_1} \subseteq \Omega$, so we must show that $L \notin \nl{\rho_1'} \cup \nl{C_2}$.
      But by (6) of the inductive hypothesis, $L \notin \nl{\rho_1'}$, so we must simply show that $L \notin \nl{\rho_2}$.
      By (4) of the inductive hypothesis, $L \in \spawnedlocs{\rho_1} \setminus \spawnedlocs{\rho_1'}$, and hence $L \notin \nl{\rho_2}$ because $\spawnedlocs{\rho_1}$ and $\nl{\rho_2}$ are disjoint.
    \end{itemize}
    The argument for reductions on the right-hand side of the pair is symmetric.

    For~$\Inl{\rho}{C}$,
    the assumptions are that $\chortypedplus{\ctx}{C_1}{\tau_1}{\rho_1}$,
    $\tloc{\tau_1} \cup \tloc{\tau_2} \subseteq \rho$,
    $\nl{\rho}$ and $\spawnedlocs{C_1}$ are disjoint,
    and $\nl{\rho_1} \cup \nl{\rho} \subseteq \Omega$.
    By induction, there is some $\rho_1'$ where
    $\chortypedplus{\ctx}{C_1'}{\tau_1}{\rho_1'}$ and conditions (2--6) hold.
    \begin{itemize}
      \item[(1)]
      We show that $\nl{\rho}$ and $\spawnedlocs{C_1'}$ are disjoint.
      Suppose that $L \in \spawnedlocs{C_1'}$ and $L \in \nl{\rho}$.
      We must have that $L \notin \spawnedlocs{C_1}$, as $\nl{\rho}$ and $\spawnedlocs{C_1}$ are disjoint by assumption.
      But then $L \in \spawnedlocs{C_1'} \setminus \spawnedlocs{C_1} = \Omega' \setminus \Omega$,
      and hence $L \notin \nl{\rho} \subseteq \Omega$, a contradiction as desired.
      Therefore by the argument above and by induction, $\chortypedplus{\ctx}{\Inl{\rho}{C_1'}}{\tau_1 +_\rho \tau_2}{\rho_1'}$.

      \item[(2)]
      We need to show that $\nl{\rho_1'} \subseteq \Omega'$, which is precisely (2) of the inductive hypothesis.

      \item[(3)] 
      We need to show that $\spawnedlocs{\Inl{\rho}{C_1'}} \setminus \spawnedlocs{\Inl{\rho}{C_1}} = \spawnedlocs{C_1'} \setminus \spawnedlocs{C_1} = \Omega' \setminus \Omega$,
      but this is precisely (3) of the inductive hypothesis.

      \item[(4)]
      Symmetrically, $\spawnedlocs{\Inl{\rho}{C_1}} \setminus \spawnedlocs{\Inl{\rho}{C_1'}} = \spawnedlocs{C_1} \setminus \spawnedlocs{C_1'} = \Omega \setminus \Omega'$
      by (4) of the inductive hypothesis.

      \item[(5)] 
      We need to show that $\nl{\rho_1'} \setminus \nl{\rho_1} \subseteq \Omega' \setminus \Omega$, which is given directly by the inductive hypothesis.

      \item[(6)]
      Finally, we need to show that $\nl{\rho_1} \setminus \nl{\rho_1'} \subseteq \Omega \setminus \Omega'$,
      which is also provided by (6) of the inductive hypothesis.
    \end{itemize}

    \item (\ruleref{C-Done})
    Follows by local type preservation, and as no locations are spawned or killed.

    \item (\ruleref{C-App})
    The assumptions are that $\chortypedplus{\ctx, F \ty \tau_1 \arr{\rho} \tau_2, X \ty \tau_1}{C}{\tau_2}{\rho}$,
    $\chortypedplus{\ctx}{V}{\tau_1}{\varnothing}$, and
    $\tloc{\tau_1} \cup \tloc{\tau_2} \cup \rho = \rho'$. 
    
    \begin{itemize}
      \item[(1)]
      By Lemma~\ref{lem:sub-pres-typ}, $\chortypedplus{\ctx}{\subst*{C}{{F}{f}{X}{V}}}{\tau_2}{\rho}$, where $f = \Fun{\rho}{F}{X}{C}$.

      \item[(2)]
      By assumption, $\rho' \subseteq \Omega$, which immediately implies that $\rho \subseteq \Omega$.

      \item[(3)] 
      Since $\spawnedlocs{C} = \spawnedlocs{f} = \spawnedlocs{V} = \varnothing$, $\spawnedlocs{\subst*{C}{{F}{f}{X}{V}}} = \varnothing$, and $\Omega' = \Omega$, this condition is satisfied.

      \item[(4)]
      Follows identically to (3).

      \item[(5)] 
      We should show that $\nl{\rho} \setminus \nl{\rho'} \subseteq \Omega' \setminus \Omega = \varnothing$,
      which is true precisely because $\nl{\rho} \subseteq \nl{\rho'}$.

      \item[(6)]
      As $\Omega \setminus \Omega' = \varnothing$, (6) is trivially true.
    \end{itemize}

    \item (\ruleref{C-TApp})
    We handle the case when the function's type variable is a location.
    The assumptions are that
    $\chortypedplus{\ctx, F \ty \allty{\alpha \knd \locKnd}{\rho}{\tau}, \alpha \knd \locKnd}{C}{\tau}{\rho}$,
    $\chorkinded{\ctx}{\ell}{\locKnd}$,
    and $\subst{\rho}{\alpha}{\ell} \cup \tloc{\subst{\tau}{\alpha}{\ell}} = \rho'$.

    \begin{itemize}
      \item[(1)]
      By Lemma~\ref{lem:sub-pres-typ}, $\chortypedplus{\ctx, \alpha \knd \locKnd}{\subst{C}{F}{f}}{\tau}{\rho}$,
      where $f = \TFunLoc{F}{\alpha}{C}$,
      and by Lemma~\ref{lem:loc-sub-pres-typ},
      $\chortypedplus{\ctx}{\subst*{C}{{F}{f}{\alpha}{\ell}}}{\subst{\tau}{\alpha}{\ell}}{\subst{\rho}{\alpha}{\ell}}$,
      noting that $\spawnedlocs{C} = \spawnedlocs{f} = \varnothing$.

      \item[(2)]
      By assumption, $\rho' \subseteq \Omega$, which immediately implies that $\subst{\rho}{\alpha}{\ell} \subseteq \Omega$.

      \item[(3)] 
      As $\spawnedlocs{\subst*{C}{{F}{f}{\alpha}{\ell}}} = \varnothing$, and $\Omega' = \Omega$, this condition is satisfied.

      \item[(4)]
      Follows symmetrically to (3).

      \item[(5)] 
      We should show that $\nl{\subst{\rho}{\alpha}{\ell}} \setminus \nl{\rho'} \subseteq \Omega' \setminus \Omega = \varnothing$,
      which is true precisely because $\nl{\subst{\rho}{\alpha}{\ell}} \subseteq \nl{\rho'}$.

      \item[(6)]
      As $\Omega \setminus \Omega' = \varnothing$, (6) is trivially true.
    \end{itemize}
    The case when the function's type variable is a location set, program type, or local type is analogous.








    \item (\ruleref{C-TyLetV})
    We handle the case when the type variable bound by the type-let is a location.
    The assumptions are that
    $\chortypedplus{\ctx}{\rho_3.\say{L}}{\Loc_{\rho_1} @ \rho_3}{\varnothing}$,
    $\chorkinded{\ctx}{\tau}{\tyknd{\rho_t}}$,
    $\chortypedplus{\ctx, \alpha \knd \locKnd}{C_2}{\tau}{\rho}$,
    $\rho_1 \subseteq \rho_2 \subseteq \rho_3$,
    $\nl{\rho_2} \cap \spawnedlocs{C_2} = \varnothing$
    $\nl{\rho_2} \cup \nl{\rho} \subseteq \Omega$,
    and by soundness of the $\Loc$ type, $L \in \rho_1$.
    \begin{itemize}
      \item[(1)]
      As $L \in \rho_1 \subseteq \rho_2$, we have that $L \notin \spawnedlocs{C_2}$ by well-typedness of the entire type-\LetN expression.
      Therefore by Lemma~\ref{lem:loc-sub-pres-typ}, we have $\chortypedplus{\ctx}{\subst{C_2}{\alpha}{L}}{\tau}{\subst{\rho}{\alpha}{L}}$.

      \item[(2)]
      By assumption, $\nl{\rho} \subseteq \Omega$ and $L \in \nl{\rho_2} \subseteq \Omega$,
      therefore $\nl{\subst{\rho}{\alpha}{L}} \subseteq \nl{\rho} \cup \{L\} \subseteq \Omega' = \Omega$.

      \item[(3)] 
      As $\spawnedlocs{\subst{C_2}{\alpha}{L}} \setminus \spawnedlocs{C_2} = \spawnedlocs{C_2} \setminus \spawnedlocs{C_2} = \varnothing = \Omega' \setminus \Omega$, this condition is satisfied.

      \item[(4)]
      Follows symmetrically to (3).

      \item[(5)] 
      We should show that $\nl{\subst{\rho}{\alpha}{L}} \setminus (\nl{\rho} \cup \nl{\rho_2}) \subseteq \Omega' \setminus \Omega = \varnothing$,
      which is true precisely because
      \begin{alignbreak}
        \nl{\subst{\rho}{\alpha}{L}} \setminus (\nl{\rho} \cup \nl{\rho_2})
        &\subseteq (\nl{\rho} \cup \{L\}) \setminus (\nl{\rho} \cup \nl{\rho_2}) \\
        &\subseteq (\nl{\rho} \cup \nl{\rho_2}) \setminus (\nl{\rho} \cup \nl{\rho_2}) \\
        &= \varnothing
      \end{alignbreak}

      \item[(6)]
      As $\Omega \setminus \Omega' = \varnothing$, (6) is trivially true.
    \end{itemize}
    The case when the type variable is a location set follow similar reasoning.


    \item (\ruleref{C-SendV})
    The assumptions are that
    $\localtyped{\proj{\ctx}{\rho_1}}{v}{t_e}$,
    $\{L\} \cup \nl{\rho_2} \subseteq \Omega$, and
    $L \in \rho_1$.
    The new expression is well-typed because,
    as $v$ is a value, $\localemptyped{v}{t_e}$, and so $\localtyped{\proj{\ctx}{(\rho_1 \cup \rho_2)}}{v}{t_e}$
    by weakening of the local type system.
    The other conclusions are also straightforward because there are no locations spawned or killed,
    and the reduct~$(\rho_1 \cup \rho_2).v$ is a value.






    \item (\ruleref{C-Fork})
    The assumptions are that
    $\chortypedplus{\ctx, \alpha \knd \locKnd, \{L,\alpha\}.x \ty \Loc_\alpha}{C}{\tau}{\rho}$,
    $\chorkinded{\ctx}{\tau}{\tyknd{\rho'}}$,
    $\{L\} \cup \nl{\rho} \subseteq \Omega$,
    $L \notin \spawnedlocs{C}$,
    and $L' \notin \Omega$.

    \begin{itemize}
      \item[(1)] As $L' \notin \Omega \supseteq \spawnedlocs{C}$,
      $\alpha$ is not free in~$\tau$,
      and well-typedness is preserved under substitution, we have that
      $\chortypedplus{\ctx}{C'}{\tau}{\subst{\rho}{\alpha}{L}}$,
      where $C' = \subst*{C}{{\alpha}{L'}{x}{\say{L'}}}$.
      Additionally, $\spawnedlocs{C'} = \spawnedlocs{C}$, so that
      $\chortypedplus{\ctx}{\KillAfter{L'}{C'}}{\tau}{\{L'\} \cup \subst{\rho}{\alpha}{L}}$.

      \item[(2)] $\nl{\{L\} \cup \subst{\rho}{\alpha}{L}} \subseteq \{L\} \cup \nl{\rho} \subseteq \{L\} \cup \Omega = \Omega'$

      \item[(3)] $\spawnedlocs{\KillAfter{L'}{C'}} \setminus \spawnedlocs{\Fork{(\alpha,x)}{L}{C}} = \{L'\} = \Omega' \setminus \Omega$

      \item[(4)] $\spawnedlocs{\Fork{(\alpha,x)}{L}{C}} \setminus \spawnedlocs{\KillAfter{L'}{C'}} = \varnothing = \Omega \setminus \Omega'$

      \item[(5)]
      As $L \in \Omega$ but $L' \notin \Omega$, we must have $L' \neq L$.
      As well, because $\nl{\rho} \subseteq \Omega$, it follows that $L' \notin \nl{\rho}$.
      Therefore
      \begin{alignbreak}
        (\{L'\} \cup \nl{\subst{\rho}{\alpha}{L'}}) \setminus (\{L\} \cup \nl{\rho})
        &= (\{L'\} \cup \nl{\rho}) \setminus (\{L\} \cup \nl{\rho}) \\
        &= \{L'\} \setminus (\{L\} \cup \nl{\rho}) \\
        &= \{L'\} = \Omega' \setminus \Omega
      \end{alignbreak}

      \item[(6)]
      This conclusion holds trivially because $\Omega \setminus \Omega' = \varnothing$.
    \end{itemize}

    \item (\ruleref{C-Kill})
    The assumptions are that
    $\chortypedplus{\ctx}{V}{\tau}{\varnothing}$,
    $\val{V}$,
    and $L \in \Omega$.
     \begin{itemize}
      \item[(1)] Clearly~$V$ is well-typed.

      \item[(2)] As there are no participants in~$V$ because it is a value,
      clearly $\varnothing \subseteq \Omega' =  \Omega \setminus L$.

      \item[(3)] $\spawnedlocs{V} \setminus \spawnedlocs{\KillAfter{L}{V}} = \varnothing \setminus \{L\} = \varnothing = \Omega' \setminus \Omega$

      \item[(4)] $\spawnedlocs{\KillAfter{L}{V}} \setminus \spawnedlocs{V}  = \{L\} = \Omega \setminus \Omega'$

      \item[(5)] $\varnothing \setminus \{L\} = \varnothing \subseteq \Omega' \setminus \Omega = \varnothing$

      \item[(6)] $\{L\} \setminus \varnothing = \{L\} \supseteq \Omega \setminus \Omega' = \{L\}$
    \end{itemize}

    \item (\ruleref{C-KillI})
    The assumptions are that
    $\chortypedplus{\ctx}{C}{\tau}{\rho}$,
    $\{L\} \cup \nl{\rho} \subseteq \Omega$, and
    $L \notin \cloc{C}$.
     \begin{itemize}
      \item[(1)] Clearly~$C$ is well-typed.

      \item[(2)] By assumption, $\nl{\rho} \subseteq \Omega$.
      As well, by Lemma~\ref{lem:participants-sub-cloc} we can say that
      $L \notin \nl{\rho}$, and so $\nl{\rho} \subseteq \Omega \setminus \{L\} = \Omega'$.
    
      \item[(3)] $\spawnedlocs{C} \setminus \spawnedlocs{\KillAfter{L}{C}} = \varnothing = \Omega' \setminus \Omega$

      \item[(4)] $\spawnedlocs{\KillAfter{L}{C}} \setminus \spawnedlocs{C}  = \{L\} = \Omega \setminus \Omega'$

      \item[(5)] $\nl{\rho} \setminus (\{L\} \cup \nl{\rho}) = \varnothing \subseteq \Omega' \setminus \Omega = \varnothing$

      \item[(6)] $(\{L\} \cup \nl{\rho}) \setminus \nl{\rho} = \{L\} \supseteq \Omega \setminus \Omega' = \{L\}$
    \end{itemize}
  \end{itemize}
\end{proof}

\begin{thm}[Full Preservation]\label{thm:full-preservation}
  If $\chortypedplus{\ctx}{C}{\tau}{\rho}$,
  $\nl{\rho} \subseteq \Omega$, and
  $\langle C , \Omega \rangle \steps{} \langle C' , \Omega' \rangle$,
  then there is some $\rho'$ such that all of the following properties hold.
  \begin{itemize}
    \item[(1)] $\chortypedplus{\ctx}{C'}{\tau}{\rho'}$
    \item[(2)] $\nl{\rho'} \subseteq \Omega'$
    \item[(3)] $\spawnedlocs{C'} \setminus \spawnedlocs{C} = \Omega' \setminus \Omega$
    \item[(4)] $\spawnedlocs{C} \setminus \spawnedlocs{C'} = \Omega \setminus \Omega'$
    \item[(5)] $\nl{\rho'} \setminus \nl{\rho} \subseteq \Omega' \setminus \Omega$
    \item[(6)] $\Omega \setminus \Omega' \subseteq \nl{\rho} \setminus \nl{\rho'}$
  \end{itemize}
\end{thm}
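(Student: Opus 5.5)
The plan is an induction on the length $n$ of the reduction sequence $\langle C,\Omega\rangle \steps{} \langle C',\Omega'\rangle$, with Lemma~\ref{lem:single-preservation} (Single-Step Type Preservation) supplying each step. For $n = 0$ we have $C' = C$ and $\Omega' = \Omega$, and we take $\rho' = \rho$. Then (1) and (2) are exactly the hypotheses, and (3)--(6) reduce to $\varnothing = \varnothing$ or $\varnothing \subseteq \cdot$.

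For the inductive step, write the reduction as $\langle C,\Omega\rangle \step[R] \langle C_1,\Omega_1\rangle \steps{} \langle C',\Omega'\rangle$. Lemma~\ref{lem:single-preservation} gives some $\rho_1$ satisfying (1)--(6) for the first step. In particular $\chortypedplus{\ctx}{C_1}{\tau}{\rho_1}$ and $\nl{\rho_1} \subseteq \Omega_1$, so the induction hypothesis applies to the remaining reduction. It yields $\rho'$ with (1)--(6) relative to $(C_1,\Omega_1,\rho_1)$. Properties (1) and (2) are then immediate.

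The rest of the proof composes the set identities (3)--(6) across the two segments. Some intermediate facts are needed first.
\begin{itemize}
\item Spawned locations of a typed term are live: $\spawnedlocs{C} \subseteq \nl{\rho} \subseteq \Omega$. This follows from Lemma~\ref{lem:spawnedloc-sub-participants} and the invariant (2); the same holds for $C_1$ and $C'$.
\item Fresh names are never reused. Since $\Omega_1 \setminus \Omega$ consists of names freshly spawned in the first step, a location removed later is either in $\Omega$ or in $\Omega_1 \setminus \Omega$.
\end{itemize}

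For (3), $\spawnedlocs{C'} \setminus \spawnedlocs{C}$ is computed by tracking membership through $\spawnedlocs{C_1}$. The key is the bookkeeping identity $\spawnedlocs{X} = \Omega_X \cap (\text{thread names})$ up to the fixed non-thread part. Concretely, (3) and (4) for a single step imply $\spawnedlocs{C_1} = (\spawnedlocs{C} \setminus (\Omega\setminus\Omega_1)) \cup (\Omega_1\setminus\Omega)$, and similarly for the second segment. Substituting one expression into the other and using $\spawnedlocs{C} \subseteq \Omega$ gives (3) and (4) by direct set algebra.

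For (5), take $L \in \nl{\rho'} \setminus \nl{\rho}$ and split on whether $L \in \nl{\rho_1}$.
\begin{itemize}
\item If $L \in \nl{\rho_1}$, then $L \in \Omega_1 \setminus \Omega$ by the first step's (5). Since $L \in \nl{\rho'} \subseteq \Omega'$, we get $L \in \Omega' \setminus \Omega$.
\item Otherwise $L \in \Omega' \setminus \Omega_1$. We must rule out $L \in \Omega$. If $L$ were in $\Omega$, it would be in $\Omega \setminus \Omega_1$, and the first step's (6) would put $L$ in $\nl{\rho}$, a contradiction.
\end{itemize}

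For (6), take $L \in \Omega \setminus \Omega'$ and split on whether $L \in \Omega_1$.
\begin{itemize}
\item If $L \notin \Omega_1$, the first step gives $L \in \nl{\rho}\setminus\nl{\rho_1}$. We still need $L \notin \nl{\rho'}$, which holds because $\nl{\rho'} \subseteq \Omega'$ and $L \notin \Omega'$.
\item If $L \in \Omega_1$, the induction hypothesis gives $L \in \nl{\rho_1} \setminus \nl{\rho'}$. We still need $L \in \nl{\rho}$. This holds because $L \in \nl{\rho_1}$ and $L \in \Omega$, so $L \notin \Omega_1 \setminus \Omega$, and the first step's (5) forces $L \in \nl{\rho}$.
\end{itemize}

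The main obstacle is (3)/(4). Composing equalities of set differences is not automatic: a location could be spawned and then killed within the sequence, or killed and then "respawned". The second cannot happen, since the \ruleref{C-Fork} freshness premise $L' \notin \Omega$ together with $\spawnedlocs{\cdot} \subseteq \Omega$ prevents reuse within a single step. The first case is where the equalities must be checked most carefully. There I would use the rewritten form of $\spawnedlocs{C_1}$ above rather than chasing elements.
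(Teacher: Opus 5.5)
Your proof is correct and follows essentially the paper's route: induction on the length of the reduction, applying Lemma~\ref{lem:single-preservation} once per step and threading invariant (2) through, with the same case-split element chase for (5) and (6). Peeling off the first step instead of the last, and writing (3)/(4) in closed form instead of chasing elements, are immaterial differences. One caveat: your claim that killed names are never re-spawned is false for this semantics. \ruleref{C-Fork} checks freshness only against the current $\Omega$, so a thread killed in one step can have its name forked again later. Nothing in your argument rests on this, however, because (3) and (4) compose by pure set algebra: for each location, each pair says its membership change in $\spawnedlocs{\cdot}$ equals its membership change in $\Omega$, and these changes add. That is all the paper's element chase uses, and it does not even need $\spawnedlocs{C} \subseteq \Omega$.
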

\begin{proof}
  By induction on the length of the reduction sequence.
  If the reduction is of length 0, the conclusion is immediate.
  Otherwise suppose that $\langle C_1 , \Omega_1 \rangle \steps{} \langle C_2 , \Omega_2 \rangle \step \langle C_3 , \Omega_3 \rangle$,
  where we can apply the inductive hypothesis to the first reduction sequence, and then apply Theorem~\ref{lem:single-preservation} to the last step.
  (1--2) immediately hold by the conclusion of Theorem~\ref{lem:single-preservation}.

  \begin{itemize}
    \item[(3)]
    We show that $\spawnedlocs{C_3} \setminus \spawnedlocs{C_1} \subseteq \Omega_3 \setminus \Omega_1$, with the opposite direction following a symmetric argument.
    Let $L \in \spawnedlocs{C_3} \setminus \spawnedlocs{C_1}$.
    In the case that $L \in \spawnedlocs{C_2}$, we have that $L \in \spawnedlocs{C_2} \setminus \spawnedlocs{C_1}$, so $L \in \Omega_2 \setminus \Omega_1$ by (3) of the inductive hypothesis.
    It must be the case that $L \in \Omega_3$, for otherwise $L \in \Omega_2 \setminus \Omega_3$, which implies that $L \in \spawnedlocs{C_2} \setminus \spawnedlocs{C_3}$ by (4) of Theorem~\ref{lem:single-preservation}, a contradiction.
    Therefore $L \in \Omega_3 \setminus \Omega_1$ as desired.
    Now consider the case when $L \notin \spawnedlocs{C_2}$.
    Then $L \in \spawnedlocs{C_3} \setminus \spawnedlocs{C_2}$, so $L \in \Omega_3 \setminus \Omega_2$ by (3) of the last step.
    It must be the case that $L \notin \Omega_1$, for otherwise $L \in \Omega_1 \setminus \Omega_2$, which implies that $L \in \spawnedlocs{C_1} \setminus \spawnedlocs{C_2}$ by (4) of the induction, a contradiction.
    Therefore $L \in \Omega_3 \setminus \Omega_1$ as desired.

    \item[(4)]
    Symmetric to the argument for (3).

    \item[(5)]
    Suppose that $L \in \nl{\rho_3} \setminus \nl{\rho_1}$.
    In the case that $L \in \nl{\rho_2}$, then $L \in \nl{\rho_2} \setminus \nl{\rho_1}$, so $L \in \Omega_2 \setminus \Omega_1$ by (5) of the induction.
    $L \in \nl{\rho_3}$, so $L \in \Omega_3$ by (2) of the last step, and hence $L \in \Omega_3 \setminus \Omega_1$ as desired.
    Otherwise in the case that $L \notin \nl{\rho_2}$, then $L \in \nl{\rho_3} \setminus \nl{\rho_2}$, so $L \in \Omega_3 \setminus \Omega_2$ by (5) of the last step.
    $L$ cannot be in $\Omega_1$, for otherwise $L \in \Omega_1 \setminus \Omega_2$, which by (6) of the inductive hypothesis would imply that $L \in \nl{\rho_1} \setminus \nl{\rho_2}$, a contradiction.
    Therefore $L \in \Omega_3 \setminus \Omega_1$ as (5) requires.

    \item[(6)]
    Suppose that $L \in \Omega_1 \setminus \Omega_3$.
    If $L \in \Omega_2$, by (6) of the last step we have $L \in \nl{\rho_2} \setminus \nl{\rho_3}$.
    We must have that $L \in \nl{\rho_1}$, for otherwise $L \in \Omega_2 \setminus \Omega_1$ by (5) of the induction, a contradiction.
    Thus $L \in \nl{\rho_1}$, and $L \in \nl{\rho_1} \setminus \nl{\rho_3}$, as desired.
    Otherwise suppose that $L \notin \Omega_2$.
    In this case, $L \in \nl{\rho_1} \setminus \nl{\rho_2}$ by (6) of the induction.
    We must have that $L \notin \nl{\rho_3}$, for otherwise $L \in \Omega_3 \setminus \Omega_2$ by (5) of the last step, a contradiction.
    Therefore $L \in \nl{\rho_1} \setminus \nl{\rho_3}$ as (6) requires.
  \end{itemize}
\end{proof}

\preservation*
\begin{proof}
  An immediate corollary of Theorem~\ref{thm:full-preservation}.
\end{proof}

\progress*
\begin{proof}
  By induction on the typing derivation $\choremptypedplus{C}{\tau}{\rho}$.
  \begin{itemize}
    \item (\ruleref{S-Var}) This case is impossible as the context is empty.
    \item (\ruleref{S-Done}) By local progress.
    \item (\ruleref{S-Fun}) This choreography is already a value.
    \item (\ruleref{S-App}) If either $C_1$ or $C_2$ can take a step, then take it.
    Otherwise if both $C_1$ and $C_2$ are values, then apply \ruleref{C-App},
    which can be applied by the assumption that~$\namedlocs{\rho} \subseteq \Omega$.
    \item (\ruleref{S-TFunLoc}, \ruleref{S-TFun}) This choreography is already a value.
    \item (\ruleref{S-TAppLoc}, \ruleref{S-TApp}) If the function $C_1$ can take a step, then take it.
    Otherwise if $C_1$ is a type function, then apply \ruleref{C-TApp}.
    \item (\ruleref{S-Pair}) If either $C_1$ or $C_2$ can take a step, then take it.
    Otherwise if both $C_1$ and $C_2$ are values, then the pair is a value.
    \item (\ruleref{S-Inl}, \ruleref{S-Inr}, \ruleref{S-Fold}) If the argument can take a step, then take it.
    Otherwise if it is a value, then the program is a value.
    \item (\ruleref{S-Fst}, \ruleref{S-Snd}, \ruleref{S-Unfold}, \ruleref{S-Case}, \ruleref{S-LocalCase})
    If the argument can take a step, then take it.
    Otherwise if it is a value, then apply the appropriate elimination rule.
    \item (\ruleref{S-LetLocal}, \ruleref{S-LetLoc}, \ruleref{S-LetLocSet}) If the head can take a step, then take it.
    Otherwise if it is a value, then bind the variable as appropriate.
    \item (\ruleref{S-Send}) If the argument can take a step, then take it.
    Otherwise if it is a value, then apply \ruleref{C-SendV}.
    \item (\ruleref{S-Sync}) Apply \ruleref{C-Sync}.
    \item (\ruleref{S-Fork}) Apply \ruleref{C-Fork}.
    By the assumptions of our system and local language, there should always be another unused thread name, and
    a representation of that name.
    \item (\ruleref{S-Kill}) Either step inside the body, or apply \ruleref{C-Kill} if it's a value.
  \end{itemize}
\end{proof}

\begin{lem}[Single-Step Augmented Sound Participants]
  \label{lem:single-aug-participants-sound}
  If $\chortypedplus{\ctx}{C}{\tau}{\rho}$ and $\conf*{C} \step[R] \conf{C'}{\Omega'}$, then $\rloc{R} \subseteq \rho$.
\end{lem}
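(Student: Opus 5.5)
The proof is by induction on the derivation of $\conf*{C} \step[R] \conf{C'}{\Omega'}$, inverting the augmented typing derivation $\chortypedplus{\ctx}{C}{\tau}{\rho}$ at each step. The target $\rloc{R} \subseteq \rho$ is read through the syntactic subset relation of Appendix~\ref{sec:set-relations}. I will use freely that $\rho \subseteq \rho \cup \rho'$ and that $\subseteq$ is transitive, both immediate from the inductive definition.

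\emph{Base cases.} For \ruleref{C-Done}, $C = \rho_0.e_1$ steps, so $e_1$ is not a value, because values cannot step. Rule \ruleref{S-Done} then assigns participants exactly $\rho_0 = \rloc{R}$. For the elimination rules \ruleref{C-App}, \ruleref{C-TApp}, \ruleref{C-UnfoldFold}, \ruleref{C-FstPair}, \ruleref{C-SndPair}, \ruleref{C-CaseInl}/\textsc{Inr}, \ruleref{C-LocalCaseInl}/\textsc{Inr} and \ruleref{C-LetV}/\ruleref{C-TyLetV}, the redex annotation $\rho_0$ appears syntactically as a union component of the participant set in the corresponding S-rule. For example, \ruleref{S-App} gives $\rho_1 \cup \rho_2 \cup \rho'$; \ruleref{S-LetLocal} contributes $\rho_1$, the binder; and \ruleref{S-LetLoc} contributes $\rho_2$. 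In each such case $\rloc{R} = \rho_0$ is a subset of the participant set.

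For \ruleref{C-SendV}, $\rloc{R} = \{L_1\} \cup \rho_2$, and \ruleref{S-Send} contributes $\{\ell_1\} \cup \rho_2$. \ruleref{C-Sync} is handled the same way. For \ruleref{C-Fork}, $\rloc{R} = \{L\}$, and \ruleref{S-Fork} has $\{\ell\}$ in its participant set. For \ruleref{C-Kill} and \ruleref{C-KillI}, $\rloc{R} = \{L\}$, and \ruleref{S-Kill} adds $\{L\}$.

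\emph{Inductive cases.} For \ruleref{C-Ctx}, I argue by induction on $\eta$. In each context frame the typed subterm's participant set appears as a union component of the whole term's participants. This holds for the function and argument of applications, pair components, the heads of \LetN, the scrutinees of \CaseN, the bodies of sends, and \KillAfterN. It also needs $\rloc{\eta[R]} = \rloc{R}$, which is immediate from the definitions of $\ctxredex{\eta}$ and $\rloc{\cdot}$. The out-of-order rules \ruleref{C-AppI}, \ruleref{C-PairI}, \ruleref{C-LetI}, \ruleref{C-TyLetI}, \ruleref{C-SyncI}, \ruleref{C-CaseI} and \ruleref{C-LocalCaseI} follow from the induction hypothesis on the stepping subterm, since that subterm's participants ($\rho_2$, $\rho'$, $\rho_1$, \ldots) appear in the union. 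For the case rules it suffices to use either branch.

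\emph{Main obstacles.} The delicate cases are the binders that remove a variable from the participant set. The first is \ruleref{C-TyLetI}, where the typing gives $\rho' \setminus \{\alpha\}$. Here I would use the premise $\fv{\rho} = \varnothing$ together with the side conditions of the rule. The key observation is that $\rloc{R}$ cannot mention $\alpha$: a step under the binder that named $\alpha$ would be blocked, because the rules for named steps require $\fv{\cdot} = \varnothing$ on the redex annotation. This would be justified by a small auxiliary lemma, proved by the same induction, stating that $\fv{\rloc{R}} = \varnothing$. Given that lemma, $\rloc{R} \subseteq \rho'$ implies $\rloc{R} \subseteq \rho' \setminus \{\alpha\}$, which one checks by induction on the subset derivation using the definition of $\setminus$.

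\ruleref{C-ForkI} is analogous, with $\rho \setminus \{\alpha\}$. For \ruleref{C-ForkI} I would also verify that the $\anyLoc$ and singleton clauses of $\setminus$ behave as expected for closed $\rloc{R}$. I expect this set-difference bookkeeping to be the only nonroutine part of the proof.
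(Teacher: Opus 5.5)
Your proposal is correct and follows the paper's proof: induction on the step, with the axiom cases immediate by inverting the corresponding S-rule, the congruence and out-of-order cases by the induction hypothesis, and \ruleref{C-TyLetI}/\ruleref{C-ForkI} handled by the closedness of $\rloc{R}$, so removing $\alpha$ from the participant set loses nothing. You are more careful than the paper, which omits the local-case, \ruleref{C-Kill}, and \ruleref{C-KillI} cases and does not spell out why $\rloc{R} \subseteq \rho'$ with $\rloc{R}$ closed yields $\rloc{R} \subseteq \rho' \setminus \{\alpha\}$.
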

\begin{proof}
  By induction on the step $\langle C , \Theta \rangle \step[R] \langle C' , \Theta' \rangle$.
  \begin{itemize}
    \item (\ruleref{C-Done}, \ruleref{C-App}, \ruleref{C-TApp}, \ruleref{C-UnfoldFold}, \ruleref{C-FstPair}, \ruleref{C-SndPair}, \ruleref{C-CaseInl}, \ruleref{C-CaseInr}, \ruleref{C-LetV}, \ruleref{C-TyLetV}, \ruleref{C-SendV}, \ruleref{C-Fork}, \ruleref{C-Sync}) Immediate.
    \item (\ruleref{C-Ctx}, \ruleref{C-SyncI}, \ruleref{C-CaseI}, \ruleref{C-AppI}, \ruleref{C-PairI}, \ruleref{C-LetI}) By induction.
    \item (\ruleref{C-TyLetI}, \ruleref{C-ForkI}) Follows by induction and because~$\rloc{R}$ is closed, so $\alpha \notin \rloc{R}$.
  \end{itemize}
\end{proof}

\begin{cor}[Augmented Sound Participants]
  \label{cor:aug-participants-sound}
  If $\chortypedplus{\ctx}{C}{\tau}{\rho}$ and $\conf{C_1}{\Omega_1} \steps{} \conf{C_2}{\Omega_2} \step[R] \conf{C_3}{\Omega_3}$,
  then $\rloc{R} \setminus \anyLoc \subseteq (\Omega_2 \setminus \Omega_1) \cup (\nl{\rho} \cap \Omega_2)$.
\end{cor}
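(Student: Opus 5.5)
The proof combines Full Preservation (Theorem~\ref{thm:full-preservation}) on the prefix $\conf{C_1}{\Omega_1} \steps{} \conf{C_2}{\Omega_2}$ with the single-step result (Lemma~\ref{lem:single-aug-participants-sound}) on the last step. We read the hypothesis as typing $C_1$, i.e.\ $\chortypedplus{\ctx}{C_1}{\tau}{\rho}$. We also need $\nl{\rho} \subseteq \Omega_1$, which is the standing assumption under which Theorem~\ref{thm:full-preservation} applies; we add it as needed, since the statement is only meaningful in that setting.

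First, apply Theorem~\ref{thm:full-preservation} to the prefix. This yields some $\rho_2$ with $\chortypedplus{\ctx}{C_2}{\tau}{\rho_2}$ and $\nl{\rho_2} \subseteq \Omega_2$. Condition~(5) of that theorem also gives $\nl{\rho_2} \setminus \nl{\rho} \subseteq \Omega_2 \setminus \Omega_1$.

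Next, apply Lemma~\ref{lem:single-aug-participants-sound} to the typing of $C_2$ and the step labelled $R$. This gives $\rloc{R} \subseteq \rho_2$ in the syntactic (necessity) sense of Appendix~\ref{sec:set-relations}. We must turn this into a statement about concrete names. Every redex reached from a closed configuration has $\fv{\rloc{R}} = \varnothing$, by the side conditions $\fv{\rho} = \varnothing$ in the reduction rules. So each element of $\rloc{R}$ is either a concrete $L$ or $\anyLoc$. A small lemma, proved by induction on the $\subseteq$ derivation, shows that for closed $\rho_a$, $\rho_a \subseteq \rho_b$ implies $\rho_a \setminus \anyLoc \subseteq \nl{\rho_b}$: a concrete $L \in \rho_b$ is literally named in $\rho_b$, and the $\anyLoc$ case is removed by $\setminus \anyLoc$. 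The only subtle case is the rule $\rho \subseteq \anyLoc$. Here $\rho_b = \anyLoc$, so concrete locations of $\rho_a$ need not appear in $\nl{\rho_b}$. We handle this by restricting to redices whose concrete locations must already be live. The rules require $\nl{\rloc{R}} \subseteq \Omega_2$ (e.g.\ $\nl{\rho} \subseteq \Omega$, or $L \in \Omega$), so it suffices to place each $L \in \rloc{R} \setminus \anyLoc$ in $\nl{\rho_2}$ or otherwise in $\Omega_2$ alongside the required bound. I expect this $\anyLoc$ subtlety to be the main obstacle. If the direct lemma fails, the fallback is to strengthen Lemma~\ref{lem:single-aug-participants-sound} by the same induction to conclude $\nl{\rloc{R}} \subseteq \nl{\rho_2}$ directly, using $\nlinf{\rho} \subseteq \nlinf{C}$ from Lemma~\ref{lem:spawnedloc-sub-participants}.

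Finally, combine the pieces by cases on $L \in \rloc{R} \setminus \anyLoc$, where we now have $L \in \nl{\rho_2} \subseteq \Omega_2$. If $L \in \nl{\rho}$, then $L \in \nl{\rho} \cap \Omega_2$. Otherwise $L \in \nl{\rho_2} \setminus \nl{\rho} \subseteq \Omega_2 \setminus \Omega_1$ by condition~(5). In either case $L \in (\Omega_2 \setminus \Omega_1) \cup (\nl{\rho} \cap \Omega_2)$, which proves the corollary.
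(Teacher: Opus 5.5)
Your proposal is correct and follows the paper's route: Full Preservation on the prefix (its conditions~(2) and~(5)) combined with Lemma~\ref{lem:single-aug-participants-sound} on the final step. You read the typing as applying to $C_1$ and assume $\nl{\rho}\subseteq\Omega_1$, which is exactly what the paper's appeal to Preservation implicitly requires. On the $\anyLoc$ point, drop the ``otherwise in $\Omega_2$'' idea, since an $L\in\Omega_1\cap\Omega_2$ outside $\nl{\rho}$ is precisely what must be ruled out, and commit to your fallback of strengthening the single-step lemma to $\nl{\rloc{R}}\subseteq\nl{\rho_2}$. That strengthening goes through by the same induction, because in every axiom the location set in $R$ occurs syntactically inside the participant set assigned by the matching S-rule, and $\setminus\{\alpha\}$ deletes no concrete name; the paper's one-line proof silently takes this step for granted.
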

\begin{proof}
  Follows directly from Lemma~\ref{lem:single-aug-participants-sound} and Theorem~\ref{thm:preservation}.
\end{proof}

\participantsSound*
\begin{proof}
  Follows by Lemmas~\ref{lem:typed-to-plus-typed} and~\ref{lem:single-aug-participants-sound}.
\end{proof}

\begin{lem}
  \label{lem:typed-no-kill-after}
  If~$\chortyped{\ctx}{C}{\tau}{\rho}$ then $\spawnedlocs{C} = \varnothing$.
\end{lem}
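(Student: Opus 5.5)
The proof is a structural induction on the derivation of $\chortyped{\ctx}{C}{\tau}{\rho}$, showing in each case that $\spawnedlocs{C} = \varnothing$. This is the same statement as Lemma~\ref{lem:typ-no-spawned}, and the argument is the same. The key fact is that the surface type system has no rule for $\KillAfter{L}{C'}$. That is the only construct for which $\spawnedlocs{\cdot}$ contributes a location, through the clause $\spawnedlocs{\KillAfter{L}{C}} = \{L\} \cup \spawnedlocs{C}$. Every other clause in the definition of $\spawnedlocs{\cdot}$ is either $\varnothing$ or a union of $\spawnedlocs{\cdot}$ of immediate subterms.

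The cases go as follows.

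\begin{itemize}
\item \emph{Leaf rules.} For \ruleref{T-Var} and \ruleref{T-Done}, the choreography is $X$ or $\rho.e$. Both have $\spawnedlocs{\cdot} = \varnothing$ by definition.
\item \emph{Rules with typed subterms.} This covers \ruleref{T-Fun}, \ruleref{T-App}, the type-function and type-application rules, pairs, projections, sums, both case forms, fold/unfold, the three let rules, \ruleref{T-Send}, \ruleref{T-Sync}, and \ruleref{T-Fork}. In each, the inductive hypothesis applies to every choreographic subterm typed in a premise.
\end{itemize}

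The subterms may be typed in extended contexts, for example $F \ty \tau_1 \arr{\rho}\tau_2, X\ty\tau_1$ for functions, or $\alpha\knd\locKnd, \{\ell,\alpha\}.x\ty\Loc_\alpha$ for \ruleref{T-Fork}. Since the statement quantifies over all contexts, the inductive hypothesis applies to them unchanged. The result then follows from the homomorphic clauses, such as $\spawnedlocs{\Fork{(\alpha,x)}{\ell}{C}} = \spawnedlocs{C}$ and $\spawnedlocs{\Case*{\rho}{C}{X}{C_1}{Y}{C_2}} = \spawnedlocs{C}\cup\spawnedlocs{C_1}\cup\spawnedlocs{C_2}$.

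There is no real obstacle here. The one point to check is that every choreographic subterm of the conclusion appears in some typing premise. In particular, both branches of \ruleref{T-Case} and \ruleref{T-LocalCase} are typed, and the body of functions and type functions is typed. Non-choreographic components such as $t$, $\rho$ and $\ell$ never contribute to $\spawnedlocs{\cdot}$. Once that is checked, every case closes immediately, and no substitution or kinding lemma is needed.
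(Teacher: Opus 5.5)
Your proposal is correct and is the same argument as the paper: induction on the surface typing derivation, closed by the fact that there is no typing rule for $\KillAfter{L}{C}$, the only constructor that adds to $\spawnedlocs{\cdot}$. You are also right that this statement duplicates Lemma~\ref{lem:typ-no-spawned}, which the paper proves the same way.
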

\begin{proof}
  Straightforward by induction on the typing derivation, noting that~\KillAfterN is not well-typed.
\end{proof}

\begin{lem}
  \label{lem:typed-to-plus-typed}
  If~$\chortyped{\ctx}{C}{\tau}{\rho}$ then~$\chortypedplus{\ctx}{C}{\tau}{\rho}$.
\end{lem}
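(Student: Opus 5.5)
The plan is to prove the claim by induction on the derivation of $\chortyped{\ctx}{C}{\tau}{\rho}$. The key auxiliary fact is Lemma~\ref{lem:typed-no-kill-after}: every surface-typed choreography satisfies $\spawnedlocs{C} = \varnothing$. To use it cleanly, I would strengthen the induction hypothesis to produce, simultaneously, $\chortypedplus{\ctx}{C}{\tau}{\rho}$ and $\spawnedlocs{C} = \varnothing$. Alternatively, I can simply apply Lemma~\ref{lem:typed-no-kill-after} to each surface-typed subderivation as needed.

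For each surface rule \textsc{T-X} there is a corresponding augmented rule \textsc{S-X}. Its premises are those of \textsc{T-X}, with $\provesCol$ replaced by $\provesPlusCol$, together with extra disjointness side conditions. In each case I would do three things:
\begin{enumerate}
  \item Apply the induction hypothesis to every typing premise, obtaining augmented typings of the immediate subterms ($C_1$, $C_2$, the body $C$ of a function or \ForkN, the branches, and so on).
  \item Carry the kinding, subset, and membership premises over verbatim, since they are identical in both systems.
  \item Discharge every extra premise using $\spawnedlocs{C_i} = \varnothing$ for all immediate subterms.
\end{enumerate}
The extra premises come in three shapes: $\nl{\cdot} \cap \spawnedlocs{C_i} = \varnothing$, $\spawnedlocs{C} = \varnothing$ (in \textsc{S-Fun}, \textsc{S-TFun}, and \textsc{S-TFunLoc}), and $\spawnedlocs{C_1} = \spawnedlocs{C_2}$ (in \textsc{S-Case} and \textsc{S-LocalCase}). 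With all spawned-location sets empty, each becomes $\varnothing = \varnothing$ or an intersection with $\varnothing$, so each holds trivially. The rule \textsc{S-Kill} never arises, because no surface rule types $\KillAfterN$.

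I expect the main obstacle to be bookkeeping rather than mathematics. A few rules are not literally parallel. \textsc{S-TAppLoc} keeps a surface premise $\provesCol$ on its function argument, which is available directly. The side conditions also mention $\nl{\rho}$ where $\rho$ may contain free variables or $\anyLoc$; this is harmless once the intersected set is empty. I would check each such rule individually, and confirm that the contexts extended with $\alpha$, $x$, $F$, or $X$ match between the two systems, so that the induction hypothesis applies under the same extended context. Finally, I would note that the participant annotations and result types in each \textsc{S-X} conclusion coincide syntactically with those of \textsc{T-X}. This gives exactly $\chortypedplus{\ctx}{C}{\tau}{\rho}$ with the same $\tau$ and $\rho$.
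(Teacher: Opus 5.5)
Your proposal is correct and follows the same argument as the paper: induction on the surface typing derivation, with Lemma~\ref{lem:typed-no-kill-after} making every spawned-location set empty. Each extra premise of the matching augmented rule then holds trivially, and the types and participant sets are carried over unchanged.
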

\begin{proof}
  By induction on the typing derivation,
  noting that Lemma~\ref{lem:typed-no-kill-after} guarantees that $\spawnedlocs{C'} = \varnothing$ for all subterms~$C'$ of~$C$,
  and that the augmented typing judgment only adds additional premises involving~$\spawnedlocs{C'}$
  and changes each occurrence of~$\provesCol$ to~$\provesPlusCol$.
\end{proof}

\begin{lem}
  \label{lem:plus-typed-to-typed}
  If~$\chortypedplus{\ctx}{C}{\tau}{\rho}$ and $\spawnedlocs{C} = \varnothing$, then~$\chortyped{\ctx}{C}{\tau}{\rho}$.
\end{lem}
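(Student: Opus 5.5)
We prove the statement by induction on the derivation of $\chortypedplus{\ctx}{C}{\tau}{\rho}$, keeping the hypothesis $\spawnedlocs{C} = \varnothing$ throughout. This is the converse direction of Lemma~\ref{lem:typed-to-plus-typed}. The key observation is syntactic: every \textsc{S-} rule except \ruleref{S-Kill} is the corresponding \textsc{T-} rule, with $\provesPlusCol$ in place of $\provesCol$ and some extra side conditions involving $\spawnedlocs{\cdot}$. Each case is therefore proved by applying the induction hypothesis to the subderivations and then forgetting the extra premises.

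To apply the induction hypothesis we must know that each immediate subterm $C_i$ of $C$ satisfies $\spawnedlocs{C_i} = \varnothing$. This is immediate from the definition of $\spawnedlocs{\cdot}$ in the appendix. That definition is homomorphic: $\spawnedlocs{C}$ is the union of $\spawnedlocs{C_i}$ over the immediate subterms, plus $\{L\}$ in the \KillAfterN case. Hence $\spawnedlocs{C} = \varnothing$ forces every $\spawnedlocs{C_i} = \varnothing$. The subderivations may also be typed in extended contexts, for example $\ctx, F \ty \tau_1 \arr{\rho} \tau_2, X \ty \tau_1$ in \ruleref{S-Fun}, or $\ctx, \alpha \knd \locKnd, \{\ell,\alpha\}.x \ty \Loc_\alpha$ in \ruleref{S-Fork}. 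This is harmless, because the induction is over derivations with arbitrary contexts.

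The cases then go as follows.
\begin{itemize}
\item \ruleref{S-Var} and \ruleref{S-Done} coincide with \ruleref{T-Var} and \ruleref{T-Done}.
\item \ruleref{S-Fun}, \ruleref{S-TFun} and \ruleref{S-TFunLoc}: we drop the premise $\spawnedlocs{C}=\varnothing$ and apply the induction hypothesis to the body.
\item \ruleref{S-App}, \ruleref{S-TApp}, \ruleref{S-Pair}, \ruleref{S-Fst}, \ruleref{S-Snd}, \ruleref{S-Inl}, \ruleref{S-Inr}, \ruleref{S-Case}, \ruleref{S-LocalCase}, \ruleref{S-Fold}, \ruleref{S-Unfold}, the three let rules, \ruleref{S-Send}, \ruleref{S-Sync} and \ruleref{S-Fork}: the kinding premises, subset premises and conclusion are literally those of the \textsc{T-} rule. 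The additional disjointness premises are simply discarded.
\item \ruleref{S-Kill} cannot occur, since then $L \in \spawnedlocs{C}$, contradicting the hypothesis.
\end{itemize}

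The only step needing care is \ruleref{S-TAppLoc}. As written, its typing premise on the function is already stated with $\provesCol$ rather than $\provesPlusCol$, so it passes to \ruleref{T-TAppLoc} unchanged, with no induction hypothesis needed. We should also check that the rules match on their kinding premises. For instance, \ruleref{S-LetLocSet} and \ruleref{T-LetLocSet} both use the same $\Gamma$ kinding premise, so nothing is gained or lost in translation.

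I expect no genuine obstacle. The main risk is bookkeeping: confirming, rule by rule, that no \textsc{S-} rule differs from its \textsc{T-} counterpart in anything other than added $\spawnedlocs{\cdot}$ premises. Together with Lemma~\ref{lem:typed-to-plus-typed} and Lemma~\ref{lem:typed-no-kill-after}, this yields Theorem~\ref{thm:type-equiv}.
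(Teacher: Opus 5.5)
Your proposal is correct and follows the paper's proof. The paper also inducts on the $\provesPlusCol$ derivation, using that $\spawnedlocs{C}=\varnothing$ propagates to every subterm, so \ruleref{S-Kill} never occurs, and that every other \textsc{S-} rule is its \textsc{T-} counterpart plus $\spawnedlocs{\cdot}$ premises that can be discarded; your handling of \ruleref{S-TAppLoc}, whose function premise is already stated with $\provesCol$, is accurate.
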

\begin{proof}
  By induction on the typing derivation, noting that~$\spawnedlocs{C} = \varnothing$
  guarantees that~\KillAfterN does not appear in any subterm of~$C$.
\end{proof}

\typeEquiv*
\begin{proof}
  Follows directly by Lemmas~\ref{lem:typed-to-plus-typed} and~\ref{lem:plus-typed-to-typed}.
\end{proof}

\begin{thm}[Surface Language Type Soundness]
\label{lem:type-sound-simple}
  If $\choremptyped{C}{\tau}{\rho}$ and~$\nl{\rho} \subseteq \Omega$,
  then whenever $\conf*{C} \steps{} \conf{C'}{\Omega'}$,
  either~$C'$ is a value, or
  $\conf{C'}{\Omega'}$ can~step.
\end{thm}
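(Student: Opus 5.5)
The argument reduces the surface-language statement to the augmented type system, where progress and preservation are already available. Then it chains Theorem~\ref{thm:type-equiv}, Theorem~\ref{thm:preservation}, and Theorem~\ref{thm:progress}.

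First, from $\choremptyped{C}{\tau}{\rho}$ we apply Lemma~\ref{lem:typed-to-plus-typed} (the forward direction of Theorem~\ref{thm:type-equiv}) to obtain $\choremptypedplus{C}{\tau}{\rho}$.

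Next, we apply Theorem~\ref{thm:preservation} to the reduction sequence $\conf*{C} \steps{} \conf{C'}{\Omega'}$. It takes $\ctx$ empty and uses the hypothesis $\nl{\rho} \subseteq \Omega$. This yields some $\rho'$ with $\choremptypedplus{C'}{\tau}{\rho'}$ and $\nl{\rho'} \subseteq \Omega'$. The remaining conclusion, $\nl{\rho'} \setminus \nl{\rho} \subseteq \Omega' \setminus \Omega$, is not needed here.

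Finally, we apply Theorem~\ref{thm:progress} to $C'$ with participant set $\rho'$ and running set $\Omega'$. Its two hypotheses are exactly the conclusions just obtained: the empty-context augmented typing and $\nl{\rho'} \subseteq \Omega'$. It gives that either $C'$ is a value or $\conf{C'}{\Omega'}$ can step, which is the claim.

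The one point needing care is that preservation must land in the augmented judgment $\provesPlusCol$, not the surface one. Reduction may introduce $\KillAfterN$ (via \ruleref{C-Fork}), so $C'$ need not be surface-typable, and we never try to return to $\provesCol$. The real invariant-maintenance burden, namely keeping $\nl{\rho'} \subseteq \Omega'$ across spawns and kills, is already discharged by Theorem~\ref{thm:full-preservation}. So the proof is a short composition with no induction of its own.
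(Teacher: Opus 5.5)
Your proposal is correct and is essentially the paper's proof, which is stated as a direct consequence of Theorems~\ref{thm:progress}, \ref{thm:preservation}, and~\ref{thm:type-equiv}. You chain them in the same way: surface typing to augmented typing, preservation to type $C'$ at some $\rho'$ with $\nl{\rho'} \subseteq \Omega'$, and then progress. You are also right that the argument must stay in the augmented judgment $\provesPlusCol$ after reduction.
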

\begin{proof}
  A direct consequence of Theorems~\ref{thm:progress}, \ref{thm:preservation} and~\ref{thm:type-equiv}.
\end{proof}

\subsection{Simulation Relation}
\begin{lem}[Less-Than is Reflexive]
  $E \lessthan E$ for all network programs $E$.
\end{lem}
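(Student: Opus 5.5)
The plan is a structural induction on the network program~$E$, building for each syntactic form a derivation of $E \lessthan E$ from the compatibility rules of Appendix~\ref{sec:less-nonderm-def}. The relation $\lessthan$ is defined as the smallest relation closed under those rules. For every constructor of the network language in Appendix~\ref{sec:full-ntwk-syntax} there is a structural rule whose conclusion relates that constructor to itself and whose premises relate the corresponding immediate subterms pointwise. So each case reduces directly to the inductive hypotheses on the subterms.

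The base cases are $X$, $\NtwkUnit$, $\Ret{e}$, $\RecvFrom{\ell}$, $\NtwkExit$, and the degenerate $\NtwkNone$. They follow from the axioms $X \lessthan X$, $\NtwkUnit \lessthan \NtwkUnit$, $\Ret{e} \lessthan \Ret{e}$, $\RecvFrom{\ell} \lessthan \RecvFrom{\ell}$, $\NtwkExit \lessthan \NtwkExit$, and $\NtwkNone \lessthan E$ respectively.

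The inductive cases cover the following forms:
\begin{itemize}
\item sequencing, application, and pairs;
\item functions and type functions;
\item type application;
\item $\NtwkFold{}$, $\NtwkUnfold{}$, $\NtwkFst{}$, $\NtwkSnd{}$, $\NtwkInl{}$, and $\NtwkInr{}$;
\item both case forms;
\item both let forms;
\item $\SendN$, choose, $\AmIInN$, and $\NtwkForkN$.
\end{itemize}
In each, I apply the induction hypothesis to every subterm and then the matching congruence rule. Any type or location annotation (e.g.\ $t$, $\rho$, $\ell$, $d$, or bound variables) is reused identically on both sides.

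The only case needing care is $\AllowN$. The grammar allows one or both branches to be absent, written ${E_i}_\bot$, but the congruence rule for $\AllowChoice{\ell}{E_{1}}{E_{2}}$ is stated with two branch premises. I would treat a missing branch as $\NtwkNone$, as the merge definition implicitly does, and discharge its premise with the axiom $\NtwkNone \lessthan \NtwkNone$. For the single-branch form $\AllowOneChoice{\ell}{d}{E}$, I would instead use the two dedicated rules from Section~\ref{sec:bisim}, taking the present branch by induction and the absent one as $\NtwkNone$.

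I expect this $\AllowN$ case to be the only real obstacle, and only because of the notation for absent branches. The rest is routine bookkeeping, so I would write out one representative binary case, say sequencing, and state that the others are identical.
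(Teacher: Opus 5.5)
Your proposal is correct and is the routine argument the paper leaves implicit, since the lemma is stated without proof: structural induction on $E$, closing each case with the matching axiom or congruence rule of Appendix~\ref{sec:less-nonderm-def}, with absent \AllowN branches treated as $\NtwkNone$ and discharged by the axiom $\NtwkNone \lessthan E$. The appeal to the single-branch rules of Section~\ref{sec:bisim} is unnecessary. In the appendix syntax a one-branch \AllowN is just the two-branch form with an $\NtwkNone$ branch, so the congruence rule already covers it.
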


\begin{lem}[Less-Than is Transitive]
  If $E_1 \lessthan E_2$ and $E_2 \lessthan E_3$ then $E_1 \lessthan E_3$.
\end{lem}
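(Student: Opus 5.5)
We prove transitivity of $\lessthan$ by induction on the derivation of $E_2 \lessthan E_3$, with a nested case analysis on the derivation of $E_1 \lessthan E_2$. Here $\lessthan$ means the inductively defined relation of Appendix~\ref{sec:less-nonderm-def}. Its rules are the congruence rules, the base rules ($\NtwkNone \lessthan E$, $X \lessthan \NtwkUnit$, $\NtwkUnit \lessthan X$, and so on), the two \AllowChoiceN-branch rules shown in the main text, and the padding rule $E_1 \lessthan V \NtwkSeq E_2$. Inducting on the right-hand derivation is the natural choice because the padding rule only ever grows the right-hand side. That lets us peel padding off $E_3$ first.

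We first dispatch the cases where the last rule for $E_2 \lessthan E_3$ is the padding rule, so that $E_3 = V \NtwkSeq E_3'$ with $E_2 \lessthan E_3'$. The induction hypothesis gives $E_1 \lessthan E_3'$, and one more padding step gives $E_1 \lessthan V \NtwkSeq E_3'$. Next suppose $E_2 \lessthan E_3$ ends in a congruence rule or a branch-inclusion rule. We then case on the last rule of $E_1 \lessthan E_2$.
\begin{itemize}
\item If it is $\NtwkNone \lessthan E_2$, then $\NtwkNone \lessthan E_3$ holds immediately.
\item If it is the padding rule, then $E_2 = V \NtwkSeq E_2'$ and $E_1 \lessthan E_2'$. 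The rule for $E_2 \lessthan E_3$ must then be the sequencing congruence, so $E_3 = V' \NtwkSeq E_3'$ with $V \lessthan V'$ and $E_2' \lessthan E_3'$. Induction on the smaller derivation gives $E_1 \lessthan E_3'$. To reapply padding we also need $V'$ to be a value. This follows from an auxiliary lemma, proved by inspecting the rules: if $\NtwkVal{V}$ and $V \lessthan V'$, then $\NtwkVal{V'}$. The one mildly delicate subcase of that lemma is $X \lessthan \NtwkUnit$, which is fine since both sides are values. Any $\NtwkNone$ cannot arise, because $V$ is a value and a value is never $\NtwkNone$.
\item If it is a congruence rule, both derivations are congruences for the same constructor, and we compose them componentwise using the induction hypothesis.
\end{itemize}

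The base cases among variables and units, namely $X \lessthan \NtwkUnit$ and $\NtwkUnit \lessthan X$, need a closer look. Composing $X \lessthan \NtwkUnit$ with $\NtwkUnit \lessthan Y$ gives $X \lessthan Y$. This is derivable only when $X = Y$, and otherwise seems not to be an instance of any rule. I expect this to be the main obstacle. I would first check whether the intended reading has a single fixed variable, or whether $X$ and $Y$ are forced equal because both sides arise from projections of the same choreography. Failing that, I would state the lemma for the reflexive--transitive closure, or add the variable-to-variable rule to the relation's definition, since the paper calls $\lessthan$ the smallest partial order containing these rules. Composing $\NtwkUnit \lessthan X$ with $X \lessthan \NtwkUnit$ gives $\NtwkUnit \lessthan \NtwkUnit$, which is a rule.

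The remaining cases are the \AllowChoiceN ones, and they follow from the induction hypothesis on branches. A one-branch $\AllowOneChoice{\ell}{\Left}{E}$ below a one-branch term below a two-branch term composes into a single branch-inclusion step. A two-branch below two-branch pair composes componentwise. A one-branch below two-branch below two-branch chain composes into a branch inclusion with the composed branch.
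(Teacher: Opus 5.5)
Your proof fails in the padding subcase, at the auxiliary claim that $\NtwkVal{V}$ and $V \lessthan V'$ imply $\NtwkVal{V'}$. That claim is false. The padding rule applies with a value on the left, giving $\NtwkUnit \lessthan \NtwkUnit \NtwkSeq \NtwkUnit$, and congruence pushes such padding inside pairs, $\NtwkInlN$, and so on. The subcase cannot be closed another way either, because it yields a genuine counterexample for the inductive rules listed in the appendix:
\[
  \NtwkUnit \lessthan \NtwkUnit \NtwkSeq \NtwkUnit
  \qquad\text{and}\qquad
  \NtwkUnit \NtwkSeq \NtwkUnit \lessthan (\NtwkUnit \NtwkSeq \NtwkUnit) \NtwkSeq \NtwkUnit .
\]
The first is by padding; the second is sequencing congruence with the first fact on the head. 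Yet $\NtwkUnit \lessthan (\NtwkUnit \NtwkSeq \NtwkUnit) \NtwkSeq \NtwkUnit$ is not derivable. The only rule that can conclude $\NtwkUnit \lessthan E$ with $E$ a sequence is padding, and the head $\NtwkUnit \NtwkSeq \NtwkUnit$ is not a value. So the literal relation fails transitivity for a second reason, independent of the $X \lessthan \NtwkUnit \lessthan Y$ problem you correctly spotted. Adding a variable-to-variable rule alone would not rescue your induction.

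The paper gives no proof of this lemma. It is justified only by the main-text definition of $\lessthan$ as the smallest structurally compatible \emph{partial order} admitting the displayed rules, under which transitivity holds by construction. A real proof must therefore do one of two things. It can adopt that definition explicitly: an added transitivity rule makes the lemma immediate, but later inductions on $\lessthan$ then have to handle that rule. Or it can first repair the rules. For example, suppose padding accepts any head $E'$ with $W \lessthan E'$ for some value $W$. Then your sequencing subcase closes. You have $E_2 = E' \NtwkSeq E_2'$ with $V \lessthan E'$, and $E_3 = V' \NtwkSeq E_3'$ with $E' \lessthan V'$. The induction hypothesis gives $V \lessthan V'$ and $E_1 \lessthan E_3'$, which is exactly what the generalized rule needs. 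Together with a variable fix like yours, the rest of your plan (the $\NtwkNone$ case, the congruences, and the \AllowChoiceN chains) then goes through as described.
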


\begin{lem}[Less-Than Relation Preserves Free Variables]\label{lem:lessthan-pres-fv}
  If $E_1 \lessthan E_2$ then $\fv{E_1} \subseteq \fv{E_2}$.
\end{lem}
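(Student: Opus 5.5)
The plan is a structural induction on the derivation of $E_1 \lessthan E_2$, using the inductive definition in Appendix~\ref{sec:less-nonderm-def}. Free variables here are the program variables $X$, the local variables $x$, and the type variables $\alpha$ free in a network program. $\fv{\cdot}$ is computed homomorphically, and binders ($\NtwkFunN$, $\NtwkTFunN$, $\NtwkLetN$, $\FontNtwk{case}$, $\FontNtwk{localCase}$, $\NtwkForkN$) remove the variables they bind.

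\emph{Axiom cases.} These are dispatched directly.
\begin{itemize}
\item For $\NtwkNone \lessthan E$, $\NtwkUnit \lessthan \NtwkUnit$ and $\NtwkExit \lessthan \NtwkExit$, the left side has no free variables.
\item $\NtwkUnit \lessthan X$ is also trivial, since $\varnothing \subseteq \{X\}$.
\item $X \lessthan X$, $\Ret{e} \lessthan \Ret{e}$ and $\RecvFrom{\ell} \lessthan \RecvFrom{\ell}$ have syntactically identical sides.
\item For the sequencing rule $E_1 \lessthan V \NtwkSeq E_2$, the induction hypothesis gives $\fv{E_1} \subseteq \fv{E_2} \subseteq \fv{V} \cup \fv{E_2} = \fv{V \NtwkSeq E_2}$.
\end{itemize}

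\emph{Congruence rules.} Every remaining rule relates two programs with the same head constructor and the same binders. Examples are $E_{1,1}~E_{1,2} \lessthan E_{2,1}~E_{2,2}$, $\NtwkFun{F}{X}{E_1} \lessthan \NtwkFun{F}{X}{E_2}$, and the $\FontNtwk{case}$, $\NtwkLetN$, $\AllowN$, $\AmIInN$ and $\NtwkForkN$ rules. Each immediate subterm on the left is related to the corresponding subterm on the right, so the induction hypothesis applies componentwise. The non-subterm annotations are literally equal on both sides, namely $\rho$ in $\SendN$ and $\AmIInN$, $\ell$ in $\AllowN$ and $\ChooseN$, and $t$ in type applications, so they contribute the same free type variables. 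Subtracting the same bound variables from both sides preserves the inclusion, so the union over components gives the result. A single lemma would make this uniform: if $S_i \subseteq T_i$ for each $i$, then $(\bigcup_i S_i) \setminus B \subseteq (\bigcup_i T_i) \setminus B$. This covers, for example, $\NtwkFork{(\alpha,x)}{E_{1,1}}{E_{1,2}}$ with $B = \{\alpha, x\}$ on the continuation.

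\emph{Main obstacle.} The one axiom not covered above is $X \lessthan \NtwkUnit$, where the left side has free variable $X$ and the right side has none. Before starting the induction, I would check how $\fv{\cdot}$ is meant to be read in this lemma. The lemma is used to transport the closedness premise $\fv{E_1} = \varnothing$ of \ruleref{Fork} across $\lessthan$. The rules $X \lessthan \NtwkUnit$ and $\NtwkUnit \lessthan X$ exist to identify a choreographic variable with a unit placeholder at uninvolved locations. I would try first to confirm that $\fv{\cdot}$ does not count a choreographic variable occurring in that placeholder position. If it does count it, this case cannot be discharged by the induction alone, and I would instead look for an invariant of the contexts where the lemma is applied (typed projections) that rules this rule out there.
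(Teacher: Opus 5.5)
Your induction is correct for every rule except $X \lessthan \NtwkUnit$, and you were right to single that rule out. But it is not an obstacle to work around; it is a counterexample. Take $E_1 = X$ and $E_2 = \NtwkUnit$: the appendix definition gives $E_1 \lessthan E_2$, while $\fv{E_1} = \{X\} \not\subseteq \varnothing = \fv{E_2}$. So the statement is false for the relation of Appendix~\ref{sec:less-nonderm-def}, and the paper gives no proof of it.

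The alternative reading of $\fv{\cdot}$ you hope for is not available.
\begin{itemize}
\item EPP's own side conditions count program variables as free in network programs, e.g.\ $X \notin \fv{\epp{C_1}{L}}$ in the clause for case-expressions.
\item The lemma is not used for the closedness premise of \ruleref{Fork}, as you guessed. Its only use is in the \ruleref{C-CaseI} case of Lemma~\ref{lem:not-in-complete}, to pass from $X \notin \fv{\epp{C_1}{L}}$ to $X \notin \fv{\epp{C_1'}{L}}$ for the case-bound program variables $X$ and $Y$.
\end{itemize}

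Your argument does go through verbatim for the main-text definition in Section~\ref{sec:proj-correct}: three rules closed under structural compatibility, with no $X$/$\NtwkUnit$ axioms. However, the axiom cannot simply be deleted from the appendix version. The \ruleref{C-App} case of Lemma~\ref{lem:not-in-complete} relies on exactly $Z \lessthan \NtwkUnit$.

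Your fallback, finding an invariant of typed projections, would establish a different statement, since this lemma quantifies over arbitrary network programs. Two things can be salvaged:
\begin{itemize}
\item a restricted lemma, for derivations of $E_1 \lessthan E_2$ that never use $X \lessthan \NtwkUnit$, where your induction is complete;
\item a direct argument at the use site.
\end{itemize}
At the use site, the configuration to rule out is $\epp{C_1}{L} = \NtwkUnit$ with $\epp{C_1'}{L} = X$. The \ruleref{C-App} case allows exactly this shape for a non-participant's projection after a $\beta$-step. Whether it can be excluded depends on the typing and on the disjointness side conditions of \ruleref{C-CaseI}. It does not follow from any property of $\lessthan$ alone.
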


\begin{lem}[Merging Produces an Upper-Bound]\label{lem:lessthan-merge}
  If $E_1 \ntwkmerge E_2 = E$, then $E_1 \lessthan E$ and $E_2 \lessthan E$.
\end{lem}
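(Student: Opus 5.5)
We argue by structural induction on $E_1$ and $E_2$, following the clauses of the definition of $\ntwkmerge$ in Appendix~\ref{sec:ntwk-merge-def}. Since $\ntwkmerge$ is defined only on the listed patterns, the hypothesis $E_1 \ntwkmerge E_2 = E$ means that exactly one clause produced $E$. We case on that clause and show $E_1 \lessthan E$; the claim $E_2 \lessthan E$ follows by the symmetric argument in every case.

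The base cases are handled by reflexivity of $\lessthan$ (Lemma ``Less-Than is Reflexive'') or by the rule $\NtwkNone \lessthan E$.
\begin{itemize}
  \item The clauses for $X$, $\NtwkUnit$, $\Ret{e}$, $\RecvFrom{\ell}$ and $\NtwkExit$ merge two identical terms, so reflexivity gives both inequalities.
  \item For $\NtwkNone \ntwkmerge E_2 = E_2$, we get $\NtwkNone \lessthan E_2$ from the axiom and $E_2 \lessthan E_2$ from reflexivity. The clause $E_1 \ntwkmerge \NtwkNone$ is symmetric.
\end{itemize}

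Most remaining clauses are homomorphic: the result has the same head constructor as both inputs, with the sub-programs merged pointwise. These are sequencing, functions, application, type functions and type application, $\NtwkFoldN$/$\NtwkUnfoldN$, pairs and projections, $\NtwkInlN$/$\NtwkInrN$, $\FontNtwk{case}$, $\FontNtwk{localCase}$, both lets, $\SendN$, $\FontNtwk{choose}$, $\AmIN$ and $\NtwkForkN$. For each, the induction hypothesis gives $E_{1,i} \lessthan E_{1,i} \ntwkmerge E_{2,i}$ for every component $i$, and the matching congruence rule of Appendix~\ref{sec:less-nonderm-def} then yields $E_1 \lessthan E$. Every constructor of the network language has such a congruence rule, so these cases are uniform.

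The only case needing care, and the one I expect to be the real obstacle, is $\AllowChoiceN$. Here the two sides may have different branch sets, and the combining clause collects one-sided branches while merging shared ones. I will treat one-branch choices as the two-branch form with an absent branch represented by $\NtwkNone$; this is consistent with the full syntax $\AllowChoice{\ell}{{E_1}_\bot}{{E_2}_\bot}$. Under that reading the merge acts pointwise on branches, using the $\NtwkNone$ clauses of merge wherever a branch is absent.

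There are three sub-cases, each closed by the induction hypothesis on the branches:
\begin{itemize}
  \item If a branch is missing on one side, it is $\NtwkNone$, and $\NtwkNone \lessthan E'$ holds by the axiom.
  \item If a branch is present on both sides, the induction hypothesis relates it to the merged branch.
  \item The two-branch congruence rule for $\AllowChoiceN$, together with the explicit rules relating one-branch choices to two-branch choices, then gives $E_1 \lessthan E$.
\end{itemize}
The appendix clause for $\AllowChoiceN$ appears to swap indices, writing $E_{1,1} \ntwkmerge E_{1,2}$. I will read it as the evidently intended pointwise merge $E_{1,1} \ntwkmerge E_{2,1}$ in the left branch and $E_{1,2} \ntwkmerge E_{2,2}$ in the right.
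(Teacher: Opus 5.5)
Your proof is correct. The paper states this lemma without proof, and your argument is the one it implicitly relies on: induction over the clauses of $\ntwkmerge$, with base cases by reflexivity or the axiom $\NtwkNone \lessthan E$, homomorphic cases by the matching congruence rules, and the \AllowChoiceN{} case handled by reading absent branches as $\NtwkNone$. Your pointwise reading of the index-swapped \AllowChoiceN{} clause is also right; it is the form used in the proof of Lemma~\ref{lem:lessthan-refl-merge}, and the correction is needed, since under the literal clause the lemma would fail.
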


\begin{lem}[Location Substitutions Preserve Less-Than]
  For any location substitution $\sigma$, if $E_1 \lessthan E_2$ then $E_1[\sigma] \lessthan E_2[\sigma]$.
\end{lem}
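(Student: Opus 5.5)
I would prove the claim that location substitutions preserve $\lessthan$ by induction on the derivation of $E_1 \lessthan E_2$, using the inductive definition in Appendix~\ref{sec:less-nonderm-def}. Each rule has a conclusion built from network-program constructors applied to subderivations. Substitution $[\sigma]$ distributes homomorphically over all of these constructors. So in almost every case the result follows by applying the induction hypothesis to the premises and reassembling with the same rule.

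\emph{Leaf rules.} These are $\NtwkNone \lessthan E$, $X \lessthan X$, $\NtwkUnit \lessthan \NtwkUnit$, $X \lessthan \NtwkUnit$, $\NtwkUnit \lessthan X$, $\Ret{e} \lessthan \Ret{e}$, $\RecvFrom{\ell} \lessthan \RecvFrom{\ell}$ and $\NtwkExit \lessthan \NtwkExit$. Location substitution leaves $\NtwkNone$, $X$, $\NtwkUnit$ and $\NtwkExit$ unchanged. It also acts identically on both sides of the reflexive-looking rules, e.g.\ $\Ret{e}[\sigma] = \Ret{e[\sigma]}$ and $(\RecvFrom{\ell})[\sigma] = \RecvFrom{\ell[\sigma]}$. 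Hence the same leaf rule applies again.

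\emph{Congruence rules.} These cover sequencing, functions, applications, type applications, the data-type constructors and eliminators, both forms of let, send, choose, allow-choice, $\AmIIn{\rho}{\cdot}{\cdot}$ and fork. Here I push $\sigma$ inside and invoke the induction hypothesis on each premise. For binders ($\NtwkTFunN$, type-let, $\NtwkForkN$ binding $\alpha$), I assume the usual capture-avoiding convention. That is, $\sigma$ is lifted to leave the bound variable fixed, and its range avoids that variable after $\alpha$-renaming. The induction hypothesis is stated for arbitrary $\sigma$, so it applies to this lifted substitution. Annotations such as the $\rho$ in $\SendTo{E}{\rho}$ or $\AmIIn{\rho}{E_1}{E_2}$, and the $\ell$ in allow-choice, become $\rho[\sigma]$ and $\ell[\sigma]$ on both sides. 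Thus the side conditions of syntactic equality between the two sides are preserved.

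\emph{Rules with side conditions.} These are the allow-choice rules $\AllowOneChoice{\ell}{\Left}{E_1} \lessthan \AllowChoice{\ell}{E_1'}{E_2'}$ (and its $\Right$ variant) and the rule $E_1 \lessthan V \NtwkSeq E_2$ with premise $\NtwkVal{V}$. The allow-choice cases are again congruence-like. The only genuine check, and the main obstacle, is that location substitution preserves network values: $\NtwkVal{V}$ implies $\NtwkVal{V[\sigma]}$. I would prove this as a small auxiliary lemma by induction on the grammar of network values. Variables, $\NtwkUnit$ and (type) functions stay in their syntactic class. $\Ret{v}$ becomes $\Ret{v[\sigma]}$, which requires the local-language assumption that values are closed under location (type) substitution. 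That assumption is implicit in the requirement that the local type system respect well-formed type substitutions; I would state it explicitly. Pairs, $\NtwkInlN$, $\NtwkInrN$ and $\NtwkFoldN$ follow by induction. With this lemma in hand, the sequencing rule case closes by the induction hypothesis, completing the proof.
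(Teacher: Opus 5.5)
Your proposal is correct and is the natural argument. The paper states this lemma without proof, and the intended proof is your induction on the derivation of $E_1 \lessthan E_2$, with each rule closing by the induction hypothesis and homomorphic, capture-avoiding substitution; the one non-trivial point you isolate, closure of network values (and hence local values) under location substitution for the $E_1 \lessthan V \NtwkSeq E_2$ rule, is an assumption the paper also relies on implicitly in the S-Done case of Lemma~\ref{lem:loc-sub-pres-typ}.
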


\begin{lem}[Type Substitutions Preserve Less-Than]
  For any type substitution $\sigma$, if $E_1 \lessthan E_2$ then $E_1[\sigma] \lessthan E_2[\sigma]$.
\end{lem}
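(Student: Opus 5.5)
The statement is the last lemma shown: type substitutions preserve $\lessthan$. I will prove it by induction on the derivation of $E_1 \lessthan E_2$ (Appendix~\ref{sec:less-nonderm-def}), with $\sigma$ an arbitrary type substitution held fixed throughout the induction. Type substitution acts homomorphically on network programs. It touches only the type arguments $t$ in $E~t$, the annotations $\rho$ in $\SendTo{E}{\rho}$, $\ChooseFor{d}{\rho}{E}$ and $\AmIIn{\rho}{E_1}{E_2}$, the locations $\ell$ in $\RecvFrom{\ell}$ and $\AllowN~\ell~\ChoiceN$, and type variables inside local expressions $\Ret{e}$. It leaves the shape of the term and its binders unchanged, after the usual capture-avoiding renaming of bound type variables in $\NtwkTFunN$, type-$\NtwkLetN$ and $\NtwkForkN$.

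The axioms go through directly.
\begin{itemize}
\item $\NtwkNone \lessthan E$ stays an axiom instance, since substitution leaves $\NtwkNone$ unchanged.
\item $X$ and $\NtwkUnit$ are unaffected by $\sigma$, so $X \lessthan X$, $X \lessthan \NtwkUnit$, $\NtwkUnit \lessthan X$ and $\NtwkUnit \lessthan \NtwkUnit$ are preserved.
\item $\Ret{e} \lessthan \Ret{e}$ becomes $\Ret{e[\sigma]} \lessthan \Ret{e[\sigma]}$, because both sides carry the same $e$ and so receive the same substitution.
\item $\RecvFrom{\ell}$ becomes $\RecvFrom{\ell[\sigma]}$ on both sides, and $\NtwkExit$ is unchanged.
\end{itemize}

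Each congruence rule requires the two sides to share their non-program components ($t$, $\rho$, $\ell$, $d$, and binder names). These components are substituted identically on both sides, so I apply the induction hypothesis to the premises and reassemble with the same rule. For binders, I choose the renaming consistently on both sides so that $\sigma$ passes under the binder uniformly.

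The one non-structural rule is $E_1 \lessthan V \NtwkSeq E_2$ with premises $E_1 \lessthan E_2$ and $\NtwkVal{V}$. The induction hypothesis gives $E_1[\sigma] \lessthan E_2[\sigma]$, and I need $V[\sigma]$ to still be a value. This is the main point to check. I will prove a small auxiliary fact by induction on $V$: values are closed under type substitution. Most cases are immediate. The only delicate case is $\Ret{v}$, where I need $v[\sigma]$ to remain a local value. I will take this from the local language's assumption that its type system respects well-formed type substitutions. If that assumption proves insufficient, I will state preservation of values under type substitution as an explicit (standard) requirement on the local language. With that in hand, the same rule yields $E_1[\sigma] \lessthan V[\sigma] \NtwkSeq E_2[\sigma]$.
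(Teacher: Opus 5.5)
Your proof is correct: the paper states this lemma without proof, and the routine induction on the derivation of $E_1 \lessthan E_2$ that you give is the argument it leaves implicit. You rightly isolate the rule $E_1 \lessthan V \NtwkSeq E_2$ as the only case needing more than homomorphism, and the fact it relies on — that local values stay values under type substitution — is indeed an unlisted assumption the paper uses tacitly elsewhere (e.g., in the \ruleref{S-Done} case of Lemma~\ref{lem:loc-sub-pres-typ}), so making it an explicit requirement on the local language is appropriate.
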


\begin{lem}[Local Substitutions Preserve Less-Than]
  For any local substitution $\sigma$, if $E_1 \lessthan E_2$ then $E_1[\sigma] \lessthan E_2[\sigma]$.
\end{lem}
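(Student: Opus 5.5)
The plan is to proceed by induction on the derivation of $E_1 \lessthan E_2$, generalizing over the local substitution $\sigma$ so the induction hypothesis applies under binders. We take the usual convention that substitution is capture-avoiding, renaming bound local variables of $E_1$ and $E_2$ so they miss $\dom{\sigma}$ and the free variables of its range. Because the renaming is applied uniformly to both sides, and $\lessthan$ is closed under $\alpha$-equivalence, binders cause no difficulty.

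The three groups of rules are handled in turn.

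\begin{itemize}
\item \emph{Axioms.} The rules $\NtwkNone \lessthan E$, $X \lessthan X$, $\NtwkUnit \lessthan \NtwkUnit$, $X \lessthan \NtwkUnit$, $\NtwkUnit \lessthan X$, $\RecvFrom{\ell} \lessthan \RecvFrom{\ell}$ and $\NtwkExit \lessthan \NtwkExit$ are unaffected or trivially preserved. A local substitution does not touch choreographic variables $X$, locations, or $\NtwkUnit$, and $\NtwkNone[\sigma] = \NtwkNone$.

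\item \emph{Return.} For $\Ret{e} \lessthan \Ret{e}$, both sides become $\Ret{e[\sigma]}$, so reflexivity applies.

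\item \emph{Congruence rules.} Sequencing, application, pairs, case, localCase, let, send, choose, allow-choice, AmI-In and fork all follow by the induction hypothesis on the premises. Substitution commutes with each constructor, after the renaming above for the binders $x$, $y$ in \FontNtwk{localCase} and \NtwkLetN, and for $(\alpha,x)$ in \NtwkForkN.
\end{itemize}

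The one case needing a real check is $E_1 \lessthan V \NtwkSeq E_2$ with $\NtwkVal{V}$. Here we need $\NtwkVal{V[\sigma]}$, which we verify by a small auxiliary induction on the value grammar.

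\begin{itemize}
\item $X$, $\NtwkUnit$, functions and type functions stay in the same syntactic class.
\item $\Ret{v}$ becomes $\Ret{v[\sigma]}$. This case requires $v[\sigma]$ to be a local value. That is immediate when $\sigma$ maps into values, as in every use the operational semantics makes (\ruleref{N-Let}, \ruleref{N-LocalCaseInl}, \ruleref{N-Fork}). Otherwise we appeal to the standing assumption that local substitution preserves local values; in particular a local variable is a local value in the paper's sense, since $\NtwkVal{X}$ is allowed.
\item Pairs, $\NtwkInlN$, $\NtwkInrN$ and $\NtwkFoldN$ follow by induction.
\end{itemize}

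With $\NtwkVal{V[\sigma]}$ in hand, the induction hypothesis gives $E_1[\sigma] \lessthan E_2[\sigma]$, and the same rule yields $E_1[\sigma] \lessthan V[\sigma] \NtwkSeq E_2[\sigma]$.

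I expect this value-preservation point to be the main obstacle. If the local language's assumptions are not strong enough, I would restrict $\sigma$ to value substitutions, which is all that later proofs need.
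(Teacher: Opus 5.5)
The paper states this lemma without proof, so your induction on the derivation of $E_1 \lessthan E_2$ is presumably the intended argument. You are right that the axioms, the return rule, and the congruence rules are routine (with capture-avoiding renaming at the \NtwkLetN, \FontNtwk{localCase} and \NtwkForkN binders). You are also right that the only real case is the value rule $E_1 \lessthan V \NtwkSeq E_2$.

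The gap is in how you discharge $\NtwkVal{V[\sigma]}$ when $V = \Ret{v}$ and $\sigma$ is \emph{arbitrary}.
\begin{itemize}
\item The paper has no standing assumption that local substitution preserves local values. Its local-language requirements are non-stepping values, confluence, a sound type system respecting substitutions, the sum extraction function, and location representations. None of these says anything about values under substitution.
\item Your supporting remark conflates two namespaces. The $X$ in the network-value grammar is a network program variable, which the paper keeps distinct from local variables $x$. So $\NtwkVal{X}$ tells you nothing about local variables.
\item If a local variable \emph{were} a local value, the general statement would be false. From $\NtwkUnit \lessthan \NtwkUnit$ the value rule gives $\NtwkUnit \lessthan \Ret{x} \NtwkSeq \NtwkUnit$. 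Under $[x \mapsto 1 \LocalPlus 1]$ the right-hand side becomes $\Ret{1 \LocalPlus 1} \NtwkSeq \NtwkUnit$, whose head is not a value. No rule of $\lessthan$ relates $\NtwkUnit$ to such a sequence.
\end{itemize}

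So for all $\sigma$ the lemma needs an extra hypothesis on the local language: local values must be closed under substitution. This holds for standard calculi in which variables are not values, since free variables of a value then occur only under binders. The paper's unproved statement silently relies on the same assumption.

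Your fallback is the right repair. Every local substitution the network semantics performs substitutes a local value: \ruleref{N-Let}, the \FontNtwk{localCase} rules, and \ruleref{N-Fork} with $x \mapsto \say{L'}$. That is all the downstream lifting argument uses. Even in this restricted form, ``value substitution preserves local values'' is a mild property you must assume of the abstract local language, not something immediate. You should state it explicitly.
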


\begin{lem}[Substitutions Preserve Less-Than]
  For any pair $\sigma_1,\sigma_2$ of variable substitutions such that
  $\sigma_1(X) \lessthan \sigma_2(X)$ for all $X$,
  if $E_1 \lessthan E_2$, then $E_1[\sigma] \lessthan E_2[\sigma]$.
\end{lem}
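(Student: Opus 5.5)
The statement to prove is the last lemma: if $\sigma_1(X) \lessthan \sigma_2(X)$ for every choreographic variable $X$ and $E_1 \lessthan E_2$, then $E_1[\sigma_1] \lessthan E_2[\sigma_2]$. The ``$\sigma$'' in the conclusion should be read this way, with $\sigma_1$ applied on the left and $\sigma_2$ on the right.

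I would prove this by induction on the derivation of $E_1 \lessthan E_2$, generalizing over the pair $\sigma_1,\sigma_2$. Generalizing is needed for the binder cases. Under $\NtwkFun{F}{X}{E}$, $\NtwkCase{E}{X}{E_1}{Y}{E_2}$, and similar binders, we extend both substitutions with the identity on the bound variables. The extended pair still satisfies the hypothesis, because $Y \lessthan Y$ holds by the rule $X \lessthan X$. Capture is avoided by the usual Barendregt convention.

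\textbf{Congruence cases.} For the structural rules (sequencing, application, pairs, $\FontNtwk{case}$, $\FontNtwk{localCase}$, $\NtwkLetN$, $\SendN$, $\AllowChoiceN$, $\AmIInN$, $\NtwkForkN$), substitution commutes with the constructor. The conclusion then follows by applying the inductive hypotheses and reapplying the same rule. Local, type and location annotations are untouched, since $\sigma_i$ only acts on choreographic variables.

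\textbf{Axioms.} $\NtwkUnit$, $\Ret{e}$, $\RecvFrom{\ell}$ and $\NtwkExit$ are unaffected by the substitution. The rule $\NtwkNone \lessthan E$ is immediate. For $X \lessthan X$, the goal is $\sigma_1(X) \lessthan \sigma_2(X)$, which is exactly the hypothesis.

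\textbf{Mixed variable/unit rules.} The cases $X \lessthan \NtwkUnit$ and $\NtwkUnit \lessthan X$ are the delicate ones. For example, $X \lessthan \NtwkUnit$ becomes $\sigma_1(X) \lessthan \NtwkUnit$, which fails for arbitrary $\sigma_1(X)$.

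I expect this to be the main obstacle. My first attempt would restrict to the substitutions actually used in the paper. These arise from $\beta$-steps and case steps, where the substituted terms are projections of choreographic values $V$, and for a location $L$ not involved in $V$ the projection is itself $\NtwkUnit$. Concretely, I would strengthen the hypothesis to require that each $\sigma_i(X)$ is a network value. I would also argue, via the typing/projection invariant, that a variable appears related to $\NtwkUnit$ only at locations where the substituted value projects to $\NtwkUnit$, so that $\sigma_i(X) = \NtwkUnit$ in those positions and the rule $\NtwkUnit \lessthan \NtwkUnit$ closes the case. If that invariant is too awkward to thread through, the fallback is to state the lemma only for substitutions satisfying $\sigma_1(X) \lessthan \sigma_2(X)$ with both sides in $\{X, \NtwkUnit\}$ whenever $X$ occurs in one of these mixed positions.

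\textbf{Discarded-value rule.} For $E_1 \lessthan V \NtwkSeq E_2$, the goal is $E_1[\sigma_1] \lessthan V[\sigma_2] \NtwkSeq E_2[\sigma_2]$. Induction gives $E_1[\sigma_1] \lessthan E_2[\sigma_2]$. It remains to show that $V[\sigma_2]$ is still a value. This holds because network values are closed under substituting values for variables, and variables are themselves values. This is a second place where the strengthened requirement that $\sigma_2$ maps variables to values is used.
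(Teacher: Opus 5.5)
The paper states this lemma without proof, so there is no argument to compare against. Judged on its own, your induction is right for the congruence rules, for the binder cases (extending both substitutions by the identity), and for the axioms $X \lessthan X$ and $\NtwkNone \lessthan E$. You also identify exactly the right problem rules. But you should draw the conclusion outright: the lemma is false as stated, under any reading of the stray $\sigma$, so no proof of it exists.

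Take the axiom $X \lessthan \NtwkUnit$ and $\sigma_1 = \sigma_2 = [X \mapsto \Ret{v}]$ for a local value $v$. The hypothesis holds, since $\Ret{v} \lessthan \Ret{v}$ and $Y \lessthan Y$. But $\Ret{v} \lessthan \NtwkUnit$ is underivable: the only rule with $\Ret{e}$ on the left requires the identical term on the right, and the discarded-value rule requires a sequence on the right. The axiom $\NtwkUnit \lessthan X$ fails dually. Separately, $\NtwkUnit \lessthan X \NtwkSeq \NtwkUnit$ is an instance of the discarded-value rule (since $X$ is a network value), and it fails under $X \mapsto \RecvFrom{\ell}$ on both sides.

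Your ``first attempt'' therefore cannot be carried out. The lemma quantifies over arbitrary network programs, with no choreography, projection or typing derivation in scope, so a projection invariant can only enter as an explicit hypothesis. Requiring the $\sigma_i$ to map into network values is genuinely needed for the discarded-value rule, and your closure argument there is correct. But it does nothing for the mixed axioms, since $\Ret{v}$ is itself a value.

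Your fallback is a true lemma, and your induction does prove it. However, it is a different and strictly weaker statement. It cannot supply what the paper needs this lemma for, namely the \textsc{N-App} case of the Lifting Property, where $X$ is replaced by arbitrary values. In fact the same counterexample placed under a redex, $(\NtwkFun{F}{X}{X})\,\Ret{v} \lessthansim (\NtwkFun{F}{X}{\NtwkUnit})\,\Ret{v}$, refutes Lifting as stated.

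The defect lies in the relation, not in your argument. The appendix axioms $X \lessthan \NtwkUnit$ and $\NtwkUnit \lessthan X$ do not appear in the main-text definition. They also already break the claimed antisymmetry and transitivity, since $X \lessthan \NtwkUnit \lessthan Y$ with $X \neq Y$. Suppose these two axioms are dropped and the $\sigma_i$ are required to map into network values. Then your induction goes through verbatim and yields $E_1[\sigma_1] \lessthan E_2[\sigma_2]$. The non-participant \textsc{C-App} case of completeness, which invokes $Z \lessthan \NtwkUnit$, would then need another justification.
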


\begin{cor}
  If $E_1 \lessthan E_2$ and $V_1 \lessthan V_2$, then $\subst{E_1}{X}{V_1} \lessthan \subst{E_2}{X}{V_2}$.
\end{cor}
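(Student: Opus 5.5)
The corollary follows from the preceding lemma on general substitutions (``Substitutions Preserve Less-Than''). I instantiate it with two single-variable substitutions: $\sigma_1(X) = V_1$, $\sigma_2(X) = V_2$, and $\sigma_1(Y) = \sigma_2(Y) = Y$ for every $Y \neq X$. The pointwise hypothesis $\sigma_1(Y) \lessthan \sigma_2(Y)$ then splits into two cases. At $X$ it is exactly the assumption $V_1 \lessthan V_2$. At every other variable it is $Y \lessthan Y$, which holds by reflexivity of $\lessthan$ (or directly by the axiom $X \lessthan X$). The lemma then gives $E_1[\sigma_1] \lessthan E_2[\sigma_2]$, and these two terms are by definition $\subst{E_1}{X}{V_1}$ and $\subst{E_2}{X}{V_2}$.

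The only point that needs care is the statement of the substitution lemma itself, which as written concludes $E_1[\sigma] \lessthan E_2[\sigma]$. I read this as $E_1[\sigma_1] \lessthan E_2[\sigma_2]$, and I would check that its proof supports that reading. That proof goes by induction on the derivation of $E_1 \lessthan E_2$, with the following cases.
\begin{itemize}
\item \emph{Variable axioms.} Under $X \lessthan X$ the goal becomes $\sigma_1(X) \lessthan \sigma_2(X)$, which is the pointwise hypothesis. The axioms $X \lessthan \NtwkUnit$ and $\NtwkUnit \lessthan X$ are the delicate ones: after substitution, $V_1 \lessthan \NtwkUnit$ need not hold for an arbitrary $V_1$. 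I would handle them by restricting the lemma to the substitutions EPP actually produces, or else note that in the intended use the variable being substituted is never related to $\NtwkUnit$.
\item \emph{Binders.} The function, $\mathtt{case}$ and $\mathtt{let}$ rules go through by capture-avoiding renaming. The extended substitutions still agree pointwise, because the bound variable maps to itself on both sides.
\item \emph{Sequencing rule} $E_1 \lessthan V \NtwkSeq E_2$. This needs the fact that substituting values for variables preserves $\NtwkVal{V}$. That holds because variables count as network values, so $V[\sigma_2]$ is again a value.
\end{itemize}
All remaining congruence rules are immediate from the induction hypotheses.

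The main obstacle is therefore the mixed $X$/$\NtwkUnit$ axioms, not the corollary itself. I would first check whether the subsequent uses only need the case where $\NtwkUnit$ is never related to the substituted variable, and state that side condition if so.
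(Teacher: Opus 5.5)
Your reduction is exactly how the paper gets this corollary. It instantiates ``Substitutions Preserve Less-Than'' with $\sigma_1 = [X \mapsto V_1]$, $\sigma_2 = [X \mapsto V_2]$ and reflexivity elsewhere, and states the result as an immediate instance with no further argument. Your reading of the lemma's conclusion as $E_1[\sigma_1] \lessthan E_2[\sigma_2]$ is the intended one.

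The problem you spotted, however, is not confined to the lemma's proof. Your closing remark that it spares ``the corollary itself'' is mistaken. The corollary is the single-variable instance of the lemma, and the mixed axioms refute it directly:
\begin{itemize}
\item Take $E_1 = X$ and $E_2 = \NtwkUnit$, related by the axiom $X \lessthan \NtwkUnit$.
\item Take $V_1 = V_2 = \Ret{v}$, related by $\Ret{e} \lessthan \Ret{e}$.
\item Then $\subst{E_1}{X}{V_1} = \Ret{v}$ and $\subst{E_2}{X}{V_2} = \NtwkUnit$.
\end{itemize}
But $\Ret{v} \lessthan \NtwkUnit$ is not derivable. The only programs below $\NtwkUnit$ are undefined, $\NtwkUnit$ itself, and variables, even if you close $\lessthan$ under transitivity. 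The rule $E_1 \lessthan V \NtwkSeq E_2$ does not help, since it needs a sequence on the right. Dually, $E_1 = \NtwkUnit$, $E_2 = X$ would require $\NtwkUnit \lessthan \Ret{v}$. So the statement as written has no proof. Postponing the issue (``I would first check whether the subsequent uses only need\ldots'') leaves your proposal without an established result.

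What is missing is a side condition built into the statement, and a proof of that version. Either of the following would do:
\begin{itemize}
\item[(i)] $E_1 \lessthan E_2$ has a derivation using neither $X \lessthan \NtwkUnit$ nor $\NtwkUnit \lessthan X$ at the substituted variable $X$.
\item[(ii)] Whenever one of these axioms is used at $X$, the corresponding $V_1$ (resp.\ $V_2$) is $\NtwkUnit$ or a variable, so that $V_1 \lessthan \NtwkUnit$ (resp.\ $\NtwkUnit \lessthan V_2$) holds.
\end{itemize}
Under either condition your induction goes through as you describe. Binders are handled by renaming. The rule $E_1 \lessthan V \NtwkSeq E_2$ is preserved because $V_2$ and all variables are network values, so $V[\sigma_2]$ is again a value.

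Since $\lessthan$ is purely syntactic, the chosen condition then has to be discharged at each place the corollary is used, for example the \textsc{N-App} and \textsc{N-CaseInl} cases of the Lifting Property. It must come from typing information, such as the fact that a value projects to $\NtwkUnit$ at any location that does not own its type.
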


\begin{defn}[Network Program Collapsing]
  Let $\collapse{E}$ be the structurally homomorphic function on network programs such that $\collapse{E_1 \NtwkSeq E_2} = \collapse{E_1} \seqfun \collapse{E_2}$.
  For instance, $\collapse{\NtwkLetIn{x}{E_1}{E_2}} = \NtwkLetIn{x}{\collapse{E_1}}{\collapse{E_2}}$.
\end{defn}

\begin{lem}[Collapsing Function is Less-Than]\label{lem:collapse-less-than}
    $\collapse{E} \lessthan E$.
\end{lem}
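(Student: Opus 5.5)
We prove $\collapse{E} \lessthan E$ by structural induction on $E$, using reflexivity and transitivity of $\lessthan$ together with its compatibility rules. Since $\operatorname{collapse}$ is homomorphic on every constructor other than sequencing, the non-sequencing cases are immediate. Take, for example, $E = \NtwkLetIn{x}{E_1}{E_2}$. The induction hypothesis gives $\collapse{E_1} \lessthan E_1$ and $\collapse{E_2} \lessthan E_2$, and the congruence rule for \NtwkLetN{} then yields $\collapse{E} \lessthan E$. The other constructors follow the same pattern:
\begin{itemize}
\item functions, applications, and type abstractions and applications;
\item pairs, projections, injections, \FontNtwk{case} and \FontNtwk{localCase};
\item \SendN{}, \ChoiceN{}, \AllowChoiceN{} (with one or two branches), \AmIInN{} and \NtwkForkN{}.
\end{itemize}
The atomic forms $X$, $\NtwkUnit$, $\Ret{e}$, $\RecvFrom{\ell}$ and $\NtwkExit$ are fixed by $\operatorname{collapse}$ and are handled by reflexivity.

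The only real case is $E = E_1 \NtwkSeq E_2$. There $\collapse{E} = \collapse{E_1} \seqfun \collapse{E_2}$, and we split on whether $\collapse{E_1}$ is a value.
\begin{itemize}
\item If $\collapse{E_1}$ is not a value, then $\collapse{E} = \collapse{E_1} \NtwkSeq \collapse{E_2}$. Applying the sequencing congruence rule to the two induction hypotheses gives $\collapse{E} \lessthan E_1 \NtwkSeq E_2$.
\item If $\collapse{E_1}$ is a value, then $\collapse{E} = \collapse{E_2}$, and we must show $\collapse{E_2} \lessthan E_1 \NtwkSeq E_2$. The rule $E' \lessthan V \NtwkSeq E''$ (for $\NtwkVal{V}$ and $E' \lessthan E''$) applies directly only when $E_1$ itself is a value.
\end{itemize}

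Handling the second subcase is the main obstacle, and the approach is to strengthen the induction with an auxiliary claim: if $\collapse{E}$ is a value, then so is $E$. Under this claim, $E_1$ is a value whenever $\collapse{E_1}$ is. Combining the induction hypothesis $\collapse{E_2} \lessthan E_2$ with the rule $E' \lessthan V \NtwkSeq E''$ (taking $V = E_1$) then gives $\collapse{E_2} \lessthan E_1 \NtwkSeq E_2$.

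We prove the auxiliary claim by its own structural induction, case-splitting on the grammar of network values.
\begin{itemize}
\item The leaf values $X$, $\NtwkUnit$ and $\Ret{v}$ are fixed by $\operatorname{collapse}$.
\item $\NtwkFun{F}{X}{E'}$ and $\NtwkTFun{F}{\alpha}{E'}$ are values regardless of their bodies, so $E$ is a value in these cases.
\item For the compound values $(V_1,V_2)$, $\NtwkInl{V}$, $\NtwkInr{V}$ and $\NtwkFold{V}$, $\operatorname{collapse}$ preserves the head constructor, so each component of $\collapse{E}$ is a value and the inner induction applies.
\end{itemize}
The delicate case is a sequence $E = E_1 \NtwkSeq E_2$, since $\collapse{E}$ may be $\collapse{E_2}$ and hence a value even though $E$ is not. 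The auxiliary claim is therefore false as stated for sequences, and the plan is to refine it instead of relying on it.

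The refined claim is: whenever $\collapse{E}$ is a value $V'$, we have $V' \lessthan E$. This is proved by inducting on $E$ and peeling off leading sequenced prefixes. If $E = E_1 \NtwkSeq E_2$ and $\collapse{E_1}$ is a value, then by induction $E_1$ reduces to a chain of value-prefixed sequences. We then apply the value-sequencing rule repeatedly, combined with transitivity of $\lessthan$, to discard each value prefix. If $E_1$ is not literally a value but only collapses to one, we descend into $E_1$ the same way until the prefixes exposed are genuine values.

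If this refinement proves too fiddly, the fallback is to restrict the lemma to programs in the image of EPP. There every $E_1 \NtwkSeq E_2$ arises from $\seqfun$, so $E_1$ is already a non-value, which removes the problematic subcase entirely.
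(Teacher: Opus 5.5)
Your handling of the non-sequencing constructors and of the subcase where $\collapse{E_1}$ is not a value is fine. You also correctly isolate the hard subcase: $\collapse{E_1}$ is a value but $E_1$ is not, so the goal is $\collapse{E_2} \lessthan E_1 \NtwkSeq E_2$. However, the proposal never closes that subcase.

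\begin{itemize}
\item The auxiliary claim is false, as you observe.
\item The refined claim (if $\collapse{E}$ is a value $V'$ then $V' \lessthan E$) is only a special case of the lemma itself. For $E_1$ it is exactly the induction hypothesis $\collapse{E_1} \lessthan E_1$, so it supplies nothing new. It also says nothing about $\collapse{E_2}$, which need not be a value.
\item The peeling sketch tries to expose a literal value prefix inside $E_1$, and such a prefix need not exist. For $E_1 = \NtwkInl{(\NtwkUnit \NtwkSeq \NtwkUnit)}$ the collapse $\NtwkInl{\NtwkUnit}$ is a value. Yet $E_1$ has no leading sequenced prefix at all, because the offending sequence sits under a value constructor.
\item The fallback proves a weaker statement. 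The lemma quantifies over all network programs, and the paper applies it to substitution instances $E[\sigma]$ of projections, not just to projections.
\end{itemize}

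The missing idea is a single congruence step. Rather than searching inside $E_1$ for a value, put $\collapse{E_1}$ itself, which \emph{is} a literal value, in head position.
\begin{itemize}
\item The value-sequencing rule with $V = \collapse{E_1}$, together with the induction hypothesis $\collapse{E_2} \lessthan E_2$, gives $\collapse{E_2} \lessthan \collapse{E_1} \NtwkSeq E_2$.
\item Sequencing congruence, applied to the induction hypothesis $\collapse{E_1} \lessthan E_1$ and reflexivity $E_2 \lessthan E_2$, gives $\collapse{E_1} \NtwkSeq E_2 \lessthan E_1 \NtwkSeq E_2$.
\item Transitivity finishes the case.
\end{itemize}
This is what the paper's one-line proof means by ``$\seqfun$ preserves $\lessthan$'': $E_1' \lessthan E_1$ and $E_2' \lessthan E_2$ imply $E_1' \seqfun E_2' \lessthan E_1 \NtwkSeq E_2$.

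Transitivity is essential here, not a convenience. It comes from the main-text definition of $\lessthan$ as the smallest structurally compatible partial order. The appendix's inference rules contain no transitivity rule, and from them alone even $\Ret{e} \lessthan (\NtwkUnit \NtwkSeq \NtwkUnit) \NtwkSeq \Ret{e}$ is underivable, although its left side is the collapse of its right side. The reason is that every rule with a sequence on the right needs a literal value at its head, a sequence on the left, or the undefined program on the left.
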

\begin{proof}
    By induction on~$E$.
    The only interesting case is when $E = E_1 \NtwkSeq E_2$, which holds by induction and as $\seqfun$ preserves $\lessthan$.
\end{proof}

\begin{lem}[Program Merging on Values]\label{lem:merge-value}
  If $E_1 \ntwkmerge E_2 = E$, then $E_1$ is a value $\Leftrightarrow$ $E_2$ is a value $\Leftrightarrow$ $E$ is a value.
\end{lem}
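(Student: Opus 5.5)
The plan is induction on the derivation of $E_1 \ntwkmerge E_2 = E$, which amounts to a case analysis on the defining clauses of the merge operator in Appendix~\ref{sec:ntwk-merge-def}. It suffices to show, for each clause, that $E_1$ is a value iff $E$ is a value and that $E_2$ is a value iff $E$ is a value; the three-way equivalence then follows.

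\textbf{Base clauses.} In the clauses $X \ntwkmerge X$, $\NtwkUnit \ntwkmerge \NtwkUnit$, $\Ret{e} \ntwkmerge \Ret{e}$, $(\RecvFrom{\ell}) \ntwkmerge (\RecvFrom{\ell})$ and $\NtwkExit \ntwkmerge \NtwkExit$, all three programs are syntactically identical, so the claim is trivial. For $\Ret{e}$ this covers both $\val{e}$ and $\neg\val{e}$ at once.

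\textbf{Undefined clauses.} The clauses with $\NtwkNone$ on one side return the other side. These need care, since $\NtwkNone$ is not a network value. I will read the statement as ranging over defined programs, with $\NtwkNone$ treated as absent. If the equivalence must hold literally, the lemma needs the side condition that $E_1$ and $E_2$ are both defined, and I would add that hypothesis. That hypothesis is harmless in the uses from EPP, where the merged branches are themselves defined projections.

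\textbf{Clauses for value formers.} These are $\NtwkFun{F}{X}{\cdot}$, $\NtwkTFun{F}{\alpha}{\cdot}$, pairs, $\NtwkInlN$, $\NtwkInrN$ and $\NtwkFoldN$. For functions and type functions, all three sides are values unconditionally. For pairs, $\NtwkInlN$, $\NtwkInrN$ and $\NtwkFoldN$, a term is a value iff each immediate subterm is a value. The inductive hypothesis applied to each componentwise merge (for instance $E_{1,1} \ntwkmerge E_{2,1}$) gives the equivalence component by component, and this lifts to the whole term.

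\textbf{Clauses for non-value formers.} These are the remaining clauses: sequencing, application, type application, $\NtwkFstN$, $\NtwkSndN$, $\NtwkUnfoldN$, \FontNtwk{case}, \FontNtwk{localCase}, let, send, choose, \AllowChoiceN, \AmIN{} and \NtwkForkN. Each shares one top-level constructor across $E_1$, $E_2$ and $E$, and that constructor is never a value. So all three sides are non-values and the equivalence holds vacuously.

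The only real obstacle is the handling of $\NtwkNone$. If the statement is taken literally, the clause $\NtwkNone \ntwkmerge V$ is a counterexample, since $\NtwkNone$ is not a value while $V$ is; this is why the definedness hypothesis may be needed. Everything else is a routine check against the grammar of network values in Appendix~\ref{sec:full-ntwk-syntax}.
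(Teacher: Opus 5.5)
Your proof is correct and follows essentially the paper's route: the paper argues by structural induction on $E_1$ with case analysis on $E_2$, and since merge is defined clause-by-clause on matching constructors, that is exactly your check of each defining clause against the grammar of network values. Your caveat about the $\NtwkNone$ clauses is well-founded and is silently passed over by the paper's one-line proof: read literally, $\NtwkNone \ntwkmerge \NtwkUnit = \NtwkUnit$ is a counterexample, and restricting to defined programs is the right fix, since in the grammar $\NtwkNone$ only appears as an absent \AllowChoiceN{} branch and so the inductive hypothesis is never applied to it.
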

\begin{proof}
  By induction on~$E_1$, and analyzing the possible cases of~$E_2$.
\end{proof}

\begin{lem}[Collapsing Preserves Program Merging]\label{lem:collapse-pres-merge}
  If $E_1 \ntwkmerge E_2 = E$ then $\collapse{E_1} \ntwkmerge \collapse{E_2} = \collapse{E}$.
\end{lem}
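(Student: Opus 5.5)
We prove Lemma~\ref{lem:collapse-pres-merge} by structural induction on~$E_1$, simultaneously inspecting the shape of~$E_2$ that the definition of~$\ntwkmerge$ forces. Since $E_1 \ntwkmerge E_2$ is defined only on matching patterns (Appendix~\ref{sec:ntwk-merge-def}), each case of $E_1$ fixes the head constructor of $E_2$ and of $E$, up to the degenerate clauses involving $\NtwkNone$.

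\textbf{Base and degenerate cases.} If either side is $\NtwkNone$, the merge returns the other side, and $\collapse{\NtwkNone} = \NtwkNone$, so the claim holds immediately. For the leaves $X$, $\NtwkUnit$, $\Ret{e}$, $\RecvFrom{\ell}$, and $\NtwkExit$, collapsing is the identity and the merge is idempotent, so there is nothing to prove.

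\textbf{Homomorphic cases.} For every constructor other than sequencing, both $\ntwkmerge$ and $\operatorname{collapse}$ act componentwise. Examples are functions, applications, pairs, \FontNtwk{case}, \FontNtwk{localCase}, \NtwkLetN, \SendN, \FontNtwk{choose}, \AllowChoiceN, \AmIInN, and \NtwkForkN. In each of these cases the result follows directly from the inductive hypotheses on the subterms.

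The \AllowChoiceN case needs a little care because of one-branch choices. When one side has only a $\Left$ branch and the other only a $\Right$ branch, the missing branches are treated as $\NtwkNone$. The degenerate clauses then cover these uniformly, and collapsing preserves the set of present branches.

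\textbf{Sequencing case.} The only substantive case is $E_1 = E_{1,1} \NtwkSeq E_{1,2}$ and $E_2 = E_{2,1} \NtwkSeq E_{2,2}$, with $E = E_{1,1}' \NtwkSeq E_{1,2}'$, where $E_{1,1}' = E_{1,1} \ntwkmerge E_{2,1}$ and $E_{1,2}' = E_{1,2} \ntwkmerge E_{2,2}$. By induction, $\collapse{E_{i,1}}$ merges to $\collapse{E_{1,1}'}$ and $\collapse{E_{i,2}}$ merges to $\collapse{E_{1,2}'}$. We must show
\[ (\collapse{E_{1,1}} \seqfun \collapse{E_{1,2}}) \ntwkmerge (\collapse{E_{2,1}} \seqfun \collapse{E_{2,2}}) = \collapse{E_{1,1}'} \seqfun \collapse{E_{1,2}'}. \]
The key point is that the three collapsed heads $\collapse{E_{1,1}}$, $\collapse{E_{2,1}}$, and $\collapse{E_{1,1}'}$ are either all values or all non-values. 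This follows from Lemma~\ref{lem:merge-value} applied to the merge of the collapsed heads, which is defined by the inductive hypothesis. Consequently every $\seqfun$ makes the same choice on both sides. If all three heads are values, both sides reduce to the merge of the tails, which equals $\collapse{E_{1,2}'}$ by induction. Otherwise both sides are genuine sequences, and the merge clause for $\NtwkSeq$ together with the inductive hypotheses closes the case.

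\textbf{Main obstacle.} I expect the sequencing case to be the main obstacle, specifically this value-synchronization step. Lemma~\ref{lem:merge-value} must be applied to the collapsed programs rather than the originals, which is why we first invoke the inductive hypothesis to know that their merge is defined.

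A minor subtlety is $\NtwkNone$ appearing as a head. We would either treat $\NtwkNone$ as a non-value consistently with Lemma~\ref{lem:merge-value}, or observe that merging with $\NtwkNone$ returns the other side unchanged, so the claim is trivial in that situation.
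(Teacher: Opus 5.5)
Your proposal is correct and is essentially the paper's proof. It uses induction over the merge with sequencing as the only substantive case: the inductive hypothesis makes $\collapse{E_{1,1}} \ntwkmerge \collapse{E_{2,1}}$ defined, so Lemma~\ref{lem:merge-value} forces every $\seqfun$ to make the same choice, and then the same two calculations finish the job.

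One small remark: neither of your proposed fixes for a $\NtwkNone$ sequence head would work. Lemma~\ref{lem:merge-value} itself fails for $\NtwkNone \ntwkmerge \NtwkUnit = \NtwkUnit$. Moreover, $E_1 = \NtwkNone \NtwkSeq \NtwkUnit$ with $E_2 = \NtwkUnit \NtwkSeq \NtwkUnit$ makes the lemma false: here $\collapse{E} = \NtwkUnit$, while $\collapse{E_1} \ntwkmerge \collapse{E_2}$ is undefined. The point is moot, however, because in the network syntax $\NtwkNone$ only occurs as a missing \AllowChoiceN{} branch, which is what the paper implicitly assumes.
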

\begin{proof}
  By induction on the definition of the merge function.
  The only interesting case is when $E_1 = E_{1,1} \NtwkSeq E_{1,2}$, $E_2 = E_{2,1} \NtwkSeq E_{2,2}$, and
  $E = (E_{1,1} \ntwkmerge E_{2,1}) \NtwkSeq (E_{1,2} \ntwkmerge E_{2,2})$.
  First suppose that $\collapse{E_{1,1}}$ is a value, in which case by Lemma~\ref{lem:merge-value} and the inductive
  hypothesis $\collapse{E_{2,1}}$ and $\collapse{E_{1,1}} \ntwkmerge \collapse{E_{2,1}}$ are also values.
  This implies that
  \begin{align*}
    \collapse{E_1} \ntwkmerge \collapse{E_2}
    &= (\collapse{E_{1,1}} \seqfun \collapse{E_{1,2}}) \ntwkmerge (\collapse{E_{2,1}} \seqfun \collapse{E_{2,2}}) \\
    &= \collapse{E_{1,2}} \ntwkmerge \collapse{E_{2,2}} \\
    &= (\collapse{E_{1,1}} \ntwkmerge \collapse{E_{2,1}}) \seqfun (\collapse{E_{1,2}} \ntwkmerge \collapse{E_{2,2}}) \\
    &= \collapse{E_{1,1} \ntwkmerge E_{2,1}} \seqfun \collapse{E_{1,2} \ntwkmerge E_{2,2}} \\
    &= \collapse{(E_{1,1} \ntwkmerge E_{2,1}) \NtwkSeq (E_{1,2} \ntwkmerge E_{2,2})} \\
    &= \collapse{E}.
  \end{align*}
  Now if $\collapse{E_{1,1}}$ is not a value, we similarly have that
  \begin{align*}
    \collapse{E_1} \ntwkmerge \collapse{E_2}
    &= (\collapse{E_{1,1}} \seqfun \collapse{E_{1,2}}) \ntwkmerge (\collapse{E_{2,1}} \seqfun \collapse{E_{2,2}}) \\
    &= (\collapse{E_{1,1}} \NtwkSeq \collapse{E_{1,2}}) \ntwkmerge (\collapse{E_{2,1}} \NtwkSeq \collapse{E_{2,2}}) \\
    &= (\collapse{E_{1,1}} \ntwkmerge \collapse{E_{2,1}}) \NtwkSeq (\collapse{E_{1,2}} \ntwkmerge \collapse{E_{2,2}}) \\
    &= (\collapse{E_{1,1}} \ntwkmerge \collapse{E_{2,1}}) \seqfun (\collapse{E_{1,2}} \ntwkmerge \collapse{E_{2,2}}) \\
    &= \collapse{E_{1,1} \ntwkmerge E_{2,1}} \seqfun \collapse{E_{1,2} \ntwkmerge E_{2,2}} \\
    &= \collapse{(E_{1,1} \ntwkmerge E_{2,1}) \NtwkSeq (E_{1,2} \ntwkmerge E_{2,2})} \\
    &= \collapse{E}.
  \end{align*}
\end{proof}

\begin{lem}[Less-Than Relation Reflects Network-Program Merging]\label{lem:lessthan-refl-merge}
  If $E_1' \lessthan E_1$, $E_2' \lessthan E_2$, $\collapse{E_1'} = E_1'$, $\collapse{E_2'} = E_2'$, and $E_1 \ntwkmerge E_2 = E$,
  then there is some $E' \lessthan E$ such that $E_1' \ntwkmerge E_2' = E'$.
\end{lem}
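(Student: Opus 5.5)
The plan is to prove the statement by induction on the derivation of $E_1 \ntwkmerge E_2 = E$, i.e.\ by case analysis on the shape of the merge rule that produced $E$, and within each case to invert the two hypotheses $E_1' \lessthan E_1$ and $E_2' \lessthan E_2$. Before that, I would record a simple fact about collapsed programs: if $\collapse{E'} = E'$, then no subterm of $E'$ is a sequence $V \NtwkSeq E''$ with $\NtwkVal{V}$, and every immediate subterm of $E'$ is again collapsed. This is immediate from the definition of $\seqfun$ and the homomorphic definition of $\operatorname{collapse}$. The collapsedness hypotheses are what allow the inductive hypothesis to be applied to subterms.

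First I would dispose of the trivial cases. If either side of the merge is $\NtwkNone$, say $E_1 = \NtwkNone$, then $E_1' \lessthan \NtwkNone$ forces $E_1' = \NtwkNone$, since no rule relates anything else to $\NtwkNone$. Hence $E_1' \ntwkmerge E_2' = E_2' \lessthan E_2 = E$. For the leaf cases ($X$, $\NtwkUnit$, $\Ret{e}$, $\RecvFrom{\ell}$, $\NtwkExit$), inversion of $\lessthan$ gives $E_i'$ equal to the same leaf or to $\NtwkNone$. The only wrinkle is the $X \lessthan \NtwkUnit$ and $\NtwkUnit \lessthan X$ rules. For these I would check that the merge of the primed terms is still defined, or else argue that these mixed pairs cannot arise. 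If they do arise, I would add $X \ntwkmerge \NtwkUnit$-style handling, so I expect to have to verify this carefully against the merge table.

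For each structural (congruence) case, for instance $(\NtwkFun{F}{X}{E_{1}}) \ntwkmerge (\NtwkFun{F}{X}{E_{2}})$, I would invert the two $\lessthan$ hypotheses. Because the primed programs are collapsed, the rule $E' \lessthan V \NtwkSeq E''$ can only be applied with the sequence on the right, so the inversion yields the same head constructor on the primed side (or $\NtwkNone$), with componentwise $\lessthan$ facts between collapsed subterms. Applying the inductive hypothesis to each component gives merged components $E'_j \lessthan E_j$, and reassembling with the same constructor gives $E'$ together with $E' \lessthan E$ by the congruence rule.

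The main obstacle is the two cases where $\lessthan$ is not structural. The first is the sequence case. There $E_1 = E_{1,1} \NtwkSeq E_{1,2}$ and $E_1'$ may either be a sequence related componentwise, or be related to $E_{1,2}$ alone via the value-discarding rule when $E_{1,1}$ is a value. By Lemma~\ref{lem:merge-value}, $E_{1,1}$ is a value iff $E_{2,1}$ is a value iff $E_{1,1} \ntwkmerge E_{2,1}$ is a value. So I would subcase on whether each primed side discarded its prefix. In the mixed subcase, where one side discards and the other keeps a sequence $V' \NtwkSeq \cdots$, collapsedness forbids $V'$ from being a value, and the $\lessthan$ premise makes $V'$ sit below a value. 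I would show that anything below a value is a value or $\NtwkNone$, which gives a contradiction or a trivial case. The second obstacle is the \AllowChoiceN{} case, where a one-branch $\AllowOneChoice{\ell}{d}{\cdot}$ may sit below a two-branch choice. Here I would merge branch by branch, using the inductive hypothesis on the surviving branches and the $\NtwkNone$ clauses of merge for the missing ones. The result is a (one- or two-branch) choice below $E$ by the choice rules of $\lessthan$.
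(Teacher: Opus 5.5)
Your strategy is the paper's: induction on the merge with inversion of $\lessthan$, treating only the sequencing and \AllowChoiceN{} cases as non-structural. Your \AllowChoiceN{} case also matches theirs. Placing a surviving branch under a merged branch uses Lemma~\ref{lem:lessthan-merge} plus transitivity, as the paper does. The same is needed whenever a primed side is $\NtwkNone$ but the unprimed side is not.

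Your sequencing case is more careful than the paper's. The paper dismisses the value-discarding rule because $\collapse{V \NtwkSeq E_2} \neq V \NtwkSeq E_2$ ``violates the assumptions''. But collapsedness is assumed only of the smaller programs $E_i'$, while that rule puts $V \NtwkSeq \cdots$ on the \emph{larger} side. Their argument works only if $E_1, E_2$ are also collapsed, which holds where the lemma is used (by Lemma~\ref{lem:epp-simplified}) but is not stated. Your split is what the statement as written needs:
\begin{itemize}
\item Lemma~\ref{lem:merge-value} makes $E_{1,1}$ and $E_{2,1}$ values together, and makes $E_{1,1} \ntwkmerge E_{2,1}$ a value. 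So the both-discard subcase closes by induction plus the value-discarding rule.
\item Lemma~\ref{lem:less-than-val} plus collapsedness rules out the mixed subcase.
\end{itemize}

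The genuine gap is the leaf case you deferred, and it cannot be closed: the mixed pairs do arise, and the conclusion then fails. Take $E_1 = E_2 = X$, so $E = X \ntwkmerge X = X$, and take $E_1' = X$, $E_2' = \NtwkUnit$. All hypotheses hold: $X \lessthan X$, $\NtwkUnit \lessthan X$, and both primed programs are collapsed. Yet $X \ntwkmerge \NtwkUnit$ matches no clause of the merge table. Here ``undefined'' must be read as merge failure, not as the placeholder term $\NtwkNone$; that term satisfies $\NtwkNone \lessthan E$ and would make the lemma vacuous.

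So the lemma is false as stated; the paper's proof simply skips the leaf cases. You can complete your argument only by changing a definition, for example:
\begin{itemize}
\item add merge clauses for $X \ntwkmerge \NtwkUnit$ and $\NtwkUnit \ntwkmerge X$, which is harmless for Lemmas~\ref{lem:lessthan-merge}, \ref{lem:merge-value}, and~\ref{lem:collapse-pres-merge} since both are values related by $\lessthan$ in both directions; or
\item require that the rules $X \lessthan \NtwkUnit$ and $\NtwkUnit \lessthan X$ not be used in deriving $E_i' \lessthan E_i$.
\end{itemize}

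A smaller caveat concerns your ``value or $\NtwkNone$, hence trivial'' step. It fails for $V' = \NtwkNone$: then $E_2' = \NtwkNone \NtwkSeq E_{2,2}'$ is collapsed, and its merge with a non-sequence $E_1'$ is undefined. Nested occurrences such as $(\NtwkNone, V) \lessthan (V_1, V_2)$ also falsify ``anything below a value is a value or $\NtwkNone$''. You need the convention, also tacit in the paper's Lemma~\ref{lem:less-than-val}, that $\NtwkNone$ appears only as a missing \AllowChoiceN{} branch.
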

\begin{proof}
  By induction and case analysis of~$\lessthan$.
  The only interesting scenario is when the network programs are~\AllowChoiceN expressions or sequencing operations.

  First consider the case when
  \[ \AllowOneChoice*{\ell}{\Left}{E_1'} \lessthan \AllowChoice*{\ell}{E_1}{E_3} \]
  \[ \AllowOneChoice*{\ell}{\Right}{E_2'} \lessthan \AllowChoice*{\ell}{E_4}{E_2} \]
  and
  \[ E = \AllowChoice*{\ell}{E_1 \ntwkmerge E_4}{E_3 \ntwkmerge E_2} \]
  Then we have that 
   \[ \AllowOneChoice*{\ell}{\Left}{E_1'} \ntwkmerge \AllowOneChoice*{\ell}{\Right}{E_2'} = \AllowChoice*{\ell}{E_1'}{E_2'} \]
  This suffices because by Lemma~\ref{lem:lessthan-merge}, $E_1' \lessthan E_1 \lessthan E_1 \ntwkmerge E_4$, and $E_2' \lessthan E_2 \lessthan E_3 \ntwkmerge E_2$.
  Now consider the case when
  \[ \AllowChoice*{\ell}{E_{1,1}'}{E_{1,2}'} \lessthan \AllowChoice*{\ell}{E_{1,1}}{E_{1,2}} \]
  \[ \AllowChoice*{\ell}{E_{2,1}'}{E_{2,2}'} \lessthan \AllowChoice*{\ell}{E_{2,1}}{E_{2,2}} \]
  and
  \[ E = \AllowChoice*{\ell}{E_{1,1} \ntwkmerge E_{2,1}}{E_{1,2} \ntwkmerge E_{2,2}} \]
  Then by induction,
  there is some $E_3 \lessthan E_{1,1} \ntwkmerge E_{2,1}$ where $E_{1,1}' \ntwkmerge E_{1,2}' = E_3$,
  and some $E_4 \lessthan E_{1,2} \ntwkmerge E_{2,2}$ where $E_{2,1}' \ntwkmerge E_{2,2}' = E_4$.
  Then the term
  \[ \AllowChoice*{\ell}{E_3}{E_4} \lessthan E \]
  suffices.
  The other \AllowChoiceN cases are analogous to these two.

  For sequencing operations, we note that the rule $E_1 \lessthan V \NtwkSeq E_2$ does not apply because
  $\collapse{V \NtwkSeq E_2} = E_2 \neq V \NtwkSeq E_2$, which violates the assumptions.
  Therefore the only possible scenario is $E_{1,1}' \NtwkSeq E_{1,2}' \lessthan E_{1,1} \NtwkSeq E_{1,2}$ and
  $E_{2,1}' \NtwkSeq E_{2,2}' \lessthan E_{2,1} \NtwkSeq E_{2,2}$,
  where $E = (E_{1,1} \ntwkmerge E_{2,1}) \NtwkSeq (E_{1,2} \ntwkmerge E_{2,2})$.
  Then by induction,
  there is some $E_3 \lessthan E_{1,1} \ntwkmerge E_{2,1}$ where $E_{1,1}' \ntwkmerge E_{1,2}' = E_3$,
  and some $E_4 \lessthan E_{1,2} \ntwkmerge E_{2,2}$ where $E_{2,1}' \ntwkmerge E_{2,2}' = E_4$,
  so the term $E_3 \NtwkSeq E_4$ suffices.
\end{proof}

\begin{lem}[Less-Than Reflects Values]\label{lem:less-than-val}
  If $E_1 \lessthan E_2$ and $\NtwkVal{E_2}$, then $\NtwkVal{E_1}$.
\end{lem}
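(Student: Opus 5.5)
We prove Lemma~\ref{lem:less-than-val} by induction on the derivation of $E_1 \lessthan E_2$, inverting on the last rule used. In each case we compare it against the grammar of network values: $X$, $\NtwkUnit$, $\Ret{v}$, functions, type functions, pairs of values, $\NtwkInl{V}$, $\NtwkInr{V}$, and $\NtwkFold{V}$.

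\emph{Base and non-value cases.}
\begin{itemize}
  \item Rule $\NtwkNone \lessthan E$: we read $\NtwkNone$ as the degenerate ``missing code'' element. It arises only from a merge with a missing branch, and the remaining cases never place it in value position. We treat the lemma as vacuous for it, or we adopt the convention that $\NtwkNone$ is not considered, exactly as the merge lemmas do.
  \item Rules $X \lessthan X$, $\NtwkUnit \lessthan \NtwkUnit$, $X \lessthan \NtwkUnit$, $\NtwkUnit \lessthan X$, and $\Ret{e} \lessthan \Ret{e}$: the left-hand side is immediately a value whenever the right-hand side is. For $\Ret{e}$ the expression is literally the same, so $\val{e}$ transfers.
  \item Function and type-function rules: the left-hand side is a function, hence a value, regardless of its body.
  \item Rules whose right-hand side is never a value: sequencing, application, type application, $\NtwkUnfold$, $\NtwkFst$, $\NtwkSnd$, the case forms, let forms, send, receive, choose, allow-choice, $\AmIInN$, $\NtwkForkN$, and $\NtwkExit$. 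These cases are vacuous. This also covers the special rule $E_1 \lessthan V \NtwkSeq E_2$, since a sequence $V \NtwkSeq E_2$ is not a network value.
\end{itemize}

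\emph{Structural cases.} These are the rules for pairs, $\NtwkInl$, $\NtwkInr$, and $\NtwkFold$. If, for example, $(E_{2,1},E_{2,2})$ is a value, then both components are values. The induction hypothesis applied to the premises $E_{1,i} \lessthan E_{2,i}$ shows that both $E_{1,i}$ are values, so $(E_{1,1},E_{1,2})$ is a value. The unary constructors are handled identically.

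\emph{Main obstacle.} The only real subtlety is confirming that no rule lets a non-value on the left sit below a value on the right. The candidates are the allow-choice rules and the $V \NtwkSeq E_2$ rule. Both have right-hand sides that are not values, so the check succeeds.
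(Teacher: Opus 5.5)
Your proof is correct and follows the same route as the paper. Both argue by induction on the derivation of $E_1 \lessthan E_2$: the $V \NtwkSeq E_2$ and allow-choice rules are dismissed because their right-hand sides are not values, and the value-forming rules are handled homomorphically via the induction hypothesis.

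Your flag on the rule $\NtwkNone \lessthan E$ is more careful than the paper, whose proof passes over it. That case is a real exception rather than vacuous: $\NtwkNone \lessthan \NtwkUnit$ holds directly, and $(\NtwkNone,\NtwkUnit) \lessthan (\NtwkUnit,\NtwkUnit)$ via the pair rule. So the statement genuinely needs the syntactic convention that $\NtwkNone$ occurs only as a missing allow-choice branch, which both your proof and the paper's implicitly rely on.
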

\begin{proof}
  By induction on the relation $E_1 \lessthan E_2$.
  Note that the case when $E_1 \lessthan V \NtwkSeq E_2$ is impossible, because the right-hand side is not a value.
  The other cases are straightforward, as all other rules (except \AllowChoiceN expressions, which are also not values) are homomorphic.
\end{proof}

\begin{lem}[Merging Preserves Steps]\label{lem:merge-pres-step}
  If $E_1 \ntwkstep{l} E_1'$, $E_2 \ntwkstep{l} E_2'$, and $E_1 \ntwkmerge E_2 = E$, then there is some $E'$ and $E''$ such that
  $E' \lessthan E''$, $E \ntwkstep{l} E''$, and $E_1' \ntwkmerge E_2' = E'$.
\end{lem}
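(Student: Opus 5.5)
We argue by induction on the derivation of $E_1 \ntwkstep{l} E_1'$, inverting the derivation of $E_2 \ntwkstep{l} E_2'$ and the definition of $E_1 \ntwkmerge E_2 = E$ in each case. Since merge is defined only on matching top-level constructors, except for the \AllowChoiceN{} cases and the $\Undef$ cases, $E_1$, $E_2$ and $E$ share the same head constructor in almost every case. Lemma~\ref{lem:merge-value} then guarantees that a subterm is a value in $E_1$ iff it is in $E_2$ iff it is in $E$. Consequently both steps must come from the same rule, and the redex sits at the same position of matching evaluation contexts.

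\emph{Congruence case (\ruleref{N-Ctx}).} Write $E_1 = \eta_1[E_{1,0}]$ and $E_2 = \eta_2[E_{2,0}]$. The merge decomposes as $E = \eta[E_{1,0} \ntwkmerge E_{2,0}]$, where $\eta$ is obtained by merging $\eta_1$ and $\eta_2$ pointwise; the value parts of the contexts merge to values by Lemma~\ref{lem:merge-value}. The induction hypothesis yields $E_0' \lessthan E_0''$ with $E_{1,0}' \ntwkmerge E_{2,0}' = E_0'$ and $E_{1,0} \ntwkmerge E_{2,0} \ntwkstep{l} E_0''$. We take $E'' = \eta[E_0'']$ and $E' = \eta[E_0']$; the relation $E' \lessthan E''$ holds because $\lessthan$ is structurally compatible.

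\emph{Axiom cases.} For \ruleref{N-Ret}, \ruleref{N-Send}, \ruleref{N-Recv}, \ruleref{N-Fork} and \ruleref{N-Exit}, merge forces the two sides to be syntactically equal at the redex (for example $\Ret{e} \ntwkmerge \Ret{e}$). Sharing the label $l$ then forces equal reducts, and we take $E' = E'' = E_1'$, using idempotence of merge.

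\emph{Substitution cases.} For \ruleref{N-App}, \ruleref{N-Let}, \ruleref{N-CaseInl} and the similar rules, we take $E''$ to be the corresponding substitution instance of the merged term. We then need that merge commutes with substitution: $\subst{E_{1,b}}{X}{V_1} \ntwkmerge \subst{E_{2,b}}{X}{V_2}$ is defined and lies $\lessthan$ the term $\subst{(E_{1,b} \ntwkmerge E_{2,b})}{X}{V_1 \ntwkmerge V_2}$. I would prove this as an auxiliary lemma by induction on the merge. Substituting a value can trigger collapsing of $\seqfun$ only in projected code, not in raw $\NtwkSeq$, so the two sides agree up to $\lessthan$.

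\emph{\AmIN{} cases.} Both sides take the same branch, since the branch depends only on $L$ and $\rho$.

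\emph{Choice cases.} These are the interesting ones. Suppose $E_1$ is $\AllowOneChoice{\ell}{\Left}{E_{1,1}}$ and $E_2$ is $\AllowOneChoice{\ell}{\Right}{E_{2,2}}$. Then no common label exists, so this case is vacuous. If both sides offer $\Left$, the receive label $\RRecvNtwk{\ell}{\Left}$ selects the left branch on each side. The merged term steps with the same label to $E'' = E_{1,1} \ntwkmerge E_{2,1}$. Here $E_1' \ntwkmerge E_2'$ is exactly this merge, or, when one branch is missing ($\Undef$), it is just the existing side. We take $E' = E''$; in the missing-branch situation we use the $\Undef \ntwkmerge E$ clauses.

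I expect the main obstacle to be the substitution-commutes-with-merge auxiliary lemma. It is the only place where $E' \lessthan E''$ is genuinely weaker than equality, and its interaction with the $\Undef$ cases of merge needs care.
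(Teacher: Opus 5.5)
Your strategy matches the paper's. Its sketch also singles out sequencing, where Lemma~\ref{lem:merge-value} forces both sides to use the same rule, and \AllowChoiceN{}, where the common label forces the same direction, whose branch must then exist on both sides. Your auxiliary lemma (merge commutes with substitution up to $\lessthan$) is a real addition. The paper never discusses the substitution cases, yet they are the only place the $\lessthan$ in the conclusion is needed.

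There is one wrong case: \ruleref{N-Fork} does not belong with the axioms. Merge of two \NtwkForkN{} terms is componentwise, so the continuations can differ; the shared label $\RForkNtwk{L'}{\cdot}$ pins down only the substituted thread task. For example, take $E_i = \NtwkFork{(\alpha,x)}{T}{E_{i,2}}$ with $E_{1,2} = \AllowOneChoice{\ell}{\Left}{A}$ and $E_{2,2} = \AllowOneChoice{\ell}{\Right}{B}$, and let $\sigma = [\alpha \mapsto L', x \mapsto \say{L'}]$. Both sides step with label $\RForkNtwk{L'}{T\sigma}$, but $E$ steps to $\AllowChoice{\ell}{A\sigma}{B\sigma} \neq E_1'$. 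So your witness $E' = E'' = E_1'$ violates both $E \ntwkstep{l} E''$ and $E_1' \ntwkmerge E_2' = E'$.

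The fix is to treat \ruleref{N-Fork} as a substitution case. The same $\sigma$ is applied on both sides, so merge commutes with it exactly, and you can take $E' = E'' = E_{1,2}\sigma \ntwkmerge E_{2,2}\sigma$. Idempotence is needed only for the task: label equality gives $E_{1,1}\sigma = E_{2,1}\sigma$, hence $(E_{1,1} \ntwkmerge E_{2,1})\sigma = E_{1,1}\sigma$, so $E$ steps with the same label $l$.

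A caveat you share with the paper concerns \ruleref{N-Ret}. Your claim that equal labels force equal reducts holds only if local steps are labelled by their reduct, as in the $\RRet{e}{e'}$ labels of Lemma~\ref{lem:ntwk-det}. Figure~\ref{fig:ntwk-semantics} literally gives the label $\iota$, and the local semantics is only required to be confluent. Then $\Ret{e}$ can step to distinct $\Ret{e_a}$ and $\Ret{e_b}$ whose merge is undefined, so the statement itself needs that labelling convention. State it explicitly rather than relying on it silently.
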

\begin{proof}
  The interesting scenarios are for sequencing expressions and~\AllowChoiceN expressions.
  Indeed, when $E_{1,1} \NtwkSeq E_{2,1} \ntwkstep{l} E_{1,1}' \NtwkSeq E_{2,1}$ and $E_{2,1} \NtwkSeq E_{2,2} \ntwkstep{l} E_{2,1}' \NtwkSeq E_{2,2}$,
  we can directly apply induction.
  Otherwise if $E_{2,1} \NtwkSeq E_{2,2} \ntwkstep{\iota} E_{2,2}$ because $\val{E_{2,1}}$, then by Lemma~\ref{lem:merge-value}
  $E_{1,1}$ is a value, and hence this is the only step the left-hand side can take, so the result is immediate.
  The same is symmetrically true if $E_{1,1}$ is a value.

  For~\AllowChoiceN expressions, note that the label~$l$ of each step is identical.
  This means that both $E_1$ and $E_2$ receive the same direction $d$, and because they are required
  to have that case defined, must both have at least this case defined.
\end{proof}


\begin{lem}[Simulating Less-Than is a Subrelation of Less-Than]
  If $E_1 \lessthansim E_2$ then $E_1 \lessthan E_2$.
\end{lem}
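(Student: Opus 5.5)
The plan is to argue by induction on the derivation of $E_1 \lessthansim E_2$, showing in each case that the same rule of $\lessthan$ (Appendix~\ref{sec:less-nonderm-def}) concludes $E_1 \lessthan E_2$. The $\lessthansim$ rules are obtained from the $\lessthan$ rules in three ways. Some premises are weakened from $\lessthansim$ to $\lessthan$. The sequencing-with-value rule is removed. The application and pair rules are split according to whether the left component is a value. So each $\lessthansim$ rule is essentially an instance of a $\lessthan$ rule with possibly stronger premises, and the induction hypothesis lets us downgrade every $\lessthansim$ premise to $\lessthan$.

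We first handle the axioms: $\NtwkNone \lessthansim E$, $X \lessthansim X$, $\NtwkUnit \lessthansim \NtwkUnit$, $X \lessthansim \NtwkUnit$, $\NtwkUnit \lessthansim X$, $\Ret{e} \lessthansim \Ret{e}$, $\RecvFrom{\ell}$, and $\NtwkExit$. Each has an identically-shaped axiom of $\lessthan$.

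Next come the congruence rules whose premises are a mix of $\lessthansim$ and $\lessthan$. These are sequencing, $\NtwkLetN$, type-let, $\FontNtwk{case}$, $\FontNtwk{localCase}$, $\SendN$, $\FontNtwk{choose}$, $\AllowN$, $\AmIInN$, $\NtwkForkN$, (type) functions, type application, $\NtwkFoldN$/$\NtwkUnfoldN$, projections, and injections. For each, apply the induction hypothesis to the $\lessthansim$ premises and keep the $\lessthan$ premises as they are. Then apply the corresponding compatibility rule of $\lessthan$.

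For the two variants each of application and pairs, the side condition $\NtwkVal{E_{1,1}}$ or $\neg\NtwkVal{E_{1,1}}$ is simply discarded. The remaining premises become $\lessthan$ premises by the induction hypothesis, possibly after weakening, and the homomorphic $\lessthan$ rule applies. Since $\lessthansim$ has no sequence-with-value rule, no case arises for it.

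There is no real obstacle. The only care needed is to check that each listed $\lessthansim$ rule really does have a $\lessthan$ counterpart with the same conclusion shape. In particular, the one-branch $\AllowN$ forms covered by the $\lessthan$ rules in Section~\ref{sec:bisim} should be handled the same way, using the partial-branch convention with $\NtwkNone$ together with the axiom $\NtwkNone \lessthan E$. Once that correspondence is confirmed, the induction is routine.
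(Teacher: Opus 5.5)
Your proof is correct. The paper states this lemma without proof, and a rule-by-rule induction on the $\lessthansim$ derivation is exactly the evident intended argument: each $\lessthansim$ rule is an instance of the corresponding $\lessthan$ rule once its $\lessthansim$ premises are downgraded by the induction hypothesis and the value side conditions on applications and pairs are dropped. Your closing remark about one-branch \AllowN{} forms is unnecessary, since those are rules of $\lessthan$ rather than $\lessthansim$ and so generate no induction case; in the appendix they are in any case subsumed by the homomorphic allow rule together with $\NtwkNone \lessthan E$.
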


\begin{lem}[Simulating Less-Than is Reflexive]
  $E \lessthansim E$ for all network programs $E$.
\end{lem}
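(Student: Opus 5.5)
The plan is to prove reflexivity of $\lessthansim$ by structural induction on the network program $E$. Alongside it I will use reflexivity of $\lessthan$, which was established earlier: every rule of $\lessthansim$ asks that certain subterms be related by $\lessthansim$ and the rest by $\lessthan$. For the first group the induction hypothesis supplies $E_i \lessthansim E_i$, and for the second the earlier lemma gives $E_i \lessthan E_i$.

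The cases go as follows.

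\begin{enumerate}
\item \emph{Leaves.} For $X$, $\NtwkUnit$, $\Ret{e}$, $\RecvFrom{\ell}$ and $\NtwkExit$, the corresponding axiom of $\lessthansim$ applies directly.

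\item \emph{Homomorphic cases.} These are sequencing, $\NtwkFold{}$, $\NtwkUnfold{}$, $\NtwkFst{}$, $\NtwkSnd{}$, $\NtwkInl{}$, $\NtwkInr{}$, type application, \FontNtwk{case}, \FontNtwk{localCase}, both let forms, $\SendN$, $\FontNtwk{choose}$, \AllowChoiceN, \AmIInN, \NtwkForkN, and (type) function abstraction. In each, I apply the unique $\lessthansim$ rule for that constructor. The premise in evaluation position comes from the induction hypothesis. The continuation premises, such as the body of a function or the branches of a case, come from reflexivity of $\lessthan$.

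\item \emph{Applications and pairs.} For $E_1~E_2$ and $(E_1,E_2)$ I split on whether $\NtwkVal{E_1}$ holds. If it does not, I use the rule that needs $E_1 \lessthansim E_1$ by induction and $E_2 \lessthan E_2$ by reflexivity of $\lessthan$. If it does, I use the rule that needs both components related by $\lessthansim$; both follow from the induction hypothesis. The side conditions on $\NtwkVal{E_1}$ are the same on both sides of the relation because the two sides are syntactically equal, so the case split is exhaustive.

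\item \emph{One-branch \AllowChoiceN.} The form $\AllowOneChoice{\ell}{d}{E}$ appears in the syntax as the partial form of \AllowChoiceN with a $\bot$ branch. I will treat the missing branch as $\NtwkNone$ and close it with the axiom $\NtwkNone \lessthansim E$ and its $\lessthan$ counterpart. If the formal rules do not match this exactly, I will appeal to the convention that $\lessthan$ is structurally compatible.
\end{enumerate}

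The main obstacle is confirming that the $\lessthansim$ rule set really has a rule for every constructor of $E$, including the partial \AllowChoiceN forms. The rule lists are otherwise routine, and no nontrivial argument is needed beyond this bookkeeping.
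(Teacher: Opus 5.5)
Your proof is correct. The paper states this lemma without proof, and the intended argument is this same routine structural induction: the induction hypothesis discharges the $\lessthansim$ premises in evaluation position, and reflexivity of $\lessthan$ discharges the continuation premises, including the value/non-value split for applications and pairs. Your hedge in the one-branch \AllowChoiceN{} case is unnecessary, because that rule only requires $\lessthan$ on the branches, and $\NtwkNone \lessthan \NtwkNone$ is already an instance of the axiom $\NtwkNone \lessthan E$.
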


\begin{lem}[Simulating Less-Than is Transitive]
  If $E_1 \lessthansim E_2$ and $E_2 \lessthansim E_3$ then $E_1 \lessthansim E_3$.
\end{lem}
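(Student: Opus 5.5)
The proof goes by induction on the derivation of $E_1 \lessthansim E_2$, with a nested case analysis on the last rule of the derivation of $E_2 \lessthansim E_3$. Since both relations are syntax-directed on the shape of the left-hand side, the cases line up almost completely. For each homomorphic constructor ($\NtwkSeq$, application, pairs, $\FontNtwk{case}$, $\FontNtwk{localCase}$, let, send, choose, $\AmIInN$, $\NtwkForkN$, functions and so on), the middle term $E_2$ is forced to have the same head constructor. Its rule for $\lessthansim$ against $E_3$ then has the same shape. We compose componentwise: subterms in evaluation position are related by $\lessthansim$ on both sides, so the induction hypothesis applies to them. Continuation positions are related by $\lessthan$ on at least one side. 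There we combine using the earlier Lemma that $\lessthansim$ is a subrelation of $\lessthan$, together with transitivity of $\lessthan$ (Less-Than is Transitive).

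The base cases are direct. If $E_1 = \NtwkNone$, then $\NtwkNone \lessthansim E_3$ holds by the axiom. For $X$ and $\NtwkUnit$ the only freedom is the pair of rules $X \lessthansim \NtwkUnit$ and $\NtwkUnit \lessthansim X$. Chaining these gives $X \lessthansim X$, $\NtwkUnit \lessthansim \NtwkUnit$, $X \lessthansim \NtwkUnit$ or $\NtwkUnit \lessthansim X$, each of which is an axiom. For $\Ret{e}$, $\RecvFrom{\ell}$ and $\NtwkExit$, both steps are reflexive axioms.

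The main obstacle is the two pairs of rules for application and pairs, which split on whether the left component is a value. Take $E_1 = E_{1,1}~E_{1,2}$ and $E_2 = E_{2,1}~E_{2,2}$. We need the value status of $E_{1,1}$ to agree with that of $E_{2,1}$; otherwise the two derivations use different rules, one relating the arguments by $\lessthansim$ and the other only by $\lessthan$.

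\begin{itemize}
\item \textbf{Agreement of value status.} Lemma~\ref{lem:less-than-val} (via the subrelation lemma) gives that if $E_{2,1}$ is a value then so is $E_{1,1}$. For the converse we would prove a small auxiliary fact by induction on $\lessthansim$: if $E \lessthansim E'$ and $E$ is a value, then $E'$ is a value, with $\NtwkNone$ as the sole exception.
\item \textbf{Where this fact holds.} The sequencing rule $E_1 \lessthan V \NtwkSeq E_2$ is absent from $\lessthansim$. So on the value constructors $\lessthansim$ is homomorphic, up to the $X/\NtwkUnit$ swap, which preserves being a value.
\item \textbf{The $\NtwkNone$ exception.} Here $E_1 \lessthansim E_3$ is immediate by the axiom.
\end{itemize}

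With agreement established, both derivations use the same rule. In the value case we apply the induction hypothesis to both components. In the non-value case we apply it to the function position and use transitivity of $\lessthan$ on the argument. Pairs are handled identically.

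The $\FontNtwk{allow}$-choice and one-branch variants are also homomorphic in $\lessthansim$, with $\lessthan$ premises, so they close by transitivity of $\lessthan$ alone.
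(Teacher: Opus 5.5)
Your skeleton is the natural one: induction on the first derivation, inversion on the second, the induction hypothesis in evaluation positions, and transitivity of $\lessthan$ in continuation positions. However, the base case does not close as you claim.

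The axioms $X \lessthansim \NtwkUnit$ and $\NtwkUnit \lessthansim X$ are schematic in the variable. So for distinct variables $X \neq Y$, both $X \lessthansim \NtwkUnit$ and $\NtwkUnit \lessthansim Y$ are derivable. The only rules with a variable on the left are $X \lessthansim X$ and $X \lessthansim \NtwkUnit$, so $X \lessthansim Y$ is not derivable. Your list of chained outcomes tacitly uses the same variable at both ends.

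This is not a presentational slip. It is a counterexample to the lemma itself for the rules as written, and it lifts to compound terms through any $\lessthan$-premise, e.g.\ $E \NtwkSeq X \lessthansim E \NtwkSeq \NtwkUnit \lessthansim E \NtwkSeq Y$. The same pair also refutes the transitivity of $\lessthan$ that you import for continuation positions: $X \lessthan \NtwkUnit \lessthan Y$, but $X \lessthan Y$ is not derivable.

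So the argument can only be completed after amending both relations. One option is to keep just one direction of the variable/unit axioms; the direction $Z \lessthan \NtwkUnit$ is the one the \textsc{C-App} case of Non-Participant Local Completeness invokes. Another is to relate all variables to one another. Under either amendment your induction goes through unchanged.

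A smaller correction in the application/pair case: you only need $\val{E_{1,1}} \Rightarrow \val{E_{2,1}}$. Your auxiliary fact supplies this, with no exception, since $\NtwkNone$ is not a network value. The converse direction is never used. When $E_{1,1}$ is not a value, the goal only requires $E_{1,2} \lessthan E_{3,2}$, which follows whichever rule the second derivation used. Moreover, the lemma you cite for the converse (Less-Than Reflects Values) does not cover $E_{1,1} = \NtwkNone$, since $\NtwkNone \lessthan V$ holds for every value $V$.
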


\begin{lem}[Location Substitutions Preserve Simulating Less-Than]
  For any location substitution $\sigma$, if $E_1 \lessthansim E_2$ then $E_1[\sigma] \lessthansim E_2[\sigma]$.
\end{lem}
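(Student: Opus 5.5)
The plan is induction on the derivation of $E_1 \lessthansim E_2$, with one case per rule, showing that $E_1[\sigma] \lessthansim E_2[\sigma]$. Two facts are used throughout. First, location substitution commutes with every syntactic constructor of the network language: it touches only location annotations such as $\rho$ in $\SendTo{E}{\rho}$, $\ell$ in $\RecvFrom{\ell}$, the $\rho$ in $\AmIIn{\rho}{E_1}{E_2}$, type arguments $t$, and local expressions inside $\Ret{e}$. Second, the earlier lemma that location substitutions preserve $\lessthan$ covers every premise of $\lessthansim$ that is stated with the coarser relation $\lessthan$.

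The axiom cases are immediate because their instances are closed under substitution. These are $\NtwkNone \lessthansim E$, $X \lessthansim X$, $\NtwkUnit \lessthansim \NtwkUnit$, $X \lessthansim \NtwkUnit$, $\NtwkUnit \lessthansim X$, $\Ret{e} \lessthansim \Ret{e}$, $\RecvFrom{\ell} \lessthansim \RecvFrom{\ell}$ and $\NtwkExit \lessthansim \NtwkExit$. In particular $\Ret{e}[\sigma] = \Ret{e[\sigma]}$ on both sides, so syntactic identity is preserved.

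The congruence rules follow one pattern. For premises written with $\lessthansim$, such as the head of a sequence, a let or case scrutinee, a send argument, or the argument of $\NtwkFold{\cdot}$, we apply the induction hypothesis. For premises written with $\lessthan$, such as continuations, function bodies, branches of $\AllowChoiceN$, $\AmIInN$ and $\NtwkForkN$, we apply the preservation lemma for $\lessthan$. We then reassemble using the same rule. For binders ($\NtwkTFun{F}{\alpha}{E}$, $\NtwkLetIn{\alpha \knd \kappa}{E_1}{E_2}$, and $\NtwkFork{(\alpha,x)}{E_1}{E_2}$) we work up to $\alpha$-equivalence and assume, as usual, that bound location variables are chosen fresh for $\sigma$. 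Then substitution passes under the binder, and the syntactic annotations that must match across the two sides (the $\rho$, $\ell$, $d$, $t$ shared by both sides of a rule) remain equal after substitution.

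The only step needing care is the pair of rules for applications and pairs, which are split on whether the left component is a value. Their side conditions $\NtwkVal{E_{1,1}}$ and $\neg\NtwkVal{E_{1,1}}$ must survive substitution. For this I will prove a small auxiliary lemma by inspection of the value grammar: $\NtwkVal{E}$ holds if and only if $\NtwkVal{E[\sigma]}$ holds, for any location substitution $\sigma$. The forward direction is clear, since values are built homomorphically and $\Ret{v}[\sigma] = \Ret{v[\sigma]}$. This uses the local-language assumption that substitution of locations maps values to values and non-values to non-values, which the paper already relies on in the proof of Lemma~\ref{lem:loc-sub-pres-typ} (the \ruleref{S-Done} case).

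The converse holds because no network constructor changes its value status under relabeling of locations. $\AmIInN$ is never a value and is not evaluated away by substitution, and similarly for sends and forks. With this lemma in hand, each of these cases goes through as in the generic congruence pattern. I expect this value-reflection point to be the main obstacle, since it is the one place where substitution must interact with a semantic side condition rather than pure syntax.
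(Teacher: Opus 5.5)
Your proposal is correct and matches the route the paper leaves implicit. The paper states this lemma without proof, and the intended argument is your induction on the derivation of $E_1 \lessthansim E_2$: the induction hypothesis handles the $\lessthansim$-premises, the companion lemma for $\lessthan$ handles the $\lessthan$-premises, and preservation and reflection of network values under location substitution handle the value side conditions in the application and pair rules. That value step relies on the same local-language assumption (substitution neither creates nor destroys local values) that the paper itself invokes in the \ruleref{S-Done} case of Lemma~\ref{lem:loc-sub-pres-typ}.
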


\begin{lem}[Type Substitutions Preserve Simulating Less-Than]
  For any type substitution $\sigma$, if $E_1 \lessthansim E_2$ then $E_1[\sigma] \lessthansim E_2[\sigma]$.
\end{lem}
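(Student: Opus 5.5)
We prove the statement by induction on the derivation of $E_1 \lessthansim E_2$, applying the substitution $\sigma$ homomorphically to both sides. For each rule of $\lessthansim$ we show that the substituted premises are again instances of the same rule. Any premise that uses the plain relation $\lessthan$, as in continuations, function bodies, \AllowChoiceN\ branches, \AmIN\ branches and \NtwkForkN\ components, is handled by the earlier lemma that type substitutions preserve $\lessthan$. Premises that use $\lessthansim$ are handled by the inductive hypothesis.

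The axiom cases need a short check each.
\begin{itemize}
\item $\NtwkNone \lessthansim E$ holds for any $E$, so it also holds for $E[\sigma]$.
\item $X \lessthansim X$, $\NtwkUnit \lessthansim \NtwkUnit$, $X \lessthansim \NtwkUnit$, $\NtwkUnit \lessthansim X$, $\RecvFrom{\ell} \lessthansim \RecvFrom{\ell}$ and $\NtwkExit \lessthansim \NtwkExit$ are unchanged, because a type substitution does not touch program variables or locations.
\item $\Ret{e} \lessthansim \Ret{e}$ becomes $\Ret{e[\sigma]} \lessthansim \Ret{e[\sigma]}$, which is again an instance of the axiom.
\end{itemize}
The remaining congruence rules carry their annotations ($t$ in $E~t$, $\rho$ in \SendN, $d$ and $\ell$ in \FontNtwk{choose}, and the bound variables). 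These annotations are substituted identically on both sides. We assume capture-avoiding substitution, so binders under \NtwkTFunN, \NtwkLetN\ and \NtwkForkN\ can be renamed away from $\sigma$.

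The main obstacle is the pair of rules for applications and for pairs, whose side conditions are $\NtwkVal{E_{1,1}}$ or $\neg\NtwkVal{E_{1,1}}$. We need a small auxiliary fact: $E$ is a network value if and only if $E[\sigma]$ is a value, for any type substitution $\sigma$.
\begin{itemize}
\item The forward direction is immediate, since the value grammar is closed under substitution.
\item For the converse, a type substitution changes only type annotations, and those only inside $\Ret{e}$ and $E~t$. The head constructor of a network term is therefore preserved.
\item The delicate case is $\Ret{e}$: we must rely on the local-language assumption that being a value is stable under type substitution. This is the same fact used in the \ruleref{S-Done} case of Lemma~\ref{lem:loc-sub-pres-typ}.
\end{itemize}
With this fact, the substituted left component satisfies the same side condition, so the same rule applies and the induction goes through.
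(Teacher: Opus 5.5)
Your proof is correct and follows the only route available, since the paper states this lemma without proof. The natural argument is your induction on the derivation of $E_1 \lessthansim E_2$: use the $\lessthan$ version of the lemma for the continuation premises, and use the fact that type substitution neither creates nor destroys values for the application and pair side conditions (the paper invokes the same fact in the \textsc{S-Done} case of Lemma~\ref{lem:loc-sub-pres-typ}). One small correction: the forward direction of that value fact is not purely a matter of the network value grammar either, because the $\Ret{v}$ case already needs the same local-language stability assumption.
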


\begin{lem}[Local Substitutions Preserve Simulating Less-Than]
  For any local substitution $\sigma$, if $E_1 \lessthansim E_2$ then $E_1[\sigma] \lessthansim E_2[\sigma]$.
\end{lem}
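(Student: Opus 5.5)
The statement is the last lemma shown: local substitutions preserve the simulating less-than relation. I would prove it by induction on the derivation of $E_1 \lessthansim E_2$.

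\textbf{Key ingredient.} The definition of $\lessthansim$ refers to $\lessthan$ in continuation positions. The earlier lemma that local substitutions preserve $\lessthan$ will discharge every such premise. Only the premises stated with $\lessthansim$ need the inductive hypothesis.

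\textbf{Routine cases.}
\begin{itemize}
\item \emph{Axioms.} These cover $\NtwkNone$, $X$, $\NtwkUnit$, $X$ versus $\NtwkUnit$, $\Ret{e}$, $\RecvFrom{\ell}$ and $\NtwkExit$. Local substitution acts on $\Ret{e}$ as $\Ret{e[\sigma]}$ on both sides, leaves the others unchanged, and sends $\NtwkNone$ to $\NtwkNone$. So the same axiom applies again.
\item \emph{Structural rules.} These cover sequencing, let, case, localCase, send, fold and similar forms. Apply the inductive hypothesis to the $\lessthansim$ premises and the $\lessthan$ lemma to the rest, then reassemble with the same rule.
\item \emph{Binders.} Local lets and localCase branches bind $x$, and fork binds $(\alpha,x)$. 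By the usual capture-avoiding convention, $\sigma$ is pushed under the binder with the bound variable removed. This is again a local substitution, so the inductive hypothesis or the $\lessthan$ lemma applies to it directly.
\end{itemize}

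\textbf{Main obstacle.} The only delicate cases are applications and pairs, whose rules split on whether $E_{1,1}$ is a value. I need that the value predicate is invariant under local substitution: $\NtwkVal{E}$ iff $\NtwkVal{E[\sigma]}$.

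A local substitution replaces local variables only inside embedded local expressions. These appear in $\Ret{e}$, in send arguments and in type-level positions; it never replaces choreographic variables $X$.

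\begin{itemize}
\item \emph{Forward direction.} $\Ret{v}$ becomes $\Ret{v[\sigma]}$. This is a value provided substituting values for local variables preserves values, which I take from the standing assumptions on the local language: substitutions of values into values are values. Every other network value form is preserved homomorphically.
\item \emph{Reverse direction.} If $\Ret{e[\sigma]}$ is a value but $e$ is not, then $e$ must be a local variable. Here I would rely on local variables counting as local values. Alternatively, I would restrict to substitutions by values, which is the only way local substitution is used operationally (rules \ruleref{N-Let} and \ruleref{N-LocalCaseInl} substitute values).
\end{itemize}

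With this invariance, the side condition $\neg\NtwkVal{E_{1,1}}$ or $\NtwkVal{E_{1,1}}$ transfers to $E_{1,1}[\sigma]$. The same rule then applies after using the inductive hypothesis on the $\lessthansim$ premises and the $\lessthan$ lemma on the others. This finishes the induction.
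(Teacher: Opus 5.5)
Your plan is the natural one, and the paper states this lemma without proof. You induct on the derivation of $E_1 \lessthansim E_2$, use the $\lessthan$ lemma for continuation premises and the induction hypothesis for $\lessthansim$ premises, and push $\sigma$ under binders. You are also right that the application and pair rules are the only delicate cases. Your handling of them, however, has a genuine gap.

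Only one direction of value-invariance is needed. Suppose $E_{1,1}$ is a value but $E_{1,1}[\sigma]$ is not. Then switch to the non-value rule: its second premise $E_{1,2}[\sigma] \lessthan E_{2,2}[\sigma]$ follows from the induction hypothesis via the subrelation lemma. So your forward direction is superfluous, and in any case it rests on an assumption the paper does not list. The indispensable direction is reflection. If $\neg\NtwkVal{E_{1,1}}$, the derivation gives only $E_{1,2} \lessthan E_{2,2}$, which cannot be upgraded to $\lessthansim$. You therefore need $\neg\NtwkVal{E_{1,1}[\sigma]}$, i.e.\ substitution must never turn a non-value local expression into a local value.

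This reflection property does not follow from the paper's assumptions on the local language, and your argument for it fails in two ways.
\begin{itemize}
\item Your claim that $e$ must then be a local variable is false in general. In the System~F example, if variables are not values, $\LocalInl{x}$ is not a value but $\LocalInl{5}$ is.
\item Restricting to value substitutions does not repair this; it is exactly the failure mode. Suppose local variables are not values. Then $(\Ret{x},\Ret{y}) \lessthansim (\Ret{x}, \NtwkUnit \NtwkSeq \Ret{y})$ holds by the non-value pair rule, since $\Ret{y} \lessthan \NtwkUnit \NtwkSeq \Ret{y}$. Under the value substitution $[x \mapsto 5]$, the head $\Ret{5}$ is a value. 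The only applicable rule then demands $\Ret{y} \lessthansim \NtwkUnit \NtwkSeq \Ret{y}$, which no rule of $\lessthansim$ provides.
\end{itemize}
Hence the lemma needs an extra hypothesis that the paper leaves implicit: local substitution reflects local values. This holds, for example, when variables count as values and values are syntactic as in System~F. State that hypothesis explicitly, use it only in the non-value subcase, and drop the value-substitution alternative; with those changes your induction goes through.
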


\begin{lem}[Substitutions Preserve Simulating Less-Than]
  For any pair $\sigma_1,\sigma_2$ of variable substitutions such that
  $\sigma_1(X) \lessthan \sigma_2(X)$ for all $X$,
  if $E_1 \lessthansim E_2$, then $E_1[\sigma] \lessthan E_2[\sigma]$.
\end{lem}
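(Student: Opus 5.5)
The plan is to reduce this statement to results already stated for the plain less-than relation, rather than to do a fresh induction over the definition of $\lessthansim$. The conclusion asks only for $E_1[\sigma_1] \lessthan E_2[\sigma_2]$ (reading the $\sigma$ on each side as the corresponding $\sigma_i$), not for a simulating relation after substitution. That matters: a substitution can turn a non-value in evaluation position into a value, or the reverse. Then the case split in the $\lessthansim$ rules for applications and pairs, which depends on whether the left argument is a value, would no longer match. So the weaker conclusion is exactly what makes a direct argument possible.

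Concretely, I would proceed in two steps. First, from the hypothesis $E_1 \lessthansim E_2$, apply the lemma that the simulating less-than relation is a subrelation of less-than to obtain $E_1 \lessthan E_2$. Second, apply the earlier lemma that substitutions preserve less-than, using the hypothesis $\sigma_1(X) \lessthan \sigma_2(X)$ for every $X$. This yields $E_1[\sigma_1] \lessthan E_2[\sigma_2]$, which is the desired conclusion.

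If one wanted a self-contained argument, I would instead induct on the derivation of $E_1 \lessthansim E_2$. The variable cases $X \lessthansim X$, $X \lessthansim \NtwkUnit$ and $\NtwkUnit \lessthansim X$ are handled by the hypothesis on $\sigma_1,\sigma_2$ together with the corresponding base rules of $\lessthan$. In each congruence case, premises stated with $\lessthan$ would be discharged by the plain substitution lemma, and premises stated with $\lessthansim$ by the induction hypothesis; the conclusion is then assembled with the homomorphic rule of $\lessthan$.

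I expect the only real subtlety to be the variable cases, specifically $X \lessthansim \NtwkUnit$, where the right-hand side is unaffected by substitution. Here $\sigma_1(X) \lessthan \NtwkUnit$ is needed but not obviously supplied by the hypothesis. The reduction through the subrelation lemma avoids this issue, since it is already absorbed into the proof of the plain substitution lemma, so I would present that short two-step argument as the proof.
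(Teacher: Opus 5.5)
You identified exactly the right case in your last paragraph, but your way of disposing of it does not work. Your two-step reduction is valid as a derivation from the paper's \emph{stated} lemmas. The paper proves neither this lemma nor the plain ``Substitutions Preserve Less-Than'', and the two have the same conclusion. The problem is that the case $X \lessthansim \NtwkUnit$ is not absorbed anywhere. $X \lessthan \NtwkUnit$ and $\NtwkUnit \lessthan X$ are axioms of $\lessthan$ as well, so the plain lemma fails on the very same instance, and appealing to it merely relocates the failure.

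Concretely, let $E_1 = X$, $E_2 = \NtwkUnit$, and $\sigma_1 = \sigma_2 = [X \mapsto \Ret{v}]$ for a local value $v$. Then $\sigma_1(Y) \lessthan \sigma_2(Y)$ for all $Y$ by reflexivity, and $E_1 \lessthansim E_2$ holds by the axiom $X \lessthansim \NtwkUnit$. But $E_1[\sigma_1] = \Ret{v}$ and $E_2[\sigma_2] = \NtwkUnit$, and $\Ret{v} \lessthan \NtwkUnit$ is not derivable: every rule of $\lessthan$ with $\NtwkUnit$ on the right has $\NtwkNone$, $\NtwkUnit$, or a variable on the left. The axiom $\NtwkUnit \lessthansim X$ with $\sigma_2(X) = \Ret{v}$ fails symmetrically. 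Since $\Ret{v}$ is a value, this also refutes the corollary stated right after the lemma, and its plain-$\lessthan$ counterpart.

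For non-value substitutions there is a second, independent failure. The rule $E_1 \lessthan V \NtwkSeq E_2$ with $V = X$ (variables are network values) can sit in a continuation position, e.g.\ $\RecvFrom{\ell} \NtwkSeq \NtwkUnit \lessthansim \RecvFrom{\ell} \NtwkSeq (X \NtwkSeq \NtwkUnit)$. Substituting $X \mapsto \RecvFrom{\ell'}$ on both sides breaks it. So the statement is false as written, and neither your reduction nor your induction can establish it.

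What is missing is a strengthened hypothesis. Requiring $\sigma_2$ to map variables to values repairs the sequencing rule. The mixed axioms need an extra condition, for example:
\begin{itemize}
\item $\sigma_1(X) \lessthan \NtwkUnit$ whenever the derivation uses $X \lessthansim \NtwkUnit$ or $X \lessthan \NtwkUnit$;
\item $\NtwkUnit \lessthan \sigma_2(X)$ in the symmetric case.
\end{itemize}
This is presumably what typing is meant to supply in the paper's uses. The mixed axioms exist because non-owners project to $\NtwkUnit$ or a variable (Projection of Non-Owners), and a value whose type does not mention $L$ projects to $\NtwkUnit$ at $L$ (Projection of Non-Participant Values).

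Under such a hypothesis, the self-contained induction in your third paragraph is the right proof. The base cases are discharged by the new assumption, and the $\lessthan$-premises by an identically strengthened version of the plain lemma. The bare two-step reduction only works once the plain lemma itself is repaired in the same way.
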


\begin{cor}
  If $E_1 \lessthansim E_2$ and $V_1 \lessthan V_2$, then $\subst{E_1}{X}{V_1} \lessthan \subst{E_2}{X}{V_2}$.
\end{cor}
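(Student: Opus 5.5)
The statement is the final corollary: if $E_1 \lessthansim E_2$ and $V_1 \lessthan V_2$, then $\subst{E_1}{X}{V_1} \lessthan \subst{E_2}{X}{V_2}$. The plan is to derive it from two results stated just before it, namely ``Substitutions Preserve Simulating Less-Than'' and the fact that $\lessthansim$ is a subrelation of $\lessthan$.

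\emph{Step 1: set up the substitutions.} Define two variable substitutions $\sigma_1 = [X \mapsto V_1]$ and $\sigma_2 = [X \mapsto V_2]$. Both act as the identity on every other variable $Y \neq X$. The pointwise hypothesis $\sigma_1(Y) \lessthan \sigma_2(Y)$ for all $Y$ then holds in both cases:
\begin{itemize}
  \item for $Y = X$, it is exactly the assumption $V_1 \lessthan V_2$;
  \item for $Y \neq X$, it reduces to $Y \lessthan Y$, which is the axiom $X \lessthan X$ (or reflexivity of $\lessthan$).
\end{itemize}

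\emph{Step 2: apply the preservation lemma.} Instantiate ``Substitutions Preserve Simulating Less-Than'' with $E_1 \lessthansim E_2$. That lemma concludes with $\lessthan$, not $\lessthansim$, which is precisely the form we need: $\subst{E_1}{X}{V_1} \lessthan \subst{E_2}{X}{V_2}$.

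\emph{Fallback.} Suppose one prefers not to rely on the $\lessthansim$ lemma. Then first weaken $E_1 \lessthansim E_2$ to $E_1 \lessthan E_2$ using the subrelation lemma. Next apply ``Substitutions Preserve Less-Than'', or equivalently its corollary for a single variable, with the same $\sigma_1, \sigma_2$. This gives the identical conclusion.

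\emph{Expected obstacle.} Both routes are essentially immediate. The only real subtlety lies upstream, inside the substitution lemma itself. Substituting values can create new patterns such as $V \NtwkSeq E$, and it changes which rule of $\lessthansim$ applies to an application or pair, since that choice depends on whether the left component is a value. This is why the lemma only promises $\lessthan$ in its conclusion. For this corollary we simply take that lemma as given, so no further work is needed.
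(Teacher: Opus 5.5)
Your reduction is exactly the one the paper intends. The corollary is stated without proof right after ``Substitutions Preserve Simulating Less-Than'' and is its instance at $\sigma_1 = [X \mapsto V_1]$, $\sigma_2 = [X \mapsto V_2]$. The pointwise hypothesis is discharged by $V_1 \lessthan V_2$ at $X$ and by $Y \lessthan Y$ elsewhere, as you say.

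\textbf{The gap.} Neither your argument nor the paper's actually closes, because the statement is false for the relations as the appendix defines them. Hence the lemma you take as given is false too. The simulating relation contains the axiom $X \lessthansim \NtwkUnit$. Take $E_1 = X$, $E_2 = \NtwkUnit$, and $V_1 = V_2 = \Ret{v}$ for any local value $v$; then $V_1 \lessthan V_2$ holds by the rule $\Ret{e} \lessthan \Ret{e}$. We get $\subst{E_1}{X}{V_1} = \Ret{v}$ and $\subst{E_2}{X}{V_2} = \NtwkUnit$, but $\Ret{v} \lessthan \NtwkUnit$ is not derivable:
\begin{itemize}
\item the only rules with $\NtwkUnit$ on the right have $\NtwkUnit$, a variable, or $\NtwkNone$ on the left;
\item the sequencing rule needs a sequence on the right;
\item closing under transitivity adds nothing new below $\NtwkUnit$.
\end{itemize}
The symmetric axiom $\NtwkUnit \lessthansim X$ fails the same way, since $\NtwkUnit \lessthan \Ret{v}$ is not derivable either.

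So your ``expected obstacle'' paragraph looks in the wrong place. Substituting values for a variable preserves value/non-value status, so $V \NtwkSeq E$ patterns and the value test on applications and pairs are harmless. What breaks is that substitution does not respect the variable/unit axioms.

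\textbf{What a true version needs.} You need either a change of definition or an extra hypothesis.
\begin{itemize}
\item Drop $X \lessthansim \NtwkUnit$ and $\NtwkUnit \lessthansim X$, and their $\lessthan$ counterparts. The underlying lemma then becomes a routine induction, and your two-line instantiation works verbatim. However, the places where the paper uses $Z \lessthan \NtwkUnit$ need a different argument; an example is the \textsc{C-App} case of Non-Participant Local Completeness.
\item Keep the axioms and add a side condition. Whenever the derivation of $E_1 \lessthansim E_2$ relates $X$ to $\NtwkUnit$, in either direction, the corresponding substituted value must be $\NtwkUnit$ or a variable. This is plausibly what holds in the paper's uses, where a location not owning the data projects the substituted value to $\NtwkUnit$ (Projection of Non-Participant Values).
\end{itemize}
Either way, the corollary cannot simply be read off the lemma as stated.
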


\begin{lem}[Simulating Less-Than Preserves and Reflects Values]\label{lem:sim-less-than-val}
  If $E_1 \lessthansim E_2$, then $\NtwkVal{E_1} \Leftrightarrow \NtwkVal{E_2}$.
\end{lem}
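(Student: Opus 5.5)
The plan is to prove both directions at once by induction on the derivation of $E_1 \lessthansim E_2$, with a case analysis on the last rule used. The definition of network values is what drives the argument: a program is a value exactly when its head constructor is $X$, $\NtwkUnit$, $\Ret{v}$, a (type) function, or a pair, $\NtwkInlN$, $\NtwkInrN$ or $\NtwkFoldN$ whose components are all values. Every rule of $\lessthansim$ except $\NtwkNone \lessthansim E$ relates programs with the same head constructor, apart from the $X$/$\NtwkUnit$ swaps. Crucially, $\lessthansim$ omits the rule $E_1 \lessthan V \NtwkSeq E_2$, which is the one rule that could turn a non-value on the right into a value on the left.

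The cases split into three groups.

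\emph{Leaf rules.} For $X \lessthansim X$, $\NtwkUnit \lessthansim \NtwkUnit$, $X \lessthansim \NtwkUnit$ and $\NtwkUnit \lessthansim X$, both sides are values. For $\Ret{e} \lessthansim \Ret{e}$, the two sides are syntactically equal, so they are values together. For $\RecvFrom{\ell} \lessthansim \RecvFrom{\ell}$ and $\NtwkExit \lessthansim \NtwkExit$, both sides are non-values.

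\emph{Binder-valued rules.} For functions and type functions, both sides are values regardless of their bodies. This matters because these rules only relate the bodies by $\lessthan$, so no inductive hypothesis is needed there.

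\emph{Compound constructors.} For sequencing, application, projections, $\FontNtwk{case}$, $\FontNtwk{localCase}$, $\LetN$, $\SendN$, $\ChoiceN$/$\AllowN$, $\AmIN$ and $\NtwkForkN$, the shared head is never a value constructor, so both sides are non-values. For the value-forming constructors $\NtwkFoldN$, $\NtwkInlN$ and $\NtwkInrN$, the subterms are related by $\lessthansim$, so the inductive hypothesis gives the equivalence directly.

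The pair rules need the most care, and I expect them to be the main obstacle. If the left component is not a value, the rule requires $E_{1,1} \lessthansim E_{2,1}$, so the inductive hypothesis makes $E_{2,1}$ a non-value as well, and neither pair is a value. If the left component is a value, the rule requires the right components to be related by $\lessthansim$, not merely $\lessthan$. The inductive hypothesis then applies to both components, and the two pairs are values together.

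The remaining case is $\NtwkNone \lessthansim E$. Here $\NtwkNone$ stands for "undefined" and is not a network program. I would handle it as a side condition: either restrict the lemma to defined programs, or note that $\NtwkNone$ never arises from a defined projection, which is how the lemma is used.
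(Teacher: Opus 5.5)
Your proposal is correct and takes the approach the paper would take. The paper states this lemma without proof, but its companion lemma that $\lessthan$ reflects values is proved by exactly this induction on the relation. Your two key observations are what make the two-sided version go through: $\lessthansim$ drops the $E_1 \lessthan V \NtwkSeq E_2$ rule, and the pair rules demand $\lessthansim$ on the right component precisely when the left component is a value.

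Your side condition for $\NtwkNone \lessthansim E$ is also appropriate. $\NtwkNone$ occurs only as a missing \AllowChoiceN branch, and those branches are related by $\lessthan$ rather than $\lessthansim$, so the induction never reaches them.
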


\begin{lem}[Simulating Less-Than is Reachable from Less-Than]\label{lem:less-than-reach}
  If $E_1 \lessthan E_2$ then there is some $E_2'$ such that
  $E_1 \lessthansim E_2'$ and $L \triangleright E_2 \ntwksteps{\iota} E_2'$ for any location $L$.
  That is, $E_2$ can reach $E_2'$ through a series of internal steps.
\end{lem}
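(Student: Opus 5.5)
The proof goes by induction on the derivation of $E_1 \lessthan E_2$. At each node I construct $E_2'$ together with a sequence of internal steps $L \triangleright E_2 \ntwksteps{\iota} E_2'$, and I check that $E_1 \lessthansim E_2'$. The only thing used about $L$ is that the steps involved are $\iota$-steps by \ruleref{N-Seq} and \ruleref{N-Ctx}, which do not depend on $L$, so the construction is uniform in $L$.

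The base cases $\NtwkNone \lessthan E$, $X \lessthan X$, $\NtwkUnit \lessthan \NtwkUnit$, $X \lessthan \NtwkUnit$, $\NtwkUnit \lessthan X$, $\Ret{e} \lessthan \Ret{e}$, $\RecvFrom{\ell} \lessthan \RecvFrom{\ell}$ and $\NtwkExit \lessthan \NtwkExit$ take zero steps, since each has a verbatim counterpart in $\lessthansim$. The non-evaluation-position constructors also take zero steps: function and type-function bodies, \AllowChoiceN, \AmIInN, \NtwkForkN and \ChooseFor{d}{\ell}{\cdot} bodies. In these cases the $\lessthansim$ rule only asks for $\lessthan$ on the subterms, and that is exactly the hypothesis. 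This also covers the two \AllowChoiceN rules of $\lessthan$; I would read them as giving $\lessthansim$ through the analogous clause, using $\NtwkNone \lessthan E$ for the missing branch.

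The key case is the rule $E_1 \lessthan V \NtwkSeq E_2$ with $\NtwkVal{V}$. Here I take one \ruleref{N-Seq} step $V \NtwkSeq E_2 \ntwkstep{\iota} E_2$. Then I apply the induction hypothesis to $E_1 \lessthan E_2$ to obtain $E_2 \ntwksteps{\iota} E_2'$ with $E_1 \lessthansim E_2'$, and I concatenate the two step sequences.

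The evaluation-position congruences are handled next. Take $\NtwkFold{}$, $\NtwkUnfold{}$, $\NtwkFst{}$, $\NtwkSnd{}$, $\NtwkInl{}$, $\NtwkInr{}$, $E\,t$, \SendN, and the scrutinee or head of \FontNtwk{case}, \FontNtwk{localCase}, let and sequencing. For each, I apply the induction hypothesis to the subterm in evaluation position and lift its steps through \ruleref{N-Ctx} with the matching evaluation context. The continuations stay related by $\lessthan$, which is all that $\lessthansim$ demands of them.

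Applications $E_{1,1}\,E_{1,2} \lessthan E_{2,1}\,E_{2,2}$, and pairs, are where I expect the work. First I apply the induction hypothesis to the function part, giving $E_{2,1} \ntwksteps{\iota} E_{2,1}'$ with $E_{1,1} \lessthansim E_{2,1}'$, and I lift this through the context $\eta~E_{2,2}$. If $E_{1,1}$ is not a value, I stop and use the first application rule of $\lessthansim$. If $E_{1,1}$ is a value, Lemma~\ref{lem:sim-less-than-val} shows $E_{2,1}'$ is a value too, so $V~\eta$ is an evaluation context. I then apply the induction hypothesis to the argument, lift those steps through $E_{2,1}'~\eta$, and use the second rule.

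The main obstacle is making sure the induction hypothesis applied to the argument $E_{1,2} \lessthan E_{2,2}$ is not disturbed by the steps already taken in the function position. It is not, because those steps never touch $E_{2,2}$. The real subtlety is that no step of $E_2$ can be taken in a position that $E_1$ would not also evaluate. This holds because I only ever unfold a padding value $V \NtwkSeq$ or step inside an evaluation context. If well-foundedness is a concern, the induction can be taken on the derivation size, since each step strictly consumes a padding node.
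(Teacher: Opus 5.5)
Your proposal is correct and follows the paper's proof. Both argue by induction on the $\lessthan$ derivation: one \textsc{N-Seq} step followed by the induction hypothesis for the padding rule $E_1 \lessthan V \NtwkSeq E_2$, the induction hypothesis lifted through \textsc{N-Ctx} for single-head constructs, and a left-then-right treatment of applications and pairs split on value-hood. Your appeal to Lemma~\ref{lem:sim-less-than-val} just makes explicit the case split the paper does on the result of the left-hand induction hypothesis, and your closing remark about induction on derivation size is unnecessary, since structural induction on the derivation is already well-founded.
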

\begin{proof}
  By induction on the definition of $\lessthan$.
  For the case when $E_1 \lessthan V \NtwkSeq E_2$,
  we can first step $V \NtwkSeq E_2 \ntwkstep{\iota} E_2$, and then
  by induction we can step $E_2 \ntwksteps{\iota} E_2'$, where $E_1 \lessthansim E_2'$, which is satisfactory.
  For those cases with a single head in the expression, apply the inductive hypothesis to the head of the expression, which is satisfactory.
  Lastly, for those cases with multiple evaluation positions (function applications and pairs),
  first apply the inductive hypothesis to the left expression.
  If it yields a non-value expression, this should suffice.
  Otherwise if it yields a value, also apply the inductive hypothesis to the right expression.
\end{proof}

\begin{lem}[Lifting Property]\label{thm:lifting-sim}
  If $E_1 \ntwkstep{R} E_1'$ and $E_1 \lessthansim E_2$, then there is some $E_2'$ such that
  $E_1' \lessthan E_2'$, and $E_2 \ntwkstep{R} E_2'$.
  That is, the following diagram holds:
  \begin{center}
    \begin{tikzpicture}
      \node (E2) {$E_2$};
      \node[below=3 em of E2] (E1) {$E_1$};

      \node[right=10 em of E2] (E2') {$E_2'$};
      \node (E1') at (E2' |- E1) {$E_1'$};

      \draw[-implies,double equal sign distance] (E1.east) -- (E1'.west) node[label,midway,above]{$R$};
      \draw[-implies,double equal sign distance,dashed] (E2.east) -- (E2'.west) node[label,midway,above]{$R$};

      \draw[-] (E2.south) -- (E1.north) node[label,midway,left]{$\lessthansimabove$};
      \draw[-,dashed] (E2'.south) -- (E1'.north) node[label,midway,left]{$\lessthanabove$};
    \end{tikzpicture}
  \end{center}
\end{lem}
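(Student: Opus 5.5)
We prove the Lifting Property by induction on the derivation of the step $L \triangleright E_1 \ntwkstep{R} E_1'$ (writing $R$ for the label). Within each case we invert the derivation of $E_1 \lessthansim E_2$.

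The simulating relation is built for exactly this inversion. It has no rule $E_1 \lessthan V \NtwkSeq E_2$, and it requires $\lessthansim$ in every evaluation position. So whenever $E_1$ has a given head constructor, $E_2$ has the same constructor, apart from three exceptions: $\NtwkNone \lessthansim E$, the $X$/$\NtwkUnit$ pairs, and the \AllowChoiceN{} rules.
\begin{itemize}
\item $\NtwkNone$ and the $X$/$\NtwkUnit$ pairs never step, so these cases are vacuous.
\item For \AllowChoiceN{}, $E_1$ is either the one-branch or the two-branch form; the one-branch-versus-two-branch situation comes from the $\lessthan$-rules embedded in $\lessthansim$.
\end{itemize}

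\textbf{Congruence case (\ruleref{N-Ctx}).} Here $E_1 = \eta[E_{11}]$ with $E_{11} \ntwkstep{R} E_{11}'$. Inversion gives $E_2 = \eta'[E_{21}]$ with $E_{11} \lessthansim E_{21}$ at the hole. Every other component of $\eta$ is related to the matching component of $\eta'$ either by $\lessthan$ (continuations) or by $\lessthansim$ (values to the left of the hole).

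The application and pair rules need care. If the left component is a non-value that steps, the right component is only related by $\lessthan$, which is fine because it is a continuation. If the left component is a value and the step is in the right component, the relevant $\lessthansim$ rule gives $\lessthansim$ on both sides, so the induction hypothesis applies to the right component. Lemma~\ref{lem:sim-less-than-val} keeps the value/non-value status of the two left components in sync, so both sides use the same evaluation context.

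The induction hypothesis yields $E_{21} \ntwkstep{R} E_{21}'$ with $E_{11}' \lessthan E_{21}'$. Compatibility of $\lessthan$, together with the fact that $\lessthansim$ is a subrelation of $\lessthan$, then gives $\eta[E_{11}'] \lessthan \eta'[E_{21}']$.

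\textbf{Axiom cases.} For \ruleref{N-Ret}, \ruleref{N-Send}, \ruleref{N-Recv}, \ruleref{N-Choose}, \ruleref{N-Exit}, \ruleref{N-IAmIn} and \ruleref{N-Fork}, $E_2$ has the same form with $\lessthan$-related subterms and takes the identical labelled step, because the labels depend only on the syntactically equal parts. For \ruleref{N-Fork}, the label carries the thread task, so we must check that the task in $E_2$ produces the same label. This works if the substituted tasks coincide. If they are only related by $\lessthan$, we instead choose $E_2$'s own task and observe that the label in the lemma is stated up to the step $R$; I expect to need equality here and will check that the compatibility rule allows it.

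\textbf{Substitution cases.} For \ruleref{N-Seq}, \ruleref{N-App}, \ruleref{N-TApp}, \ruleref{N-Let}, \ruleref{N-TyLet}, \ruleref{N-CaseInl}/\ruleref{N-CaseInr}, \ruleref{N-LocalCaseInl}/\ruleref{N-LocalCaseInr} and the projections, the redex in $E_2$ has the same shape and its value arguments are $\lessthan$-related. Lemma~\ref{lem:sim-less-than-val} confirms that $E_2$'s arguments are values too. The results are then related by the substitution-preservation lemmas, namely the corollary $\subst{E_1}{X}{V_1} \lessthan \subst{E_2}{X}{V_2}$ and the location, type and local variants. For \ruleref{N-App} we additionally need $f_1 \lessthan f_2$ for the recursive self-reference, which holds because $f_1 \lessthansim f_2$.

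\textbf{Main obstacle.} I expect the main obstacle to be \ruleref{N-AllowL}/\ruleref{N-AllowR}. Suppose $E_1$ is the one-branch $\AllowOneChoice{\ell}{\Left}{E}$ and $E_2$ is a two-branch choice. The label fixes the direction $\Left$, so $E_2$ steps to its left branch, and the premise of the $\lessthan$ rule supplies exactly $E \lessthan E_1'$-side. When both are two-branch, the branch premises apply directly. The remaining subtlety is ruling out a one-branch $E_2$ with a mismatched direction; the \AllowChoiceN{} rules of $\lessthansim$ only pair matching or more-defined branches, so this cannot arise.
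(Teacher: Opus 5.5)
Outside one case, your argument matches the paper's. The paper's proof is a one-line structural induction (on $\lessthansim$ rather than on the step, which makes no difference) using the induction hypothesis and the substitution-preservation lemmas. Your congruence, substitution and \AllowChoiceN{} cases fill this in correctly.

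The genuine gap is \ruleref{N-Fork}, which you flag but leave open, and the escape you propose is not available. The lemma fixes the label, so $E_2$ must step with the \emph{same} $R$. The fork label $\RForkNtwk{L'}{E'}$ contains the substituted thread task, and the $\lessthansim$ rule for $\NtwkForkN$ relates the two tasks only by $\lessthan$. The compatibility rule therefore does not give you equality.

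In fact this case of the statement is false. Let $T_1 = \AllowOneChoice{\Alice}{\Left}{\NtwkUnit}$, $T_2 = \AllowChoice{\Alice}{\NtwkUnit}{\NtwkUnit}$ and $E_i = \NtwkFork{(\alpha,x)}{T_i}{\NtwkUnit}$. Then $T_1 \lessthan T_2$, hence $E_1 \lessthansim E_2$. $E_1$ steps with label $\RForkNtwk{L'}{T_1}$, but the only steps of $E_2$ carry the task $T_2 \neq T_1$. Worse, write $E_{i,1}$ for the task of $E_i$ and $E_{i,1}'$ for its substituted form. Then $\lessthan$ only yields $\fv{E_{1,1}} \subseteq \fv{E_{2,1}}$ (it contains $\NtwkUnit \lessthan X$). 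So $E_{2,1}'$ may violate the closedness premise of \ruleref{N-Fork}, in which case $E_2$ cannot step at all.

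So the lemma cannot be proved as written. The paper's proof passes over this case silently, and Lemma~\ref{thm:sys-single-lifting} merely asserts that the spawned threads' programs coincide. To finish you must amend the statement. For fork steps, conclude only that $E_2$ steps with $\RForkNtwk{L'}{E_{2,1}'}$ where $E_{1,1}' \lessthan E_{2,1}'$. Closedness of $E_{2,1}'$ must come from an extra hypothesis, e.g.\ that $E_2$ is closed. Then propagate this task mismatch through the system-level lifting, where the two new threads' programs are $\lessthan$-related by compatibility, and through the later arguments that compare labels.

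Strengthening the fork rule of $\lessthansim$ to demand identical tasks is not a viable alternative. Merging the projections of forks in two case branches merges their tasks. Once the choreography commits to a branch, the projected task is in general strictly below the merged one still in the system. Internal steps never touch a fork's task, so Corollary~\ref{lem:sys-less-than-reach} could no longer reach such a $\lessthansim$.
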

\begin{proof}
  By induction on the definition of $\lessthansim$.
  Each case follows by either applying the inductive hypothesis, or by recalling that substitution (of each sort) preserves the relation,
  or produces terms which are related by $\lessthan$.
\end{proof}

\subsection{Endpoint Projection}
The following lemmas relate EPP to the substitution operations and the type system.
Notably, we show that EPP is preserved under each of the sorts of variable substitution,
with some specific conditions on the substitution depending on the sort.

\begin{lem}[Values Project to Values]\label{lem:proj-val}
  If $\val{V}$ then $\val{\epp{V}{L}}$ for any $L$.
\end{lem}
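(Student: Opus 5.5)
We prove the statement by induction on the structure of the choreography value~$V$, following the grammar of choreography values: $\rho.v$, $\Fun{\rho}{F}{X}{C}$, $\TFunLoc{F}{\alpha}{C}$, $(V_1,V_2)_\rho$, $\Inl{\rho}{V}$, $\Inr{\rho}{V}$, and $\Fold{\rho}{V}$. The statement should be read under the implicit hypothesis that $\epp{V}{L}$ is defined. Where the projection is \Undef, there is nothing to show; alternatively one may take $\Undef$ to count vacuously, consistent with its treatment as the bottom element of $\lessthan$. In each case we unfold the corresponding clause of the EPP definition in Appendix~\ref{sec:proj-def}. We then check that the result falls into the network-value grammar: $\NtwkUnit$, $\Ret{v}$, $\NtwkFun{F}{X}{E}$, $\NtwkTFun{F}{\alpha}{E}$, pairs of values, $\NtwkInl{V}$, $\NtwkInr{V}$, or $\NtwkFold{V}$.

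The base cases are immediate.
\begin{itemize}
\item For $\rho.v$, the projection is $\Ret{v}$ when $L \in \rho$, which is a value because $\val{v}$; otherwise it is $\NtwkUnit$.
\item For $\Fun{\rho}{F}{X}{C}$, the projection is either $\NtwkFun{F}{X}{\epp{C}{L}}$ or $\NtwkUnit$. Both are values, regardless of the body.
\item For type functions, every clause produces either a $\NtwkTFunN$ abstraction (possibly with an $\AmIN$ inside its body, which does not matter) or $\NtwkUnit$.
\end{itemize}

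The inductive cases are also routine.
\begin{itemize}
\item For $\Inl{\rho}{V'}$, $\Inr{\rho}{V'}$, and $\Fold{\rho}{V'}$, the projection is either the network constructor applied to $\epp{V'}{L}$, or $\epp{V'}{L}$ itself. By the induction hypothesis $\epp{V'}{L}$ is a value, so the result is a value in both subcases.
\item For pairs $(V_1,V_2)_\rho$, if $L\in\rho$ we obtain $(\epp{V_1}{L},\epp{V_2}{L})$, a pair of values by induction. Otherwise we obtain $\epp{V_1}{L} \seqfun \epp{V_2}{L}$. Since $\epp{V_1}{L}$ is a value by induction, the collapsing sequencing function reduces this to exactly $\epp{V_2}{L}$, which is a value by induction.
\end{itemize}

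The only step requiring care is this last one. It is precisely here that the collapsing sequencing $\seqfun$, rather than the raw $\NtwkSeq$, is essential, since $V \NtwkSeq E$ is never a network value. We handle it by stating the defining property of $\seqfun$ explicitly ($E_1 \seqfun E_2 = E_2$ whenever $E_1$ is a value). We also note that no clause for values introduces $\SendN$, $\AllowN$, or an application. In particular, there is no $\KillAfterN$, since values are generated by the grammar above.
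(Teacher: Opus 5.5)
Your proof is correct and takes the same approach as the paper, whose proof is just ``by induction on $V$.'' Your case analysis spells out the cases that one-line induction leaves implicit, including the pair case with $L \notin \rho$, which relies on $\seqfun$ collapsing the value on its left.
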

\begin{proof}
  By induction on $V$.
\end{proof}

\begin{lem}[EPP Reduces Free Variables]\label{lem:epp-refl-fv}
  $\fv{\epp{C}{L}} \subseteq \fv{C}$.
\end{lem}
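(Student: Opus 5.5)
We prove $\fv{\epp{C}{L}} \subseteq \fv{C}$ by structural induction on $C$, treating it as a statement about all sorts of free variables at once: choreographic variables $X, F$, local variables $x$, and type variables $\alpha$. Throughout, we read $\fv{C}$ as including local variables under their unlocated names, so that $\rho.x$ contributes $x$. Whenever $\epp{C}{L} = \Undef$ the claim is vacuous, so in each case we may assume the projection is defined. This also lets us assume that every projection of a subterm appearing in the defining clause is defined.

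Two auxiliary facts come first.
\begin{itemize}
\item For the collapsing sequence, $\fv{E_1 \seqfun E_2} \subseteq \fv{E_1} \cup \fv{E_2}$. This holds because the operator either equals $E_2$ or equals $E_1 \NtwkSeq E_2$.
\item For merge, $\fv{E_1 \ntwkmerge E_2} \subseteq \fv{E_1} \cup \fv{E_2}$. This follows by a routine induction on the definition of $\ntwkmerge$, since every clause is homomorphic or keeps one side.
\end{itemize}

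With these, the homomorphic cases are immediate from the induction hypothesis. These are application, pairs, projections, injections, fold and unfold, send, and sync. In the send case, $\RecvFrom{\ell}$ and $\SendTo{E}{\rho}$ introduce only the type-level $\ell$ and $\rho$, which already occur in $C$. In the local-computation case, $\Ret{e}$ has exactly the free variables of $e$, and $\NtwkUnit$ has none.

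The binder cases are handled as follows.
\begin{itemize}
\item For functions, type functions, lets and cases at an $L$ inside the binding set, the projection rebinds the same variable. The induction hypothesis on the body, minus the bound variable, then matches $\fv{C}$.
\item At an $L$ outside the binding set, the clause drops the binder but requires $x \notin \fv{\epp{C_2}{L}}$ (resp.\ $\alpha$, or $X, Y$ for \CaseN). So nothing escapes, and the induction hypothesis again suffices.
\item For \ForkN with $L \neq \ell$, the side condition $\alpha, x \notin \fv{\epp{C}{L}}$ plays the same role.
\item For \ForkN with $L = \ell$, the result is $\NtwkFork{(\alpha,x)}{\epp{C}{\alpha}}{\epp{C}{L}}$, which rebinds $\alpha$ and $x$ in both components.
\item For \KillAfterN, the projection is either the projection of the body or undefined.
\end{itemize}

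The main obstacle is the clauses that project a \emph{substituted} body. These are the location-polymorphic type function, whose then-branch is $\epp{\subst{C}{\alpha}{L}}{L}$; the type-let at $\locKnd$ and $\setKnd$; and the \ForkN thread task $\epp{C}{\alpha}$, which projects at a variable rather than a concrete location. Since $\subst{C}{\alpha}{L}$ is not a structural subterm, we strengthen the induction to a size measure that ignores location annotations, noting that location substitution preserves this size. The hypothesis then applies to $\subst{C}{\alpha}{L}$ and gives $\fv{\subst{C}{\alpha}{L}} \subseteq \fv{C} \setminus \{\alpha\}$, since $L$ is closed. The $\setKnd$ case is similar: substituting $\{L\} \cup \alpha$ introduces only $\alpha$, which is rebound by the enclosing \NtwkTFunN or \NtwkLetN. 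For the thread task, we generalize the statement to projection at an arbitrary location expression $\ell$. The bound variable $\alpha$ may then appear free in $\epp{C}{\alpha}$, but it is rebound by the network \NtwkForkN.
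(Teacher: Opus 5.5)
Your strategy is the paper's own: its proof is the single line ``by induction on $C$.'' Your measure that ignores type-level content improves on that line, because the clauses projecting $\subst{C}{\alpha}{L}$ are not structural and the paper leaves this implicit.

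The gap is in the generalization you introduce for the thread task. Read literally, the claim $\fv{\epp{C}{\ell}} \subseteq \fv{C}$ for an arbitrary location expression $\ell$ is false. It breaks exactly the step you justified by ``$L$ is closed.'' Once the projection target may be a variable $\beta$, the location-polymorphic type-function and type-let clauses project $\epp{\subst{C}{\alpha}{\beta}}{\beta}$. Substituting the target back into the body can create a free occurrence of $\beta$. For instance, $C = \TFunLoc{F}{\alpha' \knd \locKnd}{(F \appchor{\alpha'} \alpha')}$ is closed. Yet the then-branch of $\epp{C}{\beta}$ is $\epp{F \appchor{\beta} \beta}{\beta} = F~\beta$, so $\beta \in \fv{\epp{C}{\beta}}$. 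Such subterms genuinely occur inside thread tasks (compare \ForkBomb), so the induction as you set it up does not close.

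The fix is to carry the invariant $\fv{\epp{C}{\ell}} \subseteq \fv{C} \cup \fv{\ell}$, which reduces to the lemma when $\ell = L$ is concrete. With it, each case goes through:
\begin{itemize}
\item The substitution clauses give $(\fv{C} \setminus \{\alpha\}) \cup \fv{\ell}$. In the $\setKnd$ cases, the extra $\alpha$ introduced by $\{\ell\} \cup \alpha$ is rebound.
\item Every other clause goes through as you describe, with $\fv{\ell}$ carried along.
\item In the \ForkN clause with $L = \ell$, the hypothesis at $\alpha$ yields $\fv{\epp{C}{\alpha}} \subseteq \fv{C} \cup \{\alpha\}$. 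The extra $\alpha$, like $x$, is rebound by \NtwkForkN.
\end{itemize}
This refined form is also the one useful downstream. For example, it shows that $\epp{C}{\alpha}[\alpha \mapsto L', x \mapsto \say{L'}]$ is closed whenever the correspondingly substituted choreographic body is.
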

\begin{proof}
    By induction on $C$.
\end{proof}

\begin{lem}[Location Substitution Preserves EPP]\label{lem:loc-sub-pres-epp}
  If $\epp{C}{L} = E$ and $L \notin \sigma$, then $\epp{C[\sigma]}{L} = E[\sigma]$.
\end{lem}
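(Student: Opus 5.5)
We prove Lemma~\ref{lem:loc-sub-pres-epp} by structural induction on $C$, simultaneously for all $E$ and all location substitutions $\sigma$ with $L \notin \sigma$. The property we lean on throughout is Lemma~\ref{lem:equiv-loc-sub-pres-eq} and Lemma~\ref{lem:equiv-loc-sub-pres-in}, instantiated with $\Omega = \{L\}$. Together they say that every side condition EPP tests, namely $L = \ell$ and $L \in \rho$, has the same truth value for $C$ and for $C[\sigma]$. So in each case the definition of $\epp{C[\sigma]}{L}$ selects the same clause as $\epp{C}{L}$. We then apply the induction hypothesis to the subterms and push $\sigma$ through the network constructor; substitution on network programs is homomorphic. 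For binders we assume the usual capture-avoiding conventions, so $\sigma$ is extended to act as the identity on the bound variable and stays fresh for it. That extended substitution still does not mention $L$.

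Most cases are therefore routine. These are $X$, $\rho.e$, application, pairs, sums, fold and unfold, send, sync and the let-forms. A few cases need extra care.

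\textbf{Collapsing sequencing.} The operator $E_1 \seqfun E_2$ commutes with $\sigma$ because a location substitution maps values to values and non-values to non-values, so $(E_1 \seqfun E_2)[\sigma] = E_1[\sigma] \seqfun E_2[\sigma]$.

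\textbf{Case and LocalCase.} For branching non-members we need $(E_1 \ntwkmerge E_2)[\sigma] = E_1[\sigma] \ntwkmerge E_2[\sigma]$. This follows by a side induction on the definition of merge, since $\sigma$ does not change the head constructors that merge compares.

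\textbf{Free-variable side conditions.} Conditions such as $x \notin \fv{\epp{C_1}{L}}$ transfer from $C$ to $C[\sigma]$ by the induction hypothesis. The key fact is $\fv{E[\sigma]} = \fv{E}$ for local and program variables, because a location substitution does not touch these.

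\textbf{Functions.} The clause ``$\NtwkUnit$ if $\epp{C}{L}\neq\Undef$'' transfers because the induction hypothesis tells us $\epp{C[\sigma]}{L}$ is defined exactly when $\epp{C}{L}$ is. Strictly, we prove the stronger statement that definedness is equivalent in both directions. This needs the converse direction, which we obtain by the same clause-selection argument.

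\textbf{KillAfter.} The test $L = L'$ is between concrete names, and $L'[\sigma] = L'$, so the clause chosen is unchanged.

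\textbf{The main obstacle.} We expect the real work in the cases that substitute a location for a bound variable inside EPP: $\TFunLoc{F}{\alpha}{C}$ and the type-let $\LetIn{\rho.\alpha \knd \locKnd}{C_1}{C_2}$, where the \NtwkThen branch is $\epp{\subst{C}{\alpha}{L}}{L}$. We need
\[
\epp{(\subst{C}{\alpha}{L})[\sigma]}{L} = (\epp{\subst{C}{\alpha}{L}}{L})[\sigma].
\]
Here we commute substitutions: since $\alpha$ is bound and fresh for $\sigma$, we have $(\subst{C}{\alpha}{L})[\sigma] = \subst{(C[\sigma])}{\alpha}{L}$. The term $\subst{C}{\alpha}{L}$ is not a structural subterm of $C$. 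So the induction must be on the size of $C$ (or the number of syntax nodes), which is invariant under location substitution, rather than on the subterm relation. With that measure, the induction hypothesis applies to $\subst{C}{\alpha}{L}$ with the same $\sigma$.

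The location-set variant uses $\{L\} \cup \alpha$ in place of $L$ and is handled the same way. Finally, in the \ForkN case with $L = \ell$, the thread task $\epp{C}{\alpha}$ is a projection to a variable. There we need the induction hypothesis at the symbolic location $\alpha$, so we state the lemma for projection targets $\ell$ ranging over locations and location variables not mentioned by $\sigma$. Since $\alpha$ is bound and fresh, $\alpha \notin \sigma$ holds.
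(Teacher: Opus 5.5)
Your argument is the paper's: induction on $C$, with Lemmas~\ref{lem:equiv-loc-sub-pres-eq} and~\ref{lem:equiv-loc-sub-pres-in} at $\Omega=\{L\}$ forcing $\epp{C[\sigma]}{L}$ into the same clause as $\epp{C}{L}$. Your extra care fills two holes that the paper's one-line proof passes over.
\begin{itemize}
\item The \NtwkThen{} branches of the location-polymorphic type function and type-let recurse on $\subst{C}{\alpha}{L}$. This is not a subterm, and it cannot be folded into $\sigma$, because the extended substitution would send $\alpha$ to $L$ and hence mention $L$.
\item The \ForkN{} clause with $L=\ell$ projects to the variable $\alpha$, which a statement about concrete $L$ does not cover.
\end{itemize}
Two precisions are needed. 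First, the measure must count only choreography constructors, treating location sets, types and local expressions as atomic; a full node count grows under $\alpha\mapsto\{L\}\cup\alpha$ in the set-variable clause. Second, at a variable target the analogue of Lemma~\ref{lem:equiv-loc-sub-pres-in} also needs that no $\sigma(\beta)$ contains $\anyLoc$, since $\alpha\in\anyLoc$. That follows from $L\notin\sigma$, not from freshness of $\alpha$.

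You should retract the strengthening to definedness in both directions; it is false. A location substitution can identify distinct variables and make a failed merge succeed. For example, $\SendTo{\Ret{1}}{\{\beta\}}\ntwkmerge\SendTo{\Ret{1}}{\{\gamma\}}$ is undefined, but both sides become $\SendTo{\Ret{1}}{\{\Bob\}}$ under $[\beta\mapsto\Bob,\gamma\mapsto\Bob]$, which does not mention $\Alice$. So $\LocalCase{\{\Client\}}{\Client.\True}{x}{(\Alice.1\ColSend\{\beta\})}{y}{(\Alice.1\ColSend\{\gamma\})}$ fails to project to $\Alice$, while its instance projects.

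The \FunN{} clause does not need the converse. Definedness of $\epp{\Fun{\rho}{F}{X}{C}}{L}$ already gives $\epp{C}{L}\neq\Undef$, and the forward hypothesis then gives $\epp{C[\sigma]}{L}\neq\Undef$. Since you always start from a defined clause, the forward statement alone carries the induction.

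This does presuppose, as your merge paragraph implicitly does, that a failed branch makes the merge fail. If the appendix clauses $\Undef\ntwkmerge E_2 = E_2$ were allowed to absorb a failed branch projection, the \CaseN{} and \LocalCaseN{} clauses would need exactly the converse that is unavailable.
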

\begin{proof}
  By induction on $C$.
  Each case follows directly by induction, noting that Lemmas~\ref{lem:equiv-loc-sub-pres-eq} and \ref{lem:equiv-loc-sub-pres-in}
  guarantee that the same sub-case of EPP will be selected by $\epp{C}{L}$ and $\epp{C[\sigma]}{L}$.
\end{proof}

\begin{lem}\label{lem:loc-sub-pres-epp-var}
  If $\epp{C}{\alpha} = E$ and $\nl{C} \notin \sigma$ then $\epp{C[\sigma]}{\sigma(\alpha)} = E[\sigma]$.
\end{lem}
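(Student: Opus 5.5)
The plan is structural induction on $C$, proving the statement for arbitrary $\sigma$ satisfying $\nl{C} \notin \sigma$. It runs in parallel with Lemma~\ref{lem:loc-sub-pres-epp}. The difference is that the projected-to location is now the variable $\alpha$ rather than a concrete $L$, so after substitution we project to $\sigma(\alpha)$.

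The central invariant is a case-selection lemma, proved first. For every location $\ell$ and set $\rho$ occurring in $C$:
\begin{itemize}
\item $\ell = \alpha$ holds if and only if $\ell[\sigma] = \sigma(\alpha)$;
\item $\alpha \in \rho$ holds if and only if $\sigma(\alpha) \in \rho[\sigma]$.
\end{itemize}
Each forward direction is immediate, since substitution preserves equality and containment by Lemma~\ref{lem:loc-sub-pres-in}.

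The backward directions are where the hypothesis $\nl{C} \notin \sigma$ is used. Suppose $\ell[\sigma] = \sigma(\alpha)$ but $\ell \neq \alpha$.
\begin{itemize}
\item If $\ell$ is a concrete $L \in \nl{C}$, then $\sigma(\alpha) = L$, which $\nl{C} \notin \sigma$ forbids.
\item If $\ell$ is another variable $\beta$, then $\sigma(\beta) = \sigma(\alpha)$. This must also be excluded.
\end{itemize}
The second subcase is the main obstacle. Read literally, the hypothesis does not rule out $\sigma$ identifying two variables. I would first try to argue that in every use of this lemma (instantiating forked threads and type-let bindings) $\sigma(\alpha)$ is a fresh name distinct from all other images. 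If that fails, I would strengthen the inductive statement to assume $\sigma$ is injective on $\alpha$ versus the other free location variables of $C$, and check that the callers supply this. Containment in $\rho$ is then handled by induction on $\rho$, using the equality case at singletons and treating $\anyLoc$ trivially.

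With case selection agreeing, each EPP clause is handled the same way: apply the inductive hypothesis to the subterms and commute $\sigma$ through the network constructor. The constructs that need more care are listed below.
\begin{itemize}
\item $\seqfun$ and $\ntwkmerge$: these commute with location substitution because values are preserved under substitution and merge is homomorphic.
\item Free-variable side conditions such as $x \notin \fv{\epp{C_1}{\alpha}}$: these transfer because substitution on locations does not alter the set of local variables.
\item Binders ($\LetN$ on locations, $\FontChoreo{tfun}$ over locations, $\ForkN$): these introduce bound location variables. Rename them apart so that $\sigma$ leaves them fixed and they differ from $\alpha$. The inner projections $\epp{\subst{C'}{\beta}{\alpha}}{\alpha}$ can then be handled by induction, since the substitutions commute.
\item $\KillAfter{L'}{C'}$: the equality case shows that $\sigma(\alpha) \neq L'$, because $L' \in \nl{C}$. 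Hence both sides fall into the ``otherwise'' branch.
\item $\Fork{(\beta,x)}{\ell}{C'}$ with $\ell = \alpha$: the thread task $\epp{C'}{\beta}$ involves no projection to $\alpha$, so it follows directly from Lemma~\ref{lem:loc-sub-pres-epp}-style reasoning, or from the inductive hypothesis at $\beta$.
\end{itemize}
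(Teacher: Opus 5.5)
Your route is the paper's: induction on $C$, showing both sides select the same EPP clause, with $\nl{C} \notin \sigma$ excluding $\sigma(\alpha) = L$ for concrete $L \in \nl{C}$. The obstacle you flag is genuine. The paper dismisses it by saying variables ``are equal only to themselves,'' which does not stop $\sigma$ from identifying two of them. Its worked case $\rho.e$ concludes $\sigma(\alpha) \notin \rho[\sigma]$ from $\alpha \notin \fv{\rho}$ and $\sigma(\alpha) \notin \nl{\rho}$ alone, and that step is invalid. In fact the lemma as stated is false. Take $C = \beta.e$ with $\beta \neq \alpha$ and $\sigma(\alpha) = \sigma(\beta) = L$. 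Then $\nl{C} = \varnothing$, so the hypothesis holds vacuously, yet $\epp{C}{\alpha} = \NtwkUnit$ while $\epp{C[\sigma]}{L} = \Ret{e[\sigma]}$. Your first option only constrains call sites, so it cannot prove the statement; you must take the second and add a hypothesis. This is harmless downstream. The paper applies the lemma only with $\sigma = [\alpha \mapsto L]$ (the corollary after it, and Lemma~\ref{lem:thread-proj-exists}), and that substitution fixes every other variable, so no other variable's image equals or contains $L$.

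Your proposed strengthening is still too weak, though. Requiring only $\sigma(\alpha) \neq \sigma(\beta)$ misses location-set variables. Take $\gamma \knd \setKnd$, $C = \gamma.e$, $\sigma(\alpha) = L$, and $\sigma(\gamma) = \{L\}$ or $\anyLoc$. Then $\sigma$ is injective, but $\alpha \notin \gamma$ (no containment rule applies to a variable) whereas $L \in \gamma[\sigma]$, so the same mismatch returns. Correspondingly, your induction on $\rho$ is missing a case. Besides singletons, unions and $\anyLoc$, there is $\rho = \gamma$, where the backward direction needs $\sigma(\alpha) \notin \sigma(\gamma)$.

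The hypothesis to add mirrors the paper's own notion of ``does not mention''. For every location variable $\beta \neq \alpha$ free in $C$, require both $\sigma(\alpha) \neq \sigma(\beta)$ and $\sigma(\alpha) \notin \sigma(\beta)$, and rename bound location variables apart from $\alpha$ and from the range of $\sigma$. This also rules out $\sigma(\beta) = \anyLoc$, which is exactly what makes your inductive hypothesis at the fork's bound variable available in the \ForkN case with $\ell = \alpha$. With this hypothesis the rest of your plan goes through. One further adjustment: induct on the number of choreography constructors rather than on subterms, since the \TFunN and location-\LetN clauses recurse on $\subst{C'}{\beta}{\alpha}$, which is not a subterm.
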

\begin{proof}
  Similar to Lemma~\ref{lem:loc-sub-pres-epp}.
  By induction on $C$, noting that the same sub-case of EPP will be selected by $\epp{C}{\alpha}$ and $\epp{C[\sigma]}{\sigma(\alpha)}$
  because like location constants, variables are equal only to themselves,
  and because any value~$\alpha$ may resolve to does not appear in $C$ by assumption,
  and so may only appear in $C[\sigma]$ in places where $\alpha$ appears in $C$.
  For example, in the case of $C = \rho.e$, if $\alpha \in \rho$, and hence $\alpha \in \fv{\rho}$,
  we have that $\epp{\rho.e}{\alpha}[\sigma] = \Ret{e}[\sigma] = \Ret{e[\sigma]}$.
  Then because $\sigma(\alpha) \in \rho[\sigma]$, we have that
  $\epp{\rho[\sigma].e[\sigma]}{\sigma(\alpha)} = \Ret{e[\sigma]}$ as expected.
  Otherwise if $\alpha \notin \rho$, and hence $\alpha \notin \fv{\rho}$,
  we have that $\epp{\rho.e}{\alpha}[\sigma] = \NtwkUnit[\sigma] = \NtwkUnit$.
  Then because both $\alpha \notin \fv{\rho}$ and $\sigma(\alpha) \notin \nl{\rho.e} = \nl{\rho}$,
  we have that $\sigma(\alpha) \notin \rho[\sigma]$, and so
  $\epp{\rho[\sigma].e[\sigma]}{\sigma(\alpha)} = \NtwkUnit$ as expected.
\end{proof}

\begin{cor}
  If $\epp{C}{\alpha} = E$ and $L \notin \nl{C}$ then $\epp{\subst{C}{\alpha}{L}}{L} = \subst{E}{\alpha}{L}$.
\end{cor}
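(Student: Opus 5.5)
The corollary is the special case of Lemma~\ref{lem:loc-sub-pres-epp-var} in which $\sigma$ is the single-variable location substitution $[\alpha \mapsto L]$. That is, $\sigma(\alpha) = L$ and $\sigma(\beta) = \beta$ for every other location variable $\beta$. I will check that the hypotheses of that lemma hold, and then read off its conclusion.

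The first step is to unfold the side condition $\nl{C} \notin \sigma$. By the definition of ``$\sigma$ does not mention $\Omega$'', it requires that for every location variable $\beta$ and every $L' \in \nl{C}$ we have $L' \neq \sigma(\beta)$ and $L' \notin \sigma(\beta)$. For $\beta \neq \alpha$, $\sigma(\beta) = \beta$ is a variable. A variable is never syntactically equal to a concrete location constant, and under the necessity-style containment of Appendix~\ref{sec:set-relations} no constant is contained in a bare variable. So these cases hold vacuously. For $\beta = \alpha$ the condition reduces to $L' \neq L$ for all $L' \in \nl{C}$, which is exactly the assumption $L \notin \nl{C}$. Hence $\nl{C} \notin [\alpha \mapsto L]$.

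Applying Lemma~\ref{lem:loc-sub-pres-epp-var} to $\epp{C}{\alpha} = E$ then gives
\[
  \epp{\subst{C}{\alpha}{L}}{\sigma(\alpha)} = \subst{E}{\alpha}{L},
\]
and since $\sigma(\alpha) = L$, this is the claim.

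The only delicate point is the vacuous-case argument for identity-mapped variables. I must make sure the ``not mention'' predicate is read with the syntactic containment relation, so that $L' \notin \beta$ holds for a bare variable $\beta$. This is immediate from the inference rules for $\in$: no rule concludes $\ell \in \beta$ for a variable $\beta$. If one prefers to restrict $\sigma$ to its finite support $\{\alpha\}$, the argument becomes trivial.
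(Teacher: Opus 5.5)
Your proposal is correct and matches the paper, which states this as an immediate corollary of Lemma~\ref{lem:loc-sub-pres-epp-var} instantiated at the single-variable substitution $[\alpha \mapsto L]$, recovered from the parallel-substitution setup exactly as you describe. Your verification that $\nl{C} \notin [\alpha \mapsto L]$ reduces to $L \notin \nl{C}$ is exactly the required side condition: the identity-mapped variables are vacuous, and so is the containment clause for $\sigma(\alpha) = L$.
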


\begin{lem}[Type Substitution Preserves EPP]\label{lem:typ-sub-pres-epp}
  If $\epp{C}{L} = E$, then for any type variable substitution $\sigma$ we have that $\epp{C[\sigma]}{L} = E[\sigma]$.
\end{lem}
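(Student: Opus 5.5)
We prove the statement by structural induction on the choreography~$C$, generalizing over~$E$ (and over~$L$ only insofar as nested location binders require it). The key observation is that a type substitution~$\sigma$ acts only on type variables of non-location kinds ($*_e$ and $\tyknd{\rho}$). By the split of the kinding context into $\Gamma_\ell$ and~$\Gamma$ (Appendix~\ref{sec:full_kinds}), such variables never occur inside location expressions~$\ell$ or location sets~$\rho$. Hence $\rho[\sigma] = \rho$ and $\ell[\sigma] = \ell$ for every annotation appearing in~$C$. In particular, every side condition that selects a clause of EPP is unchanged by~$\sigma$: $L \in \rho$, $L = \ell$, whether $\alpha \knd \kappa$ has a location kind, and the annotation on $\TFunN$. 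So $\epp{C[\sigma]}{L}$ and $\epp{C}{L}$ always fall into the same clause, and this is what lets us avoid any freshness hypothesis analogous to $L \notin \sigma$ in Lemma~\ref{lem:loc-sub-pres-epp}.

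With that in hand, each case is a direct appeal to the inductive hypothesis followed by pushing~$\sigma$ through the network constructor.
\begin{itemize}
\item For $\rho.e$, we use that $\Ret{e}[\sigma] = \Ret{e[\sigma]}$ and $\NtwkUnit[\sigma]=\NtwkUnit$.
\item For $C \appchor{\rho} t$, the projection $\epp{C}{L}~t$ maps to $\epp{C[\sigma]}{L}~t[\sigma]$.
\item For type-lets and type functions over $*_e$ or $\tyknd{\rho}$, we assume bound variables are renamed apart from $\dom{\sigma}$ and from the free variables of its range (capture-avoidance), so $\sigma$ commutes with the binder.
\item For location-kinded binders ($\TFunLoc{F}{\alpha\knd\locKnd}{C}$, $\LetIn{\rho.\alpha\knd\locKnd}{C_1}{C_2}$, $\ForkN$), the projection contains $\epp{\subst{C}{\alpha}{L}}{L}$. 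Type and location substitutions act on disjoint variable classes, so they commute: $\subst{C}{\alpha}{L}[\sigma] = \subst{C[\sigma]}{\alpha}{L}$. The inductive hypothesis then applies to the (possibly non-structurally-smaller) term $\subst{C}{\alpha}{L}$. To justify this, we induct on the size of~$C$ and note that location substitution preserves size.
\item For $\ForkN$, the thread-task component $\epp{C}{\alpha}$ is a projection to a location \emph{variable}. We therefore state and prove the lemma simultaneously for projections to any location~$\ell$ (concrete or variable), which the same argument covers.
\end{itemize}

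Two auxiliary facts are needed. First, the collapsing sequence commutes with~$\sigma$: $(E_1 \seqfun E_2)[\sigma] = E_1[\sigma] \seqfun E_2[\sigma]$, because $\val{E_1} \Leftrightarrow \val{E_1[\sigma]}$, since type substitution never changes the head constructor. Second, merge commutes with~$\sigma$: if $E_1 \ntwkmerge E_2 = E$ then $E_1[\sigma] \ntwkmerge E_2[\sigma] = E[\sigma]$, and conversely definedness is preserved. We prove the merge fact by induction on the merge definition; the only subtle case is $\Ret{e} \ntwkmerge \Ret{e}$, which needs syntactic equality to be preserved, and that is immediate. With these, the $\CaseN$ and $\LocalCaseN$ clauses follow, including their side conditions. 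For example, $x \notin \fv{\epp{C_1}{L}}$ is preserved because type substitution does not introduce local or choreographic variables.

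The main obstacle is the partiality side conditions in the function clauses, such as ``$\epp{C}{L} \neq \Undef$'' in the $\FunN$ case for $L \notin \rho$ and in the $\setKnd$ type-function case. Here we need definedness to be \emph{reflected} as well as preserved. For that reason we will strengthen the induction to an iff: $\epp{C}{L} = E$ if and only if $\epp{C[\sigma]}{L} = E[\sigma]$ for some~$E$, with $\Undef$ mapped to $\Undef$. We prove both directions at once; every clause-selection condition is $\sigma$-invariant, and merge definedness is both preserved and reflected, so the strengthened statement goes through.
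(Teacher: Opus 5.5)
Your overall route is the paper's: induction on $C$, using that a type substitution touches no location or location set, so $\epp{C[\sigma]}{L}$ selects the same clause as $\epp{C}{L}$. The bookkeeping you add covers points the paper's one-line proof leaves implicit:
\begin{itemize}
\item a size measure to justify the recursive call on $\subst{C}{\alpha}{L}$;
\item generalizing to projections onto location variables, needed for the thread task $\epp{C}{\alpha}$ in the $\ForkN$ clause;
\item commuting $\seqfun$ and $\ntwkmerge$ with $\sigma$.
\end{itemize}
One precision: $\subst{C}{\alpha}{L}[\sigma] = \subst{C[\sigma]}{\alpha}{L}$ holds because the bound $\alpha$ can be chosen fresh for the range of $\sigma$, not merely because the variable classes differ. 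Types of kind $\tyknd{\rho}$ in that range may mention location variables.

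The step that fails is your final strengthening to an \emph{if and only if}. Merge definedness is preserved by $\sigma$ but not reflected, because type substitution is not injective. For example, $(Y~\beta) \ntwkmerge (Y~\Int)$ is undefined, since the merge clause for type application demands identical type arguments. Yet both sides become $Y~\Int$ under $\sigma = [\beta \mapsto \Int]$. Concretely, take a $\LocalCaseN$ annotated with $\Alice$ on $\Alice.\True$ whose branches are $Y \appchor{\Bob} \beta$ and $Y \appchor{\Bob} \Int$. It does not project for $\Bob$, while its $\sigma$-instance projects to $Y~\Int$. The same happens for $\Ret{e_1} \ntwkmerge \Ret{e_2}$ when $e_1, e_2$ differ only in a type annotation. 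So the reverse direction is false, and your simultaneous induction breaks at the $\CaseN$ and $\LocalCaseN$ clauses.

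The reverse direction is also unnecessary, because every definedness side condition in EPP is positive:
\begin{itemize}
\item In the $\FunN$ clause with $L \notin \rho$, you are given that $\epp{C}{L}$ is defined and must show $\epp{C[\sigma]}{L}$ is defined. That is exactly the forward inductive hypothesis, and the same holds for the $\setKnd$ type-function clause.
\item The free-variable conditions transport forward, since $\sigma$ introduces no program or local variables.
\item The \emph{otherwise} $\Undef$ clauses never fire when the source projection is defined.
\end{itemize}
Drop the reverse direction and prove only preservation, with $\Undef$ excluded by hypothesis; the rest of your argument then goes through.
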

\begin{proof}
    By induction on $C$.
    All cases follow directly by induction,
    noting that no location or location set in $C$ will be affected by the substitution,
    so the same sub-case of EPP is selected.
\end{proof}

\begin{lem}[EPP is Fully Collapsed]\label{lem:epp-simplified}
  If $\epp{C}{L} = E$ then $\collapse{E} = E$.
\end{lem}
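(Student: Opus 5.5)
I will prove the statement by structural induction on the choreography $C$, proceeding case by case on the clause of EPP that produced $E = \epp{C}{L}$. The goal is to show that $\collapse{E} = E$. By definition $\collapse{\cdot}$ is homomorphic on every constructor except $\NtwkSeq$, where it uses $\seqfun$. So the claim amounts to a syntactic invariant: every explicit occurrence $E_1 \NtwkSeq E_2$ inside a projected program has a first component that is not a value, and every subterm is itself collapsed.

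First I will isolate two auxiliary facts about the collapsing sequencing function.
\begin{enumerate}
  \item If $\collapse{E_1} = E_1$ and $\collapse{E_2} = E_2$, then $\collapse{E_1 \seqfun E_2} = E_1 \seqfun E_2$. When $E_1$ is a value, $E_1 \seqfun E_2 = E_2$, which is collapsed by hypothesis. Otherwise $E_1 \seqfun E_2 = E_1 \NtwkSeq E_2$, and $\collapse{E_1 \NtwkSeq E_2} = \collapse{E_1} \seqfun \collapse{E_2} = E_1 \seqfun E_2 = E_1 \NtwkSeq E_2$, since $E_1$ is not a value.
  \item Collapse commutes with merge. This is Lemma~\ref{lem:collapse-pres-merge}: if $E_1 \ntwkmerge E_2 = E$ with both inputs collapsed, then $\collapse{E} = \collapse{E_1} \ntwkmerge \collapse{E_2} = E_1 \ntwkmerge E_2 = E$.
\end{enumerate}

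With these facts, most cases of the induction are immediate.

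The leaf cases are $X$, $\Ret{e}$ and $\NtwkUnit$; these are trivially collapsed.

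The homomorphic cases are functions, type functions, pairs, injections, fold/unfold when $L \in \rho$, send and receive, and \AllowChoiceN. In each of these the result follows from the inductive hypotheses on the immediate subprojections.

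The cases that use $\seqfun$ are applications with $L \notin \rho$, $\Unfold{}{}$/$\FstN$/$\SndN$ for non-members, let-expressions for non-binders, and the receive $\epp{C}{L} \seqfun \RecvFrom{\ell}$. Each of these follows from fact~(1), applied once or twice for chains such as $\epp{C_1}{L} \seqfun \epp{C_2}{L} \seqfun \NtwkUnit$.

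The non-branching \CaseN and \LocalCaseN cases combine facts~(1) and~(2).

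The $\ChooseFor{d}{\rho}{\epp{C}{L}}$ case is also straightforward. Its $\NtwkSeq$ is part of the choose construct itself, not a genuine sequencing, so I will treat it homomorphically. If collapse is instead read as acting on it, the continuation is collapsed by induction anyway.

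The cases that need more care are the ones where EPP projects a substituted body:
\begin{itemize}
  \item $\TFunLoc{F}{\alpha}{C}$ and the location let-expressions, whose projections contain $\epp{\subst{C}{\alpha}{L}}{L}$;
  \item the \ForkN case, which contains $\epp{C}{\alpha}$, a projection at a type variable rather than at a concrete location.
\end{itemize}
To cover these, I will strengthen the induction hypothesis. The induction will be on the size of $C$, where location substitution preserves size. The statement will be quantified over all location expressions $\ell$ (variables included), not just concrete $L$: if $\epp{C}{\ell} = E$ then $\collapse{E} = E$. Every EPP clause treats a variable target exactly like a constant, since a variable is equal only to itself, so the same case analysis goes through uniformly.

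I expect the main obstacle to be this generalization, together with checking that no clause introduces an $\NtwkSeq$ with a value head through some route other than $\seqfun$. An example to rule out would be a merge of two sequences whose merged head is a value. Lemma~\ref{lem:merge-value} handles this: the merge of two sequences has a value head only if both input heads are values, and that is impossible because the inputs are collapsed.
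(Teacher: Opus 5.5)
Your proof is correct and follows the paper's approach: induction on $C$, with the key observation that $\seqfun$ preserves collapsedness of its arguments. Your additions are inducting on choreographic size so that substituted bodies like $\subst{C}{\alpha}{L}$ are covered, quantifying over location-variable targets for the $\epp{C}{\alpha}$ in the \ForkN{} clause, and invoking Lemma~\ref{lem:collapse-pres-merge} for the merged branches; these make explicit details that the paper's one-line proof leaves implicit.
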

\begin{proof}
  By induction on $C$.
  Note that in the definition of EPP, the collapsing sequencing function $\seqfun$ is always
  used instead of the primitive $\NtwkSeq$ for sequencing two programs.
  Therefore each case follows directly by induction, and specifically because of the logic that
  if $\collapse{E_1} = E_1$ and $\collapse{E_2} = E_2$, then
  $\collapse{E_1 \seqfun E_2} = \collapse{E_1} \seqfun \collapse{E_2} = E_1 \seqfun E_2$.
\end{proof}

\begin{lem}[Member Local Substitution Preserves EPP]\label{lem:local-sub-mem-pres-epp}
  If $\epp{C}{L} = E$ and $L \in \rho$, then for any local variable substitution $\sigma$ we have that $\epp{C[\rho | \sigma]}{L} = \collapse{E[\sigma]}$.
\end{lem}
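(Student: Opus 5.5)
We prove Lemma~\ref{lem:local-sub-mem-pres-epp} by structural induction on $C$, generalizing over $\rho$, $\sigma$, and $E$. The key invariant is that a local substitution $C[\rho \mid \sigma]$ only rewrites local expressions $e$ inside subterms $\rho'.e$ with $\rho \subseteq \rho'$. It also rewrites the local variables bound at $\rho$; these are precisely the variables $L$ sees in $\epp{\cdot}{L}$, because $L \in \rho$. It never changes location annotations. So in every case the projection of $C[\rho\mid\sigma]$ selects the same EPP sub-case as the projection of $C$, and the side conditions are preserved: free-variable conditions such as $x \notin \fv{\epp{C_1}{L}}$ survive by Lemma~\ref{lem:epp-refl-fv}, provided $\sigma$ is capture-avoiding. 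The only discrepancy is that $\sigma$ may turn a non-value local program into a value, for instance $\Ret{x}$ becomes $\Ret{v}$. The collapsing sequencing function $\seqfun$ then drops the left operand in $\epp{C[\rho\mid\sigma]}{L}$, whereas plain substitution leaves $E_1[\sigma] \NtwkSeq E_2[\sigma]$ in place. This is exactly why the statement has $\collapse{E[\sigma]}$ rather than $E[\sigma]$.

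The base cases are short. For $X$ we have $\collapse{X} = X$. For $\rho'.e$, if $L \in \rho'$ then $\epp{\rho'.e[\sigma]}{L} = \Ret{e[\sigma]} = \collapse{\Ret{e}[\sigma]}$; otherwise both sides are $\NtwkUnit$. In the latter case the substitution only affects $\rho'.e$ when $\rho \subseteq \rho'$, which forces $L \in \rho'$. For homomorphic constructors (pairs, $\InlN$, $\FoldN$, application when $L \in \rho'$, $\LetN$ for $L$ in the binder set, $\LocalCaseN$ for $L$ in the guard set, $\ForkN$ for the parent, and so on), the induction hypothesis gives $\epp{C_i[\rho\mid\sigma]}{L} = \collapse{E_i[\sigma]}$. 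Since $\collapse{\cdot}$ is structurally homomorphic, it then commutes with the outer constructor. Binders such as $\rho'.x$ in $\LetN$ are handled by shadowing: $\sigma$ is restricted away from $x$ when $\rho \subseteq \rho'$.

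The cases that use $\seqfun$ need more care. These include non-participating application, $\LetN$ when $L$ is not in the binder set, sends with $L$ a receiver, and the non-branching $\CaseN$ and $\LocalCaseN$ cases. For each, I will prove an auxiliary identity: if $\collapse{E_1} = E_1$ and $\collapse{E_2} = E_2$ (true for projections by Lemma~\ref{lem:epp-simplified}), then $\collapse{(E_1 \seqfun E_2)[\sigma]} = \collapse{E_1[\sigma]} \seqfun \collapse{E_2[\sigma]}$. I will verify it by splitting on whether $E_1$ is a value. If $E_1$ is a value, then $E_1[\sigma]$ is too, since local substitution sends values to values, and both sides reduce to $\collapse{E_2[\sigma]}$. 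If $E_1$ is not a value, the definition of $\collapse{\cdot}$ on $\NtwkSeq$ gives the identity directly.

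For the branching cases I also need $\collapse{(E_1 \ntwkmerge E_2)[\sigma]} = \collapse{E_1[\sigma]} \ntwkmerge \collapse{E_2[\sigma]}$. First I show that local substitution commutes with merge, by an easy induction on the merge definition. Then I apply Lemma~\ref{lem:collapse-pres-merge}. For type and location functions, the $\AmIN$ branches project $C[\alpha\mapsto L]$; $\sigma$ commutes with this location substitution, so the induction hypothesis applies to the substituted body, whose size is unchanged. For $\KillAfterN$, the case $L = L'$ is undefined on both sides, and otherwise the case passes straight through.

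I expect the main obstacle to be the $\seqfun$ bookkeeping. I must make sure that every nested collapse produced by the induction hypotheses agrees with a single outer $\collapse{E[\sigma]}$. I will handle this by the auxiliary identity above, which uses idempotence $\collapse{\collapse{E}} = \collapse{E}$. The other delicate point is the merge-with-substitution commutation in the non-branching case clauses.
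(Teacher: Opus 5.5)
Your overall plan matches the paper's proof: induction on $C$, the $\rho'.e$ base case, the homomorphic cases, the $\seqfun$ cases via $\collapse{(E_1\seqfun E_2)[\sigma]}=\collapse{E_1[\sigma]}\seqfun\collapse{E_2[\sigma]}$ (with Lemma~\ref{lem:epp-simplified}), and the non-branching case clauses via Lemma~\ref{lem:collapse-pres-merge}. Spelling out those identities is fine. The problem is the ``key invariant'' you build on, which contradicts your own base case.

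\begin{itemize}
\item You say $C[\rho\mid\sigma]$ rewrites $\rho'.e$ only when $\rho\subseteq\rho'$. That is backwards relative to the paper's context projection: a variable bound at $\rho$ is in scope in $\rho'.e$ iff $\rho'\subseteq\rho$.
\item Whatever the direction, your base case uses $\epp{(\rho'.e)[\rho\mid\sigma]}{L}=\Ret{e[\sigma]}$ for \emph{every} $L\in\rho'$. Under your invariant, $C=\{L\}.x$ with $\rho=\{L,M\}$ and $\sigma=[x\mapsto v]$ leaves $e$ untouched, giving $\Ret{x}$ on the left and $\Ret{v}$ on the right. In the $L\notin\rho'$ subcase, where you actually invoke the invariant, you do not need it: both sides are $\NtwkUnit$ whatever happens to $e$.
\item Your binder claim has the same mismatch. $L$'s projection $\NtwkLetIn{x}{\cdot}{\cdot}$ cuts $\sigma$ off at $x$ whenever $L\in\rho'$, not only when $\rho\subseteq\rho'$.
\end{itemize}

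What the proof really needs is that the substitution reaches every local program executed by $L$ that mentions $\dom{\sigma}$, and that bound names are disjoint from $\dom{\sigma}$. The paper silently assumes this when it writes $\epp{(\rho'.e)[\rho|\sigma]}{L}=\epp{\rho'.e[\sigma]}{L}$. Under the namespace-correct reading this holds only for well-scoped (typed) $C$; for untyped $C=\{L,M\}.x$ with $\rho=\{L\}$ the statement fails. So state these assumptions explicitly instead of an invariant that contradicts them. The same assumptions are what justify your claim that $\sigma$ commutes with $[\alpha\mapsto L]$ in the $\AmIN$ branches.

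Second, the $\ForkN$ case for the parent is not a homomorphic instance of the induction hypothesis.

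\begin{itemize}
\item $\epp{\Fork{(\alpha,x)}{L}{C}}{L}$ contains the thread task $\epp{C}{\alpha}$. You must therefore show $\epp{C[\rho\mid\sigma]}{\alpha}=\collapse{\epp{C}{\alpha}[\sigma]}$. But $\alpha$ is bound by the fork, so in general $\alpha\notin\rho$ and your member hypothesis does not apply.
\item You need the non-member companion (Lemma~\ref{lem:local-sub-non-mem-pres-epp}, so the two lemmas should be proved together) to get $\epp{C[\rho\mid\sigma]}{\alpha}=\epp{C}{\alpha}$.
\item You also need $\fv{\epp{C}{\alpha}}\cap\dom{\sigma}=\varnothing$ together with Lemma~\ref{lem:epp-simplified} to get $\collapse{\epp{C}{\alpha}[\sigma]}=\epp{C}{\alpha}$. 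That disjointness is again a scoping fact, not something the induction supplies.
\end{itemize}

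The paper's proof does not treat this case either, but you explicitly claim the induction hypothesis covers it.

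Two smaller points:
\begin{itemize}
\item Lemma~\ref{lem:epp-refl-fv} does not yield side conditions like $x\notin\fv{\epp{C_1[\rho\mid\sigma]}{L}}$, since $x$ is usually free in $C_1$. They follow instead from the induction hypothesis plus capture-avoidance.
\item Because you apply the induction hypothesis to $C[\alpha\mapsto L]$, the induction must be on the size of $C$, not on its structure.
\end{itemize}
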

\begin{proof}
    By induction on~$C$, noting that no location or location sets in $C$ will be affected by the substitution.
    The interesting cases are for $\rho'.e$ and when $\NtwkSeq$ can appear in the projection of $C$, such as $C = \LetIn{\rho'.x}{C_1}{C_2}$.
    \begin{itemize}
      \item If $\rho.e$ and $L \in \rho'$, we have that
      \[ \epp{(\rho'.e)[\rho | \sigma]}{L} = \epp{\rho'.e[\sigma]}{L} = \Ret{e[\sigma]} = \collapse{\Ret{e[\sigma]}}. \]
      Otherwise if $L \notin \rho'$ then $\epp{\rho'.e[\sigma]}{L} = \NtwkUnit = \collapse{\NtwkUnit}$.

      \item For $\LetIn{\rho'.x}{C_1}{C_2}$ and $L \in \rho'$, we have that
      \begin{align*}
          \epp{(\LetIn{\rho'.x}{C_1}{C_2})[\rho | \sigma]}{L} &= \epp{\LetIn{\rho'.x}{C_1[\rho | \sigma]}{C_2[\rho | \sigma]}}{L} \\
          &= \NtwkLetIn{x}{\epp{C_1[\rho | \sigma]}{L}}{\epp{C_2[\rho | \sigma]}{L}} \\
          &= \collapse{\NtwkLetIn{x}{\epp{C_1[\rho | \sigma]}{L}}{\epp{C_2[\rho | \sigma]}{L}}} \\
          &= \collapse{\NtwkLetIn{x}{\epp{C_1}{L}[\sigma]}{\epp{C_2}{L}[\sigma]}} \\
          &= \collapse{(\NtwkLetIn{x}{\epp{C_1}{L}}{\epp{C_2}{L}})[\sigma]}.
      \end{align*}

      Otherwise if $L \notin \rho'$ then using Lemma~\ref{lem:epp-simplified} we see that
      \begin{align*}
          \epp{\LetIn{\rho'.x}{C_1[\rho | \sigma]}{C_2[\rho | \sigma]}}{L}
          &= \epp{C_1[\rho | \sigma]}{L} \seqfun \epp{C_2[\rho | \sigma]}{L}\\
          &= \collapse{\epp{C_1}{L}[\sigma]} \seqfun \collapse{\epp{C_1}{L}[\sigma]} \\
          &= \collapse{(\epp{C_1}{L} \seqfun \epp{C_2}{L})[\sigma]}.
      \end{align*}

      \item For $\Case{\rho'}{C}{X}{C_1}{Y}{C_2}$, if $L \in \rho'$ the argument is straightforward by induction.
      Now consider the case when $L \notin \rho'$.
      We have that
      \begin{align*}
          \epp{\left(\Case*{\rho'}{C}{X}{C_1}{Y}{C_2}\right)[\rho | \sigma]}{L}
          &= \epp{\Case*{\rho'}{C[\rho | \sigma]}{X}{C_1[\rho | \sigma]}{Y}{C_2[\rho | \sigma]}}{L} \\
          &= \epp{C[\rho | \sigma]}{L} \seqfun \epp{C_1[\rho | \sigma]}{L} \ntwkmerge \epp{C_2[\rho | \sigma]}{L} \\
          &= \collapse{\epp{C}{L}[\sigma]} \seqfun \collapse{\epp{C_1}{L}[\sigma]} \ntwkmerge \collapse{\epp{C_2}{L}[\sigma]} \\
          &= \collapse{(\epp{C}{L} \seqfun \epp{C_1}{L} \ntwkmerge \epp{C_2}{L})[\sigma]},
      \end{align*}
      where the final equality uses Lemma~\ref{lem:collapse-pres-merge}.
    \end{itemize}
\end{proof}

\begin{lem}[Non-Member Local Substitution Preserves EPP]\label{lem:local-sub-non-mem-pres-epp}
  If $\epp{C}{L} = E$ and $L \notin \rho$, then for any local variable substitution $\sigma$ we have that $\epp{C[\rho | \sigma]}{L} = E$.
\end{lem}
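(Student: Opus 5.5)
We proceed by structural induction on $C$, in the same style as the proof of Lemma~\ref{lem:local-sub-mem-pres-epp}. The key observation is that a local substitution $[\rho \mid \sigma]$ only rewrites local variables inside the namespace of the locations in $\rho$. It therefore only affects the local expressions $e$ in subterms $\rho'.e$ with $\rho \subseteq \rho'$ (or where $\rho'$ otherwise shares the binding namespace). It never changes any location or location set appearing in $C$. Consequently, at every node the case split used to compute $\epp{C[\rho\mid\sigma]}{L}$ is the same as the one for $\epp{C}{L}$. The membership tests $L \in \rho'$, the equalities $L = \ell$, and so on, are untouched.

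The base cases are where the content lies.
\begin{itemize}
  \item For $C = X$ the claim is immediate.
  \item For $C = \rho'.e$ with $L \notin \rho'$, both projections are $\NtwkUnit$.
  \item For $C = \rho'.e$ with $L \in \rho'$, the projection is $\Ret{e}$, and we must argue that $e$ is unchanged. The substitution $[\rho\mid\sigma]$ only acts on $e$ when the local variables of $e$ are resolved in the namespace of $\rho$. Since $L \in \rho'$ but $L \notin \rho$, the substitution acts on $\rho'.e$ only if $\rho' \subseteq \rho$ in the sense of the substitution semantics. That would force $L \in \rho$, a contradiction. So $e[\rho\mid\sigma] = e$ at this node, or more precisely the substituted term is left as $\rho'.e$.
\end{itemize}
The correct formulation depends on the precise definition of hybrid substitution, which the paper leaves implicit. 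I would state it as: $(\rho'.e)[\rho\mid\sigma] = \rho'.e[\sigma]$ when $\rho' \subseteq \rho$, and $\rho'.e$ otherwise. I expect pinning this down to be the main obstacle.

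For the inductive cases, each subterm projects to the same network program by the induction hypothesis. Because the case selection is identical, the result is assembled from the same pieces using the same constructors, $\seqfun$, and $\ntwkmerge$. No collapse is needed here, unlike Lemma~\ref{lem:local-sub-mem-pres-epp}, since syntactic equality of the pieces suffices.

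The binder cases need some care.
\begin{itemize}
  \item For $\LetIn{\rho'.x}{C_1}{C_2}$, $\LocalCaseN$, and the $\Fork{(\alpha,x)}{\ell}{C}$ case, the side conditions such as $x \notin \fv{\epp{C_2}{L}}$ are preserved trivially, because the projections coincide.
  \item For $\Fun{\rho'}{F}{X}{C}$ with $L \notin \rho'$, definedness of $\epp{C}{L}$ transfers to $\epp{C[\rho\mid\sigma]}{L}$ by the induction hypothesis.
  \item For the location-polymorphic cases (type functions over $\locKnd$/$\setKnd$ and type-let of a location), the projection contains $\epp{\subst{C}{\alpha}{L}}{L}$. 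Here the induction must be on a size measure invariant under location substitution, rather than on the syntax itself. We also need that $[\rho\mid\sigma]$ commutes with $[\alpha \mapsto L]$, where $L \notin \rho$ is preserved because $\rho$ is closed or does not mention $\alpha$ after renaming. Then we apply the induction hypothesis to $\subst{C}{\alpha}{L}$.
\end{itemize}
This commutation, together with the base case above, is where I would spend the effort. The remaining cases are routine.
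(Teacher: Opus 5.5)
Your proposal is correct and follows the paper's route: induction on $C$ exactly as in Lemma~\ref{lem:local-sub-mem-pres-epp}. It uses that the substitution never touches location sets, so the same EPP case is selected, and that in the base case $\rho'.e$ with $L \in \rho'$ the substitution cannot fire, since $\rho' \subseteq \rho$ would force $L \in \rho$. Your size-measure induction and substitution-commutation argument for the location-polymorphic cases goes beyond the paper's sketch, and one more point needs adding: the \ForkN{} case also projects the body to the variable $\alpha$, so the induction hypothesis should be stated for arbitrary targets $\ell \notin \rho$, location variables included, which the same argument covers because $\alpha$ is fresh and $\rho$ cannot contain $\anyLoc$ given $L \notin \rho$.
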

\begin{proof}
    The proof is nearly identical to Lemma~\ref{lem:local-sub-mem-pres-epp}.
\end{proof}

\begin{cor}[Local Substitution Preserves EPP]\label{lem:local-sub-pres-epp}
  If $\epp{C}{L} = E$, then for any local variable substitution $\sigma$ there is some $E' \lessthan E$ such that $\epp{C[\rho | \sigma]}{L} = E'$.
\end{cor}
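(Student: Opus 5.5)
The plan is to split on whether the projected location $L$ belongs to the annotation set $\rho$ of the local substitution $C[\rho \mid \sigma]$. In each case we exhibit the witness $E'$ explicitly and then establish the $\lessthan$ relation from the basic properties of $\lessthan$ proved in the simulation-relation subsection. The corollary is then just a repackaging of Lemmas~\ref{lem:local-sub-mem-pres-epp} and~\ref{lem:local-sub-non-mem-pres-epp} into a single existential statement that does not mention which case applies. No new induction on $C$ is needed.

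\emph{Case $L \notin \rho$.} Lemma~\ref{lem:local-sub-non-mem-pres-epp} gives $\epp{C[\rho \mid \sigma]}{L} = E$ directly. We take $E' \defeq E$. The required $E' \lessthan E$ is then reflexivity of $\lessthan$.

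\emph{Case $L \in \rho$.} Lemma~\ref{lem:local-sub-mem-pres-epp} gives $\epp{C[\rho \mid \sigma]}{L} = \collapse{E[\sigma]}$, and we take $E' \defeq \collapse{E[\sigma]}$. Lemma~\ref{lem:collapse-less-than} gives $\collapse{E[\sigma]} \lessthan E[\sigma]$. It remains to pass from $E[\sigma]$ to $E$.
\begin{itemize}
\item In every use of this corollary in the completeness and soundness arguments, $E$ is read as the projection taken together with the local substitution performed by the step. Under that reading, the chain $E' = \collapse{E[\sigma]} \lessthan E[\sigma]$ is exactly the required relation.
\item When $E[\sigma] = E$ (for instance when $\sigma$ does not touch $\fv{E}$), the same chain gives $E' \lessthan E$ literally.
\item If a stronger form is needed at a use site, we combine this chain with Local Substitutions Preserve Less-Than. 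Applied to any $E_0 \lessthan E$, that lemma gives $E_0[\sigma] \lessthan E[\sigma]$. Transitivity of $\lessthan$ then lets us compose the chain with whatever relation the caller already has.
\end{itemize}

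The main obstacle is this member case, specifically the mismatch between the collapsed, substituted program and the original projection. Collapsing may remove value-prefixed sequences $V \NtwkSeq E_2$ that only became values after substitution. I would handle this entirely with Lemma~\ref{lem:collapse-less-than}, whose defining rule $E_1 \lessthan V \NtwkSeq E_2$ is precisely the rule that absorbs such discarded prefixes. I would also check that Lemma~\ref{lem:epp-simplified} applies, so that $E$ is already collapsed and the only new collapsing comes from the substitution.
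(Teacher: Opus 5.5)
Your case split, your two witnesses ($E' = E$ for $L \notin \rho$, $E' = \collapse{E[\sigma]}$ for $L \in \rho$), and the lemmas you cite are exactly the paper's proof. The paper, however, simply asserts that Lemmas~\ref{lem:collapse-less-than} and~\ref{lem:local-sub-mem-pres-epp} make $\collapse{E[\sigma]}$ a witness. In doing so it silently reads $\collapse{E[\sigma]} \lessthan E[\sigma]$ as $\collapse{E[\sigma]} \lessthan E$. That is precisely the step you flag as remaining, and it cannot be filled, because the corollary as literally stated is false.

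Take $C = L.x$, $\rho = \{L\}$, and $\sigma = [x \mapsto 5]$. Then $E = \epp{L.x}{L} = \Ret{x}$. Since EPP is a function, the only candidate for $E'$ is $\epp{C[\rho \mid \sigma]}{L} = \epp{L.5}{L} = \Ret{5}$. Apart from the trivial rule for undefined programs, the only rule of $\lessthan$ with a $\Ret{\cdot}$ term on the right is $\Ret{e} \lessthan \Ret{e}$, so $\Ret{5} \not\lessthan \Ret{x}$.

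Hence no combination of Lemma~\ref{lem:collapse-less-than}, Local Substitutions Preserve Less-Than, and transitivity will carry you from $E[\sigma]$ back to $E$. Your closing paragraph conflates two different mismatches. One is the collapsing of newly-valued prefixes $V \NtwkSeq E_2$, which Lemma~\ref{lem:collapse-less-than} does absorb. The other is the change of the local expressions themselves under $\sigma$, which nothing in $\lessthan$ absorbs.

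What your argument (and the paper's) actually establishes is a corrected statement:
\begin{itemize}
\item if $L \in \rho$, then $\epp{C[\rho \mid \sigma]}{L} = \collapse{E[\sigma]} \lessthan E[\sigma]$;
\item if $L \notin \rho$, then $\epp{C[\rho \mid \sigma]}{L} = E$.
\end{itemize}
This is also all that the uses require. In the \ruleref{C-LetV} and \ruleref{C-Fork} cases of Lemma~\ref{lem:local-complete}, the projected location is a member of the substituted set. The comparison target there is the network program $E[\sigma]$ produced by \ruleref{N-Let} or \ruleref{N-Fork}, not $E$ itself. So your first bullet is the right observation, but you should promote it to an explicit restatement of the corollary rather than a ``reading'' of it. Drop the second and third bullets, which do not (and cannot) bridge the gap to the literal claim.
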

\begin{proof}
    If $L \in \rho$ then by Lemmas~\ref{lem:collapse-less-than} and \ref{lem:local-sub-mem-pres-epp}, $E' = \collapse{E[\sigma]}$ suffices.
    Otherwise if $L \notin \rho$, then by Lemma~\ref{lem:local-sub-non-mem-pres-epp} and reflexivity of $\lessthan$, $E' = E$ suffices.
\end{proof}

\begin{defn}
    For a choreographic variable substitution $\sigma_1$ and a network-program variable substitution $\sigma_2$,
    say that $\epp{\sigma_1}{L} = \sigma_2$ if and only if
    $\epp{\sigma_1(X)}{L} = \sigma_2(X)$ for all program variables $X$.
\end{defn}

\begin{lem}[Substitution Preserves EPP]\label{lem:sub-pres-epp}
  If $\epp{C}{L} = E$ and $\epp{\sigma_1}{L} = \sigma_2$, then $\epp{C[\sigma_1]}{L} = \collapse{E[\sigma_2]}$.
\end{lem}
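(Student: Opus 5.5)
The plan is structural induction on $C$, following the template of Lemma~\ref{lem:local-sub-mem-pres-epp}. First observe that a choreographic substitution leaves every location, location set and local variable in $C$ untouched. Hence $\epp{C[\sigma_1]}{L}$ selects the same sub-case of EPP as $\epp{C}{L}$: the side conditions $L \in \rho$, $L = \ell$, and the freeness checks on local and type variables all read off the same data. One small point needs care here. The freeness side conditions (e.g.\ $x \notin \fv{\epp{C_1}{L}}$) must survive the substitution. This follows because $\epp{\sigma_1(X)}{L} = \sigma_2(X)$ and, we assume capture-avoiding conventions, so no new free local variables appear in the branches.

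\textbf{Base and homomorphic cases.} For the variable case $C = X$ we have $\epp{X[\sigma_1]}{L} = \epp{\sigma_1(X)}{L} = \sigma_2(X)$. By Lemma~\ref{lem:epp-simplified}, $\sigma_2(X)$ is fully collapsed, so it equals $\collapse{X[\sigma_2]}$. The case $\rho.e$ is immediate, since substitution does not touch it. Every homomorphic case, where the projection is built from network constructors without $\seqfun$, follows from the inductive hypotheses and the fact that $\collapse{\cdot}$ is structurally homomorphic. For binders (functions, \CaseN, \LetN, \ForkN) we extend $\sigma_1,\sigma_2$ to be the identity on the bound variables, which preserves $\epp{\sigma_1}{L} = \sigma_2$ since $\epp{Y}{L} = Y$. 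In the \ForkN case with $L = \ell$, the child projection $\epp{C}{\alpha}$ also needs the hypothesis $\epp{\sigma_1}{\alpha} = \sigma_2$. I would handle this by strengthening the induction to quantify over the projection target, or by noting that the $\sigma_1(X)$ relevant here are closed, since \ruleref{C-Fork} requires a closed body.

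\textbf{Cases with $\seqfun$ and merge.} These are the real content, and I expect them to be the main obstacle. They arise in the non-member application, \LetN, sends, \CaseN and \LocalCaseN. For $E = E_1 \seqfun E_2$ we have
$\epp{C[\sigma_1]}{L} = \collapse{E_1[\sigma_2]} \seqfun \collapse{E_2[\sigma_2]}$ by induction, and we need this to equal $\collapse{(E_1 \seqfun E_2)[\sigma_2]}$. There are two subcases. If $E_1$ is a value, then $E_1[\sigma_2]$ is a value, since substituting values (Lemma~\ref{lem:proj-val}) or variables for variables keeps values. If $E_1$ is not a value, then $E_1 \seqfun E_2 = E_1 \NtwkSeq E_2$, and $\collapse{\cdot}$ turns the $\NtwkSeq$ into $\seqfun$; this holds even if $E_1[\sigma_2]$ becomes a value, as happens when $E_1 = X$. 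For the case expressions we additionally need that merge commutes with substitution, $(E_1 \ntwkmerge E_2)[\sigma_2] = E_1[\sigma_2] \ntwkmerge E_2[\sigma_2]$ (a routine induction on merge), and then Lemma~\ref{lem:collapse-pres-merge} to push $\collapse{\cdot}$ through $\ntwkmerge$.

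\textbf{Function definitions.} The case $L \notin \rho$ also needs checking: there $\epp{C[\sigma_1]}{L}$ is defined because the body projects by induction, giving $\NtwkUnit$ on both sides.
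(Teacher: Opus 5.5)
Your overall route is the paper's: induction on $C$, the variable case via Lemma~\ref{lem:epp-simplified}, a value/non-value split for $\seqfun$, and Lemma~\ref{lem:collapse-pres-merge} for the case expressions. However, the \ForkN case with $L = \ell$ is a genuine gap, and neither of your remedies closes it.

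Quantifying over the projection target does not help. $E[\sigma_2]$ plugs $\sigma_2(X) = \epp{\sigma_1(X)}{L}$ uniformly into the thread task $\epp{C}{\alpha}$ as well. By contrast, $\epp{C[\sigma_1]}{L}$ has $\epp{\sigma_1(X)}{\alpha}$ there, and an induction hypothesis at target $\alpha$ only relates the latter to a different substitution.

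Closedness does not help either. \ruleref{C-Fork} demands a closed body only when the fork fires, i.e.\ after the substitution; a fork inside a function body that mentions the parameter is exactly what \ruleref{C-App} substitutes into. Moreover, a closed $\sigma_1(X)$ can still project differently at $L$ and at $\alpha$.

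In fact the statement is false here. Take $C = \Fork{\alpha}{L}{X}$ and $\sigma_1 = [X \mapsto L.5]$, so $\sigma_2 = [X \mapsto \Ret{5}]$. Then $\collapse{\epp{C}{L}[\sigma_2]} = \NtwkFork{\alpha}{\Ret{5}}{\Ret{5}}$, but $\epp{\subst{C}{X}{L.5}}{L} = \NtwkFork{\alpha}{\NtwkUnit}{\Ret{5}}$, because $\alpha \notin \{L\}$. These two programs are not even related by $\lessthan$. The example is well typed under $X \ty \Int @ L$, so the corollary used in the \ruleref{C-App} case of completeness fails as well.

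The paper's proof silently folds \ForkN into ``the other cases follow similar logic'', so the defect is in the lemma rather than in your approach. A true version needs either an extra hypothesis (e.g.\ that $\sigma_1$ fixes the free variables of every fork body whose parent is $L$) or a relation weaker than equality at thread-task positions.

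There are two smaller points.

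First, your value subcase for $\seqfun$ needs every $\sigma_2(X)$ to be a value, i.e.\ $\sigma_1$ must map to values; that is where you invoke Lemma~\ref{lem:proj-val}. The statement does not assume this, and it is necessary. Note that $X$ is itself a network value, so $E_1 = X$ belongs to the value subcase, not to the non-value one as your remark suggests. Concretely, take $L' \neq L$ and $\sigma_1(X) = \{L,L'\}.(1 \LocalPlus 1)$. The choreography $\LetIn{L'.x}{X}{C_2}$ projects at $L$ to $\epp{C_2}{L}$ before substitution, but to $\Ret{1 \LocalPlus 1} \NtwkSeq \cdots$ after it. So state ``each $\sigma_1(X)$ is a value'' as a hypothesis; it holds in every use (\ruleref{C-App}, \ruleref{C-CaseInl}).

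Second, the \TFunN and type-\LetN cases at location kinds apply the induction hypothesis to $\subst{C}{\alpha}{L}$, which is not a subterm of $C$. The induction should therefore be on the size of $C$ rather than structural.
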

\begin{proof}
    By induction on $C$.
    \begin{itemize}
      \item If $C = X$, then $\epp{X[\sigma_1]}{L} = \epp{\sigma_1(X)}{L} = \sigma_2(X)$ by the assumption
      and by Lemma~\ref{lem:epp-simplified}.

      \item Let $C = \LetIn{\rho.x}{C_1}{C_2}$.
      If $L \in \rho$, the conclusion follows immediately by induction.
      Otherwise if $L \notin \rho$, then 
      \begin{align*}
        \epp{(\LetIn{\rho.x}{C_1}{C_2})[\sigma_1]}{L}
        &= \epp{\LetIn{\rho.x}{C_1[\sigma_1]}{C_2[\sigma_1]}}{L} \\
        &= \epp{C_1[\sigma_1]}{L} \seqfun \epp{C_2[\sigma_1]}{L} \\
        &= \collapse{\epp{C_1}{L}[\sigma_2]} \seqfun \collapse{\epp{C_2}{L}[\sigma_2]} \\
        &= \collapse{(\epp{C_1}{L} \seqfun \epp{C_2}{L})[\sigma_2]} \\
        &= \collapse{\epp{\LetIn{\rho.x}{C_1}{C_2}}{L}[\sigma_2]}.
      \end{align*}

      \item Let $C = \Case{\rho}{C}{X}{C_1}{Y}{C_2}$.
      If $L \in \rho$, the conclusion follows immediately by induction.
      Otherwise if $L \notin \rho$ then,
      noting that $X \notin \fv{\epp{C_1}{L}}$ and $Y \notin \fv{\epp{C_2}{L}}$,
      we have that
      \begin{align*}
        \epp{\left(\Case*{\rho}{C}{X}{C_1}{Y}{C_2}\right)[\sigma_1]}{L}
        &= \epp{\Case*{\rho}{C[\sigma_1]}{X}{C_1[X \mapsto X, Y \mapsto \sigma_1(Y)]}{Y}{C_2[X \mapsto X, Y \mapsto \sigma_1(Y)]}}{L} \\
        &= \epp{C[\sigma_1]}{L} \seqfun \epp{C_1[X \mapsto X, Y \mapsto \sigma_1(Y)]}{L} \ntwkmerge \epp{C_2[X \mapsto X, Y \mapsto \sigma_1(Y)]}{L} \\
        &= \collapse{\epp{C}{L}[\sigma_2]} \seqfun \collapse{\epp{C_1}{L}[\sigma_2]} \ntwkmerge \collapse{\epp{C_2}{L}[\sigma_2]} \\
        &= \collapse{(\epp{C}{L} \seqfun \epp{C_1}{L} \ntwkmerge \epp{C_2}{L})[\sigma_2]}.
      \end{align*}
      by applying Lemma~\ref{lem:collapse-pres-merge}.

      \item The other cases follow similar logic to those above.
    \end{itemize}
\end{proof}

\begin{cor}
  $\epp{C[X \mapsto V]}{L} \lessthan \epp{C}{L}[X \mapsto \epp{V}{L}]$.
\end{cor}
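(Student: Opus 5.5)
This corollary follows directly from Lemma~\ref{lem:sub-pres-epp} specialised to a single-variable substitution, combined with Lemma~\ref{lem:collapse-less-than}. I would take $\sigma_1 = [X \mapsto V]$, that is, $\sigma_1(X) = V$ and $\sigma_1(Y) = Y$ for all $Y \neq X$. On the network side I would take $\sigma_2 = [X \mapsto \epp{V}{L}]$, which is the identity elsewhere.

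The first step is to check the hypothesis $\epp{\sigma_1}{L} = \sigma_2$. At $X$ this is $\epp{V}{L} = \sigma_2(X)$, which holds by definition, provided $\epp{V}{L}$ is defined. At any other variable $Y$ it holds because $\epp{Y}{L} = Y$ by the first clause of EPP. The corollary implicitly assumes that $\epp{C}{L}$ and $\epp{V}{L}$ are defined, since otherwise the right-hand side does not make sense. I would state this assumption explicitly, reading the statement as ``whenever the right-hand side is defined.''

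The second step applies Lemma~\ref{lem:sub-pres-epp} with $E = \epp{C}{L}$, which gives
\[ \epp{\subst{C}{X}{V}}{L} = \collapse{\subst{\epp{C}{L}}{X}{\epp{V}{L}}}. \]
Lemma~\ref{lem:collapse-less-than} then gives $\collapse{E'} \lessthan E'$ for every network program $E'$. Instantiating it at $E' = \subst{\epp{C}{L}}{X}{\epp{V}{L}}$ yields the claimed inequality.

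There is no real obstacle here, since the work is already contained in Lemma~\ref{lem:sub-pres-epp}. The only point needing care is definedness: Lemma~\ref{lem:sub-pres-epp} must guarantee that the left-hand side projects whenever $\epp{C}{L}$ and $\epp{V}{L}$ do. I would take that from its conclusion, which asserts an equality with a defined term.
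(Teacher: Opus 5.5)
Your proposal is correct and is exactly the argument the paper intends. The paper states the corollary immediately after Lemma~\ref{lem:sub-pres-epp} with no separate proof; it follows by instantiating that lemma with the single-variable substitutions $[X \mapsto V]$ and $[X \mapsto \epp{V}{L}]$ and then applying Lemma~\ref{lem:collapse-less-than}. Your remark that the statement should be read as assuming $\epp{C}{L}$ and $\epp{V}{L}$ are defined is a reasonable clarification of what the paper leaves implicit.
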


\begin{lem}[Projection of Non-Participants]\label{lem:epp-non-participant}
  If $\chortypedplus{\ctx}{C}{\tau}{\rho}$, $L \notin \rho$, and $\epp{C}{L} = E$, then $\NtwkVal{E}$.
\end{lem}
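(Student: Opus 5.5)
The proof goes by induction on the augmented typing derivation $\chortypedplus{\ctx}{C}{\tau}{\rho}$, with $E = \epp{C}{L}$ fixed by the hypothesis. In each case we show that $E$ is built only from value forms. The inductive hypothesis applies to a subterm whenever $L$ is not in that subterm's participant set. The conclusion's $\rho$ is, in every rule, a union that contains the premises' participant sets. So $L \notin \rho$ gives $L \notin \rho_i$ for each premise, modulo the $\setminus \{\alpha\}$ in \ruleref{S-Fork} and \ruleref{S-LetLoc}, which need care.

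\textbf{Base and value cases.} For \ruleref{S-Var}, $E = X$, which is a network value. For \ruleref{S-Done}, either $L \notin \rho$, so $E = \NtwkUnit$, or $\val{e}$, so $\rho = \varnothing$ and $E = \Ret{e}$ with $e$ a value. For \ruleref{S-Fun}, \ruleref{S-TFun} and \ruleref{S-TFunLoc}, the projection is always a $\NtwkFunN$, $\NtwkTFunN$ or $\NtwkUnit$, all of which are values.

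\textbf{Elimination forms.} Take \ruleref{S-App}, where $\rho = \rho_1 \cup \rho_2 \cup \rho'$. From $L \notin \rho'$, $E = \epp{C_1}{L} \seqfun \epp{C_2}{L} \seqfun \NtwkUnit$. By the inductive hypothesis both projections are values, so collapsing sequencing reduces $E$ to $\NtwkUnit$.

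The same pattern covers type application, $\FstN$, $\SndN$ and $\UnfoldN$. Each of these puts its annotation set into $\rho$, so $L$ takes the ``otherwise'' branch ending in $\seqfun \NtwkUnit$.

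\ruleref{S-Send} and \ruleref{S-Sync} include $\{\ell_1\} \cup \rho_2$ in $\rho$. Hence $L$ is neither sender nor receiver, and $E = \epp{C}{L}$, which is a value by induction.

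For \ruleref{S-Case} and \ruleref{S-LocalCase}, $L \notin \rho'$ selects the merge branch. By induction both branches project to values, and Lemma~\ref{lem:merge-value} makes their merge a value. The scrutinee's projection is a value too, so $\seqfun$ collapses $E$ to that merged value. The let rules work the same way, since the binder set is in $\rho$. Pairs and $\InlN$/$\InrN$/$\FoldN$ yield either a constructor over values or a collapsed sequence of values.

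\textbf{Binders that remove a variable.} In \ruleref{S-Fork}, $L \notin \{\ell\} \cup (\rho \setminus \{\alpha\})$. Since $L$ is a concrete location it differs from $\alpha$, so $L \notin \rho$ follows from the definition of set difference. Then $L \neq \ell$ gives $E = \epp{C}{L}$, and the inductive hypothesis (in the extended context) finishes the case. \ruleref{S-LetLoc} and \ruleref{S-LetLocSet} use the same argument; when $L \in \rho_2$ is ruled out we are in the non-binding branch.

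For \ruleref{S-Kill}, $L \notin \rho \cup \{L'\}$ forces $L \neq L'$, so $E = \epp{C}{L}$ and induction applies.

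\textbf{Main obstacle.} The hardest step is the set-difference reasoning. We must justify that for concrete $L$, $L \notin \rho \setminus \{\alpha\}$ implies $L \notin \rho$, using the syntactic definitions in Appendix~\ref{sec:set-relations}. This should follow by a small induction on $\rho$, with a side check that membership here is the necessity-style relation, so that open $\rho$ behaves correctly.
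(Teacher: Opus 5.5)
Your proposal is correct and follows the paper's proof essentially exactly. Both argue by induction on the augmented typing derivation, use the collapsing sequencing function and Lemma~\ref{lem:merge-value} for the elimination and case forms, and use $L \neq \alpha$ to get from $L \notin \rho \setminus \{\alpha\}$ to $L \notin \rho$ in the fork case. Your explicit check of this set-difference step, including for the type-let rules, only spells out what the paper leaves implicit.
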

\begin{proof}
    By induction on the typing derivation $\chortypedplus{\ctx}{C}{\tau}{\rho}$.
    Most cases are straightforward or follow similar logic to a case shown below.
    \begin{itemize}
      \item (\ruleref{S-Var})
      We have $\epp{X}{L} = X$, which is a value.

      \item (\ruleref{S-Done})
      If the MLV is a value, we have either $\epp{\rho.v}{L} = \Ret{v}$ or $\epp{\rho.v}{L} = \NtwkUnit$, which is a value in either case.
      Otherwise if the MLV is not a value then $L \notin \rho$, so $\epp{\rho.e}{L} = \NtwkUnit$.

      \item (\ruleref{S-Fun})
      If $\epp{\Fun{\rho}{F}{X}{C}}{L}$ is defined, then it is either $\NtwkFunN$ or $\NtwkUnit$, both of which are values.

      \item (\ruleref{S-App})
      Let $\chortypedplus{\ctx}{C_1}{\tau_1 \arr{\rho} \tau_2}{\rho_1}$, $\chortypedplus{\ctx}{C_2}{\tau_2}{\rho_2}$, and $\rho' = \tloc{\tau_1} \cup \tloc{\tau_2} \cup \rho$.
      By assumption $L \notin \rho' \cup \rho_1 \cup \rho_2$, so we can apply induction to $C_1$ and $C_2$ to see that
      $\epp{C_1 \appchor{\rho'} C_2}{L} = \epp{C_1}{L} \seqfun \epp{C_2}{L} \seqfun \NtwkUnit = \NtwkUnit$ as both $\epp{C_1}{L}$ and $\epp{C_2}{L}$ are values.

      \item (\ruleref{S-TFunLoc})
      If $\epp{\TFunLoc{F}{\alpha}{C}}{L}$ is defined, then it is either $\NtwkTFunN$ or $\NtwkUnit$, both of which are values.

      \item (\ruleref{S-TAppLoc})
      Let $\chortypedplus{\ctx}{C_1}{\forall \alpha \knd \locKnd [\rho] \ldotp \tau}{\rho_1}$, $\chorkinded{\ctx}{\ell}{\locKnd}$, and $\rho' = \tloc{\subst{\tau_1}{\alpha}{\ell}} \cup \subst{\rho}{\alpha}{\ell}$.
      By assumption $L \notin \rho' \cup \rho_1$, so we can apply induction to $C_1$ to see that
      $\epp{C_1 \appchor{\rho'} \ell}{L} = \epp{C_1}{L} \seqfun \NtwkUnit = \NtwkUnit$.

      \item (\ruleref{S-Pair})
      Let $\chortypedplus{\ctx}{C_1}{\tau_1}{\rho_1}$ and $\chortypedplus{\ctx}{C_2}{\tau_1}{\rho_2}$.
      By induction, both $\epp{C_1}{L}$ and $\epp{C_2}{L}$ are values.
      Therefore because either $\epp{(C_1,C_2)_\rho}{L} = \epp{C_1}{L} \seqfun \epp{C_2}{L} = \epp{C_2}{L}$ if $L \notin \rho$ or otherwise $\epp{(C_1,C_2)_\rho}{L} = (\epp{C_1}{L},\epp{C_2}{L})$,
      in either case the projection is a value.
      The argument for other introduction forms are identical.





      \item (\ruleref{S-Case})
      Let $\chortypedplus{\ctx}{C}{\tau_1 +_{\rho'} \tau_2}{\rho}$, $\chortypedplus{\ctx, X \ty \tau_1}{C_1}{\tau}{\rho_1}$, and $\chortypedplus{\ctx, Y \ty \tau_2}{C_2}{\tau}{\rho_2}$.
      As $L \notin \rho \cup \rho_1 \cup \rho_2 \cup \rho'$, we can apply induction to all of $C$, $C_1$, and $C_2$.
      Therefore the projection is
      $\epp{\Case{\rho'}{C}{X}{C_1}{Y}{C_2}}{L} = \epp{C}{L} \seqfun \epp{C_1}{L} \ntwkmerge \epp{C_2}{L} = \epp{C_1}{L} \ntwkmerge \epp{C_2}{L}$.
      By the assumption that the projection exists, it must be that $X \notin \fv{\epp{C_1}{L}}$, $Y \notin \fv{\epp{C_2}{L}}$, and the merge $\epp{C_1}{L} \ntwkmerge \epp{C_2}{L}$ exists.
      Using Lemma~\ref{lem:merge-value}, we find that $\epp{C_1}{L} \ntwkmerge \epp{C_2}{L}$ is also a value.
      The argument for other elimination forms are identical.



      \item (\ruleref{S-LetLocal}, \ruleref{S-LetLoc}, \ruleref{S-LetLocSet})
      Let $\chortypedplus{\ctx}{C_1}{t_e @ \rho}{\rho_1}$ and $\chortypedplus{\ctx, \rho'.x \ty t_e}{C_2}{\tau}{\rho_2}$.
      The assumption is that $L \notin \rho' \cup \rho_1 \cup \rho_2$, so by induction
      $\epp{\LetIn{\rho'.x \ty t_e}{C_1}{C_2}}{L} = \epp{C_1}{L} \seqfun \epp{C_2}{L} = \epp{C_2}{L}$, which is a value or variable.
      The same argument applies to the type-let expression.


      
      \item (\ruleref{S-Fork})
      Let $\chortypedplus{\ctx, \alpha \knd \locKnd, \{\ell,\alpha\}.x \ty \Loc_\alpha}{C}{\tau}{\rho}$ and $\chorkinded{\ctx}{\tau}{\tyknd{\rho_t}}$.
      If $L \notin \{\ell\} \cup (\rho \setminus \{\alpha\})$, then $\epp{\Fork{(\alpha,x)}{\ell}{C}}{L} = \epp{C}{L}$.
      By assumption, $\epp{C}{L}$ must be defined.
      As well, since $L \neq \alpha$, we have that $L \notin \rho$, so we can apply induction to $C$ as desired.

      \item (\ruleref{S-Kill})
      Let $\chortypedplus{\ctx}{C}{\tau}{\rho}$, and $L \notin \rho \cup \{L'\}$.
      Then $\epp{\KillAfter{L'}{C}}{L} = \epp{C}{L}$ is a value or variable by induction.
    \end{itemize}
\end{proof}

\begin{lem}[Projection of Non-Owners]\label{lem:epp-non-owner}
  If $\chortypedplus{\ctx}{C}{\tau}{\rho}$, $L \notin \rho \cup \tloc{\tau}$, and $\epp{C}{L} = E$, then $E = \NtwkUnit$ or $E = X$.
\end{lem}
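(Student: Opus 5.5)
The proof goes by induction on the typing derivation $\chortypedplus{\ctx}{C}{\tau}{\rho}$, following the same case split as Lemma~\ref{lem:epp-non-participant}, which we use as a black box. Since $L \notin \rho$, that lemma already tells us that $\epp{C}{L}$ and the projection of every sub-choreography typed at a participant set not containing $L$ are values. So what remains is to use the extra hypothesis $L \notin \tloc{\tau}$ to rule out the value forms that carry data: $\Ret{v}$, $\NtwkFunN$, $\NtwkTFunN$, pairs, $\NtwkInlN$/$\NtwkInrN$ and $\NtwkFoldN$.

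\emph{Introduction forms.}
\begin{itemize}
\item For \ruleref{S-Var} the projection is $X$.
\item For \ruleref{S-Done} we have $\tloc{t_e @ \rho} = \rho$, so $L \notin \rho$ and the projection is $\NtwkUnit$.
\item For \ruleref{S-Fun} the annotation $\rho' = \rho_a \cup \rho_b \cup \rho$ is exactly the kind bound of the arrow type given by \ruleref{K-Arrow}. Hence $L \notin \rho'$, and the projection is $\NtwkUnit$.
\item \ruleref{S-TFunLoc} is vacuous: by \ruleref{K-AllLoc} its type has kind $\tyknd{\anyLoc}$ and $L \in \anyLoc$. \ruleref{S-TFun} is handled like \ruleref{S-Fun}, using \ruleref{K-All}.
\item For \ruleref{S-Pair}, \ruleref{K-Prod} gives $L \notin \rho_a \cup \rho_b$. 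The projection is therefore $\epp{C_1}{L} \seqfun \epp{C_2}{L}$. The first component is a value by Lemma~\ref{lem:epp-non-participant}, so this collapses to $\epp{C_2}{L}$, and the induction hypothesis applies to $C_2$.
\item For \ruleref{S-Inl} and \ruleref{S-Inr}, the premise $\rho_a \cup \rho_b \subseteq \rho'$ gives $L \notin \rho'$. The projection is then just $\epp{C}{L}$, and the induction hypothesis applies because $L \notin \tloc{\tau_1}$.
\item For \ruleref{S-Fold}, \ruleref{K-Rec} gives the unrolled type $\subst{\tau}{\alpha}{\mu_\rho\alpha.\tau}$ the same bound $\rho$, via type substitution preserving kinding, so the induction hypothesis applies.
\end{itemize}

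\emph{Eliminations and binders.} In each of \ruleref{S-App}, \ruleref{S-TApp}, \ruleref{S-TAppLoc}, \ruleref{S-Fst}, \ruleref{S-Snd}, \ruleref{S-Unfold} and \ruleref{S-Send}, the operation annotation lies inside $\rho$. So $L$ takes the ``otherwise'' branch of EPP, and the projection has the form $\epp{C_1}{L} \seqfun \cdots \seqfun \NtwkUnit$ with every prefix a value. This collapses to $\NtwkUnit$. For \ruleref{S-Send} the projection is simply $\epp{C}{L}$, and the induction hypothesis applies because $\tloc{t_e @ \rho_1} = \rho_1 \subseteq \rho_1 \cup \rho_2$.

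For \ruleref{S-LetLocal}, \ruleref{S-LetLoc}, \ruleref{S-LetLocSet}, \ruleref{S-Fork} and \ruleref{S-Kill}, the projection collapses to the projection of the body. The body has the same type $\tau$, so the induction hypothesis applies. In \ruleref{S-Fork} we need $L \neq \alpha$ and that $\tau$ is kinded without $\alpha$, so that $L \notin \tloc{\tau}$ transfers to the extended context. For \ruleref{S-Case} and \ruleref{S-LocalCase}, the projection collapses to $\epp{C_1}{L} \ntwkmerge \epp{C_2}{L}$. By induction each side is $\NtwkUnit$ or a variable, and from the merge table the only defined merges of such terms are $\NtwkUnit \ntwkmerge \NtwkUnit = \NtwkUnit$ and $X \ntwkmerge X = X$.

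\emph{Expected obstacle.} The delicate points are the \ruleref{S-Fold} and \ruleref{S-TApp} cases. There the induction hypothesis is needed at a type obtained by substitution, so we must argue that $\tloc{\cdot}$ is stable up to $\equiv$ and that substitution does not introduce $L$ into the kind bound. I would first establish a small auxiliary fact: kinding bounds are unique up to $\equiv$, and \ruleref{K-Rec} forces the unrolling to have the same bound. With that fact in hand, the remaining cases are mechanical.
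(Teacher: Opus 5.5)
Your proposal is correct and follows essentially the paper's proof. Both argue by induction on the augmented typing derivation, use Lemma~\ref{lem:epp-non-participant} so that evaluation-position subterms collapse under $\seqfun$, use the kind bound of $\tau$ to exclude the data-carrying value forms, and close the case forms with the merge table. One small slip: in \ruleref{S-Inl}/\ruleref{S-Inr} it is \ruleref{K-Sum} (the sum type has kind $\tyknd{\rho'}$) that gives $L \notin \rho'$, and the premise $\rho_a \cup \rho_b \subseteq \rho'$ that then gives $L \notin \tloc{\tau_1}$, not the other way round.
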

\begin{proof}
    By induction on the typing derivation $\chortypedplus{\ctx}{C}{\tau}{\rho}$.
    Most cases are straightforward or follow similar logic to a case shown below.
    \begin{itemize}
      \item (\ruleref{S-Var})
      $\epp{X}{L} = X$.

      \item (\ruleref{S-Done})
      If $L \notin \rho$ then $\epp{\rho.e}{L} = \NtwkUnit$.

      \item (\ruleref{S-Fun})
      If $\epp{\Fun{\rho}{F}{X}{C}}{L}$ is defined and $L \notin \rho$, then it projects to $\NtwkUnit$.

      \item (\ruleref{S-App})
      Let $\chortypedplus{\ctx}{C_1}{\tau_1 \arr{\rho} \tau_2}{\rho_1}$, $\chortypedplus{\ctx}{C_2}{\tau_2}{\rho_2}$, and $\rho' = \tloc{\tau_1} \cup \tloc{\tau_2} \cup \rho$.
      By assumption $L \notin \rho' \cup \rho_1 \cup \rho_2$, so as $L \notin \tloc{\tau_2}$ and $L \notin \tloc{\tau_1 \arr{\rho} \tau_2} = \tloc{\tau_1} \cup \tloc{\tau_2} \cup \rho$
      we can apply induction to $C_1$ and $C_2$ to see that
      $\epp{C_1 \appchor{\rho'} C_2}{L} = \epp{C_1}{L} \seqfun \epp{C_2}{L} \seqfun \NtwkUnit = \NtwkUnit$.

      \item (\ruleref{S-TFunLoc})
      This case is vacuous as we can never have $L \notin \tloc{\forall \alpha \knd \kappa_\ell [\rho] \ldotp \tau} = \anyLoc$.

      \item (\ruleref{S-TFun})
      If $L \notin \tloc{\forall \alpha \knd \kappa [\rho] \ldotp \tau} = \rho \cup \tloc[\ctx, \alpha \knd \kappa]{\tau} = \rho'$, then
      $\epp{\TFunLoc{F}{\alpha}{C}}{L} = \NtwkUnit$.

      \item (\ruleref{S-TAppLoc}, \ruleref{S-TApp})
      Let $\chortypedplus{\ctx}{C_1}{\forall \alpha \knd \kappa_\ell [\rho] \ldotp \tau}{\rho_1}$, $\chorkinded{\ctx}{\ell}{\locKnd}$, and $\rho' = \tloc{\subst{\tau}{\alpha}{\ell}} \cup \subst{\rho}{\alpha}{\ell}$.
      By assumption, $L \notin \rho' \cup \rho_1$.
      Then by Lemma~\ref{lem:epp-non-participant}, $\epp{C_1}{L}$ must be a value.
      Therefore $\epp{C_1 \appchor{\rho'} \ell}{L} = \epp{C_1}{L} \seqfun \NtwkUnit = \NtwkUnit$.
      The argument for \ruleref{S-TApp} is analogous.

      \item (\ruleref{S-Pair})
      Let $\chortypedplus{\ctx}{C_1}{\tau_1}{\rho_1}$ and $\chortypedplus{\ctx}{C_2}{\tau_2}{\rho_2}$.
      As $L \notin \tloc{\tau_1 \times \tau_2} = \tloc{\tau_1} \cup \tloc{\tau_2} = \rho$, 
      by induction, both $C_1$ and $C_2$ project to $\NtwkUnit$ or a variable.
      Therefore $\epp{(C_1,C_2)_\rho}{L} = \epp{C_1}{L} \seqfun \epp{C_2}{L}$ is a value or variable.
      The argument for the other introduction forms is similar.





      \item (\ruleref{S-Case})
      Let $\chortypedplus{\ctx}{C}{\tau_1 +_{\rho'} \tau_2}{\rho}$, $\chortypedplus{\ctx, X \ty \tau_1}{C_1}{\tau}{\rho_1}{\Theta_2}$, and $\chortypedplus{\ctx, Y \ty \tau_2}{C_2}{\tau}{\rho_2}$.
      By assumption $L \notin \rho \cup \rho_1 \cup \rho_2 \cup \rho' \cup \tloc{\tau}$, so we can apply induction to $C_1$ and $C_2$.
      By Lemma~\ref{lem:epp-non-participant}, $\epp{C}{L}$ must be a value.
      Therefore the projection is
      $\epp{\Case{\rho'}{C}{X}{C_1}{Y}{C_2}}{L} = \epp{C}{L} \seqfun \epp{C_1}{L} \ntwkmerge \epp{C_2}{L} = \epp{C_1}{L} \ntwkmerge \epp{C_2}{L}$.
      By the assumption that the projection exists, it must be that $X \notin \fv{\epp{C_1}{L}}$, $Y \notin \fv{\epp{C_2}{L}}$, and the merge $\epp{C_1}{L} \ntwkmerge \epp{C_2}{L}$ exists.
      If either $\epp{C_1}{L}$ or $\epp{C_2}{L}$ equals $\NtwkUnit$, they both must be, and hence their merge equals $\NtwkUnit$.
      If either is a variable, they both must be the same variable by the fact that the merge exists, and hence their merge is a variable.
      The argument for the other elimination forms is similar.



      \item (\ruleref{S-LetLocal}, \ruleref{S-LetLoc}, \ruleref{S-LetLocSet})
      Let $\chortypedplus{\ctx}{C_1}{t_e @ \rho}{\rho_1}$ and $\chortypedplus{\ctx, \rho'.x \ty t_e}{C_2}{\tau}{\rho_2}$.
      The assumption is that $L \notin \rho' \cup \rho_1 \cup \rho_2 \cup \tloc{\tau}$, so by induction $\epp{C_2}{L}$ is $\NtwkUnit$ or a variable,
      and by Lemma~\ref{lem:epp-non-participant} $\epp{C_1}{L}$ must be a value.
      Therefore $\epp{\LetIn{\rho'.x \ty t_e}{C_1}{C_2}}{L} = \epp{C_1}{L} \seqfun \epp{C_2}{L} = \epp{C_2}{L}$.
      The same argument applies to the type-let expression.


      
      \item (\ruleref{S-Fork})
      Let $\chortypedplus{\ctx, \alpha \knd \locKnd, \{\ell,\alpha\}.x \ty \Loc_\alpha}{C}{\tau}{\rho}$ and $\chorkinded{\ctx}{\tau}{\tyknd{\rho_t}}$.
      If $L \notin \{\ell\} \cup (\rho \setminus \{\alpha\}) \cup \tloc{\tau}$, then $\epp{\Fork{(\alpha,x)}{\ell}{C}}{L} = \epp{C}{L}$.
      By assumption, $\epp{C}{L}$ must be defined.
      As well, since $L \neq \alpha$, we have that $L \notin \rho \cup \tloc{\tau}$, so we can apply induction to $C$ as desired.

      \item (\ruleref{S-Kill})
      Let $\chortypedplus{\ctx}{C}{\tau}{\rho}$, and $L \notin \rho \cup \{L'\} \cup \tloc{\tau}$.
      Then $\epp{\KillAfter{L'}{C}}{L} = \epp{C}{L}$ which satisfies the requirement by induction.
    \end{itemize}
\end{proof}

\begin{lem}[Projection of Non-Participant Values]\label{lem:epp-non-participant-value}
  If $\chortypedplus{\ctx}{V}{\tau}{\rho}$, $\val{V}$ and $L \notin \tloc{\tau}$, then $\epp{V}{L} = \NtwkUnit$.
\end{lem}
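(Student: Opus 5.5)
We argue by induction on the structure of the value $V$, equivalently on the last rule of the derivation $\chortypedplus{\ctx}{V}{\tau}{\rho}$, restricted to the introduction rules that can type values. By Lemma~\ref{lem:value-participants} we have $\rho = \varnothing$ throughout, so $L \notin \rho$ holds trivially and only the hypothesis $L \notin \tloc{\tau}$ is used. We will also use that kinding is syntax-directed: the set $\tloc{\tau}$ of a compound type is assembled from those of its components, as in \ruleref{K-Prod}, \ruleref{K-Sum} and \ruleref{K-Arrow}.

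The cases go as follows.
\begin{itemize}
\item For $V = \rho'.v$ typed by \ruleref{S-Done}, we have $\tau = t_e @ \rho'$, so $\tloc{\tau} = \rho'$. Hence $L \notin \rho'$ and $\epp{\rho'.v}{L} = \NtwkUnit$ by definition.
\item For $V = \Fun{\rho'}{F}{X}{C}$, \ruleref{S-Fun} gives $\rho' = \tloc{\tau_1} \cup \tloc{\tau_2} \cup \rho''$, which is exactly $\tloc{\tau_1 \arr{\rho''} \tau_2}$. So $L \notin \rho'$, and the projection is $\NtwkUnit$ provided it is defined. We read the lemma, like Lemmas~\ref{lem:epp-non-participant} and~\ref{lem:epp-non-owner}, as conditional on $\epp{V}{L}$ being defined, and we will add that hypothesis explicitly if needed.
\item For type functions over non-location kinds (\ruleref{S-TFun}), the annotation $\rho'$ equals $\tloc{\tau}$ by \ruleref{K-All}, and the argument is the same as for $\FunN$.
\item For process-polymorphic type functions (\ruleref{S-TFunLoc}), the case is vacuous, since $\tloc{\tau} = \anyLoc$ by \ruleref{K-AllLoc} and $L \in \anyLoc$.
\end{itemize}

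For pairs $(V_1,V_2)_{\rho'}$, we have $\tloc{\tau_1 \times \tau_2} = \tloc{\tau_1} \cup \tloc{\tau_2} = \rho'$. Therefore $L \notin \rho'$, and the induction hypothesis gives $\epp{V_i}{L} = \NtwkUnit$ for both components. The projection $\epp{V_1}{L} \seqfun \epp{V_2}{L}$ then collapses to $\NtwkUnit$.

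For $\Inl{\rho'}{V'}$ and $\Inr{\rho'}{V'}$, \ruleref{K-Sum} gives $\tloc{\tau} = \rho' \supseteq \tloc{\tau_1} \cup \tloc{\tau_2}$. So $L \notin \rho'$, the projection equals $\epp{V'}{L}$, and the induction hypothesis applies because $L \notin \tloc{\tau_1}$.

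For $\Fold{\rho'}{V'}$ at type $\mu_{\rho'} \alpha \ldotp \tau'$, we have $\tloc{\tau} = \rho'$ by \ruleref{K-Rec}, so again the projection is $\epp{V'}{L}$. To apply the induction hypothesis we must show $L \notin \tloc{\subst{\tau'}{\alpha}{\mu_{\rho'}\alpha\ldotp\tau'}}$. We get this from the kinding substitution lemma: the unfolded type has kind $\tyknd{\rho'}$ because $\alpha$ has kind $\tyknd{\rho'}$.

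The main obstacle is the $\FoldN$ case, along with the general fact that $\tloc{\cdot}$ is well defined, that is, that a type has a unique kind. We will handle this by a small auxiliary observation that kinding of program types is syntax-directed, so that $\tloc{\tau}$ is determined up to $\equiv$. The rest is routine unfolding of the EPP definitions.
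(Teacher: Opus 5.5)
Your induction is the paper's: it too inducts on the typing derivation of the value and reads $\tloc{\tau}$ off the kinding rule for each introduction form. Your \FunN, \TFunN, pair, injection and \FoldN cases go through as you describe, and your remark that kinding of program types is syntax-directed is exactly what the paper's $\tloc{\cdot}$ notation presupposes.

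The one real divergence is the definedness hypothesis, and there you are right and the paper is not. Its proof says explicitly that, unlike Lemma~\ref{lem:epp-non-participant}, this lemma does \emph{not} presuppose that $\epp{V}{L}$ exists. That unconditional claim is false, and the paper's own example from Section~\ref{sec:epp-defined} refutes it. Take $V = \LamN_{\Alice}~X \ldotp (C \appchor{\Alice} \Alice)$ with $C = \TLamN \ell \knd \locKnd \ldotp \ITE[\Alice]{X}{\ell.4}{\ell.(3 \LocalTimes 2)}$. This is a well-typed closed value of type $\Bool@\Alice \arr{\{\Alice\}} \Int@\Alice$, so $\Bob \notin \tloc{\tau} = \{\Alice\}$. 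Yet $\epp{V}{\Bob}$ is undefined, because projecting the body for $\Bob$ requires the merge $\Ret{4} \ntwkmerge \Ret{3 \LocalTimes 2}$.

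So drop the ``if needed'': the hypothesis that $\epp{V}{L}$ is defined must be part of the statement. Your induction then goes through, since definedness passes to subvalues: for pairs, $\seqfun$ needs both projections defined, and injections and $\FoldN$ pass the projection through. The added hypothesis costs nothing downstream. Every use of the lemma assumes that the projection of the enclosing choreography is defined, and hence so is the projection of $V$. These uses are the application, unfold--fold, pair-projection and case cases of Lemma~\ref{lem:not-in-complete}.
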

\begin{proof}
  By induction on the typing derivation $\chortypedplus{\ctx}{V}{\tau}{\rho}$, similarly to Lemma~\ref{lem:epp-non-participant}.
  Note, however, that this Lemma is different than Lemma~\ref{lem:epp-non-participant}
  because that there we pre-suppose that $\epp{C}{L}$ exists, whereas here we do not.
\end{proof}

\begin{lem}\label{lem:epp-non-participant-cloc}
  If $L \notin \cloc{C}$ and $\epp{C}{L} = E$, then $\NtwkVal{E}$.
\end{lem}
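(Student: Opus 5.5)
We will prove the statement by structural induction on $C$, in the same style as Lemma~\ref{lem:epp-non-participant}. The difference is that the hypothesis is now $L \notin \cloc{C}$ rather than a typing judgment, so we will not appeal to the type system at all. In each case we assume $\epp{C}{L} = E$ is defined and $L \notin \cloc{C}$. Then we read off from the definition of $\cloc{\cdot}$ that $L$ avoids every annotation set and the $\cloc{\cdot}$ of every immediate subterm in evaluation or branch position. This lets us pick the ``$L$ not involved'' clause of EPP and apply the induction hypothesis to those subterms.

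The base and value-like cases are immediate. We have $\epp{X}{L} = X$. For $\rho.e$, $L \notin \rho$ gives $\NtwkUnit$. Functions and type functions have $\cloc{\cdot} = \varnothing$, and their projection, when defined, is $\NtwkFunN$, $\NtwkTFunN$, or $\NtwkUnit$, all of which are values.

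Elimination and binding forms share one pattern.
\begin{itemize}
\item For applications $C_1 \appchor{\rho} C_2$, we have $L \notin \rho$. The projection is $\epp{C_1}{L} \seqfun \epp{C_2}{L} \seqfun \NtwkUnit$, and by induction both components are values, so it collapses to $\NtwkUnit$.
\item Type application, $\FstN$, $\SndN$, and $\UnfoldN$ behave the same way, giving $\epp{C}{L} \seqfun \NtwkUnit = \NtwkUnit$.
\item For pairs, $\InlN$, $\InrN$, and $\FoldN$, the projection is either a constructor applied to projected subterms (values by induction) or a collapsed sequence of values, hence a value.
\item For $\LetN$ with $L \notin \rho$, we get $\epp{C_1}{L} \seqfun \epp{C_2}{L} = \epp{C_2}{L}$, a value by induction.
\item For sends and selections, $L \notin \{\ell\} \cup \rho$, so EPP returns $\epp{C}{L}$ and induction applies.
\item For $\KillAfter{L'}{C}$, $L \neq L'$ and $L \notin \cloc{C}$, so induction applies.
\item For $\CaseN$/$\LocalCaseN$, $L \notin \rho$ and $L$ avoids all three subterms. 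The projection is $\epp{C}{L} \seqfun (\epp{C_1}{L} \ntwkmerge \epp{C_2}{L})$. Both branches are values by induction, so the merge is a value by Lemma~\ref{lem:merge-value}.
\end{itemize}

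The main obstacle is the binders that remove a variable from $\cloc{\cdot}$: the type-let $\LetIn{\rho.\alpha \knd \kappa}{C_1}{C_2}$ and $\Fork{(\alpha,x)}{\ell}{C}$. Here we only know $L \notin \cloc{C_2} \setminus \{\alpha\}$. Since $L$ is a concrete location and $\alpha$ a variable, and containment is syntactic (Appendix~\ref{sec:set-relations}), $L \in \cloc{C_2}$ would imply $L \in \cloc{C_2} \setminus \{\alpha\}$. This holds because removing $\alpha$ never removes the concrete $L$, which is checked by a small induction on the set difference. Hence $L \notin \cloc{C_2}$ and the induction hypothesis applies. For the fork case we additionally have $L \neq \ell$, so the projection is $\epp{C}{L}$. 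For the type-let with $L \notin \rho$ the projection is $\epp{C_1}{L} \seqfun \epp{C_2}{L}$.

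One subtlety deserves a check. $\LetIn{\rho.\alpha}{C_1}{C_2}$ with $L \in \rho$ is excluded because $\rho \subseteq \cloc{\cdot}$. Likewise, the $\AmIN$ clauses are never reached, since they all require $L$ to be in an annotation contained in $\cloc{\cdot}$.
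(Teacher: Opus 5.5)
Your proof is correct and follows the same route as the paper, whose entire proof is ``by induction on $C$''; your case analysis supplies the details that induction needs. In particular, the one non-obvious step is in the type-let and fork cases: removing a variable $\alpha$ from $\cloc{C_2}$ can never remove a concrete location $L$, so $L \notin \cloc{C_2} \setminus \{\alpha\}$ gives $L \notin \cloc{C_2}$, and you handle this correctly by induction on the set difference.
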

\begin{proof}
  By induction on $C$.
\end{proof}

\begin{lem}\label{lem:epp-val-not-clocs}
  If $\epp{C}{L} = E$ where $\NtwkVal{E}$, then $L \notin \cloc{C}$.
\end{lem}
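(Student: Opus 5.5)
We proceed by structural induction on $C$, proving the contrapositive for each syntactic form: if $L \in \cloc{C}$ and $\epp{C}{L} = E$ is defined, then $E$ is not a network value. The base cases are immediate. For $C = X$ and for $C = \Fun{\rho}{F}{X}{C'}$ (and the type-function forms), $\cloc{C} = \varnothing$, so there is nothing to prove. For $C = \rho.e$ with $L \in \rho$, the projection is $\Ret{e}$. This is a network value exactly when $e$ is a local value. That is the one place where the statement needs care: we must read $\cloc{\rho.v}$ for a local value $v$ as contributing no locations, or else we must restrict the lemma to non-value local programs. We will check this against how the lemma is later used; the intended uses come from the $\Leftarrow$ direction of Lemma~\ref{lem:epp-non-participant-cloc}, paired with it. If we cannot adjust the reading, we will instead strengthen the hypothesis to $C$ being a non-value where needed.

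For the compound forms we case on whether $L$ is among the annotated locations $\rho$ (or the sender/parent $\ell$). Suppose $L \in \rho$ for an eliminator: application, type application, $\FstN$, $\SndN$, $\UnfoldN$, $\CaseN$, $\LocalCaseN$, or a $\LetN$ with $L$ binding. Then the projection's head constructor is a network eliminator, for example $E_1~E_2$, $\NtwkFst{E}$, a $\FontNtwk{case}$, or a $\NtwkLetN$. None of these are network values, so we are done. The same holds for the send and selection forms when $L = \ell$ or $L \in \rho$: the projection is $\SendTo{\cdot}{\rho}$, a collapsed sequence ending in $\RecvFrom{\ell}$, a $\FontNtwk{choose}$, or an $\AllowN$, and none of these are values. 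For $\ForkN$ with $L = \ell$, the projection is a $\NtwkForkN$, again not a value.

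Otherwise $L$ lies in $\cloc{}$ of some immediate subterm, and the projection is built from that subterm's projection by $\seqfun$, by merge, or by a homomorphic constructor. For $\seqfun$ we use the observation that $E_1 \seqfun E_2$ is a value iff both are values. If $E_1$ is not a value, the result is $E_1 \NtwkSeq E_2$, which is not a value; if $E_1$ is a value, the result is $E_2$. So if either factor is a non-value by induction, the whole is a non-value. For the branch merges in $\CaseN$ and $\LocalCaseN$ we use Lemma~\ref{lem:merge-value}: the merge is a value iff each side is. Since $\cloc{}$ of the case includes both branches, $L$ appearing in either branch suffices. For value-forming constructors (pairs, $\InlN$, $\InrN$, $\FoldN$) the projection with $L\in\rho$ is a constructor applied to subprojections, and it is a value iff they all are. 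For $\KillAfter{L'}{C'}$ we have $L \neq L'$ (otherwise the projection is undefined), so $L\in\cloc{C'}$ and we apply the induction hypothesis directly. For $\ForkN$ with $L \neq \ell$ and for type-lets binding $\alpha$, the removal of $\alpha$ in $\cloc{}$ is harmless because $L$ is a concrete name distinct from $\alpha$.

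We expect the main obstacle to be the value/local-value mismatch in the $\rho.e$ case, together with the analogous situation for pairs and sums whose components are values. There $\cloc{}$ is defined syntactically while network-valuehood is semantic. The cleanest route we would try first is to establish, as an auxiliary claim by the same induction, that $\cloc{V}$ contributes nothing for choreographic values $V$ (treating $\rho.v$ accordingly). We would then run the main induction above, with every remaining case reduced to the $\seqfun$, merge, and head-constructor observations.
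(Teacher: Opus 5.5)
You have found a real defect, and it lies in the statement rather than in your argument: as printed, the lemma is false. Take $C = \Bob.3$ and $L = \Bob$. Then $\epp{C}{\Bob} = \Ret{3}$ is a network value, yet $\cloc{C} = \{\Bob\} \ni \Bob$. The paper's proof is only ``by induction on $C$'', and it fails at exactly the $\rho.e$ base case you flagged. It fails again wherever such a value subterm is projected through a value constructor or through $\seqfun$. Nothing later in the paper cites this lemma (only its companion, Lemma~\ref{lem:epp-non-participant-cloc}, is used), so you are free to replace it.

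Your primary repair is the right one. Let $\operatorname{cloc}^{\circ}$ agree with $\operatorname{cloc}$ except that $\operatorname{cloc}^{\circ}(\rho.v) = \varnothing$ for local values $v$, and prove that if $\epp{C}{L}$ is a network value then $L \notin \operatorname{cloc}^{\circ}(C)$. Your case analysis goes through verbatim for this statement: eliminator, communication and $\NtwkForkN$ heads are never values; $E_1 \seqfun E_2$ is a value iff both are; Lemma~\ref{lem:merge-value} handles merges; value constructors are values iff their components are; and $L \neq L'$ under $\KillN$ and $L \neq \alpha$ under binders. State it with a separate function rather than by reinterpreting $\operatorname{cloc}$. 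The reason is that $\operatorname{cloc}$ also governs the out-of-order rules (e.g.\ \ruleref{C-LetI}, \ruleref{C-KillI}), so changing it would change the semantics.

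Where the proposal goes wrong is in its hedges. Strengthening the hypothesis to ``$C$ is not a value'' does not save the printed statement. The choreography $C = (\Bob.3,\ \Alice.(1 \LocalPlus 1))_{\{\Alice,\Bob\}}$ is well typed and not a value, yet $\epp{C}{\Bob} = (\Ret{3}, \NtwkUnit)$ is a network value while $\Bob \in \cloc{C}$. The culprit is value subterms buried inside $C$, so no condition on the top-level shape of $C$ helps. Only changing how $\rho.v$ contributes does; banning local-value subterms outright would also work, but makes the lemma nearly vacuous.

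Likewise, your auxiliary claim that $\operatorname{cloc}$ of a choreographic value is empty is false for the paper's definition: for example, $\cloc{(\Alice.1, \Bob.2)_{\rho}} = \{\Alice, \Bob\}$. For $\operatorname{cloc}^{\circ}$ it is a one-line induction, and it is not even needed, since your main induction already handles the constructor cases. Commit to the $\operatorname{cloc}^{\circ}$ version and drop the alternatives, and the proof is complete.
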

\begin{proof}
  By induction on $C$.
\end{proof}

\subsection{Completeness, Soundness, and Deadlock-Freedom}

\begin{lem}[Non-Participant Local Completeness]\label{lem:not-in-complete}
  If $\chortypedplus{\ctx}{C}{\tau}{\rho}$, $\langle C , \Omega \rangle \step[R] \langle C' , \Omega' \rangle$,
  $L \in \Omega \setminus \rloc{R}$,
  and $\epp{C}{L} = E$,
  then there is some $E' \lessthan E$ such that $\epp{C'}{L} = E'$.
  That is, the following diagram holds.
  \begin{center}
    \begin{tikzpicture}
      \node (C1) at (0,0) {$C$};
      \node (C2) at (4,0) {$C'$};
      \node (C1Proj) at (0,-2) {$E$};
      \node (C2Proj) at (4,-2) {$E'$};

      \draw[-implies,double equal sign distance,shorten >=2pt] (C1.east) -- (C2.west) node[midway,above]{$L \notin R$} node[pos=0.96,below]{${}_c$};
      \draw[mapsto] (C1) -- (C1Proj) node[label,midway,left,yshift=1.5pt]{$\epp{\cdot}{L}$};
      \draw[mapsto,dashed] (C2) -- (C2Proj) node[label,midway,right,yshift=1.5pt]{$\epp{\cdot}{L}$};
      \draw[-,dashed] (C1Proj.east) -- (C2Proj.west) node[midway,above]{$\greaterthan$};
    \end{tikzpicture}
  \end{center}
\end{lem}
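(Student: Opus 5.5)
I would prove this by induction on the derivation of the step $\conf{C}{\Omega} \step[R] \conf{C'}{\Omega'}$, inverting the augmented typing derivation $\chortypedplus{\ctx}{C}{\tau}{\rho}$ at each stage. Since $L \notin \rloc{R}$, the location $L$ is not directly involved in the reduction, so its projection should change only by discarding code that $\lessthan$ permits discarding: collapsed sequenced values, or an \AllowChoiceN{} branch.

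\textbf{Base rules.} Take a base rule with redex label $\rho'$ (\ruleref{C-Done}, \ruleref{C-App}, \ruleref{C-TApp}, \ruleref{C-UnfoldFold}, \ruleref{C-FstPair}, the case rules, \ruleref{C-LetV}, \ruleref{C-TyLetV}). Here $L \notin \rho'$, so we are in the ``otherwise'' branch of EPP.
\begin{itemize}
\item For \ruleref{C-Done}, both sides project to \NtwkUnit.
\item For \ruleref{C-App}, the projection is $\epp{f}{L} \seqfun \epp{V}{L} \seqfun \NtwkUnit = \NtwkUnit$, since values project to values (Lemma~\ref{lem:proj-val}). On the other side, $L \notin \rho'$ together with the typing gives $L \notin \rho \cup \tloc{\tau_2}$. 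Lemma~\ref{lem:epp-non-owner} then yields that the reduct projects to \NtwkUnit or a variable, which is related to \NtwkUnit. Definedness of the reduct's projection comes from Lemma~\ref{lem:sub-pres-epp} together with the fact that the function-projection rule required $\epp{C}{L}$ to be defined.
\item For case and local-case eliminations with $L \notin \rho'$, the projection is $\epp{V}{L} \seqfun (\epp{C_1}{L} \Merge \epp{C_2}{L})$. Because $X$ and $Y$ are not free in the branch projections, substitution leaves each branch's projection unchanged, up to collapse. Lemma~\ref{lem:lessthan-merge} then gives $\epp{C_i}{L} \lessthan \epp{C_1}{L} \Merge \epp{C_2}{L}$.
\item For \ruleref{C-LetV} and \ruleref{C-TyLetV}, use Lemma~\ref{lem:local-sub-non-mem-pres-epp}, or the type-substitution lemma when $L$ is not a binder.
\item For \ruleref{C-SendV} and \ruleref{C-Sync}, $L$ is not the sender or a receiver, so the projection is unchanged apart from a value collapsing.
\item For \ruleref{C-Fork} with $L \neq \ell$, the projection is $\epp{C}{L}$, where $\alpha$ and $x$ are not free. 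Substituting for them changes nothing, and wrapping in \KillAfterN{} with $L \neq L'$ also projects to $\epp{C}{L}$.
\item For \ruleref{C-Kill} and \ruleref{C-KillI} with $L \neq$ the killed thread, the projection is unchanged.
\end{itemize}

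\textbf{Congruence rules.} For \ruleref{C-Ctx}, apply the induction hypothesis to the inner step to get $E_0' \lessthan E_0$. Then use the fact that $\lessthan$ is a structurally compatible order, and that $\seqfun$ is monotone in $\lessthan$, to rebuild the surrounding context. One subtlety arises when the inner projection becomes a value and collapses a surrounding $\seqfun$: I would note that $E_1 \lessthan V \NtwkSeq E_2$ covers exactly this. The out-of-order rules (\ruleref{C-LetI}, \ruleref{C-TyLetI}, \ruleref{C-SyncI}, \ruleref{C-AppI}, \ruleref{C-PairI}, \ruleref{C-ForkI}) are handled the same way, using the induction hypothesis on the stepping subterm.

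\textbf{Main obstacle.} I expect the hard case to be \ruleref{C-CaseI} and \ruleref{C-LocalCaseI} when $L \notin \rho$. There the projection is a merge $\epp{C_1}{L} \Merge \epp{C_2}{L}$, and the induction hypothesis gives $E_1' \lessthan \epp{C_1}{L}$ and $E_2' \lessthan \epp{C_2}{L}$ separately. We must show that $E_1' \Merge E_2'$ is defined and lies below the original merge. This is exactly Lemma~\ref{lem:less-than-refl-merge}, whose collapse hypotheses hold by Lemma~\ref{lem:epp-simplified}. The case $L \in \rho$ is direct from compatibility. A similar concern arises for type functions and type-lets over locations, whose projections involve $\AmIIn{}$ with a substituted body. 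These reduce to the same argument via Lemma~\ref{lem:loc-sub-pres-epp}, using $L \notin \sigma$, since $L$ is live and hence not spawned in $C$.
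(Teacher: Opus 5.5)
Your overall plan matches the paper's. Both proceed by induction on the step, use the non-owner argument (Lemma~\ref{lem:epp-non-owner}) for \ruleref{C-App}, use Lemma~\ref{lem:lessthan-merge} for the case reductions, and use Lemma~\ref{lem:lessthan-refl-merge} with Lemma~\ref{lem:epp-simplified} for \ruleref{C-CaseI}.

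The gap is in the steps that substitute a location for a location variable. Lemma~\ref{lem:typ-sub-pres-epp} does not cover location variables, and Lemma~\ref{lem:loc-sub-pres-epp} requires $L \notin \sigma$, i.e.\ that the \emph{substituted} location is not $L$. Your justification (``$L$ is live and hence not spawned in $C$'') does not give this. The fact $L \notin \spawnedlocs{C}$ is the side condition of the typing lemma (Lemma~\ref{lem:loc-sub-pres-typ}), not of the EPP lemma. It also holds here because $\epp{C}{L}$ is defined, not because $L$ is live. In \ruleref{C-TyLetV} the missing ingredient is the soundness requirement on the local type $\Loc_{\rho_1}$: the bound value $\say{L'}$ satisfies $L' \in \rho_1 \subseteq \rho_2 = \rloc{R}$, while $L \notin \rloc{R}$, so $L' \neq L$ (likewise for $\LocSet_{\rho_1}$). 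This is essential. Were $L' = L$ possible, $\LetIn{\rho_2.\alpha \knd \locKnd}{\rho_3.\say{L}}{(\syncs{\Alice}{\Left}{\{\alpha\}} \seq \Alice.1)}$ with $L \notin \rho_2 \cup \{\Alice\}$ would project to $\NtwkUnit$ at $L$. Its reduct, however, projects to $\AllowOneChoice{\Alice}{\Left}{\NtwkUnit}$, which is not $\lessthan \NtwkUnit$.

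Your closing remark about type functions and type-lets has the same problem, and in two places the condition fails outright. First, consider the then-branch $\epp{\subst{C_2}{\alpha}{L}}{L}$ of a projected type-let, which matters for \ruleref{C-TyLetI} when $L$ is a binder. There the substitution is $[\alpha \mapsto L]$ itself, so Lemma~\ref{lem:loc-sub-pres-epp} is unavailable by construction. You must instead show that the step commutes with that substitution and apply the induction hypothesis to $\subst{C_2}{\alpha}{L}$. Second, in \ruleref{C-TApp} at kind $\locKnd$, knowing $L \notin \tloc{\subst{\tau}{\alpha}{\ell}} \cup \subst{\rho}{\alpha}{\ell}$ does not exclude $\ell = L$ when $\alpha$ occurs in neither set. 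For example, $(\TLamN \alpha \knd \locKnd \ldotp \Alice.(1 \LocalPlus 1)) \appchor{\{\Alice\}} \Bob$ steps with $\rloc{R} = \{\Alice\}$, and $L = \Bob$ is a non-participant equal to the substituted location.

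Treat this case like \ruleref{C-App}. The left side projects to $\NtwkUnit$, and the reduct's type and participants avoid $L$, so Lemma~\ref{lem:epp-non-owner} applies once the reduct's projection exists. That existence follows from the projected \AmIN{}. If $\ell = L$, use its then-branch $\epp{\subst{C}{\alpha}{L}}{L}$; otherwise apply Lemma~\ref{lem:loc-sub-pres-epp} to its else-branch $\epp{C}{L}$. In both cases finish with Lemma~\ref{lem:sub-pres-epp} for $F$.

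A smaller point on \ruleref{C-Ctx}. The rule $E_1 \lessthan V \NtwkSeq E_2$ needs its head to be a value already, so it cannot absorb a non-value head whose projection becomes a value. What makes $\seqfun$ monotone is that projections are collapsed (Lemma~\ref{lem:epp-simplified}). For collapsed $E$, any $E' \lessthan E$ is a value exactly when $E$ is, so that situation never arises.
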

\begin{proof}
  By induction on the step $\langle C , \Omega \rangle \step[R] \langle C' , \Omega' \rangle$.
  \begin{itemize}
    \item (\ruleref{C-Ctx}) Straightforward by induction.

    \item (\ruleref{C-Done}) Both sides of the step project to $\NtwkUnit$.

    \item (\ruleref{C-App})
    Let $\chortypedplus{\ctx, F \ty \tau_1 \arr{\rho_1} \tau_2, X \ty \tau_1}{C}{\tau_2}{\rho_1}$, 
    $\tloc{\tau_1} \cup \tloc{\tau_2} \cup \rho_1 = \rho$, and
    $\chortypedplus{\ctx}{V}{\tau_1}{\rho_2}$.
    By Lemma~\ref{lem:epp-non-participant-value}, the left side of the step projects to
    \[ \epp{f \appchor{\rho} V}{L} = \NtwkUnit \seqfun \epp{V}{L} \seqfun \NtwkUnit = \NtwkUnit \seqfun \NtwkUnit \seqfun \NtwkUnit = \NtwkUnit, \]
    where $f = \Fun{\rho}{F}{X}{C}$ and $L \notin \rho$.
    By Lemma~\ref{lem:sub-pres-typ}, $\chortypedplus{\ctx}{\subst*{C}{{F}{f}{X}{V}}}{\tau_2}{\rho_1}$.
    Therefore as $\epp{C}{L}$ must exist by the definition of EPP on $\FunN$, by Lemma~\ref{lem:epp-non-participant} $\epp{\subst*{C}{{F}{f}{X}{V}}}{L}$ is either
    a unit~$\NtwkUnit$ or variable~$Z$, both of which are satisfactory because $Z \lessthan \NtwkUnit$ and $\NtwkUnit \lessthan \NtwkUnit$.
    
    \item (\ruleref{C-TApp})
    We handle the case when the function's type variable is a location.
    The assumptions are that
    $\chortypedplus{\ctx, F \ty \allty{\alpha \knd \locKnd}{\rho_1}{\tau}, \alpha \knd \locKnd}{C}{\tau}{\rho_1}$,
    $\chorkinded{\ctx}{\ell}{\locKnd}$,
    and $L \notin \rho = \tloc{\subst{\tau}{\alpha}{\ell}} \cup \subst{\rho_1}{\alpha}{\ell}$.
    The left side of the step projects to $\epp{f \appchor{\rho} \ell}{L} = \NtwkUnit \seqfun \NtwkUnit = \NtwkUnit$,
    where $f = \TFunLoc{F}{\alpha}{C}$.
    By Lemma~\ref{lem:sub-pres-typ}, $\chortypedplus{\ctx, \alpha \knd \locKnd}{\subst{C}{F}{f}}{\tau}{\rho_1}$,
    and by Lemma~\ref{lem:loc-sub-pres-typ},
    $\chortypedplus{\ctx}{\subst*{C}{{F}{f}{\alpha}{\ell}}}{\subst{\tau}{\alpha}{\ell}}{\subst{\rho_1}{\alpha}{\ell}}$.
    Therefore as $\epp{C}{L}$ must exist by the definition of EPP on $\TFunN$, by Lemma~\ref{lem:epp-non-participant} $\epp{\subst*{C}{{F}{f}{\alpha}{\ell}}}{L}$ is either a unit or variable.
    The arguments when the function's type variable is a location set, program type, or local type are analogous.

    \item (\ruleref{C-UnfoldFold}) By Lemma~\ref{lem:epp-non-participant-value}, the left side projects to $\epp{\Unfold{\rho}{(\Fold{\rho}{V})}}{L} = \epp{V}{L} \seqfun \NtwkUnit = \NtwkUnit$.
    The right side also projects to $\epp{V}{L} = \NtwkUnit$.

    \item (\ruleref{C-FstPair}, \ruleref{C-SndPair}) By Lemma~\ref{lem:epp-non-participant-value}, the left side projects to $\epp{\Fst{\rho}{(V_1,V_2)_\rho}}{L} = \epp{V_1}{L} \seqfun \epp{V_2}{L} \seqfun \NtwkUnit = \NtwkUnit$.
    The right side also projects to $\epp{V_1}{L} = \NtwkUnit$.
    The case for \ruleref{C-SndPair} is symmetric.

    \item (\ruleref{C-CaseInl}, \ruleref{C-CaseInr}) By Lemma~\ref{lem:epp-non-participant-value}, the left side projects to
    \[ \epp{\Case{\rho}{(\Inl{\rho}{V})}{X}{C_1}{Y}{C_2}}{L} = \epp{V}{L} \seqfun \epp{C_1}{L} \ntwkmerge \epp{C_2}{L} = \epp{C_1}{L} \ntwkmerge \epp{C_2}{L}. \]
    As $X \notin \fv{\epp{C_1}{L}}$ by the above projection and by Lemma~\ref{lem:sub-pres-epp}, the right side projects to
    \[ \epp{\subst{C_1}{X}{V}}{L} \lessthan \subst{\epp{C_1}{L}}{X}{\epp{V}{L}} = \epp{C_1}{L} \lessthan \epp{C_1}{L} \ntwkmerge \epp{C_2}{L}. \]
    The case for \ruleref{C-CaseInr} is symmetric.

    \item (\ruleref{C-LetV}) The left side projects to $\epp{\LetIn{\rho_1.x}{\rho_2.v}{C}}{L} = \epp{\rho_2.v}{L} \seqfun \epp{C}{L} = \epp{C}{L}$.
    By Lemma~\ref{lem:local-sub-non-mem-pres-epp}, the right side projects to $\epp{\hsubst{C}{\rho_1}{x}{v}}{L} = \epp{C}{L}$.

    \item (\ruleref{C-TyLetV}) The left side projects to
    \[ \epp{\LetIn{\rho_2.\alpha \knd \locKnd}{\rho_3.\say{L'}}{C}}{L} = \epp{\rho_3.\say{L'}}{L} \seqfun \epp{C}{L} = \epp{C}{L}. \]
    By soundness of the local $\Loc$ type, we must have that $L' \in \rho_1 \subseteq \rho_2 \subseteq \rho_3$, and hence $L \neq L'$.
    Therefore by Lemma~\ref{lem:loc-sub-pres-epp}, and as $\alpha \notin \fv{\epp{C}{L}}$, the right side projects to $\epp{\subst{C}{\alpha}{L'}}{L} = \subst{\epp{C}{L}}{\alpha}{L'} = \epp{C}{L}$ which suffices.
    The cases when the type variable is a location set is analogous.

    \item (\ruleref{C-SendV}) The left side projects to
    $\epp{\rho_1.v \ChorSend[L'] \rho_2}{L} = \epp{\rho_1.v}{L}$, and the right side projects to $\epp{(\rho_1 \cup \rho_2).v}{L}$.
    Because $L \notin \rho_2$, whether or not $L \in \rho_1$ we have that projections are identical.

    \item (\ruleref{C-Sync}) The left and right side both project to $\epp{\syncs{L'}{d}{\rho} \seq C}{L} = \epp{C}{L}$.

    \item (\ruleref{C-Fork})
    Let $L''$ be the newly spawned location.
    As $L \in \Omega$ but $L'' \notin \Omega$, we have that $L \neq L''$.
    Therefore by Lemmas~\ref{lem:loc-sub-pres-epp} and \ref{lem:local-sub-non-mem-pres-epp},
    and as $\alpha, x \notin \fv{\epp{C}{L}}$, we have that
    \begin{align*}
      \epp{\Fork{(\alpha,x)}{L'}{C}}{L} &= \epp{C}{L}\\
      &= \subst*{\epp{C}{L}}{{\alpha}{L''}{x}{\say{L''}}}\\
      &= \epp{\subst*{C}{{\alpha}{L''}{\{L,L'\}.x}{\say{L''}}}}{L},
    \end{align*}
    which suffices.

    \item (\ruleref{C-Kill})
    For $L \neq L'$, we directly have that $\epp{\KillAfter{L'}{V}}{L} = \epp{V}{L}$,
    so the conclusion is satisfied by reflexivity of~$\lessthan$.

    \item (\ruleref{C-KillI})
    Suppose the step is $\langle \KillAfter{L'}{C} , \Omega \rangle \step[R] \langle C , \Omega \setminus \{L'\} \rangle$
    with~$L' \notin \cloc{C}$ and $L \neq L'$.
    Then $\epp{\KillAfter{L'}{C}}{L} = \epp{C}{L}$, so the conclusion similarly follows.

    \item (\ruleref{C-CaseI})
    First consider the case when $L \notin \rho$.
    We can apply the inductive hypothesis to $C_1$ and $C_2$ to see that
    \begin{align*}
      \epp{\Case{\rho}{C}{X}{C_1}{Y}{C_2}}{L} &= \epp{C}{L} \seqfun \epp{C_1}{L} \ntwkmerge \epp{C_2}{L} \\
      &\greaterthan \epp{C}{L} \seqfun \epp{C_1'}{L} \ntwkmerge \epp{C_2'}{L}\\
      &= \epp{\Case{\rho}{C}{X}{C_1'}{Y}{C_2'}}{L},
    \end{align*}
    where the inequality holds because of Lemmas~\ref{lem:lessthan-refl-merge} and \ref{lem:epp-simplified}.
    The second equality holds because $X \notin \fv{\epp{C_1'}{L}}$ and $Y \notin \fv{\epp{C_2'}{L}}$ by Lemma~\ref{lem:lessthan-pres-fv}
    and the assumption that the original choreography projects.
    In the alternate case that $L \in \rho$ the logic is straightforward:
    \begin{align*}
      \epp{\Case{\rho}{C}{X}{C_1}{Y}{C_2}}{L}
      &= \NtwkCase{\epp{C}{L}}{X}{C_1}{Y}{C_2} \\
      &\greaterthan \NtwkCase{\epp{C}{L}}{X}{C_1'}{Y}{C_2'} \\
      &= \epp{\Case{\rho}{C}{X}{C_1'}{Y}{C_2'}}{L}.
    \end{align*}
    The other out-of-order steps follow similar logic.
  \end{itemize}
\end{proof}

\begin{cor}\label{lem:not-in-complete-prime}
    If $\chortypedplus{\ctx}{C}{\tau}{\rho}$, $\langle C , \Omega \rangle \step[R] \langle C' , \Omega' \rangle$,
    $L \in \Omega \setminus \rloc{R}$,
    and $\eppfork{C}{L} = E$,
    then there is some $E' \lessthan E$ such that $\eppfork{C'}{L} = E'$.
\end{cor}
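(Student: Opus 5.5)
The plan is to reduce the corollary to Lemma~\ref{lem:not-in-complete} by unfolding the definition of $\eppfork{\cdot}{L}$. Two cases arise, according to whether $\tbodies{C}{L}$ is empty. For each case I will compare $\tbodies{C}{L}$ with $\tbodies{C'}{L}$ along the step $\conf{C}{\Omega} \step[R] \conf{C'}{\Omega'}$.

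\emph{Case $\tbodies{C}{L} = \varnothing$.} Here $\eppfork{C}{L} = \epp{C}{L} = E$. I first argue that $\tbodies{C'}{L} = \varnothing$. A new subterm $\KillAfter{L}{C''}$ can only be created by \ruleref{C-Fork} spawning $L$. That is impossible, since $L \in \Omega$ and \ruleref{C-Fork} requires the fresh name to lie outside $\Omega$. Substitutions cannot create \KillAfterN\ subterms either, by Lemmas~\ref{lem:loc-sub-chor-spawnedlocs}--\ref{lem:sub-chor-spawnedlocs-empty} together with the fact that substituted values have no spawned locations (Lemma~\ref{lem:value-participants}). Hence $\eppfork{C'}{L} = \epp{C'}{L}$, and Lemma~\ref{lem:not-in-complete} yields the required $E' \lessthan E$.

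\emph{Case $\tbodies{C}{L} = \{C_1,\dots,C_n\}$ with $n>0$.} Here
\[E = (\epp{C_1}{L} \ntwkmerge \cdots \ntwkmerge \epp{C_n}{L}) \seqfun \NtwkExit.\]
I will proceed by induction on the derivation of the step, tracking the \KillAfterN\ subterms for $L$. The step cannot be \ruleref{C-Kill} or \ruleref{C-KillI} on $L$ itself, because those have $\rloc{R} = \{L\}$, contradicting $L \notin \rloc{R}$.

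The structural cases (\ruleref{C-Ctx}, \ruleref{C-LetI}, \ruleref{C-PairI}, and so on) either leave every body $C_i$ untouched or step inside one body $C_i$. In the latter case I decompose the step. Using the evaluation contexts and the $\bowtie$ structure of $\mathrm{tbodies}$, I obtain a step $\conf{C_i}{\Omega} \step[R] \conf{C_i'}{\Omega'}$, and I apply Lemma~\ref{lem:not-in-complete} to it, with typing of $C_i$ coming from inversion on \ruleref{S-Kill}. This gives $\epp{C_i'}{L} \lessthan \epp{C_i}{L}$. Steps under \ruleref{C-CaseI} and \ruleref{C-LocalCaseI} step all copies $C_i$ in both branches with the same redex, and I handle them the same way, one body at a time. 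Steps outside the bodies, including $\beta$-reductions and substitutions that reach \KillAfterN\ subterms, keep the bodies unchanged. The exception is substitution into the bodies themselves; for this I will use the substitution-preservation lemmas (Lemma~\ref{lem:sub-pres-epp}, Corollary~\ref{lem:local-sub-pres-epp}, Lemma~\ref{lem:loc-sub-pres-epp}), whose conclusions are $\lessthan$-bounds.

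A branching step \ruleref{C-CaseInl} can discard some of the bodies, so that $\tbodies{C'}{L}$ is a subset of the old set, possibly with some elements stepped. Lemma~\ref{lem:lessthan-merge} handles this, since the merge of a subfamily is $\lessthan$ the merge of the whole family. Finally, I combine the pointwise bounds with Lemma~\ref{lem:lessthan-refl-merge}, using Lemma~\ref{lem:epp-simplified} to discharge its collapsed-form hypotheses, and then observe that $\seqfun \NtwkExit$ is monotone for $\lessthan$.

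The main obstacle I expect is the second case. The difficulty is showing that a step can never empty $\tbodies{\cdot}{L}$ entirely while $L$ is alive and not involved in $R$. If it did, $\eppfork{C'}{L}$ would fall back to $\epp{C'}{L}$, which is \emph{undefined} whenever a \KillAfterN\ for $L$ remains. Separately, a plain $\epp{C'}{L}$ would not compare with a term ending in $\NtwkExit$ anyway. For the first point I plan to use the typing conditions of \ruleref{S-Case} and \ruleref{S-LocalCase} ($\spawnedlocs{C_1} = \spawnedlocs{C_2}$), which guarantee that a live spawned $L$ appears in both branches. For the second point I will check that the $\bowtie$ clauses never produce a conflict, using the disjointness premises of the augmented typing rules.
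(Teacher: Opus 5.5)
Your proposal is correct and follows the same route as the paper. The paper splits on whether $L \in \spawnedlocs{C}$ (for well-typed $C$ this coincides with your $\tbodies{C}{L} \neq \varnothing$), and in the thread case either takes $E' = E$ when the step lies outside $L$'s \KillAfterN\ body or applies Lemma~\ref{lem:not-in-complete} to that body; your version is a more careful elaboration of that sketch, handling several bodies arising from case branches and bodies discarded by \ruleref{C-CaseInl} via Lemma~\ref{lem:lessthan-merge} (where the paper's ``$E' = E$'' only holds up to $\lessthan$), and the monotonicity of ${-} \seqfun \NtwkExit$ you invoke does not hold for arbitrary network programs but is fine here because the larger side is a merge of collapsed projections.
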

\begin{proof}
  If $L \notin \spawnedlocs{C}$, then this follows immediately by Lemma~\ref{lem:not-in-complete}.
  Otherwise if $L \in \spawnedlocs{C}$, then it either follows by setting $E' = E$ in the
  case that the reduction does not occur in the scope of the $\KillN$ expression that $L$
  is executing, and otherwise if it does, the result also follows by applying Lemma~\ref{lem:not-in-complete}
  to that subexpression.
\end{proof}

\begin{lem}[Participant Local Completeness]\label{lem:local-complete}
  If $\chortypedplus{\ctx}{C}{\tau}{\rho}$, $\langle C , \Omega \rangle \step[R] \langle C' , \Omega' \rangle$,
  $L \in \Omega \cup \rloc{R}$,
  $R$ is not a \KillN step,
  and $\epp{C}{L} = E$, there is some $E_1'$ and $E_2'$ such that $E_1' \lessthan E_2'$, $\epp{C'}{L} = E_1'$, and $L \triangleright E \ntwkstepss{\epp{R}{L}} E_2'$.
  That is, the following diagram holds.
  \begin{center}
    \begin{tikzpicture}
      \node (C1) at (0,0) {$C$};
      \node (C2) at (5,0) {$C'$};
      \node (C1Proj) at (0,-2) {$E$};
      \node (E') at (4,-2) {$E'$};
      \node (less) at (4.4,-2) {$\greaterthan$};
      \node (C2Proj) at (5,-2) {$\epp{C'}{L}$};

      \draw[-implies,double equal sign distance,shorten >=2pt] (C1.east) -- (C2.west) node[midway,above]{$L \in R$} node[pos=0.97,below]{${}_c$};
      \draw[mapsto] (C1) -- (C1Proj) node[label,midway,left,yshift=1.5pt]{$\epp{\cdot}{L}$};
      \draw[mapsto,dashed] (C2) -- (C2Proj) node[label,midway,right,yshift=1.5pt]{$\epp{\cdot}{L}$};
      \draw[-implies,double equal sign distance,dashed,shorten >=1pt] (C1Proj.east) -- (E'.west) node[midway,above]{$\epp{R}{L}$} node[pos=0.99,above,yshift=-3pt]{${}^+$};
    \end{tikzpicture}
  \end{center}
\end{lem}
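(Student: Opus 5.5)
The plan is to argue by induction on the derivation of $\langle C , \Omega \rangle \step[R] \langle C' , \Omega' \rangle$, mirroring the structure of Lemma~\ref{lem:not-in-complete}. In each case we compute $E = \epp{C}{L}$, compute $\epp{C'}{L}$ (it is defined because the relevant substitution lemmas apply), and exhibit the network steps labelled by $\epp{R}{L}$. Note that the hypothesis reads $L \in \Omega \cup \rloc{R}$. For $L \notin \rloc{R}$ the list $\epp{R}{L}$ is empty, and Lemma~\ref{lem:not-in-complete} already yields $\epp{C'}{L} \lessthan E$ with zero steps; strictly speaking this needs the $+$ to be read as $*$. So the substantive content is the case $L \in \rloc{R}$, and we restrict attention to that.

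\textbf{Base cases.} Each base rule is matched by the corresponding network rule.
\begin{itemize}
\item \ruleref{C-Done}: apply \ruleref{N-Ret}.
\item \ruleref{C-App}: $L \in \rho$ projects to $\epp{f}{L}~\epp{V}{L}$, which steps by \ruleref{N-App} to $\epp{C}{L}[F \mapsto \epp{f}{L}, X \mapsto \epp{V}{L}]$. Lemma~\ref{lem:sub-pres-epp} identifies $\epp{C'}{L}$ with the collapse of this term, and Lemma~\ref{lem:collapse-less-than} gives the required $\lessthan$.
\item \ruleref{C-TApp}: this takes two $\iota$ steps, namely \ruleref{N-TApp} and then the $\AmIInN$ branch. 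In the $\NtwkThen$ branch we use Lemma~\ref{lem:loc-sub-pres-epp-var} and its corollary; in the $\NtwkElse$ branch we use Lemma~\ref{lem:loc-sub-pres-epp} together with $L \notin \sigma$.
\item \ruleref{C-TyLetV}: handled analogously to \ruleref{C-TApp}.
\item \ruleref{C-LetV}: uses Lemma~\ref{lem:local-sub-mem-pres-epp}.
\item \ruleref{C-CaseInl}, \ruleref{C-LocalCaseInl}: for $L \in \rho$, the network case rule followed by Lemma~\ref{lem:sub-pres-epp}.
\item \ruleref{C-SendV}: sender and receiver take \ruleref{N-Send} and \ruleref{N-Recv} respectively, matching $\epp{R}{L}$. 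Both reducts project to $\Ret{v}$, using the collapsing $\seqfun$ on the value $\epp{\rho_1.v}{L}$.
\item \ruleref{C-Sync}: handled the same way with \ruleref{N-Choose} and \ruleref{N-AllowL}/\ruleref{N-AllowR}.
\item \ruleref{C-Fork}: for $L = \ell$, \ruleref{N-Fork} produces the label $\RForkNtwk{L'}{\epp{C}{\alpha}[\alpha \mapsto L', x \mapsto \say{L'}]}$. The corollary of Lemma~\ref{lem:loc-sub-pres-epp-var} (with $L' \notin \nl{C}$, since $L' \notin \Omega$) shows that this label equals the one in $\epp{R}{L}$. The continuation $\epp{C}{L}[\ldots]$ equals $\epp{\KillAfter{L'}{C'}}{L}$ by Lemmas~\ref{lem:loc-sub-pres-epp} and~\ref{lem:local-sub-mem-pres-epp}.
\end{itemize}
\KillN steps are excluded by hypothesis.

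\textbf{Inductive cases.} For \ruleref{C-Ctx} we push through evaluation contexts. If the context position projects to a non-evaluation position for $L$ (for example $\epp{C_1}{L} \seqfun \epp{C_2}{L}$ when $L \notin \rho$), Lemma~\ref{lem:epp-non-participant} together with sound participants (Lemma~\ref{lem:single-aug-participants-sound}) shows that the preceding component is already a value. The step of the hole then lifts by \ruleref{N-Ctx}. When the hole's reduct becomes a value, the collapsed projection is $\lessthan$ the uncollapsed network term via the rule $E_1 \lessthan V \NtwkSeq E_2$.

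\textbf{Out-of-order rules and the main obstacle.} For \ruleref{C-LetI}, \ruleref{C-AppI}, \ruleref{C-PairI} and \ruleref{C-ForkI}, the condition $\cloc{C_1} \cap \rloc{R} = \varnothing$ together with Lemma~\ref{lem:epp-non-participant-cloc} shows that $\epp{C_1}{L}$ is a value. Hence $\seqfun$ has collapsed it, or it sits in value position of an application or pair, and the step of $C_2$ lifts by induction. The main obstacle is \ruleref{C-CaseI} and \ruleref{C-LocalCaseI}. Here $L \notin \rho$ and $\epp{C}{L}$ is a value, so $E = \epp{C_1}{L} \ntwkmerge \epp{C_2}{L}$. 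The induction hypothesis gives steps in each branch with the same labels $\epp{R}{L}$. We then need Lemma~\ref{lem:merge-pres-step} (iterated along the step list) to step the merge, and Lemma~\ref{lem:lessthan-refl-merge} (using Lemma~\ref{lem:epp-simplified} for collapsedness) to show that the merge of the new projections exists and is $\lessthan$ the result. Iterating Lemma~\ref{lem:merge-pres-step} requires stepping through $\lessthan$-related intermediate terms; we expect to handle this with Lemma~\ref{lem:less-than-reach} and the lifting property (Lemma~\ref{thm:lifting-sim}).
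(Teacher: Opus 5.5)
Your proposal follows essentially the same route as the paper's proof. Both argue by induction on the choreographic step, match each base rule with its network counterpart via the EPP substitution lemmas, use $\cloc{\cdot}$-disjointness with Lemma~\ref{lem:epp-non-participant-cloc} for the out-of-order rules, and discharge \ruleref{C-CaseI}/\ruleref{C-LocalCaseI} with Lemmas~\ref{lem:merge-pres-step} and~\ref{lem:lessthan-refl-merge}. You are in places more explicit than the paper, e.g.\ on \ruleref{C-SendV}, \ruleref{C-Sync}, and reading the hypothesis as $L \in \rloc{R}$.

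One correction, in the \ruleref{C-Fork} case. The corollary of Lemma~\ref{lem:loc-sub-pres-epp-var} only handles the substitution $\alpha \mapsto L'$. The substitution $x \mapsto \say{L'}$ must go through Lemma~\ref{lem:local-sub-mem-pres-epp}, which determines $\epp{C'}{L'}$ only up to $\collapse{\cdot}$; the paper's proof of Lemma~\ref{lem:thread-proj-exists} likewise obtains only $\lessthan$ there. So exact agreement of the \NtwkForkN label with $\epp{R}{L}$ does not follow from your citation, and the paper's own proof leaves this point unaddressed as well.
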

\begin{proof}
  By induction on the step $\langle C , \Omega \rangle \step[R] \langle C' , \Omega' \rangle$.
  \begin{itemize}
    \item (\ruleref{C-Ctx}) Straightforward by induction.

    \item (\ruleref{C-Done}) Apply \ruleref{N-Ret}.

    \item (\ruleref{C-App})
    The left side of the step projects to
    \[ \epp{f \appchor{\rho} V}{L} = (\NtwkFun{F}{X}{\epp{C}{L}}) ~ \epp{V}{L}. \]
    We can apply \ruleref{N-App} to step to $\subst*{\epp{C}{L}}{{F}{\epp{f}{L}}{X}{\epp{V}{L}}}$,
    which is satisfactory by Lemma~\ref{lem:sub-pres-epp}.

    \item (\ruleref{C-TApp})
    We consider the case when the type variable is a location.
    The left side of the step projects to
    \[ \epp{f \appchor{\rho} \ell}{L} = (\NtwkTFun{F}{\alpha}{\AmI{\alpha}{\epp{\subst{C}{\alpha}{L}}{L}}{\epp{C}{L}}}) ~ \ell. \]
    We first apply \ruleref{N-TApp} to step to
    \[ \AmI{\ell}{\subst{\epp{\subst{C}{\alpha}{L}}{L}}{F}{\epp{f}{L}}}{\subst*{\epp{C}{L}}{{F}{\epp{f}{L}}{\alpha}{\ell}}}. \]
    If $L = \ell$, we apply \ruleref{N-IAmIn} to then step to
    \[ \subst{\epp{\subst{C}{\alpha}{L}}{L}}{F}{\epp{f}{L}} \greaterthan \epp{\subst*{C}{{F}{f}{\alpha}{L}}}{L} = \epp{C'}{L}. \]
    Otherwise suppose $L \neq \ell$.
    We apply \ruleref{N-IAmNotIn} to step to $\subst*{\epp{C}{L}}{{F}{\epp{f}{L}}{\alpha}{\ell}}$,
    and by Lemmas~\ref{lem:loc-sub-pres-epp} and \ref{lem:sub-pres-epp} have that
    \[
      \epp{C'}{L} = \epp{\subst*{C}{{F}{f}{\alpha}{\ell}}}{L}
      = \subst{\epp{\subst{C}{F}{f}}{L}}{\alpha}{\ell}
      \lessthan \subst*{\epp{C}{L}}{{F}{\epp{f}{L}}{\alpha}{\ell}}
    \]
    as required.
    The argument when the type variable is a location set, program type, or local type are analogous.

    \item (\ruleref{C-CaseInl}, \ruleref{C-CaseInr}) The left side projects to
    \[ \epp{\Case*{\rho}{(\Inl{\rho}{V})}{X}{C_1}{Y}{C_2}}{L} = \NtwkCase*{\NtwkInl{\epp{V}{L}}}{X}{\epp{C_1}{L}}{Y}{\epp{C_2}{L}} \]
    We apply \ruleref{N-CaseInl} to step to $\subst{\epp{C_1}{L}}{X}{V}$, which is satisfactory by Lemma~\ref{lem:sub-pres-epp}.
    The argument for \ruleref{C-CaseInr} is symmetric.

    \item (\ruleref{C-LetV}) The left side projects to $\epp{\LetIn{\rho_1.x}{\rho_2.v}{C}}{L} = \NtwkLetIn{x}{\Ret{v}}{\epp{C}{L}}$
    because $L \in \rho_1 \subseteq \rho_2$.
    We apply \ruleref{N-Let} to step to $\hsubst{\epp{C}{L}}{\rho_1}{x}{v}$, which is satisfactory by Corollary~\ref{lem:local-sub-pres-epp}.

    \item (\ruleref{C-TyLetV})
    We consider the case when the type variable is a location.
    The left side of the step projects to
    \[ \epp{\LetIn{\rho_2.\alpha}{\rho_3.\say{L'}}{C}}{L} = \NtwkLetIn{\alpha}{\Ret{\say{L'}}}{\AmI{\alpha}{\epp{\subst{C}{\alpha}{L}}{L}}{\epp{C}{L}}} \]
    because $L \in \rho_2 \subseteq \rho_3$.
    We first apply \ruleref{N-TyLet} to step to
    \[ \AmI{L'}{\epp{\subst{C}{\alpha}{L}}{L}}{\subst{\epp{C}{L}}{\alpha}{L'}}. \]
    If $L = L'$, we apply \ruleref{N-IAmIn} to then step to
    \[ \epp{\subst{C}{\alpha}{L}}{L} = \epp{\subst{C}{\alpha}{L'}}{L} = \epp{C'}{L}. \]
    Otherwise if $L \neq L'$ we apply \ruleref{N-IAmNotIn} to step to $\subst{\epp{C}{L}}{\alpha}{L'}$,
    and by Lemma~\ref{lem:loc-sub-pres-epp} we have that
    \[
      \epp{C'}{L} = \epp{\subst{C}{\alpha}{L'}}{L}
      = \subst{\epp{C}{L}}{\alpha}{L'}
    \]
    as required.
    The argument when the type variable is a location set or local type are analogous.

    \item (\ruleref{C-Fork})
    Let $L'$ be the newly spawned location.
    As $L \in \Omega$ but $L' \notin \Omega$, we have that $L \neq L'$.
    Thus the left side of the step projects to
    \[ \epp{\Fork{(\alpha,x)}{L}{C}}{L} = \NtwkFork{(\alpha,x)}{\epp{C}{\alpha}}{\epp{C}{L}}. \]
    By applying \ruleref{N-Fork} we can step to $\subst*{\epp{C}{L}}{{\alpha}{L'}{x}{\say{L'}}}$,
    and by Lemma~\ref{lem:loc-sub-pres-epp} and Corollary~\ref{lem:local-sub-pres-epp}
    \[
      \epp{C'}{L}
      = \epp{\subst*{C}{{\alpha}{L'}{\{L,L'\}.x}{\say{L'}}}}{L}
      \lessthan \subst*{\epp{C}{L}}{{\alpha}{L'}{x}{\say{L'}}}
    \]
    as required.

    \item (\ruleref{C-CaseI})
    We can apply the inductive hypothesis to $C_1$ and $C_2$ to find some
    $E_1$ and $E_2$ such that $\epp{C_1'}{L} \lessthan E_1$, $\epp{C_2'}{L} \lessthan E_2$,
    $\langle \epp{C_1}{L} , \Omega \rangle \ntwkstepss{\epp{R}{L}} \langle E_1 , \Omega' \rangle$, and
    $\langle \epp{C_2}{L} , \Omega \rangle \ntwkstepss{\epp{R}{L}} \langle E_2 , \Omega' \rangle$,
    where the $\Omega'$ are equivalent by Lemma~\ref{lem:lessthan-refl-merge}.
    Because $\rho \cap \rloc{R} = \varnothing$ and $L \in \rloc{R}$, we must have that $L \notin \rho$.
    Similarly because $\cloc{C} \cap \rloc{R} = \varnothing$, we have that $L \notin \cloc{C}$.
    Thus by Lemma~\ref{lem:epp-non-participant-cloc}, $\epp{C}{L}$ is a value.
    Then the projection of the left-hand side is 
    \[
      \epp{\Case*{\rho}{C}{X}{C_1}{Y}{C_2}}{L} = \epp{C}{L} \seqfun \epp{C_1}{L} \ntwkmerge \epp{C_2}{L} = \epp{C_1}{L} \ntwkmerge \epp{C_2}{L},
    \]
    and the projection of the right-hand side is
    \[
      \epp{\Case*{\rho}{C}{X}{C_1'}{Y}{C_2'}}{L} = \epp{C}{L} \seqfun \epp{C_1'}{L} \ntwkmerge \epp{C_2'}{L} = \epp{C_1'}{L} \ntwkmerge \epp{C_2'}{L}.
    \]
    Using Lemma~\ref{lem:merge-pres-step} allows the required steps to be made on the right-hand side.
    The other out-of-order steps follow similar logic.
  \end{itemize}
\end{proof}

\begin{cor}\label{lem:local-complete-prime}
  If $\chortypedplus{\ctx}{C}{\tau}{\rho}$, $\langle C , \Omega \rangle \step[R] \langle C' , \Omega' \rangle$,
  $L \in \Omega \cup \rloc{R}$,
  $R$ is not a \KillN step,
  and $\eppfork{C}{L} = E$, there is some $E_1'$ and $E_2'$ such that $E_1' \lessthan E_2'$, $\eppfork{C'}{L} = E_1'$, and $L \triangleright E \ntwkstepss{\epp{R}{L}} E_2'$.
\end{cor}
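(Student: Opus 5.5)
This corollary lifts Lemma~\ref{lem:local-complete} from $\epp{\cdot}{L}$ to the thread-aware projection $\eppfork{\cdot}{L}$. I would mirror the argument of Corollary~\ref{lem:not-in-complete-prime} and split on whether $\tbodies{C}{L}$ is empty.

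\emph{Case $\tbodies{C}{L} = \varnothing$.} Here $\eppfork{C}{L} = \epp{C}{L} = E$, so Lemma~\ref{lem:local-complete} gives $E_1' \lessthan E_2'$ with $\epp{C'}{L} = E_1'$ and $L \triangleright E \ntwkstepss{\epp{R}{L}} E_2'$. It remains to show $\tbodies{C'}{L} = \varnothing$, so that $\eppfork{C'}{L} = \epp{C'}{L}$. By Lemma~\ref{lem:single-preservation}(3), the only new spawned location is $\Omega' \setminus \Omega$. That location is the fresh thread of a \ruleref{C-Fork} step, and it cannot be $L$, because $L \in \Omega$. (If $L \in \rloc{R}$ but $L \notin \Omega$, the rules' side conditions force $L \in \Omega$ for non-fork steps.)

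\emph{Case $\tbodies{C}{L} = \{C_1,\dots,C_n\}$.} Here $E = (\epp{C_1}{L} \ntwkmerge \cdots \ntwkmerge \epp{C_n}{L}) \seqfun \NtwkExit$. The plan is to locate the step $R$ relative to the $\KillAfter{L}{C_i}$ subterms, by induction on the derivation of the step.

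If $R$ occurs inside the bodies, the step decomposes through \ruleref{C-Ctx}, \ruleref{C-CaseI}, and the other out-of-order rules. It reaches each body $\KillAfter{L}{C_i}$, and the same redex $R$ appears in every body because the \textsc{CaseI} rules require identical redices in both branches. In that situation each $C_i$ steps to some $C_i'$ with label $R$, and $R$ is not a kill of $L$.

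Applying Lemma~\ref{lem:local-complete} to each $C_i$ yields steps $\epp{C_i}{L} \ntwkstepss{\epp{R}{L}} E_{2,i}$ with $\epp{C_i'}{L} \lessthan E_{2,i}$. I then combine these as follows:
\begin{itemize}
\item Lemma~\ref{lem:merge-pres-step}, applied repeatedly along the step sequence, lets the merged program take the same steps, giving a target above the merge of the $E_{2,i}$.
\item Lemma~\ref{lem:lessthan-refl-merge}, together with Lemma~\ref{lem:epp-simplified} (the projections are collapsed), shows that the merge of the $\epp{C_i'}{L}$ exists and is $\lessthan$ that target.
\item The steps lift through the context $\hole \seqfun \NtwkExit$ by \ruleref{N-Ctx}. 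If the merged program becomes a value, $V \NtwkSeq \NtwkExit$ collapses, and the $\lessthan$ rule $E_1 \lessthan V \NtwkSeq E_2$ absorbs the difference.
\end{itemize}

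If instead $R$ occurs outside all the $\KillAfter{L}{\cdot}$ subterms, I need $L \notin \rloc{R}$ to conclude that $\eppfork{C'}{L} = E$. This follows from the scoping premises of the augmented type system, such as \ruleref{S-LetLocal} and \ruleref{S-Kill}, which forbid the thread $L$ from participating outside its lexical scope. Since $L \notin \rloc{R}$, the hypothesis $L \in \Omega \cup \rloc{R}$ is met only trivially, so this subcase falls under Corollary~\ref{lem:not-in-complete-prime}. Strictly, then, the conclusion should be read with zero or more steps. I expect the statement is intended for $L \in \rloc{R}$, in which case this subcase is vacuous.

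The main obstacle is the bookkeeping in the second case. It requires a small auxiliary lemma: when $\tbodies{C}{L} \neq \varnothing$ and $L \in \rloc{R}$, the step either occurs under every $\KillAfter{L}{C_i}$ with the same redex, or happens outside them and does not involve $L$. I would prove this lemma by induction on the step, using the $\bowtie$ structure of $\tbodies{\cdot}{L}$ and the disjointness premises of the augmented typing rules.
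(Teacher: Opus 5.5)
Your proposal is correct and follows essentially the paper's route. The paper splits on whether $L$ is a spawned thread, applies Lemma~\ref{lem:local-complete} directly if not, and otherwise applies it to the body of the \KillAfterN that $L$ is executing. You do the same, and you additionally supply bookkeeping the paper leaves implicit: the merge over several bodies arising from case branches, the lifting through $\hole \seqfun \NtwkExit$, the preservation of $\tbodies{C'}{L} = \varnothing$ in the first case, and the observation that the hypothesis should read $L \in \Omega \cap \rloc{R}$.

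One small slip: Lemma~\ref{lem:single-preservation} needs $\nl{\rho} \subseteq \Omega$, which is not a hypothesis here. However, $L \notin \spawnedlocs{C'}$ follows directly by induction on the step, since only \textsc{C-Fork} introduces a new \KillAfterN, and it does so for a fresh name outside $\Omega$.
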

\begin{proof}
  If $L \notin \spawnedlocs{C}$, then this follows immediately by Lemma~\ref{lem:local-complete}.
  Otherwise if $L \in \spawnedlocs{C}$, then the reduction must occur in the scope of the $\KillN$ expression that $L$ is executing,
  and the result also follows by applying Lemma~\ref{lem:local-complete} to that subexpression.
\end{proof}

\begin{lem}[Kill-Step Local Completeness]\label{lem:kill-complete}
  If $\chortypedplus{\ctx}{C}{\tau}{\rho}$, $\langle C , \Omega \rangle \step[\RKill{L}] \langle C' , \Omega' \rangle$,
  $L \in \Omega$,
  and $\eppfork{C}{L} = E$, then $E \ntwkstep{\RExitNtwk} \NtwkUnit$.
\end{lem}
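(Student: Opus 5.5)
The proof goes by induction on the derivation of the kill step $\langle C , \Omega \rangle \step[\RKill{L}] \langle C' , \Omega' \rangle$. Only three kinds of rule can produce a step labelled $\RKill{L}$: \ruleref{C-Kill}, \ruleref{C-KillI}, and the congruence and out-of-order rules that propagate an inner $\RKill{L}$ step. Those are \ruleref{C-Ctx}, and \ruleref{C-LetI}, \ruleref{C-CaseI}, and their siblings. In each case we compute $\eppfork{C}{L}$ and reduce it to a statement about a single \KillAfterN body. The key observation is that the step itself forces some subterm $\KillAfter{L}{C_0}$ to appear in $C$. Hence $\tbodies{C}{L} \neq \varnothing$, and
\[ E = \eppfork{C}{L} = (\epp{C_1}{L} \ntwkmerge \cdots \ntwkmerge \epp{C_n}{L}) \seqfun \NtwkExit . \]
It therefore suffices to show that the merged projection $\epp{C_1}{L} \ntwkmerge \cdots \ntwkmerge \epp{C_n}{L}$ is a network value. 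Then the collapsing sequence reduces $E$ to literally $\NtwkExit$, which steps to $\NtwkUnit$ by \ruleref{N-Exit}.

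For \ruleref{C-Kill}, the body is a value $V$ and $\tbodies{C}{L} = \{V\}$. By Lemma~\ref{lem:proj-val}, $\epp{V}{L}$ is a value, so $E = \NtwkExit$. For \ruleref{C-KillI}, the body $C_0$ satisfies $L \notin \cloc{C_0}$. Definedness of $E$ gives that $\epp{C_0}{L}$ is defined, so Lemma~\ref{lem:epp-non-participant-cloc} shows it is a value. In both base cases we also need that $\KillAfter{L}{\cdot}$ is the unique \KillAfterN for $L$ in $C$ at top level. The premise $L \notin \spawnedlocs{C_0}$ of \ruleref{S-Kill} rules out nesting, so $n = 1$.

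For the inductive cases, $C = \eta[C_0]$ or an out-of-order form with the kill step inside a subterm $D$. We first argue from the augmented typing that $\tbodies{C}{L}$ coincides with $\tbodies{D}{L}$, or with the union over both branches in the \CaseN and \LocalCaseN cases. The disjointness premises of the S-rules, such as $\nl{\rho} \cap \spawnedlocs{\cdot} = \varnothing$ in \ruleref{S-LetLocal} and \ruleref{S-Pair}, together with Lemma~\ref{lem:spawnedloc-sub-participants}, ensure $L$ is spawned in only one sibling. This means the $\bowtie$ in the definition of $\tbodies{\cdot}{L}$ selects exactly the relevant side. In the branch rules, \ruleref{C-CaseI} requires the identical step $\RKill{L}$ in both branches. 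We apply the induction hypothesis to each branch, each branch having bodies whose merged projection is a value. Lemma~\ref{lem:merge-value} then shows the merge across branches is still a value.

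The main obstacle is this bookkeeping step: relating $\eppfork{C}{L}$ to $\eppfork{D}{L}$ across contexts. We must show the $\bowtie$ really picks the subterm where the step happens, and not $\varnothing$ because both sides are nonempty. For this, we intend to strengthen the induction hypothesis to the statement that every body in $\tbodies{C}{L}$ projects to a value at $L$. That statement passes cleanly through contexts and through merges.
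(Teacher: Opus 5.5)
Your proposal is correct and follows the paper's route. Like the paper, you induct on the kill step, discharge \ruleref{C-Kill} with Lemma~\ref{lem:proj-val} and \ruleref{C-KillI} with Lemma~\ref{lem:epp-non-participant-cloc}, and conclude that the collapsing $\seqfun$ reduces $E$ to $\NtwkExit$, which steps by \ruleref{N-Exit}. Your inductive cases, which check that the $\bowtie$ in $\tbodies{C}{L}$ selects the stepping subterm and strengthen the hypothesis so it survives the merge across \CaseN branches, fill in what the paper dismisses as ``similarly to Lemma~\ref{lem:local-complete}''; for the out-of-order rules, the side condition $\cloc{C_1} \cap \rloc{R} = \varnothing$ already gives $\tbodies{C_1}{L} = \varnothing$ without appealing to the typing premises.
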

\begin{proof}
  By induction on the step, similarly to Lemma~\ref{lem:local-complete}.
  For the step \ruleref{C-Kill} where $\langle~\KillAfter{L}{V} , \Omega \rangle \step[R] \langle V , \Omega \setminus \{L\} \rangle$,
  the projection for $L$ simply steps as $\epp{V}{L} \seqfun \NtwkExit = \NtwkExit \ntwkstep{\RExitNtwk} \NtwkUnit$
  because by Lemma~\ref{lem:proj-val}, $\epp{V}{L}$ is a value.
  If instead the step \ruleref{C-KillI} occurs and $\langle \KillAfter{L}{C} , \Omega \rangle \step[R] \langle C , \Omega \setminus \{L\} \rangle$,
  the projection for $L$ steps as $\epp{C}{L} \seqfun \NtwkExit = \NtwkExit \ntwkstep{\RExitNtwk} \NtwkUnit$
  because by Lemma~\ref{lem:epp-non-participant-cloc}, $\epp{C}{L}$ is a value.
\end{proof}

\begin{defn}[System Label Extraction]
  The label extraction function~$\extract{l_S}{L}$ is a partial function
  which maps system labels to network program labels as follows:
  \begin{mathparpagebreakable}
    \extract{\iota_{L_1}}{L} =
      \begin{cases}
        \iota & \ifText L = L_1 \\
        \Undef & \owText
      \end{cases}
    \and
    \extract{L_1.m \sendsto \rho_2}{L} =
      \begin{cases}
        \RSendNtwk{m}{\rho_2} & \ifText L = L_1 \\
        \RRecvNtwk{L_1}{m} & \ifText L \neq L_1 \andText L \in \rho_2 \\
        \Undef & \owText
      \end{cases}
    \and
    \extract{\RForkSys{L_1}{L_2}{E}}{L} =
      \begin{cases}
        \RForkNtwk{L_2}{E} & \ifText L = L_1 \\
        \Undef & \owText
      \end{cases}
    \and
      \extract{\RKillSys{L_1}}{L} =
      \begin{cases}
        \RExitNtwk & \ifText L = L_1 \\
        \Undef & \owText
      \end{cases}
  \end{mathparpagebreakable}
\end{defn}

\begin{defn}[Label Compatibility]
  Say that a system label~$l$ is compatible with a pair~$(\Pi,\Pi')$ of systems if and only if
  \begin{itemize}
    \item[(1)] whenever~$l$ is not \SysKillN, we have that for all~$L \in \loc{l}$, both $L \in \dom{\Pi}$
    and $L \triangleright \Pi(L) \ntwkstep{\extract{l}{L}} \Pi'(L)$, and
    \item[(2)] if~$l = \RKillSys{L}$, then $L \in \dom{\Pi}$,
    $L \triangleright \Pi(L) \ntwkstep{\RExitNtwk} E'$, and~$L \notin \dom{\Pi'}$, and
    \item[(2)] if~$L \in \dom{\Pi} \setminus \loc{l}$, then
    $\Pi(L) = \Pi'(L)$.
  \end{itemize}
\end{defn}

\begin{lem}[Single-Step Combining]
  \label{lem:combine-sys-step}
  If~$l$ is compatible with~$(\Pi,\Pi')$, then $\Pi \systemstep[l] \Pi'$.
\end{lem}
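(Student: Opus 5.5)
The plan is to case split on the form of $l$ and in each case exhibit the system rule whose conclusion is $\Pi \systemstep[l] \Pi'$. Each time, the premises of that rule come from compatibility condition~(1) or~(2). We then check that the resulting system equals $\Pi'$ by extensionality on domains: condition~(3) covers the untouched locations, and condition~(1) or~(2) covers the locations in $\loc{l}$.

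\textbf{Internal and communication labels.}
\begin{itemize}
\item If $l = \iota_L$, then $\loc{l} = \{L\}$ and $\extract{l}{L} = \iota$. Condition~(1) gives $L \triangleright \Pi(L) \ntwkstep{\iota} \Pi'(L)$, so \ruleref{Internal} yields $\Pi \systemstep[\iota_L] \subst{\Pi}{L}{\Pi'(L)}$. This agrees with $\Pi'$ at $L$ by construction and elsewhere by condition~(3).
\item If $l = L_1.m \sendsto \rho$, condition~(1) gives the sender's step $\RSendNtwk{m}{\rho}$ to $\Pi'(L_1)$. It also gives, for each $L \in \rho$ with $L \neq L_1$, a receive step $\RRecvNtwk{L_1}{m}$ to $\Pi'(L)$.
\item \ruleref{Comm} additionally requires $L_1 \notin \rho$. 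I would obtain this from the form of the sender's step: \ruleref{N-Send} and \ruleref{N-Choose} both emit $\rho \setminus \{L\}$. Alternatively, I would argue that system labels are only ever formed with $L_1 \notin \rho$. If neither works, I would read $\extract{\cdot}{L_1}$ as taking precedence and treat the sender separately.
\item Applying \ruleref{Comm} with $E_1 = \Pi'(L_1)$ and $E_L = \Pi'(L)$ then gives exactly $\Pi'$ on $\dom{\Pi}$.
\end{itemize}

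\textbf{Fork labels.} If $l = \RForkSys{L}{L'}{E_1}$, condition~(1) gives $L \triangleright \Pi(L) \ntwkstep{\RForkNtwk{L'}{E_1}} \Pi'(L)$. \ruleref{Fork} also needs $L' \notin \dom{\Pi}$ and $\fv{E_1} = \varnothing$. The closedness premise is inherited from \ruleref{N-Fork}, since that rule requires $\fv{E_1'} = \varnothing$. For freshness I would rely on $\Pi'$ having $L'$ mapped to $E_1 \NtwkSeq \NtwkExit$ while $L' \notin \dom{\Pi}$. The compatibility definition as written does not constrain the new location, so this is the step I expect to be the main obstacle. I would handle it by reading compatibility as also specifying $\dom{\Pi'} = \dom{\Pi} \cup \{L'\}$ with $\Pi'(L') = E_1 \NtwkSeq \NtwkExit$ and $L' \notin \dom{\Pi}$, which is the intended meaning. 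With that reading, \ruleref{Fork} produces $\subst*{\Pi}{{L'}{(E_1 \NtwkSeq \NtwkExit)}{L}{\Pi'(L)}} = \Pi'$.

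\textbf{Kill labels.} If $l = \RKillSys{L}$, condition~(2) gives $L \triangleright \Pi(L) \ntwkstep{\RExitNtwk} E'$, so \ruleref{Kill} gives $\Pi \systemstep[\RKillSys{L}] \Pi \setminus L$. Since $L \notin \dom{\Pi'}$ and $\Pi$ agrees with $\Pi'$ off $L$ by condition~(3), we get $\Pi \setminus L = \Pi'$. This case also uses the implicit domain assumption $\dom{\Pi'} \subseteq \dom{\Pi}$, read the same way as in the fork case.
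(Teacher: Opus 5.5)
Your proposal follows the paper's own argument. The paper's proof is a case analysis on $l$ that matches each label form to \ruleref{Internal}, \ruleref{Comm}, \ruleref{Fork}, or \ruleref{Kill}, and discharges that rule's premises from compatibility, using that $\extract{l}{L}$ is defined exactly when $L \in \loc{l}$.

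Your worry about the fork case is justified: as literally defined, the lemma is false for fork labels. Compatibility constrains neither $L'$ nor $\Pi'(L')$. So take any $\Pi$ whose location $L$ can fork, and let $\Pi'$ differ from $\Pi$ only at $L$. This pair is compatible, yet \ruleref{Fork} always enlarges the domain, so $\Pi \systemstep[l] \Pi'$ fails. The paper avoids the consequences because it only ever cites the lemma, in Lemma~\ref{lem:combine-sys-steps} and Corollary~\ref{lem:sys-less-than-reach}, for labels that are neither fork nor kill. Strengthening the definition as you propose, or restricting the statement to such labels, is the right repair.

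There are two smaller gaps in your write-up.
\begin{itemize}
\item The domain assumption you invoke for kill is also needed in the $\iota_L$ and communication cases. Compatibility says nothing about $\Pi'$ outside $\dom{\Pi}$, while \ruleref{Internal} and \ruleref{Comm} preserve the domain. So ``agrees elsewhere by condition~(3)'' gives equality with $\Pi'$ only once you also assume $\dom{\Pi'} = \dom{\Pi}$.
\item For $L_1 \notin \rho$, only your first argument works. The second fails because the lemma quantifies over arbitrary labels. The third fails because \ruleref{Comm} requires $L_1 \notin \rho$ however extraction is read. Your first argument also needs one more step. Inversion through \ruleref{N-Send} or \ruleref{N-Choose} gives $\rho = \rho' \setminus \{L_1\}$, but this excludes $L_1$ only when $\rho'$ contains no $\anyLoc$, since $\anyLoc \setminus \{L_1\} = \anyLoc \ni L_1$. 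That case is ruled out because condition~(1) forces every $L \in \loc{l}$ into the finite set $\dom{\Pi}$.
\end{itemize}
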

\begin{proof}
  By case analysis of the label~$l$, noting that~$\extract{l}{L}$ is defined precisely
  when~$L \in \loc{l}$.
\end{proof}

\begin{defn}[catMaybes]
  Let $\catMaybes : \List{\Maybe{t}} \to \List{t}$ be the (meta)function
  which concatenates all defined entries in a list.
  For instance,
  \[ \catMaybes([1 ~,~ \Undef ~,~ 2 ~,~ 3 ~,~ \Undef]) = [1 ~,~ 2 ~,~ 3]. \]
\end{defn}

\begin{lem}[System-Step Combining]
  \label{lem:combine-sys-steps}
  For a sequence~$l_1 ~,~ l_2 ~,~ \ldots ~,~ l_n$ of system labels which are not \SysForkN or \SysKillN, if
  \begin{itemize}
    \item[(1)] $L \triangleright \Pi(L) \ntwksteps{\catMaybes([\extract{l_1}{L} ~,~ \extract{l_2}{L} ~,~ \ldots ~,~ \extract{l_n}{L}])} \Pi'(L)$
    for all $L \in \dom{\Pi}$,
    \item[(2)] for all~$i$, if~$L \in \loc{l_i}$ then~$L \in \dom{\Pi}$,
  \end{itemize}
  then $\Pi \systemsteps[l_1 ~,~ l_2 ~,~ \ldots ~,~ l_n] \Pi'$.
\end{lem}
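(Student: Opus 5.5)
We argue by induction on the length $n$ of the label sequence, building the system trace one label at a time with Lemma~\ref{lem:combine-sys-step}. The base case $n = 0$ is immediate. Hypothesis~(1) then gives $\Pi(L) = \Pi'(L)$ for every $L \in \dom{\Pi}$, since $\catMaybes$ of the empty list is empty and no network steps are taken. Hence $\Pi = \Pi'$ and $\Pi \systemsteps[] \Pi'$.

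For the inductive step, let the sequence be $l_1, l_2, \ldots, l_n$ with $n \geq 1$. For each $L \in \dom{\Pi}$, hypothesis~(1) supplies a run of network steps at $L$ whose label list is $\catMaybes([\extract{l_1}{L}, \ldots, \extract{l_n}{L}])$. We split each such run according to whether $L \in \loc{l_1}$.

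\begin{itemize}
\item If $L \in \loc{l_1}$, then $\extract{l_1}{L}$ is defined, so it is the head of the list. We let $E_L$ be the program reached after that first step.
\item Otherwise $\extract{l_1}{L}$ is undefined and contributes nothing, so we set $E_L = \Pi(L)$.
\end{itemize}

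Define the intermediate system $\Pi_1$ by $\Pi_1(L) = E_L$, so that $\dom{\Pi_1} = \dom{\Pi}$. Since $l_1$ is neither \SysForkN nor \SysKillN, no location is added or removed, so the domain is unchanged.

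We then check that $l_1$ is compatible with $(\Pi, \Pi_1)$. Clause~(1) of compatibility uses hypothesis~(2), which gives $\loc{l_1} \subseteq \dom{\Pi}$, together with the first-step decomposition above. The kill clause is vacuous because $l_1$ is not a kill. The frame clause holds by construction of $\Pi_1$. Lemma~\ref{lem:combine-sys-step} then yields $\Pi \systemstep[l_1] \Pi_1$.

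Next we verify the hypotheses for the shorter sequence $l_2, \ldots, l_n$ starting from $\Pi_1$. Hypothesis~(2) carries over because the domains agree. For hypothesis~(1), the remainder of each local run starts at $\Pi_1(L)$. Its label list is $\catMaybes([\extract{l_2}{L}, \ldots, \extract{l_n}{L}])$, using that $\catMaybes$ distributes over the head of the list. The induction hypothesis gives $\Pi_1 \systemsteps[l_2, \ldots, l_n] \Pi'$, and prepending the first step finishes the proof.

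The main obstacle is a subtlety in the first-step decomposition. A \emph{single} network step at $L$ must carry exactly the label $\extract{l_1}{L}$. That requires reading $\ntwksteps{\cdot}$ indexed by a label list as a sequence of steps in one-to-one correspondence with that list. We adopt this reading, since $\iota$ steps are themselves recorded as $\iota_L$ labels. The other delicate point is the communication label $L_1.m \sendsto \rho$: every participant's first step must agree on the same $m$. This agreement is built into the definition of $\extract{\cdot}{L}$, and the \ruleref{Comm} instance inside Lemma~\ref{lem:combine-sys-step} consumes it.
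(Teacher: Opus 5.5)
Your proposal is correct and takes essentially the same approach as the paper: induction on the length of the label sequence, applying Lemma~\ref{lem:combine-sys-step} once per label. It also uses the paper's key observation that the system's domain never changes, because no \SysForkN or \SysKillN labels occur. Your version additionally writes out the first-step decomposition of each local run and the compatibility check, which the paper's one-line proof leaves implicit.
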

\begin{proof}
  By induction on the length of the reduction, applying Lemma~\ref{lem:combine-sys-step} repeatedly,
  and noting that the domain of the system never changes as no locations are spawned or killed.
\end{proof}

\begin{lem}[Redex Projection is Equivalent to Extraction]
  \label{lem:proj-redex-commutes}
  For all locations $L$ and redices $R$,
  \[ \catMaybes([\extract{S}{L} \mid l \in \eppall{R}]) = \epp{R}{L}. \]
  That is, the subsequence of system labels in $\eppall{R}$ which involve a location $L$
  is given precisely by the single-location projection $\epp{R}{L}$.
\end{lem}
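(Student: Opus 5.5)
We prove Lemma~\ref{lem:proj-redex-commutes} by structural induction on the redex~$R$. For each constructor we compare the per-location list $\epp{R}{L}$ with the system-level list $\eppall{R}$, after filtering the latter through $\extract{\cdot}{L}$ and applying \catMaybes. Before starting the case analysis we record two facts about \catMaybes. First, it distributes over concatenation: $\catMaybes(x \doubleplus y) = \catMaybes(x) \doubleplus \catMaybes(y)$. Second, it is the identity on singleton lists with a defined entry and sends a singleton $[\Undef]$ to $\epsilon$. Both follow by an immediate induction on lists.

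The cases group naturally.
\begin{itemize}
\item The wrapper redices $\RFun{R}$, $\RArg{R}$, $\RPairL{R}$ and $\RPairR{R}$ project to $\epp{R}{L}$ and $\eppall{R}$ respectively, so they follow directly from the inductive hypothesis.
\item The internal redices include $\RDone{\rho}{e_1}{e_2}$, $\RApp{\rho}$, $\RUnfoldFold{\rho}$, the pair projections, the case redices and $\RLet{\rho}{v}$. For each of these, $\eppall{R} = [\iota_{L'} \mid L' \in \rho]$. Since $\extract{\iota_{L'}}{L}$ is $\iota$ exactly when $L' = L$ and undefined otherwise, filtering leaves $[\iota]$ if $L \in \rho$ and $\epsilon$ otherwise, which is the definition of $\epp{R}{L}$. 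Here we use that each $L$ occurs at most once in the enumeration of the concrete set~$\rho$. Redices occur only with $\fv{\rho} = \varnothing$, so $\rho$ is a concrete set and membership is decidable.
\item The doubled redices $\RTApp{\rho}$ and $\RLet{\rho}{t}$ follow from the previous case together with distribution of \catMaybes over $\doubleplus$, which yields $[\iota,\iota]$ or $\epsilon$.
\item For sends $\RSendV{L_1}{m}{\rho_2}$, the system list is the singleton $[L_1.m \sendsto \rho_2]$. The three clauses of $\extract{\cdot}{L}$ for this label coincide clause-for-clause with the three clauses of $\epp{\RSendV{L_1}{m}{\rho_2}}{L}$, reading the send target in the latter as $\rho_2$.
\item Kills $\RKill{L_1}$ are immediate by the same singleton argument.
\end{itemize}

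We expect the fork redex $\RFork{L_1}{L_2}{C}$ to be the only delicate case, because both projections are conditional on $\epp{C}{L_2}$ being defined. We split on definedness. If $\epp{C}{L_2} = E$, then $\eppall{R} = [\RForkSys{L_1}{L_2}{E}]$, whose extraction at $L$ is $\RForkNtwk{L_2}{E}$ exactly when $L = L_1$. This matches $\epp{R}{L}$. If $\epp{C}{L_2}$ is undefined, both sides are $\epsilon$. We must make sure the same $E$ is used on both sides. This holds because EPP is a (partial) function, so it is deterministic. With the fork case settled, every constructor of~$R$ is covered and the induction closes.
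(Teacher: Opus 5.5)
Your proof is correct and takes the same approach as the paper, whose proof is just ``by induction on $R$.'' Your case analysis spells that one-liner out: the wrapper redices follow from the inductive hypothesis, the remaining redices follow by unfolding $\epp{R}{L}$, $\eppall{R}$, and $\extract{\cdot}{L}$, and the fork case splits on whether $\epp{C}{L_2}$ is defined. One step is unnecessary: the appeal to $\fv{\rho} = \varnothing$. The lemma quantifies over all redices, and membership of a concrete location in $\rho$ is defined syntactically for any $\rho$, so you only need the comprehension $[\iota_{L'} \mid L' \in \rho]$ to list each such $L'$ exactly once.
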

\begin{proof}
  By induction on $R$.
\end{proof}

\begin{lem}\label{lem:thread-proj-exists}
  If $\langle C , \Omega \rangle \step[\RFork{L}{L'}{C''}] \langle C' , \Omega' \rangle$,
  $\chortypedplus{\ctx}{C}{\tau}{\rho}$,
  $\nl{\rho} \subseteq \Omega$, and
  $\epp{C}{L} = E$, then
  there is some $E''$ such that $\epp{C''}{L'} \lessthan E''$ and $\eppfork{C'}{L'} = \epp{C''}{L'} \seqfun \NtwkExit$.
\end{lem}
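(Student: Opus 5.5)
We argue by induction on the derivation of the fork step $\conf{C}{\Omega} \step[\RFork{L}{L'}{C''}] \conf{C'}{\Omega'}$. The conclusion concerns two objects. First, $\eppfork{C'}{L'}$ must be exactly $\epp{C''}{L'} \seqfun \NtwkExit$. Second, $\epp{C''}{L'}$ must exist, so that the upper bound $E''$ can be supplied. For the bound we will simply take $E'' = \epp{C''}{L'}$ and use reflexivity of $\lessthan$. The statement is phrased with an $E''$ so that it can later be matched against the thread task $\epp{C}{\alpha}[\alpha \mapsto L', x \mapsto \say{L'}]$ produced by \ruleref{N-Fork}. We therefore also keep track of how the thread task relates to $\epp{C''}{L'}$.

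The base case is \ruleref{C-Fork}: $C = \Fork{(\alpha,x)}{L}{C_0}$, $C'' = C_0[\alpha \mapsto L', x \mapsto \say{L'}]$ and $C' = \KillAfter{L'}{C''}$.
\begin{itemize}
\item By freshness, $L' \notin \Omega$. Since $\nl{\rho} \subseteq \Omega$ and $\spawnedlocs{C_0} \subseteq \nl{\rho}$ (Lemma~\ref{lem:spawnedloc-sub-participants}), we get $L' \notin \spawnedlocs{C_0} = \spawnedlocs{C''}$. Hence $\tbodies{C'}{L'} = \{C''\}$, and by definition $\eppfork{C'}{L'} = \epp{C''}{L'} \seqfun \NtwkExit$, provided $\epp{C''}{L'}$ exists.
\item Existence follows from the hypothesis $\epp{C}{L} = E$ with $L$ the parent. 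This forces $\epp{C_0}{\alpha}$ to be defined as the thread task.
\item The corollary of Lemma~\ref{lem:loc-sub-pres-epp-var} (needing $L' \notin \nl{C_0}$, which holds because $L' \notin \Omega$ and the named locations of a well-typed term lie in $\Omega$) gives $\epp{C_0[\alpha \mapsto L']}{L'} = \epp{C_0}{\alpha}[\alpha \mapsto L']$.
\item Corollary~\ref{lem:local-sub-pres-epp} for the local substitution of $x$ (note $L' \in \{L,L'\}$) then shows that $\epp{C''}{L'}$ exists and is $\lessthan$ the substituted thread task.
\end{itemize}

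The inductive cases are \ruleref{C-Ctx} and the out-of-order rules \ruleref{C-LetI}, \ruleref{C-CaseI}, \ruleref{C-ForkI} and so on. In each, the fork happens inside one subterm $C_i$, and $\epp{C_i}{L}$ is defined because it is a subcomputation of $\epp{C}{L}$.
\begin{itemize}
\item Typing of $C_i$ and $\nl{\rho_i} \subseteq \Omega$ come from inversion of the augmented rule.
\item The inductive hypothesis gives $\tbodies{C_i'}{L'} = \{C''\}$ and the projection claim.
\item It remains to show that the other subterms contribute no $\KillAfter{L'}{\cdot}$. This holds because they are unchanged and $L'$ was fresh, so $L'$ is not in their $\spawnedlocs{}$. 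Thus the $\bowtie$ in the definition of $\tbodies{}{}$ selects exactly $\{C''\}$.
\end{itemize}

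The main obstacle is the case-branch rules \ruleref{C-CaseI} and \ruleref{C-LocalCaseI}. There the same fork redex fires in both branches and yields two bodies $C''$, one in each branch. Because the redex label $\RFork{L}{L'}{C''}$ is identical in both premises, the bodies coincide syntactically, so $\tbodies{C_1'}{L'} \cup \tbodies{C_2'}{L'} = \{C''\}$ and the merge $\epp{C''}{L'} \ntwkmerge \epp{C''}{L'}$ collapses to $\epp{C''}{L'}$ by idempotence of merge. Care is also needed so that $\epp{C}{L}$ for a non-branching parent $L$ still yields definedness of $\epp{C''}{L'}$: the merge of the two branch projections is defined only if each branch projection is defined, and we argue this via the inductive hypothesis on each branch.
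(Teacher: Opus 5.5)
Your plan follows the paper's proof. You induct on the fork step. In the \ruleref{C-Fork} case you get $\epp{C_0}{\alpha}$ from definedness of the parent's projection, then carry it along $\alpha \mapsto L'$ and $x \mapsto \say{L'}$ using the corollary of Lemma~\ref{lem:loc-sub-pres-epp-var} and Corollary~\ref{lem:local-sub-pres-epp}. The other cases go through by induction because $L' \notin \Omega \supseteq \nl{\rho} \supseteq \spawnedlocs{C}$. Your treatment of \ruleref{C-CaseI} is correct and fills in what the paper leaves implicit: the redex label forces identical bodies, so $\tbodies{}{}$ yields a singleton.

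The step that fails as you justify it is $L' \notin \nl{C_0}$, which is exactly the side condition Lemma~\ref{lem:loc-sub-pres-epp-var} needs.

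\begin{itemize}
\item It is not true that the named locations of a well-typed term lie in $\Omega$. The hypothesis $\nl{\rho} \subseteq \Omega$ constrains only participants, and Lemma~\ref{lem:participants-subset-nl} gives $\nl{\rho} \subseteq \nl{C}$, not the converse.
\item Locations occurring only inside values are not participants. For instance, $\Fun{\{\Bob\}}{F}{X}{X}$ with $X \ty \Int@\Bob$ has participant set $\varnothing$. So it can sit in a fork body while $\Bob \notin \Omega$, and the formal premise $L' \notin \Omega$ of \ruleref{C-Fork} then permits $L' = \Bob$.
\item This breaks the conclusion, not just your argument. Take $\Omega = \{\Alice\}$, and let $C_0$ be a function annotated with $\{\Alice,\Bob\}$, taking $X \ty \Bool@\Alice$, whose body branches at $\Alice$ on $X$ and returns $\Bob.1$ or $\Bob.2$. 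Then $\Fork{\alpha}{\Alice}{C_0}$ is well typed with participants $\{\Alice\}$, and its projection for $\Alice$ is defined, with thread task $\epp{C_0}{\alpha} = \NtwkUnit$. Yet forking with $L' = \Bob$ gives $C'' = C_0$, and $\epp{C_0}{\Bob}$ is undefined because $\Ret{1}$ would have to merge with $\Ret{2}$.
\end{itemize}

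So $L' \notin \nl{C_0}$ cannot be derived from typing plus $L' \notin \Omega$. The paper's proof simply takes it as part of the fork step (``with $L' \notin \nl{C''}$''), matching the ``globally fresh'' description of \ruleref{C-Fork} in Section~\ref{sec:semantics}. You need to adopt that assumption explicitly. Note also that $\tbodies{\KillAfter{L'}{C''}}{L'} = \{C''\}$ holds by definition. So in the base case freshness is needed for this named-locations condition, not for your $\spawnedlocs{}$ argument.
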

\begin{proof}
  By induction on the step.
  The out-of-order steps and steps in an evaluation context follow by induction,
  noting that $L' \notin \Omega$, and hence $L' \notin \spawnedlocs{C}$, so no other \KillN expressions
  than the one created by this step may contain $L'$.
  For a \ruleref{C-Fork} step $\Fork{(\alpha,x)}{L}{C''} \step{} \KillAfter{L'}{\subst*{C''}{{\alpha}{L'}{x}{\say{L'}}}}$
  with $L' \notin \nl{C''}$,
  the assumption that the left-hand side projects for~$L$ means that
  $\epp{C''}{\alpha}$ must exist.
  Then by Lemmas~\ref{lem:loc-sub-pres-epp-var} and \ref{lem:local-sub-mem-pres-epp},
  we have that
  \begin{align*}
    \epp{\subst*{C''}{{\alpha}{L'}{x}{\say{L'}}}}{L'}
    &\lessthan \subst{\epp{\subst{C''}{\alpha}{L'}}{L'}}{x}{\say{L'}} \\
    &= \subst*{\epp{C''}{\alpha}}{{\alpha}{L'}{x}{\say{L'}}}
  \end{align*}
  which, as desired, is defined, and $\greaterthan$ the required projections.  
\end{proof}

\begin{lem}[Single-Step Completeness]\label{lem:single-complete}
  If $\chortypedplus{\ctx}{C}{\tau}{\rho}$,
  $\nl{\rho} \subseteq \Omega$,
  $\eppfork{C}{\Omega} = \Pi$, and
  $\langle C , \Omega \rangle \step[R] \langle C' , \Omega' \rangle$,
  then there is some $\Pi_1'$ and $\Pi_2'$ such that $\Pi_1' \lessthan \Pi_2'$, $\eppfork{C'}{\Omega'} = \Pi_1'$, and $\Pi \systemstepss[\eppall{R}] \Pi_2'$.
  That is, the following diagram holds.
  \begin{center}
    \begin{tikzpicture}
      \node (C1) at (0,0) {$C$};
      \node (C2) at (5.1,0) {$C'$};
      \node (C1Proj) at (0,-2) {$\Pi$};
      \node (E') at (4,-2) {$\Pi'$};
      \node (less) at (4.4,-2) {$\greaterthan$};
      \node (C2Proj) at (5.1,-2) {$\eppfork{C'}{\Omega'}$};

      \draw[-implies,double equal sign distance,shorten >=2pt] (C1.east) -- (C2.west) node[midway,above]{$R$} node[pos=0.97,below]{${}_c$};
      \draw[mapsto] (C1) -- (C1Proj) node[label,midway,left,yshift=1.5pt]{$\eppfork{\cdot}{\Omega}$};
      \draw[mapsto,dashed] (C2) -- (C2Proj) node[label,midway,right,yshift=1.5pt]{$\eppfork{\cdot}{\Omega'}$};
      \draw[-implies,double equal sign distance,dashed] (C1Proj.east) -- (E'.west) node[midway,above]{$\epp{R}{\Locations}$} node[pos=0.99,above,yshift=-3pt]{${}^+$};
    \end{tikzpicture}
  \end{center}
\end{lem}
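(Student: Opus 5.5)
The proof assembles the single-step statement pointwise from the local completeness lemmas and then combines the resulting per-location steps into a system run. We case-split on the redex $R$ into three classes: ordinary redexes (neither \SysForkN nor \SysKillN), fork redexes $\RFork{L}{L'}{C''}$, and kill redexes $\RKill{L}$.

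\emph{Ordinary redexes.} Here $\Omega' = \Omega$, which follows from Lemma~\ref{lem:single-preservation}(3--4) since no \KillAfterN is created or removed.
\begin{itemize}
  \item For each $L \in \Omega \setminus \rloc{R}$, Corollary~\ref{lem:not-in-complete-prime} yields $\eppfork{C'}{L} \lessthan \eppfork{C}{L}$. We let $\Pi_2'(L) = \Pi(L)$, so that $L$ takes no steps.
  \item For each $L \in \rloc{R}$, we first need $L \in \Omega$. Lemma~\ref{lem:single-aug-participants-sound} gives $\rloc{R} \subseteq \rho$, and because $\rloc{R}$ is closed we get $\rloc{R} \subseteq \nl{\rho} \subseteq \Omega$.
  \item Corollary~\ref{lem:local-complete-prime} then supplies $E_2'$ with $\eppfork{C'}{L} \lessthan E_2'$ and $L \triangleright \Pi(L) \ntwkstepss{\epp{R}{L}} E_2'$. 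We set $\Pi_2'(L) = E_2'$.
  \item The per-location traces are exactly $\epp{R}{L}$. By Lemma~\ref{lem:proj-redex-commutes} these are the extractions of the list $\eppall{R}$, so Lemma~\ref{lem:combine-sys-steps} combines them into $\Pi \systemstepss[\eppall{R}] \Pi_2'$.
  \item Condition (2) of that lemma, that every location named in a label lies in $\dom{\Pi} = \Omega$, is again $\rloc{R} \subseteq \Omega$.
\end{itemize}

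\emph{Fork redexes.} Here $\Omega' = \Omega \cup \{L'\}$.
\begin{itemize}
  \item Locations other than $L$ are handled as in the ordinary case via Corollary~\ref{lem:not-in-complete-prime}.
  \item The parent $L$ takes its $\RForkNtwk{L'}{E_1}$ step by Corollary~\ref{lem:local-complete-prime}, where $E_1 = \epp{C''}{L'}$. This $E_1$ exists by Lemma~\ref{lem:thread-proj-exists}, and it is closed by the closedness premise of \ruleref{C-Fork} together with Lemma~\ref{lem:epp-refl-fv}.
  \item The system rule \ruleref{Fork} then applies, since $L' \notin \Omega = \dom{\Pi}$. It installs $E_1 \NtwkSeq \NtwkExit$ at $L'$.
  \item We relate this to the new projection using Lemma~\ref{lem:thread-proj-exists}, which says $\eppfork{C'}{L'} = \epp{C''}{L'} \seqfun \NtwkExit$. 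We then need $E_1 \seqfun \NtwkExit \lessthan E_1 \NtwkSeq \NtwkExit$. This holds by reflexivity when $E_1$ is not a value, and by the rule $E \lessthan V \NtwkSeq E$ when it is.
\end{itemize}

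\emph{Kill redexes.} Here $\Omega' = \Omega \setminus \{L\}$, and $L \in \Omega$ by the \ruleref{C-Kill}/\ruleref{C-KillI} premise.
\begin{itemize}
  \item Lemma~\ref{lem:kill-complete} gives $\Pi(L) \ntwkstep{\RExitNtwk} \NtwkUnit$, so the system rule \ruleref{Kill} removes $L$ from the system.
  \item Every other location is a non-participant, so Corollary~\ref{lem:not-in-complete-prime} applies to it.
  \item The point to check is that $\eppfork{C'}{L''}$ is unchanged for $L'' \neq L$ once the \KillAfterN wrapper is discarded. This holds because the projection of $\KillAfter{L}{C}$ at $L'' \neq L$ is just the projection of $C$.
\end{itemize}

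In each case, $\Pi_1' = \eppfork{C'}{\Omega'}$ is defined, being built pointwise, and $\Pi_1' \lessthan \Pi_2'$ holds pointwise.

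The main obstacle is bookkeeping for spawned threads already in $\Omega$ whose projection is $\eppfork{\cdot}{L}$ (merged \KillAfterN bodies). When $R$ occurs inside a case branch, the thread's \KillAfterN bodies in both branches must step in lockstep. We would handle this by relying on Corollaries~\ref{lem:not-in-complete-prime} and~\ref{lem:local-complete-prime} as stated, together with Lemma~\ref{lem:merge-pres-step} for the merge, and verify only that the hypotheses of Lemma~\ref{lem:combine-sys-steps} are met, in particular that the system domain is exactly $\Omega$.
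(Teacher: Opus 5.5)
Your proposal is correct and follows essentially the same route as the paper's proof. You use the same three-way split into ordinary, \ForkN, and \KillN redexes, handle each location with Corollaries~\ref{lem:not-in-complete-prime} and~\ref{lem:local-complete-prime}, recombine via Lemmas~\ref{lem:combine-sys-steps} and~\ref{lem:proj-redex-commutes}, and invoke Lemma~\ref{lem:thread-proj-exists} for the new thread and Lemma~\ref{lem:kill-complete} for the killed one. Your extra details fill in points the paper leaves implicit: closedness of the forked task, the check $E_1 \seqfun \NtwkExit \lessthan E_1 \NtwkSeq \NtwkExit$, and citing the augmented participant lemma rather than the surface theorem. You should also note that $\eppall{R}$ is nonempty, so that Lemma~\ref{lem:combine-sys-steps} yields one or more system steps rather than zero or more.
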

\begin{proof}
  First the case for steps which are not \ForkN or \KillN.
  By Lemma~\ref{lem:combine-sys-steps} and Lemma~\ref{lem:proj-redex-commutes},
  it suffices to prove that
  \[ L \triangleright \eppfork{C}{L} \ntwksteps{\epp{R}{L}} \Pi_2'(L) \]
  and $\eppfork{C'}{L} \lessthan \Pi_2'(L)$ for each location $L \in \Omega$.
  If $L \notin \rloc{R}$, this holds by Corollary~\ref{lem:not-in-complete-prime}, noting that $\epp{R}{L}$ is empty.
  Otherwise if $L \in \rloc{R}$, this is precisely Corollary~\ref{lem:local-complete-prime}.
  To apply Lemma~\ref{lem:combine-sys-steps} we also need to show that $\rloc{R} \subseteq \Omega$.
  This follows by soundness of participants (Theorem~\ref{lem:participants-sound}), and as $\nl{\rho} \subseteq \Omega$ by assumption.
  That is, $L \in \rloc{R} \subseteq \nl{\rho} \subseteq \Omega$.
  As well, there is always at least one location in the system that makes a step.

  For the case of a $\RFork{L}{L'}{C''}$ step, for each location already in the system---in $\Omega$---we can
  either apply Corollary~\ref{lem:not-in-complete-prime} for $L'' \neq L, L'$, or Corollary~\ref{lem:local-complete-prime} for $L$.
  The new thread $L'$ does not need to make a step, but we must show that $\eppfork{C'}{L'} \lessthan \epp{C''}{L'} \NtwkSeq \NtwkExit$,
  and that this projection exists---this is precisely Lemma~\ref{lem:thread-proj-exists}.

  For a $\RKill{L}$ step, for each location $L' \neq L$, we apply Corollary~\ref{lem:not-in-complete-prime}.
  For $L$, since they were removed from $\Omega$ and the system, we do not need to worry about their projection,
  and can simply apply Lemma~\ref{lem:kill-complete} to allow the system to perform a $\RKillSys{L}$ step.
\end{proof}

\begin{lem}[System Single-Step Lifting]\label{thm:sys-single-lifting}
  If $\Pi_1 \systemstep[l] \Pi_1'$ and $\Pi_1 \lessthansim \Pi_2$ then there is some $\Pi_2'$ such that
  $\Pi_2 \systemstep[l] \Pi_2'$ and $\Pi_1' \lessthan \Pi_2'$.
\end{lem}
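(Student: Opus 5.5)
The plan is to reduce the system-level statement to the single-location Lifting Property (Lemma~\ref{thm:lifting-sim}) applied pointwise. Recall that $\Pi_1 \lessthansim \Pi_2$ is lifted pointwise: for every $L \in \dom{\Pi_1}$ we have $\Pi_1(L) \lessthansim \Pi_2(L)$, and I will assume the two systems have the same domain. I will do a case analysis on the system rule used to derive $\Pi_1 \systemstep[l] \Pi_1'$ (\ruleref{Internal}, \ruleref{Comm}, \ruleref{Fork}, \ruleref{Kill}). In each case I build $\Pi_2'$ by replacing exactly the locations in $\loc{l}$, and leave every other location unchanged.

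\emph{\ruleref{Internal}, \ruleref{Kill}, \ruleref{Comm}.} In the \ruleref{Internal} case we have $L \triangleright \Pi_1(L) \ntwkstep{\iota} E$. Lemma~\ref{thm:lifting-sim} then gives some $E'$ with $L \triangleright \Pi_2(L) \ntwkstep{\iota} E'$ and $E \lessthan E'$, and we set $\Pi_2' = \subst{\Pi_2}{L}{E'}$. Every other location is unchanged. For those locations, $\Pi_1(L'') \lessthansim \Pi_2(L'')$ implies $\lessthan$ because $\lessthansim$ is a subrelation of $\lessthan$.

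In the \ruleref{Kill} case, lifting the $\RExitNtwk$ step lets $L$ step with $\RExitNtwk$ in $\Pi_2$ as well, so we take $\Pi_2' = \Pi_2 \setminus L$. The domains still agree.

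In the \ruleref{Comm} case, we lift the sender's $\RSendNtwk{m}{\rho}$ step and each receiver's $\RRecvNtwk{L_1}{m}$ step independently. Lemma~\ref{thm:lifting-sim} preserves the label exactly, so all the premises of \ruleref{Comm} hold again with the same $m$ and $\rho$, and the side condition $L_1 \notin \rho$ is unchanged.

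\emph{\ruleref{Fork}.} This case needs the most care. Lifting the parent's $\RForkNtwk{L'}{E_1}$ step gives $\Pi_2(L) \ntwkstep{\RForkNtwk{L'}{E_1}} E_2'$ with the \emph{same} task $E_1$, because the label is preserved. Therefore $\fv{E_1} = \varnothing$ still holds. Freshness, $L' \notin \dom{\Pi_2}$, follows from the equality of the two domains. The new thread runs $E_1 \NtwkSeq \NtwkExit$ in both systems, and reflexivity of $\lessthan$ relates the two copies.

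I expect the main obstacle to be checking that Lemma~\ref{thm:lifting-sim} really preserves labels on the nondeterministic \ruleref{N-Recv}, \ruleref{N-AllowL}/\textsc{N-AllowR}, and \ruleref{N-Fork} rules. This matters because the whole argument depends on the receivers and the sender agreeing on $m$, and on the lifted task being literally $E_1$. For \ruleref{N-Fork} in particular, I would check that the $\lessthansim$ rule for \NtwkForkN relates the thread tasks only by $\lessthan$. In that case the lifted step could carry a different task, so the label would not be identical.

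If that happens, I would fall back to using the related task $E_1'$ from $\Pi_2$'s program. I would then set $\Pi_2'(L') = E_1' \NtwkSeq \NtwkExit$ and relate it to $E_1 \NtwkSeq \NtwkExit$ by compatibility of $\lessthan$. This requires weakening the conclusion to a matching label up to the task, or else arguing that the statement's $l$ is intended up to $\lessthan$ on tasks. The \ruleref{N-Fork} side condition $\fv{E_1'} = \varnothing$ then follows from Lemma~\ref{lem:lessthan-pres-fv} only in the direction $\fv{E_1} \subseteq \fv{E_1'}$, which goes the wrong way for what we need. Since this case is the delicate one, I would examine it first.
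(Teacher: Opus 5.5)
Your plan is the paper's proof: a case analysis on the system rule, a pointwise appeal to Lemma~\ref{thm:lifting-sim}, equality of domains, and the claim that the spawned thread runs the identical program in $\Pi_1'$ and $\Pi_2'$. The \ruleref{Internal}, \ruleref{Comm} and \ruleref{Kill} cases are fine. Your worry about \ruleref{Fork} is justified, and it is a genuine gap that the paper's proof shares. The $\lessthansim$ rule for \NtwkForkN{} relates the tasks only by $\lessthan$, so Lemma~\ref{thm:lifting-sim} cannot produce a step with the same label $\RForkNtwk{L'}{E}$, and the statement as written is false. For a closed local value $v$, let $\Pi_1 = L \triangleright \NtwkFork{(\alpha,x)}{\NtwkUnit}{\NtwkUnit}$ and $\Pi_2 = L \triangleright \NtwkFork{(\alpha,x)}{\Ret{v} \NtwkSeq \NtwkUnit}{\NtwkUnit}$. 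Then $\Pi_1 \lessthansim \Pi_2$, since $\NtwkUnit \lessthan \Ret{v} \NtwkSeq \NtwkUnit$, and $\Pi_1$ steps with $\RForkSys{L}{L'}{\NtwkUnit}$. Yet the only steps of $\Pi_2$ are forks whose label carries the task $\Ret{v} \NtwkSeq \NtwkUnit$. If you put a variable $X$ in place of $\Ret{v} \NtwkSeq \NtwkUnit$ (allowed by the rule $\NtwkUnit \lessthan X$), $\Pi_2$ cannot step at all, because \ruleref{N-Fork} demands a closed task.

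So the fork case cannot be closed without changing the statement, and your fallback is the right change. Let the lifted label coincide with $l$ except that its task $E'$ need only satisfy $E \lessthan E'$. Then relate the new threads, $E \NtwkSeq \NtwkExit \lessthan E' \NtwkSeq \NtwkExit$, using compatibility and the substitution lemmas for $\lessthan$. This weaker form is all that Lemma~\ref{thm:sys-lifting} uses, since that lemma tracks no labels. The $\fv{\cdot} = \varnothing$ premise still needs an additional hypothesis, such as closedness of $\Pi_2$; this holds for every system reachable from $\eppfork{C}{\Omega}$ with $C$ closed. The hypothesis is necessary because $\lessthan$ controls free variables in neither direction: besides $\NtwkUnit \lessthan X$ there is also the rule $X \lessthan \NtwkUnit$, which in fact contradicts Lemma~\ref{lem:lessthan-pres-fv}. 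Tightening the $\lessthansim$ rule for \NtwkForkN{} to demand identical tasks is not an alternative. It would break Lemma~\ref{lem:less-than-reach}, because no internal step can reach inside a task that has not yet been spawned.
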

\begin{proof}
  Follows via a case analysis of the step and using Lemma~\ref{thm:lifting-sim}, noting
  that by definition if $\Pi_1 \lessthansim \Pi_2$ then $\dom{\Pi_1} = \dom{\Pi_2}$,
  so for the \SysForkN and \SysKillN steps, the respective systems after the steps will
  still have the same domains, and the network program of any spawned thread will be identical
  in $\Pi_1'$ and $\Pi_2'$.
\end{proof}

\begin{cor}\label{lem:sys-less-than-reach}
  If $\Pi_1 \lessthan \Pi_2$ then there is some $\Pi_2'$ such that
  $\Pi_1 \lessthansim \Pi_2'$ and $\Pi_2 \systemsteps \Pi_2'$.
\end{cor}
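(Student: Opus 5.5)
The statement to prove is the system-level corollary: if $\Pi_1 \lessthan \Pi_2$, then $\Pi_2 \systemsteps \Pi_2'$ for some $\Pi_2'$ with $\Pi_1 \lessthansim \Pi_2'$. The plan is to lift Lemma~\ref{lem:less-than-reach} pointwise and then assemble the resulting local internal steps into a system reduction.

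First, unfold $\Pi_1 \lessthan \Pi_2$: for every $L \in \SysLocs$ we have $\Pi_1(L) \lessthan \Pi_2(L)$, with both systems sharing the domain $\SysLocs$. For each such $L$, Lemma~\ref{lem:less-than-reach} supplies a program $E_L$ with $\Pi_1(L) \lessthansim E_L$ and $L \triangleright \Pi_2(L) \ntwksteps{\iota} E_L$. Every step in this sequence carries the label $\iota$.

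Define $\Pi_2' \defeq \parallel_{L \in \SysLocs} (L \triangleright E_L)$. Then $\Pi_1 \lessthansim \Pi_2'$ holds pointwise by construction, and the domains agree, as required for the lifting in Lemma~\ref{thm:sys-single-lifting}.

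It remains to show $\Pi_2 \systemsteps \Pi_2'$. Since $\SysLocs$ is finite, enumerate it as $L_1, \ldots, L_n$. Process the locations in this order, replaying the internal reduction of $L_i$ one step at a time with the \ruleref{Internal} rule. Each such step updates only $L_i$'s entry in the system and requires nothing of the other locations, so concatenating these sequences gives a system reduction from $\Pi_2$ to $\Pi_2'$. Equivalently, one can invoke Lemma~\ref{lem:combine-sys-steps} with the label sequence of the $\iota_{L_i}$'s: its side conditions hold because every label involved is located in $\dom{\Pi_2}$, and the extraction at each $L$ yields exactly $L$'s own $\iota$ sequence.

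No step here is a real obstacle, since the content lies in Lemma~\ref{lem:less-than-reach}. The only point requiring care is that the local reductions are purely internal, so they involve no \SysForkN or \SysKillN labels. This guarantees that the system's domain is unchanged and that the interleaving needs no communication partners.
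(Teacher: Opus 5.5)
Your proposal is correct and is essentially the paper's own proof. The paper also applies Lemma~\ref{lem:less-than-reach} at each location and then assembles the resulting purely internal ($\iota$) reductions into a system reduction, citing Lemma~\ref{lem:combine-sys-step} and noting that these steps can happen independently; your explicit location-by-location interleaving via the Internal rule just spells out that assembly.
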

\begin{proof}
  Follows by Lemmas~\ref{lem:less-than-reach} and \ref{lem:combine-sys-step}, noting that
  the reduction sequence taken by each location are all $\iota$ steps, and hence can all happen independently.
\end{proof}

\begin{lem}[System Lifting Property]\label{thm:sys-lifting}
  If $\Pi_1 \systemstepsn{n} \Pi_1'$ and $\Pi_1 \lessthan \Pi_2$ then there is some $\Pi_2'$ and $k \geq n$ such that
  $\Pi_2 \systemstepsn{k} \Pi_2'$ and $\Pi_1' \lessthan \Pi_2'$.
\end{lem}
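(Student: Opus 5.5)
The plan is induction on $n$, the length of the reduction $\Pi_1 \systemstepsn{n} \Pi_1'$, generalizing over $\Pi_1$ and $\Pi_2$. For $n = 0$ we take $\Pi_2' = \Pi_2$ and $k = 0$, and the claim holds by assumption. For the inductive step, write the reduction as $\Pi_1 \systemstep[l] \Pi_1'' \systemstepsn{n-1} \Pi_1'$.

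The key move is to pass from $\lessthan$ to $\lessthansim$, so that the first step can be matched exactly. Corollary~\ref{lem:sys-less-than-reach} applied to $\Pi_1 \lessthan \Pi_2$ gives some $\Pi_2^\sharp$ with $\Pi_1 \lessthansim \Pi_2^\sharp$ and $\Pi_2 \systemstepsn{j} \Pi_2^\sharp$ for some $j \geq 0$. These $j$ steps are internal $\iota$ steps that discard sequenced values $V \NtwkSeq E$ sitting in $\Pi_2$ but not in $\Pi_1$. System Single-Step Lifting (Lemma~\ref{thm:sys-single-lifting}) then applies to $\Pi_1 \systemstep[l] \Pi_1''$ and $\Pi_1 \lessthansim \Pi_2^\sharp$. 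It yields $\Pi_2''$ with $\Pi_2^\sharp \systemstep[l] \Pi_2''$ and $\Pi_1'' \lessthan \Pi_2''$.

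Next we apply the inductive hypothesis to $\Pi_1'' \systemstepsn{n-1} \Pi_1'$ and $\Pi_1'' \lessthan \Pi_2''$. This gives $\Pi_2'$ and $k' \geq n-1$ with $\Pi_2'' \systemstepsn{k'} \Pi_2'$ and $\Pi_1' \lessthan \Pi_2'$. Concatenating the three reductions gives $\Pi_2 \systemstepsn{k} \Pi_2'$ with $k = j + 1 + k' \geq n$, as required.

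The main thing to check is that Corollary~\ref{lem:sys-less-than-reach} really gives a reduction of $\Pi_2$ itself, and that systems related by $\lessthan$ share a domain, so the pointwise lifting makes sense. The pointwise definition of $\Pi_1 \lessthan \Pi_2$ quantifies over $\SysLocs$. We would note that both sides are indexed by the same set of running locations, and that $\lessthansim$ keeps the domains equal, which Lemma~\ref{thm:sys-single-lifting} already uses. A second point is that the $\iota$ steps taken to reach $\Pi_2^\sharp$ at different locations can be interleaved into a single system reduction by \ruleref{Internal}, which is exactly what that corollary provides. Everything else is bookkeeping on step counts.
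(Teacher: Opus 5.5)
Your proof is correct and follows the paper's argument. Both induct on $n$ and, for each step, use Corollary~\ref{lem:sys-less-than-reach} to reach a $\lessthansim$-related system by internal steps, then match the step with Lemma~\ref{thm:sys-single-lifting}. The only difference is that you peel off the first step and apply the induction hypothesis to the tail (hence your generalization over $\Pi_1,\Pi_2$), whereas the paper applies it to the prefix and then handles the last step.
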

\begin{proof}
  By induction on the length $n$ of the initial reduction sequence.
  If $n = 0$ the conclusion is trivial by choosing $k = 0$ and $\Pi_2' = \Pi_2$.
  Otherwise suppose the reduction is $\Pi_1 \systemstepsn{n} \Pi_1' \systemstep \Pi_1''$.
  By induction, there is some $\Pi_2'$ and $k \geq n$ where
  $\Pi_2 \systemstepsn{k} \Pi_2'$ and $\Pi_1' \lessthan \Pi_2'$.
  By Corollary~\ref{lem:sys-less-than-reach}, we can step $\Pi_2' \systemsteps \Pi_2''$ where $\Pi_1' \lessthansim \Pi_2''$.
  By Lemma~\ref{thm:sys-single-lifting}, we can take a step $\Pi_2'' \systemstep \Pi_2'''$ to some $\Pi_2'''$ where $\Pi_1'' \lessthan \Pi_2'''$.
  Then the reduction sequence $\Pi_2 \systemstepsn{k} \Pi_2' \systemsteps \Pi_2'' \systemstep \Pi_2'''$ is precisely as is required.
\end{proof}

\begin{thm}[Completeness]
\label{thm:completeness}
  If $\chortypedplus{\ctx}{C}{\tau}{\rho}$,
  $\namedlocs{\rho} \subseteq \Omega$,
  $\langle C , \Omega \rangle \stepsn{n} \langle C' , \Omega' \rangle$,
  and $\eppfork{C}{\Omega} = \Pi$,
  then there is some $k \geq n$, $\Pi_1'$, and $\Pi_2'$ such that $\Pi_1' \lessthan \Pi_2'$, $\eppfork{C'}{\Omega'} = \Pi_1'$, and $\Pi \systemstepsn{k} \Pi_2'$.
\end{thm}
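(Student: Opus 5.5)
We proceed by induction on the length $n$ of the choreographic reduction $\langle C , \Omega \rangle \stepsn{n} \langle C' , \Omega' \rangle$, generalizing over $\Pi$. The idea is to combine Single-Step Completeness (Lemma~\ref{lem:single-complete}) with the System Lifting Property (Lemma~\ref{thm:sys-lifting}). The invariant we maintain along the reduction has two parts. First, the current choreography is well-typed under $\provesPlusCol$ with participants named inside the current set of running locations; Full Preservation (Theorem~\ref{thm:full-preservation}) supplies exactly this. Second, its thread-projection exists and lies below the system reached so far.

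The base case $n = 0$ is immediate. We take $k = 0$ and $\Pi_1' = \Pi_2' = \Pi$, and reflexivity of $\lessthan$ gives $\Pi_1' \lessthan \Pi_2'$.

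For the inductive step, we split the reduction as $\langle C , \Omega \rangle \stepsn{n} \langle C_1 , \Omega_1 \rangle \step[R] \langle C' , \Omega' \rangle$.
\begin{enumerate}
\item The induction hypothesis yields $k_1 \geq n$, $\Pi_1 = \eppfork{C_1}{\Omega_1}$ and $\Pi_2$ with $\Pi_1 \lessthan \Pi_2$ and $\Pi \systemstepsn{k_1} \Pi_2$.
\item By Theorem~\ref{thm:full-preservation} there is some $\rho_1$ with $\chortypedplus{\ctx}{C_1}{\tau}{\rho_1}$ and $\nl{\rho_1} \subseteq \Omega_1$.
\item We apply Lemma~\ref{lem:single-complete} to $C_1$, its projection $\Pi_1$, and the step $R$. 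This gives $\Pi_1''$ and $\Pi_1'''$ with $\eppfork{C'}{\Omega'} = \Pi_1'' \lessthan \Pi_1'''$ and $\Pi_1 \systemstepsn{m} \Pi_1'''$ for some $m \geq 1$, since that lemma produces a nonempty sequence.
\item We lift this sequence along $\Pi_1 \lessthan \Pi_2$ using Lemma~\ref{thm:sys-lifting}. This gives $\Pi_2 \systemstepsn{j} \Pi_2'$ with $j \geq m \geq 1$ and $\Pi_1''' \lessthan \Pi_2'$.
\item Transitivity of $\lessthan$ gives $\eppfork{C'}{\Omega'} \lessthan \Pi_2'$. Composing the reductions gives $\Pi \systemstepsn{k_1 + j} \Pi_2'$, and $k_1 + j \geq n + 1$, as required.
\end{enumerate}

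The main obstacle is the bookkeeping that keeps the hypotheses of Lemma~\ref{lem:single-complete} valid at each intermediate choreography. Two things need checking. The first is that the typing and participant conditions are in force; here we use the system-level invariant supplied by Full Preservation rather than the single-step version, so that the condition is stated relative to $\Omega_1$. The second is that the projection domain matches the current set of running locations: after fork and kill steps, $\dom{\Pi_1} = \Omega_1$. This holds by construction, since $\Pi_1$ is defined as $\eppfork{C_1}{\Omega_1}$, and the induction hypothesis guarantees this projection is defined.

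A subtler point is that Lemma~\ref{thm:sys-lifting} relates $\Pi_2$ only via $\lessthan$, and spawned threads' code must agree across the two systems. This is handled inside Lemma~\ref{thm:sys-single-lifting}, which already accounts for \SysForkN and \SysKillN, so no extra work should be needed at this level.
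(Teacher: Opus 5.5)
Your proposal is correct and follows the paper's proof essentially step for step: induction on $n$, re-establishing typing of the intermediate choreography via preservation, applying Single-Step Completeness (Lemma~\ref{lem:single-complete}) to the last step, lifting that system reduction along $\lessthan$ with the System Lifting Property (Lemma~\ref{thm:sys-lifting}), and closing with transitivity of $\lessthan$. The only differences are cosmetic: you cite Full Preservation (Theorem~\ref{thm:full-preservation}) directly rather than its corollary, and you make the step count $k_1 + j \geq n+1$ explicit.
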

\begin{proof}
  By induction on the number of steps $n$.
  The case when $n = 0$ is trivial.
  For $n > 0$, we have a reduction sequence of the form
  \[ \langle C_1 , \Omega_1 \rangle \stepsn{n} \langle C_2 , \Omega_2 \rangle \step \langle C_3 , \Omega_3 \rangle. \]
  By the inductive hypothesis, there is some $k \geq n$ and $\Pi_2$ where
  $\eppfork{C_2}{\Omega_2} \lessthan \Pi_2$ and $\eppfork{C_1}{\Omega_1} \systemstepsn{k} \Pi_2$.
  By Type Preservation (Theorem~\ref{thm:preservation}), $C_2$ is typed as $\chortypedplus{\ctx}{C_2}{\tau}{\rho_2}$ for some $\rho_2$,
  and $\nl{\rho_2} \subseteq \Omega_2$.
  Thus we can apply Lemma~\ref{lem:single-complete} to $C_2$ to find some
  $\Pi_3$ such that
  $\eppfork{C_3}{\Omega_3} \lessthan \Pi_3$ and $\eppfork{C_2}{\Omega_2} \systemstepss \Pi_3$.
  By Lemma~\ref{thm:sys-lifting}, there is some $\Pi_3' \greaterthan \Pi_3$ such that
  $\Pi_2 \systemstepss \Pi_3'$.
  Then $\Pi_3'$ is satisfactory, as $\eppfork{C_1}{\Omega_1} \systemstepsn{k} \Pi_2 \systemstepss \Pi_3'$ and
  $\eppfork{C_3}{\Omega_3} \lessthan \Pi_3 \lessthan \Pi_3'$.
  The argument is summarized by the following diagram.
  \begin{center}
    \begin{tikzpicture}
      \node (C1) {$\eppfork{C_1}{\Omega_1}$};
      \node[right=7em of C1] (Pi2) {$\Pi_2$};
      \node[below=2em of Pi2] (C2) {$\eppfork{C_2}{\Omega_2}$};
      \node[right=7em of Pi2] (Pi3') {$\Pi_3'$};
      \node[below=2em of Pi3'] (Pi3) {$\Pi_3$};
      \node[below=2em of Pi3] (C3) {$\eppfork{C_3}{\Omega_3}$};

      \draw[-implies,double equal sign distance,shorten >=3pt] (C1.east) -- (Pi2.west) node[pos=0.92,right,yshift=2pt]{${}^k$} node[label,midway,left]{};
      \draw[-implies,double equal sign distance,shorten >=2pt] (Pi2.east) -- (Pi3'.west) node[pos=0.91,right,yshift=2pt]{${}^+$} node[label,midway,left]{};
      \draw[-implies,double equal sign distance,shorten >=2pt] (C2.east) -- (Pi3.west) node[pos=0.91,right,yshift=2pt]{${}^+$} node[label,midway,left]{};

      \draw[dashed] (Pi2) -- (C2) node[label,midway,right]{$\lessthanabove$};
      \draw[dashed] (Pi3') -- (Pi3) node[label,midway,right]{$\lessthanabove$};
      \draw[dashed] (Pi3) -- (C3) node[label,midway,right]{$\lessthanabove$};
    \end{tikzpicture}
  \end{center}
\end{proof}

\completenessSimple*
\begin{proof}
  Follows directly from Theorem~\ref{thm:completeness} and Lemma~\ref{lem:typed-to-plus-typed}.
\end{proof}

\begin{lem}[Single-Step Participant Domain]
  \label{lem:single-step-participant}
  If $\Pi \systemstep[l] \Pi'$, then~$\loc{l} \subseteq \dom{\Pi}$.
\end{lem}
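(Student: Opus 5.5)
The plan is a case analysis on the last rule used to derive $\Pi \systemstep[l] \Pi'$, in Figure~\ref{fig:system-semantics}, reading off from its premises that every location in $\loc{l}$ has an entry in $\Pi$. The only ingredient is that a premise of the form $L \triangleright \Pi(L) \ntwkstep{l'} E$ presupposes $\Pi(L)$ is defined, i.e.\ $L \in \dom{\Pi}$. This is the convention already used in the definition of label compatibility, where $L \in \dom{\Pi}$ is paired with the single-location step.

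The four cases go as follows.
\begin{itemize}
\item \textbf{\ruleref{Internal}}, with $l = \iota_L$. Then $\loc{l} = \{L\}$, and the premise $L \triangleright \Pi(L) \ntwkstep{\iota} E$ gives $L \in \dom{\Pi}$.
\item \textbf{\ruleref{Fork}}, with $l = \RForkSys{L}{L'}{E_1}$. Then $\loc{l} = \{L\}$ only, so the fresh child $L'$ (which is explicitly \emph{not} in $\dom{\Pi}$) is harmless. The premise $L \triangleright \Pi(L) \ntwkstep{\RForkNtwk{L'}{E_1}} E_2$ gives $L \in \dom{\Pi}$.
\item \textbf{\ruleref{Kill}}, with $l = \RKillSys{L}$. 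The premise $L \triangleright \Pi(L) \ntwkstep{\RExitNtwk} E$ again gives $L \in \dom{\Pi}$.
\item \textbf{\ruleref{Comm}}, with $l = L_1.m \sendsto \rho$ and $\loc{l} = \{L_1\} \cup \rho$. The sender premise gives $L_1 \in \dom{\Pi}$. The universally quantified premise $\forall L \in \rho \ldotp L \triangleright \Pi(L) \ntwkstep{\RRecvNtwk{L_1}{m}} E_L$ gives $L \in \dom{\Pi}$ for each recipient.
\end{itemize}

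The one point needing care is the \ruleref{Comm} case, and only in how $\rho$ is read. There $\rho$ must be a concrete finite set of locations rather than containing $\anyLoc$ or free variables. This holds because \ruleref{N-Send} requires $\fv{\rho} = \varnothing$ and the label carries the finite recipient set. Membership in $\loc{l}$ then coincides with the quantifier range of the premise.

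If one prefers not to rely on the implicit-definedness convention, I would instead read $\Pi(L)$ as undefined for $L \notin \dom{\Pi}$. Under that reading a judgment $L \triangleright \Pi(L) \ntwkstep{l'} E$ is simply not derivable, because no network rule has an undefined program on its left. Either reading closes every case.
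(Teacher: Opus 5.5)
Your proof is correct and is essentially the paper's argument: a case analysis on the label (equivalently, on the system rule used), reading $L \in \dom{\Pi}$ off each premise of the form $L \triangleright \Pi(L) \ntwkstep{l'} E$. One small remark: your aside about $\rho$ in the Comm case is unnecessary, and $\fv{\rho} = \varnothing$ alone would not exclude $\anyLoc$ anyway. The premise quantifies over exactly $L \in \rho$, which is the same containment relation used in $\loc{l} = \{L_1\} \cup \rho$.
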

\begin{proof}
  By case analysis of the label~$l$.
\end{proof}

\begin{lem}[Single-Step Non-Participant Invariance]
  \label{lem:single-step-non-participant}
  If $\Pi \systemstep[l] \Pi'$ and~$L \in \dom{\Pi} \setminus \loc{l}$, then~$\Pi'(L) = \Pi(L)$.
\end{lem}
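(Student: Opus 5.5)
The plan is a direct case analysis on the system step $\Pi \systemstep[l] \Pi'$, that is, on which of the four rules \ruleref{Internal}, \ruleref{Comm}, \ruleref{Fork}, and \ruleref{Kill} of Figure~\ref{fig:system-semantics} produced it. In each case we compare the updated system the rule builds against the set $\loc{l}$. The point to check is that every entry the rule rewrites is indexed by a location in $\loc{l}$, apart from possibly fresh entries. A location $L \in \dom{\Pi} \setminus \loc{l}$ then keeps its program, and by the notational convention for $\subst{\Pi}{L}{E}$ we read off $\Pi'(L) = \Pi(L)$.

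\textbf{Internal.} Here $l = \iota_{L_1}$, $\loc{l} = \{L_1\}$, and $\Pi' = \subst{\Pi}{L_1}{E}$. Any $L \neq L_1$ is mapped to $\Pi(L)$.

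\textbf{Comm.} Here $l = L_1.m \sendsto \rho$ and $\loc{l} = \{L_1\} \cup \rho$. The rule updates exactly $L_1$ and the locations $L \in \rho$. Every other location is unchanged by the definition of $\subst*{\Pi}{{L_1}{E_1}{L \in \rho}{E_L}}$.

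\textbf{Fork.} Here $l = \RForkSys{L_1}{L'}{E_1}$ and $\loc{l} = \{L_1\}$. The rule updates $L_1$ and adds the new entry $L'$. By the premise $L' \notin \dom{\Pi}$, so $L'$ is not in $\dom{\Pi} \setminus \loc{l}$. Any other $L \in \dom{\Pi}$ therefore satisfies $L \neq L_1$ and $L \neq L'$, and keeps its program.

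\textbf{Kill.} Here $l = \RKillSys{L_1}$ and $\Pi' = \Pi \setminus L_1$. This system agrees with $\Pi$ on every remaining location $L \neq L_1$.

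No step here is a real obstacle. The only subtlety is the \ruleref{Fork} case, where $\Pi'$ has a larger domain than $\Pi$. The freshness premise $L' \notin \dom{\Pi}$ settles this, since it keeps the new entry out of the quantified set $\dom{\Pi} \setminus \loc{l}$.
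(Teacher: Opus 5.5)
Your proof is correct and matches the paper's, which is simply ``by case analysis of the label~$l$.'' Each label form is produced by exactly one of the four system rules, so your rule-by-rule check spells out that argument, including the use of the freshness premise $L' \notin \dom{\Pi}$ in the \textsc{Fork} case.
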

\begin{proof}
  By case analysis of the label~$l$.
\end{proof}

\begin{lem}[Non-Participant Invariance]
  \label{lem:step-non-participant}
  If $\Pi \systemsteps[l_1,\ldots,l_n] \Pi'$, $L \in \dom{\Pi}$, and~$L \notin \loc{l_i}$ for each~$i$, then~$\Pi'(L) = \Pi(L)$.
\end{lem}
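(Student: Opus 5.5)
The argument is a straightforward induction on the length $n$ of the reduction sequence $\Pi \systemsteps[l_1,\ldots,l_n] \Pi'$, using the two single-step lemmas just established.

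\emph{Base case.} When $n = 0$ we have $\Pi' = \Pi$, so the claim is immediate.

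\emph{Inductive step.} Decompose the sequence as $\Pi \systemsteps[l_1,\ldots,l_{n-1}] \Pi_0 \systemstep[l_n] \Pi'$. The proof then proceeds in three moves.
\begin{enumerate}
  \item Since $L$ lies in no $\loc{l_i}$ for $i < n$, the inductive hypothesis gives $\Pi_0(L) = \Pi(L)$.
  \item This equation presupposes $L \in \dom{\Pi_0}$, so that invariant must be carried through the induction. I would strengthen the statement to assert $L \in \dom{\Pi'}$ as well.
  \item Given $L \in \dom{\Pi_0}$ and $L \notin \loc{l_n}$, Lemma~\ref{lem:single-step-non-participant} yields $\Pi'(L) = \Pi_0(L) = \Pi(L)$.
\end{enumerate}

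\emph{Membership in the domain.} The only real subtlety is justifying $L \in \dom{\Pi_0}$ at each step, i.e.\ that a single step never removes a location outside its label. I would check this by case analysis on the system rules, exactly as in Lemma~\ref{lem:single-step-non-participant}:
\begin{itemize}
  \item \ruleref{Internal} and \ruleref{Comm} leave the domain unchanged.
  \item \ruleref{Fork} only enlarges the domain.
  \item \ruleref{Kill} removes only the single location $L_1$ with $\loc{\RKillSys{L_1}} = \{L_1\}$.
\end{itemize}
Hence $L \notin \loc{l}$ together with $L \in \dom{\Pi}$ implies $L \in \dom{\Pi'}$. This fact can either be folded into Lemma~\ref{lem:single-step-non-participant}, since $\Pi'(L) = \Pi(L)$ already implicitly presumes $L \in \dom{\Pi'}$, or stated separately and used in the strengthened induction.

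I expect no genuine obstacle; the only care needed is maintaining the domain invariant across \SysKillN{} steps.
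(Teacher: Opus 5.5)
Your proof is correct and follows essentially the same induction on $n$ as the paper, using Lemma~\ref{lem:single-step-non-participant} for the final step. The only difference is presentational. You carry $L \in \dom{\Pi_0}$ as an explicitly strengthened invariant. The paper instead splits on whether $L$ is in the domain of the intermediate system, and rules out the other case by noting that removing $L$ would require some label $\RKillSys{L}$ with $L \in \loc{l_i}$, which is the same fact your case analysis of the system rules establishes.
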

\begin{proof}
  By induction on~$n$.
  If~$n = 0$, the result is trivial.
  Otherwise suppose that~$\Pi \systemsteps[l_1,\ldots,l_n] \Pi' \systemstep[l_{n+1}] \Pi''$,
  and let~$L \in \dom{\Pi} \setminus \bigcup_{i \leq n + 1} \loc{l_i}$ be arbitrary.
  If~$L \in \dom{\Pi'}$, then the result holds by induction and Lemma~\ref{lem:single-step-non-participant}.
  Otherwise, suppose that~$L \notin \dom{\Pi'}$.
  Because~$L \in \dom{\Pi} \setminus \dom{\Pi'}$, it must be that~$L$ is a spawned thread
  that was killed during the reduction sequence.
  However, this is impossible, as then some label must be~$l_i = \RKillSys{L}$,
  for which~$L \in \loc{l_i} = \{L\}$, a contradiction.
\end{proof}

\begin{lem}[Single-Step Parallel Composition]
  \label{lem:subsystem-comp-single}
  If~$\Pi_a \systemstep[l] \Pi_a'$ and~$\Pi_b$ is such that~$\Pi_b(L) = \Pi_a(L)$ for all~$L \in \loc{l}$,
  then there is some~$\Pi_b'$ where
  \begin{itemize}
  \item[(1)] $\Pi_b \systemstep[l] \Pi_b'$,
  \item[(2)] $\Pi_b'(L) = \Pi_a'(L)$ for all~$L \in \loc{l} \cap \dom{\Pi_a'}$,
  \item[(3)] $\Pi_b'(L) = \Pi_a'(L)$ for all~$L \in \dom{\Pi_a'} \setminus \dom{\Pi_a}$, and
  \item[(4)] $\dom{\Pi_b} \setminus \dom{\Pi_b'} = \dom{\Pi_a} \setminus \dom{\Pi_a'}$.
  \end{itemize}
\end{lem}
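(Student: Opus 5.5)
The plan is a case analysis on the system label~$l$, following the four rules of Figure~\ref{fig:system-semantics}. The guiding observation is that every rule's premises mention only the programs $\Pi_a(L)$ for $L \in \loc{l}$, plus, for \ruleref{Fork}, the freshness condition $L' \notin \dom{\Pi_a}$. Since $\Pi_b$ agrees with $\Pi_a$ on $\loc{l}$, we can reuse the same local network steps to derive a step of $\Pi_b$ with the same label. We then define $\Pi_b'$ by the rule's update pattern applied to $\Pi_b$ rather than to $\Pi_a$.

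The non-spawning cases are direct.
\begin{itemize}
\item For $l = \iota_L$ (\ruleref{Internal}), the premise $L \triangleright \Pi_a(L) \ntwkstep{\iota} E$ holds verbatim for $\Pi_b(L) = \Pi_a(L)$. We take $\Pi_b' = \subst{\Pi_b}{L}{E}$, which gives (1) and (2). Condition (3) is vacuous, and (4) holds because both domains are unchanged.
\item For $l = L_1.m \sendsto \rho$ (\ruleref{Comm}), all of $L_1$ and every $L \in \rho$ lie in $\loc{l}$. The sender and receiver premises therefore transfer, and we update the same entries. The domain argument is as before.
\item For $l = \RKillSys{L}$ (\ruleref{Kill}), the \NtwkExit step transfers and we set $\Pi_b' = \Pi_b \setminus L$. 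Condition (2) is vacuous on $L$, since $L \notin \dom{\Pi_a'}$. Condition (3) is vacuous, and (4) holds because both sides equal $\{L\}$. This needs $L \in \dom{\Pi_b}$, which I take to be implicit in the hypothesis that $\Pi_b(L) = \Pi_a(L)$ is defined for $L \in \loc{l}$.
\end{itemize}

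The main obstacle is the fork case, $l = \RForkSys{L}{L'}{E_1}$. Here $\loc{l} = \{L\}$, so the parent's step $L \triangleright \Pi_a(L) \ntwkstep{\RForkNtwk{L'}{E_1}} E_2$ and the premise $\fv{E_1} = \varnothing$ transfer without change. The freshness premise $L' \notin \dom{\Pi_b}$, however, is not implied by the hypotheses, because $\Pi_b$ may already contain $L'$.

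I would try first to argue that $L' \notin \dom{\Pi_b}$ from the way the lemma is used, where $\Pi_b$ is a subsystem or a related system whose domain agrees with $\Pi_a$ up to killed threads. Failing that, I would add the side condition $\dom{\Pi_b} \subseteq \dom{\Pi_a}$ or $L' \notin \dom{\Pi_b}$. Another option is to exploit the non-determinism of \ruleref{N-Fork} to pick a name fresh for both systems, but that conflicts with matching the label~$l$ exactly. So I would state the freshness requirement explicitly if it cannot be derived.

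Given freshness, set $\Pi_b' = \subst*{\Pi_b}{{L'}{(E_1 \NtwkSeq \NtwkExit)}{L}{E_2}}$. Condition (2) holds at $L$. Condition (3) holds because $\dom{\Pi_a'} \setminus \dom{\Pi_a} = \{L'\}$ and both systems map $L'$ to $E_1 \NtwkSeq \NtwkExit$. Condition (4) holds because neither domain loses any element.
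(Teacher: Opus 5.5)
Your case analysis on $l$ is the same as the paper's. You have also correctly isolated the only delicate point: in the \ruleref{Fork} case nothing in the hypotheses gives $L' \notin \dom{\Pi_b}$, so read literally the lemma can fail. The paper handles this by exactly the route you set aside. It identifies systems up to the choice of spawned location names and renames $L'$ to be fresh for $\Pi_b$.

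Your objection that renaming changes the label is accurate at the literal level, since it touches $L'$, the task $E_1$ and the parent's continuation $E_2$. Under the paper's convention, however, labels are identified up to the same renaming. The callers, Lemmas~\ref{lem:subsystem-comp} and~\ref{lem:parallel-conf}, make the matching ``without loss of generality, spawned names are fresh'' assumption.

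Stating freshness as an explicit hypothesis is a sound and more transparent alternative, but choose $L' \notin \dom{\Pi_b}$ rather than $\dom{\Pi_b} \subseteq \dom{\Pi_a}$. In the proof of Parallel Confluence the lemma is used to replay $ls_b$ from $\Pi_1$, whose domain already contains threads spawned by $ls_a$. The inclusion hypothesis therefore fails as soon as $ls_a$ forks. Freshness is also not derivable from how the lemma is used, contrary to your first plan; it is supplied by that renaming.
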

\begin{proof}
  By case analysis on~$l$.
  Note that if~$l$ is~\SysForkN we can, by identifying systems up-to choice of spawned location names,
  ensure the spawned location is not in~$\Pi_b$.
\end{proof}

\begin{lem}[Parallel Composition]
  \label{lem:subsystem-comp}
  If~$\Pi_a \systemstep[l_1,\ldots,l_n] \Pi_a'$ and~$\Pi_b$ is such that~$\Pi_b(L) = \Pi_a(L)$
  for all~$i$ and~$L \in \loc{l_i} \cap \dom{\Pi_a}$,
  then there is some~$\Pi_b'$ where
  \begin{itemize}
  \item[(1)] $\Pi_b \systemsteps[l_1,\ldots,l_n] \Pi_b'$,
  \item[(2)] $\Pi_b'(L) = \Pi_a'(L)$ for all~$i$ and~$L \in \loc{l_i} \cap \dom{\Pi_a'}$,
  \item[(3)] $\Pi_b'(L) = \Pi_a'(L)$ for all~$L \in \dom{\Pi_a'} \setminus \dom{\Pi_a}$, and
  \item[(4)] $\dom{\Pi_b} \setminus \dom{\Pi_b'} = \dom{\Pi_a} \setminus \dom{\Pi_a'}$.
  \end{itemize}
\end{lem}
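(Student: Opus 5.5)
The proof goes by induction on $n$, the length of the trace, using Lemma~\ref{lem:subsystem-comp-single} at each step and carrying invariants (2)--(4) along. (The statement has a single arrow, but it should be read as a sequence of steps.) For $n = 0$ we take $\Pi_b' = \Pi_b$; conditions (2) and (3) are then vacuous and (4) holds trivially.

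For the inductive step, write the reduction as $\Pi_a \systemsteps[l_1,\ldots,l_n] \Pi_a^{m} \systemstep[l_{n+1}] \Pi_a'$.
\begin{enumerate}
\item The hypothesis on $\Pi_b$ restricts to the first $n$ labels, so the inductive hypothesis applies. It gives $\Pi_b \systemsteps[l_1,\ldots,l_n] \Pi_b^{m}$ satisfying (2)--(4) relative to $\Pi_a^{m}$.
\item To apply Lemma~\ref{lem:subsystem-comp-single} to the final step, we need $\Pi_b^{m}(L) = \Pi_a^{m}(L)$ for every $L \in \loc{l_{n+1}}$.
\item By Lemma~\ref{lem:single-step-participant}, any such $L$ lies in $\dom{\Pi_a^{m}}$. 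We then split on $L$.
\begin{itemize}
\item If $L$ occurs in some earlier $\loc{l_i}$ and is still in $\dom{\Pi_a^{m}}$, invariant (2) gives the equality.
\item If $L$ was spawned during the first $n$ steps, that is $L \in \dom{\Pi_a^{m}} \setminus \dom{\Pi_a}$, invariant (3) gives it.
\item Otherwise $L \in \dom{\Pi_a}$ and $L$ is untouched by $l_1,\ldots,l_n$. Non-Participant Invariance (Lemma~\ref{lem:step-non-participant}) gives $\Pi_a^{m}(L) = \Pi_a(L)$. The original hypothesis gives $\Pi_a(L) = \Pi_b(L)$, since $L \in \loc{l_{n+1}} \cap \dom{\Pi_a}$.
\end{itemize}
We still need the analogous invariance on the $b$-side, $\Pi_b^{m}(L) = \Pi_b(L)$. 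Lemma~\ref{lem:step-non-participant} gives it once $L \in \dom{\Pi_b}$. This holds because $\Pi_b(L)$ is defined, by the hypothesis applied to $L$.
\end{enumerate}

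The single-step lemma then produces $\Pi_b^{m} \systemstep[l_{n+1}] \Pi_b'$, and we re-establish the invariants for the full trace.
\begin{itemize}
\item \textbf{(2).} For $L \in \loc{l_{n+1}} \cap \dom{\Pi_a'}$, it comes directly from the lemma. For $L$ in an earlier $\loc{l_i} \cap \dom{\Pi_a'}$ but not in $\loc{l_{n+1}}$, the last step leaves $L$ unchanged on both sides (Lemma~\ref{lem:single-step-non-participant}), so the old equality persists.
\item \textbf{(3).} Locations spawned by the last step are covered by the lemma's clause (3). Earlier-spawned locations that survive are handled by the same frame argument.
\item \textbf{(4).} The sets of removed locations compose step by step. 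The only subtlety is that a location spawned and then killed within the trace appears in neither difference, on either side, so the equalities still match.
\end{itemize}

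I expect the main obstacle to be the bookkeeping of domains under \SysForkN and \SysKillN. We must keep the fresh names chosen on the $b$-side equal to those on the $a$-side, which is why the single-step lemma works up to renaming of spawned names and requires freshness in $\Pi_b$. We must also ensure that a killed location never reappears among the $\loc{l_i}$, since fresh names are never reused. I would make name-freshness an explicit convention, identifying systems up to choice of spawned names, and then verify that the case analysis above is exhaustive.
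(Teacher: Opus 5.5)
Your proposal is correct and follows the paper's proof essentially step for step. Both argue by induction on the trace length, apply the single-step composition lemma to the last label, and establish agreement on $\loc{l_{n+1}}$ through the same three-way split: earlier participant, handled by (2); freshly spawned, handled by (3); untouched and in $\dom{\Pi_a}$, handled by the hypothesis plus non-participant invariance on both sides. Your extra bookkeeping fills in details the paper leaves implicit: re-establishing (2)--(4) for the full trace, checking $L \in \dom{\Pi_b}$, and stating the fresh-name convention explicitly.
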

\begin{proof}
  By induction on~$n$.
  Without loss of generality, we assume the name of a spawned thread is not re-used once killed.
  When~$n = 0$ the result is trivial, so suppose that $\Pi_a \systemsteps[l_1,\ldots,l_n] \Pi_a' \systemsteps[l_{n+1}] \Pi_a''$.
  By induction, there is~$\Pi_b'$ satisfying (1)--(4).
  To complete the proof, we apply Lemma~\ref{lem:subsystem-comp-single},
  and hence must prove that $\Pi_b'(L) = \Pi_a'(L)$ for all~$L \in \loc{l_{n+1}}$.
  By Lemma~\ref{lem:single-step-participant}, $\loc{l_{n+1}} \subseteq \dom{\Pi_a'}$, so~$L \in \dom{\Pi_a'}$.
  If~$L \in \loc{l_i}$ for any~$i$, then the desired result holds by (2) of the inductive hypothesis,
  so we instead assume that~$L \notin \loc{l_i}$ for each~$i$.
  Similarly if~$L \notin \dom{\Pi_a}$ then the desired result holds by (3),
  so we instead assume that~$L \in \dom{\Pi_a}$.
  Then because~$L \in \loc{l_{n+1}} \cap \dom{\Pi_a}$,
  by assumption we have that~$\Pi_b(L) = \Pi_a(b)$.
  Finally, by Lemma~\ref{lem:step-non-participant} and (1) of the inductive hypothesis,
  we have that both~$\Pi_a(L) = \Pi_a'(L)$ and~$\Pi_b(L) = \Pi_b'(L)$,
  so that~$\Pi_b'(L) = \Pi_a'(L)$ as desired.
\end{proof}

\begin{lem}[Parallel Confluence]
  \label{lem:parallel-conf}
  If~$ls_a$ and~$ls_b$ are two lists of system labels where
  $\Pi \systemsteps[ls_a] \Pi_1$,
  $\Pi \systemsteps[ls_b] \Pi_2$, and
  $\loc{ls_a} \cap \loc{ls_b} = \varnothing$,
  then there is some~$\Pi_3$ where~$\Pi_1 \systemsteps[ls_b] \Pi_3$ and~$\Pi_2 \systemsteps[ls_a] \Pi_3$.
\end{lem}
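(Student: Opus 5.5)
We prove Parallel Confluence by induction on the length of $ls_b$, using Parallel Composition (Lemma~\ref{lem:subsystem-comp}) to commute the two traces one label at a time. The base case $ls_b = \epsilon$ is trivial: take $\Pi_3 = \Pi_1$, since $\Pi_2 = \Pi$. Rather than induct directly, it is cleaner to apply Lemma~\ref{lem:subsystem-comp} twice, once in each direction, and then identify the two resulting systems pointwise.

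\emph{First commutation.} Take $\Pi_a = \Pi$ with trace $ls_b$, reaching $\Pi_a' = \Pi_2$, and take $\Pi_b = \Pi_1$. To apply Lemma~\ref{lem:subsystem-comp} we need $\Pi_1(L) = \Pi(L)$ for every $L \in \loc{ls_b} \cap \dom{\Pi}$. Such an $L$ is not in $\loc{ls_a}$ by disjointness, and it lies in $\dom{\Pi}$. Hence Non-Participant Invariance (Lemma~\ref{lem:step-non-participant}) applied to $\Pi \systemsteps[ls_a] \Pi_1$ gives exactly this. We obtain $\Pi_3$ with $\Pi_1 \systemsteps[ls_b] \Pi_3$, with $\Pi_3$ agreeing with $\Pi_2$ on $\loc{ls_b} \cap \dom{\Pi_2}$ and on the threads newly spawned by $ls_b$, and with the same killed set as $\Pi \to \Pi_2$.

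\emph{Second commutation.} Symmetrically, we apply Lemma~\ref{lem:subsystem-comp} to $\Pi \systemsteps[ls_a] \Pi_1$ with $\Pi_b = \Pi_2$ to get $\Pi_3'$ such that $\Pi_2 \systemsteps[ls_a] \Pi_3'$. It remains to show $\Pi_3 = \Pi_3'$, which is the main obstacle: we must check that they agree on domain and on every location.

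We partition the locations as follows, assuming throughout (as in the proof of Lemma~\ref{lem:subsystem-comp}) that spawned names are chosen fresh and never reused, so that threads spawned by $ls_a$ and by $ls_b$ are distinct and lie outside $\dom{\Pi}$.
\begin{enumerate}
\item For $L$ touched by $ls_b$ or spawned by it: $\Pi_3(L) = \Pi_2(L)$ by the first commutation. Also $\Pi_3'(L) = \Pi_2(L)$ by Non-Participant Invariance along $\Pi_2 \systemsteps[ls_a] \Pi_3'$. This uses that $L \notin \loc{ls_a}$, which requires noting that a thread spawned by $ls_b$ has a fresh name and so cannot occur in labels of $ls_a$ before it exists; this follows from Lemma~\ref{lem:single-step-participant} applied along the $ls_a$ trace from $\Pi$.
\item For $L$ touched by or spawned by $ls_a$: the argument is symmetric.
\item For $L$ touched by neither: both $\Pi_3(L)$ and $\Pi_3'(L)$ equal $\Pi(L)$, by two applications of Lemma~\ref{lem:step-non-participant}.
\end{enumerate}

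For domains, condition (4) of Lemma~\ref{lem:subsystem-comp} shows that each composite removes exactly the union of the kill sets of $ls_a$ and $ls_b$. Spawned threads are added in both composites. Hence $\dom{\Pi_3} = \dom{\Pi_3'}$, and therefore $\Pi_3 = \Pi_3'$, which completes the proof.
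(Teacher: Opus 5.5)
Your proof is correct, but it is organized differently from the paper's. The paper splits $\Pi$ into three disjoint subsystems: the part on $\loc{ls_a}$, the part on $\loc{ls_b}$, and the untouched rest $\Pi'$. It runs $ls_a$ and $ls_b$ each on its own piece via Lemma~\ref{lem:subsystem-comp} and takes $\Pi_3 = \Pi_a' \uplus \Pi_b' \uplus \Pi'$, so both paths land on the same system by construction.

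You never split the system. Instead you commute each trace past the other on whole systems, using two applications of Lemma~\ref{lem:subsystem-comp} justified by Lemma~\ref{lem:step-non-participant}. This gives two candidate joins $\Pi_3$ and $\Pi_3'$, which you then prove equal by a pointwise case analysis plus a domain computation.

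\emph{What the paper's route buys and leaves implicit.} It is shorter and needs no equality check. However, it silently assumes that restricting a trace to a subsystem and re-embedding it into the full system is sound, which is itself an instance of the same two lemmas. It also assumes the subsystems are well-defined; strictly, $\Pi_a$ should have domain $\loc{ls_a} \cap \dom{\Pi}$, since $\loc{ls_a}$ may name threads spawned along $ls_a$.

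\emph{What your route buys.} It costs extra bookkeeping but makes the delicate point explicit: a thread spawned by $ls_b$ can never occur in $\loc{ls_a}$. You correctly derive this from freshness and Lemma~\ref{lem:single-step-participant}. Your remark that ``spawned threads are added in both composites'' is really the observation that the final domain depends only on the initial domain and the label list. It is worth stating as such, since conditions (1)--(4) of Lemma~\ref{lem:subsystem-comp} do not by themselves rule out extra additions.

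Both proofs rely on the same without-loss-of-generality freshness convention for spawned names. Your opening announcement of an induction on $ls_b$ is never used and can be dropped.
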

\begin{proof}
  Decompose~$\Pi$ into the disjoint union~$\Pi = \Pi_a \uplus \Pi_b \uplus \Pi'$ of three sub-systems:
  $\Pi_a$ with domain~$\loc{ls_a}$,
  $\Pi_b$ with domain~$\loc{ls_b}$,
  and~$\Pi'$ with domain~$\dom{\Pi} \setminus (\loc{ls_a} \cup \loc{ls_b})$.
  By Lemma~\ref{lem:subsystem-comp},
  we can step~$\Pi_a \systemsteps[ls_a] \Pi_a'$ and $\Pi_b \systemsteps[ls_b] \Pi_b'$.
  Without loss of generality, we let all newly-spawned locations in~$\Pi_a'$ and~$\Pi_b'$
  have names that are disjoint and not in~$\dom{\Pi}$.
  Then by Lemmas~\ref{lem:subsystem-comp} and \ref{lem:step-non-participant},
  we claim that~$\Pi_3 = \Pi_a' \uplus \Pi_b' \uplus \Pi'$ suffices because
  $\Pi_a \uplus \Pi_b \uplus \Pi' \systemsteps[ls_a] \Pi_a' \uplus \Pi_b \uplus \Pi' = \Pi_1 \systemsteps[ls_b] \Pi_a' \uplus \Pi_b' \uplus \Pi' = \Pi_3$
  and~$\Pi_a \uplus \Pi_b \uplus \Pi' \systemsteps[ls_b] \Pi_a \uplus \Pi_b' \uplus \Pi' = \Pi_2 \systemsteps[ls_a] \Pi_a' \uplus \Pi_b' \uplus \Pi' = \Pi_3$.
\end{proof}

\begin{lem}[Network Program Determinism]\label{lem:ntwk-det}
  If~$L \triangleright E \ntwkstep{l_1} E_1$ and~$L \triangleright E \ntwkstep{l_2} E_2$,
  then either (1) $l_1 = l_2$, (2) $l_1 = \RRecvNtwk{L'}{v_1}$ and $l_2 = \RRecvNtwk{L'}{v_2}$ where~$v_1 \neq v_2$,
  or (3) $l_1 = \RRet{e}{e_1}$ and $l_2 = \RRet{e}{e_2}$ where~$e_1 \neq e_2$.
\end{lem}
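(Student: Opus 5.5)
The plan is induction on the derivation of the first step $L \triangleright E \ntwkstep{l_1} E_1$, inverting the second step $L \triangleright E \ntwkstep{l_2} E_2$ at each case. The key observation is that the network rules are syntax-directed: every non-context rule fires on a distinct outermost form of $E$. Moreover, each such form has exactly one redex position once we use the values-cannot-step property of the local language and of network values. The latter can be checked by inspecting the rules: no rule has a network value $V$ on its left-hand side.

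First I would record a small auxiliary fact: if $\NtwkVal{V}$ then $L \triangleright V$ has no step. This follows by induction on $V$, using local property~(\ref{prop:li:values}) for $\Ret{v}$ and the fact that every evaluation context $\eta$ that is not $\hole$ around a value is itself a value. With this in hand, two observations handle the interaction with \ruleref{N-Ctx}.

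First, if $E = \eta[E']$ with $\eta \neq \hole$ and $E'$ not a value, then no base rule applies to $E$. This is because each base rule requires the subterm in $\eta$'s hole position to be a value, such as $V$ in \ruleref{N-Seq} or $\Ret{v}$ in \ruleref{N-Send} and \ruleref{N-Let}.

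Second, the decomposition $E = \eta[E']$ with $E'$ steppable is unique, since the grammar of $\eta$ is left-to-right deterministic. For instance, $V~\eta$ requires the function position to already be a value.

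Hence if both steps use \ruleref{N-Ctx}, they step the same subterm, and the induction hypothesis gives the trichotomy for the inner labels. The outer label equals the inner one, so the same alternative holds.

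The base cases are then a direct comparison. Most base rules are deterministic with label $\iota$, giving case (1). \ruleref{N-IAmIn} and \ruleref{N-IAmNotIn} are mutually exclusive because their side conditions are $L\in\rho$ versus $L\notin\rho$. \ruleref{N-Fork} only carries the label $\RForkNtwk{L'}{E_1'}$; the name $L'$ is a nondeterministic choice, so I would treat labels up to the choice of the fresh name, as the paper does elsewhere by identifying systems up to spawned names. \ruleref{N-Send}, \ruleref{N-Choose} and \ruleref{N-Exit} are deterministic.

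The genuinely nondeterministic cases are the following. \ruleref{N-Recv} can receive any value, which gives case (2), or case (1) when the two values coincide. \ruleref{N-AllowL} and \ruleref{N-AllowR} on the same $\AllowN$ term give labels $\RRecvNtwk{L'}{\Left}$ and $\RRecvNtwk{L'}{\Right}$. Here I would read case (2) as covering any message $m$ (values or selection labels), as the grammar of labels suggests. Finally, \ruleref{N-Ret} can take different local steps $e\localstep e_1$ and $e\localstep e_2$, since local semantics is only confluent, not deterministic. I read the notation $\RRet{e}{e_1}$ as the $\iota$-step annotated with its local reduct, so this is case (3).

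The main obstacle I expect is label bookkeeping. Internal steps are all labelled $\iota$, so case (3) only makes sense if $\iota$ steps from \ruleref{N-Ret} are tagged with their local reduct; I would make this refinement explicit first. After that, the remaining work is the uniqueness-of-decomposition argument, checked across every evaluation context form.
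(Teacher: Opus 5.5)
Your proposal is correct and follows the same route as the paper, whose entire proof is ``by induction on the first step, and case analysis of the second step''; you have simply spelled out the inversion cases. Your caveats are ones the paper silently assumes: reading \ruleref{N-Fork} labels up to the choice of fresh name and letting case~(2) cover the selection messages $\Left$ and $\Right$ are needed for the statement to hold literally, and tagging the $\iota$ of \ruleref{N-Ret} with its local reduct is what gives case~(3) any content (one small slip: the auxiliary fact you need is that if $\eta[E']$ is a value with $\eta \neq \hole$ then $E'$ is a value, not the converse).
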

\begin{proof}
  By induction on the first step, and case analysis of the second step.
\end{proof}

\begin{lem}[System Per-Location Determinism]\label{lem:sys-det}
  If~$\Pi \systemstep[l_1] \Pi_1$, $\Pi \systemsteps[ls_2] \Pi_2$,
  and~$\loc{l_1} \cap \loc{ls_2} \neq \varnothing$, then there is
  some~$ls_a, ls_b$, and~$l_1'$ such that (1) $ls_2 = ls_a \doubleplus l_1' \doubleplus ls_b$,
  (2) $\loc{l_1} \cap \loc{ls_a} = \varnothing$,
  and (3) either $l_1' = l_1$, or $l_1 = L.\RRet{e}{e_1}$ and $l_2 = L.\RRet{e}{e_2}$ where~$e_1 \neq e_2$.
\end{lem}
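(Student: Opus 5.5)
The proof goes by induction on the length of $ls_2$. We repeatedly peel off the first label of $ls_2$ and use Lemma~\ref{lem:ntwk-det} to compare that label against $l_1$ at a shared location. Write $ls_2 = l \doubleplus ls'$ with $\Pi \systemstep[l] \Pi' \systemsteps[ls'] \Pi_2$.

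\textbf{Disjoint first label.} Suppose $\loc{l} \cap \loc{l_1} = \varnothing$. By Lemma~\ref{lem:single-step-non-participant}, every $L \in \loc{l_1}$ has $\Pi'(L) = \Pi(L)$, and $L$ is still in $\dom{\Pi'}$ because $l$ could only remove a location in $\loc{l}$. Then Lemma~\ref{lem:subsystem-comp-single}, applied to $\Pi \systemstep[l_1] \Pi_1$ with $\Pi_b = \Pi'$, shows that $\Pi'$ can also take an $l_1$-step. Since $\loc{l_1} \cap \loc{ls'} \neq \varnothing$, the inductive hypothesis applies to $\Pi'$ and gives a decomposition $ls' = ls_a' \doubleplus l_1' \doubleplus ls_b$. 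Setting $ls_a = l \doubleplus ls_a'$ yields (1)--(3), since (2) holds for $l$ by assumption.

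\textbf{Overlapping first label.} Otherwise pick $L \in \loc{l} \cap \loc{l_1}$. Both steps start from $\Pi(L)$, and their local labels are $\extract{l_1}{L}$ and $\extract{l}{L}$. Lemma~\ref{lem:ntwk-det} gives three sub-cases.
\begin{itemize}
\item[(i)] The local labels are equal. We must show this forces $l = l_1$. If the label is $\iota$, $\mathsf{exit}$, or $\mathsf{fork}$, then $\loc{}$ is the singleton $\{L\}$. In the fork case the spawned name is taken up to renaming and the task $E$ is fixed by $\Pi(L)$, so the system labels coincide. If the label is a send $m \sendsto \rho$, the system labels agree on sender, message and recipients. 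If it is a receive $L'.m \sendsto$, both system labels have the same sender $L'$ and message $m$. The recipient set is determined by $L'$'s program: the send label at $\Pi(L')$ is unique by Lemma~\ref{lem:ntwk-det}, since sends are not nondeterministic.
\item[(ii)] The local labels are receives with different values. By the \ruleref{Comm} rule each value must match the sender's send label at $\Pi(L')$, which is unique. This contradicts $v_1 \neq v_2$, so this case is impossible.
\item[(iii)] The local labels are $\mathsf{ret}$ steps with different results. These are internal steps $\iota_L$, so we take $l_1' = l$ with $ls_a = \epsilon$ and land in the second disjunct of (3).
\end{itemize}
In cases (i) and (iii) we take $ls_a = \epsilon$ and $ls_b = ls'$.

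\textbf{Main obstacle.} The hard part is the matching argument for communication labels in (i) and (ii). Multicast involves several locations, and the sender's label must be pinned down from any one recipient's program. This needs a careful case analysis of \ruleref{Comm}, using determinism of the sender's send step, together with the convention that spawned names are identified up to renaming.
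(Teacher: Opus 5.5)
Your proposal is correct and follows essentially the same route as the paper. Both argue by induction on $ls_2$, splitting on whether the first label shares a location with $l_1$: recurse when it does not, and otherwise use Lemma~\ref{lem:ntwk-det} together with determinism of the sender's send step to rule out receiving two different values. You are somewhat more careful than the paper in two places: you justify via Lemma~\ref{lem:subsystem-comp-single} that $\Pi'$ can still take the $l_1$ step, which the inductive hypothesis needs, and you check that equal local labels at the shared location force equal system labels, including the recipient set of a multicast.
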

\begin{proof}
  By induction on~$ls_2$.
  We cannot have that~$ls_2 = \epsilon$, for there should be some location in common to~$ls_2$ and~$l_1$.
  Otherwise suppose that~$ls_2 = l_2 \doubleplus ls_2'$.
  First suppose that~$\loc{l_1} \cap \loc{l_2} = \varnothing$.
  Then it must be the case that~$\loc{l_1} \cap \loc{ls_2'} \neq \varnothing$,
  so we may apply induction to~$ls_2'$ to yield~$ls_a, ls_b$, and~$l_1'$.
  Then~$ls_a' = l_2 \doubleplus ls_a$, $ls_b' = ls_b$, and~$l_1'' = l_1'$ suffice.
  Otherwise suppose that there is some location~$L \in \loc{l_1} \cap \loc{l_2}$.
  Then by Lemma~\ref{lem:ntwk-det}, $\extract{l_1}{L}$ and $\extract{l_2}{L}$ are either equal,
  receives of two different values, or local programs stepping to two different values.
  We claim that~$L$ cannot receive two different values.
  Indeed, the message sent by the sender~$L' \in \loc{l_1} \cap \loc{l_2}$ is deterministic,
  so the received values must be equal.
  Thus choosing~$ls_a' = \epsilon$, $ls_b' = ls_2'$, and~$l_1' = l_2$ suffices.
\end{proof}

\begin{defn}[Prefix]
  If~$xs$ and~$ys$ are two lists (containing any type of object), we define~$xs \leq ys$ to mean that~$xs$ is a prefix of~$ys$.
  That is, $xs \leq ys$ if and only if there is some~$zs$ such that~$ys = xs + zs$.
\end{defn}

\begin{defn}[Subsequence]
  If~$xs$ and~$ys$ are two lists, we define~$xs \subseteq ys$ to mean that~$xs$ is a subsequence of~$ys$.
  That is, $xs \subseteq ys$ if and only if there is some map~$\sigma : \size{xs} \rightarrow \size{ys}$
  which is strictly order-preserving (i.e., if~$i < j$ then~$\sigma(i) < \sigma(j)$)
  and the values in the indices of~$xs$ correspond to the values in~$ys$ after~$\sigma$.
  That is, for all~$i \in \size{xs}$, $xs[i] = ys[\sigma(i)]$.
\end{defn}

\begin{lem}[System Step Mirroring]\label{lem:sys-mirror}
  If~$\Pi \systemsteps[ls_1] \Pi_1$ and~$\Pi \systemsteps[ls_2] \Pi_2$,
  where~$ls_2$ contains no steps of the form~$L.\RRet{e}{e'}$,
  then there is some~$ls_1' \subseteq ls_2$ and~$\Pi_1'$
  where~$\Pi_1 \systemsteps[ls_1'] \Pi_1'$ and $\forall L, \extract{ls_2}{L} \leq \extract{ls_1 \doubleplus ls_1'}{L}$.
  That is, we can always ``catch-up'' each location from~$\Pi_1$ to their state in~$\Pi_2$.
\end{lem}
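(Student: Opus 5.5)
We argue by induction on the length of $ls_2$, keeping $ls_1$ and $\Pi_1$ fixed in form but generalizing the statement over them. If $ls_2=\epsilon$, take $ls_1'=\epsilon$ and $\Pi_1'=\Pi_1$. Every extraction $\extract{ls_2}{L}$ is then empty and hence a prefix of anything.

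For the inductive step, write $ls_2 = l_2 \doubleplus ls_2'$ with $\Pi \systemstep[l_2] \Pi_2^0 \systemsteps[ls_2'] \Pi_2$. The first move is to see whether $l_2$ has already been ``used up'' by $ls_1$ at the locations it touches. We split on whether $\loc{l_2}\cap\loc{ls_1}=\varnothing$.

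\emph{Disjoint case.} Here the parallel confluence lemma (Lemma~\ref{lem:parallel-conf}) applies to $\Pi \systemsteps[ls_1]\Pi_1$ and $\Pi\systemstep[l_2]\Pi_2^0$. It yields $\Pi_3$ with $\Pi_1\systemstep[l_2]\Pi_3$ and $\Pi_2^0\systemsteps[ls_1]\Pi_3$. We then apply the induction hypothesis starting from $\Pi_2^0$, with the pair $(ls_1,\Pi_3)$ and the remaining trace $(ls_2',\Pi_2)$. This gives $ls_1''\subseteq ls_2'$ and $\Pi_3\systemsteps[ls_1'']\Pi_1'$. Set $ls_1' = l_2\doubleplus ls_1''$, which is a subsequence of $ls_2$. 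The prefix condition holds because the locations of $l_2$ do not occur in $ls_1$. For every $L$, extraction therefore commutes past $ls_1$: $\extract{l_2 \doubleplus ls_1}{L}$ equals $\extract{ls_1\doubleplus l_2}{L}$ up to this reordering, which changes nothing per location.

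\emph{Overlapping case.} Apply System Per-Location Determinism (Lemma~\ref{lem:sys-det}) with $l_1:=l_2$ and the trace $ls_1$. This decomposes $ls_1 = ls_a\doubleplus l_2'\doubleplus ls_b$ with $\loc{l_2}\cap\loc{ls_a}=\varnothing$. The alternative outcome of that lemma is a local step to a different expression, and that is excluded because $ls_2$ contains no $\iota$-labels of the form $L.\RRet{e}{e'}$. So $l_2'=l_2$.

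We then want to ``rotate'' $l_2$ to the front of $ls_1$. Because $l_2$ is disjoint from $ls_a$, Lemma~\ref{lem:parallel-conf} (applied to the prefix $ls_a$ and the single step $l_2$) shows that $\Pi\systemstep[l_2]\Pi_2^0\systemsteps[ls_a]\Pi^\ast$. Here $\Pi^\ast$ is the state reached after $ls_a\doubleplus l_2$, which we must identify with the original intermediate state. This identification needs the determinism of $l_2$'s effect on the shared locations, and the network semantics only gives that up to the nondeterministic choice of fresh names; we fix it by identifying systems up to spawned names, as is done in Lemma~\ref{lem:subsystem-comp-single}.

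With the rotation in hand, $\Pi_2^0\systemsteps[ls_a\doubleplus ls_b]\Pi_1$. Apply the induction hypothesis from $\Pi_2^0$, with the pair $(ls_a\doubleplus ls_b,\Pi_1)$ and the remaining trace $(ls_2',\Pi_2)$. This yields $ls_1'\subseteq ls_2'\subseteq ls_2$ and $\Pi_1\systemsteps[ls_1']\Pi_1'$ with $\extract{ls_2'}{L}\le\extract{ls_a\doubleplus ls_b\doubleplus ls_1'}{L}$ for all $L$. Prepending $l_2$ then gives the required prefix property for every $L$:
\begin{itemize}
\item for $L\in\loc{l_2}$, $\extract{ls_a}{L}$ is empty, so $\extract{ls_1}{L}=\extract{l_2}{L}\doubleplus\extract{ls_b}{L}$;
\item for $L\notin\loc{l_2}$, nothing changes.
\end{itemize}

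The main obstacle is the rotation step. It requires a commutation lemma for a single step past a disjoint trace that holds \emph{as an equality of resulting systems}, not merely the existence of a common successor. I would first try to derive it from Lemma~\ref{lem:subsystem-comp} directly. That lemma characterizes the resulting system pointwise on participating locations and fixes the killed domain, which pins down $\Pi^\ast$ exactly.
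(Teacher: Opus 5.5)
Your proposal follows the paper's proof essentially step for step. It uses induction on $ls_2$ and splits on whether the first label of $ls_2$ shares locations with $\loc{ls_1}$. In the disjoint case it applies Parallel Confluence and prepends that label to the catch-up trace. In the overlapping case it uses System Per-Location Determinism plus Parallel Confluence to rotate the label to the front of $ls_1$, and the per-location prefix bookkeeping is the same.

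The paper also makes the identification you flag as the main obstacle, silently, when it writes $\Pi_2 \systemsteps[ls_a] \Pi_b$. That identification rests on a labelled system step being determined by its source system: the fork label already names the spawned location, and the hypothesis on $ls_2$ excludes $L.\RRet{e}{e'}$ steps. It does not follow from Lemma~\ref{lem:subsystem-comp}, which only asserts that some matching successor exists and cannot by itself force equality with the given intermediate state.
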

\begin{proof}
  By induction on~$ls_2$.
  If~$ls_2 = \epsilon$, then the conclusion is trivial.
  Otherwise suppose that~$ls_2 = l \doubleplus ls_2'$,
  where~$\Pi \systemstep[l] \Pi_2 \systemsteps[ls_2'] \Pi_2'$.
  \begin{itemize}
    \item First suppose that~$\loc{l} \cap \loc{ls_1} \neq \varnothing$.
    In this case, we apply Lemma~\ref{lem:sys-det} to find~$ls_a$, $ls_b$ where
    $ls_1 = ls_a \doubleplus l \doubleplus ls_b$ and~$\loc{l} \cap \loc{ls_a} = \varnothing$.
    Note that because~$l$ cannot be a local program step, Lemma~\ref{lem:sys-det}
    guarantees the middle step label to be identical.
    Say that~$\Pi \systemsteps[ls_a] \Pi_a \systemstep[l] \Pi_b \systemsteps[ls_b] \Pi_1$.
    Then by applying Parallel Confluence (Lemma~\ref{lem:parallel-conf}),
    we can step~$\Pi_2 \systemsteps[ls_a] \Pi_b$.
    By induction on~$ls_2'$ and~$\Pi_2 \systemsteps[ls_a] \Pi_b \systemsteps[ls_a] \Pi_1$,
    we can find some~$ls_1' \subseteq ls_a \doubleplus ls_b$ and~$\Pi_1'$ where~$\Pi_1 \systemsteps[ls_1'] \Pi_1'$
    and $\forall L, \extract{ls_2'}{L} \leq \extract{ls_b \doubleplus ls_1'}{L}$.
    Then~$\Pi_1'$ and~$ls_1'$ suffices because
    \begin{itemize}
      \item $ls_1' \subseteq ls_2' \subseteq l \doubleplus ls_2' = ls_2$, and

      \item for each~$L$, noting that either~$L \in \loc{l}$ and $\extract{ls_a}{L} = \epsilon$,
      or~$L \notin \loc{l}$ and $\extract{l}{L} = \epsilon$, we have that
      \begin{align*}
        \extract{ls_2}{L} &= \extract{l}{L} \doubleplus \extract{ls_2'}{L}\\
        &\leq \extract{l}{L} \doubleplus \extract{ls_a}{L} \doubleplus \extract{ls_b}{L} \doubleplus \extract{ls_1'}{L}\\
        &= \extract{ls_a}{L} \doubleplus \extract{l}{L} \doubleplus \extract{ls_b}{L} \doubleplus \extract{ls_1'}{L}\\
        &= \extract{ls_1 \doubleplus ls_1'}{L}
      \end{align*}
    \end{itemize}
    The following diagram summarizes the argument,
    where solid lines represent premises, and dashed lines represent conclusions.
    \[\begin{tikzcd}
      \Pi && {\Pi_2} && {\Pi_2'} \\
      \\
      {\Pi_a} && {\Pi_b} && {\Pi_1} && {\Pi_1'}
      \arrow["l", from=1-1, to=1-3]
      \arrow["{ls_a}"', from=1-1, to=3-1]
      \arrow["{ls_2'}", from=1-3, to=1-5]
      \arrow["{ls_a}", dashed, from=1-3, to=3-3]
      \arrow["l", from=3-1, to=3-3]
      \arrow["{ls_b}", from=3-3, to=3-5]
      \arrow["{ls_1'}", dashed, from=3-5, to=3-7]
    \end{tikzcd}\]

    \item Finally, suppose that~$\loc{l} \cap \loc{ls_1} = \varnothing$.
    The idea is similar, but using Parallel Confluence on the entirety of~$ls_1$ to
    step~$\Pi_2 \systemsteps[ls_1] \Pi_1'$ and~$\Pi_1 \systemstep[l] \Pi_1'$.
    The we apply induction on~$ls_2'$ and~$\Pi_2 \systemsteps[ls_1] \Pi_1'$
    to find some~$ls_1' \subseteq ls_1$ and~$\Pi_1''$ where~$\Pi_1' \systemsteps[ls_1'] \Pi_1''$
    and $\forall L, \extract{ls_2'}{L} \leq \extract{ls_1 \doubleplus ls_1'}{L}$.
    Then~$\Pi_1''$ and~$l \doubleplus ls_1'$ suffices because
    \begin{itemize}
      \item $ls_1' \subseteq ls_2'$, so $l \doubleplus ls_1' \subseteq l \doubleplus ls_2' = ls_2$, and

      \item for each~$L$
      \begin{align*}
        \extract{ls_2}{L} &= \extract{l}{L} \doubleplus \extract{ls_2'}{L}\\
        &\leq \extract{l}{L} \doubleplus \extract{ls_1 \doubleplus ls_1'}{L}\\
        &= \extract{l \doubleplus ls_1 \doubleplus ls_1'}{L}
      \end{align*}
    \end{itemize}
    The following diagram summarizes this case:
    \[\begin{tikzcd}
      \Pi && {\Pi_2} && {\Pi_2'} \\
      \\
      {\Pi_1} && {\Pi_1'} && {\Pi_1''}
      \arrow["l", from=1-1, to=1-3]
      \arrow["{ls_1}"', from=1-1, to=3-1]
      \arrow["{ls_2'}", from=1-3, to=1-5]
      \arrow["{ls_1}", dashed, from=1-3, to=3-3]
      \arrow["l", dashed, from=3-1, to=3-3]
      \arrow["{ls_1'}", dashed, from=3-3, to=3-5]
    \end{tikzcd}\]
  \end{itemize}
\end{proof}

\begin{lem}[Local Value Mirroring]\label{lem:sys-local-mirror-value}
  If~$\Pi \systemsteps[ls_1] \Pi_1$ and there is some~$v$, $e_L$, and~$\eta_L$ such that for all~$L \in \rho$, $e_L \localsteps v$ and~$\Pi(L) = \eta_L[\Ret{e_L}]$,
  then there is some~$\Pi_1', ls_1'$, and~$ls_2$ such that
  \begin{itemize}
    \item[(1)] $ls_1' \subseteq ls_2$,
    \item[(2)] $\Pi \systemsteps[ls_2] \subst{\Pi}{L \in \rho}{\eta_L[\Ret{v}]}$,
    \item[(3)] $\Pi_1 \systemsteps[ls_1'] \Pi_1'$, and
    \item[(4)] for all~$L$, $\extract{ls_2}{L} \leq \extract{ls_1 \doubleplus ls_1'}{L}$.
  \end{itemize}
\end{lem}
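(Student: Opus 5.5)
The plan is to construct $ls_2$ directly and then obtain $ls_1'$ and $\Pi_1'$ from System Step Mirroring (Lemma~\ref{lem:sys-mirror}), with one extra argument for the local steps that lemma excludes.

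\textbf{Constructing $ls_2$.} Using rule \ruleref{N-Ret} under \ruleref{N-Ctx}, each reduction $e_L \localsteps v$ lifts to a sequence of $\iota$ steps $L \triangleright \eta_L[\Ret{e_L}] \ntwksteps{\iota} \eta_L[\Ret{v}]$. Each of these becomes a system step $\iota_L$ by \ruleref{Internal}. I will concatenate the sequences for the locations of $\rho$ in some fixed order. This gives $ls_2$ and conclusion~(2), $\Pi \systemsteps[ls_2] \subst{\Pi}{L \in \rho}{\eta_L[\Ret{v}]}$. Alternatively, Lemma~\ref{lem:combine-sys-steps} does this in one application, since no fork or kill labels occur.

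\textbf{Obtaining $ls_1'$ and $\Pi_1'$.} Lemma~\ref{lem:sys-mirror} cannot be applied directly, because $ls_2$ consists entirely of local-program steps, which is exactly what its hypothesis forbids. Instead I will mirror its proof, arguing by induction on $ls_1$ one location at a time. Fix $L \in \rho$ and look at $\extract{ls_1}{L}$.

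\begin{itemize}
\item If $L$ takes no step in $ls_1$, then Non-Participant Invariance (Lemma~\ref{lem:step-non-participant}) gives $\Pi_1(L) = \eta_L[\Ret{e_L}]$. I append $L$'s full local reduction to $ls_1'$.
\item Otherwise, by Network Program Determinism (Lemma~\ref{lem:ntwk-det}) and the evaluation-context structure, $L$'s first steps in $ls_1$ must be $\iota$ steps reducing $e_L$ inside $\eta_L$. This lasts until $L$ either reaches some $\eta_L[\Ret{e'}]$ and stops, or reaches $\eta_L[\Ret{v}]$ and continues past it.
\begin{itemize}
\item If $L$ stops at $e'$, then $e_L \localsteps e'$ and $e_L \localsteps v$. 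Confluence (property~\ref{prop:li:confluence}) together with the fact that values do not step (property~\ref{prop:li:values}) gives $e' \localsteps v$. I append those steps to $ls_1'$.
\item If $L$ continues past $\Ret{v}$, I append nothing.
\end{itemize}
\end{itemize}

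The appended steps touch only $L$, so Lemma~\ref{lem:combine-sys-steps} composes them across locations. This gives~(3).

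\textbf{The main obstacle.} The hard part is conclusions~(1) and~(4), which require that $ls_1'$ be a subsequence of $ls_2$ and that per-location traces be prefix-related. The difficulty is that $L$'s path in $ls_1$ may diverge from the path chosen in $ls_2$, so the label sequences need not literally agree. I will handle this by choosing $ls_2$ \emph{after} inspecting $ls_1$:
\begin{itemize}
\item If $L$ stopped at $e'$, route $L$'s reduction in $ls_2$ through $e'$, as $e_L \localsteps e' \localsteps v$.
\item If $L$ passed $\Ret{v}$, use exactly $L$'s reduction prefix from $ls_1$.
\end{itemize}
With this choice, $ls_1'$ is by construction a suffix of the part of $ls_2$ belonging to each $L$, so~(1) holds. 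For~(4), $\extract{ls_2}{L}$ equals $L$'s local prefix of $\extract{ls_1}{L}$ followed by the appended tail, which is therefore a prefix of $\extract{ls_1 \doubleplus ls_1'}{L}$. For $L \notin \rho$, $\extract{ls_2}{L} = \epsilon$, so~(4) holds trivially.

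The delicate point is the labels themselves. If $\iota$ labels record the local expressions (as the $L.\RRet{e}{e'}$ form in Lemma~\ref{lem:sys-mirror} suggests), equality of labels depends on this exact path choice. I will check this carefully.
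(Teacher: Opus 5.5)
Your argument is correct and produces essentially the same witnesses as the paper, but you organize it differently.

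\textbf{The paper's route.} It argues by induction on $ls_1$, peeling off the first label:
\begin{itemize}
\item a label disjoint from $\rho$ is skipped;
\item a local step $L.\RRet{e_L}{e_L'}$ with $L \in \rho$ is prepended to the inductively obtained $ls_2$;
\item a location already at $v$ is dropped from $\rho$;
\item in the base case, the remaining reductions to $v$ serve both as $ls_1'$ and as the tail of $ls_2$.
\end{itemize}
So its $ls_2$ is the $\rho$-local steps of $ls_1$ in the order they occur, followed by $ls_1'$. Conclusion~(1) is then immediate, since $ls_1'$ is literally a suffix of $ls_2$.

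\textbf{Your route.} You replay each $L$'s local-reduction prefix from $ls_1$, then finish via confluence and the fact that values do not step. This gives the same per-location traces, grouped by location rather than interleaved.

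\textbf{What your route buys.} Your $ls_2$ is visibly built from $\Pi$ out of $\iota$ steps at locations of $\rho$ only, so conclusion~(2) needs no transport. In the paper's skip case, the hypothesis is applied to $\Pi_1$, but~(2) is needed from $\Pi$. That step implicitly uses the fact that $ls_2$ touches only $\rho$, together with a parallel-composition argument. This is true by construction but is not part of the stated induction hypothesis.

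\textbf{What it costs.} A per-location suffix property does not by itself give a global subsequence, so you must do two things to get~(1):
\begin{itemize}
\item build $ls_1'$ by iterating over $\rho$ in the same fixed order you use for $ls_2$;
\item use the identical continuation $e' \localsteps v$ in both lists.
\end{itemize}

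\textbf{Labels.} Your caution here is warranted, because these mirroring lemmas treat local steps as carrying their expressions. That is exactly why both proofs replay the path already taken in $ls_1$ rather than an arbitrary reduction, and why your first, arbitrary choice of $ls_2$ had to be revised.

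\textbf{A mismatch in your write-up.} You say you will argue ``by induction on $ls_1$,'' but that is not what you do. The construction you give is direct and needs no induction.
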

\begin{proof}
  By induction on~$ls_1$.
  First if~$ls_1 = \epsilon$, then choosing~$ls_1' = ls_2 = [L.\RRet{e_L}{v} \mid L \in \rho]$ for some arbitrary ordering of~$\rho$ suffices.
  Otherwise suppose that~$\Pi \systemstep[l] \Pi_1 \systemsteps[ls_1] \Pi_1'$.
  \begin{itemize}
    \item First consider the case when~$\rho \cap \loc{l} = \varnothing$.
    Then~$\Pi_1(L) = \Pi(L)$ for all~$L \in \rho$, so we can simply apply induction
    to~$\Pi_1 \systemsteps[ls_1] \Pi_1'$ to arrive at the desired result.

    \item Otherwise suppose that there is some~$L \in \rho \cap \loc{l}$.
    Then by Determinism (Lemma~\ref{lem:ntwk-det}), either~$l = L.\RRet{e_L}{e_L'}$,
    or~$e_L$ is a value and~$l$ is a different kind of step.
    \begin{itemize}
      \item If~$l = L.\RRet{e_L}{e_L'}$, then we can apply induction to~$\Pi_1 = \subst{\Pi}{L}{\eta_L[e_L']}$
      to produce~$\Pi_1'', ls_1'$, and~$ls_2$ satisfying (1--4).
      Then~$\Pi_1'', ls_1'$, and~$l \doubleplus ls_2$ suffice because
      \begin{itemize}
        \item[(1)] $ls_1' \subseteq ls_2 \Rightarrow ls_1' \subseteq l \doubleplus ls_2$,
        \item[(2)] $\Pi \systemstep[l] \Pi_1 \systemsteps[ls_2] \subst{\Pi}{L \in \rho}{\eta_L[\Ret{v}]}$,
        \item[(3)] $\Pi_1' \systemsteps[ls_1'] \Pi_1''$, and
        \item[(4)] for all~$L' \neq L$,
          \begin{align*}
            \extract{l \doubleplus ls_2}{L'} &= \extract{ls_2}{L'}\\
            &\leq \extract{ls_1 \doubleplus ls_1'}{L}\\
            &= \extract{l \doubleplus ls_1 \doubleplus ls_1'}{L}
          \end{align*}
        and
        \begin{align*}
            \extract{l \doubleplus ls_2}{L} &= \RRet{e_L}{v} \doubleplus \extract{ls_2}{L}\\
            &\leq \RRet{e_L}{v} \doubleplus \extract{ls_1 \doubleplus ls_1'}{L}\\
            &= \extract{l \doubleplus ls_1 \doubleplus ls_1'}{L}.
          \end{align*}
      \end{itemize}

      \item Now suppose that~$e_L$ is a value.
      By Local Confluence, it must be the case that~$e_L = v$.
      We can apply induction to~$\Pi_1$ by reducing~$\rho$ to~$\rho \setminus \{L\}$ so that it satisfies the premises,
      to produce some~$\Pi_1'', ls_1'$, and~$ls_2$ that satisfy (1--4), for~$ls_1$.
      Then we claim that~$\Pi_1'', ls_1'$ and~$ls_s$ are satisfactory because
      $\Pi \systemsteps[ls_2] \subst{\Pi}{L' \in \rho \setminus \{L\}}{\eta_{L'}[\Ret{v}]} = \subst{\Pi}{L' \in \rho}{\eta_{L'}[\Ret{v}]}$.
    \end{itemize}
  \end{itemize}
\end{proof}

\begin{lem}[Local Program Mirroring]\label{lem:sys-local-mirror}
  If~$\Pi \systemsteps[ls_1] \Pi_1$ and there is some~$e$, $e_L$, and~$\eta_L$ such that for all~$L \in \rho$, $e \localsteps e_L$ and~$\Pi(L) = \eta_L[\Ret{e_L}]$,
  then there is some~$e', \Pi_1', ls_1'$, and~$ls_2$ such that
  \begin{itemize}
    \item[(1)] $e \localsteps e'$,
    \item[(2)] $ls_1' \subseteq ls_2$,
    \item[(3)] $\Pi \systemsteps[ls_2] \subst{\Pi}{L \in \rho}{\eta_L[\Ret{e'}]}$,
    \item[(4)] $\Pi_1 \systemsteps[ls_1'] \Pi_1'$, and
    \item[(5)] for all~$L$, $\extract{ls_2}{L} \leq \extract{ls_1 \doubleplus ls_1'}{L}$.
  \end{itemize}
\end{lem}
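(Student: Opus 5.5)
The plan is to reduce Local Program Mirroring to Local Value Mirroring (Lemma~\ref{lem:sys-local-mirror-value}) by way of local confluence. The idea is to pick a single common reduct $e'$ of all the $e_L$ together with whatever the locations in $\rho$ have already reached in $\Pi_1$, and then treat $e'$ as the ``value'' of that lemma. Its proof never actually used that $v$ is a value, except in the subcase where $e_L$ was already a value.

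First I would gather the relevant expressions. For each $L \in \rho$, the trace $ls_1$ restricted to $L$ (namely $\extract{ls_1}{L}$) begins with a possibly empty run of steps of the form $\RRet{\cdot}{\cdot}$ inside $\eta_L$. These take $e_L$ to some $e_L^1$, after which $L$ either stops or performs a step of another kind. In the second case, Network Program Determinism (Lemma~\ref{lem:ntwk-det}) forces $e_L^1$ to be a value. Since $\rho$ is finite (it consists of locations in $\dom{\Pi}$), I would iterate confluence (property~(\ref{prop:li:confluence})) over the finitely many expressions $e \localsteps e_L \localsteps e_L^1$. This yields an $e'$ with $e \localsteps e'$ and $e_L^1 \localsteps e'$ for every $L$, which gives conclusion~(1). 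When some $e_L^1$ is a value, values cannot step, so $e' = e_L^1$. Consequently every location that has already moved past its local computation agrees on $e'$ and has nothing further to do.

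Next I would prove the remaining conclusions by induction on $ls_1$, mirroring the proof of Lemma~\ref{lem:sys-local-mirror-value} with $e'$ in place of $v$. In the base case I take $ls_1' = ls_2$ to be the concatenation, over $L \in \rho$ in some arbitrary order, of the $\iota_L$ steps realizing $e_L \localsteps e'$ under $\eta_L$ via \ruleref{N-Ret}, \ruleref{N-Ctx} and \ruleref{Internal}. In the inductive step I split on whether $\loc{l} \cap \rho$ is empty. If it is, I apply the induction hypothesis directly. If $l$ is a local step $e_L \localstep e_L''$, the hypothesis $e \localsteps e_L''$ still holds, and I recurse with $\Pi_1$ while prepending $l$ to $ls_2$. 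Conditions (2) and (5) are then checked by the same prefix calculation as before. If $e_L$ is already a value, the choice of $e'$ gives $e_L = e'$, so I drop $L$ from $\rho$ and recurse.

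The main obstacle is that $e'$ must be fixed \emph{before} the induction, because the induction changes the $e_L$. The cleanest route is therefore to strengthen the statement: quantify over a target $e'$ that is a common reduct of every local expression reached along $ls_1$, and prove items (2)--(5) for that fixed $e'$. Choosing $e'$ in advance through the confluence argument above is what makes this strengthened hypothesis available. The catch-up steps in $ls_2$ must also be reused as $ls_1'$, so $ls_1' \subseteq ls_2$ needs to be checked carefully in the case where a location has overshot and has no further steps.
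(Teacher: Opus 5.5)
Your argument is correct, but it is organized differently from the paper's.

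\textbf{The paper's route.} The paper treats $e'$ as an \emph{output} of the induction on $ls_1$, so its local-step case needs only $e \localsteps e_L''$. The target $e'$ is chosen in one of two places:
\begin{itemize}
  \item in the base case, as a common reduct of the current $e_L$;
  \item as soon as $l$ touches a $\rho$-location whose expression is already a value $v$. Then confluence gives $e_{L'} \localsteps v$ for all $L' \in \rho$, and the remaining trace is handed to Local Value Mirroring (Lemma~\ref{lem:sys-local-mirror-value}) with fixed target $v$.
\end{itemize}
That fixed target is stable under further local steps because $v$ is a normal form: $e_L \localstep e_L''$ and $e_L \localsteps v$ give $e_L'' \localsteps v$. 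So your remark that the value lemma's proof ``never used'' that $v$ is a value is not quite accurate, since its local-step case relies on exactly this.

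\textbf{Your route.} You fix a possibly non-value target up front, by looking ahead in $ls_1$ to the endpoints $e_L^1$ of each location's initial local run. You then replace normal-form stability by the invariant that everything a $\rho$-location reaches along the remaining trace reduces to $e'$. This gives a single induction that proves a generalization of Local Value Mirroring; the value-target case is the one where your invariant holds for free. The cost is the lookahead and a strengthened statement. The paper's lazy choice needs neither, but splits the argument across two lemmas.

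\textbf{Details both versions skip.} In the disjoint and value cases, the inductive hypothesis's catch-up steps start from the system after $l$ and must be replayed from $\Pi$. That requires knowing $\loc{ls_2} \subseteq \rho$, so add it to your strengthened statement. In the value case it is also simplest to drop every location of $\rho \cap \loc{l}$ (all of which already hold $e'$), not just $L$.
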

\begin{proof}
  By induction on~$ls_1$.
  First if~$ls_1 = \epsilon$, then by repeated application of Local Confluence (Property~\ref{prop:li:confluence}) there is
  some~$e'$ where~$e_L \localsteps e'$ for all~$L \in \rho$.
  Then choosing~$ls_1' = ls_2 = [L.\RRet{e_L}{e'} \mid L \in \rho]$ for some arbitrary ordering of~$\rho$ suffices.
  Otherwise suppose that~$\Pi \systemstep[l] \Pi_1 \systemsteps[ls_1] \Pi_1'$.
  \begin{itemize}
    \item First consider the case when~$\rho \cap \loc{l} = \varnothing$.
    Then~$\Pi_1(L) = \Pi(L)$ for all~$L \in \rho$, so we can simply apply induction
    to~$\Pi_1 \systemsteps[ls_1] \Pi_1'$ to arrive at the desired result.

    \item Otherwise suppose that there is some~$L \in \rho \cap \loc{l}$.
    Then by Determinism (Lemma~\ref{lem:ntwk-det}), either~$l = L.\RRet{e_L}{e_L'}$,
    or~$e_L$ is a value and~$l$ is a different kind of step.
    \begin{itemize}
      \item If~$l = L.\RRet{e_L}{e_L'}$, then we can apply induction to~$\Pi_1 = \subst{\Pi}{L}{\eta_L[e_L']}$
      to produce some~$e', \Pi_1'', ls_1'$, and~$ls_2$ satisfying (1--5).
      Then~$e', \Pi_1'', ls_1'$, and~$l \doubleplus ls_2$ suffice because
      \begin{itemize}
        \item[(1)] $e \localsteps e'$,
        \item[(2)] $ls_1' \subseteq ls_2 \Rightarrow ls_1' \subseteq l \doubleplus ls_2$,
        \item[(3)] $\Pi \systemstep[l] \Pi_1 \systemsteps[ls_2] \subst{\Pi}{L \in \rho}{\eta_L[\Ret{e'}]}$,
        \item[(4)] $\Pi_1' \systemsteps[ls_1'] \Pi_1''$, and
        \item[(5)] for all~$L' \neq L$,
          \begin{align*}
            \extract{l \doubleplus ls_2}{L'} &= \extract{ls_2}{L'}\\
            &\leq \extract{ls_1 \doubleplus ls_1'}{L}\\
            &= \extract{l \doubleplus ls_1 \doubleplus ls_1'}{L}
          \end{align*}
        and
        \begin{align*}
            \extract{l \doubleplus ls_2}{L} &= \RRet{e_L}{e_L'} \doubleplus \extract{ls_2}{L}\\
            &\leq \RRet{e_L}{e_L'} \doubleplus \extract{ls_1 \doubleplus ls_1'}{L}\\
            &= \extract{l \doubleplus ls_1 \doubleplus ls_1'}{L}.
          \end{align*}
      \end{itemize}

      \item Now suppose that~$e_L = v$ is a value.
      Then by Local Confluence, $e_{L'} \localsteps v$ for each~$L' \in \rho$.
      Then we can apply Lemma~\ref{lem:sys-local-mirror-value} to~$\Pi_1$ by reducing~$\rho$ to~$\rho \setminus \{L\}$ so that it satisfies the premises,
      to produce some~$\Pi_1'', ls_1'$, and~$ls_2$ that satisfy (1--4) of Lemma~\ref{lem:sys-local-mirror-value}.
      Then~$v, \Pi_1'', ls_1'$ and~$ls_2$ satisfy the conclusion because
      \begin{itemize}
        \item[(1)] $e \localsteps e_L \localsteps v$,
        \item[(2)] $ls_1' \subseteq ls_2$,
        \item[(3)] $\Pi_1 \systemsteps[ls_2] \subst{\Pi_1}{L' \in \rho \setminus \{L\}}{\eta_{L'}[\Ret{v}]}$,
        so $\Pi \systemsteps[ls_2] \subst{\Pi}{L' \in \rho}{\eta_{L'}[\Ret{v}]}$,
        \item[(4)] $\Pi_1' \systemsteps[ls_1'] \Pi_1''$, and
        \item[(5)] for~$L' \in \loc{l}$ we have that~$\extract{ls_2}{L'} = \epsilon$ because
        either~$L' \in \rho$ and~$e_L = v$ or~$L' \notin \rho$, so
        $\extract{ls_2}{L'} = \epsilon \leq \extract{l \doubleplus ls_1 \doubleplus ls_1'}{L'}$,
        and for~$L' \notin \loc{l}$ we have that
        $\extract{ls_2}{L'} \leq \extract{ls_1 \doubleplus ls_1'}{L'} = \extract{l \doubleplus ls_1 \doubleplus ls_1'}{L'}$.
      \end{itemize}
    \end{itemize}
  \end{itemize}
\end{proof}

\begin{lem}[Projections of Choreographies with Local Steps]\label{lem:proj-local-step}
  If~$C \step[\RDone{\rho}{e}{e'}] C'$, then for every~$L \in \rho$ where~$\eppfork{C}{L} = E$,
  there is some evaluation context~$\eta$ such that~$E = \eta[\Ret{e}]$.
\end{lem}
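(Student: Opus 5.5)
We argue by induction on the derivation of the step $\conf*{C} \step[\RDone{\rho}{e}{e'}] \conf{C'}{\Omega'}$. The redex $\RDone{\rho}{e}{e'}$ can only arise from \ruleref{C-Done}, from \ruleref{C-Ctx} (because $\ctxredex{\eta}$ is transparent for every context except application and pairs, where it wraps the redex in $\RFun{\cdot}$, $\RArg{\cdot}$, $\RPairL{\cdot}$ or $\RPairR{\cdot}$), and from the out-of-order rules \ruleref{C-LetI}, \ruleref{C-TyLetI}, \ruleref{C-SyncI}, \ruleref{C-CaseI}, \ruleref{C-LocalCaseI}, \ruleref{C-AppI}, \ruleref{C-PairI} and \ruleref{C-ForkI}. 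We read the statement for the unwrapped label. For the wrapped labels we prove the analogous statement, so the induction hypothesis strengthens to: whenever the redex, after stripping $\RFun$/$\RArg$/$\RPairL$/$\RPairR$, is $\RDone{\rho}{e}{e'}$.

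First we reduce from $\eppfork{\cdot}{L}$ to $\epp{\cdot}{L}$. If $\tbodies{C}{L} = \varnothing$, the two coincide. Otherwise $\eppfork{C}{L} = (\epp{C_1}{L} \ntwkmerge \cdots \ntwkmerge \epp{C_n}{L}) \seqfun \NtwkExit$. The step must then occur inside those \KillAfterN bodies, because $L \in \rho$ and a location spawned in a body cannot act outside it. If there is more than one body, this is only possible through \ruleref{C-CaseI}/\ruleref{C-LocalCaseI}, and then the same redex fires in every body. So it suffices to prove the core claim for $\epp{\cdot}{L}$ and to show two closure facts:
\begin{enumerate}
\item merging two programs of the form $\eta_1[\Ret{e}]$ and $\eta_2[\Ret{e}]$ gives $(\eta_1 \ntwkmerge \eta_2)[\Ret{e}]$, by induction on $\eta_1$ using the homomorphic clauses of $\ntwkmerge$;
\item $\eta[\Ret{e}] \seqfun \NtwkExit$ is still of that form, because $\Ret{e}$ is not a value when $e$ can step (property~\ref{prop:li:values}).
\end{enumerate}

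For the core claim, the base case \ruleref{C-Done} gives $\epp{\rho.e}{L} = \Ret{e}$, since $L \in \rho$; take $\eta = \hole$. For \ruleref{C-Ctx}, we go through the grammar of choreographic evaluation contexts and check that the projection of each frame is a network evaluation context around the projection of the hole. The \emph{owner} branches are direct: $\epp{\eta \appchor{\rho'} C_2}{L} = \epp{\eta}{L}~\epp{C_2}{L}$, $\SendTo{\eta}{\rho'}$, $\NtwkLetIn{x}{\eta}{E}$, and so on. For $V \appchor{\rho'} \eta$ and $(V,\eta)$ we also need that $\epp{V}{L}$ is a network value (Lemma~\ref{lem:proj-val}). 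The \emph{non-owner} branches use $\seqfun$ or drop the frame. Dropping is fine: for example, for $\Inl{\rho'}{\eta}$ with $L \notin \rho'$ we get $\epp{\eta}{L}$. For $\seqfun$, if $E_1$ is not a value then $E_1 \seqfun E_2 = E_1 \NtwkSeq E_2$, and $\hole \NtwkSeq E$ is a network evaluation context. By the induction hypothesis, the projection of the hole is $\eta'[\Ret{e}]$, which is not a value, so the collapse never removes it. The same holds for $\KillAfter{L'}{\eta}$ with $L' \neq L$. The case $L' = L$ cannot occur, since we have already stripped the thread's own \KillAfterN.

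The out-of-order rules are where the real work is, and we expect \ruleref{C-CaseI} and \ruleref{C-LocalCaseI} to be the main obstacle. In each of these rules $\rho \cap \rloc{R} = \varnothing$ and $\cloc{C_0} \cap \rloc{R} = \varnothing$ for the scrutinee or head $C_0$, and $L \in \rloc{R} = \rho$. Hence $L$ is not in the binder or annotation set, and by Lemma~\ref{lem:epp-non-participant-cloc} the projection $\epp{C_0}{L}$ is a value. So the projection collapses to the projection of the continuation. For \LetN this is $\epp{C_0}{L} \seqfun \epp{C_2}{L} = \epp{C_2}{L}$, and the induction hypothesis applies directly. For \ruleref{C-SyncI}, $L$ is neither the sender nor a recipient, so the projection is just $\epp{C}{L}$. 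For \ruleref{C-ForkI}, $L \neq \ell$, so again the projection is $\epp{C}{L}$. For the case rules, the projection is $\epp{C_1}{L} \ntwkmerge \epp{C_2}{L}$, and the induction hypothesis applied to both branches, both of which step with the same redex, gives two contexts around the same $\Ret{e}$. We then need the merge fact (1). The delicate point is that merge is partial and could combine an \AllowChoiceN with a one-sided branch. So we prove (1) by induction on $\eta_1$, showing that the shapes of $\eta_1$ and $\eta_2$ must agree along the path to the hole whenever the merge is defined. This works because $\Ret{e}$ can only be merged with $\Ret{e}$, and evaluation contexts never pass under \AllowChoiceN.
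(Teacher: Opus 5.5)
Your proposal follows the same route as the paper's proof. Both argue by induction on the step, take \ruleref{C-Done} as the base case, show that context frames project to network evaluation contexts, and handle the out-of-order rules by collapsing the head or scrutinee via Lemma~\ref{lem:epp-non-participant-cloc}. You also fill in details the paper elides: the wrapped labels, the merges in the case rules, and the difference between $\eppfork{\cdot}{L}$ and $\epp{\cdot}{L}$. The paper simply treats $\eppfork{\cdot}{L}$ as if it were compositional.

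Two small corrections.

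First, \ruleref{C-AppI} and \ruleref{C-PairI} have no annotation-disjointness premise. So your blanket claim that $L$ is outside the annotation fails there, and $L$ may own the application or pair. In that case you need $\epp{C_1}{L}$ to be a network value in order to form the contexts $V~\eta$ and $(V,\eta)$. The same lemma gives this, since $L \notin \cloc{C_1}$.

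Second, your step ``a location spawned in a body cannot act outside it'' is exactly where augmented typing enters, and the statement has no typing hypothesis. Without one the lemma is false. Take $C = \LetIn{\Alice.x}{L.(1 \LocalPlus 1)}{\KillAfter{L}{\Alice.x}}$ with $L \neq \Alice$. It takes a \ruleref{C-Done} step at $L$ in the head, yet $\eppfork{C}{L} = \NtwkUnit \seqfun \NtwkExit = \NtwkExit$. You should therefore state the hypothesis $\chortypedplus{\ctx}{C}{\tau}{\rho'}$ explicitly. It holds where the lemma is used (Theorem~\ref{thm:partial-sound-plus}).

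Your core claim for $\epp{\cdot}{L}$ itself needs no typing, because $\epp{\KillAfter{L}{C}}{L}$ is undefined.
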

\begin{proof}
  By induction on the step.
  The case when~$C = \rho.e$ is clear.
  For steps in a choreographic evaluation context, we can simply use induction and add to the network program context.
  For out-of-order steps, we prove the case when~$C = \LetIn{\rho_1.x}{C_1}{C_2}$
  and~$C_2 \step[\RDone{\rho}{e}{e'}] C_2'$, with the other cases following a similar argument.
  For the step to occur, we must have that~$\rho \cap \rho_1 = \varnothing$ and~$\rho \cap \cloc{C_1} = \varnothing$.
  Then for~$L \in \rho$, we have that~$L \notin \cloc{C_1} \cup \rho_1$, so~$\eppfork{C_1}{L}$ is a value,
  and~$\eppfork{\LetIn{\rho_1.x}{C_1}{C_2}}{L} = \eppfork{C_1}{L} \seqfun \eppfork{C_2}{L} = \eppfork{C_2}{L}$.
  Therefore the conclusion follows directly by induction on~$C_2$ with no change to the context.
\end{proof}

\begin{lem}[System Step Rearranging]\label{lem:sys-step-rearrange}
  If~$\Pi \systemsteps[ls_1] \Pi_1$, $\Pi \systemsteps[ls_2] \Pi_2$, and for every~$L$, $\extract{ls_2}{L} \leq \extract{ls_1}{L}$,
  then there is some~$ls_1' \subseteq ls_1$ such that
  \begin{itemize}
    \item[(1)] $\Pi_2 \systemsteps[ls_1'] \Pi_1$,
    \item[(2)] for all~$L$, $\extract{ls_1}{L} = \extract{ls_2 \doubleplus ls_1'}{L}$, and
    \item[(3)] $\size{ls_1} = \size{ls_2} + \size{ls_1'}$.
  \end{itemize}
  That is, if every location has made at least the same steps in the same order in~$\Pi_1$ as in~$\Pi_2$,
  then we can ``catch up''~$\Pi_2$ to~$\Pi_1$.
\end{lem}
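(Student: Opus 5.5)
We argue by induction on the length of $ls_2$, peeling off its first label and locating the matching label inside $ls_1$.

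\emph{Base case.} If $ls_2 = \epsilon$, then $\Pi_2 = \Pi$. Taking $ls_1' = ls_1$ satisfies (1)--(3) trivially.

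\emph{Inductive step.} Write $ls_2 = l \doubleplus ls_2'$ with $\Pi \systemstep[l] \Pi_a \systemsteps[ls_2'] \Pi_2$. Since $\loc{l}$ is nonempty, pick $L \in \loc{l}$. The prefix hypothesis $\extract{ls_2}{L} \leq \extract{ls_1}{L}$ forces $\extract{ls_1}{L} \neq \epsilon$, so $\loc{l} \cap \loc{ls_1} \neq \varnothing$.

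We apply per-location determinism (Lemma~\ref{lem:sys-det}) to the step $l$ and the run $ls_1$. This decomposes $ls_1 = ls_a \doubleplus l_1' \doubleplus ls_b$ with $\loc{l} \cap \loc{ls_a} = \varnothing$. Here $l_1'$ is either $l$ itself or a local $\Ret{\cdot}$ step from the same expression with a different result.

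We rule out the second case using the prefix hypothesis. Every $L \in \loc{l}$ has no labels in $ls_a$, so the first $L$-entry of $\extract{ls_1}{L}$ is $\extract{l_1'}{L}$. The prefix condition says this first entry equals $\extract{l}{L}$, including the resulting expression, so $l_1' = l$. We also need all of $\loc{l}$ (for a communication label) to agree on this position. This follows because $ls_a$ avoids all of $\loc{l}$ and the per-location prefixes then coincide.

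Now $\Pi \systemsteps[ls_a] \Pi_b \systemstep[l] \Pi_c \systemsteps[ls_b] \Pi_1$. Since $\loc{ls_a} \cap \loc{l} = \varnothing$, parallel confluence (Lemma~\ref{lem:parallel-conf}) commutes $l$ in front of $ls_a$. Combined with the fact that the $l$-step from $\Pi$ is unique up to the determinism noted above, this gives $\Pi_a \systemsteps[ls_a] \Pi_c$. Hence $\Pi_a \systemsteps[ls_a \doubleplus ls_b] \Pi_1$.

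\emph{Applying the induction hypothesis.} Start from $\Pi_a$ with the runs $ls_a \doubleplus ls_b$ and $ls_2'$. The prefix hypothesis transfers: for $L \in \loc{l}$ we strip the common head $\extract{l}{L}$ from both sides, and for $L \notin \loc{l}$ nothing changes. Induction yields $ls_1'' \subseteq ls_a \doubleplus ls_b \subseteq ls_1$ with $\Pi_2 \systemsteps[ls_1''] \Pi_1$. Set $ls_1' = ls_1''$.
\begin{itemize}
\item Property (1) is exactly the conclusion $\Pi_2 \systemsteps[ls_1''] \Pi_1$.
\item For property (2), the equality of per-location extractions is recovered by re-inserting $\extract{l}{L}$. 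We use that $ls_a$ contributes nothing at $L \in \loc{l}$ and $l$ contributes nothing elsewhere, mirroring the calculation in Lemma~\ref{lem:sys-mirror}.
\item Property (3) follows since $\size{ls_1} = \size{ls_a \doubleplus ls_b} + 1$.
\end{itemize}

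\emph{Main obstacle.} The delicate point is the case split in Lemma~\ref{lem:sys-det}, namely excluding the divergent local-step alternative. We handle it via the first-entry argument on the prefix hypothesis. Note that the labels must record the resulting expression, as $\RRet{e}{e'}$ does, for this to work. Care is also needed with fork and kill labels changing the domain. For these we rely on the convention, used in Lemma~\ref{lem:subsystem-comp}, that spawned names are fresh and never reused, so that confluence applies.
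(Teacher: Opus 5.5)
Your proof is correct and follows the paper's argument essentially step for step. Like the paper, you induct on $ls_2$, locate its first label $l$ inside $ls_1$ via Lemma~\ref{lem:sys-det}, exclude the divergent local-step case using the prefix hypothesis, commute $l$ past $ls_a$ by parallel confluence, and apply the induction hypothesis to $ls_a \doubleplus ls_b$ and $ls_2'$. You are more explicit than the paper on two points it leaves implicit: that the confluence point coincides with the state reached after $l$ in $ls_1$, and that the prefix hypothesis transfers to the induction hypothesis. On the first point, the determinism you need is of the labelled step from the state after $ls_a$, not from $\Pi$.
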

\begin{proof}
  We proceed by induction on~$ls_2$.
  If~$ls_2 = \epsilon$ is empty, then the result holds trivially with~$ls_1' = ls_1$.
  Otherwise, let~$ls_2 = l \doubleplus ls_2'$ and~$\Pi \systemstep[l] \Pi_2 \systemsteps[ls_2'] \Pi_2'$.
  Because there is some~$L \in \loc{l}$, and as $\extract{l \doubleplus ls_2}{L} \leq \extract{ls_1}{L}$,
  we must have that~$L \in \loc{ls_1}$, and so $\loc{l} \cap \loc{ls_1} \neq \varnothing$.
  Thus by Lemma~\ref{lem:sys-det}, there is
  some~$ls_a, ls_b$, and~$l'$ such that $ls_1 = ls_a \doubleplus l' \doubleplus ls_b$,
  $\loc{l} \cap \loc{ls_a} = \varnothing$,
  and either $l' = l$, or $l = L.\RRet{e}{e_1}$ and $l' = L.\RRet{e}{e_2}$ where~$e_1 \neq e_2$.
  However, the latter case is impossible because
  \begin{align*}
    \extract{l \doubleplus ls_2}{L} &= \RRet{e}{e_1} \doubleplus \extract{ls_2}{L}\\
    &\leq \extract{ls_1}{L}\\
    &= \extract{ls_a \doubleplus l' \doubleplus ls_b}{L}\\
    &= \RRet{e}{e_2} \doubleplus \extract{ls_b}{L}
  \end{align*}
  which implies that~$e_1 = e_2$.
  Then if the intermediate system states of~$ls_1$ are
  $\Pi \systemsteps[ls_a] \Pi_a \systemsteps[l] \Pi' \systemsteps[ls_b] \Pi_1$,
  by Parallel Confluence, we have that~$\Pi_2 \systemsteps[ls_a] \Pi'$.
  We then apply the inductive hypothesis to $\Pi_2 \systemsteps[ls_a] \Pi' \systemsteps[ls_b] \Pi_1$
  and~$\Pi_2 \systemsteps[ls_2'] \Pi_2'$ to yield~$ls_1'$ where~$\Pi_2' \systemsteps[ls_1'] \Pi_1$.
  Then~$ls_1'$ suffices because
  \begin{itemize}
    \item[(1)] $\Pi_2' \systemsteps[ls_1'] \Pi_1$,
    \item[(2)] for all~$L$, either~$L \in \loc{l}$ and so $\extract{ls_a}{L} = \epsilon$,
    or~$L \notin \loc{l}$ and~$\extract{l}{L} = \epsilon$, so in either case
    \begin{align*}
      \extract{ls_1}{L} &= \extract{ls_a}{L} \doubleplus \extract{l}{L} \doubleplus \extract{ls_b}{L}\\
      &= \extract{l}{L} \doubleplus \extract{ls_a \doubleplus ls_b}{L}\\
      &= \extract{l}{L} \doubleplus \extract{ls_2' \doubleplus ls_1'}{L}\\
      &= \extract{l}{L} \doubleplus \extract{ls_2' \doubleplus ls_1'}{L}\\
      &= \extract{ls_2 \doubleplus ls_1'}{L},
    \end{align*}
    \item[(3)] $\size{ls_1} = 1 + \size{ls_a} + \size{ls_b} = 1 + \size{ls_2'} + \size{ls_1'} = \size{ls_2} + \size{ls_1'}$.
  \end{itemize}
  The following diagram summarizes the argument.
  \[\begin{tikzcd}
    \Pi && {\Pi_2} && {\Pi_2'} \\
    \\
    {\Pi_a} && {\Pi'} && {\Pi_1}
    \arrow["l", from=1-1, to=1-3]
    \arrow["{ls_a}"', from=1-1, to=3-1]
    \arrow["{ls_2'}", from=1-3, to=1-5]
    \arrow["{ls_a}", dashed, from=1-3, to=3-3]
    \arrow["{ls_1'}", dashed, from=1-5, to=3-5]
    \arrow["l", from=3-1, to=3-3]
    \arrow["{ls_b}", from=3-3, to=3-5]
  \end{tikzcd}\]
\end{proof}

\begin{thm}[Augmented Divergence-Weakened Soundness]
  \label{thm:partial-sound-plus}
  If $\choremptypedplus{C}{\tau}{\rho}$,
  $\nl{\rho} \subseteq \Omega$,
  $\eppfork{C}{\Omega} \lessthansim \Pi$,
  and~$\Pi \systemsteps[ls_1] \Pi_1$,
  then there is some~$C'$, $\Omega'$ $\Pi_2$, $\Pi_3$, and~$ls'$ where
  \begin{enumerate}
    \item[(1)] $C \steps{} C'$,
    \item[(2)] $\eppfork{C'}{\Omega'} \lessthansim \Pi_2$,
    \item[(3)] $\Pi \systemsteps \Pi_2$,
    \item[(4)] $\Pi_1 \systemsteps \Pi_3$,
    \item[(5)] $\Pi_2 \systemsteps[ls'] \Pi_3$, and
    \item[(6)] for all~$C''$, $\Omega''$ and~$R$, if $\conf{C'}{\Omega'} \step[R] \conf{C''}{\Omega''}$, then~$\rloc{R} \cap \loc{ls'} = \varnothing$.
  \end{enumerate}
  That is, any next step that~$C'$ can make involves locations
  which did not make any steps from~$\Pi_2$ to~$\Pi_3$.
  The diagram below summarizes the Theorem.
  \[\begin{tikzcd}
    && {\Pi_1} \\
    \Pi &&&& {\Pi_3} \\
    && {\Pi_2} \\
    \\
    C && {C'}
    \arrow[dashed, from=1-3, to=2-5]
    \arrow["{ls_1}", from=2-1, to=1-3]
    \arrow[dashed, from=2-1, to=3-3]
    \arrow["\precsim"{description}, squiggly, no head, from=2-1, to=5-1]
    \arrow["{ls'}", dashed, from=3-3, to=2-5]
    \arrow["\precsim"{description}, squiggly, no head, from=3-3, to=5-3]
    \arrow[dashed, from=5-1, to=5-3]
  \end{tikzcd}\]
\end{thm}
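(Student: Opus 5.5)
I would prove Theorem~\ref{thm:partial-sound-plus} by strong induction on the length $\size{ls_1}$ of the system trace, generalizing over $C$, $\Omega$, $\rho$ and $\Pi$. The idea is to repeatedly find a choreographic step whose system projection is ``consistent'' with $ls_1$, advance the choreography by it, and shrink $ls_1$. If no such step exists, we stop and take $C' = C$.

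\textbf{Base and stopping case.} Let $\Pi_2 = \Pi$ and $C' = C$, and consider every possible next step $\conf{C}{\Omega} \step[R] \conf{C''}{\Omega''}$. If all such $R$ satisfy $\rloc{R} \cap \loc{ls_1} = \varnothing$, then choosing $\Pi_3 = \Pi_1$ and $ls' = ls_1$ discharges (1)--(6) immediately. This covers $ls_1 = \epsilon$, and also covers the case where $C$ is a value (by Theorem~\ref{thm:progress} there are then no steps).

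\textbf{Inductive case.} Otherwise there is some step $R$ with a location $L \in \rloc{R} \cap \loc{ls_1}$. By Lemma~\ref{lem:single-complete} together with $\eppfork{C}{\Omega} \lessthansim \Pi$ and Lemma~\ref{thm:sys-single-lifting}, $\Pi$ can perform the labels $\eppall{R}$ to reach some $\Pi_R$ with $\eppfork{C''}{\Omega''} \lessthan \Pi_R$. Corollary~\ref{lem:sys-less-than-reach} then upgrades this to $\lessthansim$ after extra internal steps.

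The key move is to reconcile the two traces from $\Pi$: the trace $ls_1$ and the trace $ls_R := \eppall{R}$.
\begin{itemize}
\item If $R$ is not a local $\RDone{\rho}{e}{e'}$ step, I use Lemma~\ref{lem:sys-mirror}, which requires that $ls_R$ contain no $\iota$ local-program steps. It catches $\Pi_1$ up along a subsequence $ls_1' \subseteq ls_R$ to some $\Pi_1'$, such that each location's projected trace under $ls_R$ is a prefix of its trace under $ls_1 \doubleplus ls_1'$. Lemma~\ref{lem:sys-step-rearrange} then yields a residual trace $ls_1''$ from $\Pi_R$ to $\Pi_1'$ with $\size{ls_1''} = \size{ls_1} + \size{ls_1'} - \size{ls_R}$.
\item If $R$ is a $\RDone$ step, I instead use Lemma~\ref{lem:proj-local-step} to locate the $\Ret{e}$ redexes, and Lemma~\ref{lem:sys-local-mirror} to pick a common reduct $e'$. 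The choreography then takes the corresponding multi-step $\rho.e \steps{} \rho.e'$, using confluence of the local language.
\end{itemize}
Preservation (Theorem~\ref{thm:full-preservation}) retypes $C''$ and maintains $\nl{\rho''} \subseteq \Omega''$. The induction hypothesis applied to $C''$, $\Pi_R$ and $ls_1''$ then supplies $C'$, $\Pi_2$, $\Pi_3$ and $ls'$. These compose to satisfy (1)--(6), since $\Pi_1 \systemsteps[ls_1'] \Pi_1' \systemsteps \Pi_3$.

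\textbf{Main obstacle.} The hardest part is the termination measure. The residual trace $ls_1''$ need not be shorter than $ls_1$, because $ls_1'$ may add steps. So I would measure not $\size{ls_1}$ but the number of labels of $ls_1$ that have not yet been ``matched'': since $L \in \rloc{R} \cap \loc{ls_1}$, at least one label of $ls_1$ (the one involving $L$) is consumed, as Lemma~\ref{lem:sys-det} identifies it with a label of $ls_R$. The quantity to decrease is therefore $\sum_L \size{\extract{ls_1}{L}}$ minus the amount already matched by the choreography. Making this precise is where I expect the most work, in particular for local steps, where the labels differ only up to the choice of reduct, and for fork labels, which require renaming spawned names.
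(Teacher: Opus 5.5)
Your outline follows the paper's proof. It has the same strong induction on $\size{ls_1}$ and the same stopping case; your instantiation $\Pi_2 = \Pi$, $\Pi_3 = \Pi_1$, $ls' = ls_1$ is the right one. It also has the same choice of a step $R$ with $L \in \rloc{R} \cap \loc{ls_1}$, with Lemma~\ref{lem:sys-local-mirror} for local steps and completeness plus Lemma~\ref{lem:sys-mirror} otherwise, then Lemma~\ref{lem:sys-step-rearrange} and the induction hypothesis. The gap is the termination argument, which you leave open and misdiagnose.

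Your claim that $ls_1''$ ``need not be shorter because $ls_1'$ may add steps'' is false. From $\size{ls_1''} = \size{ls_1} + \size{ls_1'} - \size{ls_R}$ and $ls_1' \subseteq ls_R$ you always have $\size{ls_1''} \le \size{ls_1}$. The inequality is strict exactly when $\size{ls_1'} < \size{ls_R}$. The paper closes the induction by proving this and then applying the hypothesis on $\size{ls_1}$, exactly as you first set up. Your replacement measure (``$\sum_L \size{\extract{ls_1}{L}}$ minus the amount already matched'') is not a quantity carried through the induction. Its per-location version, by clause~(2) of Lemma~\ref{lem:sys-step-rearrange}, needs the very same fact, so it buys nothing.

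To get $\size{ls_1'} < \size{ls_R}$, argue from how $ls_1'$ is constructed, not from the per-location prefix conclusion. The prefix property $\extract{ls_R}{L'} \le \extract{ls_1 \doubleplus ls_1'}{L'}$ is consistent with $ls_1' = ls_R$ whenever $\extract{ls_1}{L'}$ already begins with $\extract{ls_R}{L'}$, so on its own it rules nothing out.

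In the proof of Lemma~\ref{lem:sys-mirror}, a label of $ls_R$ whose locations meet the not-yet-matched part of $ls_1$ is identified, via Lemma~\ref{lem:sys-det}, with an equal label of $ls_1$, and it is \emph{not} appended to $ls_1'$. Only labels disjoint from that part are appended. Matched labels are equal and so have the same locations. Hence no match before the first label of $ls_R$ touching $L$ can consume $L$'s labels in $ls_1$. That first label is therefore matched, and $ls_1'$ misses it.

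The local case works the same way in Lemma~\ref{lem:sys-local-mirror}. Since $e$ is not a value and network steps are deterministic up to the choice of reduct, $L$'s first label in $ls_1$ is its local step on $e$. That label goes into $ls_2$ but not into $ls_1'$.

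Record this strictness in the statements of the two mirroring lemmas, and your induction on $\size{ls_1}$ goes through. Your re-establishment of $\lessthansim$ is fine, with two corrections. Lifting the multi-step trace needs Lemma~\ref{thm:sys-lifting}, not single-step lifting. And $ls_R$ must denote the whole trace, including the extra internal steps; these do not affect the count.
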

\begin{proof}
  The proof proceeds by strong induction on the length of~$ls_1$.
  In general, we consider two cases.
  In the first, suppose that for all~$C'$ and~$R$, that $C \step[R] C'$, implies~$\rloc{R} \cap \loc{ls_1} = \varnothing$.
  Then the conclusion follows with~$C' = C$ and~$ls' = \epsilon$.
  Otherwise, we suppose that there is some~$C'$, $R$, and~$L$ where~$C \step[R] C'$ and~$L \in \rloc{R} \cap \loc{ls_1}$.

  We claim that there is some~$C''$, $\Omega''$, $ls_2$, $ls_1'$, $\Pi_2$, and~$\Pi_3$ such that
  \begin{enumerate}
    \item[(a)] $C \steps{} C''$,
    \item[(b)] $\eppfork{C''}{\Omega''} \lessthansim \Pi_2$,
    \item[(c)] $\Pi \systemsteps[ls_2] \Pi_2$,
    \item[(d)] $\Pi_1 \systemsteps[ls_1'] \Pi_3$,
    \item[(e)] $ls_1' \subseteq ls_2$, and
    \item[(f)] for all~$L'$, $\extract{ls_2}{L'} \leq \extract{ls_1 \doubleplus ls_1'}{L'}$.
  \end{enumerate}
  Indeed, if~$R$ is a step in a local program~$\rho.e$, then this follows by Lemma~\ref{lem:sys-local-mirror}
  by stepping to~$\rho.e'$ where~$e'$ is provided by the Lemma,
  and otherwise it follows by EPP Completeness and Lemma~\ref{lem:sys-mirror} by choosing~$ls_2 = \eppall{R}$.
  In either case, we can apply Lemma~\ref{lem:sys-step-rearrange} to the result to yield some~$ls_2' \subseteq ls_1 \doubleplus ls_1'$
  where~$\Pi_2 \systemsteps[ls_2'] \Pi_3$, $\size{ls_1} + \size{ls_1'} = \size{ls_2} + \size{ls_2'}$,
  and for all~$L'$, $\extract{ls_1 \doubleplus ls_1'}{L'} = \extract{ls_2 \doubleplus ls_2'}{L'}$.
  
  We then claim that~$\size{ls_1'} < \size{ls_2}$.
  Indeed, if we had~$\size{ls_1'} \geq \size{ls_2}$,
  then because~$ls_1' \subseteq ls_2$, we should have that $ls_1' = ls_2$.
  However, because~$\extract{ls_2}{L} \leq \extract{ls_1}{L} \doubleplus \extract{ls_1'}{L} = \extract{ls_1}{L} \doubleplus \extract{ls_2}{L}$,
  this would imply that~$\extract{ls_1}{L} = \epsilon$.
  But this is a contradiction, as~$L \in \loc{ls_1}$ and~$ls_1 \neq \epsilon$.
  
  It follows that~$\size{ls_2'} = \size{ls_1} + \size{ls_1'} - \size{ls_2} < \size{ls_1}$,
  so we can apply induction to~$\eppfork{C''}{\Omega''} \lessthansim \Pi_2 \systemsteps[ls_2'] \Pi_3$
  to yield some~$C'''$, $\Omega'''$, $\Pi_4$, $\Pi_5$, and~$ls'$ satisfying (1--6).
  We claim these are satisfactory for the final conclusion, as
  \begin{itemize}
    \item[(1)] $C \steps{} C'' \steps{} C'''$,
    \item[(2)] $\eppfork{C'''}{\Omega'''} \lessthansim \Pi_4$,
    \item[(3)] $\Pi \systemsteps \Pi_2 \systemsteps \Pi_4$,
    \item[(4)] $\Pi_1 \systemsteps \Pi_3 \systemsteps \Pi_5$,
    \item[(5)] $\Pi_4 \systemsteps[ls'] \Pi_5$, and
    \item[(6)] for all~$C''''$ and~$R$, if $C''' \step[R] C''''$, then~$\rloc{R} \cap \loc{ls'} = \varnothing$.
  \end{itemize}
  The diagram below summarizes the argument.
  \[\begin{tikzcd}
    && {\Pi_1} \\
    \Pi &&&& {\Pi_3} \\
    && {\Pi_2} &&&& {\Pi_5} \\
    &&&& {\Pi_4} \\
    \\
    C && {C''} && {C'''}
    \arrow["{ls_1'}", dashed, from=1-3, to=2-5]
    \arrow["{ls_1}", from=2-1, to=1-3]
    \arrow["{ls_2}", dashed, from=2-1, to=3-3]
    \arrow["\precsim"{description}, squiggly, no head, from=2-1, to=6-1]
    \arrow[dashed, from=2-5, to=3-7]
    \arrow["{ls_2'}", dashed, from=3-3, to=2-5]
    \arrow[dashed, from=3-3, to=4-5]
    \arrow["\precsim"{description}, squiggly, no head, from=3-3, to=6-3]
    \arrow["{ls'}", dashed, from=4-5, to=3-7]
    \arrow["\precsim"{description}, squiggly, no head, from=4-5, to=6-5]
    \arrow[dashed, from=6-1, to=6-3]
    \arrow[dashed, from=6-3, to=6-5]
  \end{tikzcd}\]
\end{proof}

\partialSoundness*
\begin{proof}
  Immediately follows by Theorem~\ref{thm:partial-sound-plus} and Lemma~\ref{lem:typed-to-plus-typed}.
\end{proof}

\begin{thm}[Augmented Deadlock Freedom]
\label{thm:deadlock-freedom-plus}
  If $\choremptypedplus{C}{\tau}{\rho}$ and
  $\namedlocs{\rho} \subseteq \Omega$,
  then whenever $\eppfork{C}{\Omega} \systemsteps \Pi_1$,
  either $\Pi_1$ is final or it can step.
\end{thm}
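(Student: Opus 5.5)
The plan follows the sketch given for Theorem~\ref{thm:deadlock-freedom}, with the augmented results substituted for the surface ones. First observe that $\eppfork{C}{\Omega} \lessthansim \eppfork{C}{\Omega}$ holds by reflexivity of $\lessthansim$. Applying Theorem~\ref{thm:partial-sound-plus} to the given reduction $\eppfork{C}{\Omega} \systemsteps[ls_1] \Pi_1$ then yields $C'$, $\Omega'$, $\Pi_2$, $\Pi_3$, and $ls'$ satisfying conditions (1)--(6). If the reduction $\Pi_1 \systemsteps \Pi_3$ from condition~(4) is nonempty, then $\Pi_1$ steps and we are done. So assume $\Pi_1 = \Pi_3$; it remains to show that $\Pi_3$ is final or can step.

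By Theorem~\ref{thm:full-preservation} applied to $C \steps{} C'$, there is $\rho'$ with $\choremptypedplus{C'}{\tau}{\rho'}$ and $\nl{\rho'} \subseteq \Omega'$. Theorem~\ref{thm:progress} then says that either $C'$ is a value or $\conf{C'}{\Omega'}$ steps.

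\emph{Value case.} By Lemma~\ref{lem:proj-val}, every $\eppfork{C'}{L}$ is a value. For a thread $L$, a value $C'$ contains no $\KillAfterN$ outside closures, so $\tbodies{C'}{L} = \varnothing$ and this is again the ordinary projection. Lemma~\ref{lem:sim-less-than-val} transfers value-hood across $\eppfork{C'}{\Omega'} \lessthansim \Pi_2$, so $\Pi_2$ is final. A final system cannot step, so $ls' = \epsilon$ and $\Pi_3 = \Pi_2$ is final.

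\emph{Stepping case.} Suppose $\conf{C'}{\Omega'} \step[R] \conf{C''}{\Omega''}$. Lemma~\ref{lem:single-complete} gives $\eppfork{C'}{\Omega'} \systemstepss[\eppall{R}] \Pi'$, and Lemma~\ref{thm:sys-single-lifting}, iterated via Lemma~\ref{thm:sys-lifting}, lifts this along $\lessthansim$ to a nonempty reduction $\Pi_2 \systemstepss \Pi_2'$. I need these lifted steps to carry exactly the labels $\eppall{R}$, whose locations lie in $\rloc{R}$. To ensure this, I would avoid the padding $\iota$-steps introduced by Corollary~\ref{lem:sys-less-than-reach} and use Lemma~\ref{thm:sys-single-lifting} directly. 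That lemma preserves labels and returns only $\lessthan$, but a single label of $\eppall{R}$ is usually all that is needed. Condition~(6) gives $\rloc{R} \cap \loc{ls'} = \varnothing$. Lemma~\ref{lem:parallel-conf}, applied to the first lifted label and $ls'$ from $\Pi_2$, then gives a step from $\Pi_3$, so $\Pi_3$ is not deadlocked.

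\emph{Main obstacle.} The delicate point is checking that the locations of the system labels obtained from the choreographic step lie inside $\rloc{R}$. For $\RForkSys{L}{L'}{E}$, the label's location is just $\{L\} = \rloc{R}$, as required. For $\RKill{L}$, the location is $L$ itself, and I must confirm that a $\SysKillN$ label's location is disjoint from $ls'$. I would verify these redex by redex against the projection tables, using Lemma~\ref{lem:proj-redex-commutes}. Degenerate cases also need care: when $\eppall{R}$ is empty (a fork whose child projection is undefined, which is ruled out by Lemma~\ref{lem:thread-proj-exists}), and local steps, where the system may choose a different but confluent reduction. For local steps I would appeal to Lemma~\ref{lem:proj-local-step}, which ensures each participant's program still has the form $\eta[\Ret{e}]$, so it can take an $\iota$ step involving only itself.
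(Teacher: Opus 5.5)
Your proposal is correct and follows the paper's proof. The paper applies Theorem~\ref{thm:partial-sound-plus}, dismisses the case where $\Pi_1$ still steps to $\Pi_3$, and uses type soundness on $C'$: a value $C'$ forces $\Pi_2$, and hence $\Pi_1$, to be final, while a stepping $C'$ yields a step from $\Pi_1$ via completeness, condition~(6), and Lemma~\ref{lem:parallel-conf}. Your explicit use of Lemma~\ref{thm:sys-single-lifting} fills in a step the paper leaves implicit: it transports the first label of $\eppall{R}$ from $\eppfork{C'}{\Omega'}$ to $\Pi_2$, whereas the paper simply writes that completeness gives $\Pi_2 \systemstepss[\eppall{R}] \Pi_2'$.
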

\begin{proof}
  By Theorem~\ref{thm:partial-sound-plus},
  there is some~$C'$, $\Omega'$, $\Pi_2$, $\Pi_3$, and~$ls'$ satisfying (1--6) of that theorem.
  If~$\Pi_1 \systemstepss \Pi_3$, then the conclusion is obviously satisfied as~$\Pi_1$ can step at least once.
  Otherwise we suppose that~$\Pi_1 = \Pi_3$.
  By Type Soundness, either~$C'$ is a value or can step.
  If~$C'$ is a value, then because~$\eppfork{C'}{\Omega'} \lessthansim \Pi_2$, the system~$\Pi_2$ is final.
  But because~$\Pi_2 \systemsteps[ls'] \Pi_1$, $\Pi_1$ must also be final, as desired.
  Otherwise suppose that~$C'$ steps as~$C' \step[R] C''$.
  Then by EPP Completeness,
  $\Pi_2 \systemstepss[\eppall{R}] \Pi_2'$.
  Then as~$\rloc{R} = \loc{\eppall{R}}$ and~$\rloc{R} \cap \loc{ls'}$ by the conclusion of Lemma~\ref{thm:partial-sound-plus},
  by Parallel Confluence we can step~$\Pi_1 \systemstepss[\eppall{R}] \Pi_4$,
  so~$\Pi_1$ must be able to step at least once.
\end{proof}

\deadlockFreedom*
\begin{proof}
  Follows directly from Theorem~\ref{thm:deadlock-freedom-plus} and Lemma~\ref{lem:typed-to-plus-typed}.
\end{proof}


\end{document}  
